\newcommand{\remove}[1]{}
\newcommand{\Draft}[1]{\ifdefined\IsDraft\texttt{ #1} \fi}
    \newcommand{\authnote}[2]{{\bf [{\color{red} #1's Note:} {\color{blue} #2}]}}
    \newcommand{\authnote}[2]{}
\newcommand{\sdotfill}{\textcolor[rgb]{0.8,0.8,0.8}{\dotfill}} 
\newenvironment{algorithm}{\begin{algo}}{\vspace{-\topsep}\sdotfill\end{algo}}
\newenvironment{experiment}{\begin{expr}}{\vspace{-\topsep}\sdotfill\end{expr}}
\titleclass{\subsubsubsection}{straight}[\subsection]
\newcounter{subsubsubsection}[subsubsection]
\renewcommand\thesubsubsubsection{\thesubsubsection.\arabic{subsubsubsection}}
\renewcommand\paragraph{\@startsection{paragraph}{5}{\z@}%
	{3.25ex \@plus1ex \@minus.2ex}%
	{-1em}%
	{\normalfont\normalsize\bfseries}}
\renewcommand\subparagraph{\@startsection{subparagraph}{6}{\parindent}%
	{3.25ex \@plus1ex \@minus .2ex}%
	{-1em}%
	{\normalfont\normalsize\bfseries}}
\def\toclevel@subsubsubsection{4}
\def\toclevel@paragraph{5}
\def\toclevel@paragraph{6}
\def\l@subsubsubsection{\@dottedtocline{4}{7em}{4em}}
\def\l@paragraph{\@dottedtocline{5}{10em}{5em}}
\def\l@subparagraph{\@dottedtocline{6}{14em}{6em}}
\crefname{subsubsubsection}{Section}{Sections}
\newcommand{\Ensuremath}[1]{\ensuremath{#1}\xspace}
\newcommand{\MathAlg}[1]{\mathsf{#1}}
\newcommand{\MathAlgX}[1]{\Ensuremath{\MathAlg{#1}}}
\newcommand{\resp}{resp.,\xspace}
\newcommand{\ie}  {i.e.,\xspace}
\newcommand{\eg}  {e.g.,\xspace}
\newcommand{\whp}  {with high probability\xspace}
\newcommand{\wrt} {with respect to\xspace}
\newcommand{\wlg} {without loss of generality\xspace}
\newcommand{\abs}[1]{\left\lvert #1 \right\rvert}
\newcommand{\ceil}[1]{\left\lceil #1 \right\rceil}
\newcommand{\set}[1]{\ens{#1}}
\newcommand{\paren}[1]{\left(#1\right)}
\let\originalleft\left
\let\originalright\right
\renewcommand{\left}{\mathopen{}\mathclose\bgroup\originalleft}
\renewcommand{\right}{\aftergroup\egroup\originalright}
\newcommand{\eqdef}{:=}
\newcommand{\R}{{\mathbb R}}
\newcommand{\N}{{\mathbb{N}}}
\newcommand{\Z}{{\mathbb Z}}
\newcommand{\F}{{\cal F}}
\newcommand{\zo}{\{0,1\}}
\newcommand{\zn}{{\zo^n}}
\newcommand{\condition}{\;\ifnum\currentgrouptype=16 \middle\fi|\;}
\newcommand{\eps}{\varepsilon}
\newcommand{\la}{\gets}
\newcommand{\poly}{\operatorname{poly}}
\newcommand{\Exp}{\Ex}
\newcommand{\negl}{\operatorname{neg}}
\newcommand{\Supp}{\operatorname{Supp}}
\newcommand{\class}[1]{\mathrm{#1}}
\newcommand{\BPP}{\class{BPP}}
\newcommand{\NP}{\class{NP}}
\newcommand{\PSPACE}{\class{PSPACE}}
\renewcommand{\cref}{\Cref}
\newtheorem{theorem}{Theorem}[section]
\newaliascnt{lemma}{theorem}
\newtheorem{lemma}[lemma]{Lemma}
\crefname{lemma}{Lemma}{Lemmas}
\newaliascnt{claim}{theorem}
\newtheorem{claim}[claim]{Claim}
\crefname{claim}{Claim}{Claims}
\newaliascnt{corollary}{theorem}
\crefname{corollary}{Corollary}{Corollaries}
\newaliascnt{construction}{theorem}
\crefname{construction}{Construction}{Constructions}
\newaliascnt{fact}{theorem}
\newtheorem{fact}[fact]{Fact}
\crefname{fact}{Fact}{Facts}
\newaliascnt{proposition}{theorem}
\newtheorem{proposition}[proposition]{Proposition}
\crefname{proposition}{Proposition}{Propositions}
\newaliascnt{conjecture}{theorem}
\crefname{conjecture}{Conjecture}{Conjectures}
\newaliascnt{definition}{theorem}
\newtheorem{definition}[definition]{Definition}
\crefname{definition}{Definition}{Definitions}
\newaliascnt{notation}{theorem}
\newtheorem{notation}[notation]{Notation}
\crefname{notation}{Notation}{Notation}
\newaliascnt{remark}{theorem}
\newtheorem{remark}[remark]{Remark}
\crefname{remark}{Remark}{Remarks}
\newaliascnt{example}{theorem}
   \newenvironment{example}{\refstepcounter{example}\par\setlength{\topskip}{1em}
  \begin{list}{}{%
    \setlength{\leftmargin}{3em}%
  }\item\relax \rule{\linewidth}{0.4pt} \\ \noindent \textbf{Example~\theexample} \rmfamily}{\par \nopagebreak\rule[2mm]{\linewidth}{0.4pt}\end{list}}
\crefname{exmaple}{Example}{Examples}
\newaliascnt{exm}{theorem}
\crefname{equation}{Equation}{Equations}
\newaliascnt{proto}{theorem}
\newtheorem{proto}[proto]{Protocol}
\crefname{proto}{protocol}{protocols}
\newaliascnt{algo}{theorem}
\newtheorem{algo}[algo]{Algorithm}
\crefname{algo}{algorithm}{algorithms}
\newaliascnt{expr}{theorem}
\newtheorem{expr}[expr]{Experiment}
\crefname{experiment}{experiment}{experiments}
\def\FullBox{$\Box$}
\def\qed{\ifmmode\qquad\FullBox\else{\unskip\nobreak\hfil
\penalty50\hskip1em\null\nobreak\hfil\FullBox
\parfillskip=0pt\finalhyphendemerits=0\endgraf}\fi}
\def\qedsketch{\ifmmode\Box\else{\unskip\nobreak\hfil
\penalty50\hskip1em\null\nobreak\hfil$\Box$
\parfillskip=0pt\finalhyphendemerits=0\endgraf}\fi}
\newenvironment{proofsketch}{\begin{trivlist} \item {\it
Proof sketch.}} {\qed\end{trivlist}}
\newcommand{\Tau}{\mathrm{T}}           
\newcommand{\eex}[2]{\Ex_{#1}\left[#2\right]}
\newcommand{\Ex}{{\mathrm E}}
\renewcommand{\Pr}{{\mathrm {Pr}}}
\newcommand{\pr}[1]{\Pr\left[#1\right]}
\newcommand{\ppr}[2]{\Pr_{#1}\left[#2\right]}
\newcommand{\Ac}{\mathsf{A}}
\newcommand{\Bc}{\mathsf{B}}
\newcommand{\Cc}{\mathsf{C}}
\newcommand{\sfA}{\mathsf{A}}
\newcommand{\sfB}{\mathsf{B}}
\newcommand{\sfH}{\mathsf{H}}
\newcommand{\cp}{\operatorname{CP}}
\newcommand{\concat}{\circ}
\newcommand{\ens}[1]{\left\{#1\right\}}
\newcommand{\size}[1]{\left|#1\right|}
\newcommand{\out}{\operatorname{out}}
\newcommand{\trans}{{\operatorname{trans}}}
\newcommand{\Uni}{{\mathord{\mathcal{U}}}}
\newcommand{\prob}[1]{\mathsf{\textsc{#1}}}
\newcommand{\SD}{\prob{SD}}
\newcommand{\SDP}[2]{{\SD\paren{#1,#2}}}
\newcommand{\cL}{{\cal{L}}}
\newcommand{\I}{\mathcal{I}}
\newcommand{\J}{\mathcal{J}}
\newcommand{\ppt}{{\sc ppt}\xspace}
\newcommand{\pptm}{{\sc pptm}\xspace}
\newcommand{\cA}{\mathcal{A}}
\newcommand{\cB}{\mathcal{B}}
\newcommand{\cS}{\mathcal{S}}
\newcommand{\cU}{\mathcal{U}}
\newcommand{\cV}{\mathcal{V}}
\newcommand{\cT}{\mathcal{T}}
\newcommand{\cF}{\mathcal{F}}
\newcommand{\cE}{\mathcal{E}}
\newcommand{\cC}{\mathcal{C}}
\newcommand{\cN}{\mathcal{N}}
\newcommand{\cs}{{\cal{S}}}
\newcommand{\msg}{{\mathsf{msg}}}
\newcommand{\round}{{\operatorname{round}}}
\newcommand{\conc}{\circ}
\newcommand{\MaxP}[2]{\ifx&#2&
  {\rm MaxP}_{#1}
\else
  {\rm MaxP}_{#1}\left(#2\right)
\fi}
\newcommand{\MaxIn}[2]{\ifx&#2&
  {\rm MaxIn}_{#1}
\else
  {\rm MaxIn}_{#1}\left(#2\right)
\fi}
\tikzset{
  treenode/.style = {align=center, text centered,
    font=\sffamily},
  A-node/.style = {treenode, circle, white, font=\sffamily\bfseries, draw=black,
    fill=darkgray, inner sep=1pt},
  B-node/.style = {treenode, circle, white, font=\sffamily\bfseries, draw=black,
    fill=gray, inner sep=1pt},
  leaf/.style = {treenode, rectangle, draw=black,inner sep=1pt},
  leaf-red/.style = {treenode, rectangle, draw=red,inner sep=1pt},
  A-Lnode/.style = {treenode, circle, white, font=\sffamily\bfseries, draw=black,
    fill=darkgray, inner sep=1pt},
  B-Lnode/.style = {treenode, circle, white, font=\sffamily\bfseries, draw=black,
    fill=gray, inner sep=1pt,minimum size=0.8cm},
}
\newcommand{\vect}[1]{{ \boldsymbol #1}} 
\newcommand{\Tableofcontents}{
	\ifdefined\IsLLNCS \else

	\thispagestyle{empty}
	\pagenumbering{gobble}
	\clearpage
	
	\setcounter{tocdepth}{2}
	
	\Draft{
		\setcounter{tocdepth}{4}
	}
	\tableofcontents
	\thispagestyle{empty}
	\clearpage
	\pagenumbering{arabic}
	\fi
}
\newcommand{\bigcupdot}{\bigcup}
\newcommand{\inote}[1]{\authnote{Iftach}{#1}}
\def\Inote{\inote}
\newcommand{\Bnote}[1]{\authnote{Itay}{#1}}
\newcommand{\RandomCont}{\MathAlgX{\MathAlg{BiasedCont}}\xspace}
\newcommand{\DMS}[1]{{\sf DMS}\paren{#1}}
\newcommand{\cpi}{\Pi}
\newcommand{\cpit}{{\widetilde{\cpi}}}
\newcommand{\icpi}[2]{{\cpi^{#2}_{#1}}}
\newcommand{\icpih}[2]{{\widehat{\cpi}^{#2}_{#1}}}
\newcommand{\idsm}[3]{{\submeas^{#2,#3}_{#1}}}
\newcommand{\imu}[3]{{\widehat{\mu}^{#2,#3}_{#1}}}
\newcommand{\icpib}[3]{{\paren{\widehat{\cpi_{#3}}}^{#2}_{#1}}}
\newcommand{\DMEx}[3]{{\mu^{{#1},{#2}}_{#3}}}
\newcommand{\DMExRes}[2]{\xi_{#1}^{#2}}
\newcommand{\RandomContXD}[2]{\RandomCont^{#1,#2}}
\newcommand{\HonA}{\sfA}
\newcommand{\HonB}{\sfB}
\newcommand{\HonAt}{{\widetilde{\HonA}}}
\newcommand{\HonBt}{{\widetilde{\HonB}}}
\newcommand{\someAadv}{{\HonA^\ast}}
\newcommand{\someBadv}{{\HonB^\ast}}
\newcommand{\HonC}{{\sf C}}
\newcommand{\HonD}{{\sf D}}
\newcommand{\rcc}[2]{{#1^{(#2)}}}
\newcommand{\rcA}[1]{\rcc{\Ac}{#1}}
\newcommand{\rccc}[3]{{#1^{(#2),#3}}}
\newcommand{\rcccA}[2]{\rccc{\Ac}{#1}{#2}}
\newcommand{\rcccB}[2]{\rccc{\Bc}{#1}{#2}}
\newcommand{\rcB}[1]{\rcc{\Bc}{#1}}
\newcommand{\rcC}[1]{\rcc{\Cc}{#1}}
\newcommand{\rcPro}[3]{\rcc{{#1}_{#2}}{#3}}
\newcommand{\rcAP}[2]{\rcPro{\HonA}{#1}{#2}}
\newcommand{\rcCP}[2]{\rcPro{\HonC}{#1}{#2}}
\newcommand{\rcBP}[2]{\rcPro{\HonB}{#1}{#2}}
\newcommand{\HonAHonB}{{(\HonA,\HonB)}}
\newcommand{\ProArc}[1]{{\rcA{#1},\HonB}}
\newcommand{\ProArcP}[1]{{\paren{\rcA{#1},\HonB}}}
\newcommand{\ProArcb}[2]{{\rcc{\HonA_{\cpi_#2}}{#1},\HonB_{\cpi_#2}}}
\newcommand{\CondPro}[2]{{#1|\neg #2}}
\newcommand{\CondProP}[2]{{#1|\neg \left(#2\right)}}
\newcommand{\Tree}{{\cT}}
\newcommand{\Vertices}{{\cV}}
\newcommand{\Edges}{{\cE}}
\newcommand{\Leaves}{{\mathcal{L}}}
\newcommand{\Root}{\mathsf{root}}
\newcommand{\Val}{{\mathsf{val}}}
\newcommand{\rnd}{{m}}
\newcommand{\dmsi}{{z}}
\newcommand{\LeafValue}{{\chi}}
\newcommand{\Color}{{\chi}}
\newcommand{\EdgeDist}{e}
\newcommand{\VerticesDist}{\MathAlg{v}}
\newcommand{\LDist}[1]{{\left\langle#1\right\rangle}}
\newcommand{\OPT}{\mathsf{OPT}}
\newcommand{\BEST}[2]{\OPT_{#1}\left(#2\right)}
\newcommand{\BESTB}{{\BEST{\HonB}{\HonA,\HonB}}}
\newcommand{\BestB}[1]{{\BEST{\HonB}{#1}}}
\newcommand{\BestA}[1]{{\BEST{\HonA}{#1}}}
\newcommand{\Meas}{{M}}
\newcommand{\FMeas}{{L}}
\newcommand{\submeas}{{\widehat{\Meas}}}
\newcommand{\DomMeas}[2]{{\Meas_{#1}^{#2}}}
\newcommand{\ADomMeas}[1]{{\DomMeas{#1}{\HonA}}}
\newcommand{\BDomMeas}[1]{{\DomMeas{#1}{\HonB}}}
\newcommand{\Restrict}[2]{{\left(#1 \right)}_{#2}}
\newcommand{\FinalMeas}[3]{{\widehat{\FMeas}^{#2,#3}_{#1}}}
\newcommand{\CombineMeas}[3]{{\FMeas^{#2,#3}_{#1}}}
\newcommand{\Aadv}{\someAadv}
\newcommand{\Badv}{\someBadv}
\newcommand{\domain}{{\cal{D}}}
\newcommand{\range}{{\cal{R}}}
\newcommand{\Inv}{\ensuremath{\MathAlg{Inv}}\xspace}
\newcommand{\EmptyString}{{\lambda}}
\newcommand{\unbal}[4]{{\Unbal_{#1}^{#3}}}
\newcommand{\UnBal}[2]{{\Unbal_{#1}^{#2}}}
\newcommand{\Smalls}{\mathcal{S}\mathsf{mall}}
\newcommand{\Larges}{\mathcal{L}\mathsf{arge}}
\newcommand{\low}[2]{{\Smalls_{#1}^{#2}}}
\newcommand{\high}[2]{{\Larges_{#1}^{#2}}}
\newcommand{\veta}[1]{{\vetas_{\bf #1}}}
\newcommand{\vetas}{{{\boldsymbol \eta}}}
\newcommand{\phiBal}{\phiBalE{}}
\newcommand{\phiBalE}[1]{{\phi^{\mathsf{Bal}}_{#1}}}
\newcommand{\phiPru}{\phiPruE{}}
\newcommand{\phiPruE}[1]{{\phi^{\mathsf{It}}_{#1}}}
\newcommand{\Smaller}[2]{\mathsf{Smaller}_{#1}\left(#2\right)}
\newcommand{\desc}{\mathsf{desc}}
\newcommand{\descP}[1]{{\desc\paren{#1}}}
\newcommand{\propDesc}{\overline{\desc}}
\newcommand{\propDescP}[1]{\propDesc\paren{#1}}
\newcommand{\ctrl}[2]{\mathcal{C}\mathsf{trl}_{#1}^{#2}}
\newcommand{\ctrls}{\mathsf{cntrl}}
\newcommand{\frnt}[1]{\mathsf{frnt}\paren{#1}}
\newcommand{\pru}[2]{{#1}^{[#2]}}
\newcommand{\pruparty}[3]{{#1}^{[#2]}_{#3}}
\newcommand{\RC}[2]{#1^{(#2)}}
\newcommand{\pruAttack}[2]{\pruAttGen{\HonA}{#1}{#2}}
\newcommand{\finalAttack}[2]{\pruAttGen{\widehat{\HonA}}{#1}{#2}}
\newcommand{\pruAttGen}[3]{{#1}^{(#2)}_{#3}}
\newcommand{\nCc}{\overline{\HonC}}
\newcommand{\pred}{\mathsf{pred}}
\newcommand{\succe}{\mathsf{succ}}
\newcommand{\SecParam}{n}
\newcommand{\Unbal}{{\mathcal{U}\mathsf{nBal}}}
\newcommand{\rHonContStateless}{{\MathAlg{HonContSL}}}
\newcommand{\HonCont}{{\MathAlg{HonCont}}}
\newcommand{\ConsisDist}{\mathrm{Consis}}
\title{Coin Flipping of \emph{Any} Constant Bias Implies One-Way
  Functions\footnote{This is the final draft of this paper. The full
    version was published in the Journal of the ACM
    \cite{BermanHT18-jacm}. An extended abstract of this work appeared in the
    proceedings of STOC 2014 \cite{BermanHT14}.}  \Draft{\\{\small \sc Working
      Draft: Please Do Not Distribute}} \ifdefined\excludeexample \\{\small
    \color{purple} Examples Are Excluded} \fi }
\author{Itay Berman\thanks{MIT Computer Science and Artificial Intelligence
    Laboratory. E-mail: \texttt{itayberm@mit.edu}. Most of this work was done
    while the author was in the School of Computer Science, Tel Aviv
    University. Research supported in part by NSF Grants CNS-1413920 and
    CNS-1350619, and by the Defense Advanced Research Projects Agency (DARPA)
    and the U.S. Army Research Office under contracts W911NF-15-C-0226 and
    W911NF-15-C-0236.}~\textsuperscript{\S} \and Iftach Haitner\thanks{School of
    Computer Science, Tel Aviv University. E-mail:
    \texttt{iftachh@cs.tau.ac.il}.}~\textsuperscript{\S} \and Aris
  Tentes\thanks{E-mail: \texttt{tentes@cims.nyu.edu}. Most of this work was done
    while the author was in the Department of Computer Science, New York
    University.}  \thanks{Research supported by ISF grant 1076/11, the Israeli
    Centers of Research Excellence (I-CORE) program (Center No. 4/11), US-Israel
    BSF grant 2010196 and Check Point Institute for Information Security. Preparation of the journal version was supported by  by ERC starting grant 638121.}}
\begin{document}
\sloppy
\maketitle
\begin{abstract}
  We show that the existence of a coin-flipping protocol safe against \emph{any}
  non-trivial constant bias (\eg $.499$) implies the existence of one-way
  functions. This improves upon a recent result of \citeauthor{HaitnerOmri11}
  [FOCS '11], who proved this implication for protocols with bias
  $\frac{\sqrt2 -1}2 - o(1) \approx .207$. Unlike the result of
  \citeauthor{HaitnerOmri11}, our result also holds for \emph{weak}
  coin-flipping protocols.
\end{abstract}
\noindent\textbf{Keywords:} coin-flipping protocols; one-way functions; minimal hardness assumptions

\Tableofcontents


\newcommand{\transf}{\mathsf{trans}}

\section{Introduction}\label{section:intro}
A central focus of modern cryptography has been to investigate the weakest possible assumptions under which various cryptographic primitives exist. This direction of research has been quite fruitful,
and minimal assumptions are known for a wide variety of primitives. In particular, it has been shown that one-way functions (\ie easy to compute but hard to invert) imply pseudorandom
generators, pseudorandom functions, symmetric-key encryption/message authentication, commitment schemes, and digital signatures
\cite{GoldreichGoMi85,GoldreichGoMi86,HastadImLeLu99,HaitnerNgOnReVa09,Naor91,NaorYu89,GoldreichL89,Rompel90}, where one-way functions were also shown to be implied by each of these primitives
\cite{ImpagliazzoLu89}.

An important exception to the above successful characterization is that of coin-flipping (-tossing) protocols. A coin-flipping protocol \cite{Blum81} allows the honest parties to jointly flip an unbiased coin, where even a cheating (efficient) party cannot bias the outcome of the protocol by very much. Specifically, a coin-flipping protocol is $\delta$-bias if no efficient cheating party can make the common output to be 1, or to be 0, with probability greater than $\frac12 + \delta$. While one-way functions are known to imply negligible-bias coin-flipping protocols \cite{Blum81,Naor91,HastadImLeLu99}, the other direction is less clear. \citet{ImpagliazzoLu89} showed that $\Theta\paren{1/\sqrt{\rnd}}$-bias coin-flipping protocols imply one-way functions, where $\rnd$ is the number of rounds in the protocol.\footnote{In \cite{ImpagliazzoLu89}, only $\negl(\rnd)$-bias was stated. Proving the same implication for $\Theta\paren{1/\sqrt{\rnd}}$-bias follows from the proof outlined in \cite{ImpagliazzoLu89} and the result by \citet{CleveI93}.}
Recently, \citet*{Maji10} extended the above for $(\frac12 - 1/\poly(n))$-bias \emph{constant-round} protocols, where $n$ is the security parameter. More recently, \citet{HaitnerOmri11} showed that the above implication holds for $(\frac{\sqrt{2} -1}2 -o(1) \approx 0.207)$-bias coin-flipping protocols (of arbitrary round complexity). No such implications were known for any other choice of parameters, and in particular for protocols with bias greater than $\frac{\sqrt{2} -1}2$ with super-constant round complexity.

\subsection{Our Result}\label{sec:intro:ourResult}
In this work, we make progress towards answering the question of whether coin-flipping protocols also imply one-way functions. We show that (even weak) coin-flipping protocols, safe against any non-trivial bias (\eg 0.4999), do in fact imply such functions.
We note that unlike \cite{HaitnerOmri11}, but like \cite{ImpagliazzoLu89,Maji10}, our result also applies to the so-called \emph{weak coin-flipping protocols} (see \cref{sec:CFprotocols} for the formal definition of strong and weak coin-flipping protocols).
Specifically, we prove the following theorem.
\begin{theorem}[informal]\label{thm:mainInf}
For any $c > 0$, the existence of a $(\frac 12 - c)$-bias coin-flipping protocol (of any round complexity) implies the existence of one-way functions.
\end{theorem}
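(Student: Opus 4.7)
The plan is to establish the contrapositive: under the assumption that one-way functions do not exist, \emph{every} coin-flipping protocol $\cpi = (\sfA,\sfB)$ of polynomial round complexity can be attacked with bias arbitrarily close to $\tfrac12$, contradicting the bias bound $\tfrac12 - c$. The main tool will be a recursively iterated version of the biased-continuation attack of \cite{HaitnerOmri11}, whose basic step is made efficient by the absence of one-way functions.

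First I would invoke Impagliazzo--Luby to convert the hypothesis ``no one-way functions'' into the equivalent ``no distributional one-way functions'', yielding an efficient inverter $\Inv$ for the function that maps both parties' coins to the complete transcript and output bit. With $\Inv$ in hand, I can implement, up to negligible statistical error, the following oracle: on input a partial transcript $t$ and a bit $b \in \{0,1\}$, output a completion of $t$ with output $b$, drawn from a distribution close to the honest conditional one. This gives the base biased-continuation attacker $\sfA^{(1)}$: at each of her turns, she samples a completion of the current transcript conditioned on output $1$ and plays its first $\sfA$-message. Letting $V(t) = \Pr[\text{out}=1 \mid \text{honest continuation from } t]$, an analysis along the lines of \cite{HaitnerOmri11} shows $\Pr[\text{out}=1 \text{ under } (\sfA^{(1)},\sfB)] \geq \sqrt{V(\EmptyString)} = 1/\sqrt 2$, giving bias $\tfrac{\sqrt 2 - 1}{2}$.

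The core new step is \emph{iteration}. Define inductively $\sfA^{(i+1)}$ as the biased-continuation attacker for the protocol $(\sfA^{(i)},\sfB)$, where the value function $V^{(i)}$ is recomputed treating $\sfA^{(i)}$ as the A-strategy of the revised honest protocol. The central technical claim will be a \textbf{progress lemma} stating, roughly, that one iteration maps output-$1$ probability $p_i := \Pr[\text{out}=1 \text{ under } (\sfA^{(i)},\sfB)]$ to $p_{i+1} \geq \sqrt{p_i}$ (or at least to some function of $p_i$ bounded away from $p_i$ whenever $p_i < 1-\varepsilon$). Since $p \mapsto \sqrt p$ is a contraction toward $1$, constantly many iterations $k = k(c)$ — depending only on $c$ — suffice to push $p_k \geq 1 - c$, i.e.\ bias at least $\tfrac12 - c$. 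Because $k$ is constant and each $\sfA^{(i+1)}$ merely wraps $\sfA^{(i)}$ with one additional call to $\Inv$ on the protocol $(\sfA^{(i)},\sfB)$, the final attacker remains polynomial time. For \emph{weak} coin-flipping this one-sided attacker already contradicts the bias guarantee; for \emph{strong} coin-flipping, a symmetric sequence $\sfB^{(i)}$ biasing towards $0$ is defined, and one of the two families must succeed.

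The main obstacle will be establishing the progress lemma at iterations $i \geq 1$. The \cite{HaitnerOmri11} bound covers $i=0$, where $V^{(0)}$ is a martingale of mean $\tfrac12$ and a clean Cauchy--Schwarz/variance argument on the protocol tree gives the $1/\sqrt 2$ lower bound. At subsequent iterations, however, the induced value distribution $V^{(i)}$ is already skewed toward $1$ and is no longer a martingale in a natural sense; the plan is to identify a potential function on value distributions over the protocol tree (a candidate being an aggregate of $\sqrt{V^{(i)}(1-V^{(i)})}$ terms, or a comparable entropy-like quantity) whose decrement under one iteration can be lower-bounded in terms of the current gap $1-p_i$, forcing the claimed recursion $p_{i+1} \geq \sqrt{p_i}$ and thereby convergence to $1$. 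A secondary issue is controlling the compounding of the statistical inverter error through $k$ iterations and across nested calls of $\sfA^{(i)}$; this should be handled by standard hybrid-style bookkeeping, since $k$ is constant and each nested call inflates the approximation parameter by only a polynomial factor.
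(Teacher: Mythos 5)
There is a genuine gap, and it is exactly where you flagged the difficulty yourself: the proposed progress lemma $p_{i+1}\geq\sqrt{p_i}$ (with $p_i = \Pr[\out(\sfA^{(i)},\sfB)=1]$) is false, and the version of \cite{HaitnerOmri11}'s base result on which it rests is misstated. \citeauthor{HaitnerOmri11} prove that \emph{one of} the two attackers $\sfA^{(1)}$, $\sfB^{(1)}$ biasing towards $1$ achieves value at least $1/\sqrt{2}$; it is a disjunction, established by a symmetry argument on the ``all-cheating'' protocol, not a per-party guarantee $p_1\geq\sqrt{p_0}$. As a concrete counterexample to your lemma, take the protocol ``$\sfB$ alone flips a uniform bit and announces it'': there $\sfA^{(i)}\equiv\sfA$ for all $i$ and $p_i=1/2$ forever, so no one-sided recursion of the form $p\mapsto\sqrt{p}$ can hold. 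The naive iteration bound that \emph{does} follow from the paper's machinery is $\Val(\rcA{\kappa},\sfB)\geq (1-\BestB{\cpi})/\prod_{i<\kappa}\Val(\rcA{i},\sfB)$ (\cref{lemma:SingleMeasManyIter}), which requires $\kappa$ to grow like $\log\paren{1/(1-\BestB{\cpi})}$ — potentially super-constant when the optimal-attack gap $1-\BestB{\cpi}$ is exponentially small — and which is vacuous in the counterexample above since $\BestB{\cpi}=1$ there. The actual proof escapes this through genuinely new objects: the $\sfA$-dominated measure over $1$-leaves, which exactly quantifies the optimal valid attacker's advantage; conditional protocols obtained by deleting that measure; and an \emph{alternating} sequence of dominated measures for $\sfA$ and $\sfB$ built by conditioning back and forth (\cref{def:mDMSequence}). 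The key lemmas (\cref{lemma:IdealMainLemmaSim,lemma:CombineMeasHitConst}) show that within a constant-length alternation one side's accumulated measure passes a constant threshold and a constant number of biased-continuation recursions then suffices; but which side wins is not determined by the attacker, so only the disjunctive conclusion survives — there is no per-party potential function driving $p_i\to 1$.

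The secondary issue is also much harder than you estimate. You propose handling the compounding of the inverter's statistical error by ``standard hybrid-style bookkeeping,'' yet the paper explicitly states the authors were \emph{unable} to prove (and believe to be false) that the queries of $\rcA{i}$ for $i>1$ stay close to the distribution on which the inverter is guaranteed; the recursive attacker systematically steers execution toward low-value transcripts, precisely where the inverter's guarantees evaporate. \cref{sec:RealAttacker} therefore introduces \emph{pruned protocols}: once the estimated conditional value crosses $\delta$ or $1-\delta$, one party freezes and the other plays honestly to completion, so low- and high-value nodes are never queried under the biaser's control. The actual final attacker ``plays in its head'' against the pruned protocol and reverts to honest play once the threshold is hit, paying only an additive $\delta$. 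Your scaffolding — recursive biased continuation, constant depth, efficiency via inverting the transcript function, a one-sided attack for weak and a symmetric pair for strong — matches the paper, but the progress lemma and the error control are both the missing and hard content.
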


Note that $\frac12$-bias coin-flipping protocol requires no assumption (\ie one party flips a coin and announces the result to the other party). So our result is tight as long as constant biases (\ie independent of the security parameter) are involved.


To prove \cref{thm:mainInf}, we observe a connection between the success probability of the best (valid) attacks in a two-party game (\eg tic-tac-toe) and the success of the biased-continuation attack of \cite{HaitnerOmri11} in winning this game (see more in \cref{sec:intro:Technique}). The implications of this interesting connection seem to extend beyond the question at the focus of this paper.

\subsection{Related Results}\label{sec:intro:relatedResult}
As mentioned above, \citet{ImpagliazzoLu89} showed that negligible-bias coin-flipping protocols imply one-way functions. \citet{Maji10} proved the same for $(\frac12 - o(1))$-bias yet constant-round protocols. Finally, \citet{HaitnerOmri11} showed that the above implication holds for $\frac{\sqrt{2} -1}2 -o(1) \approx 0.207)$-bias (strong) coin-flipping protocols (of arbitrary round complexity). Results of weaker complexity implications are also known.

\citet{Zachos86} has shown that non-trivial (\ie ($\frac12 -o(1)$)-bias), constant-round coin-flipping protocols imply that $\NP \nsubseteq \BPP$, where \citet{Maji10} proved the same implication for $(\frac14 - o(1))$-bias coin-flipping protocols of arbitrary round complexity. Finally, it is well known that the existence of non-trivial coin-flipping protocols implies that $\PSPACE \nsubseteq \BPP$. Apart from \cite{HaitnerOmri11}, all the above results extend to weak coin-flipping protocols. See \cref{fig:summeryOfResult} for a summary.

\begin{table}[h]
\begin{center}
\begin{tabular}{|c|c|c|}
 \hline
 \textit{Implication} & \textit{Protocol type} & \textit{Paper} \\
 \hline
 Existence of OWFs & $(\frac12 - c)$-bias, for some $c> 0$& \textbf{This work}\\
 \hline
 Existence of OWFs & $(\frac{\sqrt{2} -1}2 -o(1))$-bias & \citet{HaitnerOmri11}\footnotemark\\
 \hline

 Existence of OWFs & $(\frac12 - o(1))$-bias, \emph{constant round} & \citet{Maji10} \\
\hline

Existence of OWFs & Negligible bias & \citet{ImpagliazzoLu89} \\
 \hline \hline

 $\NP \nsubseteq \BPP$ & $(\frac14 - o(1))$-bias & \citet{Maji10} \\
 \hline
 $\NP \nsubseteq \BPP$ & $(\frac12 - o(1))$-bias, \emph{constant round} &\citet{Zachos86} \\
 \hline

 $\PSPACE \nsubseteq \BPP$ & Non-trivial & Common knowledge\\
\hline
\end{tabular}
\end{center}
\caption{\label{fig:summeryOfResult} Results summary.}
\end{table}
\footnotetext{Only holds for \emph{strong} coin-flipping protocols.}

\emph{Information theoretic} coin-flipping protocols (\ie whose security holds against all-powerful attackers) were shown to exist in the quantum world; \citet{Moc07} presented an $\eps$-bias
quantum weak coin-flipping protocol for any $\eps>0$. \citet{ChaillouxKerenidis09} presented a $\left(\frac{\sqrt{2}-1}{2}-\eps\right)$-bias quantum strong coin-flipping protocol for any $\eps>0$ (this bias was shown in \cite{Kit03} to be tight). A key step in \cite{ChaillouxKerenidis09} is a reduction from strong to weak coin-flipping protocols, which holds also in the classical world.

A related line of work considers \emph{fair} coin-flipping protocols. In this setting the honest party is required to always output a bit, whatever the other party does. In particular, a cheating
party might bias the output coin just by aborting. We know that one-way functions imply fair $(1/\sqrt \rnd)$-bias coin-flipping protocols \cite{ABCGM85,Cleve86}, where $\rnd$ is the round complexity of
the protocol, and this quantity is known to be tight for $o(\rnd/\log \rnd)$-round protocols with fully black-box reductions \cite{Dachman11}. Oblivious transfer, on the other hand, implies fair $1/\rnd$-bias protocols \cite{MoranNS09,BeimelOO10} (this bias was shown in \cite{Cleve86} to be tight).

\subsection{Our Techniques}\label{sec:intro:Technique}
The following is a rather elaborate, high-level description of the ideas underlying our proof.

That the existence of a given (cryptographic) primitive implies the existence of one-way functions is typically proven by looking at the \textit{primitive core function} --- an efficiently computable function (not necessarily unique) whose inversion on uniformly chosen outputs implies breaking the security of the primitive.\footnote{For the sake of this informal discussion, inverting a function on a given value means returning a \emph{uniformly} chosen preimage of this value.} For private-key encryption, for instance, a possible core function is the mapping from the inputs of the encryption algorithm (\ie message, secret key, and randomness) into the ciphertexts. Assuming that one has defined such a core function for a given primitive, then, by definition, this function should be one-way. So it all boils down to finding, or proving the existence of, such a core function for the primitive under consideration. For a \emph{non-interactive} primitive, finding such a core function is typically easy. In contrast, for an \emph{interactive} primitive, finding such a core function is, at least in many settings, a much more involved task. The reason is that in order to break an interactive primitive, the attacker typically needs, for a given function, pre-images for many different outputs, where these outputs are chosen \emph{adaptively} by the attacker, after seeing the pre-images to the previous outputs. As a result, it is challenging to find a single function, or even finitely many functions, whose output distributions (on uniformly chosen input) match the distribution of the pre-images the attacker needs.\footnote{If the attacker makes a \emph{constant} number of queries, one can overcome the above difficulty by defining a set of core functions $f_1,\ldots,f_k$, where $f_1$ is the function defined by the primitive, $f_2$ is the function defined by the attacker after making the first inversion call, and so on. Since the evaluation time of $f_{i+1}$ is polynomial in the evaluation time of $f_i$ (since evaluating $f_{i+1}$ requires a call to an inverter of $f_i$), this approach fails miserably for attackers of super-constant query complexity.}

The only plausible  candidate to serve as a core function of a coin-flipping protocol would seem to be its \textit{transcript function}: the function that maps the parties' randomness into the resulting protocol transcript (\ie the transcript produced by executing the protocol with this randomness). In order to bias the output of an $\rnd$-round coin-flipping protocol by more than $O(\frac1{\sqrt \rnd})$, a super-constant number of adaptive inversions of the transcript function seems necessary. Yet we managed to prove that the transcript function is a core function of  any (constant-bias) coin-flipping protocol. This is done by designing an adaptive attacker for any such protocol whose query distribution is ``not too far" from the output distribution of the transcript function (when invoked on uniform inputs). Since our attacker, described below, is not only adaptive, but also defined in a recursive manner, proving that it possesses the aforementioned property was one of the major challenges we faced.

In what follows, we give a high-level overview of our attacker that ignores computational issues (\ie assumes it has a perfect inverter for any function). We then explain how to adjust this attacker to work with the inverter of the protocol's transcript function.

\subsubsection{Optimal Valid Attacks and The Biased-Continuation Attack}
The crux of our approach lies in an interesting connection between the optimal attack on a coin-flipping protocol and the more feasible, \textit{recursive biased-continuation} attack. The latter attack recursively applies the biased-continuation attack used by \citet{HaitnerOmri11} to achieve their constant-bias attack (called there, the \textit{random-continuation} attack) and is the basis of our efficient attack (assuming one-way functions do not exist) on coin-flipping protocols. The results outlining the aforementioned connection, informally stated in this section and formally stated and proven in \cref{sec:IdealAttacker}, hold for any two-player full information game with binary common outcome.

Let $\cpi = \HonAHonB$ be a coin-flipping protocol (\ie the common output of the honest parties is a uniformly chosen bit). In this discussion we restrict ourselves to analyzing attacks that, when carried out by the left-hand party, \ie $\HonA$, are used to bias the outcome towards one, and when carried out by the right-hand party, \ie $\HonB$, are used to bias the outcome towards zero. Analogous statements hold for opposite attacks (\ie attacks carried out by $\HonA$ and used to bias towards zero, and attacks carried out by $\HonB$ and used to bias towards one). The optimal valid attacker $\cA$ carries out the \emph{best} attack $\Ac$ can employ (using unbounded power) to bias the protocol towards \emph{one}, while sending \emph{valid} messages --- ones that could have been sent by the honest party. The optimal valid attacker $\cB$, carrying out the best attack $\HonB$ can employ to bias the protocol towards \emph{zero}, is analogously defined. Since, \wlg, the optimal valid attackers are deterministic, the expected outcome of $(\cA,\cB)$ is either zero or one. As a first step, we give a lower bound on the success probability of the recursive biased-continuation attack carried out by the party winning the aforementioned game. As this lower bound might not be sufficient for our goal (it might be less than constant) --- and this is a crucial point in the description below --- our analysis takes additional steps to give an arbitrarily-close-to-one lower bound on the success probability of the recursive biased-continuation attack carried out by \emph{some} party, which may or may not be the same party winning the aforementioned game.\footnote{That the identity of the winner in $(\cA,\cB)$ cannot be determined by the recursive biased-continuation attack is crucial. Since we show that the latter attack can be efficiently approximated assuming one-way functions do not exist, the consequences of revealing this identity would be profound. It would mean that we can estimate the outcome of the optimal attack (which is implemented in $\PSPACE$) using only the assumption that one-way functions do not exist.}

Assume that $\cA$ is the winning party when playing against $\cB$. Since $\cA$ sends only valid messages, it follows that the expected outcome of $(\Ac,\cB)$, \ie honest $\Ac$ against the optimal attacker for $\Bc$, is larger than zero (since $\Ac$ might send the optimal messages ``by mistake''). Let $\BestA{\cpi}$ be the expected outcome of the protocol $(\cA,\HonB)$ and let $\BestB{\cpi}$  be $1$ minus the expected outcome of the protocol $(\HonA,\cB)$. The above observation yields that $\BestA{\cpi}=1$, while $\BestB{\cpi}= 1- \alpha<1$. This gives rise to the following question: \emph{what does give $\cA$ an advantage over $\cB$?}

We show that if $\BestB{\cpi}= 1- \alpha$, then there exists a set $\cs^\Ac$ of 1-transcripts, full transcripts in which the parties' common output is $1$,\footnote{Throughout, we assume \wlg that the protocol's transcript determines the common output of the parties.} that is $\alpha$-dense (meaning that the chance that a random full transcript of the protocol is in the set is $\alpha$) and is ``dominated by $\HonA$''. The $\HonA$-dominated set has an important property --- its density is ``immune'' to any action $\HonB$ might take, even if $\HonB$ is employing its optimal attack; specifically, the following holds:
\begin{align}\label{eq:intro1}
\ppr{\LDist{\HonA,\HonB}}{\cs^\Ac} = \ppr{\LDist{\HonA,\cB}}{\cs^\Ac} = \alpha,
\end{align}
where $\LDist{\cpi'}$ samples a random full transcript of protocol $\cpi'$. It is easy to see that the above holds if $\HonA$ controls the root of the tree and has a $1$-transcript as a direct descendant; see \cref{fig:simpleTree} for a concrete example. The proof of the general case can be found in \cref{sec:IdealAttacker}. Since the $\HonA$-dominated set is $\HonB$-immune, a possible attack for $\cA$ is to go towards this set. Hence, what seems like a feasible adversarial attack for $\HonA$ is to mimic $\cA$'s attack by hitting the $\HonA$-dominated set with high probability. It turns out that the biased-continuation attack of \cite{HaitnerOmri11} does exactly that.

The biased-continuation attacker $\rcA{1}$, taking the role of $\HonA$ in $\cpi$ and trying to bias the output of $\cpi$ towards one, is defined as follows: given that the partial transcript is $\transf$, algorithm $\rcA{1}$ samples a pair of random coins $(r_\HonA,r_\HonB)$ that is consistent with $\transf$ and leads to a $1$-transcript, and then acts as the honest $\HonA$ on the random coins $r_\HonA$, given the transcript $\transf$. In other words, $\rcA{1}$ takes the first step of a random continuation of $\HonAHonB$ leading to a $1$-transcript. (The attacker $\rcB{1}$, taking the role of $\HonB$ and trying to bias the outcome towards zero, is analogously defined.) \citet{HaitnerOmri11} showed that for any coin-flipping protocol, if either $\HonA$ or $\HonB$ carries out the biased-continuation attack towards one, the outcome of the protocol will be biased towards one by $\frac{\sqrt{2} -1}2$ (when interacting with the honest party).\footnote{They show that the same holds for the analogous attackers carrying out the biased-continuation attack towards zero.} Our basic attack employs the above biased-continuation attack recursively. Specifically, for $i>1$ we consider the attacker $\rcA{i}$ that takes the first step of a random continuation of $(\rcA{i-1},\HonB)$ leading to a $1$-transcript, letting $\rcA{0} \equiv \HonA$. The attacker $\rcB{i}$ is analogously defined. Our analysis takes a different route from that of \cite{HaitnerOmri11}, whose approach is only applicable for handling bias up to $\frac{\sqrt{2} -1}2$ and cannot be applied to weak coin-flipping protocols.\footnote{A key step in the analysis of \citet{HaitnerOmri11} is to consider the ``all-cheating protocol" $(\rcccA{1}{1},\rcccB{1}{1})$, where $\rcccA{1}{1}$ and $\rcccB{1}{1}$ taking the roles of $\HonA$ and $\HonB$ respectively, and they both carry out the biased-continuation attack trying to bias the outcome towards one (as opposed to having the attacker taking the role of $\HonB$ trying to bias the outcome towards zero, as in the discussion so far). Since, and this is easy to verify, the expected outcome of $(\rcccA{1}{1},\rcccB{1}{1})$ is one, using symmetry one can show that the expected outcome of either $(\rcccA{1}{1},\Bc)$ or $(\Ac,\rcccB{1}{1})$ is at least $\frac1{\sqrt{2}}$, yielding a bias of $\frac1{\sqrt{2}} - \frac12$. As mentioned in \cite{HaitnerOmri11}, symmetry cannot be used to prove a bias larger than $\frac1{\sqrt{2}} - \frac12$.}
Instead, we analyze the probability of the biased-continuation attacker to hit the dominated set we introduced above.

Let $\transf$ be a $1$-transcript of $\cpi$ in which all messages are sent by $\HonA$. Since $\rcA{1}$ picks a random $1$-transcript, and $\HonB$ cannot force $\rcA{1}$ to diverge from this transcript, the probability to produce $\transf$ under an execution of $(\rcA{1},\HonB)$ is \emph{doubled} \wrt this probability under an execution of $(\HonA,\HonB)$ (assuming the expected outcome of $(\HonA,\HonB)$ is $1/2$). The above property, that $\HonB$ cannot force $\rcA{1}$ to diverge from a transcript, is in fact the $\HonB$-immune property of the $\HonA$-dominated set. A key step we take is to generalize the above argument to show that for the $\alpha$-dense $\HonA$-dominated set $\cs^\Ac$ (which exists assuming that $\BestB{\cpi}= 1- \alpha <1$), it holds that:
\begin{align}\label{eq:intro2}
\ppr{\LDist{\rcA{1},\HonB}}{\cs^\Ac}\geq \frac{\alpha}{ \Val(\cpi)},
\end{align}
where $\Val(\cpi')$ is the expected outcome of $\cpi'$. Namely, in $(\rcA{1},\HonB)$ the probability of hitting the set $\cS^\HonA$ of $1$-transcripts is larger by a factor of at least $\frac{1}{\Val(\cpi)}$ than the probability of hitting this set in the original protocol $\cpi$. Again, it is easy to see that the above holds if $\HonA$ controls the root of the tree and has a $1$-transcript as a direct descendant; see \cref{fig:simpleTree} for a concrete example. The proof of the general case can be found in \cref{sec:IdealAttacker}.

Consider now the protocol $(\rcA{1},\HonB)$. In this protocol, the probability of hitting the set $\cs^\Ac$ is at least $\frac{\alpha}{ \Val(\cpi)}$, and clearly the set $\cs^\Ac$ remains $\HonB$-immune. Hence, we can apply \cref{eq:intro2} again, to deduce that
\begin{align}\label{eq:intro3}
\ppr{\LDist{\rcA{2},\Bc}}{\cs^\Ac} = \ppr{\LDist{(\rcA{1})^{(1)},\HonB}}{\cs^\Ac} \geq \frac{\ppr{\LDist{\rcA{1},\Bc}}{\cs^\Ac}}{\Val(\rcA{1},\Bc)}  \geq \frac{\alpha}{ \Val(\cpi) \cdot \Val(\rcA{1},\Bc)}.
\end{align}
Continuing it for $\kappa$ iterations yields that
\begin{align}\label{eq:intro4}
\Val(\rcA{\kappa},\Bc) \geq \ppr{\LDist{\rcA{\kappa},\Bc}}{\cs^\Ac} \geq \frac{\alpha}{\prod_{i=0}^{\kappa-1} \Val(\rcA{i},\Bc)}.
\end{align}
So, modulo some cheating,\footnote{The actual argument is somewhat more complicated than the one given above. To ensure the above argument holds we need to consider measures over the $1$-transcripts (and not sets). In addition, while (the measure variant of) \cref{eq:intro3} is correct, deriving it from \cref{eq:intro2} takes some additional steps.}
 it seems that we are in good shape.
Taking, for example, $\kappa = \log(\frac1\alpha) / \log(\frac1{0.9})$, \cref{eq:intro4} yields that $\Val(\rcA{\kappa},\Bc) > 0.9$. Namely, if we assume that $\cA$ has an advantage over $\cB$, then by recursively applying the biased-continuation attack for $\HonA$ enough times, we arbitrarily bias the expected output of the protocol towards one. Unfortunately, if this advantage (\ie $\alpha=(1-\BestB{\cpi})$) is very small, which is the case in typical examples, the number of recursions required might be linear in the protocol depth (or even larger). Given the recursive nature of the above attack, the running time of the described attacker is \emph{exponential}. To overcome this obstacle, we consider not only the dominated set, but additional sets that are ``close to" being dominated. Informally, we can say that a $1$-transcript belongs to the $\HonA$-dominated set if it can be generated by an execution of $(\cA,\HonB)$. In other words, the probability, over $\HonB$'s coins, that a transcript generated by a random execution of $(\cA,\HonB)$ belongs to the $\HonA$-dominated set is one. We define a set of $1$-transcripts that does not belong to the $\HonA$-dominated set to be ``close to'' $\HonA$-dominated if there is an (unbounded) attacker $\widehat{\cA}$, such that the probability, over $\HonB$'s coins, that a transcript generated by a random execution of $(\widehat{\cA},\HonB)$ belongs to the set is close to one. These sets are formally defined via the notion of conditional protocols, discussed next.

\begin{figure}
\centering
\begin{tikzpicture}[->,>=stealth',level/.style={sibling distance = 2cm, level distance = 1.5cm},every label/.style={draw,fill=none,shape=ellipse,inner sep=1pt}]
\begin{scope}
\node [A-node] (A) {$\HonA$}
 child{ node [leaf] (A1) {$1$} edge from parent node[above left] {$\alpha_1$}}
 child{ node [B-node] (B) {$\HonB$}
  child{ node [leaf] (L00) {$0$} edge from parent node[above left] {$\beta_1$}}
  child{ node [A-node] (A2) {$\HonA$}
   child{ node [leaf] (L011) {$1$} edge from parent node[above left] {$\alpha_2$}}
   child{ node [leaf] (L010) {$0$} edge from parent node[above right] {$1-\alpha_2$}}
   edge from parent node[above right] {$1-\beta_1$}
		}
  edge from parent node[above right] {$1-\alpha_1$}
 }

;
\end{scope}
\end{tikzpicture}
\caption{Coin-flipping protocol $\cpi$. The label of an internal node (\ie partial transcript) denotes the name of the party controlling it (\ie the party that sends the next message given this partial transcript), and that of a leaf (\ie full transcript) denotes its value --- the parties' common output once reaching this leaf. Finally, the label on an edge leaving a node $u$ to node $u'$ denotes the probability that a random execution of $\cpi$ visits $u'$ once in $u$.~\\
Note that $\BestA{\cpi} = 1$ and $\BestB{\cpi} = 1 - \alpha_1$. The $\HonA$-dominated set $\cs^\Ac$ in this case consists of the single $1$-leaf to the left of the root. The conditional protocol $\cpi'$ is the protocol rooted in the node to the right of the root (of $\cpi$), and the $\HonB'$-dominated set $\cs^\Bc$ consists of the single $0$-leaf to the left of the root of $\cpi'$.
}
\label{fig:simpleTree}
\end{figure}
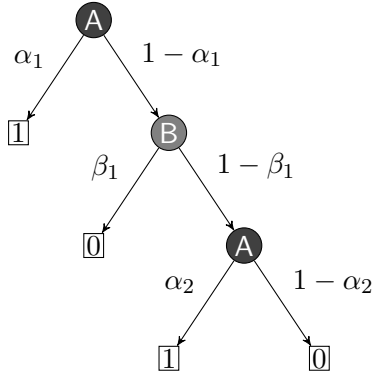

\paragraph{Conditional Protocols.}
Let $\cpi=\HonAHonB$ be a coin-flipping protocol in which there exists an $\HonA$-dominated set $\cs^\Ac$ of density $\alpha >0$. Consider the ``conditional" protocol $\cpi' = (\Ac',\Bc')$, resulting from conditioning on not hitting the set $\cs_\Ac$. Namely, the message distribution of $\cpi'$ is that induced by a random execution of $\cpi$ that does not generate transcripts in $\cs_\Ac$.\footnote{More formally, the conditional protocol $\cpi'$ is defined as follows. Let $\transf$ be a partial transcript, and let $p$ be the probability, in $\cpi$, that the message following $\transf$ is $0$. Let $\alpha$ be the probability of generating a transcript in $\cs^\Ac$ for which $\transf$ is a prefix and similarly let $\alpha_0$ be the probability of generating a transcript in $\cs^\Ac$ for which $\transf\concat 0$ is a prefix ($\transf\concat 0$ is the transcript $\transf$ followed by the message $0$).
Then, the probability that the message following $\transf$ is $0$ in $\cpi'$ is $p\cdot(1-\alpha_0)/(1-\alpha)$.} See \cref{fig:simpleTree} for a concrete example. We note that the protocol $\cpi'$ might not be efficiently computable (even if $\cpi$ is), but this does not bother us, since we only use it as a thought experiment.

We have effectively removed all the $1$-transcripts dominated by $\Ac$ (the set $\cs^\Ac$ must contain all such transcripts; otherwise $\BestB{\cpi}$ would be smaller than $1-\alpha$). Thus, the expected outcome of $(\cA',\cB')$ is zero, where $\cA'$ and $\cB'$ are the optimal valid attackers of the parties in the conditional protocol $\cpi'$. Therefore, $\BEST{\Bc'}{\cpi'} = 1$ and $\BEST{\Ac'}{\cpi'} = 1 - \beta <1$. It follows from this crucial observation that there exists a $\HonB'$-dominated $\cS^\HonB$ of density $\beta$, over the $0$-transcripts of $\cpi'$. Applying a similar argument to that used for \cref{eq:intro4} yields that for large enough $\kappa$, the biased-continuation attacker $\RC{\Bc'}{\kappa}$, playing the role of $\Bc'$, succeeds in biasing the outcome of $\cpi'$ toward zero, where $\kappa$ is proportional to $\log(\frac1\beta)$. Moreover, if $\alpha$ is small, the above yields that $\rcB{\kappa}$ does almost equally well in the original protocol $\cpi$. If $\beta$ is also small, we can now consider the conditional protocol $\cpi''$, obtained by conditioning $\cpi'$ on not hitting the $\HonB'$-dominated set, and so on.

By iterating the above process enough times, the $\HonA$-dominated sets cover all the $1$-transcripts, and the $\Bc$-dominated sets cover all the $0$-transcripts.\footnote{When considering measures and not sets, as done in the actual proof, this covering property is not trivial.} Assume that in the above iterated process, the density of the $\Ac$-dominated sets is the first to go beyond $\eps>0$. It can be shown --- and this a key technical contribution of this paper --- that it is almost as good as if the density of the \emph{initial} set $\cs_\Ac$ was $\eps$.\footnote{More accurately, let $\widetilde{\cs}^{\Ac}$ be the union of these $1$-transcript sets and let $\widetilde{\alpha}$ be the density of $\widetilde{\cs}^{\Ac}$ in $\cpi$. Then $\Val(\rcA{\kappa},\Bc) \geq \ppr{\LDist{\rcA{\kappa},\Bc}}{\widetilde{\cs}^\Ac} \geq \frac{\widetilde{\alpha}}{\prod_{i=0}^{\kappa-1} \Val(\rcA{i},\Bc)}$.} We can now apply the above analysis and conclude that for any constant $\eps >0$, there exists a constant $\kappa=\kappa(\eps)$ such that $\Val(\rcA{\kappa},\Bc) > 1- \eps$.\footnote{The assumption that the density of the $\Ac$-dominated sets is the first to go beyond $\eps>0$ is independent of the assumption that $\cA$ wins in the zero-sum game $(\cA,\cB)$. Specifically, the fact that $\rcA{\kappa}$ succeeds in biasing the protocol does not guarantee that $\cA$, which we only know how to implemented in $\PSPACE$, is the winner of $(\cA,\cB)$.}

\subsubsection{Using the Transcript Inverter}\label{intro:sec:tech:real}
We have seen above that for any constant $\eps$, by recursively applying the biased-continuation attack for constantly many times, we get an attack that biases the outcome of the protocol by $\frac12 - \eps$. The next thing is to implement the above attack \emph{efficiently}, under the assumption that one-way functions do not exist. Given a partial transcript $u$ of protocol $\cpi$, we wish to return a uniformly chosen full transcript of $\cpi$ that is consistent with $u$ and the common outcome it induces is one. Biased continuation can be reduced to the task of finding \emph{honest continuation}: returning a uniformly chosen full transcript of $\cpi$ that is consistent with $u$. Assuming honest continuation can be found for the protocol, biased-continuation can also be found by calling the honest continuation many times, until a transcript whose output is one is obtained. The latter can be done efficiently, as long as the value of the partial transcript $u$ --- the expected outcome of the protocol conditioned on $u$, is not too low. (If it is too low, too much time might pass before a full transcript leading to one is obtained.) Ignoring this low value problem, and noting that honest continuation of a protocol can be reduced to inverting the protocol's transcript function, all we need to do to implement $\rcA{i}$ is to invert the transcript functions of the protocols $(\Ac,\Bc),(\rcA{1},\Bc),\ldots,(\rcA{i-1},\Bc)$. Furthermore, noting that the attackers $\rcA{1},\ldots,\rcA{i-1}$ are \emph{stateless}, it suffices to have the ability to invert \emph{only} the transcript function of $(\Ac,\Bc)$.

So attacking a coin-flipping protocol $\cpi$ boils down to inverting the transcript function $f_\cpi$ of $\cpi$, and making sure we are not doing that on low value transcripts. Assuming one-way functions do not exist, there exists an efficient inverter $\Inv$ for $f_\cpi$ that is guaranteed to work well when invoked on random outputs of $f_\cpi$ (\ie when $f_\cpi$ is invoked on the uniform distribution; nothing is guaranteed for distributions far from uniform). By the above discussion, algorithm $\Inv$ implies an efficient approximation of $\rcA{i}$, as long as the partial transcripts attacked by $\rcA{i}$ are neither \emph{low-value} nor \emph{unbalanced} (by low-value transcript we mean that the expected outcome of the protocol conditioned on the transcript is low; by unbalanced transcript we mean that its density \wrt $(\rcA{i},\Bc)$ is not too far from its density \wrt $(\Ac,\Bc)$). Whereas  the authors of  \cite{HaitnerOmri11} proved that the queries of $\rcA{1}$ obey the two conditions with sufficiently high probability, we were unable to prove this (and believe it is untrue) for the queries of $\rcA{i}$, for $i>1$. Thus, we simply cannot argue that $\rcA{i}$ has an efficient approximation, assuming one-way functions do not exist. Fortunately, we managed to prove the above for the ``pruned" variant of $\rcA{i}$, defined below.

\paragraph{Unbalanced and low-value transcripts.}
Before defining our final attacker, we relate the problem of unbalanced transcripts to that of low-value transcripts. We say that a (partial) transcript $u$ is \emph{$\gamma$-unbalanced} if the probability that $u$ is visited \wrt a random execution of $(\rcA{1},\Bc)$ is at least $\gamma$ times larger than \wrt a random execution of $(\Ac,\Bc)$. Furthermore, we say that a (partial) transcript $u$ is \emph{$\delta$-small} if the expected outcome of $(\Ac,\Bc)$, conditioned on visiting $u$, is at most $\delta$.
We prove (a variant of) the following statement. For any $\delta>0$ and $\gamma >1$, there exists $c$ that depends on $\delta$, such that
\begin{align}\label{eq:intro:2}
\ppr{\ell \la \LDist{\rcA{1},\Bc}}{ \ell \mbox{ has a {\sf $\gamma$-unbalanced} prefix but no {\sf $\delta$-small} prefix}}\leq \frac {1}{\gamma^c}.
\end{align}

Namely, as long as $(\rcA{1},\Bc)$ does not visit low-value transcript, it is only at low risk to significantly deviate (in a multiplicative sense) from the distribution induced by $(\Ac,\Bc)$. \cref{eq:intro:2} naturally extends to recursive biased-continuation attacks. It also has an equivalent form for the attacker $\rcB{1}$, trying to bias the protocol towards zero, \wrt $\delta$-high transcripts --- the expected outcome of $\cpi$, conditioned on visiting the transcript, is at least $1-\delta$.

\paragraph{The pruning attacker.}
At last we are ready to define our final attacker. To this end, for protocol $\cpi = (\Ac,\Bc)$ we define its \emph{$\delta$-pruned variant} $\cpi_\delta = (\Ac_\delta,\Bc_\delta)$, where $\delta \in (0,\frac12)$, as follows. As long as the execution does not visit a $\delta$-low or $\delta$-high transcript, the parties act as in $\cpi$. Once a $\delta$-low transcript is visited, only the party $\Bc$ sends messages, and it does so according to the distribution induced by $\cpi$. If a $\delta$-high transcript is visited (and has no $\delta$-low prefix), only the party $\Ac$ sends messages, and again it does so according to the distribution induced by $\cpi$.

Since the transcript distribution induced by $\cpi_\delta$ is the same as of $\cpi$, protocol $\cpi_\delta$ is also a coin-flipping protocol. We also note that $\cpi_\delta$ can be implemented efficiently assuming one-way functions do not exist (simply use the inverter of $\cpi$'s transcript function to estimate the value of a given transcript). Finally, by \cref{eq:intro:2}, $\RC{\Ac_\delta}{i}$ (\ie recursive biased-continuation attacks for $\cpi_\delta$) can be efficiently implemented, since there are \emph{no} low-value transcripts where $\HonA$ needs to send the next message. (Similarly, $\RC{\Bc_\delta}{i}$ can be efficiently implemented since there are no high-value transcripts where $\HonB$ needs to send the next message.)

It follows that for any constant $\eps>0$, there exists constant $\kappa$ such that either the expected outcome of $(\RC{\Ac_\delta}{\kappa},\Bc_\delta)$ is a least $1- \eps$, or the expected outcome of $(\Ac_\delta,\RC{\Bc_\delta}{\kappa})$ is at most $\eps$. Assume for concreteness that it is the former case. We define our pruning attacker $\rcA{\kappa,\delta}$ as follows. When playing against $\Bc$, the attacker $\rcA{\kappa,\delta}$ acts like $\RC{\Ac_\delta}{\kappa}$ would when playing against $\Bc_\delta$. Namely, the attacker pretends that it is in the $\delta$-pruned protocol $\cpi_\delta$. But once a low- or high-value transcript is reached, $\rcA{\kappa,\delta}$ acts \emph{honestly} in the rest of the execution (like $\Ac$ would).

It follows that until a low- or high-value transcript has been reached for the first time, the distribution of $(\rcA{\kappa,\delta},\Bc)$ is the same as that of $(\RC{\Ac_\delta}{\kappa},\Bc_\delta)$. Once a $\delta$-low transcript is reached, the expected outcome of both $(\rcA{\kappa,\delta},\Bc)$ and $(\RC{\Ac_\delta}{\kappa},\Bc_\delta)$ is $\delta$, but when a $\delta$-high transcript is reached, the expected outcome of $(\rcA{\kappa,\delta},\Bc)$ is $(1-\delta)$ (since it plays like $\Ac$ would), where the expected outcome of $(\RC{\Ac_\delta}{\kappa},\Bc_\delta)$ is at most one. All in all, the expected outcome of $(\rcA{\kappa,\delta},\Bc)$ is $\delta$-close to that of $(\RC{\Ac_\delta}{\kappa},\Bc_\delta)$, and thus the expected outcome of $(\rcA{\kappa,\delta},\Bc)$ is at least $1- \eps-\delta$. Since $\eps$ and $\delta$ are arbitrary constants, we have established an efficient attacker to bias the outcome of $\cpi$ by a value that is an arbitrary constant close to one.

\subsection{Open Questions}\label{section:discussion}
\emph{Does the existence of any non-trivial coin-flipping protocol (\ie bias $\frac12 - \frac1{\poly(n)}$) imply the existence of one-way functions?} This is the main question left open. Answering it would fully resolve the computational complexity of coin-flipping protocols.

\subsection*{Paper Organization}
General notations and definitions used throughout the paper are given in \cref{section:Preliminaries}. Our ideal attacker (which has access to a perfect sampler) to bias any coin-flipping protocol is presented and analyzed in \cref{sec:IdealAttacker}, while in \cref{sec:RealAttacker} we show how to modify the above attacker to be useful when the perfect sampler is replaced with a one-way function inverter.

\subsection*{Acknowledgment}
We are very grateful to Hemanta Maji, Yishay Mansour, Eran Omri and Alex Samorodnitsky for useful discussions.


\section{Preliminaries}\label{section:Preliminaries}

\subsection{Notations}\label{sec:notations}
We use lowercase letters for values, uppercase for random variables, uppercase calligraphic letters (\eg $\cU$) to denote sets, boldface for vectors, and uppercase sans-serif (\eg $\HonA$) for algorithms (\ie Turing
Machines). All logarithms considered here are in base two. Let $\N$ denote the set of natural numbers, where $0$ is considered as a natural number, \ie $\N=\set{0,1,2,3,\ldots}$. For $n\in\N$, let $(n)=\set{0,\ldots,n}$ and if $n$ is positive let $[n]=\set{1,\cdots,n}$, where $[0]=\emptyset$. For $a\in \R$ and $b\geq 0$, let $[a\pm b]$ stand for the interval $[a-b,a+b]$, $(a\pm b]$ for $(a-b,a+b]$ etc. We let $\conc$ denote string concatenation. For a non-empty string $t\in\zo^\ast$ and $i\in[\size{t}]$, let $t_i$ be the $i$'th bit of $t$, and for $i,j\in[\size{t}]$ such that $i<j$, let $t_{i,\ldots,j}=t_i\concat t_{i+1}\concat\ldots\concat t_j$. The empty string is denoted by $\EmptyString$, and for a non-empty string, let $t_{1,\ldots,0}=\EmptyString$. We let $\poly$ denote the set all polynomials and let \pptm denote a probabilistic algorithm that runs in \emph{strictly} polynomial time. Given a \pptm algorithm $\Ac$, we let $\Ac(u;r)$ be an execution of $\Ac$ on input $u$ given randomness $r$. A function $\nu \colon \N \to [0,1]$ is \textit{negligible}, denoted $\nu(n) = \negl(n)$, if $\nu(n)<1/p(n)$ for every $p\in\poly$ and large enough $n$.

Given a random variable $X$, we write $x\gets X$ to indicate that $x$ is selected according to $X$. Similarly, given a finite
set $\cs$, we let $s\la \cs$ denote that $s$ is selected according to the uniform distribution on $\cs$. We adopt the convention that when the same random variable occurs several times in an expression, all occurrences refer to a single sample. For example, $\Pr[f(X)=X]$ is defined to be the probability that when $x\gets X$, we have $f(x)=x$. We write $U_n$ to denote the random variable distributed uniformly over $\zn$. The support of a distribution $D$ over a finite set $\Uni$, denoted $\Supp(D)$, is defined as $\set{u\in\Uni: D(u)>0}$. The \emph{statistical distance} of two distributions $P$ and $Q$ over a finite set $\Uni$, denoted as $\SD(P,Q)$, is defined as $\max_{\cs\subseteq \Uni} \size{P(\cs)-Q(\cs)} = \frac{1}{2} \sum_{u\in \Uni}\size{P(u)-Q(u)}$.

A \emph{measure} is a function  $\Meas \colon\Omega\to [0,1]$. The
support of $\Meas$ over a set $\Omega$, denoted $\Supp(\Meas)$, is defined as $\set{\omega\in \Omega\colon \Meas(\omega) > 0}$. A measure $\Meas$ over $\Omega$ is the \emph{zero measure} if $\Supp(\Meas)=\emptyset$.

\subsection{Two-Party Protocols}\label{def:protocols}
The following discussion is restricted to no-input (possibly randomized), two-party protocols, where each message consists of a \emph{single} bit. We do not assume, however, that the parties play in turns (\ie the same party  might send two consecutive messages), but only that the protocol's transcript uniquely determines which party is playing next (\ie the protocol is well defined). In an $\rnd$-round protocol, the parties exchange exactly $\rnd$ messages (\ie bits). The tuple of the messages sent so far in any partial execution of a protocol is called the \emph{(communication) transcript} of this execution.

We write that a protocol $\cpi$ is equal to $\HonAHonB$, when $\HonA$ and $\HonB$ are the interactive Turing Machines that control the left- and right-hand party respectively, of the interaction according to $\cpi$. For a party $\HonC$ interacting according to $\cpi$, let $\nCc_\cpi$ be the other party in $\cpi$, where if $\cpi$ is clear from the context, we simply write $\nCc$.

If $\HonA$ and $\HonB$ are deterministic, then $\trans(\HonA,\HonB)$ denotes the uniquely defined transcript of the protocol $(\HonA,\HonB)$. If  $\HonA$ and $\HonB$ are randomized, we let $\rho_\HonA$ and $\rho_\HonB$ be the (maximal) number of random bits used by $\HonA$ and $\HonB$ respectively. For $r_\HonA\in\zo^{\rho_\HonA}$, $\HonA(\cdot;r_\HonA)$ stands for the variant of $\HonA$ when $r_\HonA$ are set as its random coins, and $\HonA(u;r_\HonA)$ is the message sent by $\HonA(\cdot;r_\HonA)$ when given a partial transcript $u$, for which the party $\HonA$ sends the next message. The above notations naturally extend for the party $\HonB$ as well. The transcript of the protocol $\paren{\HonA(\cdot;r_\HonA),\HonB(\cdot;r_\HonB)}$ is denoted by $\trans\paren{\HonA(\cdot;r_\HonA),\HonB(\cdot;r_\HonB)}$.
For a (partial) transcript $u$ of a protocol $\cpi=(\HonA,\HonB)$, let $\ConsisDist_\cpi(u)$ be the distribution of choosing $(r_\HonA,r_\HonB)\la \zo^{\rho_\HonA}\times \zo^{\rho_\HonB}$ conditioned on $\trans\paren{\HonA(\cdot;r_\HonA),\HonB(\cdot;r_{\HonB})}_{1,\ldots,\size{u}} = u$. 

\remove{
The following fact is well-known.

\begin{fact}\label{fact:ProDist}
For every (partial) transcript $u$ of a protocol $(\HonA,\HonB)$ there exist two distributions, $\ConsisDist_\HonA(u)$ over $\zo^{\rho_\HonA}$ and $\ConsisDist_\HonB(u)$ over $\zo^{\rho_\HonB}$, such that $\ConsisDist_\cpi(u)$ is the product distribution $\ConsisDist_\HonA(u) \times \ConsisDist_\HonB(u)$.
\end{fact}
\begin{proof}
For ease of notation, we remove the subscript $\cpi$ from $\ConsisDist_\cpi(u)$. The proof is via induction on $\size{u}$. For the base case, let $u$ such that $\size{u}=0$, namely $u=\EmptyString$. Every pair of random coins is consistent with $\EmptyString$, and we set $\ConsisDist_\HonA(\EmptyString)$ to be $r_\HonA \la \zo^{\rho_\HonA}$ and $\ConsisDist_\HonB(\EmptyString)$ to be $r_\HonB \la \zo^{\rho_\HonB}$. 

Assume the claim holds for partial transcripts of length $t\geq 0$ and that $\size{u}=t+1$. Let $v=u_{1,\ldots,t}$, $b=u_{t+1}$ (\ie $u=vb$) and assume that $\HonA$ is the party sending the message $b$ (the proof for $\HonB$ is symmetric). Fix $(r'_\HonA,r'_\HonB)\in \zo^{\rho_\HonA}\times \zo^{\rho_\HonB}$. It holds that
\begin{align*}
\lefteqn{\ppr{(r_\HonA,r_\HonB)\la \ConsisDist(u)}{r'_\HonA = r_\HonA \land r'_\HonB = r_\HonB}} \\
&= \ppr{(r_\HonA,r_\HonB)\la \ConsisDist(vb)}{r'_\HonA = r_\HonA \land r'_\HonB = r_\HonB}\\
&= \ppr{(r_\HonA,r_\HonB)\la \ConsisDist(v)}{r'_\HonA = r_\HonA \land r'_\HonB = r_\HonB \mid \HonA(v;r_\HonA)=b} \\
&= \ppr{\substack{r_\HonA\la \ConsisDist_\HonA(v) \\ r_\HonB\la \ConsisDist_\HonB(v)}}{r'_\HonA = r_\HonA \land r'_\HonB = r_\HonB \mid \HonA(v;r_\HonA)=b} \\
&= \frac{\ppr{\substack{r_\HonA\la \ConsisDist_\HonA(v) \\ r_\HonB\la \ConsisDist_\HonB(v)}}{r'_\HonA = r_\HonA \land r'_\HonB = r_\HonB \land \HonA(v;r_\HonA)=b}}{\ppr{\substack{r_\HonA\la \ConsisDist_\HonA(v) \\ r_\HonB\la \ConsisDist_\HonB(v)}}{ \HonA(v;r_\HonA)=b}} \\
&= \frac{\ppr{r_\HonA\la \ConsisDist_\HonA(v)}{r'_\HonA = r_\HonA \land \HonA(v;r_\HonA)=b}}{\ppr{r_\HonA\la \ConsisDist_\HonA(v)}{ \HonA(v;r_\HonA)=b}}\cdot \ppr{r_\HonB\la \ConsisDist_\HonB(u)}{r'_\HonB = r_\HonB} \\
&= \ppr{r_\HonA\la \ConsisDist_\HonA(v)}{r'_\HonA = r_\HonA \mid \HonA(v;r_\HonA)=b} \cdot \ppr{r_\HonB\la \ConsisDist_\HonB(u)}{r'_\HonB = r_\HonB},
\end{align*}
where the third equality follows the induction hypothesis.
Thus, setting $\ConsisDist_\HonA(u)$ to be $r_\HonA \la \ConsisDist_\HonA(v)$ conditioned on $\HonA(v;r_\HonA)=b$ and $\ConsisDist_\HonB(u)$ to be $r_\HonB \la \ConsisDist_\HonB(v)$ completes the proof.
\end{proof}
} 

\subsubsection{Binary Trees}
\begin{definition}[binary trees]\label{def:BinaryTrees}
For $\rnd\in \N$, let $\Tree^\rnd$ be the complete directed binary tree of height $\rnd$. We naturally identify the vertices of $\Tree^\rnd$ with binary strings: the root is denoted by the empty string $\EmptyString$, and the left- and right-hand children of a non-leaf node $u$ are denoted by $u0$ and $u1$ respectively.
\begin{itemize}

  \item Let $\Vertices(\Tree^\rnd)$, $\Edges(\Tree^\rnd)$, $\Root(\Tree^\rnd)$ and  $\Leaves(\Tree^\rnd)$ denote the vertices, edges, root and leaves of $\Tree^\rnd$ respectively.

  \item For $u\in \Vertices(\Tree^\rnd) \setminus \Leaves(\Tree^\rnd)$, let $\Tree^\rnd_u$ be the sub-tree of $\Tree^\rnd$ rooted at $u$.

  \item For $u\in \Vertices(\Tree^\rnd)$, let $\desc_\rnd(u)$ [\resp $\propDesc_\rnd(u)$] be the {\sf descendants} of $u$ in $\Tree^\rnd$ including $u$ [\resp excluding $u$], and for $\Uni\subseteq \Vertices(\Tree^\rnd)$ let  $\desc_\rnd(\Uni) = \bigcup_{u\in \Uni} \desc_\rnd(u)$ and $\propDesc_\rnd(\Uni) = \bigcup_{u\in \Uni} \propDesc_\rnd(u)$.

   \item The {\sf frontier} of a set $\cU\subseteq\Vertices(\Tree^\rnd)$, denoted by $\frnt{\cU}$, is defined as $\cU \setminus \propDesc_\rnd(\cU)$.\footnote{This is the set of all ``maximal'' transcripts in $\cU$ under the partial order subsequence relation.}
\end{itemize}
\end{definition}
When $\rnd$ is clear from the context, it is typically omitted from the above notation.
We will make use of the following simple observations.
\begin{proposition}\label{prop:UnBalLowValueG}
For  subsets $\cA$ and $\cB$ of $\Vertices(\Tree)$, it holds that  $\descP{\cA} \subseteq \descP{\cA\setminus \propDescP{\cB}} \cup \descP{\cB\setminus \cA}$.
\end{proposition}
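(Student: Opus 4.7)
The plan is to chase witnesses up the tree. Fix an arbitrary $x\in\descP{\cA}$ and pick some $a\in\cA$ with $x\in\descP{a}$. I will show that $x$ lies in $\descP{\cA\setminus\propDescP{\cB}}\cup\descP{\cB\setminus\cA}$ by a short case analysis on whether $a$ itself belongs to $\propDescP{\cB}$, i.e., on whether $a$ has a proper ancestor in $\cB$.

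If $a\notin\propDescP{\cB}$, then $a\in\cA\setminus\propDescP{\cB}$, and since $x\in\descP{a}$ we are already done. Otherwise, $a$ has at least one proper ancestor in $\cB$. Using finiteness of $\Tree$, let $b$ be the \emph{topmost} such ancestor (the one closest to the root). By the maximality of $b$, no element of $\cB$ is a proper ancestor of $b$, so $b\notin\propDescP{\cB}$. Moreover, since $b$ is an ancestor of $a$ and $x\in\descP{a}$, we have $x\in\descP{b}$.

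Now split on $b\in\cA$ or not. If $b\in\cA$, then $b\in\cA\setminus\propDescP{\cB}$, so $x\in\descP{b}\subseteq\descP{\cA\setminus\propDescP{\cB}}$. If $b\notin\cA$, then $b\in\cB\setminus\cA$, so $x\in\descP{b}\subseteq\descP{\cB\setminus\cA}$. Either way $x$ is in the right-hand side, proving the claimed inclusion.

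The argument is essentially a two-line case analysis; the only subtlety is that a single step of ``replace $a$ by its ancestor in $\cB$'' may land in $\propDescP{\cB}$ again if iterated naively, so the cleanest formulation is to jump directly to the \emph{topmost} ancestor in $\cB$, which is guaranteed to exist by finiteness and automatically escapes $\propDescP{\cB}$. I do not expect any real obstacle beyond selecting this witness correctly.
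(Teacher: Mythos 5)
Your proof is correct. The route is close in spirit to the paper's but the witness you chase to is different. The paper canonicalizes first inside $\cA$: it picks $v\in\frnt{\cA}$ with $u\in\descP{v}$, so that \emph{any} node $w$ having $v$ as a proper descendant automatically satisfies $w\notin\cA$ (that is exactly what the frontier condition gives you), and then the single dichotomy $v\in\propDescP{\cB}$ or not settles everything in two lines. You instead start from an arbitrary $a\in\cA$, and when $a\in\propDescP{\cB}$ you jump to the \emph{topmost} $\cB$-ancestor $b$ of $a$; maximality gives $b\notin\propDescP{\cB}$, and then you need an extra case split on whether $b\in\cA$. Both arguments amount to chasing up to a node that escapes $\propDescP{\cB}$; the paper's choice of $v\in\frnt{\cA}$ simply absorbs your final case split automatically, which makes its version one step shorter, while yours avoids invoking the frontier notation at all. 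Either is perfectly fine.
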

\begin{proof}
Let $u\in \descP{\cA}$ and let $v\in\frnt{\cA}$ be such that $u\in\descP{v}$. We show that $v\in\descP{\cA\setminus \propDescP{\cB}} \cup \descP{\cB\setminus \cA}$. Clearly, if $v\notin\propDescP{\cB}$ we are done. Assume that $v\in \propDescP{\cB}$, namely, that there exists $w\in \cB$ such that $v\in\propDescP{w}$. Since $v$ is in the frontier of $\cA$ it follows that $w\notin\cA$. Hence, $v\in\descP{\cB\setminus\cA}$, and proof follows.
\end{proof}

\begin{proposition}\label{prop:UnBal1}
For subsets $\cA$, $\cB$ and $\cC$ of $\Vertices(\Tree)$, it holds that $\descP{\cA}\subseteq \descP{\paren{\cA\cup\cB}\setminus \descP{\cC}} \cup \descP{\cC\setminus\propDescP{\cB}}$.
\end{proposition}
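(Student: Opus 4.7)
Fix $u\in\descP{\cA}$; the plan is to produce an ancestor of $u$ that lies in either $(\cA\cup\cB)\setminus\descP{\cC}$ or $\cC\setminus\propDescP{\cB}$, which immediately places $u$ in the claimed union. Pick once and for all some $a\in\cA$ with $u\in\descP{a}$ (so $a$ lies on the root-to-$u$ path). I will split on whether the root-to-$u$ path contains any vertex of $\cA\cup\cB$ that sits \emph{outside} $\descP{\cC}$.

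In the easy case, suppose some ancestor $v$ of $u$ (possibly $u$ itself) belongs to $(\cA\cup\cB)\setminus\descP{\cC}$. Then $u\in\descP{v}\subseteq\descP{(\cA\cup\cB)\setminus\descP{\cC}}$ and we are done. In the hard case, every ancestor of $u$ lying in $\cA\cup\cB$ is in $\descP{\cC}$; in particular $a\in\descP{\cC}$, so at least one ancestor of $u$ belongs to $\cC$. Let $c^{\ast}$ be the \emph{topmost} (closest to the root) such ancestor; the plan is then to show that $c^{\ast}\in\cC\setminus\propDescP{\cB}$, which would give $u\in\descP{c^{\ast}}\subseteq\descP{\cC\setminus\propDescP{\cB}}$.

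The main obstacle is exactly this last claim, and it is handled by contradiction using the hard-case hypothesis together with maximality of $c^{\ast}$. Assume for contradiction that $c^{\ast}\in\propDescP{\cB}$, witnessed by some $b\in\cB$ that is a proper ancestor of $c^{\ast}$. Then $b$ is also a (strict) ancestor of $u$ and lies in $\cA\cup\cB$, so by the hard-case hypothesis $b\in\descP{\cC}$; hence there exists $c'\in\cC$ with $b\in\descP{c'}$, i.e.\ $c'$ is an ancestor of $b$ (possibly equal to $b$). But then $c'$ is an ancestor of $u$ that lies strictly above $c^{\ast}$ (since $c'$ is an ancestor of $b$, and $b$ is a proper ancestor of $c^{\ast}$), contradicting the choice of $c^{\ast}$ as the topmost ancestor of $u$ in $\cC$. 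Thus $c^{\ast}\notin\propDescP{\cB}$, which finishes the plan. The only slightly delicate point is bookkeeping ``ancestor including itself vs.\ proper ancestor,'' which is why going through the topmost $c^{\ast}$ (rather than an arbitrary one, as a naive induction would suggest) is what makes the argument terminate in one step.
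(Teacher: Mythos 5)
Your proof is correct and takes essentially the same approach as the paper: both arguments hinge on locating the topmost ancestor of $u$ lying in $\cC$ and showing that any $\cB$-element strictly above it cannot be in $\descP{\cC}$. The paper phrases this through frontiers ($w\in\frnt{\cC}$ has no proper $\cC$-ancestor by definition, so a $\cB$-element properly above $w$ lands in $(\cA\cup\cB)\setminus\descP{\cC}$ directly), while you derive the same fact by contradiction against your hard-case hypothesis; the content is the same modulo bookkeeping.
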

\begin{proof}
Let $u\in \descP{\cA}$ and let $v\in\frnt{\cA}$ be such that $u\in\descP{v}$. We show that $v\in\descP{\paren{\cA\cup\cB}\setminus \descP{\cC}} \cup \descP{\cC\setminus\propDescP{\cB}}$. Clearly, if $v\notin\descP{\cC}$ we are done. Assume that $v\in \descP{\cC}$, and let $w\in \frnt{\cC}$ such that $v\in\descP{w}$. If $w\notin\propDescP{\cB}$, then $w\in\cC\setminus \descP{\cB}$, thus $v\in\descP{\cC\setminus \descP{\cB}}$ and we are done. 
Otherwise, if $w\in\propDescP{\cB}$, then since $w$ is on the frontier of $\cC$ it follows that $w\in\descP{\cB\setminus\descP{\cC}}$ and thus also $v\in\descP{\cB\setminus\descP{\cC}}$. The proof follows.  
\end{proof}

\subsubsection{Protocol Trees}
We naturally identify a (possibly partial)  transcript of an $\rnd$-round, single-bit message  protocol with a rooted path in $\Tree^\rnd$. That is, the transcript $t\in \zo^m$ is identified with the path $\EmptyString,t_1,t_{1,2},\dots,t$.

\begin{definition}[tree representation of a protocol]\label{def:ProtocolTree}
We make use of the following definitions \wrt an $\rnd$-round protocol $\cpi = \HonAHonB$, and $\HonC\in \set{\Ac,\Bc}$.
\begin{itemize}

  \item Let $\round(\cpi) = \rnd$, let $\Tree(\cpi) = \Tree^\rnd$, and for $X\in \set{\Vertices,\Edges,\Root,\Leaves}$  let $X(\cpi) = X(\Tree(\cpi))$.

  \item The {\sf edge distribution} induced by a protocol $\cpi$ is the function $\EdgeDist_\cpi\colon \Edges(\cpi) \to[0,1]$, defined as $\EdgeDist_\cpi(u,v)$ being the probability that the transcript of a random execution of $\cpi$  visits $v$, conditioned that it visits $u$.
  \item For $u\in \Vertices(\cpi)$, let $\VerticesDist_\cpi(u) = \EdgeDist_\cpi(\EmptyString,u_1)\cdot\EdgeDist_\cpi(u_1,u_{1,2}) \ldots\cdot \EdgeDist_\cpi(u_{1,\dots,\size{u}-1},u)$, and let the {\sf leaf distribution} induced by $\cpi$ be the distribution $\LDist\cpi$ over $\Leaves(\cpi)$, defined by  $\LDist\cpi(u) = \VerticesDist_\cpi(u)$.

  \item The party that sends the next message on transcript $u$ is said to {\sf control}  $u$, and we denote this party by $\ctrls_\cpi(u)$. We call $\ctrls_\cpi\colon \Vertices(\cpi)\to \set{\HonA,\HonB}$ the {\sf control scheme} of $\cpi$. Let $\ctrl{\cpi}{\HonC} = \set{u\in \Vertices(\cpi)\colon \ctrls_\cpi(u) = \HonC}$.
 \end{itemize}
\end{definition}

For $\cS\subseteq \Vertices(\cpi)$, let $\ppr{\LDist{\cpi}}{\cS}$ be abbreviation for $\ppr{\ell\la\LDist{\cpi}}{\ell\in\cS}$. Note that every function $\EdgeDist\colon \Edges(\Tree^\rnd) \to[0,1]$ with $\EdgeDist(u,u0) + \EdgeDist(u,u1) =1$ for every $u \in \Vertices(\Tree^\rnd) \setminus \Leaves(\Tree^\rnd)$ with $\VerticesDist(u)>0$,  along with a control scheme (active in each node), defines a two party, $\rnd$-round, single-bit message protocol (the resulting protocol might be inefficient). The analysis in \cref{sec:IdealAttacker} naturally gives rise to functions over binary  trees that do not correspond to any two-party execution. We identify the ``protocols" induced  by such functions by the special symbol $\perp$. We let $\eex{\LDist\perp}{f} = 0$, for any real-value function $f$.

The view of a protocol as an edge-distribution function allows us to consider protocols induced  by sub-trees of $\Tree(\cpi)$.

\begin{definition}[sub-protocols]\label{def:subProtocolTree}
Let $\cpi$ be a protocol and let $u\in \Vertices(\cpi)$. Let $\Restrict{\cpi}{u}$ denote the protocol induced by the function $\EdgeDist_\cpi$  on the sub-tree of $\Tree(\cpi)$ rooted at $u$, if $\VerticesDist_\cpi(u)>0$, and let $\Restrict{\cpi}{u} = \perp$ otherwise.
\end{definition}

Namely, the protocol $\Restrict{\cpi}{u}$ is the protocol $\cpi$ conditioned on $u$ being the transcript of the first $\size{u}$ rounds. When convenient, we remove the parentheses from notation, and simply write $\cpi_u$. Two sub-protocols of interest are $\cpi_0$ and $\cpi_1$, induced by $\EdgeDist_\cpi$ and the trees rooted at the left- and right-hand descendants of $\Root(\Tree)$. For a measure $\Meas\colon \Leaves(\cpi)\to [0,1]$ and $u\in\Vertices(\cpi)$, let $\Restrict{\Meas}{u}\colon \Leaves(\cpi_u)\to [0,1]$ be the restricted measure induced by $\Meas$ on the sub-protocol $\cpi_u$. Namely, for any $\ell\in\Leaves(\cpi_u)$, $\Restrict{\Meas}{u}(\ell)=\Meas(\ell)$.

\subsubsection{Tree Value}
\begin{definition}[tree value]\label{def:ProtocolValue}
Let $\cpi$ be a two-party protocol that at the end of any of its executions, the parties output the {\sf same} real value. Let $\Color_\cpi\colon \Leaves(\cpi) \to \R$ be the {\sf common output} function of $\cpi$ ---  $\Color_\cpi(\ell)$ is the {\sf common} output of the parties in an execution ending in $\ell$.\footnote{Conditioned that  an execution of the protocol generates a transcript $\ell$, the parties' coins are in a product distribution. Hence, if the parties always have the same output, then the protocol's output is indeed a (deterministic) function of its transcript.} Let $\Val(\cpi) = \Ex_{\LDist\cpi}[\Color_\cpi]$, and for $x\in \R$ let $\Leaves_x(\cpi) = \set{\ell\in\Leaves(\cpi) \colon \Color_\cpi(\ell)=x}$.
\end{definition}

Throughout this paper we restrict ourselves to protocols whose common output is either one or zero, \ie the image of $\Color_{\cpi}$ is the set $\zo$. 
The following immediate fact states that the expected value of a measure, whose support is a subset of the 1-leaves of some protocol, is always smaller than the value of that protocol.
\begin{fact}\label{fact:ValueExp}
Let $\cpi$ be a protocol and let $\Meas$ be a measure over $\Leaves_1(\cpi)$. Then $\eex{\LDist\cpi}{\Meas} \leq \Val(\cpi)$.
\end{fact}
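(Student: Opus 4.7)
The plan is to unwind both sides of the inequality using the definitions from \cref{def:ProtocolValue}, and then use two simple facts: the common output takes values in $\{0,1\}$, and $\Meas$ is a measure (so $\Meas(\ell) \in [0,1]$ for every $\ell$) that is supported only on $\Leaves_1(\cpi)$.

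First I would rewrite $\Val(\cpi)$. By definition,
\[
\Val(\cpi) \;=\; \eex{\ell \la \LDist\cpi}{\Color_\cpi(\ell)} \;=\; \sum_{\ell \in \Leaves(\cpi)} \LDist\cpi(\ell) \cdot \Color_\cpi(\ell).
\]
Since the image of $\Color_\cpi$ is $\{0,1\}$ and $\Leaves_1(\cpi) = \{\ell : \Color_\cpi(\ell) = 1\}$, only the $1$-leaves contribute, so
\[
\Val(\cpi) \;=\; \sum_{\ell \in \Leaves_1(\cpi)} \LDist\cpi(\ell) \;=\; \ppr{\LDist\cpi}{\Leaves_1(\cpi)}.
\]

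Next I would expand $\eex{\LDist\cpi}{\Meas}$ similarly. Since $\Meas$ is a measure over $\Leaves_1(\cpi)$, we may view it as the zero function outside $\Leaves_1(\cpi)$, and hence
\[
\eex{\LDist\cpi}{\Meas} \;=\; \sum_{\ell \in \Leaves_1(\cpi)} \LDist\cpi(\ell) \cdot \Meas(\ell) \;\leq\; \sum_{\ell \in \Leaves_1(\cpi)} \LDist\cpi(\ell) \;=\; \Val(\cpi),
\]
where the inequality uses $\Meas(\ell) \leq 1$ for all $\ell$. Combining the two displays yields the claim.

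There is no real obstacle here; the fact is essentially a restatement of the observation that a $[0,1]$-valued measure supported on the $1$-leaves is pointwise dominated by the indicator of $\Leaves_1(\cpi)$, whose $\LDist\cpi$-expectation is exactly $\Val(\cpi)$. The only subtle point is noting that the $\{0,1\}$-restriction on $\Color_\cpi$ (stated in the paragraph preceding the fact) is what makes $\Val(\cpi)$ collapse to the total $\LDist\cpi$-mass of $\Leaves_1(\cpi)$; without it, the inequality would not hold at this level of generality.
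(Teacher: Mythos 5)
Your proof is correct and matches the intended argument: the paper states this fact without proof (calling it ``immediate''), and your expansion of both sides — using that $\Color_\cpi$ is $\{0,1\}$-valued so $\Val(\cpi)=\ppr{\LDist\cpi}{\Leaves_1(\cpi)}$, and that $\Meas\le 1$ is supported on $\Leaves_1(\cpi)$ — is precisely the one-line observation the authors had in mind.
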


We will also make use of the following proposition, showing that if two protocols are close and there exists a set of nodes whose value (the probability that the common output is one conditioned on reaching these nodes) is large in one protocol but small in the other, then the probability of reaching this set is small.

\begin{proposition}\label{prop:CloseProDiffValues}
Let $\cpi=\paren{\HonA,\HonB}$ and $\cpi'=\paren{\HonC,\HonD}$ be two $m$-round protocols with $\Color_\cpi \equiv \Color_{\cpi'}$, and let $\cF\subseteq \Vertices(\cpi)$ be a frontier. Assume that $\SDP{\LDist{\cpi}}{\LDist{\cpi'}}\leq \eps$ , that $\ppr{\LDist{\cpi}}{\Leaves_1(\cpi)\mid \descP{\cF}} \leq \alpha$, and that $\ppr{\LDist{\cpi'}}{\Leaves_1(\cpi)\mid \descP{\cF}} \geq \beta$, for some $\eps > 0$ and $0 \leq \alpha < \beta \leq 1$. Then, $\ppr{\LDist{\cpi}}{\descP{\cF}}\leq \eps \cdot \frac{1+\beta}{\beta-\alpha}$.
\end{proposition}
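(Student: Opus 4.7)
The plan is to set $p \eqdef \ppr{\LDist{\cpi}}{\descP{\cF}}$ and $p' \eqdef \ppr{\LDist{\cpi'}}{\descP{\cF}}$ and to use the statistical distance bound in two complementary ways: once to relate $p$ and $p'$ directly, and once to relate the probabilities of landing on a $1$-leaf inside $\descP{\cF}$. Since $\cpi$ and $\cpi'$ are $m$-round protocols on the same underlying tree $\Tree^m$ with $\Color_\cpi \equiv \Color_{\cpi'}$, every event under consideration (the set $\descP{\cF}$ and the set $\Leaves_1(\cpi) = \Leaves_1(\cpi')$) is well-defined under both leaf distributions, so applying $\SD(\LDist\cpi,\LDist{\cpi'}) \leq \eps$ is straightforward.

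Concretely, the first step is to observe that $|p - p'| \leq \eps$ and, applied to the intersection event, $|\ppr{\LDist{\cpi}}{\Leaves_1(\cpi) \cap \descP{\cF}} - \ppr{\LDist{\cpi'}}{\Leaves_1(\cpi) \cap \descP{\cF}}| \leq \eps$. The second step is to unpack the two conditional-probability hypotheses into joint statements: $\ppr{\LDist{\cpi}}{\Leaves_1(\cpi)\cap \descP{\cF}} \leq \alpha\, p$ and $\ppr{\LDist{\cpi'}}{\Leaves_1(\cpi)\cap \descP{\cF}} \geq \beta\, p'$.

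Chaining these inequalities yields
\begin{equation*}
\beta\, p' \;\leq\; \ppr{\LDist{\cpi'}}{\Leaves_1(\cpi)\cap \descP{\cF}} \;\leq\; \ppr{\LDist{\cpi}}{\Leaves_1(\cpi)\cap \descP{\cF}} + \eps \;\leq\; \alpha\, p + \eps,
\end{equation*}
and substituting $p' \geq p - \eps$ on the left gives $\beta(p - \eps) \leq \alpha p + \eps$, i.e., $(\beta - \alpha)\, p \leq (1+\beta)\, \eps$. Dividing by $\beta-\alpha>0$ (which is positive by assumption) gives the claimed bound $p \leq \eps \cdot \frac{1+\beta}{\beta-\alpha}$.

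There is no real obstacle here: the only thing to be careful about is that the sets $\Leaves_1(\cpi)$ and $\descP{\cF}$ make sense simultaneously for both leaf distributions, which follows from $\cpi$ and $\cpi'$ sharing the same tree and output labeling, and that the inequality $p' \geq p - \eps$ is applied in the correct direction (we need a lower bound on $p'$ to turn $\beta p'$ into something in terms of $p$). The rest is purely algebraic.
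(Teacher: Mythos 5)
Your proof is correct and follows essentially the same route as the paper's: define $\cS = \Leaves_1(\cpi)\cap\descP{\cF}$, apply the statistical-distance bound to the two events $\descP{\cF}$ and $\cS$, unfold the conditional probabilities into joint ones, and chain the resulting inequalities to isolate $\ppr{\LDist{\cpi}}{\descP{\cF}}$. The algebra is identical up to rearrangement, so there is nothing further to add.
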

Note that since both $\cpi$ and $\cpi'$ have $m$-rounds, it holds that  $\Vertices(\cpi)= \Vertices(\cpi')$ and $\Leaves(\cpi)=\Leaves(\cpi')$. Moreover, since $\Color_\cpi \equiv \Color_{\cpi'}$, it also holds that $\Leaves_1(\cpi)$, the set of $1$-leaves in $\cpi$, is identical to $\Leaves_1(\cpi')$, the set of $1$-leaves in $\cpi'$.
\begin{proof}
Let $\mu=\ppr{\LDist{\cpi}}{\descP{\cF}}$, $\mu'=\ppr{\LDist{\cpi'}}{\descP{\cF}}$ and $\cS=\Leaves_1(\cpi)\cap \descP{\cF}$. It follows that
\begin{align}
\ppr{\LDist{\cpi}}{\cS} &= \ppr{\LDist{\cpi}}{\descP{\cF}}\cdot \ppr{\LDist{\cpi}}{\Leaves_1(\cpi) \condition \descP{\cF}} \leq \mu \cdot \alpha
\end{align}
and that
\begin{align}
\ppr{\LDist{\cpi'}}{\cS} &= \ppr{\LDist{\cpi'}}{\descP{\cF}}\cdot \ppr{\LDist{\cpi'}}{\Leaves_1(\cpi) \condition \descP{\cF}} \geq \mu' \cdot \beta.
\end{align}
Moreover, since $\SDP{\LDist{\cpi}}{\LDist{\cpi'}}\leq \eps$, it follows that $\mu'\geq\mu-\eps$ and that $\ppr{\LDist{\cpi'}}{\cS}-\ppr{\LDist{\cpi}}{\cS}\leq\eps$. Putting it all together, we get
\begin{align*}
\eps & \geq \ppr{\LDist{\cpi'}}{\cS}-\ppr{\LDist{\cpi}}{\cS} \\
&\geq \mu'\cdot\beta-\mu\cdot \alpha \\
&\geq (\mu-\eps)\cdot \beta-\mu\cdot \alpha \\
& =   (\beta - \alpha)\cdot \mu-\beta\cdot \eps,
\end{align*}
which implies the proposition.
\end{proof}

\subsubsection{Protocol with Common Inputs}
We sometimes would like to apply the above terminology  to a protocol $\cpi = \HonAHonB$ whose parties get a common security parameter $1^\SecParam$. This is formally done by considering the protocol $\cpi_n = (\HonA_n,\HonB_n)$, where $\HonC_n$ is the algorithm derived by ``hardwiring" $1^\SecParam$ into the code of $\HonC$.

\subsection{Coin-Flipping Protocols}\label{sec:CFprotocols}
In a coin-flipping protocol two parties interact and in the end have a common output bit. Ideally, this bit should be random and no cheating party should be able to bias its outcome to either direction (if the other party remains honest). For interactive, probabilistic  algorithms $\HonA$ and $\HonB$, and $x\in \zo^\ast$,  let $\out(\HonA,\HonB)(x)$ denote the parties' output, on common input $x$.
\begin{definition}[(strong) coin-flipping]\label{def:strongCF}
A \ppt protocol $(\HonA,\HonB)$ is a {\sf $\delta$-bias coin-flipping protocol} if the following holds.
\begin{itemize}
  \item[Correctness:]\label{item:HonSngCF} $\Pr[\out(\HonA,\HonB)(1^n) =0] = \Pr[\out(\HonA,\HonB)(1^n) =1] = \frac12$.

  \item[Security:] \label{item:CheatSngCF} $\Pr[\out(\someAadv,\HonB)(1^n) =c],\Pr[\out(\HonA,\someBadv)(1^n)=c] \leq \frac12 +  \delta(n)$, for any \pptm's  $\someAadv$ and $\someBadv$, bit $c\in \zo$ and large enough $n$.

\end{itemize}
\end{definition}

Sometimes, \eg if the parties have (a priori known) opposite preferences, an even weaker definition of coin-flipping protocols is of interest.
\begin{definition}[weak coin-flipping]\label{def:weakCF}
A \ppt protocol $\HonAHonB$ is a {\sf weak $\delta$-bias coin-flipping protocol} if the following holds.
\begin{itemize}
  \item[Correctness:] Same as in \cref{def:strongCF}.

  \item[Security:] There exist bits $c_\Ac \neq c_\Bc \in \zo$ such that 
  $$\Pr[\out(\Aadv,\Bc)(1^n) =c_\Ac], \Pr[\out(\Ac,\Badv)(1^n) =c_\Bc] \leq \frac12 +  \delta(n)$$
  for any \pptm's  $\Aadv$ and $\Badv$, and large enough $n$.
\end{itemize}
\end{definition}

\begin{remark}
	Our result still holds when  the allowing the common bit  in a random honest execution of the protocol to be  an arbitrary  constant in $(0,1)$.  In contrast, our proof critically relies on the assumption that the honest parties are \emph{always} in agreement. 
\end{remark}

In the rest of the paper we restrict our attention to $\rnd$-round single-bit message coin-flipping protocols, where $\rnd=\rnd(n)$ is a function of the protocol's security parameter. Given such a protocol $\cpi = (\HonA,\HonB)$, we assume that its common output (\ie the coin) is efficiently computable from a (full) transcript of the protocol. (It is easy to see that these assumptions are \wlg .)

\subsection{One-Way Functions and Distributional One-Way Functions}
A one-way function (OWF) is an efficiently computable function whose inverse cannot be computed on average by any \pptm.
\begin{definition}\label{def:owf}
A polynomial-time  computable function $f\colon\zn\to\zo^{\ell(n)}$ is {\sf one-way} if
$$\ppr{x\la \zn; y =f(x)}{\Ac(1^n,y) \in f^{-1}(y)}=\negl(n)$$
for any \pptm $\Ac$.
\end{definition}

A seemingly weaker definition is that of a distributional OWF. Such a function is easy to compute, but it is hard to compute uniformly random preimages of random images.

 \begin{definition}\label{def:dowf}
 A polynomial-time  computable  $f\colon\zn\to\zo^{\ell(n)}$ is  {\sf distributional one-way}, if $\exists p\in\poly$ such that
 $$\SD\left((x,f(x))_{x\la \zn},(\Ac(f(x)),f(x))_{x\la \zn}\right)\geq \frac{1}{p(n)}$$
  for any \pptm $\Ac$ and large enough $n$.
 \end{definition}

Clearly, any one-way function is also a distributional one-way function. While the other implication is not necessarily always true, \citet{ImpagliazzoLu89} showed that the existence of distributional one-way functions  implies that of (standard) one-way functions. In particular, the authors of \cite{ImpagliazzoLu89} proved that if one-way functions do not exist, then any efficiently computable function has an inverter of the following form.
\begin{definition}[$\xi$-inverter]\label{sef:gammaInverter}
 An algorithm $\Inv$ is an {\sf $\xi$-inverter} of $f\colon\domain\to \range$ if the following holds.
\begin{align*}
\ppr{x\la \domain; y = f(x)}{\SD\left((x')_{x'\la f^{-1}(y)},(\Inv(y))\right) > \xi} \leq \xi.
\end{align*} 
\end{definition}
\begin{lemma}[{\cite[Lemma 1]{ImpagliazzoLu89}}]\label{lemma:NoDistOWF}
Assume one-way functions do not exist. Then for any polynomial-time computable function $f\colon\zn\to \zo^{\ell(n)}$ and $p\in \poly$, there exists a \pptm algorithm $\Inv$ such that the following holds for infinitely many $n$'s.  On security parameter $1^n$, algorithm $\Inv$ is a $1/p(n)$-inverter of $f_n$  (\ie $f$ is restricted to $\zn$).
\end{lemma}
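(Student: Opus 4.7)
The plan is to prove the contrapositive: if some efficient $f \colon \zn \to \zo^{\ell(n)}$ and $p \in \poly$ violate the conclusion (so that no \pptm is a $1/p(n)$-inverter of $f_n$ for all but finitely many $n$), then one-way functions exist. Fix such $f$ and $p$, and let $\HFam_n = \set{h \colon \zn \to \zn}$ be a family of pairwise-independent hash functions. The candidate one-way function is
\[
g(x, h, k) = \bigl(f(x),\, h,\, h(x)_{1,\dots,k},\, k\bigr), \quad x \in \zn,\; h \in \HFam_n,\; k \in \set{0, \dots, n}.
\]
The intuition is that appending $k$ bits of a random hash to $f(x)$ isolates a roughly unique preimage of $f$ whenever $k \approx \log\size{f^{-1}(f(x))}$, so an inverter that merely returns \emph{some} preimage of $g$ should translate into one that returns an \emph{almost uniform} preimage of $f$.

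To carry out the reduction I would show that any \pptm inverter $M$ for $g$ (in the standard ``find some preimage'' sense, which must exist assuming $g$ is not one-way) yields a $1/p(n)$-inverter $\Inv$ for $f$, contradicting the hypothesis. On input $y$, algorithm $\Inv$ iterates over $k \in \set{0, \dots, n}$; for each $k$ it samples $h \la \HFam_n$ and $z \la \zo^k$, calls $M(y, h, z, k)$, and outputs the first component whenever $f$ maps it to $y$. Pairwise independence of $\HFam_n$ guarantees that for $k$ close to $\log\size{f^{-1}(y)}$, the set $\set{x' \in f^{-1}(y) : h(x')_{1,\dots,k} = z}$ has constant expected size, so that $M$'s output --- when successful --- is nearly uniform over $f^{-1}(y)$.

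The main technical obstacle is the statistical-distance bookkeeping: one must establish that averaging over $k$, $h$, $z$, and $y$ yields an $\Inv(y)$ that is within $1/p(n)$ statistical distance of uniform on $f^{-1}(y)$ for a $(1 - 1/p(n))$-fraction of $y$'s, and not merely that \emph{some} valid preimage is produced. This proceeds by a two-level averaging argument --- picking the ``correct'' bucket $k$ per typical $y$, then averaging over the randomness of $M$ --- followed by a Markov-type conversion of expected-case success into a pointwise guarantee on most $y$. The polynomial losses incurred in these steps are absorbable by starting with a sufficiently small $1/q(n)$-inversion advantage for $g$, which exists under the no-OWF hypothesis applied to $g$. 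This is precisely the original argument of \cite{ImpagliazzoLu89}, and the heart of the difficulty lies in the uniformity analysis rather than in establishing that some preimage is returned.
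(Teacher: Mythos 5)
The paper does not actually prove this lemma: it cites \citet{ImpagliazzoLu89} (referring to Impagliazzo's thesis for the full proof) and adds only a short remark explaining how to pass from their joint-distribution notion of a $\xi$-inverter to the pointwise notion used in \cref{sef:gammaInverter} by squaring the parameter. Your sketch reconstructs precisely that cited hashing-based argument, and the ``two-level averaging plus Markov'' step you describe to obtain a pointwise guarantee over most $y$ is exactly the conversion the paper's remark handles, so your approach coincides with the paper's.
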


\citet{ImpagliazzoLu89} only gave a proof sketch for the above lemma. The full proof can be found in \cite[Theorem 4.2.2]{ImpagliazzoPHD}.

\begin{remark}[Definition of inverter]
In their original definition, \citet{ImpagliazzoLu89} defined a $\xi$-inverter as an algorithm $\Inv$ for which it holds that
\begin{align*}
\SDP{(x,f(x))_{x\la \zn}}{(\Inv(f(x)),f(x))_{x\la \zn}} < \xi.
\end{align*}
They also proved \cref{lemma:NoDistOWF} \wrt this definition. By taking, for example, $\xi'=\xi^2$ and applying their proof with $\xi'$, it is easy to see how our version of \cref{lemma:NoDistOWF} follows \wrt the above definition of a $\xi$-inverter. 
\end{remark}

Note that nothing is guaranteed when invoking  a good inverter (\ie a  $\gamma$-inverter for some small $\gamma$) on an \emph{arbitrary distribution}. Yet the following lemma yields that if the distribution in consideration  is ``not too different" from  the output distribution of $f$, then such good inverters are useful.

\begin{lemma}\label{lemma:SDmQueriesAlg}
Let $f$ and $g$ be two randomized functions  over the same domain $\domain\cup\set{\perp}$ such that $f(\perp)\equiv g(\perp)$, and let $\set{P_i}_{i\in [k]}$ be a set of distributions over $\domain\cup\set{\perp}$ such that for some $a \geq 0$ it holds that $\Ex_{q\la P_i}[\SD(f(q),g(q))]\leq a$ for every $i\in [k]$. Let $\Ac$ be a $k$-query oracle-aided algorithm that only makes queries in $\domain$. Let $Q=(Q_1,\ldots,Q_k)$ be the random variable of the queries of $\Ac^f$ in such a random execution, setting $Q_i=\perp$ if $\Ac$ makes less than $i$ queries.

Assume that $\ppr{(q_1,\ldots,q_k) \la Q}{\exists i\in [k] \colon  q_i\neq \perp \land \; Q_i(q_i) > \lambda\cdot P_i(q_i)} \leq b$ for some $\lambda,b \geq 0$. Then
$\SD\left(\Ac^{f},\Ac^{g}\right)\le b +  ka\lambda $.
\end{lemma}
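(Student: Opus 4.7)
I will establish the bound via a coupling of the executions $\Ac^f$ and $\Ac^g$. Couple the two runs by giving $\Ac$ the same random tape in both, and, at each query $q\neq \perp$, use an optimal coupling of $f(q)$ and $g(q)$ so that the two responses agree with probability $1-\SD(f(q),g(q))$; for $q=\perp$ the responses agree deterministically since $f(\perp)\equiv g(\perp)$. As long as no response has differed, the two executions produce identical transcripts; once a response differs, we pessimistically declare ``divergence.'' Let $\mathrm{Div}_i$ be the event that the first divergence occurs at step $i$, and $\mathrm{Div}=\bigcup_{i=1}^k \mathrm{Div}_i$. By the standard coupling inequality, $\SD(\Ac^f,\Ac^g)\le \Pr[\mathrm{Div}]$, so it suffices to show $\Pr[\mathrm{Div}]\le b+ka\lambda$.

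\noindent Let $E$ be the assumed bad event $\{\exists i\colon Q_i\neq\perp\wedge Q_i(Q_i)>\lambda P_i(Q_i)\}$, so that $\Pr[E]\le b$. Split
\[
\Pr[\mathrm{Div}]\;\le\;\Pr[E]+\Pr[\mathrm{Div}\wedge \bar E]\;\le\;b+\sum_{i=1}^k \Pr[\mathrm{Div}_i\wedge \bar E].
\]
The key point is that on the event $\mathrm{Div}_i\wedge\bar E$, in particular query $i$ is \emph{good}: either $Q_i=\perp$ (in which case the two responses agree and $\mathrm{Div}_i$ cannot occur), or $Q_i(q)\le \lambda P_i(q)$ for the realized value $q$. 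Note that under the coupling, conditioned on no previous divergence, the $i$-th query is distributed exactly as in $\Ac^f$, so for any $q$,
\[
\Pr[\mathrm{Div}_i\wedge \bar E\wedge Q_i=q]\;\le\;\Pr[\text{no prior divergence}\wedge Q_i=q\wedge \neg\mathrm{bad}_i(q)]\cdot \SD(f(q),g(q)),
\]
where the $\SD$ factor comes from the optimal coupling of $f(q),g(q)$. Since $\Pr[Q_i=q,\text{no prior divergence}]\le Q_i(q)$, we obtain
\[
\Pr[\mathrm{Div}_i\wedge \bar E]\;\le\;\sum_{q\neq \perp,\,\neg\mathrm{bad}_i(q)}Q_i(q)\,\SD(f(q),g(q))\;\le\;\lambda\sum_{q}P_i(q)\,\SD(f(q),g(q))\;\le\;\lambda a,
\]
using $Q_i(q)\le \lambda P_i(q)$ on the good set. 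Summing over $i\in[k]$ yields $\Pr[\mathrm{Div}\wedge \bar E]\le k\lambda a$, and combining with $\Pr[E]\le b$ gives the claimed bound.

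\noindent\textbf{Main obstacle.} The only subtle point is the order of operations: a naive hybrid argument that first bounds each term $\E_{Q_i}[\SD(f(Q_i),g(Q_i))]$ separately by $\Pr[\mathrm{bad}_i(Q_i)]+\lambda a$ and then sums leads to the useless bound $\sum_i\Pr[\mathrm{bad}_i(Q_i)]+k\lambda a$, which can be as large as $kb+k\lambda a$ (the hypothesis controls only the \emph{union} of the $\mathrm{bad}_i$ events, not their sum of probabilities). Carrying the global ``good event'' $\bar E$ \emph{inside} each hybrid term—rather than splitting term-by-term—avoids paying $b$ more than once and is what makes the $b+k\lambda a$ bound come out.
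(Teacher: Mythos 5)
Your proof is correct and follows essentially the same route as the paper: both use an optimal per-query coupling of $\Ac^f$ and $\Ac^g$, bound $\SD(\Ac^f,\Ac^g)$ by the divergence (abort) probability, and then split that probability into the global unbalanced-query event (paid once, $\le b$) plus a union bound over indices of a divergence occurring at a balanced query (each contributing $\le \lambda a$). Your ``main obstacle'' remark is exactly the point the paper's proof handles by defining the per-index good sets $\cS_i$ and bounding the unbalanced event by $\Pr[\exists i\colon q_i\notin\cS_i\cup\set{\perp}]\le b$ rather than by a sum of per-$i$ probabilities.
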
 
To prove \cref{lemma:SDmQueriesAlg}, we use the following proposition.
\begin{proposition}\label{fact:coupling}
For every two distributions $P$ and $Q$ over a set $\domain$, there exists a distribution $R_{P,Q}$ over $\domain \times \domain$, such that the following hold:
\begin{enumerate}
  \item $(R_{P,Q})_1 \equiv P$ and $(R_{P,Q})_2 \equiv Q$, where $(R_{P,Q})_b$ is the projection of $R_{P,Q}$ into its $b$'th coordinate.
  \item $\ppr{(x_1,x_2) \la R_{P,Q}}{x_1 \neq x_2} = \SD(P,Q)$.
\end{enumerate}
\end{proposition}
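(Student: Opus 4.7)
The plan is to establish the proposition via the standard \emph{maximal coupling} construction. Let $\delta = \SD(P,Q)$ and define the pointwise minimum $m(x) = \min(P(x), Q(x))$. A short calculation using $|P(x) - Q(x)| = P(x) + Q(x) - 2m(x)$ gives $\sum_{x \in \domain} m(x) = 1 - \delta$. The intuition is that the mass $m$ is ``common'' to $P$ and $Q$ and can be coupled diagonally (forcing $x_1 = x_2$), while the remaining mass of weight $\delta$ lives on disjoint supports and must be coupled independently.

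Concretely, I would define $R_{P,Q}$ as the mixture of two sub-distributions. With probability $1 - \delta$, sample $x \in \domain$ from the distribution $m/(1-\delta)$ and output $(x,x)$; with probability $\delta$, sample $x_1$ from $P' = (P - m)/\delta$ and independently $x_2$ from $Q' = (Q - m)/\delta$, and output $(x_1, x_2)$. Here $P'$ and $Q'$ are well-defined distributions since $P - m \geq 0$, $Q - m \geq 0$, and each has total mass $\delta$ (treating the $\delta = 0$ case separately, in which we only ever use the first component).

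Next I would verify the two required properties. For the first marginal at a point $x$, I compute
\begin{align*}
(R_{P,Q})_1(x) &= (1-\delta) \cdot \frac{m(x)}{1-\delta} + \delta \cdot \frac{P(x) - m(x)}{\delta} = m(x) + P(x) - m(x) = P(x),
\end{align*}
and the same calculation with $Q$ in place of $P$ handles the second marginal. For the disagreement probability, the first component of the mixture always produces $x_1 = x_2$, while in the second component the supports of $P'$ and $Q'$ are disjoint (since $P'(x) > 0$ forces $P(x) > Q(x)$ and $Q'(x) > 0$ forces $Q(x) > P(x)$), hence $x_1 \neq x_2$ occurs with probability $1$ there. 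Therefore $\Pr_{(x_1,x_2) \la R_{P,Q}}[x_1 \neq x_2] = \delta = \SD(P,Q)$, as required.

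I do not anticipate a real obstacle: the only subtlety is handling the degenerate cases $\delta = 0$ (use only the diagonal component) and $\delta = 1$ (use only the independent-product component), which are immediate from the same formulas with the appropriate convention that the zero-weight branch is omitted. The coupling constructed is optimal because any coupling must satisfy $\Pr[x_1 \neq x_2] \geq \SD(P,Q)$ by the standard coupling inequality, but that direction is not needed here since the proposition only asserts existence.
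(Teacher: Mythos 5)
Your proof is correct and uses essentially the same maximal-coupling construction as the paper (pointwise minimum $m=\min(P,Q)$, diagonal coupling of the shared mass, independent product of the renormalized residuals $P-m$ and $Q-m$). The only difference is cosmetic: you explicitly justify that the residuals have disjoint supports (so the off-diagonal branch disagrees with probability one), a step the paper states as "clear."
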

\begin{proof}
For every $x\in \domain $, let $M(x) = \min\set{P(x),Q(x)}$, let $M_P(x) = P(x) - M(x)$ and  $M_Q(x) = Q(x) - M(x)$. The distribution $R_{P,Q}$ is defined by the following procedure. With probability $\mu = \sum_{x\in \domain}M(x)$, sample an element $x$ according to $M$ (\ie $x$ is returned with probability $\frac{M(x)}\mu$), and return $(x,x)$; otherwise return $(x_P,x_Q)$ where $x_P$ is sampled according to $M_P$ and $x_Q$ is sampled according to $M_Q$. It is clear that $\ppr{(x_1,x_2) \la R_{P,Q}}{x_1 \neq x_2} = \SD(P,Q)$. It also holds that
\begin{align*}
(R_{P,Q})_1(x) &= \mu\cdot \frac{M(x)}\mu + (1- \mu) \cdot \frac{M_P(x)}{\mu_P}\\
& =M(x) + M_P(x)\\
& = P(x),
\end{align*}
where $\mu_P \eqdef  \sum_{x\in\domain}{M_P} = (1- \mu)$. Namely, $(R_{P,Q})_1 \equiv P$. The proof that $(R_{P,Q})_2 \equiv Q$ is  analogous.

\remove{
Using standard argument, it holds that $\SD\left(\Ac^{f},\Ac^{g}\right)\le \frac{\gamma\cdot(q+1)}{\delta}$ is at most the probability that the following experiment aborts.
\begin{experiment}~
\begin{enumerate}
  \item Start emulating a random execution of $\Ac$.
  \item Do until $\Ac$ halts.
  \begin{enumerate}
    \item Let $q$ be the next query of $\Ac$.

    \item Let $w = f(q)$.
    \item  With probability $\frac{\pr{g(q) = w}}{\pr{f(q) = w}}$, give  $w$ to $\Ac$ as the oracle answer.

    Otherwise, abort.
  \end{enumerate}
  \end{enumerate}
\end{experiment}
We prove by induction, and thus complete the proof of the proposition, that probability of aborting in the first $i$ rounds of the above experiment is bounded by $\frac{\gamma\cdot(i)}{\delta}$.  Assume for $i-1$, and let $q_i$ and $w_i$ be the values of $q$ and $w$ respectively, in the $i$'th round of a random execution the above experiment (we assume \wlg that the $i$'th round is reachable, as otherwise the proof is immediate).
}

\end{proof}

\begin{proof}[Proof of \cref{lemma:SDmQueriesAlg}]
Using \cref{fact:coupling} and standard argument, it holds that $\SD\left(\Ac^{f},\Ac^{g}\right)$ is at most the probability that the following experiment aborts.
\begin{experiment}~
\begin{enumerate}
  \item Start emulating a random execution of $\Ac$.
  \item Do until $\Ac$ halts:
  \begin{enumerate}
    \item Let $q$ be the next query of $\Ac$.
   \item Sample $(a_1,a_2) \la R_{f(q),g(q)}$.
    \item  If $a_1 = a_2$,  give $a_1$ to $\Ac$ as the oracle answer.

    Otherwise, abort.
  \end{enumerate}
  \end{enumerate}
\end{experiment}
By setting $\cS_i=\set{q:q\in\Supp(Q_i)\land Q_i(q)\leq\lambda \cdot P_i(q)}$ for $i\in[k]$ and recalling that by assumption $f(\perp)\equiv g(\perp)$ (thus, when sampling $(a_1,a_2) \la R_{f(\perp),g(\perp)}$, $a_1$ always equals $a_2$), we conclude that
 \begin{align*}
\SD\left(\Ac^{f},\Ac^{g}\right)
&\leq \ppr{(q_1,\ldots,q_k) \la Q}{\exists i\in [k] \colon  q_i\notin  \cS_i\cup\set{\perp}}\\
&\quad+\ppr{(q_1,\ldots,q_k)\la Q}{\exists i\in[k] \colon a_1\ne a_2 \text{~where~}(a_1,a_2)\la R_{f(q_i),g(q_i)} \land q_i\in \cS_i} \\
 &\leq b+\sum_{i\in[k]}\sum_{q\in \cS_i } Q_i(q)\cdot\pr{a_1\ne a_2 \text{~where~}(a_1,a_2)\la R_{f(q),g(q)}}\\
 &\overset{(1)}{\leq}  b+\sum_{i\in [k]} \sum_{q\in \cS_i } Q_i(q)\cdot\SD(f(q),g(q))\\
 &\overset{(2)}{\leq}  b+\sum_{i\in [k]} \sum_{q\in \Supp(P_i) }\lambda\cdot P_i(q)\cdot\SD(f(q),g(q))\\
 & \leq  b + \lambda \sum_{i\in[k]}\Ex_{q\la P_i}[\SD(f(q),g(q))] \\
 & \leq  b +  k a \lambda,
 \end{align*}
where (1) follows from \cref{fact:coupling} and (2) from the definition of the sets $\set{S_i}_{i\in[k]}$.
\end{proof}

\subsection{Two Inequalities}
We make use of following technical lemmas, whose  proofs are given in \cref{sec:MissProof}.
\newcommand{\CalcLemmaOne}[1]{%
Let $ x,y \in [0,1]$, let $k\geq 1$ be an integer and let $a_1,\ldots,a_k,b_1,\ldots,b_k \in (0,1]$. Then for any $p_0, p_1 \geq 0$ with $p_0+p_1=1$, it holds that
\begin{align}\IfEqCase{#1}{{1}{\label{eq:CalcLemma}}{2}{}}
p_0\cdot \frac{x^{k+1}}{\prod_{i=1}^k a_i} + p_1\cdot \frac{y^{k+1}}{\prod_{i=1}^k b_i} \geq \frac{(p_0x+p_1y)^{k+1}}{\prod_{i=1}^k (p_0a_i + p_1b_i)}.%
\IfEqCase{#1}{{1}{}{2}{\nonumber}}
\end{align}
}%

\begin{lemma}\label{lemma:calculus1}
\CalcLemmaOne{2}
\end{lemma}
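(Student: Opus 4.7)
The plan is to recognize the claimed inequality as a special case (with two ``groups'' $j \in \{0,1\}$) of a generalized Cauchy--Schwarz / Titu inequality, which in turn follows from H\"older's inequality applied with $k+1$ sequences. Specifically, I would first reduce to proving the following: for any nonnegative reals $\{A_j\}_j$ and positive reals $\{\alpha_{i,j}\}_{i,j}$,
\begin{equation}\label{eq:titu-plan}
\sum_{j} \frac{A_j^{k+1}}{\prod_{i=1}^{k} \alpha_{i,j}} \;\ge\; \frac{\bigl(\sum_{j} A_j\bigr)^{k+1}}{\prod_{i=1}^{k}\bigl(\sum_{j} \alpha_{i,j}\bigr)}.
\end{equation}

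Then I would apply \eqref{eq:titu-plan} with index set $j \in \{0,1\}$ and the substitution $A_0 = p_0 x$, $A_1 = p_1 y$, $\alpha_{i,0} = p_0 a_i$, $\alpha_{i,1} = p_1 b_i$. Under this substitution, on the left-hand side $A_j^{k+1} / \prod_i \alpha_{i,j}$ becomes $(p_j)^{k+1}/(p_j)^k$ times the corresponding ratio, i.e.\ $p_0 x^{k+1}/\prod_i a_i$ and $p_1 y^{k+1}/\prod_i b_i$, exactly matching the LHS of \eqref{eq:CalcLemma}. On the right-hand side, $\sum_j A_j = p_0 x + p_1 y$ and $\sum_j \alpha_{i,j} = p_0 a_i + p_1 b_i$, exactly matching the RHS. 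So proving \eqref{eq:titu-plan} suffices.

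To establish \eqref{eq:titu-plan}, I would use H\"older's inequality in the form $\sum_j \prod_{i=1}^{k+1} x_{i,j} \le \prod_{i=1}^{k+1} \bigl(\sum_j x_{i,j}^{k+1}\bigr)^{1/(k+1)}$ (conjugate exponents all equal to $k+1$). Choose, for each $j$, the sequence $x_{i,j} = \alpha_{i,j}^{1/(k+1)}$ for $i=1,\dots,k$, and $x_{k+1,j} = A_j / \prod_{i=1}^{k} \alpha_{i,j}^{1/(k+1)}$. Then $\prod_{i=1}^{k+1} x_{i,j} = A_j$, so H\"older's inequality raised to the $(k+1)$-th power yields
\[
\Bigl(\sum_j A_j\Bigr)^{k+1} \;\le\; \prod_{i=1}^{k}\Bigl(\sum_j \alpha_{i,j}\Bigr)\cdot \sum_j \frac{A_j^{k+1}}{\prod_{i=1}^{k} \alpha_{i,j}},
\]
and dividing by $\prod_{i=1}^{k}(\sum_j \alpha_{i,j})$ gives \eqref{eq:titu-plan}.

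There is no real obstacle: the proof is a textbook application of H\"older once one spots the right substitution. The only thing to be careful about is handling the edge cases where some $a_i$ or $b_i$ equals zero (excluded by hypothesis $a_i,b_i \in (0,1]$) and the trivial case $p_0 = 0$ or $p_1 = 0$, which reduces \eqref{eq:CalcLemma} to an equality. An alternative route, worth mentioning but heavier, would be to prove directly that the function $(a_1,\dots,a_k,x) \mapsto x^{k+1}/\prod_i a_i$ is jointly convex on the positive orthant (which can be checked via a Hessian computation, eventually reducing to Cauchy--Schwarz on the $u_i$'s) and invoke Jensen's inequality; the H\"older route is however both shorter and more transparent.
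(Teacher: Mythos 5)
Your proof is correct, and it takes a genuinely different route from the paper's. The paper first normalizes by dividing through by the right-hand side, substitutes $z = p_0x/(p_0x+p_1y)$ and $c_i = p_0 a_i/(p_0 a_i + p_1 b_i)$, and reduces to showing
\begin{align*}
\frac{z^{k+1}}{\prod_{i=1}^k c_i} + \frac{(1-z)^{k+1}}{\prod_{i=1}^k (1-c_i)} \geq 1
\end{align*}
for $z\in[0,1]$ and $c_i\in(0,1)$. It then establishes this by a hands-on calculus argument: for fixed $z$, it computes first and second partial derivatives to locate a critical point at $(c_1,\dots,c_k)=(z,\dots,z)$, invokes convexity (built from $-\ln$ and $\exp$) to promote this to a global minimum, and checks that the minimum value equals $1$. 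Your route is shorter and more algebraic: you recognize the claimed inequality as the two-term case of a generalized Radon/Bergstr\"om inequality, which you prove directly by applying the generalized H\"older inequality with $k+1$ factors (all exponents equal to $k+1$) and the factorization $x_{i,j}=\alpha_{i,j}^{1/(k+1)}$, $x_{k+1,j}=A_j/\prod_i \alpha_{i,j}^{1/(k+1)}$. Your substitution $A_0=p_0x$, $A_1=p_1y$, $\alpha_{i,0}=p_0a_i$, $\alpha_{i,1}=p_1b_i$ is checked correctly, your handling of the edge cases $p_0\in\{0,1\}$ (where both sides coincide) is right, and the H\"older argument goes through cleanly. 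The trade-off is that the paper's argument is elementary (no appeal to H\"older) but longer, while yours is more conceptual and actually yields a stronger, multi-term generalization for free.
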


\newcommand{\CalcLemmaTwo}[1]{%
For every $\delta\in(0,\frac12]$, there exists  $\alpha=\alpha(\delta)\in(0,1]$ such that
\begin{align}\IfEqCase{#1}{{1}{\label{eq:calculus2}}{2}{}}
\lambda\cdot a_1^{1+\alpha}\cdot(2-a_1\cdot x )+ a_2^{1+\alpha}\cdot(2-a_2\cdot x)\leq(1+\lambda)\cdot(2-x),%
\IfEqCase{#1}{{1}{}{2}{\nonumber}}
\end{align}
for every $x\geq\delta$ and $\lambda,y \geq0$ with $\lambda y \leq 1$, for $a_1=1+y$ and $a_2=1-\lambda y$.
}%

\begin{lemma}\label{lemma:calculus2}
\CalcLemmaTwo{2}
\end{lemma}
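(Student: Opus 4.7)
The plan is to set $\alpha(\delta) := \delta/(2-\delta)$, which lies in $(0, 1/3] \subseteq (0,1]$ for $\delta \in (0, 1/2]$.

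First I would substitute $u := a_1 x = (1+y)x$ and $v := a_2 x = (1 - \lambda y) x$. Using $\lambda a_1 + a_2 = 1 + \lambda$, we have $\lambda u + v = (1+\lambda) x$, so $x = \tfrac{\lambda u + v}{1+\lambda}$ is the convex combination of $u$ and $v$ with weights $\lambda/(1+\lambda)$ and $1/(1+\lambda)$. The hypotheses $y \geq 0$, $\lambda y \leq 1$, and $x \geq \delta$ give $u \geq x \geq \delta$ and $v \in [0, x] \subseteq [0, u]$. Dividing both sides of the target inequality by $x^{1+\alpha} > 0$ reduces it to
\begin{align*}
\lambda H(u) + H(v) \leq (1+\lambda) H(x), \qquad\text{where } H(t) := t^{1+\alpha}(2 - t).
\end{align*}

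Next, a direct computation gives $H''(t) = (1+\alpha) t^{\alpha-1}\bigl[2\alpha - (2+\alpha)t\bigr]$, so $H$ is concave exactly on $[t^*, \infty)$ for $t^* := \tfrac{2\alpha}{2+\alpha}$. Our choice of $\alpha$ gives $t^* = \tfrac{2\delta}{4 - \delta} \leq \delta$, so in particular $t^* \leq x$. Let $T(t) := H(x) + H'(x)(t - x)$ be the tangent to $H$ at $x$. By linearity, $\lambda T(u) + T(v) = (1+\lambda) T\bigl(\tfrac{\lambda u + v}{1+\lambda}\bigr) = (1+\lambda) T(x) = (1+\lambda) H(x)$, so it suffices to prove $H(t) \leq T(t)$ for all $t \in [0, u]$. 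The bound $H(t) \leq T(t)$ is immediate on $[t^*, u]$ by concavity. For $t \in [0, t^*]$, set $R(t) := H(t) - T(t)$, which satisfies $R(x) = R'(x) = 0$. On $[t^*, x]$, since $H'$ is decreasing (concavity), $R'(t) = H'(t) - H'(x) \geq 0$, so $R$ is non-decreasing there and $R(t^*) \leq R(x) = 0$. On $[0, t^*]$, $H'$ is increasing (convexity), so $R'$ is non-decreasing and changes sign at most once, meaning $R$ attains its maximum on $[0, t^*]$ at an endpoint; since $R(t^*) \leq 0$, it remains to check $R(0) \leq 0$. A direct calculation gives $T(0) = H(x) - x H'(x) = x^{1+\alpha}\bigl[x(1+\alpha) - 2\alpha\bigr]$, which is non-negative iff $\alpha \leq x/(2-x)$. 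Since $t \mapsto t/(2-t)$ is increasing on $[0, 2)$ and $x \geq \delta$, our choice gives $\alpha = \delta/(2-\delta) \leq x/(2-x)$, so $T(0) \geq 0$ and $R(0) = -T(0) \leq 0$. Hence $H \leq T$ on $[0, u]$, and thus $\lambda H(u) + H(v) \leq \lambda T(u) + T(v) = (1+\lambda) H(x)$.

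The main obstacle is that $H$ is \emph{convex} (not concave) on $[0, t^*]$, a region which includes the boundary value $v = 0$ attained when $\lambda y = 1$. The key trick is to bypass this by globally bounding $H$ on $[0, u]$ from above by the tangent line $T$ at the ``weighted center'' $x$; the sign computation $T(0) \geq 0$ is precisely what pins down the choice $\alpha(\delta) = \delta/(2 - \delta)$.
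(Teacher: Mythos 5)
Your proof is correct and takes a genuinely different, more elementary route than the paper's. The paper's argument is non-constructive: it rearranges the inequality, splits into cases by $\lambda y$, and shows that certain supremum quantities tend to $0$ as $\alpha\to 0^+$ (via derivative arguments, Taylor expansions with Lagrange remainders, and a Bolzano--Weierstrass subsequence argument over several limiting regimes), concluding only that \emph{some} sufficiently small $\alpha(\delta)$ exists, with no closed form. You give the explicit value $\alpha=\delta/(2-\delta)$ together with a short tangent-line argument: after normalizing by $x^{1+\alpha}$ the claim becomes a Jensen-type inequality for $H(t)=t^{1+\alpha}(2-t)$, which is concave past its inflection point $t^*=2\alpha/(2+\alpha)\leq\delta\leq x$ but convex on $[0,t^*]$; rather than requiring concavity globally, you show the tangent $T$ at $x$ dominates $H$ on all of $[0,u]$, patching the convex part because $R=H-T$ is convex there with both endpoint values $\leq 0$, and the single sign requirement $R(0)\leq 0$ is exactly what pins down the choice of $\alpha$. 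This buys brevity, an explicit formula, and the elimination of the paper's compactness machinery. One small wording fix: you state ``$T(0)\geq 0$ iff $\alpha\leq x/(2-x)$,'' which is the right equivalence only when $x<2$; for $x\geq 2$ the underlying inequality $x(1+\alpha)\geq 2\alpha$ holds automatically (and $x/(2-x)$ is negative), so $T(0)\geq 0$ still holds for every $x\geq\delta$ --- which matters, since the lemma places no upper bound on $x$ --- but the sentence as written does not explicitly cover that regime.
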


\section{The Biased-Continuation Attack}\label{sec:IdealAttacker}
In this section we describe an attack to bias any coin-flipping protocol. The described attack, however, might be impossible to implement efficiently (even when assuming one-way functions do not exist). Specifically, we assume access to an ideal sampling algorithm to sample a \emph{uniform} preimage of \emph{any} output of the functions under consideration. Our actual attack, the subject of \cref{sec:RealAttacker}, tries to mimic the behavior of this attack while being efficiently implemented (assuming one-way functions do not exist).

The following discussion is restricted to (coin-flipping) protocols whose parties always output the same bit as their common output, and this bit is determined by the protocol's transcript. In all protocols considered in this section, the messages are bits. In addition, the protocols under consideration have no inputs (neither private nor common), and in particular no security parameter is involved.\footnote{In \cref{sec:RealAttacker}, we make use of these input-less  protocols by ``hardwiring" the security parameter of the protocols under consideration.}  Recall that $\perp$ stands for a canonical invalid/undefined protocol, and that $\Ex_{\LDist\perp}[f] = 0$, for any real value function $f$. (We refer the reader to \cref{section:Preliminaries} for a discussion of the conventions and assumptions used above.) Although the focus of this paper is coin-flipping protocols, all the results in this section hold true for any two-party protocol meeting the above assumptions. Specifically, we do not assume that an honest execution of the protocol produces a uniformly random bit, nor do we assume that the parties executing the protocol can be implemented by a polynomial time probabilistic Turing machine. For this reason we omit the term ``coin-flipping" in this section.

Throughout the section we prove statements \wrt attackers that, when playing the role of the left-hand party of the protocol (\ie $\HonA$), are trying to bias the common output of the protocol towards one, and, when playing the role of the right-hand party of the protocol (\ie $\HonB$), are trying to bias the common output of the protocol towards zero.
All statements have analogues ones \wrt the opposite attack goals.

Let $\cpi = \HonAHonB$ be a protocol. The \textit{recursive biased-continuation attack} described below recursively applies the \textit{biased-continuation attack} introduced by \citet{HaitnerOmri11}.\footnote{Called the ``random continuation attack'' in  \cite{HaitnerOmri11}.} The biased-continuation attacker $\rcAP{\cpi}{1}$ -- playing the role of $\Ac$ -- works as follows: in each of $\HonA$'s turns, $\rcAP{\cpi}{1}$ picks a random continuation of $\cpi$, whose output it induces is equal to one, and plays the current turn accordingly. The $i$'th  biased-continuation attacker $\rcAP{\cpi}{i}$, formally described below,   uses the same strategy but the random continuation taken is  of the protocol $(\rcAP{\cpi}{i-1},\HonB)$.

Moving to the formal discussion, for a protocol  $\cpi = \HonAHonB$, we defined  its biased continuator $\RandomCont_\cpi$ as follows.
\begin{definition}[biased continuator $\RandomCont_\cpi$]\label{def:sampler}
\item Input: $u \in \Vertices(\cpi) \setminus \Leaves(\cpi)$ and a bit $b\in \zo$
\item Operation:
\begin{enumerate}
  \item Choose  $\ell \la \LDist\cpi$ conditioned  that
   \begin{enumerate}
     \item $\ell \in  \desc(u)$, and
     \item $\Color_\cpi(\ell) = b$.\footnote{If no such $\ell$ exists, the algorithm returns an arbitrary leaf in $\desc(u)$.}
   \end{enumerate}

  \item Return $\ell_{\size{u}+1}$.
\end{enumerate}
\end{definition}

Let $\rcAP{\cpi}{0}\equiv \HonA$, and for integer $i> 0$ define:
\begin{algorithm}[recursive biased-continuation attacker $\rcAP{\cpi}{i}$]\label{alg:adversary}
\item  Input: transcript $u \in \zo^\ast$.
\item Operation:
\begin{enumerate}

\item If $u\in \Leaves(\cpi)$, output $\Color_\cpi(u)$ and halt.

\item  Set $\msg = \RandomCont_{(\rcAP{\cpi}{i-1},\HonB)}(u,1)$.

\item Send $\msg$ to $\HonB$.

\item If $u'=u\conc \msg\in \Leaves(\cpi)$, output $\Color_\cpi(u')$.\footnote{For the mere purpose of biasing $\HonB$'s output, there is no need for $\rcA{i}$ to output anything. Yet doing so helps us to simplify our recursion definitions (specifically, we use the fact that in $(\rcA{i},\HonB)$ the parties always have the same output).}

\end{enumerate}
\end{algorithm}
The attacker $\rcBP{\cpi}{i}$ attacking towards zero is  analogously defined (specifically, the call to the biased continuator $\RandomCont_{(\rcAP{\cpi}{i-1},\HonB)}(u,1)$ in \cref{alg:adversary} is changed to $\RandomCont_{(\HonA,\rcBP{\cpi}{i-1})}(u,0)$).\footnote{The subscript $\cpi$ is added to the notation (\ie $\rcAP{\cpi}{i}$), since the biased-continuation attack for $\Ac$ depends not only on the definition of the party $\Ac$, but also on the definition of $\Bc$, the other party in the protocol.}

It is relatively easy to show that the more recursions $\rcAP{\cpi}{i}$ and $\rcBP{\cpi}{i}$ do, the closer their success probability is to that of an all-powerful attacker, who can either bias the outcome to zero or to one. The important point of the following theorem is that, for any $\eps>0$, there exists a \emph{global} constant $\kappa=\kappa(\eps)$ (\ie independent of the underlying protocol), for which either $\rcAP{\cpi}{\kappa}$ or $\rcBP{\cpi}{\kappa}$ succeeds in its attack with probability at least $1-\eps$. This becomes crucial when trying to efficiently implement these adversaries (see \cref{sec:RealAttacker}), as each recursion call might induce a polynomial blowup in the running time of the adversary. Since $\kappa$ is constant (for a constant $\eps$), the recursive attacker is still efficient.

\begin{theorem}[main theorem, ideal version]\label{thm:MainIdeal}
For every $\eps \in (0,\frac12]$ there exists non-negative  integer $\kappa \in \widetilde{O}(1/\eps)$ such that for every protocol $\cpi= \HonAHonB$, either $\Val(\rcAP{\cpi}{\kappa},\HonB)>  1 - \eps$ or  $\Val(\HonA,\rcBP{\cpi}{\kappa})< \eps$.
\end{theorem}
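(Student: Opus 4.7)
My plan is to lift the intuition of \cref{sec:intro:Technique} to measures on $\Leaves(\cpi)$: I will prove a one-step \emph{boost lemma} for the biased continuator (formalizing \eqref{eq:intro2}), iterate it (formalizing \eqref{eq:intro4}), and then use the conditional-protocol decomposition to produce a measure of mass at least $\eps$ to which the iteration can be applied. Throughout, call a measure $\Meas$ over $\Leaves_1(\cpi)$ an \emph{$\HonA$-dominated} measure if it is $\HonB$-immune, meaning $\eex{\LDist{\HonA,\Bc^\ast}}{\Meas} = \eex{\LDist{\cpi}}{\Meas}$ for every (possibly unbounded) replacement $\Bc^\ast$ of $\HonB$; $\HonB$-dominated measures over $\Leaves_0(\cpi)$ are defined analogously. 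The key boost inequality to establish is
\begin{align*}
\eex{\LDist{\rcAP{\cpi}{1},\HonB}}{\Meas} \ \geq\ \frac{\eex{\LDist{\cpi}}{\Meas}}{\Val(\cpi)}
\end{align*}
for every $\HonA$-dominated $\Meas$. I would prove this by induction on $\round(\cpi)$: when $\HonA$ controls the root, $\RandomCont$ picks a child with probability proportional to that subtree's value and the inductive hypothesis yields the bound directly; when $\HonB$ controls the root, $\HonB$-immunity forces the two subtrees' $\Meas$-masses to be equal, and the honest splitting probabilities combine via AM--HM (a special case of \cref{lemma:calculus1}) to produce the same $1/\Val(\cpi)$ factor.

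\textbf{Iterating the boost.} Next I would show that $\HonA$-dominatedness is preserved along the transition $\cpi \mapsto (\rcAP{\cpi}{1},\HonB)$: any cheating $\Bc^\ast$ against $\rcAP{\cpi}{1}$ unfolds into a cheating strategy against $\HonA$ in $\cpi$, so the $\Meas$-mass is unchanged. Iterating the boost $\kappa$ times gives
\begin{align*}
\Val(\rcAP{\cpi}{\kappa},\HonB) \ \geq\ \eex{\LDist{\rcAP{\cpi}{\kappa},\HonB}}{\Meas} \ \geq\ \frac{\eex{\LDist{\cpi}}{\Meas}}{\prod_{i=0}^{\kappa-1}\Val(\rcAP{\cpi}{i},\HonB)}.
\end{align*}
If $\Val(\rcAP{\cpi}{i},\HonB) > 1-\eps$ for some $i \leq \kappa$ we are already done; otherwise the denominator is at most $(1-\eps)^\kappa$, so once $\eex{\LDist{\cpi}}{\Meas} \geq \eps$, choosing $\kappa = O(\log(1/\eps)/\eps) = \widetilde{O}(1/\eps)$ forces $\Val(\rcAP{\cpi}{\kappa},\HonB) > 1-\eps$. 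The symmetric argument for a $\HonB$-dominated measure on $\Leaves_0(\cpi)$ of mass $\geq 1-\eps$ yields $\Val(\HonA,\rcBP{\cpi}{\kappa}) < \eps$.

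\textbf{Building a heavy dominated measure.} A single maximal $\HonA$-dominated measure may have mass $\ll \eps$, so I construct a heavy one iteratively as sketched after \cref{fig:simpleTree}. Let $\Meas_{\HonA}^{(1)}$ be a maximal $\HonA$-dominated measure on $\Leaves_1(\cpi)$, of mass $\alpha_1$; form the conditional protocol $\cpi^{(1)}$ by renormalizing $\cpi$ to miss $\Meas_{\HonA}^{(1)}$. In $\cpi^{(1)}$ the optimal-valid attack for $\HonA$ now yields $0$, so by the $\HonA \leftrightarrow \HonB$ symmetry a maximal $\HonB$-dominated measure $\Meas_{\HonB}^{(1)}$ on $\Leaves_0(\cpi^{(1)})$ of some mass $\beta_1$ exists. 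Pull it back to $\cpi$, condition on missing both measures, and alternate. The pulled-back cumulative measures $\widetilde{\Meas}_{\HonA}$ and $\widetilde{\Meas}_{\HonB}$ together exhaust the full leaf mass, so at some finite stage either the $\HonA$-side total crosses $\eps$ or the $\HonB$-side total crosses $1-\eps$; feeding that measure into the iteration completes the proof.

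\textbf{Main obstacle.} The crux I expect to be most delicate is justifying that the boost inequality applies to the cumulative pullback $\widetilde{\Meas}_{\HonA}$ with value measured in the \emph{original} $\cpi$ (as asserted in the footnote after \eqref{eq:intro4}), even though $\widetilde{\Meas}_{\HonA}$ is assembled from components built inside different conditional protocols whose edge distributions differ from $\cpi$'s. I plan to proceed by a frontier-based induction: each conditional step only reweights edges on subtrees sitting below frontiers that are disjoint from the supports of previously-built pieces, so the $\HonB$-immunity of each component survives pullback to $\cpi$ and linearity of expectation propagates immunity to the sum. Combining this with the iteration bound and optimizing constants via \cref{lemma:calculus2} yields the claimed $\kappa \in \widetilde{O}(1/\eps)$.
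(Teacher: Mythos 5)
The proposal correctly tracks the high-level narrative of the paper's argument --- a one-step boost inequality for $\HonA$-dominated measures, iterating that boost, and using an alternating conditional-protocol decomposition to manufacture a measure of mass $\geq\eps$ --- but there is a genuine gap at the very point you flag as the ``main obstacle.'' Your claim that ``the $\HonB$-immunity of each component survives pullback to $\cpi$ and linearity of expectation propagates immunity to the sum'' is false. When the conditional protocol $\CondPro{\cpi}{\Meas}$ is formed, the edge distribution at \emph{every} ancestor of $\Supp(\Meas)$ is reweighted, not just below a frontier disjoint from later supports, so the second-stage measure $\ADomMeas{\cpi_{(\HonA,1)}}$ is $\HonB$-immune relative to the \emph{conditional} leaf distribution $\LDist{\cpi_{(\HonA,1)}}$ but not relative to $\LDist{\cpi}$ --- and the combination $\CombineMeas{\cpi}{\HonA}{\dmsi}$ inherits this failure.

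A concrete counterexample: let the root be $\HonB$-controlled with equiprobable children; let the left child be $\HonA$-controlled leading to a $1$-leaf with probability $1/4$ and a $0$-leaf with probability $3/4$; let the right child be a $1$-leaf. Here $\ADomMeas{\cpi}$ equals $1$ on $00$ and $1/4$ on the right leaf; $\cpi_{(\HonA,1)}$ concentrates on the right leaf with $\ADomMeas{\cpi_{(\HonA,1)}}$ assigning it weight $1$. The combined measure $\CombineMeas{\cpi}{\HonA}{1}$ therefore equals $1$ on $00$ and $1$ on the right leaf; its expectation over the left subtree of $\cpi$ is $1/4$ and over the right subtree is $1$, so it is not $\HonB$-immune in $\cpi$. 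Consequently the clean boost
\begin{align*}
\eex{\LDist{\rcA{\kappa},\HonB}}{\CombineMeas{\cpi}{\HonA}{\dmsi}}\geq \frac{\eex{\LDist{\cpi}}{\CombineMeas{\cpi}{\HonA}{\dmsi}}}{\prod_{i=0}^{\kappa-1}\Val(\rcA{i},\HonB)}
\end{align*}
that you want to feed into the iteration does not hold as written.

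The paper closes this gap by paying a \emph{penalty}: \cref{lemma:IdealMainLemmaSim} shows the boost is degraded by the multiplicative factor $\paren{1-\sum_{j<\dmsi}\beta_j}^{k}$, where the $\beta_j$ are the masses of the intervening $\HonB$-dominated measures. Proving this requires the elaborate ``dominated submeasure'' induction of \cref{lemma:IdealMainLemmaDet} (with the auxiliary $\vetas$ vectors), not a direct immunity-preservation argument. It then controls the penalty via \cref{lemma:CombineMeasHitConst}: by stopping the alternating process at the \emph{minimal} $\dmsi$ at which one side's cumulative mass first reaches the target constant $c$, the other side's cumulative mass is still below $c$, so $\sum_{j<\dmsi}\beta_j < c$ and the penalty factor is at least $(1-c)^k$. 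The penalty mechanism and the minimal-$\dmsi$ stopping rule are exactly what your argument is missing; indeed the paper itself names this the ``key technical contribution,'' noting that the covering intuition ``is not trivial'' once one moves from sets to measures.
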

The rest of this section is devoted to proving the above theorem.


In what follows, we typically omit the subscript $\cpi$ from the notation of the above attackers. Towards proving \cref{thm:MainIdeal} we show a strong (and somewhat surprising) connection between recursive biased-continuation attacks on a given protocol  and the optimal valid attack on this protocol. The latter is the best (unbounded) attack on this protocol, which sends only valid messages (ones that could have been sent by the honest party). Towards this goal we define sequences of measures over the leaves (\ie transcripts) of the protocol, connect these measures to the optimal attack, and then lower bound the success of the recursive biased-continuation attacks using these measures.

In the following we first observe some basic properties of the recursive biased-continuation attack. Next, we define the optimal valid attack, define a simple measure \wrt this attack, and analyze, as a warm-up, the success of recursive biased-continuation attacks on this measure. After arguing why considering the latter measure does not suffice, we define a sequence of measures, and then state, in \cref{sec:RCAltDomMeas}, a property of this sequence that yields \cref{thm:MainIdeal} as a corollary. The main body of this section deals with proving the aforementioned property.


\subsection{Basic Observations About $\RC{\HonA}{i}$}\label{sec:PropBiasCont}
We make two basic observations regarding the recursive biased-continuation attack. The first gives expression to the edge distribution this attack induces. The second is that this attack is stateless. We'll use these observations in the following sections; however, the reader might want to skip their straightforward proofs for now.

Recall that at each internal node in its control, $\rcA{1}$ picks a random continuation to one. We can also describe $\rcA{1}$'s behavior as follows: after seeing a transcript $u$, $\rcA{1}$ biases the probability of sending, \eg $0$ to $\HonB$: it does so proportionally to the ratio between the chance of having output one among all honest executions of the protocol that are consistent with the transcript $u \concat 0$, and the same chance but \wrt the transcript $u$.
The behavior of $\rcA{i}$ is analogous where $\rcA{i-1}$ replaces the role of $\Ac$ in the above discussion. Formally, we have the following claim.
\begin{claim}\label{claim:weights}
Let $\cpi = \HonAHonB$ be a protocol and let $\rcA{j}$ be according to \cref{alg:adversary}. Then
\begin{align*}
\EdgeDist_{(\rcA{i},\HonB)}(u,u b) = \EdgeDist_\cpi(u,u b)\cdot
\frac{\prod_{j=0}^{i-1}\Val((\rcA{j},\HonB)_{u b})}{\prod_{j=0}^{i-1}\Val((\rcA{j},\HonB)_{u})},\footnotemark
\end{align*}
\footnotetext{Recall that for a protocol $\cpi$ and a partial transcript $u$, we let $\EdgeDist_\cpi(u,u b)$ stand for the probability that the party controlling $u$ sends $b$ as the next message, conditioning that $u$ is the transcript of the execution thus far.}
for any $i\in \N$,  $\HonA$-controlled $u\in \Vertices(\cpi)$ and $b\in\set{0,1}$.
\end{claim}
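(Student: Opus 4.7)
I would prove the claim by induction on $i$, exploiting the stateless, ``per-node'' definition of $\rcA{i}$: at an $\HonA$-controlled node $u$, $\rcA{i}$ samples the next message as the $(\size{u}+1)$-th coordinate of a random leaf of $(\rcA{i-1},\HonB)$ conditioned on being a $1$-leaf descendant of $u$. The base case $i=0$ is immediate: the empty product equals $1$ and $(\rcA{0},\HonB) = \cpi$ by definition, so both sides coincide with $\EdgeDist_\cpi(u,ub)$.

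For the inductive step I would unwind the definition of $\RandomCont_{(\rcA{i-1},\HonB)}(u,1)$ directly. The probability that $\rcA{i}$ sends bit $b$ at $u$ is
\[
\EdgeDist_{(\rcA{i},\HonB)}(u,ub) \;=\; \ppr{\ell\la\LDist{(\rcA{i-1},\HonB)}}{\ell\in\desc(ub)\,\land\,\Color(\ell)=1 \,\bigm|\, \ell\in\desc(u)\,\land\,\Color(\ell)=1}.
\]
Using $\ppr{\ell\la\LDist{(\rcA{i-1},\HonB)}}{\ell\in\desc(v)\land \Color(\ell)=1} = \VerticesDist_{(\rcA{i-1},\HonB)}(v)\cdot\Val((\rcA{i-1},\HonB)_v)$ (valid as long as the conditioning event has positive probability), and the tautological identity $\VerticesDist_{(\rcA{i-1},\HonB)}(ub)/\VerticesDist_{(\rcA{i-1},\HonB)}(u) = \EdgeDist_{(\rcA{i-1},\HonB)}(u,ub)$, this ratio becomes
\[
\EdgeDist_{(\rcA{i-1},\HonB)}(u,ub)\cdot\frac{\Val((\rcA{i-1},\HonB)_{ub})}{\Val((\rcA{i-1},\HonB)_{u})}.
\]
Applying the induction hypothesis to the first factor telescopes the numerator and denominator products, producing exactly the claimed formula with the product now running up to $j=i-1$.

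The only delicate point is the case where $\Val((\rcA{j},\HonB)_u)=0$ for some $j<i$, since then the ratio in the claim is ill-defined. I would handle this with the convention already fixed in \cref{def:sampler} (an arbitrary descendant leaf is returned when no $1$-continuation exists): any such $u$ satisfies $\VerticesDist_{(\rcA{i},\HonB)}(u)=0$, so $\EdgeDist_{(\rcA{i},\HonB)}(u,\cdot)$ plays no role in the induced leaf distribution, and the formula is to be read as holding on the set of $u$ for which all the involved $\Val$'s are positive. I do not expect any real obstacle here; the main thing to get right is bookkeeping between the vertex-mass ratio and the edge probability, which is what produces the telescoping that collapses the per-level factors into the single product that appears in the statement.
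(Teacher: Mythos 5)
Your proposal is correct and follows essentially the same route as the paper's own proof: induction on $i$, with the base case immediate and the inductive step obtained by unwinding the biased-continuator's conditional law at $u$ to get the one-step recursion $\EdgeDist_{(\rcA{i},\HonB)}(u,ub)=\EdgeDist_{(\rcA{i-1},\HonB)}(u,ub)\cdot\Val((\rcA{i-1},\HonB)_{ub})/\Val((\rcA{i-1},\HonB)_u)$, after which the products telescope. The paper phrases the middle step as a chain rule on conditional probabilities rather than factoring through $\VerticesDist$, but this is the same computation, and your remark on the degenerate zero-value case is consistent with the paper's conventions.
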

This claim is a straightforward generalization of the proof of \cite[Lemma 12]{HaitnerOmri11}. However, for completeness and to give an example of our notations, a full proof is given below.
\begin{proof}
The proof is by induction on $i$. For $i=0$, recall that $\rcA{0}\equiv \HonA$, and hence $\EdgeDist_{(\rcA{0},\HonB)}(u,u b)=\EdgeDist_\cpi(u,u b)$, as required.

Assume the claim holds for $i-1$, and we want to compute $\EdgeDist_{(\rcA{i},\HonB)}(u,u b)$.
The definition of \cref{alg:adversary} yields that for any positive $i\in\N$, it holds that
\begin{align}\label{eq:CalcEdgeDist}
\EdgeDist_{(\rcA{i},\HonB)}(u,u b) &= \ppr{\ell\la \LDist{\ProArc{i-1}}}{\ell_{\size{u}+1}=b\condition \ell\in\desc(u)\land \Color_{\ProArcP{i-1}}(\ell)=1}\footnotemark\\
&= \frac{\ppr{\ell\la \LDist{\ProArc{i-1}}}{\ell_{\size{u}+1}=b\land \Color_{\ProArcP{i-1}}(\ell)=1 \condition \ell\in\desc(u)}}{\ppr{\ell\la \LDist{\ProArc{i-1}}}{\Color_{\ProArcP{i-1}}(\ell)=1 \condition \ell\in\desc(u)}} \nonumber\\
&= \EdgeDist_{{(\rcA{i-1},\HonB)}}(u,u b)\cdot \frac{\Val((\rcA{i-1},\HonB)_{u b})}{\Val((\rcA{i-1},\HonB)_{u})},\nonumber
\end{align}
\footnotetext{Recall that for a protocol $\cpi$, we let $\LDist{\cpi}$ stand for the leaf distribution of $\cpi$.}
where the last equality is by a simple chain rule, \ie since
\begin{align*}
\EdgeDist_{{(\rcA{i-1},\HonB)}}(u,u b) &= \ppr{\ell\la \LDist{\ProArc{i-1}}}{\ell_{\size{u}+1}=b \condition \ell\in\desc(u)}, \hbox{ and}\\
\Val((\rcA{i-1},\HonB)_{u b}) &= \ppr{\ell\la \LDist{\ProArc{i-1}}}{\Color_{\ProArcP{i-1}}(\ell)=1 \condition \ell\in\desc(u) \land \ell_{\size{u}+1}=b}.
\end{align*}

The proof is concluded by plugging the induction hypothesis into \cref{eq:CalcEdgeDist}.
\end{proof}

The following observation enables us to use induction when analyzing the power of  $\rcA{i}$.
\begin{proposition}\label{fact:switchOrder}
For every protocol $\cpi = (\HonA_\cpi,\HonB_\cpi)$,  $i\in \N$ and $b\in \zo$, it holds that $\Restrict{\rcAP{\cpi}{i},\Bc}{b}$ and  $\paren{\rcAP{\cpi_b}{i},\HonB_{\cpi_b}}$ are the same protocol, where $\cpi_b = (\HonA_{\cpi_b},\HonB_{\cpi_b})$.
\end{proposition}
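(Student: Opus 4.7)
The plan is to proceed by induction on $i$, using \cref{claim:weights} to match edge distributions. Two protocols are the same iff they have the same control scheme and the same edge distribution at every internal node; the control schemes here are easy because the biased‑continuation attacker only modifies \emph{which} message $\Ac$ sends, not who plays at each node, so $\ctrls_{(\rcAP{\cpi}{i},\Bc)} = \ctrls_\cpi$ restricted to the subtree rooted at $b$ coincides with $\ctrls_{\cpi_b} = \ctrls_{(\rcAP{\cpi_b}{i},\HonB_{\cpi_b})}$. The substance of the proof is therefore entirely in the edge distributions.

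For the base case $i=0$, by definition $\rcAP{\cpi}{0} \equiv \HonA_\cpi$ and $\rcAP{\cpi_b}{0} \equiv \HonA_{\cpi_b}$, so the claim reduces to the statement that $\Restrict{\cpi}{b} = \cpi_b$, which is just \cref{def:subProtocolTree}. For the inductive step, assume the claim for all $j<i$; I want to show the edge distributions agree at every internal node $v$ of the subtree rooted at $b$ (identified, as in \cref{def:subProtocolTree}, with node $bv$ of $\cpi$). Split on the controller. If $v$ is $\HonB$-controlled in $\cpi_b$, then $bv$ is $\HonB$-controlled in $\cpi$, neither the LHS nor the RHS is affected by the biased continuation, so both sides equal $\EdgeDist_\cpi(bv,bvb') = \EdgeDist_{\cpi_b}(v,vb')$.

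If $v$ is $\HonA$-controlled in $\cpi_b$ (equivalently $bv$ is $\HonA$-controlled in $\cpi$), apply \cref{claim:weights} to each side to get
\begin{align*}
\EdgeDist_{(\rcAP{\cpi}{i},\Bc)}(bv,bvb') &= \EdgeDist_\cpi(bv,bvb')\cdot \frac{\prod_{j=0}^{i-1} \Val((\rcAP{\cpi}{j},\Bc)_{bvb'})}{\prod_{j=0}^{i-1} \Val((\rcAP{\cpi}{j},\Bc)_{bv})},\\
\EdgeDist_{(\rcAP{\cpi_b}{i},\HonB_{\cpi_b})}(v,vb') &= \EdgeDist_{\cpi_b}(v,vb')\cdot \frac{\prod_{j=0}^{i-1} \Val((\rcAP{\cpi_b}{j},\HonB_{\cpi_b})_{vb'})}{\prod_{j=0}^{i-1} \Val((\rcAP{\cpi_b}{j},\HonB_{\cpi_b})_{v})}.
\end{align*}
The prefactors agree by the definition of $\cpi_b$. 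For each $j<i$, the induction hypothesis gives $\Restrict{(\rcAP{\cpi}{j},\Bc)}{b} = (\rcAP{\cpi_b}{j},\HonB_{\cpi_b})$, and further restricting to the node $v$ of this subprotocol yields $(\rcAP{\cpi}{j},\Bc)_{bv} = (\rcAP{\cpi_b}{j},\HonB_{\cpi_b})_v$, and likewise for $bvb'$ vs.\ $vb'$. Hence the values in numerator and denominator match term-by-term, completing the inductive step. A small bookkeeping point to watch: when $\VerticesDist_{(\rcAP{\cpi}{i},\Bc)}(b) = 0$, the restricted protocol is $\perp$ by \cref{def:subProtocolTree}; but in that case $\VerticesDist_\cpi(b) = 0$ as well (since biased continuation cannot create probability where there was none at the root), so $\cpi_b = \perp$ too and both sides agree trivially. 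No step here is deep — the main thing to get right is the indexing so that the inductive hypothesis (a statement about protocol equality) can be translated into the specific equalities between values of subprotocols rooted at descendants of $b$ that \cref{claim:weights} requires.
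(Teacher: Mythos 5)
Your argument is correct in substance but unwinds the recursion far more explicitly than the paper does. The paper's proof is the single line ``Immediately follows from $\rcAP{\cpi}{i}$ being stateless,'' which appeals directly to the observation that what $\rcAP{\cpi}{i}$ does at any transcript $u$ depends only on $u$ and the subtree rooted at $u$ (since the biased continuator samples leaves in $\desc(u)$), so starting from $\Root(\cpi)$ or from $b$ induces the same behaviour on descendants of $b$. You instead compute the edge distributions via \cref{claim:weights} and match the $\Val$ factors term by term using the induction hypothesis. This is a legitimate alternative: \cref{claim:weights} is proved independently of \cref{fact:switchOrder}, so there is no circularity, and your version makes all the bookkeeping visible where the paper's version compresses the same inductive structure into an appeal to statelessness.

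One point is stated backwards. You write that $\VerticesDist_{(\rcAP{\cpi}{i},\Bc)}(b)=0$ forces $\VerticesDist_\cpi(b)=0$ ``since biased continuation cannot create probability where there was none at the root.'' That reasoning supports the \emph{converse}: biased continuation cannot create probability, so $\VerticesDist_\cpi(b)=0\implies\VerticesDist_{(\rcAP{\cpi}{i},\Bc)}(b)=0$. The implication you assert can fail: if $\HonA$ controls $\Root(\cpi)$ and $0=\Val(\cpi_b)<\Val(\cpi_{1-b})$, then $\rcAP{\cpi}{1}$ never sends $b$, so $\VerticesDist_{(\rcAP{\cpi}{1},\Bc)}(b)=0$ even though $\VerticesDist_\cpi(b)>0$; the left-hand side would be $\perp$ while $\cpi_b\neq\perp$. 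This degenerate case, which the paper also elides, only arises at zero-value subtrees, where every later use of the proposition (expectations of measures supported on $\Leaves_1$) evaluates to zero on both sides, so it is harmless in context; but as an identity of protocols the proposition tacitly identifies $\perp$ with zero-value subprotocols, and your justification in this corner case needs to be corrected rather than waved through.
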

\begin{proof}
Immediately follows from  $\rcAP{\cpi}{i}$ being stateless.
\end{proof}

\begin{remark}
Note that the party $\HonB_{\cpi_b}$, defined by the subprotocol $\cpi_b$ (specifically, by the edge distribution of the subtree $\Tree(\cpi_b)$), might not have an efficient implementation, even if $\Bc$ does have one. For the sake of the arguments we make in this section, however, it matters only that $\HonB_{\cpi_b}$ is well defined.
\end{remark}

\subsection{Optimal Valid Attacks}\label{sec:optimum}
When considering the optimal attackers for a given protocol, we restrict ourselves to valid attackers. Informally, we can say that, on each of its turns, a valid attacker sends a message from the set of  possible replies that the honest party might choose given the  transcript so far.

\begin{definition}[optimal valid attacker]\label{def:optimumAttacker}
Let $\cpi=\HonAHonB$ be a protocol. A deterministic algorithm $\HonA'$ playing the role of $\HonA$ in $\cpi$ is in $\cA^\ast$, if $\VerticesDist_\cpi(u) = 0 \implies \VerticesDist_{(\HonA',\HonB)}(u) = 0$ for any $u\in\Vertices(\cpi)$. The class  $\cB^\ast$ is analogously defined. Let $\BestA{\cpi}=\max_{\Ac'\in\cA^\ast}\set{\Val(\Ac',\HonB)}$ and $\BestB{\cpi}=\max_{\Bc'\in\cB^\ast}\set{1-\Val(\HonA,\Bc')}$.
\end{definition}

The following proposition is immediate.
\begin{proposition}\label{fact:perfectAdv}
Let $\cpi=\HonAHonB$ be a protocol and let $u\in \Vertices(\cpi)$. Then,
\begin{align*}
\BestA{\cpi_u} = \left\{
  \begin{array}{ll}
    \Color_\cpi(u) & \hbox{$u \in \Leaves(\cpi)$;}\\
    \max\set{\BestA{\cpi_{u b}} \colon   \EdgeDist_{\cpi}(u,ub)>0}, & \hbox{$u\notin \Leaves(\cpi)$ and $u$ is controlled by $\HonA$;}\\
    \EdgeDist_{\cpi}(u,u0)\cdot \BestA{\cpi_{u 0}} + \EdgeDist_{\cpi}(u,u1)\cdot \BestA{\cpi_{u 1}}, & \hbox{$u\notin \Leaves(\cpi)$ and $u$ is controlled by $\HonB$,}
  \end{array}
\right.
\end{align*}
and the analog conditions hold for $\BestB{\cpi_u}$.\footnotemark
\end{proposition}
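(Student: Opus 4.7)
The plan is to prove both the statement for $\BestA{\cpi_u}$ and (by a symmetric argument) the one for $\BestB{\cpi_u}$ by (reverse) induction on the length of $u$, equivalently on the height of the subtree of $\Tree(\cpi)$ rooted at $u$. The base case is immediate: if $u \in \Leaves(\cpi)$, then $\cpi_u$ is a zero-round protocol whose execution consists of producing the common output $\Color_\cpi(u)$, so every $\Ac' \in \cA^\ast$ satisfies $\Val(\Ac',\HonB_{\cpi_u}) = \Color_\cpi(u)$, giving $\BestA{\cpi_u} = \Color_\cpi(u)$.

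For the inductive step, fix a non-leaf $u$ and assume the proposition holds for $u0$ and $u1$. I would note first the following compositional fact, which I would prove directly from \cref{def:optimumAttacker,def:subProtocolTree}: any $\Ac'\in\cA^\ast$ for $\cpi_u$ that reaches $ub$ with positive probability induces, by restriction to $\Tree(\cpi_u)_{ub}$, an element of $\cA^\ast$ for $\cpi_{ub}$; and conversely, any family $\{\Ac'_b\}_{b\in\{0,1\}}$ with $\Ac'_b\in\cA^\ast$ for $\cpi_{ub}$ (defined only when $\EdgeDist_\cpi(u,ub)>0$) can be glued into an $\Ac'\in\cA^\ast$ for $\cpi_u$ that, at the first step, sends any preferred valid message $b$. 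The validity condition enters in exactly one place: if $\EdgeDist_\cpi(u,ub)=0$, then the definition of $\cA^\ast$ forces $\VerticesDist_{(\Ac',\HonB)}(ub)=0$, so the optimum may only use children $b$ with $\EdgeDist_\cpi(u,ub)>0$.

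Now split on who controls $u$. If $u$ is $\HonA$-controlled, the first message is chosen by $\Ac'$. By the compositional fact, for any valid $\Ac'$ its value decomposes as a convex combination of $\Val((\Ac',\HonB)_{ub}) \le \BestA{\cpi_{ub}}$ over $b$'s with $\EdgeDist_\cpi(u,ub)>0$, and hence is at most $\max\set{\BestA{\cpi_{ub}} : \EdgeDist_\cpi(u,ub)>0}$; this maximum is attained by gluing an optimal valid attacker from the maximizing child. If $u$ is $\HonB$-controlled, then $\HonB_{\cpi_u}$ sends $b$ with probability $\EdgeDist_\cpi(u,ub)$ regardless of $\Ac'$, so by linearity of expectation $\Val(\Ac',\HonB_{\cpi_u}) = \sum_b \EdgeDist_\cpi(u,ub) \cdot \Val((\Ac',\HonB)_{ub})$, and since the restrictions of $\Ac'$ to the two subtrees can be optimized independently, the supremum equals $\EdgeDist_\cpi(u,u0)\cdot\BestA{\cpi_{u0}} + \EdgeDist_\cpi(u,u1)\cdot\BestA{\cpi_{u1}}$, as claimed. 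Applying the induction hypothesis to $\BestA{\cpi_{ub}}$ on the right-hand sides closes the induction; the argument for $\BestB{\cpi_u}$ (with $\HonB$'s roles reversed and $1-\Val$ in place of $\Val$) is identical.

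The only nontrivial step is the compositional fact about $\cA^\ast$ under restriction and gluing, which hinges on reading \cref{def:optimumAttacker} correctly: the constraint $\VerticesDist_\cpi(v)=0 \Rightarrow \VerticesDist_{(\Ac',\HonB)}(v)=0$ for all $v$ translates precisely into a per-subtree constraint for the restriction, and conversely gluing preserves validity because forbidden nodes in a subtree remain unreached in the combined protocol. Everything else reduces to the elementary observation that a convex combination is upper-bounded by a max on the attacker-controlled side and equals the honest weighted average on the honest-controlled side.
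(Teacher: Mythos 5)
Your proof is correct, and it is essentially the canonical argument that the paper is treating as immediate (the paper offers no proof, stating only that the proposition is immediate). The reverse induction on $|u|$, the compositional observation that the validity constraint in \cref{def:optimumAttacker} restricts and glues per-subtree, and the split by which party controls $u$ are exactly the steps a reader is expected to carry out mentally; the only nit is that since attackers in $\cA^\ast$ are deterministic, the "convex combination" in the $\HonA$-controlled case is degenerate (a single child), though this does not affect the bound.
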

\footnotetext{Recall that for a (possible partial) transcript $u$, $\cpi_u$ is the protocol $\cpi$, conditioned that $u_1, \ldots, u_{\size{u}}$ were the first $\size{u}$ messages.}

The following holds true for any  (bit value) protocol.
\begin{proposition}\label{lemma:perfectAdv}
Let $\cpi=\HonAHonB$ be a protocol with $\Val(\cpi) \in [0,1]$. Then either  $\BestA{\cpi}$ or $\BestB{\cpi}$ (but not both) is equal to $1$.
\end{proposition}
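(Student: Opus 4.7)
\begin{proofindent}[Proof Proposal]
The plan is to prove by induction on subprotocols the following strengthening: for every $u \in \Vertices(\cpi)$ with $\VerticesDist_\cpi(u) > 0$, exactly one of $\BestA{\cpi_u}, \BestB{\cpi_u}$ equals $1$ (and the other lies strictly below $1$). Applying this to $u = \EmptyString$ yields the proposition.

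For the base case ($u \in \Leaves(\cpi)$), \cref{fact:perfectAdv} gives $\BestA{\cpi_u} = \Color_\cpi(u)$ and, by the analogous statement, $\BestB{\cpi_u} = 1 - \Color_\cpi(u)$. Since $\Color_\cpi(u) \in \set{0,1}$, exactly one of the two is $1$. For the inductive step, suppose the claim holds for every proper descendant of $u$ with positive mass, and assume \wlg that $u$ is controlled by $\HonA$ (the $\HonB$-controlled case is symmetric). Let $B = \set{b \in \zo \colon \EdgeDist_\cpi(u,ub) > 0}$. By \cref{fact:perfectAdv},
\begin{align*}
\BestA{\cpi_u} = \max_{b \in B} \BestA{\cpi_{ub}}, \qquad \BestB{\cpi_u} = \sum_{b \in B} \EdgeDist_\cpi(u,ub) \cdot \BestB{\cpi_{ub}}.
\end{align*}

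Split into two cases. First, assume there exists $b^\ast \in B$ with $\BestA{\cpi_{ub^\ast}} = 1$. Then $\BestA{\cpi_u} = 1$, and by the induction hypothesis $\BestB{\cpi_{ub^\ast}} < 1$. Using $\BestB{\cpi_{ub}} \leq 1$ for every $b \in B$ and $\EdgeDist_\cpi(u,ub^\ast) > 0$, we get
\begin{align*}
\BestB{\cpi_u} \leq 1 - \EdgeDist_\cpi(u,ub^\ast)\cdot\paren{1 - \BestB{\cpi_{ub^\ast}}} < 1.
\end{align*}
Second, assume $\BestA{\cpi_{ub}} < 1$ for every $b \in B$. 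Then $\BestA{\cpi_u} < 1$, and by the induction hypothesis $\BestB{\cpi_{ub}} = 1$ for every $b \in B$, giving $\BestB{\cpi_u} = \sum_{b \in B} \EdgeDist_\cpi(u,ub) = 1$.

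The argument is essentially bookkeeping on the recursive formula from \cref{fact:perfectAdv}, so I do not anticipate a serious obstacle. The only delicate point is verifying strict inequality for $\BestB{\cpi_u}$ in the first case, which requires using $\EdgeDist_\cpi(u,ub^\ast) > 0$ (guaranteed by the definition of $B$ and by the fact that the max in the formula for $\BestA{\cpi_u}$ is taken only over children of positive edge weight). All other steps are direct, and the induction closes.
\end{proofindent}
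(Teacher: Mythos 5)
Your argument is correct and is in essence the paper's own proof: a bottom-up induction over the protocol tree that combines the recursive expressions of \cref{fact:perfectAdv} for $\BestA{\cdot}$ and $\BestB{\cdot}$ at each node. The only (cosmetic) differences are that by summing over the set $B$ of positive-weight children you silently absorb the paper's separate treatment of the degenerate case $\EdgeDist_\cpi(u,ub)\in\set{0,1}$, and your two-way split on whether some child attains $\BestA{\cdot}=1$ replaces the paper's four sub-cases on the pair $\paren{\BestB{\cpi_0},\BestB{\cpi_1}}$.
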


The somewhat surprising part is that \emph{only} one party has a valid winning strategy. Assume for simplicity that $\BestA{\cpi}=1$. Since $\HonA$ might accidentally mimic the optimal winning valid attacker, it follows that for any valid strategy $\HonB'$ for $\HonB$ there is a positive probability over the random choices of the honest $\HonA$ that the outcome is \emph{not} zero. Namely, it holds that $\BestB{\cpi}<1$. The formal proof follows a straightforward induction on the protocol's round complexity.
\begin{proof}[Proof of \cref{lemma:perfectAdv}]
The proof is by induction on the round complexity of $\cpi$. Assume that $\round(\cpi) = 0$ and let $\ell$ be the only node in $\Tree(\cpi)$. If $\Color_\cpi(\ell)=1$, the proof follows since $\BestA{\cpi}=1$ and $\BestB{\cpi}=0$. In the complementary case, \ie $\Color_\pi(\ell)=0$, the proof follows since $\BestA{\cpi}=0$ and $\BestB{\cpi}=1$.

Assume that the lemma holds for $\rnd$-round protocols and that $\round(\cpi) = \rnd+1$. If $\EdgeDist_{\cpi}(\EmptyString,b)=1$\footnote{Recall that $\EmptyString$ is the string representation of the root of $\Tree(\cpi)$.} for some $b\in\zo$, since $\cpi$ is a protocol, it holds that $\EdgeDist_{\cpi}(\EmptyString,1-b)=0$. Hence, by \cref{fact:perfectAdv} it holds that $\BestA{\cpi}=\BestA{\cpi_b}$ and $\BestB{\cpi}=\BestB{\cpi_b}$, regardless of the party controlling $\Root(\cpi)$. The proof follows from the induction hypothesis.

If $\EdgeDist_{\cpi}(\EmptyString,b)\notin\zo$ for both $b\in\zo$, the proof splits according to the following complementary cases:
\begin{description}
        \item[$\BestB{\cpi_0}<1$ and $\BestB{\cpi_1}<1$.] The induction hypothesis yields that $\BestA{\cpi_0}=1$ and $\BestA{\cpi_1}=1$. \cref{fact:perfectAdv} now yields that $\BestB{\cpi}<1$ and $\BestA{\cpi}=1$, regardless of the party controlling $\Root(\cpi)$.
        \item[$\BestB{\cpi_0}=1$ and $\BestB{\cpi_1}=1$.] The induction hypothesis yields that $\BestA{\cpi_0}<1$ and $\BestA{\cpi_1}<1$. \cref{fact:perfectAdv} now yields that $\BestB{\cpi}=1$ and $\BestA{\cpi}<1$, regardless of the party controlling $\Root(\cpi)$.
        \item[$\BestB{\cpi_0}=1$ and $\BestB{\cpi_1}<1$.] The induction hypothesis yields that $\BestA{\cpi_0}<1$ and $\BestA{\cpi_1}=1$. If $\HonA$ controls $\Root(\cpi)$, \cref{fact:perfectAdv} yields that $\BestA{\cpi}=1$ and $\BestB{\cpi}<1$. If $\HonB$ controls $\Root(\cpi)$, \cref{fact:perfectAdv} yields that $\BestA{\cpi}<1$ and $\BestB{\cpi}=1$. Hence, the proof follows.
        \item $\BestB{\cpi_0}<1$ and $\BestB{\cpi_1}=1$. The proof follows arguments similar to the previous case.
\end{description}
\end{proof}

In the next sections we show the connection between the optimal valid attack and recursive biased-continuation attacks, by connecting them both to a specific measure over the protocol's leaves, called here the ``dominated measure'' of a protocol.

\subsection{Dominated Measures}\label{sec:DomMeas}

Let $\cpi=\HonAHonB$ be a protocol with $\BestA{\cpi}=1$ (and thus, by \cref{fact:perfectAdv}, $\BestB{\cpi}<1$). In such a protocol, the optimal attacker for $\HonA$ always has a winning strategy, regardless of $\HonB$'s strategy  (honest or not). Our goal is to define a measure $\ADomMeas{\cpi}\colon\Leaves(\cpi)\to [0,1]$ that will capture the ``$1-\BestB{\cpi}$'' advantage that party $\HonA$ has over party $\HonB$. Specifically, we would like that $\eex{\LDist{\cpi}}{\ADomMeas{\cpi}}=1-\BestB{\cpi}$.

Recall that $\BestB{\cpi}$ is the expected outcome of the protocol $(\HonA,\HonB')$, where $\HonB'$ is the optimal attacker for $\HonB$. To achieve our goal, $\ADomMeas{\cpi}$ must ``behave'' similarly to the expected outcome of $(\HonA,\HonB')$. Naturally, such measure will be defined recursively. On $\HonA$-controlled nodes, its expected value (over a choice of a random leaf in the original protocol $\cpi$) should be the weighted average of the expected values of the lower-level measures --- similarly to the expected outcome of $(\HonA,\HonB')$ which is the weighted average of the expected outcomes of the sub-protocols. On $\HonB$-controlled nodes, the situation is trickier. $\HonB'$ chooses to send the message that minimizes the expected outcome of $(\HonA,\HonB')$. Assuming that the lower-level measures already behave like the expected outcome of $(\HonA,\HonB')$, $\HonB'$ actually choose the message for which the expected value of the lower-level measure is smaller. But, the expected value of $\ADomMeas{\cpi}$ remains the weighted average of the expected values of the lower-level measures. To fix this we lower the value of the lower-level measure whose expected outcome is larger, so that the expected value of both lower-level measures is equal.
The above discussion leads to the following measure over the protocol's leaves.

\begin{definition}[dominated measures]\label{def:Dominated}
The {\sf $\HonA$-dominated measure} of protocol $\cpi = \HonAHonB$, denoted  $\ADomMeas{\cpi}$, is a measure over $\Leaves(\cpi)$ defined by $\ADomMeas{\cpi}(\ell) = \Color_\cpi(\ell)$ if $\round(\cpi) =0$, and otherwise recursively defined by:
\begin{align*}
\ADomMeas{\cpi}(\ell) = \left\{
  \begin{array}{ll}
    0, & \EdgeDist_{\cpi}(\EmptyString,\ell_1)=0; \footnotemark\\
    \ADomMeas{\cpi_{\ell_1}}(\ell_{2,\ldots,\size{\ell}}), & \EdgeDist_{\cpi}(\EmptyString,\ell_1)=1;\\
  \ADomMeas{\cpi_{\ell_1}}(\ell_{2,\ldots,\size{\ell}}), &  \EdgeDist_{\cpi}(\EmptyString,\ell_1)\notin \zo \land  (\hbox{$\HonA$ controls $\Root(\cpi)$} \lor  \Smaller{\cpi}{\ell_1});\\
    \frac{\eex{\LDist{\cpi_{1-\ell_1}}}{\ADomMeas{\cpi_{1-\ell_1}}}}{\eex{\LDist{\cpi_{\ell_1}}}{\ADomMeas{\cpi_{\ell_1}}}} \cdot \ADomMeas{\cpi_{\ell_1}}(\ell_{2,\ldots,\size{\ell}}) , & \hbox{otherwise,}
  \end{array}
\right.
\end{align*}
\footnotetext{Recall that for transcript $\ell$, $\ell_1$ stands for the first messages sent in $\ell$.}
where $\Smaller{\cpi}{\ell_1}=1$  if $\eex{\LDist{\cpi_{\ell_1}}}{\ADomMeas{\cpi_{\ell_1}}} \leq \eex{\LDist{\cpi_{1-\ell_1}}}{\ADomMeas{\cpi_{1-\ell_1}}}$. Finally, we let $\ADomMeas{\perp}$ be the zero measure.

The {\sf $\HonB$-dominated measure} of protocol $\cpi$, denoted $\BDomMeas{\cpi}$, is analogously defined, except that $\BDomMeas{\cpi}(\ell) = 1-\Color_\cpi(\ell)$ if $\round(\cpi) =0$.
\end{definition}

\begin{example}\label{exmp:2hon}
	($\HonA$-dominated measure)

	Before continuing with the formal proof, we believe the reader might find the following concrete example useful. Let $\cpi=\HonAHonB$ be the protocol described in \cref{fig:HonPro} and assume for the sake of this example that $\alpha_0 < \alpha_1$. The $\HonA$-dominated measures of $\cpi$ and its subprotocols are given in \cref{fig:CalcDM}.

	We would like to highlight some points regarding  the calculations of the $\HonA$-dominated measures.
	The first  point we note is that $\ADomMeas{\cpi_{011}}(011)=1$ but $\ADomMeas{\cpi_{01}}(011)=0$. Namely, the $\HonA$-dominated measure of the subprotocol $\cpi_{011}$ assigns the leaf represented by the string $011$ with the value $1$, while the $\HonA$-dominated measure of the subprotocol $\cpi_{01}$ (for which $\cpi_{011}$ is a subprotocol) assigns the same leaf with the value $0$.
	This follows since $\eex{\LDist{\cpi_{010}}}{\ADomMeas{\cpi_{010}}}=0$ and $\eex{\LDist{\cpi_{011}}}{\ADomMeas{\cpi_{011}}}=1$, which yield that $\Smaller{\cpi_{01}}{1}=0$ (recall that $\Smaller{\cpi'}{b}=0$ iff the expected value of the $\HonA$-dominated measure of $\cpi'_b$ is larger than that of the $\HonA$-dominated measure of $\cpi'_{1-b}$). Hence, \cref{def:Dominated} \wrt $\cpi_{01}$ now yields that
	\begin{align*}
	\ADomMeas{\cpi_{01}}(011) &= \frac{\eex{\LDist{\cpi_{010}}}{\ADomMeas{\cpi_{010}}}}{\eex{\LDist{\cpi_{011}}}{\ADomMeas{\cpi_{011}}}} \cdot \ADomMeas{\cpi_{011}}(011) \\
	&= \frac{0}{1} \cdot 1 = 0.
	\end{align*}

	The second point we note is that $\ADomMeas{\cpi_{1}}(10)=1$ but $\ADomMeas{\cpi}(10)=\frac{\alpha_0}{\alpha_1}$ (recall that we assumed that $\alpha_0<\alpha_1$, so $\frac{\alpha_0}{\alpha_1}< 1$). This follows similar arguments to the previous point; it holds that $\eex{\LDist{\cpi_{0}}}{\ADomMeas{\cpi_{0}}}=\alpha_0$ and $\eex{\LDist{\cpi_{1}}}{\ADomMeas{\cpi_{1}}}=\alpha_1$, which yield that $\Smaller{\cpi}{1}=0$ (since $\alpha_0<\alpha_1$). \cref{def:Dominated} \wrt $\cpi$ now yields that
	\begin{align*}
	\ADomMeas{\cpi}(10) &= \frac{\eex{\LDist{\cpi_{0}}}{\ADomMeas{\cpi_{0}}}}{\eex{\LDist{\cpi_{1}}}{\ADomMeas{\cpi_{1}}}} \cdot \ADomMeas{\cpi_{1}}(10) \\
	&= \frac{\alpha_0}{\alpha_1} \cdot 1 = \frac{\alpha_0}{\alpha_1}.
	\end{align*}

	The third and final point we note is that $\eex{\LDist{\cpi}}{\ADomMeas{\cpi}} = 1 - \BestB{\cpi}$. By the assumption that $\alpha_0<\alpha_1$, it holds that $\BestB{\cpi}=1-\alpha_0$. Independently, let us calculate the expected value of the $\HonA$-dominated measure. Since $\Supp\paren{\ADomMeas{\cpi}} = \set{00,01}$, it holds that
	\begin{align*}
	\eex{\LDist{\cpi}}{\ADomMeas{\cpi}} &= \VerticesDist_\cpi(00)\cdot \ADomMeas{\cpi}(00) + \VerticesDist_\cpi(10)\cdot \ADomMeas{\cpi}(10) \\
	&= \beta \cdot \alpha_0 \cdot 1 + (1-\beta)\cdot \alpha_1 \cdot \frac{\alpha_0}{\alpha_1} \\
	&= \alpha_0.
	\end{align*}
	Hence, $\eex{\LDist{\cpi}}{\ADomMeas{\cpi}} = 1 - \BestB{\cpi}$.

	\begin{figure}
		\centering
		\begin{subfigure}{.48\textwidth}
			\centering
			\begin{tikzpicture}[->,>=stealth',level/.style={sibling distance = 5cm/#1, level distance = 1.5cm}]
			\begin{scope}
			\node [B-node] (B) {$\HonB$}
			child{ node [A-node] (A0) {$\HonA$}
				child{ node [leaf] (L00) {$1$} edge from parent node[above left] {$\alpha_0$}}
				child{ node [B-node] (B01) {$\HonB$}
					child{ node [leaf] (L010) {$0$} edge from parent node[above left] {$\beta_{01}$}}
					child{ node [leaf,thick] (L011) {$1$} edge from parent node[above right] {$1-\beta_{01}$}}
					edge from parent node[above right] {$1- \alpha_0$}
				}
				edge from parent node[above left] {$\beta$}
			}
			child{ node [A-node] (A1) {$\HonA$}
				child{ node [leaf] (L10) {$1$} edge from parent node[above left] {$\alpha_1$}}
				child{ node [leaf] (L11) {$0$} edge from parent node[above right] {$1-\alpha_1$}
				}
				edge from parent node[above right] {$1-\beta$}
			}
			;
			\end{scope}
			\end{tikzpicture}
			\caption{Protocol $\cpi=\HonAHonB$. The label of an internal node denotes the name of the party controlling it, and that of a leaf denotes its value. The label on an edge leaving a node $u$ to node $u'$ denotes the probability that a random execution of $\cpi$ visits $u'$ once in $u$. Finally, all nodes are represented as strings from the root of $\cpi$, even when considering subprotocols (\eg the string representations of the leaf with the thick borders is $011$).}
			\label{fig:HonPro}
		\end{subfigure}%
		\hfill
		\begin{subfigure}{.48\textwidth}
			\centering
			\renewcommand{\arraystretch}{1.2}
			\begin{tabular}{c|c|c|c|c|c|}
				\cline{2-6} & \multicolumn{5}{ c| }{Leaves} \\
				\cline{1-6}  \multicolumn{1}{ |c| }{measures} & 00 & 010 & 011 & 10 & 11 \\
				\cline{1-6} \multicolumn{1}{ |c| }{$\ADomMeas{\cpi_{00}}$} & 1 &  &  &  &  \\
				\cline{1-6} \multicolumn{1}{ |c| }{$\ADomMeas{\cpi_{010}}$} &  & 0 &  &  &  \\
				\cline{1-6} \multicolumn{1}{ |c| }{$\ADomMeas{\cpi_{011}}$} &  &  & 1 & &  \\
				\cline{1-6} \multicolumn{1}{ |c| }{$\ADomMeas{\cpi_{01}}$} &  & 0 & 0 & &  \\
				\cline{1-6} \multicolumn{1}{ |c| }{$\ADomMeas{\cpi_{0}}$} & 1 & 0 & 0 & &  \\
				\cline{1-6} \multicolumn{1}{ |c| }{$\ADomMeas{\cpi_{10}}$} &  &  &  & 1 &  \\
				\cline{1-6} \multicolumn{1}{ |c| }{$\ADomMeas{\cpi_{11}}$} & &  &  &  & 0 \\
				\cline{1-6} \multicolumn{1}{ |c| }{$\ADomMeas{\cpi_{1}}$} & & & & 1 & 0 \\
				\cline{1-6} \multicolumn{1}{ |c| }{$\ADomMeas{\cpi}$} & 1 & 0 & 0 & $\alpha_0/\alpha_1$ & 0 \\
				\cline{1-6}
			\end{tabular}
			\caption{Calculating the $\HonA$-dominated measure of $\cpi$. The $\HonA$-dominated measure of a subprotocol $\cpi_u$, is only defined over the leaves in the subtree $\Tree\paren{\cpi_u}$.}
			\label{fig:CalcDM}
		\end{subfigure}
		\caption{An example of a (coin-flipping) protocol is given on the left, and an example of how to calculate its $\HonA$-dominated measure is given on the right.}
		\label{fig:exmpl1}
	\end{figure}
\end{example}

Note that the $\HonA$-dominated measure is  \emph{$\HonB$-immune}---if  $\Bc$ controls a node $u$, the expected value of the measure is that of the lowest measure of the subprotocols $\cpi_{u0}$ and $\cpi_{u1}$. Where if $\Ac$ controls a node $u$, the expected value of the $\Ac$-dominated  measure is the weighted average of the measures of the same subprotocols (according to the edge distribution). In both cases, the $\Ac$-dominated measure indeed ``captures" the  behavior of the optimal attacker for $\Bc$. This  observation is formally stated as   the following lemma:
\begin{lemma}\label{lemma:ExpDomMeas}
Let $\cpi=\HonAHonB$ be a protocol and let $\ADomMeas{\cpi}$ be its $\HonA$-dominated measure. Then
$\BestB{\cpi}=1-\eex{\LDist\cpi}{\ADomMeas{\cpi}}$.
\end{lemma}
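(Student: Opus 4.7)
The proof will proceed by induction on $\round(\cpi)$, mirroring the recursive case split in \cref{def:Dominated} against the recursive formula for $\BestB{\cpi_u}$ given in \cref{fact:perfectAdv}. The base case $\round(\cpi)=0$ is immediate: the tree has a single leaf $\ell$ with $\Color_\cpi(\ell)\in\zo$, so $\eex{\LDist\cpi}{\ADomMeas{\cpi}} = \Color_\cpi(\ell) = 1 - \BestB{\cpi}$.

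For the inductive step, assume the lemma for all protocols of smaller round complexity. If some $b\in\zo$ satisfies $\EdgeDist_\cpi(\EmptyString,b)=1$, then both \cref{fact:perfectAdv} and \cref{def:Dominated} reduce everything to the subprotocol $\cpi_b$, and the claim follows from the induction hypothesis. Otherwise, both $\EdgeDist_\cpi(\EmptyString,0),\EdgeDist_\cpi(\EmptyString,1)\in(0,1)$, and I split according to who controls $\Root(\cpi)$. If $\HonA$ controls the root, \cref{def:Dominated} gives $\ADomMeas{\cpi}(\ell) = \ADomMeas{\cpi_{\ell_1}}(\ell_{2,\ldots,\size{\ell}})$ for every leaf, hence
\begin{align*}
\eex{\LDist\cpi}{\ADomMeas{\cpi}}
 &= \sum_{b\in\zo} \EdgeDist_\cpi(\EmptyString,b)\cdot \eex{\LDist{\cpi_b}}{\ADomMeas{\cpi_b}}
  = \sum_{b\in\zo} \EdgeDist_\cpi(\EmptyString,b)\cdot (1-\BestB{\cpi_b})
  = 1 - \BestB{\cpi},
\end{align*}
where the second equality uses the induction hypothesis and the third uses \cref{fact:perfectAdv} for an $\HonB$-attack against an $\HonA$-controlled root (which averages the subprotocol values).

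The interesting case is when $\HonB$ controls $\Root(\cpi)$. Let $b^\ast\in\zo$ be the bit with $\eex{\LDist{\cpi_{b^\ast}}}{\ADomMeas{\cpi_{b^\ast}}} \leq \eex{\LDist{\cpi_{1-b^\ast}}}{\ADomMeas{\cpi_{1-b^\ast}}}$, so $\Smaller{\cpi}{b^\ast}=1$ and $\Smaller{\cpi}{1-b^\ast}=0$ (the tie case is handled symmetrically). By \cref{def:Dominated}, the measure on the $b^\ast$-side is inherited unchanged, while on the $(1-b^\ast)$-side every value is rescaled by $\eex{\LDist{\cpi_{b^\ast}}}{\ADomMeas{\cpi_{b^\ast}}}/\eex{\LDist{\cpi_{1-b^\ast}}}{\ADomMeas{\cpi_{1-b^\ast}}}$ (which is well-defined: if the denominator vanished, both subprotocol expectations would be zero and we would be in the $\Smaller{\cpi}{\ell_1}=1$ branch on both sides). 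Computing the overall expectation, the rescaling cancels the $(1-b^\ast)$-side expectation and yields
\begin{align*}
\eex{\LDist\cpi}{\ADomMeas{\cpi}}
 &= \EdgeDist_\cpi(\EmptyString,b^\ast)\cdot \eex{\LDist{\cpi_{b^\ast}}}{\ADomMeas{\cpi_{b^\ast}}}
   + \EdgeDist_\cpi(\EmptyString,1-b^\ast)\cdot \eex{\LDist{\cpi_{b^\ast}}}{\ADomMeas{\cpi_{b^\ast}}}
 = \eex{\LDist{\cpi_{b^\ast}}}{\ADomMeas{\cpi_{b^\ast}}}.
\end{align*}
By the induction hypothesis this equals $1-\BestB{\cpi_{b^\ast}} = \min_{b\in\zo}(1-\BestB{\cpi_b}) = 1 - \max_{b\in\zo}\BestB{\cpi_b}$, which by \cref{fact:perfectAdv} for an $\HonB$-controlled root is exactly $1-\BestB{\cpi}$.

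The only real subtlety I anticipate is the bookkeeping around the $\Smaller{\cpi}{\cdot}$ predicate and the division in \cref{def:Dominated}: one must verify that the rescaling denominator is strictly positive whenever the ``otherwise'' branch is invoked, and that ties between the two subprotocol expectations are consistently assigned to the first branch. Once this is handled, the three cases of the induction mirror the three cases of \cref{fact:perfectAdv} exactly, and the proof is essentially mechanical.
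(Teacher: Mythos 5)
Your proof is correct and follows essentially the same structure as the paper's: induction on round complexity, with the three-way case split (degenerate edge, $\HonA$-controlled root, $\HonB$-controlled root) mirroring \cref{fact:perfectAdv}. The only cosmetic difference is that the paper factors the $\HonA$-maximal / $\HonB$-minimal / $\HonB$-immune observations out into \cref{prop:DomMeasPro} and invokes them, whereas you unfold \cref{def:Dominated} directly; the resulting calculation is the same. One nit on your well-definedness remark: when the ``otherwise'' branch is active, you cannot conclude that both subprotocol expectations vanish—only that the denominator $\eex{\LDist{\cpi_{\ell_1}}}{\ADomMeas{\cpi_{\ell_1}}}$ does, which already forces $\Smaller{\cpi}{\ell_1}=1$ and hence a contradiction with being in the ``otherwise'' branch; the conclusion you draw is right but the stated reasoning overstates the hypothesis.
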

In particular, since $\BestA{\cpi}=1$ iff $\BestB{\cpi}<1$ (\cref{fact:perfectAdv}), it holds that $\BestA{\cpi}=1$ iff $\eex{\LDist\cpi}{\ADomMeas{\cpi}}>0$.

Towards proving \cref{lemma:ExpDomMeas}, we first note that the definition of $\ADomMeas{\cpi}$ ensures three important properties.
\begin{proposition}\label{prop:DomMeasPro}
Let $\cpi$ be a protocol with $\EdgeDist_\cpi(\EmptyString,b)\notin\zo$ for both $b\in\zo$. Then
\begin{enumerate}
  \item\label{item:Amaximal} ($\HonA$-maximal) $\HonA$ controls $\Root(\cpi) \implies$ $\Restrict{\ADomMeas{\cpi}}{b} \equiv \ADomMeas{\cpi_b}$ for both  $b\in\zo$.\footnote{Recall that for a measure $\Meas\colon \Leaves(\cpi)\to [0,1]$ and a bit $b$, $\Restrict{\Meas}{b}$ is the measure induced by $\Meas$ when restricted to $\Leaves(\cpi_b)\subseteq\Leaves(\cpi)$.}
  \item\label{item:Bminimal} ($\HonB$-minimal) $\HonB$ controls $\Root(\cpi) \implies \Restrict{\ADomMeas{\cpi}}{b} \equiv \left\{
                                                 \begin{array}{ll}
                                                   \ADomMeas{\cpi_b}, & \hbox{$\Smaller{\cpi}{b}=1$;} \\
                                                   \frac{\eex{\LDist{\cpi_{1-b}}}{\ADomMeas{\cpi_{1-b}}}}{\eex{\LDist{\cpi_{b}}}{\ADomMeas{\cpi_{b}}}} \cdot \ADomMeas{\cpi_{b}}, & \hbox{otherwise.}
                                                 \end{array}
                                               \right.$
  \item\label{item:Bimmune} ($\HonB$-immune) $\HonB$ controls $\Root(\cpi) \implies \eex{\LDist{\cpi_0}}{\Restrict{\ADomMeas{\cpi}}{0}} = \eex{\LDist{\cpi_1}}{\Restrict{\ADomMeas{\cpi}}{1}}$.
\end{enumerate}
\end{proposition}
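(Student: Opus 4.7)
The plan is to prove all three items by direct case analysis of \cref{def:Dominated}, since the hypothesis $\EdgeDist_\cpi(\EmptyString,b)\notin\set{0,1}$ for both $b\in\zo$ immediately rules out the first two subcases of the recursive definition, leaving only the third and fourth subcases as possibilities. So throughout, for any $\ell\in\Leaves(\cpi)$ with first message $b=\ell_1$, either $\ADomMeas{\cpi}(\ell)=\ADomMeas{\cpi_b}(\ell_{2,\ldots,\size{\ell}})$ (third subcase) or $\ADomMeas{\cpi}(\ell)=\frac{\eex{\LDist{\cpi_{1-b}}}{\ADomMeas{\cpi_{1-b}}}}{\eex{\LDist{\cpi_{b}}}{\ADomMeas{\cpi_{b}}}}\cdot\ADomMeas{\cpi_b}(\ell_{2,\ldots,\size{\ell}})$ (fourth subcase). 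Writing $e_b := \eex{\LDist{\cpi_b}}{\ADomMeas{\cpi_b}}$ will simplify things.

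For item~\labelcref{item:Amaximal}, note that when $\HonA$ controls $\Root(\cpi)$ the predicate ``$\HonA$ controls $\Root(\cpi)\lor \Smaller{\cpi}{\ell_1}$'' of the third subcase is trivially true for every $\ell$, so $\ADomMeas{\cpi}(\ell)=\ADomMeas{\cpi_b}(\ell_{2,\ldots,\size{\ell}})$ for every $\ell$ with $\ell_1=b$, which is precisely the statement $\Restrict{\ADomMeas{\cpi}}{b}\equiv \ADomMeas{\cpi_b}$. For item~\labelcref{item:Bminimal}, when $\HonB$ controls $\Root(\cpi)$ the predicate reduces to $\Smaller{\cpi}{b}$, so we simply split on its truth value and read off the third or fourth subcase as claimed. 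Here one should note there is no division-by-zero issue in the fourth subcase: $\Smaller{\cpi}{b}$ being false means $e_b>e_{1-b}\geq 0$, and in particular $e_b>0$.

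For item~\labelcref{item:Bimmune}, we apply item~\labelcref{item:Bminimal} and compute directly. Assume without loss of generality that $e_0\leq e_1$, so $\Smaller{\cpi}{0}$ holds; then $\Restrict{\ADomMeas{\cpi}}{0}\equiv \ADomMeas{\cpi_0}$ and hence $\eex{\LDist{\cpi_0}}{\Restrict{\ADomMeas{\cpi}}{0}}=e_0$. If $e_0<e_1$ then $\Smaller{\cpi}{1}$ fails and the fourth subcase gives $\Restrict{\ADomMeas{\cpi}}{1}\equiv \frac{e_0}{e_1}\cdot\ADomMeas{\cpi_1}$, so $\eex{\LDist{\cpi_1}}{\Restrict{\ADomMeas{\cpi}}{1}}=\frac{e_0}{e_1}\cdot e_1=e_0$, matching the other side. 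If instead $e_0=e_1$ then $\Smaller{\cpi}{1}$ also holds and $\Restrict{\ADomMeas{\cpi}}{1}\equiv \ADomMeas{\cpi_1}$, giving $\eex{\LDist{\cpi_1}}{\Restrict{\ADomMeas{\cpi}}{1}}=e_1=e_0$ again. This completes the proof.

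I do not anticipate any real obstacle here: the argument is a bookkeeping exercise that unwraps \cref{def:Dominated} one layer, and the only thing to watch is that the scaling factor in the fourth subcase is always well defined, which the asymmetry in the definition of $\Smaller{\cpi}{\cdot}$ (non-strict inequality) guarantees.
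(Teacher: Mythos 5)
Your proof is correct and follows essentially the same route as the paper's: items~\labelcref{item:Amaximal,item:Bminimal} are read directly off \cref{def:Dominated} once the hypothesis eliminates the first two subcases, and item~\labelcref{item:Bimmune} is a short case split using item~\labelcref{item:Bminimal} (the paper phrases it as $\Smaller{\cpi}{0}=\Smaller{\cpi}{1}=1$ versus one of them being~$0$, which matches your $e_0=e_1$ vs.\ $e_0<e_1$ split). Your extra observation that $\Smaller{\cpi}{b}=0$ forces $e_b>e_{1-b}\ge 0$, guarding against division by zero, is a useful detail the paper leaves implicit.
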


Namely, if $\HonA$ controls $\Root(\cpi)$, the \emph{$\HonA$-maximal} property of $\ADomMeas{\cpi}$ (the $\HonA$-dominated measure of $\cpi$) ensures that the restrictions of this measure to the subprotocols of $\cpi$ are the $\HonA$-dominated measures of these subprotocols. In the complementary case, \ie $\HonB$ controls $\Root(\cpi)$, the \emph{$\HonB$-minimal} property of $\ADomMeas{\cpi}$ ensures that for at least one subprotocol of $\cpi$, the restriction of this measure to this subprotocol is equal to the $\HonA$-dominated measure of the subprotocol. Finally, the \emph{$\HonB$-immune} property of $\ADomMeas{\cpi}$ ensures that the expected values of the measures derived by restricting $\ADomMeas{\cpi}$ to the subprotocols of $\cpi$ are equal (and hence, they are also equal to the expected value of $\ADomMeas{\cpi}$).

\begin{proof}[Proof of \cref{prop:DomMeasPro}]
The proof of \cref{item:Amaximal,item:Bminimal} ($\HonA$-maximal and $\HonB$-minimal) immediately follows from \cref{def:Dominated}.

Towards proving \cref{item:Bimmune} ($\HonB$-immune), we will assume that $\HonB$ controls $\Root(\cpi)$. If $\Smaller{\cpi}{0}=\Smaller{\cpi}{1}=1$, the proof again follows immediately from  \cref{def:Dominated}. In the complementary case, \ie $\Smaller{\cpi}{b}=0$ and $\Smaller{\cpi}{1-b}=1$ for some $b\in\zo$, it holds that
\begin{align*}
\eex{\LDist{\cpi_b}}{\Restrict{\ADomMeas{\cpi}}{b}} &= \eex{\LDist{\cpi_b}}{\frac{\eex{\LDist{\cpi_{1-b}}}{\ADomMeas{\cpi_{1-b}}}}{\eex{\LDist{\cpi_{b}}}{\ADomMeas{\cpi_{b}}}} \cdot \ADomMeas{\cpi_{b}}}\\
&= \frac{\eex{\LDist{\cpi_{1-b}}}{\ADomMeas{\cpi_{1-b}}}}{\eex{\LDist{\cpi_{b}}}{\ADomMeas{\cpi_{b}}}} \cdot \eex{\LDist{\cpi_{b}}}{\ADomMeas{\cpi_{b}}}\\
&= \eex{\LDist{\cpi_{1-b}}}{\ADomMeas{\cpi_{1-b}}}\\
&= \eex{\LDist{\cpi_{1-b}}}{\Restrict{\ADomMeas{\cpi}}{1-b}},
\end{align*}
where the first and last equalities follow from the $\HonB$-minimal property of $\ADomMeas{\cpi}$ (\cref{item:Bminimal}).
\end{proof}

We are now ready to prove \cref{lemma:ExpDomMeas}.
\begin{proof}[Proof of \cref{lemma:ExpDomMeas}]
The proof is by  induction on the round complexity of $\cpi$.

Assume that $\round(\cpi) = 0$ and let $\ell$ be the only node in $\Tree(\cpi)$. If $\Color_\cpi(\ell)=1$, then by \cref{def:Dominated} it holds that
$\ADomMeas{\cpi}(\ell)=1$, implying that $\eex{\LDist\cpi}{\ADomMeas{\cpi}}=1$. The proof follows since in this case, by \cref{lemma:perfectAdv}, $\BestB{\cpi}=0$. In the complementary case, \ie $\LeafValue(\ell)=0$, by \cref{def:Dominated} it holds that $\ADomMeas{\cpi}(\ell)=0$, implying that $\eex{\LDist\cpi}{\ADomMeas{\cpi}}=0$. The proof follows since in this case, by \cref{lemma:perfectAdv}, $\BestB{\cpi}=1$.

Assume that the lemma holds for $\rnd$-round protocols and that $\round(\cpi) = \rnd+1$.  For $b\in \zo$ let $\alpha_b \eqdef \eex{\LDist{\cpi_b}}{\DomMeas{\cpi_b}{\HonA}}$. The induction hypothesis yields that $\BestB{\cpi_b}=1-\alpha_b$ for both $b\in\zo$. If $\EdgeDist_{\cpi}(\EmptyString,b)=1$ for some $b\in\zo$ (which also means that $\EdgeDist_{\cpi}(\EmptyString,1-b)=0$), the proof follows since \cref{fact:perfectAdv} yields that $\BestB{\cpi}=\BestB{\cpi_b}=1-\alpha_b$, where \cref{def:Dominated} yields that $\eex{\LDist\cpi}{\ADomMeas{\cpi}}=\eex{\LDist{\cpi_b}}{\ADomMeas{\cpi_b}}=\alpha_b$.

Assume $\EdgeDist_{\cpi}(\EmptyString,b)\notin\zo$ for both $b\in\zo$ and let $p\eqdef\EdgeDist_\cpi(\EmptyString,0)$. The proof splits according to who controls the root of $\cpi$.

    \begin{description}
    \item[$\HonA$ controls $\Root(\cpi)$.] \cref{def:Dominated} yields  that
    \begin{align*}
    \eex{\LDist\cpi}{\ADomMeas{\cpi}} &= p\cdot \eex{\LDist{\cpi_0}}{\Restrict{\ADomMeas{{\cpi}}}{0}} + (1-p)\cdot \eex{\LDist{\cpi_1}}{\Restrict{\ADomMeas{\cpi}}{1}}\\
    &= p\cdot \eex{\LDist{\cpi_0}}{\ADomMeas{{\cpi_0}}} + (1-p)\cdot  \eex{\LDist{\cpi_1}}{\ADomMeas{{\cpi_1}}}\\ 
    &=p\cdot \alpha_0 + (1-p)\cdot \alpha_1,
    \end{align*}
    where the second equality follows from the $\HonA$-maximal property of $\ADomMeas{\cpi_b}$ (\cref{prop:DomMeasPro}(\ref{item:Amaximal})). Using \cref{fact:perfectAdv} we conclude that
    \begin{align*}
    \BestB{\cpi} &= p\cdot \BestB{\cpi_0} + (1-p)\cdot \BestB{\cpi_1} \\
    &= p \cdot (1-\alpha_0) + (1-p)\cdot (1-\alpha_1) \\
    &= 1-(p\cdot \alpha_0 + (1-p)\cdot \alpha_1) \\
    & = 1 - \eex{\LDist\cpi}{\ADomMeas{\cpi}}.
    \end{align*}

    \item[$\HonB$ controls $\Root(\cpi)$.] We assume that $\alpha_0\leq\alpha_1$ (the complementary case is analogous). \cref{fact:perfectAdv} and the induction hypothesis yield that $\BESTB=1-\alpha_0$.  Hence, it is left to show that $\eex{\LDist\cpi}{\ADomMeas{\cpi}}=\alpha_0$. The assumption that $\alpha_0\leq\alpha_1$ yields that $\Smaller{\cpi}{0}=1$. Thus, by the $\HonB$-minimal property of $\ADomMeas{\cpi}$ (\cref{prop:DomMeasPro}(\ref{item:Bminimal})), it holds that $\Restrict{\ADomMeas{\cpi}}{0}\equiv\ADomMeas{\cpi_0}$. It follows that $\eex{\LDist{\cpi_0}}{\Restrict{\ADomMeas{\cpi}}{0}}=\alpha_0$, and the $\HonB$-immune property of $\ADomMeas{\cpi}$ (\cref{prop:DomMeasPro}(\ref{item:Bimmune})) yields that $\eex{\LDist{\cpi_1}}{\Restrict{\ADomMeas{\cpi}}{1}}=\alpha_0$. To conclude the proof, we compute
        \begin{align*}
        \eex{\LDist\cpi}{\ADomMeas{\cpi}}&= p\cdot \eex{\LDist{\cpi_0}}{\Restrict{\ADomMeas{\cpi}}{0}} + (1-p)\cdot\eex{\LDist{\cpi_1}}{\Restrict{\ADomMeas{\cpi}}{1}}\\
        &=p\cdot \alpha_0 + (1-p)\cdot\alpha_0 \\
        & = \alpha_0.
        \end{align*}
    \end{description}
\end{proof}

\cref{lemma:ExpDomMeas} connects the success of the optimal attack to the expected value of the dominated measure. In the next section we analyze the success of the recursive biased-continuation attack using this expected value.  Unfortunately, this analysis does not seem to suffice for our goal. In \cref{sec:AltDomMeas} we generalize the dominated measure described above to a sequence of (alternating) dominated measures, where in \cref{sec:RCAltDomMeas} we use this new notion to prove that the recursive biased continuation is indeed a good attack.

\subsection{Warmup ---  Proof Attempt Using a (Single) Dominated Measure}\label{sec:SingleDomMeas}

As mentioned above, the  approach described in this section falls too short to serve our goals. Yet we describe it here as a detailed overview for the more complicated proof, given in following sections (\wrt a sequence of dominated measures). Specifically, we sketch a proof of the following lemma, which
relates the performance of the recursive biased-continuation attacker playing the role of $\HonA$, to the performance of the optimal (valid) attacker playing the role of $\HonB$. The proof (see below) is via the $\HonA$-dominated measure of $\cpi$ defined above.\footnote{The formal proof of \cref{lemma:SingleMeasManyIter} follows its stronger variant, \cref{lemma:IdealMainLemmaSim}, introduced in \cref{sec:RCAltDomMeas}.}
\begin{lemma}\label{lemma:SingleMeasManyIter}
Let $\cpi=\HonAHonB$ be a protocol with $\Val(\cpi)>0$, let $k\in\N$ and let $\rcA{k}$ be according to \cref{alg:adversary}. Then
\begin{align*}
\Val(\ProArc{k})\geq \frac{1- \BestB{\cpi}}{\prod_{i=0}^{k-1}\Val(\ProArc{i})}.
\end{align*}
\end{lemma}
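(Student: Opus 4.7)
Writing $\pi_i \eqdef \ProArcP{i}$, the plan is to reduce the lemma to a tree-indexed inductive claim: for every $u \in \Vertices(\cpi)$ with $\VerticesDist_\cpi(u) > 0$,
\[
\eex{\LDist{(\pi_k)_u}}{\Restrict{\ADomMeas{\cpi}}{u}} \cdot \prod_{i=0}^{k-1}\Val((\pi_i)_u) \;\geq\; \eex{\LDist{\cpi_u}}{\Restrict{\ADomMeas{\cpi}}{u}}.
\]
Granting this at $u = \EmptyString$, \cref{lemma:ExpDomMeas} identifies the right-hand side with $1-\BestB{\cpi}$; a routine induction on $\round(\cpi)$ via \cref{def:Dominated} gives $\Supp(\ADomMeas{\cpi}) \subseteq \Leaves_1(\cpi)$, so \cref{fact:ValueExp} yields $\Val(\pi_k) \geq \eex{\LDist{\pi_k}}{\ADomMeas{\cpi}}$; dividing by $\prod_{i=0}^{k-1}\Val(\pi_i) > 0$ (positivity coming from the monotonicity of biased continuation, starting at $\Val(\cpi) > 0$) then closes the reduction.

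I will prove the claim by induction on the height of $u$ in $\Tree(\cpi)$. Leaves are immediate, both sides collapsing to $\ADomMeas{\cpi}(u)$ with an empty product. For an internal $u$, \cref{fact:switchOrder} identifies $(\pi_k)_{ub}$ with the $k$-th biased-continuation protocol of $\cpi_{ub}$, so the induction hypothesis applies at each child $ub$. If $\HonA$ controls $u$, \cref{claim:weights} gives $\EdgeDist_{\pi_k}(u,ub) = \EdgeDist_\cpi(u,ub) \cdot \prod_{j=0}^{k-1}\Val((\pi_j)_{ub}) / \prod_{j=0}^{k-1}\Val((\pi_j)_u)$. Substituting into the expansion of the expectation and multiplying through by $\prod_{i=0}^{k-1}\Val((\pi_i)_u)$ cancels the denominator; the induction hypothesis at each $ub$ then supplies the matching $\prod_j \Val((\pi_j)_{ub})$ factor, and the surviving expression collapses to $\sum_b \EdgeDist_\cpi(u,ub) \cdot \eex{\LDist{\cpi_{ub}}}{\Restrict{\ADomMeas{\cpi}}{ub}} = \eex{\LDist{\cpi_u}}{\Restrict{\ADomMeas{\cpi}}{u}}$, as required.

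The $\HonB$-controlled case is the main obstacle, since here the edges at $u$ are unchanged ($\EdgeDist_{\pi_i}(u,ub) = \EdgeDist_\cpi(u,ub) \eqdef p_b$ for every $i$), so the $\Val$-products no longer telescope for free. Writing $z_b \eqdef \eex{\LDist{\cpi_{ub}}}{\Restrict{\ADomMeas{\cpi}}{ub}}$, expanding the expectation and applying the induction hypothesis at each child reduces the task to showing
\[
z_0 p_0 \cdot \frac{\prod_{i=0}^{k-1}\Val((\pi_i)_u)}{\prod_{i=0}^{k-1}\Val((\pi_i)_{u0})} + z_1 p_1 \cdot \frac{\prod_{i=0}^{k-1}\Val((\pi_i)_u)}{\prod_{i=0}^{k-1}\Val((\pi_i)_{u1})} \;\geq\; p_0 z_0 + p_1 z_1.
\]
The $\HonB$-immune property (\cref{prop:DomMeasPro}(\ref{item:Bimmune})) propagates from the root of $\cpi$ to every $\HonB$-controlled node (since tracing the recursion in \cref{def:Dominated} shows that $\Restrict{\ADomMeas{\cpi}}{u}$ is always a scalar multiple of $\ADomMeas{\cpi_u}$), giving $z_0 = z_1$. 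The displayed inequality then collapses to $p_0 \prod_i \Val((\pi_i)_u)/\prod_i \Val((\pi_i)_{u0}) + p_1 \prod_i \Val((\pi_i)_u)/\prod_i \Val((\pi_i)_{u1}) \geq 1$, which is precisely \cref{lemma:calculus1} applied with $x = y = 1$, $a_i = \Val((\pi_i)_{u0})$, $b_i = \Val((\pi_i)_{u1})$ (so that $p_0 a_i + p_1 b_i = \Val((\pi_i)_u)$), closing the induction.
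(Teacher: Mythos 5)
Your proposal is correct and takes essentially the same route as the paper: the paper reduces \cref{lemma:SingleMeasManyIter} to \cref{lemma:SingleMeasManyIterDet} (your tree-indexed claim, stated at the root), which is proven by the same induction over nodes/rounds using \cref{claim:weights} to telescope the $\HonA$-controlled case, the $\HonB$-immune property of \cref{prop:DomMeasPro} in the $\HonB$-controlled case, and \cref{lemma:calculus1} for the convexity step. The only cosmetic difference is that you carry the restricted measure $\Restrict{\ADomMeas{\cpi}}{u}$ through the induction instead of switching to $\ADomMeas{\cpi_u}$ at each child, which is justified exactly by the scalar-multiple observation you make.
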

The proof of the above lemma is a direct implication of the next lemma.
\begin{lemma}\label{lemma:SingleMeasManyIterDet}
Let $\cpi=\HonAHonB$ be a protocol with $\Val(\cpi)>0$, let $k\in\N$ and let $\rcA{k}$ be according to \cref{alg:adversary}. Then
\begin{align*}
\eex{\LDist{\ProArc{k}}}{\ADomMeas{\cpi}}\geq \frac{\eex{\LDist{\cpi}}{\ADomMeas{\cpi}}}{\prod_{i=0}^{k-1}\Val(\ProArc{i})}.
\end{align*}
\end{lemma}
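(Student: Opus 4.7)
The plan is to prove this by induction on $\round(\cpi)$, leveraging the structural properties of $\ADomMeas{\cpi}$ from \cref{prop:DomMeasPro} ($\HonA$-maximal, $\HonB$-minimal, $\HonB$-immune), the edge-weight formula from \cref{claim:weights}, and \cref{fact:switchOrder}, which identifies $(\rcA{k},\HonB)_b$ with $(\rcAP{\cpi_b}{k},\HonB_{\cpi_b})$ so the IH applies to each subprotocol. The base case $\round(\cpi)=0$ is immediate since $\rcA{k}\equiv\HonA$. For the inductive step, set $p_b=\EdgeDist_\cpi(\EmptyString,b)$, $\alpha_b=\eex{\LDist{\cpi_b}}{\ADomMeas{\cpi_b}}$, and $v_{b,i}=\Val((\rcA{i},\HonB)_b)$. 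Degenerate cases where some $p_b\in\set{0,1}$ or some $\alpha_b=0$ either reduce directly to the IH on the non-trivial subprotocol or make the corresponding summand vanish on both sides (using \cref{fact:ValueExp} to conclude that $\alpha_b=0$ whenever $\Val(\cpi_b)=0$, since $\ADomMeas{\cpi_b}$ is supported on $\Leaves_1(\cpi_b)$). So assume henceforth $p_0,p_1\in(0,1)$ and $\alpha_0,\alpha_1>0$, which also guarantees $v_{b,i}>0$ for all $i$.

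If $\HonA$ controls $\Root(\cpi)$, the $\HonA$-maximal property gives $\Restrict{\ADomMeas{\cpi}}{b}\equiv\ADomMeas{\cpi_b}$, hence $\eex{\LDist\cpi}{\ADomMeas{\cpi}}=p_0\alpha_0+p_1\alpha_1$. Splitting the LHS over the root edge and substituting \cref{claim:weights}, namely $\EdgeDist_{(\rcA{k},\HonB)}(\EmptyString,b)=p_b\cdot\prod_{i=0}^{k-1} v_{b,i}/\prod_{i=0}^{k-1}\Val(\rcA{i},\HonB)$, and then invoking the IH $\eex{\LDist{(\rcAP{\cpi_b}{k},\HonB_{\cpi_b})}}{\ADomMeas{\cpi_b}}\ge\alpha_b/\prod_i v_{b,i}$, the $\prod_i v_{b,i}$ factors cancel and the sum collapses to $(p_0\alpha_0+p_1\alpha_1)/\prod_i\Val(\rcA{i},\HonB)$, matching the RHS.

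If $\HonB$ controls $\Root(\cpi)$, the root edge distribution is unchanged by the attack ($\EdgeDist_{(\rcA{k},\HonB)}(\EmptyString,b) = p_b$), but the restricted measures are rescaled by the $\HonB$-minimal property. Assuming WLOG $\alpha_0\le\alpha_1$, we have $\Restrict{\ADomMeas{\cpi}}{0}\equiv\ADomMeas{\cpi_0}$, $\Restrict{\ADomMeas{\cpi}}{1}\equiv(\alpha_0/\alpha_1)\cdot\ADomMeas{\cpi_1}$, and $\eex{\LDist\cpi}{\ADomMeas{\cpi}}=\alpha_0$ by $\HonB$-immune. Applying the IH to each subprotocol, the $\alpha_0/\alpha_1$ rescaling on the $b=1$ side cancels the $\alpha_1$ in the IH numerator, so the LHS is at least $\alpha_0\cdot\paren{p_0/\prod_i v_{0,i}+p_1/\prod_i v_{1,i}}$, while the RHS equals $\alpha_0/\prod_i(p_0 v_{0,i}+p_1 v_{1,i})$. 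The remaining scalar inequality is exactly \cref{lemma:calculus1} applied with $x=y=1$, $a_i=v_{0,i}$, $b_i=v_{1,i}$.

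The main obstacle is the $\HonB$-controlled case: since biased continuation does not modify $\HonB$'s messages, the only gain in that case comes from the rescaling forced by the $\HonB$-minimal property of the dominated measure, and it is not obvious a priori that this rescaling suffices to dominate the multiplicative accumulation $\prod_i\Val(\rcA{i},\HonB)$ in the denominator of the target bound. This is precisely what \cref{lemma:calculus1} (with $x=y=1$) is engineered to deliver, and recognizing it as the right inequality is the technical crux.
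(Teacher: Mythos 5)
Your proposal is correct and follows essentially the route the paper indicates: induction on round complexity with the same casework as the sketch of \cref{lemma:claim1} ($\HonA$-maximal at $\HonA$-controlled roots via \cref{claim:weights}, $\HonB$-minimal/immune plus a convexity step at $\HonB$-controlled roots), and replacing the elementary inequality \cref{eq:claim1sketch} with \cref{lemma:calculus1} instantiated at $x=y=1$ is precisely the generalization the paper's footnote calls for. Your spot-check that $\alpha_b=0$ forces the corresponding summand to vanish (via \cref{fact:ValueExp}) is the right way to dispose of the degenerate cases.
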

\begin{proof}[Proof of \cref{lemma:SingleMeasManyIter}]
Immediately follows \cref{lemma:SingleMeasManyIterDet,lemma:ExpDomMeas,fact:ValueExp} (we can use \cref{fact:ValueExp} since by \cref{def:Dominated}, $\ADomMeas{\cpi}(\ell)=0$ for every $\ell\in\Leaves_0(\cpi)$).
\end{proof}

We begin by sketching the proof of the following lemma, which is a special case of \cref{lemma:SingleMeasManyIterDet}. Later we explain how to generalize the proof below to derive \cref{lemma:SingleMeasManyIterDet}.
\begin{lemma}\label{lemma:claim1}
Let $\cpi=\HonAHonB$ be a protocol with $\Val(\cpi)>0$ and let $\rcA{1}$ be according to \cref{alg:adversary}. Then $\eex{\LDist{\ProArc{1}}}{\ADomMeas{\cpi}}\geq \frac{\eex{\LDist{\cpi}}{\ADomMeas{\cpi}}}{\Val(\cpi)}$.
\end{lemma}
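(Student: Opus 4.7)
The plan is to prove the lemma by induction on the round complexity $\round(\cpi)$. The base case $\round(\cpi)=0$ is immediate: the protocol has a single leaf, so $(\rcA{1},\HonB)$ and $\cpi$ induce the same leaf distribution, and since $\Val(\cpi)>0$ forces $\Val(\cpi)=1$, the inequality becomes an equality. For the inductive step, I would split on who controls $\Root(\cpi)$; the two cases exploit different properties of $\ADomMeas{\cpi}$ established in \cref{prop:DomMeasPro}.

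Suppose $\HonA$ controls $\Root(\cpi)$. By \cref{claim:weights}, $\EdgeDist_{(\rcA{1},\HonB)}(\EmptyString,b)=\EdgeDist_\cpi(\EmptyString,b)\cdot \Val(\cpi_b)/\Val(\cpi)$ for $b\in\zo$. Using \cref{fact:switchOrder}, $\paren{\rcAP{\cpi}{1},\HonB}_b \equiv \paren{\rcAP{\cpi_b}{1},\HonB_{\cpi_b}}$. Combining these with the $\HonA$-maximal property (\cref{prop:DomMeasPro}(1)), which gives $\Restrict{\ADomMeas{\cpi}}{b}\equiv\ADomMeas{\cpi_b}$, and the induction hypothesis applied to each subprotocol $\cpi_b$ with $\Val(\cpi_b)>0$, yields
\[
\eex{\LDist{(\rcA{1},\HonB)}}{\ADomMeas{\cpi}} \;\geq\; \sum_{b}\EdgeDist_\cpi(\EmptyString,b)\cdot\frac{\Val(\cpi_b)}{\Val(\cpi)}\cdot \frac{\eex{\LDist{\cpi_b}}{\ADomMeas{\cpi_b}}}{\Val(\cpi_b)} \;=\; \frac{\eex{\LDist{\cpi}}{\ADomMeas{\cpi}}}{\Val(\cpi)}.
\]
Terms where $\Val(\cpi_b)=0$ contribute zero on both sides (since $\ADomMeas{\cpi_b}$ is supported on $\Leaves_1(\cpi_b)$), so the cancellation is legitimate.

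Now suppose $\HonB$ controls $\Root(\cpi)$. Here $\EdgeDist_{(\rcA{1},\HonB)}(\EmptyString,b)=\EdgeDist_\cpi(\EmptyString,b)$. Set $\alpha_b=\eex{\LDist{\cpi_b}}{\ADomMeas{\cpi_b}}$; \wlg $\alpha_0\leq\alpha_1$, so $\Smaller{\cpi}{0}=1$. The $\HonB$-minimal property (\cref{prop:DomMeasPro}(2)) gives $\Restrict{\ADomMeas{\cpi}}{0}\equiv\ADomMeas{\cpi_0}$ and $\Restrict{\ADomMeas{\cpi}}{1}\equiv(\alpha_0/\alpha_1)\cdot\ADomMeas{\cpi_1}$, and the $\HonB$-immune property (\cref{prop:DomMeasPro}(3)) gives $\eex{\LDist{\cpi}}{\ADomMeas{\cpi}}=\alpha_0$. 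Applying the induction hypothesis to each $\cpi_b$ (the scaling factor $\alpha_0/\alpha_1$ passes through the expectation linearly), I get
\[
\eex{\LDist{(\rcA{1},\HonB)}}{\ADomMeas{\cpi}} \;\geq\; \EdgeDist_\cpi(\EmptyString,0)\cdot\frac{\alpha_0}{\Val(\cpi_0)} + \EdgeDist_\cpi(\EmptyString,1)\cdot\frac{\alpha_0}{\Val(\cpi_1)}.
\]
So the claim reduces to showing $p/\Val(\cpi_0)+(1-p)/\Val(\cpi_1) \geq 1/\Val(\cpi)$, where $p=\EdgeDist_\cpi(\EmptyString,0)$ and $\Val(\cpi)=p\Val(\cpi_0)+(1-p)\Val(\cpi_1)$; this is precisely the convexity of $x\mapsto 1/x$ (equivalently, AM--HM).

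The main obstacle I expect is the $\HonB$-controlled case: unlike the $\HonA$-case, the recursion weights do not align by Claim~\ref{claim:weights}, and one must instead rely on the subtle interplay between the $\HonB$-minimal rescaling (which pulls the measure \emph{down} on the heavier side) and the $\HonB$-immune equalization, together with a convexity step, to recover the $1/\Val(\cpi)$ gain. Edge cases where some $\Val(\cpi_b)$ or $\alpha_b$ vanishes must be checked, but in each such case both sides of the target inequality collapse to zero, so the argument goes through.
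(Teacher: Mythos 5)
Your proposal follows essentially the same route as the paper's own proof sketch: induction on round complexity, the same case split on which party controls the root, the $\HonA$-maximal property together with \cref{claim:weights} and \cref{fact:switchOrder} for the $\HonA$-controlled case, and the $\HonB$-minimal/$\HonB$-immune properties followed by the $p/\Val(\cpi_0)+(1-p)/\Val(\cpi_1)\geq 1/\Val(\cpi)$ convexity step for the $\HonB$-controlled case. The only minor omission (also left implicit in the paper's sketch) is the degenerate case $\EdgeDist_\cpi(\EmptyString,b)=1$, which is handled routinely.
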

\begin{proofsketch}
The proof is by induction on the round complexity of $\cpi$. The base case (\ie $\round(\cpi)=0$) is straightforward. Assume that the lemma holds for $\rnd$-round protocols and that $\round(\cpi) = \rnd+1$. For $b\in\zo$ let $\alpha_b \eqdef \eex{\LDist{\cpi_b}}{\ADomMeas{\cpi_b}}$ and let $p\eqdef\EdgeDist_{\cpi}(\EmptyString,0)$.

If $\Root(\cpi)$ is controlled by $\HonA$, the $\HonA$-maximal property of $\ADomMeas{\cpi}$ (\cref{prop:DomMeasPro}(\ref{item:Amaximal})) yields that $\eex{\LDist{\cpi}}{\ADomMeas{\cpi}}=p\cdot\alpha_0 + (1-p)\cdot \alpha_1$. It holds that
\begin{align}\label{eq:sketch1}
\eex{\LDist{\ProArc{1}}}{\ADomMeas{\cpi}} &= \EdgeDist_\ProArcP{1}(\EmptyString,0)\cdot \eex{\LDist{\ProArcP{1}_0}}{\Restrict{\ADomMeas{\cpi}}{0}} + \EdgeDist_\ProArcP{1}(\EmptyString,1)\cdot \eex{\LDist{\ProArcP{1}_1}}{\Restrict{\ADomMeas{\cpi}}{1}} \\
&= p\cdot \frac{\Val(\cpi_0)}{\Val(\cpi)} \cdot \eex{\LDist{\ProArcP{1}_0}}{\Restrict{\ADomMeas{\cpi}}{0}} + (1-p)\cdot \frac{\Val(\cpi_1)}{\Val(\cpi)} \cdot \eex{\LDist{\ProArcP{1}_1}}{\Restrict{\ADomMeas{\cpi}}{1}}, \nonumber
\end{align}
where the second equality follows from \cref{claim:weights}. Since $\rcA{1}$ is stateless (\cref{fact:switchOrder}), we can write \cref{eq:sketch1} as
\begin{align}\label{eq:sketch2}
\eex{\LDist{\ProArc{1}}}{\ADomMeas{\cpi}} &= p\cdot \frac{\Val(\cpi_0)}{\Val(\cpi)} \cdot \eex{\LDist{\ProArcb{1}{0}}}{\Restrict{\ADomMeas{\cpi}}{0}} + (1-p)\cdot \frac{\Val(\cpi_1)}{\Val(\cpi)} \cdot \eex{\LDist{\ProArcb{1}{1}}}{\Restrict{\ADomMeas{\cpi}}{1}}.
\end{align}
The $\HonA$-maximal property of $\ADomMeas{\cpi}$ and \cref{eq:sketch2} yield that
\begin{align}\label{eq:sketch3}
\eex{\LDist{\ProArc{1}}}{\ADomMeas{\cpi}} &= p\cdot \frac{\Val(\cpi_0)}{\Val(\cpi)} \cdot \eex{\LDist{\ProArcb{1}{0}}}{\ADomMeas{\cpi_0}} + (1-p)\cdot \frac{\Val(\cpi_1)}{\Val(\cpi)} \cdot \eex{\LDist{\ProArcb{1}{1}}}{\ADomMeas{\cpi_1}}.
\end{align}
Applying the induction hypothesis on the right-hand side of \cref{eq:sketch3} yields that
\begin{align*}
\eex{\LDist{\ProArc{1}}}{\ADomMeas{\cpi}} &\geq p\cdot \frac{\Val(\cpi_0)}{\Val(\cpi)} \cdot \frac{\alpha_0}{\Val(\cpi_0)} + (1-p)\cdot \frac{\Val(\cpi_1)}{\Val(\cpi)} \cdot \frac{\alpha_1}{\Val(\cpi_1)} \\
&=  \frac{p\cdot \alpha_0+(1-p)\cdot\alpha_1}{\Val(\cpi)}\\
&= \frac{\eex{\LDist{\cpi}}{\ADomMeas{\cpi}}}{\Val(\cpi)},
\end{align*}
which concludes the proof for the case that $\HonA$ controls $\Root(\cpi)$.

If $\Root(\cpi)$ is controlled by $\HonB$, and assuming that $\alpha_0\leq\alpha_1$ (the complementary case is analogous), it holds that $\Smaller{\cpi}{0}=1$. Thus, by the $\HonB$-minimal property of $\ADomMeas{\cpi}$ (\cref{prop:DomMeasPro}(\ref{item:Bminimal})), it holds that $\Restrict{\ADomMeas{\cpi}}{0}\equiv \ADomMeas{\cpi_0}$ and $\Restrict{\ADomMeas{\cpi}}{1}\equiv \frac{\alpha_0}{\alpha_1}\ADomMeas{\cpi_1}$. Hence, the $\HonB$-immune property of $\ADomMeas{\cpi}$ (\cref{prop:DomMeasPro}(\ref{item:Bimmune})) yields that $\eex{\LDist{\cpi}}{\ADomMeas{\cpi}}=\alpha_0$. In addition, since $\HonB$ controls $\Root(\cpi)$, the distribution of the edges $(\EmptyString,0)$ and $(\EmptyString,1)$ has not changed. It holds that
\begin{align}\label{eq:sketch4}
\eex{\LDist{\ProArc{1}}}{\ADomMeas{\cpi}} &= p\cdot \eex{\LDist{\ProArcP{1}_0}}{\Restrict{\ADomMeas{\cpi}}{0}} + (1-p) \cdot \eex{\LDist{\ProArcP{1}_1}}{\Restrict{\ADomMeas{\cpi}}{1}} \\
&\overset{(1)}{=} p \cdot \eex{\LDist{\ProArcb{1}{0}}}{\Restrict{\ADomMeas{\cpi}}{0}} + (1-p) \cdot \eex{\LDist{\ProArcb{1}{1}}}{\Restrict{\ADomMeas{\cpi}}{1}} \nonumber\\
&= p \cdot \eex{\LDist{\ProArcb{1}{0}}}{\ADomMeas{\cpi_0}} + (1-p) \cdot \eex{\LDist{\ProArcb{1}{1}}}{\frac{\alpha_0}{\alpha_1}\ADomMeas{\cpi_1}} \nonumber\\
&= p \cdot \eex{\LDist{\ProArcb{1}{0}}}{\ADomMeas{\cpi_0}} + (1-p)\cdot \frac{\alpha_0}{\alpha_1} \cdot \eex{\LDist{\ProArcb{1}{1}}}{\ADomMeas{\cpi_1}}, \nonumber
\end{align}
where (1) follows since $\rcA{1}$ is stateless (\cref{fact:switchOrder}). Applying the induction hypothesis on the right-hand side of \cref{eq:sketch4} yields that
\begin{align*}
\eex{\LDist{\ProArc{1}}}{\ADomMeas{\cpi}} &\geq p \cdot \frac{\alpha_0}{\Val(\cpi_0)} + (1-p)\cdot \frac{\alpha_0}{\alpha_1} \cdot \frac{\alpha_1}{\Val(\cpi_1)} \\
&= \alpha_0 \left( \frac{p}{\Val(\cpi_0)} + \frac{1-p}{\Val(\cpi_1)} \right)\\
&\overset{(1)}{\geq} \frac{\eex{\LDist{\cpi}}{\ADomMeas{\cpi}}}{\Val(\cpi)},
\end{align*}
which concludes the proof for the case that $\HonB$ controls $\Root(\cpi)$, and where (1) holds since
\begin{align}\label{eq:claim1sketch}
\frac{p}{\Val(\cpi_0)} + \frac{1-p}{\Val(\cpi_1)}\geq \frac{1}{\Val(\cpi)}.
\end{align}
\end{proofsketch}

The proof of \cref{lemma:SingleMeasManyIterDet} follows from similar arguments to those used above for proving \cref{lemma:claim1}.\footnote{The proof sketch given for \cref{lemma:claim1} is almost a formal proof, lacking only consideration of the base case and the extreme cases in which $\EdgeDist_\cpi(\EmptyString,b)=1$ for some $b\in\zo$.} Informally, we proved  \cref{lemma:claim1} by showing that $\rcA{1}$ ``assigns'' more weight to the dominated measure than $\HonA$ does. A natural step is to consider $\rcA{2}$ and to see if it assigns more weight to the dominated measure than $\rcA{1}$ does. It turns out that one can turn this intuitive argument into a formal proof, and prove \cref{lemma:SingleMeasManyIter} by  repeating this procedure \wrt many recursive biased-continuation attacks.\footnote{The main additional complication in  the proof of \cref{lemma:SingleMeasManyIter} is that the simple argument used to derive \cref{eq:claim1sketch} is replaced with the more general argument, described in \cref{lemma:calculus1}.}

\paragraph{The shortcoming of \cref{lemma:SingleMeasManyIter}.} Given a protocol $\cpi = \HonAHonB$, we are interested in the minimal  value of $\kappa$  for which $\rcA{\kappa}$ biases the value of the protocol towards one with  probability of at least $0.9$ (as a concrete example). Following \cref{lemma:SingleMeasManyIter}, it suffices to find a value $\kappa$ such that
\begin{align}
 \Val(\ProArc{\kappa}) \geq \frac{1-\BestB{\cpi}}{\prod_{i=0}^{\kappa-1}\Val(\ProArc{i})} \geq 0.9.
\end{align}
Using worst case analysis,  it suffices to find $\kappa$ such that  $(1-\BestB{\cpi})/(0.9)^{\kappa}\geq 0.9$, where the latter dictates that
\begin{align}\label{eq:warmup}
\kappa \geq \frac{\log\paren{\frac{1}{1-\BestB{\cpi}}}}{\log\paren{\frac{1}{0.9}}}.
\end{align}
Recall that our ultimate goal is to implement an \emph{efficient} attack on any coin-flipping protocol, under the mere assumption that one-way functions do not exist. Specifically, we would like to do so by giving an efficient version of the recursive biased-continuation attack. At the very least, due to the recursive nature of the attack, this requires the protocols $(\rcA{1},\HonB),\dots, (\rcA{\kappa-1},\HonB)$ be efficient in comparison to the basic protocol.  The latter  efficiency restriction  together with the recursive definition of $\rcA{\kappa}$ dictates that $\kappa$ (the number of recursion calls) be constant.

Unfortunately, \cref{eq:warmup} reveals that if $\BestB{\cpi}\in 1-o(1)$, we need to take $\kappa\in\omega(1)$, yielding an inefficient  attack.

\subsection{Back to the Proof --- Sequence of Alternating Dominated Measures}\label{sec:AltDomMeas}
Let $\cpi = \HonAHonB$ be a protocol and let $\Meas$ be a measure over the leaves of $\cpi$. Consider the variant of $\cpi$ whose parties act identically to the parties in $\cpi$, but  with the following tweak: when the execution  reaches a leaf $\ell$, the protocol restarts  with probability $\Meas(\ell)$. Namely, a random execution of the resulting (possibly inefficient) protocol is distributed like a random execution of $\cpi$, conditioned on not ``hitting'' the measure $\Meas$.\footnote{For concreteness, one might like to consider the case where $\Meas$ is a set.} The above is formally captured by the  definition below.

\subsubsection{Conditional Protocols}
\begin{definition}[conditional protocols]\label{def:CondProtocol}
Let $\cpi$ be an $\rnd$-round protocol and let $\Meas$ be a measure over $\Leaves(\cpi)$ with $\Exp_{\LDist{\cpi}}[\Meas]<1$. The $\rnd$-round $\Meas$-{\sf conditional protocol of} $\cpi$, denoted $\CondPro{\cpi}{\Meas}$, is defined by the color function $\Color_{\paren{\CondPro{\cpi}{\Meas}}} \equiv  \Color_\cpi$, and the edge distribution function $\EdgeDist_{\paren{\CondPro{\cpi}{\Meas}}}$ is defined by
\begin{align*}
\EdgeDist_{\paren{\CondPro{\cpi}{\Meas}}}(u,u b) =
\begin{cases} 0, & \Exp_{\LDist{\cpi_{u}}}[\Meas] = 1;\footnotemark  \\
\EdgeDist_\cpi(u,u b) \cdot \frac{1- \Exp_{\LDist{\cpi_{u b}}}[\Meas]}{1- \eex{\LDist{\cpi_{u}}}{\Meas}},& \mbox{otherwise}.
\end{cases},
\end{align*}
\footnotetext{Note that this case does not affect the resulting protocol, and is defined only to simplify future discussion.}
for every $u\in \Vertices(\cpi)\setminus\Leaves(\cpi)$ and $b\in\zo$. The controlling scheme of the protocol $\CondPro{\cpi}{\Meas}$ is the same as in $\cpi$.

If $\Exp_{\LDist{\cpi}}[\Meas]=1$ or $\cpi = \perp$,  we set $\CondPro{\cpi}{\Meas} = \perp$.
\end{definition}

\begin{example}
(Conditional Protocol)

Once again we consider the protocol $\cpi$ from \cref{fig:HonPro}. In \cref{fig:exmpl2} we present the conditional protocol $\cpi' = \CondPro{\cpi}{\ADomMeas{\cpi}}$, namely the protocol derived when protocol $\cpi$ is conditioned not to ``hit'' the $\HonA$-dominated measure of $\cpi$.
We would like to highlight some points regarding this conditional protocol.

The first  point we note is the changes in the edge distribution. Consider the root of $\cpi_0$ (\ie the node $0$). According to the calculations in \cref{fig:CalcDM}, it holds that $\eex{\LDist{\cpi_{00}}}{\ADomMeas{\cpi}} = \ADomMeas{\cpi}(00) = 1$ and that $\eex{\LDist{\cpi_{0}}}{\ADomMeas{\cpi}} = \alpha_0$. Hence, \cref{def:CondProtocol} yields that
\begin{align*}
\EdgeDist_{\paren{\CondPro{\cpi}{\ADomMeas{\cpi}}}}(0,00) &= \alpha_0 \cdot \frac{1-\eex{\LDist{\cpi_{00}}}{\ADomMeas{\cpi}}}{1-\eex{\LDist{\cpi_{0}}}{\ADomMeas{\cpi}}} \\
&=  \alpha_0 \cdot \frac{0}{1-\alpha_0} \\
&= 0.
\end{align*}
Note that the above change makes the leaf $00$ inaccessible in $\cpi'$. This occurs since $\ADomMeas{\cpi}(00)=1$. Similar calculations yield the changes in the distribution of the edges leaving the root of $\cpi_1$ (\ie the node $1$).

The second point we note is that the conditional protocol is in fact a protocol. Namely, for every node, the sum of the probabilities of the edges leaving it is one.
This is easily seen from \cref{fig:exmpl2}.

The third point we note is that the edge distribution of the root of $\cpi$ does not change at all. This follows from \cref{def:CondProtocol} and the fact that
\begin{align*}
\eex{\LDist{\cpi_0}}{\ADomMeas{\cpi}} = \eex{\LDist{\cpi_{1}}}{\ADomMeas{\cpi}} = \eex{\LDist{\cpi}}{\ADomMeas{\cpi}} = \alpha_0.
\end{align*}

The fourth point we note is that in the conditional protocol, an optimal valid attacker playing the role of $\HonB$ can bias the outcome towards zero with probability one. Namely, $\BestB{\CondPro{\cpi}{\ADomMeas{\cpi}}}=1$. Such an attacker will send $0$ as the first message, after which $\HonA$ must send $1$ as the next message, and then the attacker will send $0$. The outcome of this interaction is the value of the leaf $010$, which is $0$.

In the rest of the section we show that the above observations can actually be generalize to statements regarding any conditional protocol.
\begin{figure}
\centering
\begin{tikzpicture}[->,>=stealth',level/.style={sibling distance = 5cm/#1, level distance = 1.5cm,}]
\begin{scope}
\node [B-node] (B) {$\HonB$}
    child{ node [A-node] (A0) {$\HonA$}
        child{ node[leaf,thick] (L00) {$1$} edge from parent[dashed] node[above left] {$0$}}
        child{ node [B-node] (B01) {$\HonB$}
            child[solid]{ node [leaf] (L010) {$0$} edge from parent node[above left] {$\beta_{01}$}}
			child[solid]{ node [leaf] (L011) {$1$} edge from parent node[above right] {$1-\beta_{01}$}}
            edge from parent[dashed] node[above right] {$1$}
		}
        edge from parent node[above left] {$\beta$}
    }
    child{ node [A-node] (A1) {$\HonA$}
        child{ node [leaf] (L10) {$1$} edge from parent[dashed] node[above left] {$\frac{\alpha_1-\alpha_0}{1-\alpha_0}$}}
        child{ node [leaf] (L11) {$0$} edge from parent[dashed] node[above right] {$\frac{1-\alpha_1}{1-\alpha_0}$}
		}
        edge from parent node[above right] {$1-\beta$}
    }
;
\end{scope}

\end{tikzpicture}
\caption{The conditional protocol $\cpi' = \CondPro{\cpi}{\ADomMeas{\cpi}}$ of $\cpi$ from \cref{fig:HonPro}. Dashed edges are such that their distribution has changed. Note that due to this change, the leaf $00$ (the leftmost leaf, marked by a thick border) is \emph{inaccessible} in $\cpi'$. The $\HonB$-dominated measure of $\cpi'$ assigns a value of $1$ to the leaf $010$, and value of $0$ to all other leaves.}
\label{fig:exmpl2}
\end{figure}
\end{example}

The next proposition shows that the $\Meas$-conditional protocol is indeed a protocol. It also shows a relation between the leaf distribution of the $\Meas$-conditional protocol and the original protocol. Using this relation we conclude that the set of possible transcripts of the $\Meas$-conditional protocol is a subset the original protocol's possible transcripts and that if $\Meas$ assigns a value of $1$ to some transcript, then this transcript is inaccessible by the $\Meas$-conditional protocol.

\begin{proposition}\label{prop:CondPro}
Let $\cpi$ be a protocol and let $\Meas$ be a measure over $\Leaves(\cpi)$ with $\eex{\LDist{\cpi}}{\Meas}<1$. Then
\newcounter{eqn}
\renewcommand*{\theeqn}{\arabic{eqn}.}
\newcommand{\num}{\refstepcounter{eqn}\text{\theeqn}\quad}
\begin{fleqn}[1.3em]
\begin{alignat*}{3}
  \num& \forall u\in \Vertices(\cpi)\setminus\Leaves(\cpi)\colon \ \ &\VerticesDist_{\paren{\CondPro{\cpi}{\Meas}}}(u)>0 &\implies \EdgeDist_{\paren{\CondPro{\cpi}{\Meas}}}(u,u 0) + \EdgeDist_{\paren{\CondPro{\cpi}{\Meas}}}(u,u 1) =1; \\
  \num& \forall  \ell\in\Leaves(\cpi) \colon \ \ &\VerticesDist_{\paren{\CondPro{\cpi}{\Meas}}}(\ell)&\makebox[1cm][c]{$=$}\VerticesDist_{\cpi}(\ell)\cdot\frac{1-\Meas(\ell)}{1-\eex{\LDist{\cpi}}{\Meas}}; \\
  \num& \forall  \ell\in\Leaves(\cpi) \colon \ \ &\VerticesDist_{\paren{\CondPro{\cpi}{\Meas}}}(\ell)>0 &\implies \VerticesDist_{\cpi}(\ell)>0; \text{ and} \\
  \num& \forall \ell\in\Leaves(\cpi)\colon \ \  &\Meas(\ell)=1 &\implies \VerticesDist_{\paren{\CondPro{\cpi}{\Meas}}}(\ell)=0.
\end{alignat*}
\end{fleqn}
\end{proposition}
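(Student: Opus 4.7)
My plan is to first prove a stronger intermediate statement and then derive all four enumerated claims from it. Specifically, I will extend $\eex{\LDist{\cpi_\ell}}{\Meas}$ to leaves by the convention $\eex{\LDist{\cpi_\ell}}{\Meas} \eqdef \Meas(\ell)$, and establish the following general vertex formula for every $u \in \Vertices(\cpi)$:
\begin{align*}
\VerticesDist_{\paren{\CondPro{\cpi}{\Meas}}}(u) = \VerticesDist_{\cpi}(u) \cdot \frac{1 - \eex{\LDist{\cpi_u}}{\Meas}}{1 - \eex{\LDist{\cpi}}{\Meas}}.
\end{align*}
Once this is available, Part~2 of the proposition is exactly the restriction to $u=\ell\in\Leaves(\cpi)$, Part~3 is immediate since $\VerticesDist_{\paren{\CondPro{\cpi}{\Meas}}}(\ell)>0$ forces both $\VerticesDist_\cpi(\ell)>0$ and $1-\Meas(\ell)>0$, and Part~4 follows because $\Meas(\ell)=1$ makes the numerator vanish.

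I will prove the general vertex formula by induction on $|u|$. The base case $u=\EmptyString$ is trivial since $\VerticesDist_\cpi(\EmptyString)=\VerticesDist_{\paren{\CondPro{\cpi}{\Meas}}}(\EmptyString)=1$. For the inductive step, write $u=vb$, apply the induction hypothesis to $v$, multiply by $\EdgeDist_{\paren{\CondPro{\cpi}{\Meas}}}(v,vb)$ as given by \cref{def:CondProtocol}, and watch the factor $(1-\eex{\LDist{\cpi_v}}{\Meas})$ telescope. The only subtle point is the edge case $\eex{\LDist{\cpi_v}}{\Meas}=1$: then $\EdgeDist_{\paren{\CondPro{\cpi}{\Meas}}}(v,vb)=0$ by the first branch of the definition, so the left-hand side is $0$; and from the identity $\eex{\LDist{\cpi_v}}{\Meas} = \sum_{b} \EdgeDist_\cpi(v,vb)\cdot\eex{\LDist{\cpi_{vb}}}{\Meas}$ together with $\eex{\LDist{\cpi_{vb}}}{\Meas}\leq 1$, we see that $\eex{\LDist{\cpi_{vb}}}{\Meas}=1$ for every $b$ with $\EdgeDist_\cpi(v,vb)>0$, so the right-hand side also vanishes (either via the factor $1-\eex{\LDist{\cpi_u}}{\Meas}=0$ or via $\VerticesDist_\cpi(u)=0$). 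This is the main obstacle in the write-up; apart from it the induction is a one-line computation.

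Finally, Part~1 will be established directly from the definition. Fix $u\in\Vertices(\cpi)\setminus\Leaves(\cpi)$ with $\VerticesDist_{\paren{\CondPro{\cpi}{\Meas}}}(u)>0$; the vertex formula above implies $\eex{\LDist{\cpi_u}}{\Meas}<1$, so the denominators appearing in \cref{def:CondProtocol} at $u$ are nonzero and we are in the second branch of that definition. Using $\EdgeDist_\cpi(u,u0)+\EdgeDist_\cpi(u,u1)=1$ and the chain-rule identity
\begin{align*}
\eex{\LDist{\cpi_u}}{\Meas} = \EdgeDist_\cpi(u,u0)\cdot\eex{\LDist{\cpi_{u0}}}{\Meas} + \EdgeDist_\cpi(u,u1)\cdot\eex{\LDist{\cpi_{u1}}}{\Meas},
\end{align*}
we get $\sum_{b\in\zo}\EdgeDist_\cpi(u,ub)\cdot(1-\eex{\LDist{\cpi_{ub}}}{\Meas}) = 1-\eex{\LDist{\cpi_u}}{\Meas}$. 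Dividing both sides by $1-\eex{\LDist{\cpi_u}}{\Meas}$ yields $\sum_{b\in\zo}\EdgeDist_{\paren{\CondPro{\cpi}{\Meas}}}(u,ub)=1$, as required.
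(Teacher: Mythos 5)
Your proposal is correct and, modulo the added detail, takes the same approach as the paper: the paper's own proof simply asserts that items 1 and 2 follow from Definition \ref{def:CondProtocol} (via the telescoping product you make explicit) and that items 3 and 4 follow from item 2. Your general vertex formula is a clean strengthening of item 2 to all vertices, and your handling of the degenerate branch $\eex{\LDist{\cpi_v}}{\Meas}=1$ correctly fills in what the paper leaves implicit.
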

\begin{proof}
The first two items immediately follow from \cref{def:CondProtocol}. The last two items follow the second item.
\end{proof}
In addition to the above properties, \cref{def:CondProtocol} guarantees the following ``locality'' property of the $\Meas$-conditional protocol.
\begin{proposition}\label{prop:CondProLocal}
Let $\cpi$ be a protocol and let $\Meas$ be a measure over $\Leaves(\cpi)$. Then $\Restrict{\CondPro{\cpi}{\Meas}}{u}=\CondPro{\cpi_u}{\Restrict{\Meas}{u}}$ for every $u\in\Vertices(\cpi)\setminus\Leaves(\cpi)$.
\end{proposition}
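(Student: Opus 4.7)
The plan is to show both sides define the same protocol on the same tree by verifying agreement node-by-node. Both $\Restrict{\CondPro{\cpi}{\Meas}}{u}$ and $\CondPro{\cpi_u}{\Restrict{\Meas}{u}}$ live on the subtree $\Tree(\cpi_u)$, inherit the control scheme $\ctrls_\cpi$ restricted to that subtree, and have $\Color$-function equal to $\Color_\cpi$ restricted to $\Leaves(\cpi_u)$ (this is by \cref{def:CondProtocol}, which preserves $\Color$, and by \cref{def:subProtocolTree}, where restriction preserves control and color). So it suffices to verify that the edge-distribution functions agree on every internal node $v\in \Vertices(\cpi_u)\setminus \Leaves(\cpi_u)$.

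First I would establish the key algebraic identity underlying the proof: for every $v\in\Vertices(\cpi_u)$, the protocol $(\cpi_u)_v$ is the same as $\cpi_v$ (since both are induced by $\EdgeDist_\cpi$ on the subtree rooted at $v$), and the restriction commutes, i.e., $\Restrict{(\Restrict{\Meas}{u})}{v}\equiv \Restrict{\Meas}{v}$. In particular,
\[
\eex{\LDist{(\cpi_u)_v}}{\Restrict{\Meas}{u}} \;=\; \eex{\LDist{\cpi_v}}{\Meas}.
\]
Next I handle degenerate cases. If $\cpi=\perp$ or $\eex{\LDist{\cpi}}{\Meas}=1$, then $\CondPro{\cpi}{\Meas}=\perp$ by \cref{def:CondProtocol}, so the left-hand side is $\perp$; one then checks that in this situation $\VerticesDist_\cpi(u)=0$ forces $\cpi_u=\perp$, or $\eex{\LDist{\cpi_u}}{\Restrict{\Meas}{u}}=1$, so the right-hand side is also $\perp$. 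Similarly, if $\VerticesDist_{\CondPro{\cpi}{\Meas}}(u)=0$ then the left-hand side is $\perp$ by \cref{def:subProtocolTree}; by \cref{prop:CondPro}(2) applied to the leaves of $\cpi_u$, this occurs exactly when $\eex{\LDist{\cpi_u}}{\Restrict{\Meas}{u}}=1$ or $\VerticesDist_\cpi(u)=0$, again matching the right-hand side.

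In the remaining (non-degenerate) case, fix an internal node $v\in \Vertices(\cpi_u)$ with $\eex{\LDist{\cpi_v}}{\Meas}<1$ and a bit $b\in\zo$, and compute directly. On the left,
\[
\EdgeDist_{\Restrict{\CondPro{\cpi}{\Meas}}{u}}(v,vb) \;=\; \EdgeDist_{\CondPro{\cpi}{\Meas}}(v,vb) \;=\; \EdgeDist_\cpi(v,vb)\cdot \frac{1-\eex{\LDist{\cpi_{vb}}}{\Meas}}{1-\eex{\LDist{\cpi_v}}{\Meas}},
\]
where the first equality is \cref{def:subProtocolTree} and the second is \cref{def:CondProtocol}. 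On the right,
\[
\EdgeDist_{\CondPro{\cpi_u}{\Restrict{\Meas}{u}}}(v,vb) \;=\; \EdgeDist_{\cpi_u}(v,vb)\cdot \frac{1-\eex{\LDist{(\cpi_u)_{vb}}}{\Restrict{\Meas}{u}}}{1-\eex{\LDist{(\cpi_u)_v}}{\Restrict{\Meas}{u}}}.
\]
Applying the commutation identity from the previous paragraph to both numerator and denominator, and using $\EdgeDist_{\cpi_u}(v,vb)=\EdgeDist_\cpi(v,vb)$, the two expressions coincide, completing the proof. The only subtle point — and the main obstacle to write cleanly — is the bookkeeping for the degenerate cases where an expected value equals $1$ or a node has zero visiting probability; the identification is driven by the convention that $\Restrict{\cpi}{u}=\perp$ when $\VerticesDist_\cpi(u)=0$ together with \cref{prop:CondPro}(2).
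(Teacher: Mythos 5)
Your proof is correct and is simply the unpacking of the paper's one-line argument (``Immediately follows from \cref{def:CondProtocol}''): the node-by-node edge-distribution computation via $(\cpi_u)_v=\cpi_v$ and $\eex{\LDist{(\cpi_u)_v}}{\Restrict{\Meas}{u}}=\eex{\LDist{\cpi_v}}{\Meas}$, plus the $\perp$/degenerate-case bookkeeping, are exactly the details the paper elides. There is no gap and you have not taken a different route.
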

\begin{proof}
Immediately follows from \cref{def:CondProtocol}.
\end{proof}

\cref{prop:CondProLocal} helps us to apply induction on conditional protocols. Specifically, we use it to prove the following lemma, which relates the (dominated measure)-conditional protocol to the optimal (valid) attack.
\begin{lemma}\label{claim:SwitchRoles}
Let $\cpi=\HonAHonB$ be a protocol with $\Val(\cpi)<1$. Then $\BestB{\CondPro{\cpi}{\ADomMeas{\cpi}}}=1$.
\end{lemma}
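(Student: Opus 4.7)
My plan is to prove $\BestB{\CondPro{\cpi}{\ADomMeas{\cpi}}}=1$ by induction on $\round(\cpi)$, exhibiting an explicit valid $\Bc'$ whose interaction with honest $\HonA$ in the conditional protocol always outputs $0$. The base case $\round(\cpi)=0$ is immediate: $\Val(\cpi)<1$ forces the unique leaf to have color $0$, so $\ADomMeas{\cpi}\equiv 0$, the conditional equals $\cpi$, and \cref{lemma:perfectAdv} gives $\BestB{\cpi}=1$.

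For the inductive step set $\alpha_b := \Exp_{\LDist{\cpi_b}}[\ADomMeas{\cpi_b}]=1-\BestB{\cpi_b}$ by \cref{lemma:ExpDomMeas}. I will first dispose of the degenerate case $\EdgeDist_\cpi(\EmptyString,b)=1$ for some $b$: by \cref{def:Dominated} the conditional protocol collapses to $\CondPro{\cpi_b}{\ADomMeas{\cpi_b}}$ (using \cref{prop:CondProLocal}), and induction applies since $\Val(\cpi_b)=\Val(\cpi)<1$. A preparatory claim, proved by a short side induction on $\round$, is that $\Val(\cpi')=1$ implies $\Exp_{\LDist{\cpi'}}[\ADomMeas{\cpi'}]=1$: when every leaf is a $1$-leaf no valid $\Bc'$ can steer the output to $0$, so $\BestB{\cpi'}=0$ and \cref{lemma:ExpDomMeas} finishes the claim. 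Contrapositively, $\alpha_b<1$ implies $\Val(\cpi_b)<1$, which is precisely the hypothesis the induction needs on a subprotocol.

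When $\HonA$ controls $\Root(\cpi)$, $\HonA$-maximality (\cref{prop:DomMeasPro}(\ref{item:Amaximal})) combined with \cref{prop:CondProLocal} gives $\Restrict{\CondPro{\cpi}{\ADomMeas{\cpi}}}{b}=\CondPro{\cpi_b}{\ADomMeas{\cpi_b}}$. At most one $\alpha_b$ can equal $1$, for otherwise $\Exp_{\LDist{\cpi}}[\ADomMeas{\cpi}]=1$ and $\BestB{\cpi}=0$, contradicting $\BestB{\cpi}\geq 1-\Val(\cpi)>0$ (honest $\Bc$ is valid). The $\alpha_b=1$ edge, if any, receives weight $0$ in the conditional (directly from its definition), while on each $\alpha_b<1$ subtree induction yields $\BestB=1$. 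The $\HonA$-averaging rule in the $\BestB$-analog of \cref{fact:perfectAdv} then delivers $\BestB{\CondPro{\cpi}{\ADomMeas{\cpi}}}=1$.

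When $\HonB$ controls $\Root(\cpi)$, assume WLOG $\alpha_0\leq\alpha_1$; then $\HonB$-minimality and $\HonB$-immunity (\cref{prop:DomMeasPro}(\ref{item:Bminimal},\ref{item:Bimmune})) give $\Restrict{\ADomMeas{\cpi}}{0}\equiv\ADomMeas{\cpi_0}$ and $\Exp_{\LDist{\cpi}}[\ADomMeas{\cpi}]=\alpha_0$. Since $\BestB{\cpi}\geq 1-\Val(\cpi)>0$ we get $\alpha_0<1$, hence $\Val(\cpi_0)<1$ by the preparatory claim, and induction gives $\BestB{\CondPro{\cpi_0}{\ADomMeas{\cpi_0}}}=1$; moreover the root edge $(\EmptyString,0)$ receives weight $\EdgeDist_\cpi(\EmptyString,0)\cdot(1-\alpha_0)/(1-\alpha_0)>0$ in the conditional. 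The strategy \emph{send $0$ at the root, then play the inductive optimal valid attacker of $\CondPro{\cpi_0}{\ADomMeas{\cpi_0}}$} is therefore a valid $\Bc'$ achieving $\Val(\HonA,\Bc')=0$, completing the induction. The main obstacle I anticipate is the preparatory implication $\Val(\cpi')=1\Rightarrow\Exp[\ADomMeas{\cpi'}]=1$: this is the single bridge that makes the induction hypothesis available on the relevant subprotocol in the $\HonB$-controlled branch, and it is the step that really leverages the binary-outcome structure.
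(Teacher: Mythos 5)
Your proof is correct and follows essentially the same inductive route as the paper, including the same case split on who controls the root, the same use of \cref{prop:DomMeasPro} and \cref{prop:CondProLocal}, and the same bridging implication ($\Val(\cpi')=1\Rightarrow\eex{\LDist{\cpi'}}{\ADomMeas{\cpi'}}=1$, used by the paper implicitly when it notes $\Val(\cpi_1)=1$ forces $\eex{\LDist{\cpi_1}}{\ADomMeas{\cpi_1}}=1$). The only difference is cosmetic: the paper's $\HonA$-controlled case is organized as a case analysis on $\Val(\cpi_0),\Val(\cpi_1)$, while you phrase it via ``at most one $\alpha_b=1$,'' which is equivalent.
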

This lemma justifies yet again the name of the $\HonA$-dominated measure. Not only that this measure give a precise quantity to the advantage of the optimal attacker when taking the role of $\HonA$ over the one taking the role of $\HonB$ (\cref{lemma:ExpDomMeas}), but when we condition on not ``hitting'' this measure, the optimal attacker  taking the role of $\HonA$ no longer wins with probability one.

As an intuitive explanation, assume that $\BestA{\CondPro{\cpi}{\ADomMeas{\cpi}}}=1$. By \cref{lemma:perfectAdv}, it holds that $\BestB{\CondPro{\cpi}{\ADomMeas{\cpi}}}<1$, and so there exists an $\HonA$-dominated measure $M$ in the conditional protocol $\CondPro{\cpi}{\ADomMeas{\cpi}}$. Let the measure $M'$  be the ``union'' of $\ADomMeas{\cpi}$ and $M$. It holds that $M'$ (like $\ADomMeas{\cpi}$ itself) is $\HonA$-maximal, $\HonB$-minimal and $\HonB$-immune in $\cpi$, and that $\eex{\LDist{\cpi}}{M'}>\eex{\LDist{\cpi}}{\ADomMeas{\cpi}}$. Following similar arguments to those in the proof of \cref{lemma:ExpDomMeas}, it  also holds that $\eex{\LDist{\cpi}}{M'} = 1-\BestB{\cpi}$. But \cref{lemma:ExpDomMeas} already showed that $1-\BestB{\cpi} = \eex{\LDist{\cpi}}{\ADomMeas{\cpi}}$, a contradiction (in essence, \cref{lemma:ExpDomMeas} shows that $\ADomMeas{\cpi}$ is the ``only'' $\HonA$-maximal, $\HonB$-minimal and $\HonB$-immune measure in $\cpi$). 

\begin{proof}[Proof of \cref{claim:SwitchRoles}]
First, we note that \cref{fact:ValueExp} yields that $\eex{\LDist\cpi}{\ADomMeas{\cpi}}\leq \Val(\cpi)<1$, and hence $\CondPro{\cpi}{\ADomMeas{\cpi}}\neq \perp$ (\ie is a protocol).
The rest of the proof is by induction on the round complexity of $\cpi$.

Assume that $\round(\cpi) = 0$ and let $\ell$ be the only node in $\Tree(\cpi)$. Since it is assumed that $\Val(\cpi)<1$, it must be the case that $\Color_\cpi(\ell)=0$. The proof follows since $\ADomMeas{\cpi}(\ell)=0$, and thus $\CondPro{\cpi}{\ADomMeas{\cpi}}=\cpi$, and since $\BestB{\cpi}=1$.

Assume the lemma holds for $\rnd$-round protocols and that $\round(\cpi) = \rnd+1$. If $\EdgeDist_{\cpi}(\EmptyString,b)=1$ for some $b\in\zo$, \cref{def:Dominated} yields that $\Restrict{\ADomMeas{\cpi}}{b} = \ADomMeas{\cpi_b}$. Moreover, \cref{def:CondProtocol} yields that $\EdgeDist_{\paren{\CondPro{\cpi}{\ADomMeas{\cpi}}}}(\EmptyString,b)=1$. It holds that
\begin{align}\label{eq:SwitchRoles1}
\BestB{\CondPro{\cpi}{\ADomMeas{\cpi}}} &\overset{(1)}{=} \BestB{\Restrict{\CondPro{\cpi}{\ADomMeas{\cpi}}}{b}} \\
&\overset{(2)}{=} \BestB{\CondPro{\cpi_b}{\Restrict{\ADomMeas{\cpi}}{b}}} \nonumber\\
&= \BestB{\CondPro{\cpi_b}{\ADomMeas{\cpi_b}}} \nonumber\\
&\overset{(3)}{=} 1, \nonumber
\end{align}
where $(1)$ follows from \cref{fact:perfectAdv}, $(2)$ follows from \cref{prop:CondProLocal}, and $(3)$ follows from the induction hypothesis.

In the complementary case, \ie  $\EdgeDist_{\cpi}(\EmptyString,b)\notin\zo$ for both $b\in\zo$, the proof splits according to who controls the root of $\cpi$.

\paragraph{$\HonA$ controls $\Root(\cpi)$.} The assumption that $\Val(\cpi)<1$ dictates that $\Val(\cpi_0)<1$ or $\Val(\cpi_1)<1$. Consider the following complimentary cases.
\begin{description}
  \item[$\Val(\cpi_0),\Val(\cpi_1)<1$:] \cref{fact:perfectAdv} yields that
  \begin{align*}
\lefteqn{\BestB{\CondPro{\cpi}{\ADomMeas{\cpi}}}}\\
&\overset{(1)}{=} \EdgeDist_{\paren{\CondPro{\cpi}{\ADomMeas{\cpi}}}}(\EmptyString,0)\cdot \BestB{\Restrict{\CondPro{\cpi}{\ADomMeas{\cpi}}}{0}} + \EdgeDist_{\paren{\CondPro{\cpi}{\ADomMeas{\cpi}}}}(\EmptyString,1)\cdot \BestB{\Restrict{\CondPro{\cpi}{\ADomMeas{\cpi}}}{1}} \\
&\overset{(2)}{=} \EdgeDist_{\paren{\CondPro{\cpi}{\ADomMeas{\cpi}}}}(\EmptyString,0)\cdot \BestB{\CondPro{\cpi_0}{\Restrict{\ADomMeas{\cpi}}{0}}} + \EdgeDist_{\paren{\CondPro{\cpi}{\ADomMeas{\cpi}}}}(\EmptyString,1)\cdot \BestB{\CondPro{\cpi_1}{\Restrict{\ADomMeas{\cpi}}{1}}} \\
    &\overset{(3)}{=} \EdgeDist_{\paren{\CondPro{\cpi}{\ADomMeas{\cpi}}}}(\EmptyString,0)\cdot \BestB{\CondPro{\cpi_0}{\ADomMeas{\cpi_0}}} + \EdgeDist_{\paren{\CondPro{\cpi}{\ADomMeas{\cpi}}}}(\EmptyString,1)\cdot \BestB{\CondPro{\cpi_1}{\ADomMeas{\cpi_1}}} \\
    &\overset{(4)}{=} 1,
  \end{align*}
  where (1) follows from \cref{fact:perfectAdv}, (2) from \cref{prop:CondProLocal}, (3) follows from by the $\HonA$-maximal property of $\ADomMeas{\cpi}$ (\cref{prop:DomMeasPro}(\ref{item:Amaximal})), and (4) follows from the induction hypothesis.

  \item[$\Val(\cpi_0)<1$, $\Val(\cpi_1)=1$:] By \cref{def:CondProtocol}, it holds that
  \begin{align*}
  \EdgeDist_{\paren{\CondPro{\cpi}{\ADomMeas{\cpi}}}}(\EmptyString,1) &= \EdgeDist_\cpi(\EmptyString,1)\cdot \frac{1-\eex{\LDist{\cpi_1}}{\Restrict{\ADomMeas{\cpi}}{1}}}{1-\eex{\LDist{\cpi}}{\ADomMeas{\cpi}}} \\
&\overset{(1)}{=} \EdgeDist_\cpi(\EmptyString,1)\cdot \frac{1-\eex{\LDist{\cpi_1}}{\ADomMeas{\cpi_1}}}{1-\eex{\LDist{\cpi}}{\ADomMeas{\cpi}}} \\
&\overset{(2)}{=}0,
  \end{align*}
where the (1) follows from the $\HonA$-maximal property of $\ADomMeas{\cpi}$, and (2) follows since $\Val(\cpi_1)=1$, which yields that $\eex{\LDist{\cpi_1}}{\ADomMeas{\cpi_1}}=1$. Since $\CondPro{\cpi}{\ADomMeas{\cpi}}$ is a protocol (\cref{prop:CondPro}), it holds that $\EdgeDist_{\paren{\CondPro{\cpi}{\ADomMeas{\cpi}}}}(\EmptyString,0) = 1$. The proof now follows from \cref{eq:SwitchRoles1}.

  \item[$\Val(\cpi_0)=1$, $\Val(\cpi_1)<1$:] The proof in analogous to the previous case.
\end{description}

\paragraph{$\HonB$ controls $\Root(\cpi)$.} Assume for simplicity that $\Smaller{\cpi}{0}=1$, namely that $\eex{\LDist{\cpi_0}}{\ADomMeas{\cpi_0}} \leq \eex{\LDist{\cpi_1}}{\ADomMeas{\cpi_1}}$ (the other case is analogous). It must hold that $\Val(\cpi_0)<1$ (otherwise, it holds that $\eex{\LDist{\cpi_0}}{\ADomMeas{\cpi_0}}=\eex{\LDist{\cpi_1}}{\ADomMeas{\cpi_1}}=1$, which yields that $\Val(\cpi_1)=1$, and thus $\Val(\cpi)=1$). Hence, $\eex{\LDist{\cpi_0}}{\ADomMeas{\cpi_0}}<1$, and \cref{def:CondProtocol} yields that $\EdgeDist_{\paren{\CondPro{\cpi}{\ADomMeas{\cpi}}}}(\EmptyString,0) > 0$. By \cref{fact:perfectAdv}, it holds that
\begin{align*}
\BestB{\CondPro{\cpi}{\ADomMeas{\cpi}}} &\geq \BestB{\Restrict{\CondPro{\cpi}{\ADomMeas{\cpi}}}{0}} \\
&\overset{(1)}{=} \BestB{\CondPro{\cpi_0}{\Restrict{\ADomMeas{\cpi}}{0}}} \\
&\overset{(2)}{=} \BestB{\CondPro{\cpi_0}{\ADomMeas{\cpi_0}}} \\
&\overset{(3)}{=} 1,
\end{align*}
where (1) follows from \cref{prop:CondProLocal}, (2) follows from the $\HonB$-minimal property of $\ADomMeas{\cpi}$ (\cref{prop:DomMeasPro}(\ref{item:Bminimal})), and (3) follows from the induction hypothesis.
\end{proof}

Let $\cpi=\HonAHonB$ be a protocol in which an optimal adversary playing the role of $\HonA$ biases the outcome towards one with probability one. \cref{claim:SwitchRoles} shows that in the conditional protocol $\cpi_{(\HonB,0)} \eqdef \CondPro{\cpi}{\ADomMeas{\cpi}}$, an optimal adversary playing the role of $\HonB$ can bias the outcome towards zero with probability one. Repeating this procedure \wrt $\cpi_{(\HonB,0)}$ results in the protocol  $\cpi_{(\HonA,1)} \eqdef \CondPro{\cpi_{(\HonB,0)}}{\ADomMeas{\cpi_{(\HonB,0)}}}$, in which again an optimal adversary playing the role of $\HonA$ can bias the outcome towards one with probability one. This procedure is stated formally in \cref{def:mDMSequence}.

\subsubsection{Sequence of Dominated Measures}
Given a protocol $\HonAHonB$, order the pairs $\set{(\HonC,j)}_{(\HonC,j) \in \set{\HonA,\HonB} \times \N}$ according to the sequence $(\HonA,0),(\HonB,0),(\HonA,1),(\HonB,1)$ and so on.

\begin{notation}
Let $\HonAHonB$ be a protocol. For $j\in \Z$ let $\pred(\HonA,j) = (\HonB,j-1)$ and $\pred(\HonB,j) = (\HonA,j)$, and let $\succe$ be the inverse operation of $\pred$ (\ie  $\succe(\pred(\HonC,j))=(\HonC,j)$). For pairs $(\HonC,j), (\HonC',j') \in \set{\HonA,\HonB} \times \Z$, we write
\begin{itemize}
  \item $(\HonC,j)$ is {\sf less than or equal to} $(\HonC',j')$ , denoted $(\HonC,j) \preceq (\HonC',j')$, if $\exists \set{(\HonC_1,j_1),\ldots,(\HonC_n,j_n)}$ such that $(\HonC,j)=(\HonC_1,j_1)$, $(\HonC',j')=(\HonC_n,j_n)$ and $(\HonC_i,j_i)=\pred(\HonC_{i+1},j_{i+1})$ for any $i\in[n-1]$.

  \item $(\HonC,j)$ is {\sf less than} $(\HonC',j')$, denoted $(\HonC,j) \prec (\HonC',j')$, if $(\HonC,j) \preceq (\HonC',j')$ and $(\HonC,j) \neq (\HonC',j')$.
\end{itemize}
 Finally, for $(\HonC,j)\succeq(\HonA,0)$,  let $[(\HonC,j)]\eqdef \set{(\HonC',j') \colon (\HonA,0) \preceq (\HonC',j') \preceq (\HonC,j)}$.

\end{notation}

\begin{definition}(dominated measures sequence)\label{def:mDMSequence}
For a protocol $\cpi=\HonAHonB$ and $(\HonC,j)\in\set{\HonA,\HonB} \times \N$, the protocol $\cpi_{(\HonC,j)}$ is defined by
\begin{align*}
 \cpi_{(\HonC,j)} =\left\{
  \begin{array}{ll}
    \cpi, & (\HonC,j)=(\HonA,0); \\
    \CondProP{\cpi_{(\HonC',j') = \pred(\HonC,j)}}{\DomMeas{\cpi_{(\HonC',j')}}{\HonC'}}, & \hbox{otherwise.\footnotemark}
  \end{array}
\right.
\end{align*}
\footnotetext{Note that if $\eex{\LDist{\icpi{(\HonC,j)}{}}}{\DomMeas{\icpi{(\HonC,j)}{}}{\HonC}}=1$, \cref{def:CondProtocol} yields that $\icpi{\succe(\HonC,j)}{}=\perp$. In fact, since we defined $\CondPro{\perp}{\Meas}=\perp$ for any measure $\Meas$ (also in \cref{def:CondProtocol}), it follows that $\icpi{(\HonC',j')}{}=\perp$ for any $(\HonC',j')\succ (\HonC,j)$.}

Define the {\sf $(\HonC,j)$ dominated measures sequence of $\cpi$, denoted $(\HonC,j)$-$\DMS{\cpi}$}, by $\set{\DomMeas{\cpi_{(\HonC',j')}}{\HonC'}}_{(\HonC',j')\in [(\HonC,j)]}$. Finally, for $\dmsi\in\N$, let $\CombineMeas{\cpi}{\HonC}{\dmsi} \equiv \sum_{j=0}^\dmsi \DomMeas{\icpi{(\HonC,j)}{}}{\HonC}\prod_{t=0}^{j-1}\paren{1-\DomMeas{\icpi{(\HonC,t)}{}}{\HonC}}$.
\end{definition}

We show that $\CombineMeas{\cpi}{\HonA}{\dmsi}$ is a measure (\ie its range is $[0,1]$) and that its support is a subset of the $1$-leaves of $\cpi$. We also give an explicit expression for its expected value (analogous to the expected value of $\ADomMeas{\cpi}$ given in \cref{lemma:ExpDomMeas}).
\begin{lemma}\label{lemma:propCombineMeas}
Let $\cpi=\HonAHonB$ be a protocol, let $\dmsi\in\N$, and let $\CombineMeas{\cpi}{\HonA}{\dmsi}$ be as in \cref{def:mDMSequence}. It holds that
\begin{enumerate}
  \item\label{item:CombineMeas1}  $\CombineMeas{\cpi}{\HonA}{\dmsi}$ is a measure over $\Leaves_1(\cpi)$:

  \begin{enumerate}
    \item $\CombineMeas{\cpi}{\HonA}{\dmsi}(\ell)\in[0,1]$ for every $\ell\in\Leaves(\cpi)$, and
    \item $\Supp\paren{\CombineMeas{\cpi}{\HonA}{\dmsi}}\subseteq\Leaves_1(\cpi)$.
  \end{enumerate}

  \item\label{item:CombineMeas2} $\eex{\LDist{\cpi}}{\CombineMeas{\cpi}{\HonA}{\dmsi}}=\sum_{j=0}^z \alpha_j\cdot \prod_{t=0}^{j-1}(1-\beta_t)(1-\alpha_t)$, where $\alpha_j=1-\BestB{\cpi_{(\HonA,j)}}$, $\beta_j=1-\BestA{\cpi_{(\HonB,j)}}$ and $\BestA{\perp}=\BestB{\perp}=1$.
\end{enumerate}
\end{lemma}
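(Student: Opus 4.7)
The plan is to reduce both items to routine calculations using two core tools stated earlier: \cref{prop:CondPro} (leaf distribution of conditional protocols) and \cref{lemma:ExpDomMeas} (the expected value of the dominated measure equals $1-\BestB{\cdot}$), together with a single preliminary observation. The observation is that, for any protocol $\cpi'$, the $\HonA$-dominated measure $\DomMeas{\cpi'}{\HonA}$ is supported on $\Leaves_1(\cpi')$, and symmetrically $\DomMeas{\cpi'}{\HonB}$ is supported on $\Leaves_0(\cpi')$. Both follow by straightforward induction on the round complexity of $\cpi'$: in the base case \cref{def:Dominated} sets the measure equal to the color (resp.\ $1$ minus the color), and each recursive case merely rescales the child's dominated measure by a non-negative factor.

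For item~1, write $\mu_j = \DomMeas{\cpi_{(\HonA,j)}}{\HonA}$. I will prove the telescoping identity
$$\CombineMeas{\cpi}{\HonA}{\dmsi}(\ell) \;=\; \sum_{j=0}^{\dmsi} \mu_j(\ell) \prod_{t=0}^{j-1}(1-\mu_t(\ell)) \;=\; 1 - \prod_{j=0}^{\dmsi}(1-\mu_j(\ell))$$
by induction on $\dmsi$ (the inductive step just expands $1 - P_{\dmsi}(1-\mu_{\dmsi+1}(\ell))$). Since each $\mu_j(\ell) \in [0,1]$, the right-hand side lies in $[0,1]$, giving item~1(a). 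For item~1(b), all conditional protocols share the color function with $\cpi$ (see \cref{def:CondProtocol}), so $\Leaves_0(\cpi_{(\HonA,j)}) = \Leaves_0(\cpi)$, and the preliminary observation yields $\mu_j(\ell) = 0$ for every $\ell \in \Leaves_0(\cpi)$, whence $\CombineMeas{\cpi}{\HonA}{\dmsi}$ vanishes on $\Leaves_0(\cpi)$.

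For item~2, set $\nu_j = \DomMeas{\cpi_{(\HonB,j)}}{\HonB}$. Iterating \cref{prop:CondPro} along the alternating sequence $\cpi_{(\HonA,0)},\cpi_{(\HonB,0)},\cpi_{(\HonA,1)},\ldots$, and using \cref{lemma:ExpDomMeas} (and its $\HonB$ analog) to identify the normalizing denominators as $1-\alpha_t$ and $1-\beta_t$ respectively, yields
$$\VerticesDist_{\cpi_{(\HonA,j)}}(\ell) \;=\; \VerticesDist_{\cpi}(\ell) \cdot \frac{\prod_{t=0}^{j-1}(1-\mu_t(\ell))(1-\nu_t(\ell))}{\prod_{t=0}^{j-1}(1-\alpha_t)(1-\beta_t)}.$$
The key point is that $\nu_t$ is supported on $\Leaves_0(\cpi)$ while $\mu_j \prod_{t<j}(1-\mu_t)$ is supported on $\Leaves_1(\cpi)$ (by the preliminary observation for both $\HonA$- and $\HonB$-dominated measures); hence each $(1-\nu_t(\ell))$ equals $1$ throughout the relevant support and can be inserted freely, giving
\begin{align*}
\eex{\LDist{\cpi}}{\mu_j \prod_{t<j}(1-\mu_t)}
&= \eex{\LDist{\cpi}}{\mu_j \prod_{t<j}(1-\mu_t)(1-\nu_t)} \\
&= \eex{\LDist{\cpi_{(\HonA,j)}}}{\mu_j} \cdot \prod_{t<j}(1-\alpha_t)(1-\beta_t) \\
&= \alpha_j \prod_{t<j}(1-\alpha_t)(1-\beta_t),
\end{align*}
where the last step is \cref{lemma:ExpDomMeas} applied to $\cpi_{(\HonA,j)}$. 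Summing over $j$ from $0$ to $\dmsi$ gives the claimed formula.

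I expect the main nuisance to be bookkeeping for truncated sequences, i.e., indices where $\cpi_{(\HonC,j)} = \perp$ (equivalently, where the previous dominated measure has expected value $1$, so \cref{prop:CondPro} does not literally apply). The conventions $\DomMeas{\perp}{\HonC} \equiv 0$, $\BestA{\perp}=\BestB{\perp}=1$, and $\eex{\LDist{\perp}}{f}=0$ are tailored to keep both sides of the target identity consistent once the sequence collapses, but one must check that the iteration formula above and the application of \cref{lemma:ExpDomMeas} at index $j$ both degrade gracefully; this is a short case analysis, branching on the smallest $j^\ast$ for which $\cpi_{(\HonA,j^\ast)}$ or $\cpi_{(\HonB,j^\ast)}$ equals $\perp$.
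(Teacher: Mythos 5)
Your proof is correct, and while the high-level structure matches the paper's, item~1 uses a cleaner device. For the bound $\CombineMeas{\cpi}{\HonA}{\dmsi}(\ell)\leq 1$, the paper expands the sum $\sum_{j}\mu_j\prod_{t<j}(1-\mu_t)$ via inclusion-exclusion over subsets of $(\dmsi-1)$ and then collapses it; your telescoping identity $\sum_{j=0}^{\dmsi}\mu_j(\ell)\prod_{t<j}(1-\mu_t(\ell))=1-\prod_{j=0}^{\dmsi}(1-\mu_j(\ell))$ gets there in one line by induction, and it yields nonnegativity as a bonus. The paper and you both use the same preliminary observation that $\HonA$-dominated measures are supported on $\Leaves_1$ and $\HonB$-dominated measures on $\Leaves_0$, and both deduce item~1(b) from the fact that all conditional protocols share the color function.

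For item~2 the arguments are essentially the same in content but differ in presentation: both derive the iterated \cref{prop:CondPro} formula relating $\VerticesDist_{\cpi_{(\HonA,j)}}$ and $\VerticesDist_{\cpi}$ with the product of $(1-\mu_t)(1-\nu_t)$ factors, and both exploit that $\nu_t$ vanishes on $\Leaves_1(\cpi)$ to collapse the $(1-\nu_t)$ factors. The paper rewrites $\VerticesDist_{\cpi}(\ell)$ as a quotient involving $\VerticesDist_{\cpi_{(\HonA,j)}}(\ell)$ and therefore restricts the sum to $\Supp(\cpi_{(\HonA,j)})\cap\Leaves_1(\cpi)$ before proceeding; you instead insert the $(1-\nu_t)$ factors directly into the expectation over $\LDist{\cpi}$ and push the substitution forward, which sidesteps division by values that could be zero. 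Either is fine; your version saves a small case distinction. The $\cpi_{(\HonC,j)}=\perp$ degenerate case you flag is indeed handled exactly as you anticipate — the paper notes that when $\cpi_{(\HonA,j)}=\perp$ the $j$-th dominated measure is identically zero so both sides of the per-term identity vanish, and the iterated leaf-distribution formula is only invoked when all denominators $1-\alpha_t,1-\beta_t$ are positive.
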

\begin{proof}
We prove the above two items separately.
\begin{description}
\item[Proof of \cref{item:CombineMeas1}.] Let $\ell \in \Leaves_0(\cpi)$. Since $\ADomMeas{\icpi{(\HonA,j)}{}}(\ell)=0$ for every $j\in(\dmsi)$, it holds that $\CombineMeas{\cpi}{\HonA}{\dmsi}(\ell)=0$. Let $\ell\in\Leaves_1(\cpi)$. Since $\CombineMeas{\cpi}{\HonA}{\dmsi}(\ell)$ is a sum of non-negative numbers, it follows that its value is non-negative. It is left to argue that $\CombineMeas{\cpi}{\HonA}{\dmsi}(\ell)\leq 1$. Since $\ADomMeas{\icpi{(\HonA,\dmsi)}{}}$ is a measure, note that $\ADomMeas{\icpi{(\HonA,\dmsi)}{}}(\ell) \leq 1$. Thus
\begin{align*}
\CombineMeas{\cpi}{\HonA}{\dmsi}(\ell)  &= \sum_{j=0}^{\dmsi}\ADomMeas{\icpi{(\HonA,j)}{}}(\ell) \cdot\prod_{t=0}^{j-1}\paren{1-\ADomMeas{\icpi{(\HonA,t)}{}}(\ell)}\\
&\leq \prod_{t=0}^{\dmsi-1}\paren{1-\ADomMeas{\icpi{(\HonA,t)}{}}(\ell)} + \sum_{j=0}^{\dmsi-1}\ADomMeas{\icpi{(\HonA,j)}{}}(\ell) \cdot\prod_{t=0}^{j-1}\paren{1-\ADomMeas{\icpi{(\HonA,t)}{}}(\ell)} \\
&= \paren{\sum_{\I\subseteq (\dmsi-1)}(-1)^{\size{\I}}\cdot\prod_{t\in\I}\ADomMeas{\icpi{(\HonA,t)}{}}(\ell)} + \sum_{j=0}^{\dmsi-1}\ADomMeas{\icpi{(\HonA,j)}{}}(\ell) \cdot \paren{\sum_{\I\subseteq (j-1)}(-1)^{\size{\I}}\cdot\prod_{t\in\I}\ADomMeas{\icpi{(\HonA,t)}{}}(\ell)} \\
&= \paren{\sum_{\I\subseteq (\dmsi-1)}(-1)^{\size{\I}}\cdot\prod_{t\in\I}\ADomMeas{\icpi{(\HonA,t)}{}}(\ell)} + \paren{\sum_{\emptyset\neq\I\subseteq (\dmsi-1)}(-1)^{\size{\I}+1}\cdot\prod_{t\in\I}\ADomMeas{\icpi{(\HonA,t)}{}}(\ell)} \\
&= 1.
\end{align*}

\item[Proof of \cref{item:CombineMeas2}.] By linearity of expectation, it suffices to prove that
\begin{align}\label{eq:PropCombine}
\eex{\LDist{\cpi}}{\ADomMeas{\cpi_{(\HonA,j)}}\cdot\prod_{t=0}^{j-1}\paren{1-\ADomMeas{\cpi_{(\HonA,t)}}}} = \alpha_j\cdot \prod_{t=0}^{j-1}(1-\beta_t)(1-\alpha_t)
\end{align}
for any $j\in(z)$. Fix $j\in(z)$.  If $\cpi_{(\HonA,j)}=\perp$, then by \cref{def:Dominated} it holds that $\ADomMeas{\cpi_{(\HonA,j)}}$ is the zero measure, and both sides of \cref{eq:PropCombine} equal $0$.

In the following we assume that $\cpi_{(\HonA,j)}\neq\perp$. We first note that  $\eex{\LDist{\cpi_{(\HonC,t)}}}{\DomMeas{\cpi_{(\HonC,t)}}{\HonC}}<1$ for any $(\HonC,t)\in[\pred(\HonA,j)]$ (otherwise, it must be that $\cpi_{(\HonA,j)}=\perp$). Thus, \cref{lemma:ExpDomMeas} yields that $\alpha_t,\beta_t<1$ for every $t\in(j-1)$. Hence, recursively applying \cref{prop:CondPro}(2) yields that
\begin{align}\label{eq:propCombineMeas1}
\VerticesDist_{\paren{\cpi_{(\HonA,j)}}}(\ell) = \VerticesDist_{\cpi}(\ell)\cdot \prod_{t=0}^{j-1}\frac{1-\ADomMeas{\cpi_{(\HonA,t)}}(\ell)}{1-\alpha_t}\cdot \frac{1-\BDomMeas{\cpi_{(\HonB,t)}}(\ell)}{1-\beta_t}
\end{align}
for every $\ell\in\Leaves(\cpi)$. Moreover, for $\ell\in\Supp\paren{\cpi_{(\HonA,j)}}$, \ie $\VerticesDist_{\paren{\cpi_{(\HonA,j)}}}(\ell) > 0$, we can manipulate \cref{eq:propCombineMeas1} to get that
\begin{align}\label{eq:propCombineMeas2}
\VerticesDist_{\cpi}(\ell) = \VerticesDist_{\paren{\cpi_{(\HonA,j)}}}(\ell) \cdot \prod_{t=0}^{j-1} \frac{1-\alpha_t}{1-\ADomMeas{\cpi_{(\HonA,t)}}(\ell)}\cdot \frac{1-\beta_t}{1-\BDomMeas{\cpi_{(\HonB,t)}}(\ell)}
\end{align}
for every $\ell\in\Supp\paren{\cpi_{(\HonA,j)}}$.

It follows that
\begin{align*}
\lefteqn{\eex{\LDist{\cpi}}{\ADomMeas{\cpi_{(\HonA,j)}} \cdot \prod_{t=0}^{j-1}\paren{1-\ADomMeas{\cpi_{(\HonA,t)}}}}}\\
&= \sum_{\ell\in\Leaves(\cpi)} \VerticesDist_{\cpi}(\ell)\cdot \paren{\ADomMeas{\cpi_{(\HonA,j)}}(\ell)\cdot\prod_{t=0}^{j-1}\paren{1-\ADomMeas{\cpi_{(\HonA,t)}}(\ell)}} \\
&\overset{(1)}{=} \sum_{\ell\in\Supp\paren{\cpi_{(\HonA,j)}}\cap \Leaves_1(\cpi)} \VerticesDist_{\cpi}(\ell)\cdot \paren{\ADomMeas{\cpi_{(\HonA,j)}}(\ell)\cdot\prod_{t=0}^{j-1}\paren{1-\ADomMeas{\cpi_{(\HonA,t)}}(\ell)}} \\
&\overset{(2)}{=} \sum_{\ell\in\Supp\paren{\cpi_{(\HonA,j)}}\cap \Leaves_1(\cpi)} \VerticesDist_{\paren{\cpi_{(\HonA,j)}}}(\ell) \cdot \prod_{t=0}^{j-1} \frac{1-\alpha_t}{1-\ADomMeas{\cpi_{(\HonA,t)}}(\ell)}\cdot \frac{1-\beta_t}{1-\BDomMeas{\cpi_{(\HonB,t)}}(\ell)}\\
&\quad  \cdot\paren{\ADomMeas{\cpi_{(\HonA,j)}}(\ell)\cdot\prod_{t=0}^{j-1}\paren{1-\ADomMeas{\cpi_{(\HonA,t)}}(\ell)}} \\
&\overset{(3)}{=} \sum_{\ell\in\Supp\paren{\cpi_{(\HonA,j)}}\cap \Leaves_1(\cpi)} \VerticesDist_{\paren{\cpi_{(\HonA,j)}}}(\ell) \cdot \ADomMeas{\cpi_{(\HonA,j)}}(\ell) \cdot\prod_{t=0}^{j-1} \paren{1-\alpha_j} \paren{1-\beta_j}\\
&= \alpha_j\cdot\prod_{t=0}^{j-1}(1-\beta_t)(1-\alpha_t),
\end{align*}
concluding the proof. (1) follows since  \cref{def:Dominated} yields that $\ADomMeas{\cpi_{(\HonA,j)}}(\ell)=0$ for any $\ell\notin\Supp\paren{\cpi_{(\HonA,j)}}\cap \Leaves_1(\cpi)$, (2) follows from \cref{eq:propCombineMeas2} and (3) follows since $\BDomMeas{\cpi_{(\HonB,t)}}(\ell)=0$ for every $\ell\in\Leaves_1(\cpi)$ and $t\in(j-1)$. \qedhere
\end{description}
\end{proof}

Using dominated measure sequences, we manage to give an improved bound for the success probability of the recursive biased-continuation attacks (compared to the bound of \cref{lemma:claim1}, which uses a single dominated measure).  The improved analysis yields that a constant number of recursion calls of the biased-continuation attack is successful in biasing the protocol to an arbitrary constant close to either $0$ or $1$.

\subsection{Improved Analysis Using Alternating Dominated Measures}\label{sec:RCAltDomMeas}
We are finally ready to state two main lemmas, whose proofs -- given in the next two sections -- are the main technical contribution of \cref{sec:IdealAttacker}, and then show how to use them to prove \cref{thm:MainIdeal}.

The first lemma is analogous to \cref{lemma:SingleMeasManyIter}, but applied on the sequence of the dominated measures, and not just on a single dominated measure.

\begin{lemma}\label{lemma:IdealMainLemmaSim}
For a protocol $\cpi=\HonAHonB$ with $\Val(\cpi)>0$ and $\dmsi\in\N$, it holds that
\begin{align*}
\Val(\rcA{k},\HonB) \geq \eex{\LDist{\rcA{k},\HonB}}{\CombineMeas{\cpi}{\HonA}{\dmsi}} \geq \frac{\eex{\LDist{\cpi}}{\CombineMeas{\cpi}{\HonA}{\dmsi}}}{\prod_{i=0}^{k-1}\Val(\rcA{i},\HonB)} \cdot \paren{1-\sum_{j=0}^{\dmsi-1}\beta_j}^k
\end{align*}
for every $k\in\N$, where $\beta_j= 1-\BestA{\cpi_{(\HonB,j)}}$, letting $\BestA{\perp}=1$.
\end{lemma}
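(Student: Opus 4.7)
The first inequality is essentially bookkeeping. By \cref{lemma:propCombineMeas}(\ref{item:CombineMeas1}), $\CombineMeas{\cpi}{\HonA}{\dmsi}$ is a measure whose support lies in $\Leaves_1(\cpi)$. Since $\rcA{k}$ is built from the biased continuator, which only re-weights the edges of $\cpi$ and does not touch $\Color_\cpi$, we have $\Color_{(\rcA{k},\HonB)} \equiv \Color_\cpi$, so $\CombineMeas{\cpi}{\HonA}{\dmsi}$ is also a measure on $\Leaves_1(\rcA{k},\HonB)$. Applying \cref{fact:ValueExp} to the protocol $(\rcA{k},\HonB)$ yields $\Val(\rcA{k},\HonB) \geq \eex{\LDist{\rcA{k},\HonB}}{\CombineMeas{\cpi}{\HonA}{\dmsi}}$.

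The plan for the second inequality is to proceed by induction on $k$, reducing it to a single-step claim of the form
\begin{align*}
\eex{\LDist{\rcA{i+1},\HonB}}{\CombineMeas{\cpi}{\HonA}{\dmsi}} \geq \frac{\eex{\LDist{\rcA{i},\HonB}}{\CombineMeas{\cpi}{\HonA}{\dmsi}}}{\Val(\rcA{i},\HonB)}\cdot\paren{1-\sum_{j=0}^{\dmsi-1}\beta_j}.
\end{align*}
The base case $k=0$ is trivial since $\rcA{0}\equiv\HonA$. For the inductive step, I will prove the single-step inequality above and then iterate; telescoping the $k$ factors of $(1-\sum_j\beta_j)$ and the ratio $\Val(\rcA{i},\HonB)$ down the recursion yields the stated bound.

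The single-step inequality will be proved by induction on $\round(\cpi)$, mirroring the structure of the proof of \cref{lemma:claim1} but for the combined measure rather than a single dominated measure. When $\HonA$ controls $\Root(\cpi)$, \cref{prop:DomMeasPro}(\ref{item:Amaximal}) (the $\HonA$-maximal property) guarantees that each $\DomMeas{\icpi{(\HonA,j)}{}}{\HonA}$ restricts neatly onto the two subprotocols, so $\Restrict{\CombineMeas{\cpi}{\HonA}{\dmsi}}{b}\equiv\CombineMeas{\cpi_b}{\HonA}{\dmsi}$ and the argument is identical to the $\HonA$ case of \cref{lemma:claim1}: apply \cref{claim:weights} and \cref{fact:switchOrder} to push the expectation onto the children, invoke the induction hypothesis, and combine by convexity (\cref{lemma:calculus1}). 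When $\HonB$ controls $\Root(\cpi)$, the edge distribution of $\rcA{1}$ coincides with that of $\HonA$ at the root, so the bound reduces to comparing $\eex{\LDist{\cpi_b}}{\CombineMeas{\cpi}{\HonA}{\dmsi}|_b}$ with the appropriate average of $\eex{\LDist{\cpi_b}}{\CombineMeas{\cpi_b}{\HonA}{\dmsi}}$, again finishing by \cref{lemma:calculus1} and the induction hypothesis.

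The main obstacle is the $\HonB$-controlled case. Unlike a single dominated measure, the combined measure $\CombineMeas{\cpi}{\HonA}{\dmsi}$ is not $\HonB$-immune: the restriction $\Restrict{\CombineMeas{\cpi}{\HonA}{\dmsi}}{b}$ is in general neither equal nor proportional to $\CombineMeas{\cpi_b}{\HonA}{\dmsi}$, because each conditional protocol $\icpi{(\HonA,j)}{}$ in the sequence depends on all earlier $\HonB$-dominated measures $\DomMeas{\icpi{(\HonB,t)}{}}{\HonB}$ for $t<j$, whose restriction behavior on $\HonB$-controlled nodes is governed by the $\HonB$-minimal rule and introduces a multiplicative discrepancy. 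The factor $(1-\sum_{j=0}^{\dmsi-1}\beta_j)$ is precisely what absorbs this accumulated mismatch: each $\beta_j=1-\BestA{\icpi{(\HonB,j)}{}}$ quantifies the total mass of $\cpi$'s $\HonB$-dominated measure at level $j$ that the $\HonA$ side cannot prevent, and one shows by telescoping the explicit formula for $\VerticesDist_{\icpi{(\HonA,j)}{}}(\ell)$ from the proof of \cref{lemma:propCombineMeas}(\ref{item:CombineMeas2}) that the total discrepancy after restriction is bounded above by $\sum_j\beta_j$ times the expected value of the measure. Controlling this discrepancy term uniformly, in a manner compatible with the convexity step at the $\HonB$-controlled node, is the single calculation on which the entire proof hinges.
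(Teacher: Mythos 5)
Your treatment of the first inequality is correct. For the second inequality, however, the proposal has two genuine gaps, both tied to one missing idea.

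First, the claim that when $\HonA$ controls $\Root(\cpi)$ one has $\Restrict{\CombineMeas{\cpi}{\HonA}{\dmsi}}{b}\equiv\CombineMeas{\cpi_b}{\HonA}{\dmsi}$, so that "the argument is identical to the $\HonA$ case of \cref{lemma:claim1}", is false. Each $\icpi{(\HonA,j)}{}$ is obtained from $\icpi{(\HonB,j-1)}{}$ by conditioning on the $\HonB$-dominated measure of $\icpi{(\HonB,j-1)}{}$. When $\HonA$ controls the root, that $\HonB$-dominated measure is $\HonA$-minimal at the root, meaning its restriction to the larger subtree is a \emph{rescaled} copy of $\BDomMeas{(\icpi{(\HonB,j-1)}{})_b}$. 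That rescaling factor propagates through \cref{prop:CondProLocal} into the restricted conditional protocol $\Restrict{\icpi{(\HonA,j)}{}}{b}$, which is therefore not $\paren{\cpi_b}_{(\HonA,j)}$ but rather a \emph{submeasure}-conditioned protocol. So even the $\HonA$-controlled case already requires the strengthening you ascribe only to the $\HonB$-controlled case; your plan has no mechanism to carry the induction through there, and the convexity step via \cref{lemma:calculus1} is not the right tool for it (that lemma is used only in the $\HonB$-controlled branch, where the recursive attacker does not alter the edge at the root).

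Second, for the $\HonB$-controlled case you identify the difficulty but then declare that "controlling this discrepancy term ... is the single calculation on which the entire proof hinges" without doing it. That is precisely where the paper's proof introduces its key technical device: the \emph{dominated submeasure sequences} indexed by a vector $\vetas$ of scalars in $[0,1]$ (\cref{def:refinements}), with the child vectors $\veta{0},\veta{1}$ chosen so that restriction of the combined measure to a subtree is exactly the combined measure of that subtree w.r.t.\ the rescaled vector (\cref{def:etaP,claim:RestrictFinal}). The paper then proves a stronger \cref{lemma:IdealMainLemmaDet} directly by depth-induction over all $k$ at once, giving the sharper bound $\frac{\sum_j\alpha_j\prod_{t<j}(1-\beta_t)^{k+1}(1-\alpha_t)}{\prod\Val(\rcA{i},\HonB)}$, from which \cref{lemma:IdealMainLemmaSim} is extracted by crude bounding. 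Your "induct on $k$ via a single-step lemma, then depth-induct" plan does not obviously work even for the single step, since for $i\geq 1$ the measure $\CombineMeas{\cpi}{\HonA}{\dmsi}$ is an \emph{external} measure relative to the attacked protocol $(\rcA{i},\HonB)$, not its own combined dominated measure, and neither a single-dominated-measure argument nor the telescoping you envision applies without the $\vetas$-generalization. Without that generalization — which is the core of the paper's proof — the proposal does not close.
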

The above states that the recursive biased-continuation attacker biases the outcome of the protocol by a bound similar to that given in \cref{lemma:SingleMeasManyIter}, but applied \wrt $\CombineMeas{\cpi}{\HonA}{\dmsi}$, instead of $\ADomMeas{\cpi}$ in \cref{lemma:SingleMeasManyIter}. This is helpful since the expected value of $\CombineMeas{\cpi}{\HonA}{\dmsi}$ is strictly larger than that of $\ADomMeas{\cpi}$. However, since $\CombineMeas{\cpi}{\HonA}{\dmsi}$ is defined \wrt a sequence of conditional protocols, we must ``pay'' the term $\paren{1-\sum_{j=0}^{\dmsi-1}\beta_j}^k$ in order to get this bound in the original protocol.

The following lemma states that \cref{lemma:IdealMainLemmaSim} provides a sufficient bound. Specifically, it shows that if we take a long enough sequence of conditional protocols, the expected value of the measure $\CombineMeas{\cpi}{\HonA}{\dmsi}$ will be sufficiently large, while the payment term mentioned above will be kept sufficiently small.

\begin{lemma}\label{lemma:CombineMeasHitConst}
Let $\cpi=\HonAHonB$ be a protocol. Then for every $c\in (0,\frac12]$ there exists  $\dmsi = \dmsi(c,\cpi)\in\N$ (possibly exponential large) such that:
\begin{enumerate}
  \item\label{item:convergence1} $\eex{\LDist{\cpi}}{\CombineMeas{\cpi}{\HonA}{\dmsi}}\geq c\cdot (1-2c)$ and $\sum_{j=0}^{\dmsi-1}\beta_j < c$; or
  \item\label{item:convergence2} $\eex{\LDist{\cpi}}{\CombineMeas{\cpi}{\HonB}{\dmsi}}\geq c\cdot (1-2c)$ and $\sum_{j=0}^{\dmsi}\alpha_j < c$,
\end{enumerate}
where $\alpha_j= 1-\BestB{\cpi_{(\HonA,j)}}$ and $\beta_j= 1-\BestA{\cpi_{(\HonB,j)}}$.
\end{lemma}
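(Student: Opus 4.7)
The plan is a case analysis on the cumulative sums $S_\alpha(\dmsi):=\sum_{j=0}^{\dmsi}\alpha_j$ and $S_\beta(\dmsi):=\sum_{j=0}^{\dmsi}\beta_j$. The tools I will use are the formula $\eex{\LDist\cpi}{\CombineMeas{\cpi}{\HonA}{\dmsi}}=\sum_{j=0}^{\dmsi}\alpha_j P_j$ from \cref{lemma:propCombineMeas}(\ref{item:CombineMeas2}), where $P_j:=\prod_{t=0}^{j-1}(1-\alpha_t)(1-\beta_t)$, the elementary bound $P_j\geq 1-S_\alpha(j-1)-S_\beta(j-1)$, and the telescoping identity
\[
\eex{\LDist\cpi}{\CombineMeas{\cpi}{\HonA}{\dmsi}}+\eex{\LDist\cpi}{\CombineMeas{\cpi}{\HonB}{\dmsi}}=1-P_{\dmsi+1},
\]
derived from $\alpha_j+(1-\alpha_j)\beta_j=1-(1-\alpha_j)(1-\beta_j)$.

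Set $\dmsi_a:=\min\{\dmsi:S_\alpha(\dmsi)\geq c\}$ and $\dmsi_b:=\min\{\dmsi:S_\beta(\dmsi)\geq c\}$, each possibly $+\infty$. When $\min(\dmsi_a,\dmsi_b)<\infty$, I will take $\dmsi:=\min(\dmsi_a,\dmsi_b)$; assuming WLOG $\dmsi=\dmsi_a$, minimality gives $S_\alpha(j-1),S_\beta(j-1)<c$ for every $j\leq\dmsi$, so $P_j>1-2c$ throughout, and therefore $\eex{\LDist\cpi}{\CombineMeas{\cpi}{\HonA}{\dmsi}}\geq(1-2c)S_\alpha(\dmsi)\geq c(1-2c)$ while $S_\beta(\dmsi-1)<c$, giving condition~(\ref{item:convergence1}). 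An analogous computation for $\dmsi=\dmsi_b\leq\dmsi_a$ yields condition~(\ref{item:convergence2}).

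In the residual case $\dmsi_a=\dmsi_b=+\infty$, both side-conditions $S_\beta(\dmsi-1)<c$ and $S_\alpha(\dmsi)<c$ are automatic, so by the telescoping identity it suffices to produce a finite $\dmsi$ with $P_{\dmsi+1}\leq 1-2c(1-2c)$: averaging then forces at least one of the two expectations to reach $c(1-2c)$. In fact I plan to prove the stronger statement $P_{\dmsi+1}=0$ for some finite (possibly exponentially large) $\dmsi$, via a termination argument. The key lemma is the structural claim \emph{if $\BestA{\cpi'}=1$ for a protocol $\cpi'$, then $\ADomMeas{\cpi'}$ attains value $1$ on some leaf of $\Supp(\cpi')$} (symmetrically for $\BDomMeas$ and $\BestB=1$), which I plan to prove by structural induction on $\round(\cpi')$ following an optimal $\HonA$-attack path: the $\HonA$-maximal clause (\cref{prop:DomMeasPro}(\ref{item:Amaximal})) preserves the value~$1$ through $\HonA$-controlled nodes, while the $\HonB$-minimal clause (\cref{prop:DomMeasPro}(\ref{item:Bminimal})) preserves it through the ``smaller'' subprotocol at an $\HonB$-controlled node, since \cref{def:Dominated} rescales only the \emph{other} subprotocol. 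By \cref{prop:CondPro}(4) each such leaf is removed from $\Supp$ after conditioning; combined with the alternation of ``winner'' guaranteed by iterating \cref{claim:SwitchRoles} and its dual (which forces $\alpha_j+\beta_j>0$ as long as $\cpi_{(\HonA,j)}\neq\perp$), each pair of iterations strictly shrinks $\Supp(\cpi_{(\HonA,j)})$, so the sequence must terminate in at most $\size{\Supp(\cpi)}\leq 2^{\round(\cpi)}$ steps (accounting for the ``possibly exponentially large'' $\dmsi$). At termination some $\alpha_j$ or $\beta_j$ equals $1$, collapsing $P_{\dmsi+1}$ to $0$ from that index on. I expect the structural-induction claim — specifically the case analysis of the $\HonB$-minimal rescaling — to be the main technical point of the argument.
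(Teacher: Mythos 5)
Your proposal is correct and matches the paper's proof in essence: your structural-induction claim is exactly the paper's \cref{claim:DomLeaf}, your termination argument (support shrinkage via \cref{prop:CondPro} and \cref{claim:SwitchRoles}) reproduces the paper's \cref{cor:DomMeasHit1}, and your Case-1 computation is the paper's case analysis. Note that your residual Case~2 is vacuous — once the termination argument supplies a finite $j$ with $\alpha_j=1$ or $\beta_j=1$, one already has $\min(\dmsi_a,\dmsi_b)<\infty$ (indeed the premise $\dmsi_a=\dmsi_b=\infty$ would force $P_{j+1}\geq 1-S_\alpha(j)-S_\beta(j)>1-2c\geq 0$ for every $j$, contradicting $P_{\dmsi+1}=0$), so the telescoping identity, while correct, is unnecessary machinery.
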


To derive \cref{thm:MainIdeal}, we take a sequence of the dominated measures that is long enough so that its accumulated weight will be sufficiently large. Furthermore, the weight of the dominated measures that precede the final dominated measure in the sequence is small (otherwise, we would have taken a shorter sequence), so the parties are ``missing'' these measures with high probability. The formal proof of \cref{thm:MainIdeal} is given next, and the proofs of \cref{lemma:IdealMainLemmaSim,lemma:CombineMeasHitConst} are given in \cref{sec:ProvingIdealMainLemma,sec:DomMeasHitConst} respectively.

\subsubsection{Proving Theorem \ref{thm:MainIdeal}}
\begin{proof}[Proof of \cref{thm:MainIdeal}]
If $\Val(\cpi)=0$, \cref{thm:MainIdeal} trivially holds. Assume that $\Val(\cpi)>0$, let $\dmsi$ be the minimum integer guaranteed by \cref{lemma:CombineMeasHitConst} for $c=\eps/2$, and let $\kappa=\ceil{\frac{\log\paren{\frac{2}{\eps}}}{\log\paren{\frac{1-\eps/2}{1-\eps}}}}\in\widetilde{O}\paren{1/\eps}$.

If $\dmsi$ satisfies \cref{item:convergence1} of \cref{lemma:CombineMeasHitConst}, assume towards a contradiction that $\Val(\ProArc{\kappa})\leq1-\eps$. \cref{lemma:IdealMainLemmaSim} yields that
\begin{align*}
\Val(\rcA{\kappa},\HonB) &\geq \frac{\eex{\LDist{\cpi}}{\CombineMeas{\cpi}{\HonA}{\dmsi}}}{\prod_{i=0}^{\kappa-1}\Val(\rcA{i},\HonB)} \cdot \paren{1-\sum_{j=0}^{\dmsi-1}\beta_j}^\kappa \\
&> \frac{\eps(1-\eps)}{2} \cdot \paren{\frac{1-\eps/2}{1-\eps}}^\kappa \\
&\geq 1-\eps,
\end{align*}
and a contradiction is derived.

If $\dmsi$ satisfies \cref{item:convergence2} of \cref{lemma:CombineMeasHitConst}, an analogous argument to the above yields that $\Val(\HonA,\rcB{\kappa})\leq \eps$.
\end{proof}

\subsection{Proving Lemma \ref{lemma:IdealMainLemmaSim}}\label{sec:ProvingIdealMainLemma}
\subsubsection{Outline}
We would like to follow the proof's outline of \cref{lemma:claim1}, which is a
special case of \cref{lemma:IdealMainLemmaSim} for $k=1$ and $z=0$ (\ie only
a single dominated measure instead of a sequence).

The proof of \cref{lemma:claim1} was done through the following steps: (1) we
applied the induction hypothesis to the sub-protocols $\cpi_0$ and $\cpi_1$ \wrt
their $\HonA$-dominated measures, $\ADomMeas{\cpi_0}$ and $\ADomMeas{\cpi_1}$;
(2) we related, using \cref{prop:DomMeasPro}, $\ADomMeas{\cpi_0}$ and
$\ADomMeas{\cpi_1}$ to $\Restrict{\ADomMeas{\cpi}}{0}$ and
$\Restrict{\ADomMeas{\cpi}}{1}$, were the latter are the restrictions of the
$\HonA$-dominated measure of $\cpi$ to $\cpi_0$ and $\cpi_1$; (3) if $\HonA$
controls the root, then we used the properties of $\rcA{1}$ (specifically, the
way it changes the edges distribution) to complete the proof; (4) if $\HonB$
controls the root, then we used a convexity-type argument to complete the proof.

Lets try to extend the above outline for a sequence of two dominated measures. It will be useful to consider a specific protocol, presented in \cref{fig:OutlineProofPi} (this protocol is an instantiation of the protocol we have been using thus far for the examples). Recall that the $\HonA$-dominated measure of $\cpi=\cpi_{(\HonA,0)}$ assigns $\ADomMeas{\cpi}(00)=1$ (the left-most leaf), $\ADomMeas{\cpi}(10)=1/2$ (the second to the right-most leaf), and zero to the rest of the leaves. Using $\ADomMeas{\cpi}$, we can now compute $\cpi_{(\HonB,0)}$, presented in \cref{fig:OutlineProofCondPi}.  Now, consider the sequence of two dominated measures for $\cpi_1$, presented in \cref{fig:OutlineProofPi2}. The $\HonA$-dominated measure of $\cpi_1=\paren{\cpi_1}_{(\HonA,0)}$ assigns $\ADomMeas{\cpi_1}(10)=1$ and $\ADomMeas{\cpi_1}(11)=0$, and using it we can compute $\paren{\cpi_1}_{(\HonB,0)}$, presented in \cref{fig:OutlineProofCondPi2}.

The first step of the outline above is to apply the induction hypothesis to the
sub-protocol $\cpi_1$. When trying to extend this outline for proving
\cref{lemma:IdealMainLemmaSim} we face a problem, since
$\paren{\cpi_1}_{(\HonB,0)}$ is not the same protocol as
$\Restrict{\cpi_{(\HonB,0)}}{1}$. The latter is a consequence of the fact that
$\Restrict{\ADomMeas{\cpi}}{1}\neq\ADomMeas{\cpi_1}$. In fact, we implicitly
faced the same problem in the proof of \cref{lemma:claim1}, where we used
\cref{prop:DomMeasPro} to show that
$\Restrict{\ADomMeas{\cpi}}{1}=(1/2)\cdot \ADomMeas{\cpi_1}$, and thus still
enabling us to use the induction hypothesis.  At this point we observe that the
proof of \cref{lemma:claim1} can also be viewed differently. Instead of applying
the induction hypothesis on $\ADomMeas{\cpi_1}$ and use \cref{prop:DomMeasPro},
we can apply the induction hypothesis directly to the measure
$(1/2)\cdot \ADomMeas{\cpi_1}$. This requires strengthening of the statement of
the lemma to consider \emph{submeasures} of of dominated measures, namely,
measures of the form $\eta\cdot M$, for $0\leq \eta\leq 1$ and $M$ being some
dominated measure.

Using sequence of dominated submeasures is the path we take for proving \cref{lemma:IdealMainLemmaSim}. The outline of the proof is as follows:
\begin{enumerate}
\item Define $(\cpi,\vetas)$-dominated submeasures sequence, where $\vetas$ is a vector of real values in $[0,1]$ (\cref{def:refinements}).
\item Extend the statement of \cref{lemma:IdealMainLemmaSim} to handle dominated submeasures sequences (\cref{lemma:IdealMainLemmaDet}).
\item Given $\vetas$, carefully define $\vetas_0$ and $\vetas_1$ such that the restrictions of the $(\cpi,\vetas)$-dominated submeasures sequence are exactly the measures used in $(\cpi_0,\vetas_0)$-dominated submeasures sequence and in $(\cpi_1,\vetas_1)$-dominated submeasure sequence (\cref{def:etaP,claim:RestrictFinal}).
\item Apply the induction hypothesis to the $(\cpi_0,\vetas_0)$-dominated submeasures sequence and the $(\cpi_1,\vetas_1)$-dominated submeasures sequence.
\item  If $\HonA$ controls the root, then use the properties of $\rcA{1}$ to complete the proof.
\item If $\HonB$ controls the root, then use a convexity-type argument to complete the proof.
\end{enumerate}
The formal proof, given below, follows precisely this outline. Unlike in the proof of \cref{lemma:claim1}, the last two steps are not trivial, and require careful analysis.

\begin{figure}
\centering
\begin{subfigure}{.48\textwidth}
\centering
\begin{tikzpicture}[->,>=stealth',level/.style={sibling distance = 5cm/#1, level distance = 1.5cm}]
\begin{scope}
\node [B-node] (B) {$\HonB$}
    child{ node [A-node] (A0) {$\HonA$}
        child{ node [leaf] (L00) {$1$} edge from parent node[above left] {$1/4$}}
        child{ node [B-node] (B01) {$\HonB$}
            child{ node [leaf] (L010) {$0$} edge from parent node[above left] {$1/2$}}
			child{ node [leaf,thick] (L011) {$1$} edge from parent node[above right] {$1/2$}}
            edge from parent node[above right] {$3/4$}
		}
        edge from parent node[above left] {$1/2$}
    }
    child{ node [A-node] (A1) {$\HonA$}
        child{ node [leaf] (L10) {$1$} edge from parent node[above left] {$1/2$}}
        child{ node [leaf] (L11) {$0$} edge from parent node[above right] {$1/2$}
		}
        edge from parent node[above right] {$1/2$}
    }
;
\end{scope}
\end{tikzpicture}
\caption{Protocol $\cpi=\cpi_{(\HonA,0)}$.}
\label{fig:OutlineProofPi}
\end{subfigure}%
 \hfill
\begin{subfigure}{.48\textwidth}
\centering
\begin{tikzpicture}[->,>=stealth',level/.style={sibling distance = 5cm/#1, level distance = 1.5cm}]
\begin{scope}
\node [B-node] (B) {$\HonB$}
    child{ node [A-node] (A0) {$\HonA$}
        child{ node [leaf] (L00) {$1$} edge from parent node[above left] {$0$}}
        child{ node [B-node] (B01) {$\HonB$}
            child{ node [leaf] (L010) {$0$} edge from parent node[above left] {$1/2$}}
			child{ node [leaf,thick] (L011) {$1$} edge from parent node[above right] {$1/2$}}
            edge from parent node[above right] {$1$}
		}
        edge from parent node[above left] {$1/2$}
    }
    child{ node [A-node] (A1) {$\HonA$}
        child{ node [leaf] (L10) {$1$} edge from parent node[above left] {$1/3$}}
        child{ node [leaf] (L11) {$0$} edge from parent node[above right] {$2/3$}
		}
        edge from parent node[above right] {$1/2$}
    }
;
\end{scope}
\end{tikzpicture}
\caption{Protocol $\cpi_{(\HonB,0)}$.}
\label{fig:OutlineProofCondPi}
\end{subfigure}
\caption{An example of a coin-flipping protocol to the left and its conditional protocol tp the right, when conditioning not to ``hit'' the $\HonA$-dominated measure.}
\label{fig:outline1}
\end{figure}

\begin{figure}
\centering
\begin{subfigure}{.48\textwidth}
\centering
\begin{tikzpicture}[->,>=stealth',level/.style={sibling distance = 5cm/#1, level distance = 1.5cm}]
\begin{scope}
\node [A-node] (A1) {$\HonA$}
    child{ node [leaf] (L10) {$1$} edge from parent node[above left] {$1/2$} }
    child{ node [leaf] (L11) {$0$} edge from parent node[above right] {$1/2$} }
;
\end{scope}
\end{tikzpicture}
\caption{Protocol $\cpi_1=\paren{\cpi_1}_{(\HonA,0)}$.}
\label{fig:OutlineProofPi2}
\end{subfigure}%
 \hfill
\begin{subfigure}{.48\textwidth}
\centering
\begin{tikzpicture}[->,>=stealth',level/.style={sibling distance = 5cm/#1, level distance = 1.5cm}]
\begin{scope}
\node [A-node] (A1) {$\HonA$}
    child{ node [leaf] (L10) {$1$} edge from parent node[above left] {$0$} }
    child{ node [leaf] (L11) {$0$} edge from parent node[above right] {$1$} }
;
\end{scope}
\end{tikzpicture}
\caption{Protocol $\paren{\cpi_1}_{(\HonB,0)}$.}
\label{fig:OutlineProofCondPi2}
\end{subfigure}
\caption{The sub-protocol $\cpi_1$ of the protocol from \cref{fig:outline1} and its conditional protocol.}
\label{fig:outline2}
\end{figure}

\subsubsection{Formal Proof of Lemma \ref{lemma:IdealMainLemmaSim}}
The proof of \cref{lemma:IdealMainLemmaSim} is an easy implication of \cref{lemma:propCombineMeas} and the following key lemma, defined \wrt sequences of \emph{submeasures} of the dominated measure.

\begin{definition}(dominated submeasure sequence)\label{def:refinements}
For a protocol $\cpi=\HonAHonB$, a pair $(\HonC^\ast,j^\ast)\in\set{\HonA,\HonB} \times \N$ and $\vetas = \set{\eta_{(\HonC,j)}\in[0,1]}_{(\HonC,j) \in [(\HonC^\ast,j^\ast)]}$, define the protocol $\icpih{(\HonC,j)}{\vetas}$ by
\begin{align*}
\icpih{(\HonC,j)}{\vetas} \eqdef \left\{
  \begin{array}{ll}
    \cpi, & (\HonC,j)=(\HonA,0); \\
    \CondProP{\icpih{(\HonC',j')=\pred(\HonC,j)}{\vetas}}{\idsm{(\HonC',j')}{\cpi}{\vetas}}, & \hbox{otherwise.}
  \end{array}
\right.,
\end{align*}
where $\idsm{(\HonC',j')}{\cpi}{\vetas} \equiv \eta_{(\HonC',j')}\cdot\DomMeas{\icpi{(\HonC',j')}{\vetas}}{\HonC'}$.
For  $(\HonC,j)\in[(\HonC^\ast,j^\ast)]$, define the {\sf $(\HonC,j,\vetas)$-dominated measure sequence of $\cpi$}, denoted $(\HonC,j,\vetas)$-$\DMS{\cpi}$, as $\set{\idsm{(\HonC',j')}{\cpi}{\vetas}}_{(\HonC',j')\in [(\HonC,j)]}$, and let $\imu{(\HonC,j)}{\cpi}{\vetas}=\eex{\LDist{\icpih{(\HonC,j)}{\vetas}}}{\idsm{(\HonC,j)}{\cpi}{\vetas}}$.\footnote{Note that for $\vetas =(1,1,1,\dots,1)$, \cref{def:refinements} coincides with \cref{def:mDMSequence}.}

Finally, let $\FinalMeas{\cpi}{\HonC}{\vetas} \equiv \sum_{j\colon (\HonC,j)\in[(\HonC^\ast,j^\ast)]}\idsm{(\HonC,j)}{\cpi}{\vetas} \cdot\prod_{t=0}^{j-1}\left(1-\idsm{(\HonC,t)}{\cpi}{\vetas}\right)$.
\end{definition}

\begin{lemma}\label{lemma:IdealMainLemmaDet}
Let $\cpi=\HonAHonB$ be a protocol with $\Val(\cpi)>0$,  let $\dmsi\in \N$ and let $\vetas=\set{\eta_{(\HonC,j)}\in[0,1]}_{(\HonC,j)\in [(\HonA,\dmsi)]}$. For $j \in (\dmsi)$, let $\alpha_j = \imu{(\HonA,j)}{\cpi}{\vetas}$, and for $j \in (\dmsi-1)$, let $\beta_{j} = \imu{(\HonB,j)}{\cpi}{\vetas}$. Then
\begin{align*}
\eex{\LDist{\ProArc{k}}}{\FinalMeas{\cpi}{\HonA}{\vetas}} \geq \frac{\sum_{j=0}^\dmsi \alpha_j \cdot \prod_{t=0}^{j-1}(1-\beta_t)^{k+1}(1-\alpha_t)}{\prod_{i=0}^{k-1}\Val{(\ProArc{i})}}
\end{align*}
for any positive $k\in\N$.
\end{lemma}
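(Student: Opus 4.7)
The plan is to prove \cref{lemma:IdealMainLemmaDet} by induction on $\round(\cpi)$, following the outline sketched immediately above its statement. The base case $\round(\cpi) = 0$ is immediate: the assumption $\Val(\cpi)>0$ forces the unique leaf to be a $1$-leaf, and both sides of the inequality reduce to the leaf's $\FinalMeas{\cpi}{\HonA}{\vetas}$-value (the factors involving $\prod \Val(\rcA{i},\HonB)$ equal $1$). For the inductive step, the degenerate case $\EdgeDist_\cpi(\EmptyString,b)=1$ (which forces $\EdgeDist_\cpi(\EmptyString,1-b)=0$) reduces directly to applying the induction hypothesis to $\cpi_b$, because both $\rcA{k}$-distributions and all dominated submeasures agree with their restrictions to $\cpi_b$.

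The interesting case is when $p\eqdef\EdgeDist_\cpi(\EmptyString,0)\in(0,1)$. The first task will be to identify, for each $b\in\zo$, a vector $\vetas_b$ and a scalar $\gamma_b\in[0,1]$ such that $\Restrict{\FinalMeas{\cpi}{\HonA}{\vetas}}{b} \equiv \gamma_b\cdot \FinalMeas{\cpi_b}{\HonA}{\vetas_b}$, with the additional identity $\imu{(\HonC,j)}{\cpi_b}{\vetas_b} = \imu{(\HonC,j)}{\cpi}{\vetas}$ for every $(\HonC,j)\preceq (\HonA,\dmsi)$. This is the key bookkeeping step. If $\HonA$ controls $\Root(\cpi)$, the $\HonA$-maximal property of each dominated measure in the sequence (\cref{prop:DomMeasPro}(\ref{item:Amaximal})) means that the restriction of $\idsm{(\HonA,j)}{\cpi}{\vetas}$ to $\cpi_b$ is exactly $\idsm{(\HonA,j)}{\cpi_b}{\vetas}$; one shows by induction on the index $j$ that the same holds for $\HonB$-indexed terms in the sequence, so we may take $\gamma_b=1$ and $\vetas_b=\vetas$. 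If $\HonB$ controls $\Root(\cpi)$, then by $\HonB$-minimality and $\HonB$-immunity the restriction of $\idsm{(\HonA,0)}{\cpi}{\vetas}$ to $\cpi_b$ equals $\lambda_b\cdot \idsm{(\HonA,0)}{\cpi_b}{\vetas}$ for a constant $\lambda_b\in[0,1]$ with $p\lambda_0 + (1-p)\lambda_1 = 1$; the remainder of the sequence restricts cleanly by applying \cref{prop:CondProLocal} and iterating, which lets us define $\vetas_b$ (the first coordinate absorbs $\lambda_b\cdot \eta_{(\HonA,0)}$, the rest are unchanged) with $\gamma_b=1$ after absorption.

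Given this setup, apply the induction hypothesis on $\cpi_0$ and $\cpi_1$ (using $\vetas_0,\vetas_1$), noting via \cref{fact:switchOrder} that $(\rcA{k},\HonB)_b = (\rcA{k}_{\cpi_b},\HonB_{\cpi_b})$. For the $\HonA$-controlled root case, use \cref{claim:weights} to obtain $\EdgeDist_{(\rcA{k},\HonB)}(\EmptyString,b) = \EdgeDist_\cpi(\EmptyString,b)\cdot\prod_{i=0}^{k-1}\Val((\rcA{i},\HonB)_b)/\prod_{i=0}^{k-1}\Val(\rcA{i},\HonB)$; the factors $\prod_i\Val((\rcA{i},\HonB)_b)$ cancel the denominators coming from the inductive bounds, and summing the two branches yields the claimed inequality directly. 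For the $\HonB$-controlled root case, the edge distribution at $\Root(\cpi)$ is unchanged by $\rcA{k}$, so
\begin{align*}
\eex{\LDist{\rcA{k},\HonB}}{\FinalMeas{\cpi}{\HonA}{\vetas}}
= p\cdot \eex{\LDist{(\rcA{k}_{\cpi_0},\HonB_{\cpi_0})}}{\FinalMeas{\cpi_0}{\HonA}{\vetas_0}}
+ (1-p)\cdot \eex{\LDist{(\rcA{k}_{\cpi_1},\HonB_{\cpi_1})}}{\FinalMeas{\cpi_1}{\HonA}{\vetas_1}},
\end{align*}
and the right-hand side of the target inequality combines the two inductive bounds via the convexity inequality of \cref{lemma:calculus1} (applied termwise with $k+1$ as the exponent, $x,y$ playing the role of the inductive numerators and $a_i,b_i$ the values $\Val((\rcA{i},\HonB)_b)$); this is precisely where the $(k+1)$-power and the product of sub-protocol values in the statement were chosen so the inequality goes through.

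The main obstacle I expect is the bookkeeping in the $\HonB$-controlled root case: making sure that after extracting the scaling factors $\lambda_0,\lambda_1$ and absorbing them into the first coordinate of $\vetas_b$, the resulting $\imu{(\HonC,j)}{\cpi_b}{\vetas_b}$ values still equal the original $\alpha_j$'s and $\beta_j$'s so that the two inductive inequalities can be summed with matching right-hand sides. Once that identity is verified (which in turn relies on the $\HonB$-immunity of the whole sequence of dominated measures), the algebraic combination via \cref{lemma:calculus1} is the only remaining calculation. Finally, \cref{lemma:IdealMainLemmaSim} follows immediately by specializing to $\vetas=(1,\dots,1)$ and combining with \cref{lemma:propCombineMeas}(\ref{item:CombineMeas2}) and the lower bound $(1-\sum_j\beta_j)^k \le \prod_j(1-\beta_j)^k$.
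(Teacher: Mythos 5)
Your outline matches the paper's proof up to the key bookkeeping step, and that step is where the proposal has a genuine gap. You claim that one can choose $\vetas_b$ so that, in addition to $\Restrict{\FinalMeas{\HonA}{\cpi}{\vetas}}{b} \equiv \FinalMeas{\HonA}{\cpi_b}{\vetas_b}$, one gets the identity $\imu{(\HonC,j)}{\cpi_b}{\vetas_b} = \imu{(\HonC,j)}{\cpi}{\vetas}$ for every pair. That identity is false in both root-control cases, and the whole ``sum the two branches directly'' plan depends on it.

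Concretely: the $\HonA$-dominated measure is $\HonB$-immune (\cref{prop:DomMeasPro}(\ref{item:Bimmune})), meaning that \emph{when $\HonB$ controls the root} the restriction has the same expectation on both subtrees as on the whole tree. Symmetrically, the $\HonB$-dominated measure is $\HonA$-immune. So if $\HonA$ controls $\Root(\cpi)$ (and throughout the whole conditional-protocol sequence, since conditioning preserves the control scheme), the $\beta_j$'s are preserved across the two subtrees but the $\alpha_j$'s are \emph{not}: for the first dominated measure, $\alpha_0 = p\cdot\alpha^0_0 + (1-p)\cdot\alpha^1_0$, not $\alpha_0 = \alpha^0_0 = \alpha^1_0$. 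If $\HonB$ controls the root, the situation flips: the $\alpha_j$'s are preserved, but the $\beta_j$'s are not, and only the weighted identity $\prod_t(1-\beta_t) = p\prod_t(1-\beta^0_t)+(1-p)\prod_t(1-\beta^1_t)$ holds. In fact, already the assertion $\vetas_b = \vetas$ for the $\HonA$-root case is off: in the paper's \cref{def:etaP}, the $\HonB$-indexed coordinates of $\vetas_b$ do pick up scaling factors (derived from \cref{prop:DomMeasPro}(\ref{item:Bminimal})) even when $\HonA$ controls the root, precisely because the $\HonB$-dominated measures of the conditional protocols are being restricted across a node the other party controls.

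Without the preservation identity, you do not get the two inductive bounds to line up with a common numerator, so the cancellation you describe for the $\HonA$-root case and the term-by-term application of \cref{lemma:calculus1} for the $\HonB$-root case both fail as written. The paper closes exactly this gap with \cref{clm:RelatingInductionConstants,claim:alphaRelation} (for the $\HonA$-root case: $\beta^0_j=\beta^1_j=\beta_j$ and $\alpha_j\prod(1-\alpha_t)=p\,\alpha^0_j\prod(1-\alpha^0_t)+(1-p)\,\alpha^1_j\prod(1-\alpha^1_t)$) and \cref{clm:RelatingInductionConstantsB,claim:betaRelation} (for the $\HonB$-root case), together with \cref{claim:AlphaIs1,claim:BetaIs1,claim:dmsiequal} to handle the index at which one subtree's conditional sequence degenerates to $\perp$. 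The lesson is that only one of the two families of expectations is preserved on restriction, and you need the precise weighted relation for the other family to make the sums $\sum_j \alpha_j\prod_t(1-\beta_t)^{k+1}(1-\alpha_t)$ reassemble correctly.
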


The proof of \cref{lemma:IdealMainLemmaDet} is given below, but we first use it to prove \cref{lemma:IdealMainLemmaSim}.
\begin{proof}[Proof of \cref{lemma:IdealMainLemmaSim}]
Let $\eta_{(\HonC,j)}=1$ for every $(\HonC,j)\in[(\HonA,\dmsi)]$ and let $\vetas=\set{\eta_{(\HonC,j)}}_{(\HonC,j)\in[(\HonA,\dmsi)]}$. It follows that $\FinalMeas{\cpi}{\HonA}{\vetas}\equiv\CombineMeas{\cpi}{\HonA}{\dmsi}$. Applying \cref{lemma:IdealMainLemmaDet} yields that
\begin{align}\label{eq:SimFromDet1}
\eex{\LDist{\ProArc{k}}}{\CombineMeas{\cpi}{\HonA}{\dmsi}} \geq \frac{\sum_{j=0}^\dmsi \alpha_j \cdot \prod_{t=0}^{j-1}(1-\beta_t)^{k+1}(1-\alpha_t)}{\prod_{i=0}^{k-1}\Val{(\ProArc{i})}},
\end{align}
where $\alpha_j = \imu{(\HonA,j)}{\cpi}{\vetas}$ and $\beta_{j} = \imu{(\HonB,j)}{\cpi}{\vetas}$. Multiplying the $j$'th summand of the right-hand side of \cref{eq:SimFromDet1} by $\prod_{t=j}^{\dmsi-1}(1-\beta_j)^k \leq 1$ yields that
\begin{align}\label{eq:SimFromDet2}
\eex{\LDist{\ProArc{k}}}{\CombineMeas{\cpi}{\HonA}{\dmsi}} &\geq \frac{\sum_{j=0}^\dmsi \alpha_j \cdot \prod_{t=0}^{j-1}(1-\beta_t)(1-\alpha_t)}{\prod_{i=0}^{k-1}\Val{(\ProArc{i})}} \cdot\prod_{t=0}^{\dmsi-1}(1-\beta_t)^k \\
&\geq \frac{\sum_{j=0}^\dmsi \alpha_j \cdot \prod_{t=0}^{j-1}(1-\beta_t)(1-\alpha_t)}{\prod_{i=0}^{k-1}\Val{(\ProArc{i})}} \cdot\paren{1-\sum_{t=0}^{\dmsi-1}\beta_t}^k, \nonumber
\end{align}
where the second inequality follows since $\beta_j\geq 0$ and $(1-x)(1-y)\geq 1-(x+y)$ for any $x,y\geq 0$. By \cref{lemma:ExpDomMeas} and the definition of $\vetas$ it follows that $\imu{(\HonA,j)}{\cpi}{\vetas} = 1- \BestB{\icpi{(\HonA,j)}{}}$ and $\imu{(\HonB,j)}{\cpi}{\vetas} = 1- \BestA{\icpi{(\HonB,j)}{}}$. Hence, plugging \cref{lemma:propCombineMeas} into \cref{eq:SimFromDet2} yields that
\begin{align}
\eex{\LDist{\rcA{k},\HonB}}{\CombineMeas{\cpi}{\HonA}{\dmsi}} \geq \frac{\eex{\LDist{\cpi}}{\CombineMeas{\cpi}{\HonA}{\dmsi}}}{\prod_{i=0}^{k-1}\Val(\rcA{i},\HonB)} \cdot \paren{1-\sum_{t=0}^{\dmsi-1}\beta_t}^k.
\end{align}
Finally, the proof is concluded, since by \cref{lemma:propCombineMeas,fact:ValueExp} it immediately follows that $\Val(\rcA{k},\HonB) \geq \eex{\LDist{\rcA{k},\HonB}}{\CombineMeas{\cpi}{\HonA}{\dmsi}}$.
\end{proof}

\subsubsection{Proving Lemma \ref{lemma:IdealMainLemmaDet}}\label{sec:ProvingMainIdeal}
\begin{proof}[Proof of \cref{lemma:IdealMainLemmaDet}]
In the following we fix a protocol $\cpi$, real vector $\vetas=\set{\eta_{(\HonC,j)}}_{(\HonC,j)\in[(\HonA,z)]}$ and a positive integer $k$. We also assume for simplicity that $\icpih{(\HonA,z)}{\vetas}$ is not the undefined protocol, \ie $\icpih{(\HonA,\dmsi)}{\vetas}\neq \perp$.\footnote{If this assumption does not hold, let $\dmsi'\in(\dmsi-1)$ be the largest index such that $\icpih{(\HonA,\dmsi')}{\vetas}\neq \perp$, and let $\vetas'=\set{\eta_{(\HonC,j)}}_{(\HonC,j)\in[(\HonA,\dmsi')]}$. It follows from \cref{def:Dominated} that $\idsm{(\HonA,j)}{\cpi}{\vetas}$ is the zero measure for any $\dmsi' < j \leq  \dmsi$, and thus $\FinalMeas{\HonA}{\cpi}{\vetas'}\equiv \FinalMeas{\HonA}{\cpi}{\vetas}$. Moreover, the fact that $\alpha_j=0$ for any $\dmsi' < j \leq  \dmsi$ suffices to validate the assumption.} The proof is by induction on the round complexity of $\cpi$.

\paragraph{Base case.}
Assume $\round(\cpi) = 0$ and let $\ell$ be the only node in $\Tree(\cpi)$. For $j \in (\dmsi)$, \cref{def:refinements} yields that $\Color_{\icpih{(\HonA,j)}{\vetas}}(\ell) = \Color_\cpi(\ell) = 1$, where the last equality holds since, by assumption, $\Val(\cpi)>0$. It follows  \cref{def:Dominated} that $\DomMeas{\icpih{(\HonA,j)}{\vetas}}{\HonA}(\ell)=1$ and \cref{def:refinements} that $\idsm{(\HonA,j)}{\cpi}{\vetas}(\ell)=\eta_{(\HonA,j)}$. Hence, it holds that $\alpha_j=\eta_{(\HonA,j)}$. Similarly, for $j \in (\dmsi-1)$ it holds that $\idsm{(\HonB,j)}{\cpi}{\vetas}(\ell)=0$ and thus $\beta_{j}=0$. Clearly, $\ProArcP{k}=\cpi$ and $\Val(\ProArc{i})=1$ for every $i\in[k-1]$. We conclude that
\begin{align*}
\eex{\LDist{\ProArc{k}}}{\FinalMeas{\HonA}{\cpi}{\vetas}} =& \eex{\LDist\cpi}{\FinalMeas{\HonA}{\cpi}{\vetas}} \\
=& \sum_{j=0}^{\dmsi}\idsm{(\HonA,j)}{\cpi}{\vetas}(\ell) \cdot\prod_{t=0}^{j-1}\left(1-\idsm{(\HonA,t)}{\cpi}{\vetas}(\ell) \right) \\
=& \sum_{j=0}^{\dmsi}\eta_{(\HonA,j)} \cdot\prod_{t=0}^{j-1}\left(1-\eta_{(\HonA,t)} \right) \\
=& \sum_{j=0}^{\dmsi}\alpha_j \cdot\prod_{t=0}^{j-1}\left(1-\alpha_t \right) \\
=& \frac{\sum_{j=0}^\dmsi \alpha_j\prod_{t=0}^{j-1}(1-\beta_t)^{k+1}(1-\alpha_t)}{\prod_{i=0}^{k-1}\Val{(\ProArc{i})}}.
\end{align*}

\paragraph{Induction step.}
Assume the lemma holds for $\rnd$-round protocols and  that $\round(\cpi) = \rnd+1$.
We prove it by the following steps: (1) we define two real vectors $\veta{0}$ and $\veta{1}$ such that the restriction of  $\FinalMeas{\HonA}{\cpi}{\vetas}$ to $\cpi_0$ and $\cpi_1$ is equal to $\FinalMeas{\HonA}{\cpi_0}{\veta{0}}$ and $\FinalMeas{\HonA}{\cpi_1}{\veta{1}}$ respectively; (2) we apply the induction hypothesis on the two latter measures; (3) if $\HonA$ controls $\Root(\cpi)$, we use the properties of $\rcA{k}$ -- as stated in \cref{claim:weights} -- to derive the lemma, whereas if $\HonB$ controls $\Root(\cpi)$, we derive the lemma from \cref{lemma:calculus1}.

All claims given in the context of this proof are proven in \cref{sec:MainIdealMissingProofs}. We defer handling the case that $\EdgeDist_{\cpi}(\EmptyString,b)\in\zo$ for some $b\in\zo$ (see the end of this proof) and assume for now that $\EdgeDist_{\cpi}(\EmptyString,0),\EdgeDist_{\cpi}(\EmptyString,1)\in(0,1)$. The real vectors $\veta{0}$ and $\veta{1}$ are defined as follows.
\begin{definition}\label{def:etaP}
Let $\veta{b}=\set{\eta^b_{(\HonC,j)}}_{(\HonC,j)\in[(\HonA,\dmsi)]}$, where for $(\HonC,j)\in[(\HonA,\dmsi)]$ and $b\in\zo$, let
\begin{align*}
        \eta^b_{(\HonC,j)} = \left\{
                                   \begin{array}{ll}
                                      0 & \EdgeDist_{\icpih{(\HonC,j)}{\vetas}}(\EmptyString,b)=0;\\
                                      \eta_{(\HonC,j)} & \EdgeDist_{\icpih{(\HonC,j)}{\vetas}}(\EmptyString,b)=1;\\
                                      \eta_{(\HonC,j)} & \EdgeDist_{\icpih{(\HonC,j)}{\vetas}}(\EmptyString,b)\notin\zo \land  (\HonC \hbox{ controls }\Root(\cpi)\lor \Smaller{\icpih{(\HonC,j)}{\vetas}}{b});\\
                                      \frac{\DMExRes{(\HonC,j)}{1-b}}{\DMExRes{(\HonC,j)}{b}} \cdot \eta_{(\HonC,j)} &  \hbox{otherwise;}
                                   \end{array}
                                 \right.,
\end{align*}
where $\DMExRes{(\HonC,j)}{b}=\eex{\LDist{\left(\icpih{(\HonC,j)}{\vetas}\right)_{b}}}{\DomMeas{\left(\icpih{(\HonC,j)}{\vetas}\right)_{b}}{\HonC}}$ and $\Smaller{\icpih{(\HonC,j)}{\vetas}}{b}=1$  if $\DMExRes{(\HonC,j)}{b} \leq \DMExRes{(\HonC,j)}{1-b}$.\footnote{Note that the  definition of $\eta^b$ follows the same lines of the definition of the dominated measure (given in \cref{def:Dominated}).}
\end{definition}

Given the real vector $\veta{b}$, consider the dominated submeasure sequence $\veta{b}$ induces on the subprotocol $\cpi_b$. At first glance, the relation of this submeasure sequence to the dominated submeasure sequence $\vetas$ induces on $\cpi$, is unclear; nonetheless, we manage to prove the following key observation.
\begin{claim}\label{claim:RestrictFinal}
It holds that $\FinalMeas{\HonA}{\cpi_b}{\veta{b}} \equiv \Restrict{\FinalMeas{\HonA}{\cpi}{\vetas}}{b}$ for both $b\in\zo$.
\end{claim}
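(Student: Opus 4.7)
The plan is to reduce the claim to a pointwise identity between the individual submeasures, and then to prove that identity by induction along the well-order $\preceq$. Since restriction to the $b$-subtree commutes with pointwise sums, pointwise products, and scalar multiplication, and since $\FinalMeas{\HonA}{\cdot}{\cdot}$ is by definition a finite sum of products of the submeasures $\idsm{(\HonA,j)}{\cdot}{\cdot}$, it suffices to show that for every $(\HonA,j)\in [(\HonA,\dmsi)]$, $\Restrict{\idsm{(\HonA,j)}{\cpi}{\vetas}}{b} \equiv \idsm{(\HonA,j)}{\cpi_b}{\veta{b}}$.

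To prove this I would strengthen the statement to include both parties so the induction is self-supporting: for every $(\HonC,j)\in [(\HonA,\dmsi)]$, (i) $\Restrict{\icpih{(\HonC,j)}{\vetas}}{b} = \icpih{(\HonC,j)}{\veta{b}}$ as protocols, and (ii) $\Restrict{\idsm{(\HonC,j)}{\cpi}{\vetas}}{b} \equiv \idsm{(\HonC,j)}{\cpi_b}{\veta{b}}$ as measures on $\Leaves(\cpi_b)$. Item (ii) at index $\pred(\HonC,j)$ feeds into (i) at index $(\HonC,j)$ through the locality of conditional protocols (\cref{prop:CondProLocal}), and (i) at $(\HonC,j)$ feeds into (ii) at the same index through the recursive case analysis of the dominated measure.

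For the base case $(\HonA,0)$, item (i) is immediate since $\icpih{(\HonA,0)}{\vetas}=\cpi$ and $\icpih{(\HonA,0)}{\veta{b}}=\cpi_b$. Item (ii) is verified by reading off the four cases of \cref{def:etaP} against the corresponding cases of \cref{def:Dominated}: the $\HonA$-maximal and $\HonB$-minimal parts of \cref{prop:DomMeasPro} tell us that $\Restrict{\ADomMeas{\cpi}}{b}$ is either $\ADomMeas{\cpi_b}$ or the rescaled measure $(\DMExRes{(\HonA,0)}{1-b}/\DMExRes{(\HonA,0)}{b})\cdot\ADomMeas{\cpi_b}$, which is exactly canceled by the definition of $\eta^b_{(\HonA,0)}$. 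The degenerate branches $\EdgeDist_\cpi(\EmptyString,b)\in\zo$ collapse the restriction to either the zero measure or the full one, which are matched by the first two clauses of \cref{def:etaP}. The inductive step uses \cref{prop:CondProLocal} and the two halves of the inductive hypothesis at $\pred(\HonC,j)$ to obtain (i), via $\Restrict{\icpih{(\HonC,j)}{\vetas}}{b} = \CondProP{\Restrict{\icpih{\pred(\HonC,j)}{\vetas}}{b}}{\Restrict{\idsm{\pred(\HonC,j)}{\cpi}{\vetas}}{b}} = \CondProP{\icpih{\pred(\HonC,j)}{\veta{b}}}{\idsm{\pred(\HonC,j)}{\cpi_b}{\veta{b}}} = \icpih{(\HonC,j)}{\veta{b}}$; then (ii) follows from (i) combined with precisely the same case-by-case argument used in the base case, now applied to the protocol $\icpih{(\HonC,j)}{\vetas}$, whose control scheme at the root coincides with that of $\cpi$ because conditioning preserves the control scheme (\cref{def:CondProtocol}).

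The hard part will be the bookkeeping. One has to check that the quantities $\DMExRes{(\HonC,j)}{b}$ hidden inside $\eta^b_{(\HonC,j)}$ are computed from the correct protocol in the sequence (namely $\icpih{(\HonC,j)}{\vetas}$, not the unconditioned $\cpi$), that the $\Smaller{\cdot}{\cdot}$ flag used in \cref{def:etaP} matches the one used in \cref{def:Dominated}, and that the degenerate subprotocols $\Restrict{\icpih{(\HonC,j)}{\vetas}}{b}=\perp$ are handled consistently on both sides of identity (ii), where both the $\vetas$-side factor $\Restrict{\idsm{(\HonC,j)}{\cpi}{\vetas}}{b}$ and the $\veta{b}$-side factor $\eta^b_{(\HonC,j)}\cdot\DomMeas{\perp}{\HonC}$ must collapse to the zero measure.
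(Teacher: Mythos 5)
Your proposal is correct and follows essentially the same route as the paper: the paper reduces \cref{claim:RestrictFinal} to a pointwise identity on the submeasures (this is the three-line displayed chain of $\equiv$'s), and supports that identity by a separate proposition (\cref{claim:ConRestrict}) that is exactly your two-part statement---restriction commutes with the conditional-protocol construction, and restriction of the scaled dominated measure equals the scaled dominated measure of the restricted protocol---proved by induction along $\preceq$, with item (ii) at $\pred(\HonC,j)$ feeding (i) at $(\HonC,j)$ via \cref{prop:CondProLocal}, and (i) at $(\HonC,j)$ feeding (ii) at $(\HonC,j)$ via the case analysis of \cref{def:Dominated} against \cref{def:etaP}. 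One small simplification you could have noted: the surrounding proof context of \cref{lemma:IdealMainLemmaDet} has already set aside the case $\EdgeDist_\cpi(\EmptyString,b)\in\zo$, so the degenerate root-edge branches you flag as needing bookkeeping are actually excluded by the standing hypothesis $\EdgeDist_\cpi(\EmptyString,b)\in(0,1)$ (though the deeper conditional protocols $\icpih{(\HonC,j)}{\vetas}$ can still have degenerate root edges, and your handling of those is correct).
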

Namely, taking $(\HonA,\dmsi,\veta{b})$-$\DMS{\cpi_b}$ -- the dominated submeasures defined \wrt $\cpi_b$ and $\veta{b}$ -- and constructing from it the measure $\FinalMeas{\HonA}{\cpi_b}{\veta{b}}$, results in the same measure as taking $(\HonA,\dmsi,\vetas)$-$\DMS{\cpi}$ -- the dominated submeasures defined \wrt $\cpi$ and $\vetas$ -- and constructing from it the measure $\FinalMeas{\HonA}{\cpi}{\vetas}$ while restricting the latter to $\cpi_b$.

Given the above fact, we can use our induction hypothesis on the subprotocols $\cpi_0$ and $\cpi_1$ \wrt the real vectors $\veta{0}$ and $\veta{1}$, respectively. For $b\in\zo$ and $j\in(\dmsi)$, let $\alpha^b_j \eqdef \DMEx{\cpi_b}{\veta{b}}{(\HonA,j)} \ (\eqdef \eex{\LDist{\icpib{(\HonA,j)}{\veta{b}}{b}}}{\idsm{(\HonA,j)}{\cpi_b}{\veta{b}}})$, and for $j\in(\dmsi-1)$ let $\beta^b_j \eqdef \DMEx{\cpi_b}{\veta{b}}{(\HonB,j)}$. Assuming that $\Val(\cpi_1)>0$, then
\begin{align}\label{eq:MainLemmaIndRight}
\eex{\LDist{\Restrict{\ProArc{k}}{1}}}{\Restrict{\FinalMeas{\HonA}{\cpi}{\vetas}}{1}} \overset{(1)}{=} \eex{\LDist{\ProArcb{k}{1}}}{\FinalMeas{\HonA}{\cpi_1}{\veta{1}}}
\overset{(2)}{\geq} \frac{\sum_{j=0}^{\dmsi} \alpha^1_j\prod_{t=0}^{j-1}(1-\beta^1_t)^{k+1}(1-\alpha^1_t)}{\prod_{i=0}^{k-1}\Val\paren{\ProArcP{i}_1}}.
\end{align}
where (1) follows from \cref{fact:switchOrder,claim:RestrictFinal}, and (2) follows from the induction hypothesis. Similarly, if $\Val(\cpi_0)>1$, then
\begin{align}\label{eq:MainLemmaIndLeft}
\eex{\LDist{\Restrict{\ProArc{k}}{0}}}{\Restrict{\FinalMeas{\HonA}{\cpi}{\vetas}}{0}} = \eex{\LDist{\ProArcb{k}{0}}}{\FinalMeas{\HonA}{\cpi_0}{\veta{0}}}
\geq \frac{\sum_{j=0}^{\dmsi} \alpha^0_j\prod_{t=0}^{j-1}(1-\beta^0_t)^{k+1}(1-\alpha^0_t)}{\prod_{i=0}^{k-1}\Val\paren{\ProArcP{i}_0}}.
\end{align}

In the following we use the fact that the dominated submeasure sequence of one of the subprotocols is at least as long as the submeasure sequence of the protocol itself. Specifically, we show the following.

\begin{definition}\label{def:maxInd}
For $b\in\zo$, let $\dmsi^b = \min\set{\set{j\in(\dmsi) \colon \alpha^b_j = 1 \lor \beta^b_j = 1}\cup \set{\dmsi}}$.
\end{definition}
Assuming \wlg (and throughout the proof of the lemma) that $\dmsi^1\leq \dmsi^0$, we have the following claim (proven in \cref{sec:MainIdealMissingProofs}).
\begin{claim}\label{claim:dmsiequal}
Assume that $\dmsi^1\leq \dmsi^0$, then $\dmsi^0=\dmsi$.
\end{claim}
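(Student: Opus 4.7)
The plan is to prove the claim by contradiction: assume $\dmsi^1 \leq \dmsi^0 < \dmsi$, so both $\dmsi^0$ and $\dmsi^1$ are strictly less than $\dmsi$. By the definition of $\dmsi^b$, for each $b \in \zo$ there exists an index $j_b \leq \dmsi^0 < \dmsi$ (namely $j_b = \dmsi^b$) at which either $\alpha^b_{j_b} = 1$ or $\beta^b_{j_b} = 1$. The goal is to combine these two subprotocol saturations to force a saturation on the full protocol $\cpi$ at some level $\preceq (\HonA, \dmsi^0)$, thereby contradicting the standing WLOG assumption $\icpih{(\HonA, \dmsi)}{\vetas} \neq \perp$, which by \cref{def:CondProtocol} requires $\imu{(\HonC, j)}{\cpi}{\vetas} < 1$ for every $(\HonC, j) \prec (\HonA, \dmsi)$.

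The main tool I would develop is a lifting identity relating $\imu{(\HonC, j)}{\cpi}{\vetas}$ to the corresponding subprotocol expectations $\DMEx{\cpi_0}{\veta{0}}{(\HonC, j)}$ and $\DMEx{\cpi_1}{\veta{1}}{(\HonC, j)}$, derived from \cref{claim:RestrictFinal} together with the explicit definition of $\veta{b}$ in \cref{def:etaP}. When $\HonC$ itself controls the root of $\icpih{(\HonC,j)}{\vetas}$, both $\eta^0_{(\HonC, j)}$ and $\eta^1_{(\HonC, j)}$ equal $\eta_{(\HonC, j)}$ (by the $\HonC$-maximal property, \cref{prop:DomMeasPro}(\ref{item:Amaximal})), and $\imu{(\HonC, j)}{\cpi}{\vetas}$ is a convex combination of the two subprotocol expectations, so simultaneous saturations on both sides lift to saturation at the full-protocol level. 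When the opposite party controls the root, the scaling factor $\DMExRes{(\HonC, j)}{1-b}/\DMExRes{(\HonC, j)}{b}$ in \cref{def:etaP} forces $\DMEx{\cpi_0}{\veta{0}}{(\HonC, j)} = \DMEx{\cpi_1}{\veta{1}}{(\HonC, j)}$ (the analogue of the $\HonB$-immune property, \cref{prop:DomMeasPro}(\ref{item:Bimmune})), so once again saturation on both sides lifts to saturation on $\cpi$.

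Combining these two cases, the saturations on $\cpi_0$ at level $j_0$ and on $\cpi_1$ at level $j_1 \leq j_0$ will propagate to a saturation on $\cpi$ at some level $(\HonC, j) \preceq (\HonA, \dmsi^0) \prec (\HonA, \dmsi)$, yielding $\icpih{\succe(\HonC, j)}{\vetas} = \perp$ and hence $\icpih{(\HonA, \dmsi)}{\vetas} = \perp$, contradicting the WLOG reduction made at the start of the proof of \cref{lemma:IdealMainLemmaDet}.

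The main obstacle I anticipate is the precise bookkeeping required to align the two subprotocol saturation indices along the common sequence for $\cpi$: the saturation on the $b$-side at level $j_b$ may interact with different control patterns and different $\eta^b$ scalings at intermediate levels, so propagating the saturation upward requires verifying that at every level in $\{(\HonC', j') \colon (\HonA, 0) \preceq (\HonC', j') \preceq (\HonC, \dmsi^0)\}$ the lifting identity is either a plain convex combination or an equality through the $\HonB$-immune property. A symmetric treatment is also needed for the cases where saturation arises from $\beta^b_{j_b} = 1$ rather than $\alpha^b_{j_b} = 1$ (using the $\HonB$-dominated counterpart of \cref{prop:DomMeasPro}); this adds bookkeeping but requires no new ideas.
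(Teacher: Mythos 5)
Your plan takes a substantially heavier route than the paper's proof, and as written it has a gap that sits exactly at the ``main obstacle'' you flag. The problem is the case $\dmsi^1<\dmsi^0$: once the $\cpi_1$-side becomes $\perp$ at level $\dmsi^1+1\le\dmsi^0$, the $\HonA$-dominated measure of $\perp$ is by \cref{def:Dominated} the \emph{zero} measure, so for all $(\HonC,j)\succ(\HonA,\dmsi^1)$ the side-$1$ expectations $\alpha^1_j,\beta^1_j$ are $0$, not $1$. At level $(\HonA,\dmsi^0)$ your convex-combination identity (in the $\HonA$-controlled case) then combines $\alpha^0_{\dmsi^0}=1$ with $\alpha^1_{\dmsi^0}=0$, and the $\HonB$-immune identity (in the $\HonB$-controlled case) would \emph{equate} $\alpha_{\dmsi^0}$ to $\alpha^1_{\dmsi^0}=0$ — precisely the opposite of a saturation. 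What actually rescues the argument is that once $\cpi_1$ is $\perp$, the conditional protocol's edge into the right subtree is $0$ (compare \cref{claim:EdgeDistAna}), so the ``convex combination'' degenerates with all weight on side $0$ and the $\HonB$-immune case cannot arise at the root; but this requires an extra argument about vanishing edge weights that your plan does not contain.

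The paper sidesteps all of this and never mentions expectations at all. It observes that $\perp$ is absorbing: $\icpib{(\HonA,\dmsi^0+1)}{\veta{0}}{0}=\perp$ directly from \cref{def:maxInd}, and since $\dmsi^1\le\dmsi^0$, also $\icpib{(\HonA,\dmsi^0+1)}{\veta{1}}{1}=\perp$. By \cref{claim:ConRestrict}\cref{item:protocol}, these two objects \emph{are} the restrictions $\Restrict{\icpih{(\HonA,\dmsi^0+1)}{\vetas}}{0}$ and $\Restrict{\icpih{(\HonA,\dmsi^0+1)}{\vetas}}{1}$; a function over $\Tree(\cpi)$ whose restrictions to both subtrees fail to be protocols cannot itself be a protocol (a valid protocol must put positive mass on at least one child of the root), so $\icpih{(\HonA,\dmsi^0+1)}{\vetas}=\perp$, contradicting the standing assumption. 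Your plan is not unsalvageable, but it requires the extra degenerate-edge lemma sketched above, whereas the $\perp$-propagation argument through \cref{claim:ConRestrict} is both shorter and avoids the case split on who controls the root entirely.
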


We are now ready to prove the lemma by separately considering which party controls the root of $\cpi$.
\begin{description}

\item[$\HonA$ controls $\Root(\cpi)$ and $\Val(\cpi_0),\Val(\cpi_1)>0$.] Under these assumptions, we can apply the induction hypothesis on both subtrees (namely, we can use  \cref{eq:MainLemmaIndLeft,eq:MainLemmaIndRight}). Let $p = \EdgeDist_{\cpi}(\EmptyString,0)$. Compute
\begin{align}\label{eq:MainLemmaA1}
\lefteqn{\eex{\LDist{\ProArc{k}}}{\FinalMeas{\HonA}{\cpi}{\vetas}}}\\
&= \EdgeDist_{\ProArcP{k}}(\EmptyString,0)\cdot \eex{\LDist{\ProArcP{k}_0}}{\Restrict{\FinalMeas{\HonA}{\cpi}{\vetas}}{0}} + \EdgeDist_{\ProArcP{k}}(\EmptyString,1)\cdot \eex{\LDist{\ProArcP{k}_1}}{\Restrict{\FinalMeas{\HonA}{\cpi}{\vetas}}{1}} \nonumber\\
&\overset{(1)}{=} p \cdot \frac{\prod_{i=0}^{k-1}\Val\paren{\ProArcP{i}_{0}}}{\prod_{i=0}^{k-1}\Val\paren{\ProArc{i}}} \cdot \eex{\LDist{\ProArcP{k}_0}}{\Restrict{\FinalMeas{\HonA}{\cpi}{\vetas}}{0}}\nonumber\\ &\quad +(1-p) \cdot \frac{\prod_{i=0}^{k-1}\Val\paren{\ProArcP{i}_{1}}}{\prod_{i=0}^{k-1}\Val(\ProArc{i})} \cdot \eex{\LDist{\ProArcP{k}_1}}{\Restrict{\FinalMeas{\HonA}{\cpi}{\vetas}}{1}}\nonumber \\
&\overset{(2)}{\geq} p \cdot \frac{\prod_{i=0}^{k-1}\Val\paren{\ProArcP{i}_{0}}}{\prod_{i=0}^{k-1}\Val\paren{\ProArc{i}}} \cdot \frac{\sum_{j=0}^{\dmsi} \alpha^0_j\prod_{t=0}^{j-1}(1-\beta^0_t)^{k+1}(1-\alpha^0_t)}{\prod_{i=0}^{k-1}\Val\paren{\ProArcP{i}_0}}\nonumber \\
&\quad + (1-p) \cdot \frac{\prod_{i=0}^{k-1}\Val\paren{\ProArcP{i}_{1}}}{\prod_{i=0}^{k-1}\Val(\ProArc{i})} \cdot \frac{\sum_{j=0}^{\dmsi} \alpha^1_j\prod_{t=0}^{j-1}(1-\beta^1_t)^{k+1}(1-\alpha^1_t)}{\prod_{i=0}^{k-1}\Val\paren{\ProArcP{i}_1}}\nonumber \\
&= \frac{p \cdot \left(\sum_{j=0}^\dmsi \alpha^0_j\prod_{t=0}^{j-1}(1-\beta^0_t)^{k+1}(1-\alpha^0_t)\right)}{\prod_{i=0}^{k-1}\Val(\ProArc{i})}
+ \frac{(1-p)\cdot\left(\sum_{j=0}^{\dmsi} \alpha^1_j\prod_{t=0}^{j-1}(1-\beta^1_t)^{k+1}(1-\alpha^1_t)\right)}{\prod_{i=0}^{k-1}\Val(\ProArc{i})},\nonumber
\end{align}
where (1) follows from \Cref{claim:weights} and (2) follows from \cref{eq:MainLemmaIndRight,eq:MainLemmaIndLeft}.

Our next step is to establish a connection between the above $\set{\alpha^0_j,\alpha^1_j}_{j\in(z)}$ and $\set{\beta^0_j,\beta^1_j}_{j\in(z-1)}$ to $\set{\alpha_j}_{j\in(z)}$ and $\set{\beta_j}_{j\in(z-1)}$ (appearing in the lemma's statement). We prove the following claims.
\begin{claim}\label{clm:RelatingInductionConstants}
If $\HonA$ controls $\Root(\cpi)$, it holds that $\beta^0_j=\beta_j$ for every $j\in(z-1)$ and $\beta^1_j=\beta_j$ for every $j\in(z^1-1)$.
\end{claim}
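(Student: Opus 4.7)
The plan is to establish, by induction on $(\HonC,i)\preceq (\HonB,j)$, the following stronger statement: when $\HonA$ controls $\Root(\cpi)$, for both $b\in\zo$,
\begin{enumerate}
\item[(i)] $\icpib{(\HonC,i)}{\veta{b}}{b} = \Restrict{\icpih{(\HonC,i)}{\vetas}}{b}$ as protocols, and
\item[(ii)] $\idsm{(\HonC,i)}{\cpi_b}{\veta{b}} \equiv \Restrict{\idsm{(\HonC,i)}{\cpi}{\vetas}}{b}$ as measures.
\end{enumerate}
Since conditional protocols inherit their control scheme (\cref{def:CondProtocol}), every $\icpih{(\HonC,i)}{\vetas}$ has $\HonA$ at the root, and so does every restriction to a $b$-subtree. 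Given (i)-(ii) at level $(\HonB,j)$, the desired equality $\beta_j=\beta^b_j$ follows from the $\HonA$-immune analogue of \cref{prop:DomMeasPro}(\ref{item:Bimmune}) applied to the $\HonB$-dominated measure of $\icpih{(\HonB,j)}{\vetas}$, which gives $\eex{\LDist{\icpih{(\HonB,j)}{\vetas}}}{\idsm{(\HonB,j)}{\cpi}{\vetas}}=\eex{\LDist{\icpih{(\HonB,j)}{\vetas}_b}}{\Restrict{\idsm{(\HonB,j)}{\cpi}{\vetas}}{b}}$, and then substituting via (i)-(ii). Running this for $b=0$ and $b=1$ yields the two parts of the claim (for $b=1$ the argument is only valid up to $j=\dmsi^1-1$, matching the statement).

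The base case $(\HonA,0)$ is immediate for (i), while (ii) reduces to the $\HonA$-maximal property of $\ADomMeas{\cpi}$ (\cref{prop:DomMeasPro}(\ref{item:Amaximal})) combined with the fact that $\eta^b_{(\HonA,0)}=\eta_{(\HonA,0)}$ whenever $\EdgeDist_\cpi(\EmptyString,b)>0$ (Case 2 or Case 3 of \cref{def:etaP}, available because $\HonA$ controls the root). The induction step $(\HonA,i)\mapsto(\HonB,i)$ proves (i) by applying \cref{prop:CondProLocal} and plugging in (ii) from level $(\HonA,i)$; for (ii) at level $(\HonB,i)$, one invokes the $\HonA$-minimal analogue of \cref{prop:DomMeasPro}(\ref{item:Bminimal}) on $\icpih{(\HonB,i)}{\vetas}$: the two resulting sub-cases are mirrored \emph{exactly} by the two relevant cases of \cref{def:etaP}, so that in the Smaller case both sides equal $\eta_{(\HonB,i)}\cdot\BDomMeas{\icpih{(\HonB,i)}{\vetas}_b}$, and otherwise both equal $\eta_{(\HonB,i)}\cdot(\DMExRes{(\HonB,i)}{1-b}/\DMExRes{(\HonB,i)}{b})\cdot\BDomMeas{\icpih{(\HonB,i)}{\vetas}_b}$. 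The step $(\HonB,i)\mapsto(\HonA,i+1)$ is symmetric, again combining \cref{prop:CondProLocal} with $\HonA$-maximal (which now acts on the $\HonA$-dominated measure of a protocol whose root is still controlled by $\HonA$).

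The main obstacle is purely bookkeeping rather than conceptual: handling the degenerate cases in which $\EdgeDist_{\icpih{(\HonC,i)}{\vetas}}(\EmptyString,b)\in\zo$ at some intermediate level, which can collapse one of the compared sub-protocols to $\perp$. Such cases are covered by Cases 1-2 of \cref{def:etaP}; in them both $\Restrict{\idsm{(\HonC,i)}{\cpi}{\vetas}}{b}$ and $\idsm{(\HonC,i)}{\cpi_b}{\veta{b}}$ either vanish or collapse to a single-subtree identity, so (ii) is preserved trivially, and (i) degenerates to the straightforward equality of the two single remaining subtrees. Once these boundary situations are dispatched, the inductive step in the $\EdgeDist(\EmptyString,\cdot)\in(0,1)$ regime carries through as sketched, and \cref{claim:dmsiequal} guarantees that the induction has enough room to reach all indices in the ranges $(\dmsi-1)$ and $(\dmsi^1-1)$ stated in the claim.
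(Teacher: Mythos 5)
Your proposal is correct and follows essentially the same route as the paper. Your stronger induction hypothesis (i)--(ii) is exactly Proposition~\ref{claim:ConRestrict}, which the paper proves as a standalone lemma and then cites (together with Proposition~\ref{claim:ChangeContext}, Proposition~\ref{claim:EdgeDistAna}, Claim~\ref{claim:AlphaIs1}, and the $\HonA$-immunity half of Proposition~\ref{prop:DomMeasPro}) to dispatch \cref{clm:RelatingInductionConstants} in a few lines; you are re-deriving these facts inline as a single joint induction rather than factoring them out, but the key ingredients --- the restriction property of dominated submeasure sequences, the immunity of the $\HonB$-dominated measure when $\HonA$ controls the root, and the degenerate edge-distribution analysis that shows all the mass routes to the $0$-subtree once $j\ge \dmsi^1$ (the content of \cref{claim:AlphaIs1}\,+\,\cref{claim:EdgeDistAna}) --- are identical.
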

It is a direct implication of \cref{prop:DomMeasPro} that $\beta^0_j = \beta^1_j=\beta_j$ for $j\in(\dmsi^1-1)$. Moreover, $\beta^0_j=\beta_j$ for every $\dmsi^1 \leq j \leq \dmsi-1$. The latter is harder to grasp without the technical proof of the claim, which is provided in \cref{sec:MainIdealMissingProofs}.

\begin{claim}\label{claim:AlphaIs1}
If $\HonA$ controls $\Root(\cpi)$ and $\dmsi^1<\dmsi$, it holds that $\alpha^1_{\dmsi^1}=1$.
\end{claim}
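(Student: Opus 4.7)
I will argue by contradiction: assume $\alpha^1_{\dmsi^1}\neq 1$, so that by the minimality of $\dmsi^1$ in \cref{def:maxInd} one must have $\beta^1_{\dmsi^1}=1$. The goal is to extend the equality $\beta^1_j=\beta_j$ of \cref{clm:RelatingInductionConstants} by one index and conclude $\beta^1_{\dmsi^1}=\beta_{\dmsi^1}$. Granting this, $\beta_{\dmsi^1}=1$ would force $\icpih{(\HonA,\dmsi^1+1)}{\vetas}=\perp$ by \cref{def:CondProtocol,def:refinements}, and hence $\icpih{(\HonA,\dmsi)}{\vetas}=\perp$ (as $\dmsi^1+1\leq \dmsi$), contradicting the standing assumption of the proof of \cref{lemma:IdealMainLemmaDet} that $\icpih{(\HonA,\dmsi)}{\vetas}\neq\perp$.

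\textbf{Key step.} Because the controlling scheme is preserved under conditioning, the hypothesis that $\HonA$ controls $\Root(\cpi)$ propagates to $\HonA$ controlling $\Root(\icpih{(\HonC,j)}{\vetas})$ for every $(\HonC,j)\in[(\HonA,\dmsi)]$. I plan to establish, by induction on $(\HonC,j)\preceq(\HonB,\dmsi^1)$, the restriction identity
\begin{align*}
\Restrict{\icpih{(\HonC,j)}{\vetas}}{1}=\icpih{(\HonC,j)}{\veta{1}}.
\end{align*}
At the $(\HonA,j)$ levels this follows from \cref{prop:CondProLocal} commuting conditioning with restriction, the $\HonA$-maximal property of $\ADomMeas{\cdot}$ from \cref{prop:DomMeasPro}(\ref{item:Amaximal}), and the fact that \cref{def:etaP} sets $\eta^1_{(\HonA,j)}=\eta_{(\HonA,j)}$ (because $\HonA$ controls the root). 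At the $(\HonB,j)$ levels it follows analogously, now using the dual ``$\HonA$-minimal'' property of $\BDomMeas{\cdot}$ on $\HonA$-controlled-root protocols (the symmetric counterpart of \cref{prop:DomMeasPro}(\ref{item:Bminimal})), matched against the case split of \cref{def:etaP} on $\Smaller{\icpih{(\HonB,j)}{\vetas}}{1}$. The minimality of $\dmsi^1$ certifies $\icpih{(\HonC,j)}{\veta{1}}\neq\perp$ for every $(\HonC,j)\preceq(\HonB,\dmsi^1)$, so the induction genuinely reaches the critical boundary index.

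\textbf{Finish and main obstacle.} With the identity in hand, write $x_b=\eex{\LDist{(\icpih{(\HonB,\dmsi^1)}{\vetas})_b}}{\BDomMeas{(\icpih{(\HonB,\dmsi^1)}{\vetas})_b}}$. The $\HonA$-minimal property of $\BDomMeas{\cdot}$ yields $\beta_{\dmsi^1}=\eta_{(\HonB,\dmsi^1)}\cdot\min(x_0,x_1)$. Using the restriction identity, $\beta^1_{\dmsi^1}=\eta^1_{(\HonB,\dmsi^1)}\cdot x_1$, and a short case split on whether $x_1\leq x_0$ (matching the two branches of \cref{def:etaP} that apply when $\HonB$ does not control the root) collapses this to $\eta_{(\HonB,\dmsi^1)}\cdot\min(x_0,x_1)=\beta_{\dmsi^1}$, completing the contradiction. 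The main obstacle is executing the boundary step of the induction cleanly: \cref{clm:RelatingInductionConstants} as stated stops one index short, so one has to re-run its structural argument at $(\HonB,\dmsi^1)$ while relying only on the minimality of $\dmsi^1$ to certify that $\icpih{(\HonB,\dmsi^1)}{\veta{1}}$ and its $\HonB$-dominated measure are non-trivial, and that the scaling factors appearing in \cref{def:etaP} are well-defined at this last step.
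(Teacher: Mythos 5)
Your overall plan is the same as the paper's: derive $\beta_{\dmsi^1}=\beta^1_{\dmsi^1}=1$, force $\icpih{(\HonA,\dmsi^1+1)}{\vetas}=\perp$, and contradict $\icpih{(\HonA,\dmsi)}{\vetas}\neq\perp$. But there is a genuine gap in how you bridge $\beta^1_{\dmsi^1}$ to $\beta_{\dmsi^1}$, and it lives precisely in the spot you flagged as an ``obstacle'' and then claimed to resolve.

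Your key identity $\beta_{\dmsi^1}=\eta_{(\HonB,\dmsi^1)}\cdot\min(x_0,x_1)$, and the two-way case split matched to \cref{def:etaP}, rely on the non-degenerate branches of \cref{prop:DomMeasPro} and \cref{def:Dominated}: they presuppose $\EdgeDist_{\icpih{(\HonB,\dmsi^1)}{\vetas}}(\EmptyString,0),\EdgeDist_{\icpih{(\HonB,\dmsi^1)}{\vetas}}(\EmptyString,1)\in(0,1)$. You justify the right-hand edge being positive via the minimality of $\dmsi^1$ (so the $\veta{1}$-sequence on $\cpi_1$ does not die before $(\HonB,\dmsi^1)$), and you explicitly say you rely \emph{only} on that minimality. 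That is not enough: \cref{claim:EdgeDistAna}(1)(b) gives $\EdgeDist_{\icpih{(\HonB,\dmsi^1)}{\vetas}}(\EmptyString,0)=\EdgeDist_\cpi(\EmptyString,0)\cdot\frac{\prod_{t=0}^{\dmsi^1}(1-\alpha^0_t)}{\prod_{t=0}^{\dmsi^1}(1-\alpha_t)}$, so positivity of the \emph{left} edge needs $\alpha^0_t<1$ for every $t\le\dmsi^1$. This is a statement about the $\veta{0}$-sequence. It follows from $\dmsi^1<\dmsi^0$, which in turn needs the standing assumption $\dmsi^1\le\dmsi^0$ \emph{plus} \cref{claim:dmsiequal} (giving $\dmsi^0=\dmsi>\dmsi^1$); you never invoke \cref{claim:dmsiequal} or argue $\dmsi^1<\dmsi^0$. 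Without it, the degenerate branch ($\EdgeDist(\EmptyString,0)=0$) is not excluded, and in that branch your formula is simply wrong: one would have $\cpi'_0=\perp$, hence $x_0=0$ and ``$\beta_{\dmsi^1}=\eta\cdot\min(x_0,x_1)=0$,'' while in fact $\beta_{\dmsi^1}=\eta\cdot x_1=1$ by the second case of \cref{def:Dominated}. Your two-way case split on $x_1\le x_0$ would never reach that branch, so the contradiction you aim for would not be derived.

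Two smaller remarks. You propose to re-establish the restriction identity $\Restrict{\icpih{(\HonC,j)}{\vetas}}{b}=\icpib{(\HonC,j)}{\veta{b}}{b}$ by induction; this is already \cref{claim:ConRestrict}(1) and holds unconditionally (including when the protocol is $\perp$), so the induction is redundant. And the paper concludes $\beta_{\dmsi^1}=\beta^1_{\dmsi^1}$ more directly via the $\HonA$-immune property (with \cref{claim:ChangeContext}) rather than via $\HonA$-minimal plus unwinding \cref{def:etaP}; both routes are fine once the non-degeneracy of the root's edge distribution is in place, but you should explicitly obtain that non-degeneracy from \cref{claim:dmsiequal} rather than from minimality of $\dmsi^1$ alone.
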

By \cref{clm:RelatingInductionConstants} it follows that as long as an undefined protocol was not reached in one of the subprotocols, then $\beta^0_j=\beta^1_j=\beta_j$. Assuming that $\dmsi^1<\dmsi$ and $\beta^1_{\dmsi^1}=1$, it would have followed that $\beta_{\dmsi^1}=1$, and an undefined protocol is reached in the original protocol before $\dmsi$, a contradiction to our assumption. (Again, see \cref{sec:MainIdealMissingProofs} for the formal proof.)

\cref{clm:RelatingInductionConstants,claim:AlphaIs1,eq:MainLemmaA1} yield that
\begin{align}\label{eq:MainLemmaA2}
\eex{\LDist{\ProArc{k}}}{\FinalMeas{\HonA}{\cpi}{\vetas}}\geq \frac{\sum_{j=0}^{\dmsi}\prod_{t=0}^{j-1}(1-\beta_t)^{k+1} \!\left(p\cdot\alpha^0_j \prod_{t=0}^{j-1}(1-\alpha^0_t) +  (1-p)\cdot\alpha^1_j \!\cdot \prod_{t=0}^{j-1}(1-\alpha^1_t) \right)}{\prod_{i=0}^{k-1}\Val(\ProArc{i})}.
\end{align}

The proof of this case is concluded by plugging the next claim into \cref{eq:MainLemmaA2}.
\begin{claim}\label{claim:alphaRelation}
If $\HonA$ controls $\Root(\cpi)$ it holds that
  \begin{align*}
  \alpha_j \cdot \prod_{t=0}^{j-1}(1-\alpha_t) = p\cdot\alpha^0_j \cdot \prod_{t=0}^{j-1}(1-\alpha^0_t) +  (1-p)\cdot\alpha^1_j \cdot\prod_{t=1}^{j-1}(1-\alpha^1_t)
  \end{align*}
  for any $j\in(\dmsi)$.
\end{claim}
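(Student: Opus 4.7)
The plan is to reduce the claimed pointwise identity to an integrated (cumulative) identity and deduce it by a differencing argument. Concretely, I will prove that for every $m\in(\dmsi)$,
\begin{align*}
\sum_{j=0}^m \alpha_j \prod_{t=0}^{j-1}(1-\alpha_t)(1-\beta_t) = p\sum_{j=0}^m \alpha^0_j \prod_{t=0}^{j-1}(1-\alpha^0_t)(1-\beta_t) + (1-p)\sum_{j=0}^m \alpha^1_j \prod_{t=0}^{j-1}(1-\alpha^1_t)(1-\beta_t),
\end{align*}
and then derive \cref{claim:alphaRelation} by subtracting the identity for $m-1$ from the identity for $m$ and cancelling the common nonzero factor $\prod_{t=0}^{m-1}(1-\beta_t)$ from both sides. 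The factor is nonzero because $\icpih{(\HonA,\dmsi)}{\vetas}\neq \perp$ forces $\beta_t<1$ for every $t\in(\dmsi-1)$.

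The cumulative identity itself will be obtained as an expected-value computation. The first step is to observe that essentially the same proof as \cref{lemma:propCombineMeas}(\ref{item:CombineMeas2}) yields the analogue for the submeasure $\FinalMeas{\HonA}{\cpi}{\vetas}$, namely $\eex{\LDist{\cpi}}{\FinalMeas{\HonA}{\cpi}{\vetas}} = \sum_{j=0}^\dmsi \alpha_j \prod_{t=0}^{j-1}(1-\alpha_t)(1-\beta_t)$; the factors $\eta_{(\HonC,j)}$ are simply absorbed into the $\alpha_j,\beta_j$ defined via \cref{def:refinements}. The second step uses that a random leaf of $\cpi$ lies in $\cpi_b$ with probability $\EdgeDist_\cpi(\emptyset,b)$, giving $\eex{\LDist{\cpi}}{\FinalMeas{\HonA}{\cpi}{\vetas}} = p\cdot\eex{\LDist{\cpi_0}}{\Restrict{\FinalMeas{\HonA}{\cpi}{\vetas}}{0}} + (1-p)\cdot\eex{\LDist{\cpi_1}}{\Restrict{\FinalMeas{\HonA}{\cpi}{\vetas}}{1}}$, which by \cref{claim:RestrictFinal} equals the same expression with $\FinalMeas{\HonA}{\cpi_b}{\veta{b}}$ in place of $\Restrict{\FinalMeas{\HonA}{\cpi}{\vetas}}{b}$. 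Applying the submeasure-expectation formula to each $\cpi_b$ and invoking \cref{clm:RelatingInductionConstants} to substitute $\beta^b_t=\beta_t$ then yields the cumulative identity for $m=\dmsi$. The identity for arbitrary $m\leq \dmsi$ follows by reapplying the argument to the truncated real vector $\vetas_{\leq m}$, which does not change $\alpha_0,\ldots,\alpha_m$ or $\beta_0,\ldots,\beta_{m-1}$ (nor the corresponding $\alpha^b_t,\beta^b_t$).

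The main obstacle will be the asymmetric case $\dmsi^1<\dmsi^0=\dmsi$, in which $\alpha^1_{\dmsi^1}=1$ by \cref{claim:AlphaIs1}, the protocols $\icpib{(\HonC,j)}{\veta{1}}{1}$ are $\perp$ for $\succ(\HonC,j)\succ(\HonA,\dmsi^1)$, and consequently $\beta^1_t$ need not coincide with $\beta_t$ once $t\geq \dmsi^1$. This is handled by observing that for every $j>\dmsi^1$ the product $\alpha^1_j\prod_{t=0}^{j-1}(1-\alpha^1_t)$ already vanishes (it contains the zero factor $1-\alpha^1_{\dmsi^1}$), so these summands contribute nothing on the right-hand side of the differenced identity. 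For $m\leq \dmsi^1$ the cancellation of $\prod_{t=0}^{m-1}(1-\beta_t)$ is clean because \cref{clm:RelatingInductionConstants} supplies $\beta^0_t=\beta^1_t=\beta_t$ for $t<m$, and for $m>\dmsi^1$ only the $\cpi_0$-branch contributes, where \cref{clm:RelatingInductionConstants} still provides $\beta^0_t=\beta_t$ throughout $t\in(\dmsi-1)$. Either way the cancellation goes through and \cref{claim:alphaRelation} follows.
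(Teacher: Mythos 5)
Your proposal is correct, but it takes a longer, genuinely different route from the one-line proof in the paper. The paper derives the pointwise identity directly: by \cref{claim:ChangeContext}, $\alpha^b_j$ is exactly the expectation of $\Restrict{\idsm{(\HonA,j)}{\cpi}{\vetas}}{b}$ over the $b$-restriction of $\icpih{(\HonA,j)}{\vetas}$, so decomposing the expectation $\alpha_j$ over the two children of the root of $\icpih{(\HonA,j)}{\vetas}$ gives $\alpha_j = \EdgeDist_{\icpih{(\HonA,j)}{\vetas}}(\EmptyString,0)\cdot\alpha^0_j + \EdgeDist_{\icpih{(\HonA,j)}{\vetas}}(\EmptyString,1)\cdot\alpha^1_j$; substituting the root edge distribution from \cref{claim:EdgeDistAna}(1a), namely $\EdgeDist_{\icpih{(\HonA,j)}{\vetas}}(\EmptyString,b) = \EdgeDist_\cpi(\EmptyString,b)\cdot \prod_{t=0}^{j-1}(1-\alpha^b_t)\big/\prod_{t=0}^{j-1}(1-\alpha_t)$, and clearing the denominator gives the claim immediately, with no case split. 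Your argument instead synthesizes a cumulative version of the identity from a submeasure analogue of \cref{lemma:propCombineMeas}(2), \cref{claim:RestrictFinal}, and \cref{clm:RelatingInductionConstants}, then differences in $m$ and cancels $\prod_{t=0}^{m-1}(1-\beta_t)$. This works — you correctly observe that $\icpih{(\HonA,\dmsi)}{\vetas}\neq\perp$ forces $\beta_t<1$ so the cancellation is legitimate, and you correctly dispose of the $\dmsi^1<\dmsi$ asymmetry by noting that the $b=1$ summands beyond $\dmsi^1$ carry the zero factor $(1-\alpha^1_{\dmsi^1})$ so the unknown values $\beta^1_t$ for $t\geq\dmsi^1$ never matter. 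The trade-off is that your route never touches \cref{claim:EdgeDistAna}, at the cost of an extra quantifier over $m$, a truncation-then-difference step, and the asymmetric-$\dmsi^1$ case analysis, all of which the paper's direct pointwise decomposition sidesteps.
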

\cref{claim:alphaRelation} is proven in \cref{sec:MainIdealMissingProofs}, but informally it holds since the probability of visiting the left-hand [\resp right-hand] subprotocol in the conditional protocol $\icpih{(\HonA,j)}{\vetas}$ (in which $\alpha_j$ is defined) is $p\cdot \prod_{t=0}^{j-1}(1-\alpha^0_t)/\prod_{t=0}^{j-1}(1-\alpha_t)$ [\resp $(1-p)\cdot \prod_{t=0}^{j-1}(1-\alpha^1_t)/\prod_{t=0}^{j-1}(1-\alpha_t)$]. Since $\alpha_j$ is defined to be the expected value of some measure in the above conditional protocol, its value is a linear combination of $\alpha^0_j$ and $\alpha^1_j$, with the coefficient being the above probabilities.

\item[$\HonA$ controls $\Root(\cpi)$ and $\Val(\cpi_0)>\Val(\cpi_1)=0$.]
Under these assumptions, we can still use the induction hypothesis for the left-hand subprotocol $\cpi_0$, where for right-hand  subprotocol $\cpi_1$, we argue the following.
\begin{claim}\label{claim:APi0is0}
If $\Val(\cpi_1)=0$, it holds that  $\Restrict{\FinalMeas{\HonA}{\cpi}{\vetas}}{1}\equiv 0$.\footnote{That is, $\Restrict{\FinalMeas{\HonA}{\cpi}{\vetas}}{1}$ is the zero measure.}
\end{claim}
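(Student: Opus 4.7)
My plan is to reduce the claim to an auxiliary statement about dominated measures of zero-value protocols, and then propagate that statement through the construction of $\FinalMeas{\HonA}{\cpi}{\vetas}$. The auxiliary statement I would prove first is: for any protocol $\cpi'$ with $\Val(\cpi')=0$, the measure $\ADomMeas{\cpi'}$ is the zero measure on $\Leaves(\cpi')$, meaning $\ADomMeas{\cpi'}(\ell)=0$ even for unreachable leaves $\ell$. The proof would be by induction on $\round(\cpi')$. The base case is immediate since $\Val(\cpi')=0$ forces $\Color_{\cpi'}$ on the unique leaf to be $0$. For the inductive step I would walk through the cases of \cref{def:Dominated}: if $\EdgeDist_{\cpi'}(\EmptyString,\ell_1)=0$, the definition directly returns $0$; if $\EdgeDist_{\cpi'}(\EmptyString,\ell_1)=1$, then $\Val(\cpi'_{\ell_1})=\Val(\cpi')=0$ and the induction hypothesis gives $\ADomMeas{\cpi'_{\ell_1}}\equiv 0$; and if both root edges have positive probability, the identity $\Val(\cpi')=p\cdot\Val(\cpi'_0)+(1-p)\cdot\Val(\cpi'_1)$, which holds irrespective of who controls the root, forces $\Val(\cpi'_0)=\Val(\cpi'_1)=0$, so the induction hypothesis gives $\ADomMeas{\cpi'_b}\equiv 0$ for both $b$, and the two remaining cases of the definition again return $0$.

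Given this auxiliary statement, the plan for the claim itself is as follows. By iteratively applying \cref{prop:CondProLocal} to the definition of $\icpih{(\HonA,j)}{\vetas}$, the subprotocol $(\icpih{(\HonA,j)}{\vetas})_1$ equals the protocol obtained from $\cpi_1$ by the analogous sequence of conditionings (with the appropriately restricted submeasures). Chaining \cref{prop:CondPro} across those conditionings, every leaf reachable in $(\icpih{(\HonA,j)}{\vetas})_1$ is also reachable in $\cpi_1$; since $\Val(\cpi_1)=0$ means all reachable leaves of $\cpi_1$ have color $0$ and colors are preserved by conditioning, we conclude that $\Val((\icpih{(\HonA,j)}{\vetas})_1)=0$. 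The auxiliary statement then gives $\DomMeas{(\icpih{(\HonA,j)}{\vetas})_1}{\HonA}\equiv 0$. Because $\HonA$ controls $\Root(\cpi)$ and the conditional-protocol construction preserves the control scheme, $\HonA$ also controls $\Root(\icpih{(\HonA,j)}{\vetas})$, so the $\HonA$-maximal property (\cref{prop:DomMeasPro}(\ref{item:Amaximal})) yields $\Restrict{\DomMeas{\icpih{(\HonA,j)}{\vetas}}{\HonA}}{1}\equiv\DomMeas{(\icpih{(\HonA,j)}{\vetas})_1}{\HonA}\equiv 0$. Therefore $\idsm{(\HonA,j)}{\cpi}{\vetas}(1\concat\ell)=\eta_{(\HonA,j)}\cdot 0=0$ for every $j\in(\dmsi)$ and every $\ell$, and every summand in $\FinalMeas{\HonA}{\cpi}{\vetas}(1\concat\ell)$ vanishes, giving $\Restrict{\FinalMeas{\HonA}{\cpi}{\vetas}}{1}\equiv 0$.

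The main subtlety is in the case analysis for the auxiliary lemma. One might worry about a spurious $0/0$ in the last case of \cref{def:Dominated}, but that case is never reached under the hypothesis $\Val(\cpi')=0$: when $\HonB$ controls the root and both subtree values are $0$, the induction hypothesis gives $\eex{\LDist{\cpi'_0}}{\ADomMeas{\cpi'_0}}=\eex{\LDist{\cpi'_1}}{\ADomMeas{\cpi'_1}}=0$, so $\Smaller{\cpi'}{\ell_1}=1$ for both $\ell_1\in\zo$, and only the penultimate case of the definition is active. The other delicate point is chaining the equalities from \cref{prop:CondProLocal} across all the levels $(\HonA,0),(\HonB,0),\ldots,(\HonA,j)$, but this is a direct induction on $j$ using the single-step version.
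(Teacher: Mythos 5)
Your proposal is correct and takes essentially the same route as the paper: the paper's proof observes that $\idsm{(\HonA,j)}{\cpi_1}{\veta{1}}\equiv 0$ for all $j$ (implicitly using exactly your auxiliary fact that a zero-value protocol has zero $\HonA$-dominated measure, propagated through the conditional protocols because conditioning only shrinks the support), so that $\FinalMeas{\HonA}{\cpi_1}{\veta{1}}\equiv 0$, and then invokes \cref{claim:RestrictFinal} to identify this with $\Restrict{\FinalMeas{\HonA}{\cpi}{\vetas}}{1}$. Your unrolled version restricts each $\idsm{(\HonA,j)}{\cpi}{\vetas}$ to the $1$-subtree via \cref{claim:ConRestrict} instead of using \cref{claim:RestrictFinal} as a black box; the only minor imprecision is that \cref{prop:DomMeasPro}(\ref{item:Amaximal}) is stated only for the case $\EdgeDist_{\icpih{(\HonA,j)}{\vetas}}(\EmptyString,b)\notin\zo$ for both $b$, so in the degenerate edge cases you should instead read off the restriction directly from \cref{def:Dominated} (where the restriction is either zero or equal to the subtree measure, and both give zero), but this does not affect the substance of the argument.
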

\cref{claim:APi0is0} holds since according to \cref{claim:RestrictFinal} we can simply argue that $\FinalMeas{\HonA}{\cpi_1}{\veta{1}}$ is the zero measure, and this holds since the latter measure is a combination of $\HonA$-dominated measures, all of which are the zero measure in a zero-value protocol.

Using \cref{claim:APi0is0}, similar computations to the ones in \cref{eq:MainLemmaA1} yield that
\begin{align}\label{eq:MainLemmaA3}
\lefteqn{\eex{\LDist{\ProArc{k}}}{\FinalMeas{\HonA}{\cpi}{\vetas}}}\\
&= \EdgeDist_{\ProArcP{k}}(\EmptyString,0)\cdot \eex{\LDist{\ProArcP{k}_0}}{\Restrict{\FinalMeas{\HonA}{\cpi}{\vetas}}{0}} + \EdgeDist_{\ProArcP{k}}(\EmptyString,1)\cdot \eex{\LDist{\ProArcP{k}_1}}{\Restrict{\FinalMeas{\HonA}{\cpi}{\vetas}}{1}} \nonumber\\
&\geq p \cdot \frac{\prod_{i=0}^{k-1}\Val\paren{\ProArcP{i}_{0}}}{\prod_{i=0}^{k-1}\Val\paren{\ProArc{i}}} \cdot \frac{\sum_{j=0}^{\dmsi} \alpha^{0}_j\prod_{t=0}^{j-1}(1-\beta^{0}_t)^{k+1}(1-\alpha^{0}_t)}{\prod_{i=0}^{k-1}\Val\paren{\ProArcP{i}_{0}}} \nonumber\\
&= \frac{p \cdot \left(\sum_{j=0}^{\dmsi} \alpha^{0}_j\prod_{t=0}^{j-1}(1-\beta^{0}_t)^{k+1}(1-\alpha^{0}_t)\right)}{\prod_{i=0}^{k-1}\Val(\ProArc{i})}.\nonumber
\end{align}
Using a similar argument to that of \cref{eq:MainLemmaA2}, combining \cref{clm:RelatingInductionConstants,eq:MainLemmaA3} yields that
\begin{align}\label{eq:MainLemmaA4}
\eex{\LDist{\ProArc{k}}}{\FinalMeas{\HonA}{\cpi}{\vetas}} \geq \frac{\sum_{j=0}^{\dmsi}\prod_{t=0}^{j-1}(1-\beta_t)^{k+1} \left[p\cdot\alpha^{0}_j \prod_{t=0}^{j-1}(1-\alpha^{0}_t)\right]}{\prod_{i=0}^{k-1}\Val(\ProArc{i})}.
\end{align}
The proof of this case is concluded by plugging the next claim (proven in \cref{sec:MainIdealMissingProofs}) into \cref{claim:alphaRelation}, and plugging the result into \cref{eq:MainLemmaA4}.
\begin{claim}\label{claim:alphaIsZero}
If $\Val(\cpi_1)=0$, it holds that $\alpha^1_j=0$ for every $j\in(\dmsi)$.
\end{claim}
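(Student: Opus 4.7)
The plan is to show that the hypothesis $\Val(\cpi_1) = 0$ propagates through the entire conditional sequence built from $\cpi_1$ with $\veta{1}$, so that every $\icpih{(\HonC,j)}{\veta{1}}$ appearing in $(\HonA,\dmsi,\veta{1})$-$\DMS{\cpi_1}$ either equals $\perp$ or has value $0$. From this, each $\HonA$-dominated measure in the sequence is the zero measure, and hence $\alpha^1_j = 0$ for every $j \in (\dmsi)$.

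The first ingredient is an auxiliary observation: if $\Val(\cpi') = 0$, then $\DomMeas{\cpi'}{\HonA}$ is the zero measure. I would prove this by induction on $\round(\cpi')$. The base case is immediate from \cref{def:Dominated}, since a single $0$-leaf $\ell$ has $\ADomMeas{\cpi'}(\ell) = \Color_{\cpi'}(\ell) = 0$. In the inductive step, when both edges out of $\Root(\cpi')$ have positive probability, the decomposition $\Val(\cpi') = \EdgeDist_{\cpi'}(\EmptyString,0)\cdot\Val(\cpi'_0) + \EdgeDist_{\cpi'}(\EmptyString,1)\cdot\Val(\cpi'_1)$ forces $\Val(\cpi'_0) = \Val(\cpi'_1) = 0$, so by the induction hypothesis $\eex{\LDist{\cpi'_b}}{\DomMeas{\cpi'_b}{\HonA}} = 0$ for both $b \in \zo$. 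This in turn gives $\Smaller{\cpi'}{\ell_1} = 1$ for every $\ell_1$, which lands us in the unscaled branch of \cref{def:Dominated}, where the recursion yields zero by the induction hypothesis. The degenerate cases where one of the edges out of $\Root(\cpi')$ has probability $0$ are handled analogously by the appropriate branches of \cref{def:Dominated}.

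The second ingredient is that conditioning preserves the property ``value $0$'': by \cref{prop:CondPro}(2), for any measure $\Meas$ with $\eex{\LDist{\cpi'}}{\Meas} < 1$ we have $\VerticesDist_{\paren{\CondPro{\cpi'}{\Meas}}}(\ell) = \VerticesDist_{\cpi'}(\ell) \cdot (1 - \Meas(\ell))/(1 - \eex{\LDist{\cpi'}}{\Meas})$, so every $1$-leaf $\ell$ of $\cpi'$ with $\VerticesDist_{\cpi'}(\ell) = 0$ still has $\VerticesDist_{\paren{\CondPro{\cpi'}{\Meas}}}(\ell) = 0$, and hence $\Val(\CondPro{\cpi'}{\Meas}) = 0$ whenever the conditional protocol is not $\perp$. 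Walking along the chain $(\HonA,0) \prec (\HonB,0) \prec (\HonA,1) \prec \cdots \prec (\HonA,\dmsi)$ starting from $\Val(\icpih{(\HonA,0)}{\veta{1}}) = \Val(\cpi_1) = 0$, a trivial induction using \cref{def:refinements} now shows that every $\icpih{(\HonC,j)}{\veta{1}}$ in the sequence is either $\perp$ or has value $0$. Combining with the first ingredient, $\DomMeas{\icpih{(\HonA,j)}{\veta{1}}}{\HonA}$ is the zero measure for every $j \in (\dmsi)$ (using the convention $\DomMeas{\perp}{\HonA} \equiv 0$), so $\idsm{(\HonA,j)}{\cpi_1}{\veta{1}} = \eta^1_{(\HonA,j)}\cdot\DomMeas{\icpih{(\HonA,j)}{\veta{1}}}{\HonA}$ is the zero measure, giving $\alpha^1_j = 0$. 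I do not anticipate any serious obstacle — the proof is essentially bookkeeping around the recursive definitions already in place. The only mildly subtle point is handling the ratio branch of \cref{def:Dominated} in the auxiliary observation, but this is circumvented by noting that under the hypothesis both relevant expectations vanish and we fall into the unscaled case.
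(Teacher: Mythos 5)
Your proposal is correct and follows essentially the same route as the paper's (one-line) proof, which also argues that all the measures $\idsm{(\HonA,j)}{\cpi_1}{\veta{1}}$ in the $\cpi_1$-sequence vanish; you have simply spelled out the two underlying facts (a zero-value protocol has a zero $\HonA$-dominated measure, and conditioning preserves the zero-value property, with the would-be division-by-zero branch of $\Smaller$ never triggered). The paper additionally invokes \cref{claim:ChangeContext} as its bridge to $\alpha^1_j$, but since $\alpha^1_j$ is by definition the expectation over the $\cpi_1$-native sequence, your direct argument needs no such bridge.
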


\item[$\HonA$ controls $\Root(\cpi)$ and $\Val(\cpi_1)>\Val(\cpi_0)=0$.]
The proof of the lemma under these assumptions is analogous to the previous case.

\end{description}

We have concluded the proof for cases in which $\HonA$ controls $\Root(\cpi)$, and now proceed to prove the cases in which $\HonB$ controls $\Root(\cpi)$. Roughly speaking, $\HonA$ and $\HonB$ switched roles, and claims true before regarding $\beta_j$ are now true for $\alpha_j$, and vice versa. Moreover, the analysis above relies on the probabilities that the recursive biased-continuation attacker visits the subprotocols $\cpi_0$ and $\cpi_1$ when it plays the role of $\HonA$ and controls $\Root(\cpi)$. When $\HonB$ controls $\Root(\cpi)$, however, these probabilities do not change (namely, they remain $p$ and $1-p$ respectively). To overcome this difficulty we use a convex type argument stated in \cref{lemma:calculus1}.

\begin{description}
 \item[$\HonB$ controls $\Root(\cpi)$ and $\Val(\cpi_0),\Val(\cpi_1)>0$.] In this case \cref{eq:MainLemmaIndLeft,eq:MainLemmaIndRight} hold.

    Compute
\begin{align}\label{eq:MainLemmaB1}
\lefteqn{\eex{\LDist{\ProArc{k}}}{\FinalMeas{\HonA}{\cpi}{\vetas}}}\\
&= p\cdot \eex{\LDist{\Restrict{\ProArc{k}}{0}}}{\Restrict{\FinalMeas{\HonA}{\cpi}{\vetas}}{0}} + (1-p)\cdot \eex{\LDist{\Restrict{\ProArc{k}}{1}}}{\Restrict{\FinalMeas{\HonA}{\cpi}{\vetas}}{1}} \nonumber\\
&\geq p \cdot \frac{\sum_{j=0}^{\dmsi} \alpha^0_j\prod_{t=0}^{j-1}(1-\beta^0_t)^{k+1}(1-\alpha^0_t)}{\prod_{i=0}^{k-1}\Val\paren{\Restrict{\ProArc{i}}{0}}}
+ (1-p) \cdot \frac{\sum_{j=0}^{\dmsi} \alpha^1_j\prod_{t=0}^{j-1}(1-\beta^1_t)^{k+1}(1-\alpha^1_t)}{\prod_{i=0}^{k-1}\Val\paren{\Restrict{\ProArc{i}}{1}}},\nonumber
\end{align}
where the inequality follows from \cref{eq:MainLemmaIndRight,eq:MainLemmaIndLeft}.
If $\HonB$ controls $\Root(\cpi)$, we can prove the next claims (proven in \cref{sec:MainIdealMissingProofs}), analogous to \cref{clm:RelatingInductionConstants,claim:AlphaIs1}.
\begin{claim}\label{clm:RelatingInductionConstantsB}
If $\HonB$ controls $\Root(\cpi)$, it holds that $\alpha^0_j=\alpha_j$ for every $j\in(\dmsi)$ and that $\alpha^1_j = \alpha_j$ for every $j\in(\dmsi^1)$.
\end{claim}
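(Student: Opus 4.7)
The plan is to parallel the proof of Claim \ref{clm:RelatingInductionConstants} with the roles of $\HonA$ and $\HonB$ exchanged throughout. I would proceed by induction on $j$, simultaneously establishing two auxiliary identifications: for every pair $(\HonC,j)$ preceding the target index, first, the conditional protocol $\icpib{(\HonC,j)}{\veta{b}}{b}$ coincides with the restriction $\Restrict{\icpih{(\HonC,j)}{\vetas}}{b}$; and second, the submeasure $\idsm{(\HonC,j)}{\cpi_b}{\veta{b}}$ equals $\Restrict{\idsm{(\HonC,j)}{\cpi}{\vetas}}{b}$. Once these identifications are in place, the claim follows quickly: we get $\alpha^b_j = \eex{\LDist{\Restrict{\icpih{(\HonA,j)}{\vetas}}{b}}}{\Restrict{\idsm{(\HonA,j)}{\cpi}{\vetas}}{b}}$, and since $\HonB$ controls the root of $\icpih{(\HonA,j)}{\vetas}$ (its control scheme is preserved by Definition \ref{def:CondProtocol}), the $\HonB$-immune property of the $\HonA$-dominated measure (Proposition \ref{prop:DomMeasPro}(\ref{item:Bimmune})) forces both restrictions to have the same expected value, which by total expectation equals $\alpha_j$.

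For the base case $j=0$ the protocol identity is immediate from $\icpih{(\HonA,0)}{\vetas} = \cpi$, and the measure identity will follow from the case analysis in Definition \ref{def:etaP}. Since $\HonB$ controls the root, the $\HonB$-minimal property of $\ADomMeas{\cpi}$ (Proposition \ref{prop:DomMeasPro}(\ref{item:Bminimal})) dictates a specific rescaling between $\Restrict{\ADomMeas{\cpi}}{b}$ and $\ADomMeas{\cpi_b}$, and the coefficient $\eta^b_{(\HonA,0)}$ in Definition \ref{def:etaP} is chosen precisely to cancel it, giving $\eta^b_{(\HonA,0)}\cdot\ADomMeas{\cpi_b} \equiv \Restrict{\eta_{(\HonA,0)}\cdot\ADomMeas{\cpi}}{b}$. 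The inductive step will use Proposition \ref{prop:CondProLocal} to lift the protocol identity through one more layer of conditioning, while the analogous equalities at the intermediate $(\HonB,j)$ levels will be handled by the mirror-image argument applied to $\BDomMeas{\cdot}$ (using now the $\HonA$-maximal property of $\BDomMeas{\cdot}$, symmetric to Proposition \ref{prop:DomMeasPro}(\ref{item:Amaximal})).

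The main obstacle, and the reason $\alpha^1_j = \alpha_j$ is asserted only for $j \in (\dmsi^1)$, is that the right-hand conditional sequence may collapse to $\perp$ earlier than both the left-hand one and the original sequence: once we reach the smallest $\dmsi^1$ at which $\alpha^1_{\dmsi^1} = 1$ or $\beta^1_{\dmsi^1} = 1$, the protocol $\icpib{(\HonC,j)}{\veta{1}}{1}$ becomes $\perp$ for all $(\HonC,j) \succ (\HonA,\dmsi^1)$, so the simultaneous induction ceases to apply beyond this point. On the left side, Claim \ref{claim:dmsiequal} guarantees $\dmsi^0 = \dmsi$, so no such collapse occurs and the induction carries through all of $j \in (\dmsi)$. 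The technical bulk of the proof will consist of tracking precisely when and how the two scaling factors $\eta^0$ and $\eta^1$ diverge, and in particular verifying that no spurious $\perp$ arises in the left-hand sequence before index $\dmsi$.
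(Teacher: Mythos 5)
Your overall structure mirrors what the paper does: the two ``auxiliary identifications'' you propose to establish by induction are exactly the content of Proposition~\ref{claim:ConRestrict} (together with Proposition~\ref{claim:ChangeContext}), which the paper has already proved; once those are in place, the $\HonB$-immune property (Proposition~\ref{prop:DomMeasPro}(\ref{item:Bimmune})) and a total-expectation decomposition give $\alpha^0_j=\alpha^1_j=\alpha_j$ whenever both edge probabilities at the root of $\icpih{(\HonA,j)}{\vetas}$ lie strictly between $0$ and $1$. That covers all $j\in(\dmsi^1)$.

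However, as written your endgame does not cover the range $\dmsi^1 < j \le \dmsi$, where you still need $\alpha^0_j=\alpha_j$. For those $j$, Claim~\ref{claim:BetaIs1} forces $\beta^1_{\dmsi^1}=1$, so by Proposition~\ref{claim:EdgeDistAna}(\ref{claim:EdgeDistAna:2}) the root edge of $\icpih{(\HonA,j)}{\vetas}$ has $\EdgeDist(\EmptyString,1)=0$; Proposition~\ref{prop:DomMeasPro} is stated only for $\EdgeDist\notin\zo$ and hence the $\HonB$-immune argument simply does not apply there. The fix is easy --- either invoke the edge-probability-$1$ branches of Definition~\ref{def:Dominated} and Definition~\ref{def:etaP} (which give $\Restrict{\idsm{(\HonA,j)}{\cpi}{\vetas}}{0}\equiv\idsm{(\HonA,j)}{\cpi_0}{\veta{0}}$ directly), or observe that the total-expectation decomposition then puts weight $1$ on the $0$-subtree so that $\alpha_j=\alpha^0_j$ without any appeal to the immune property --- but you should make this case split explicit rather than fold it into the first bullet of your ``main obstacle'' paragraph, which as written only withdraws the $\alpha^1_j$ assertion rather than changing the argument for $\alpha^0_j$. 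Two smaller points: what you call the ``$\HonA$-maximal property of $\BDomMeas{\cdot}$'' should be the $\HonB$-maximal property (the $\HonA$-analogue of Proposition~\ref{prop:DomMeasPro}(\ref{item:Amaximal}) for the $\HonB$-dominated measure applies precisely when $\HonB$ controls the root); and when $\beta^1_{\dmsi^1}=1$ the first collapse to $\perp$ is at $\icpib{(\HonA,\dmsi^1+1)}{\veta{1}}{1}$, not already at $(\HonB,\dmsi^1)$, so ``for all $(\HonC,j)\succ(\HonA,\dmsi^1)$'' slightly overstates where the right-hand sequence dies.
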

\begin{claim}\label{claim:BetaIs1}
If $\HonB$ controls $\Root(\cpi)$ and $\dmsi^1<\dmsi$, it holds that $\beta^1_{\dmsi^1}=1$.
\end{claim}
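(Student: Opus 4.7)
The plan is to prove Claim \ref{claim:BetaIs1} by contradiction, in a manner strictly dual to the informal argument sketched for Claim \ref{claim:AlphaIs1} but with the roles of $\HonA$ and $\HonB$ (and correspondingly of $\alpha$ and $\beta$) interchanged, since it is now $\HonB$ who controls $\Root(\cpi)$. In fact the argument should be slightly cleaner than for Claim \ref{claim:AlphaIs1}: because \cref{clm:RelatingInductionConstantsB} yields the equality $\alpha^1_j=\alpha_j$ over the whole range $j\in(\dmsi^1)$ (including the endpoint $j=\dmsi^1$), there will be no need to massage the index range as in the $\HonA$-at-root case, where \cref{clm:RelatingInductionConstants} only gave the companion identity on $(\dmsi^1-1)$.

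The first step is to assume $\dmsi^1 < \dmsi$ and, toward a contradiction, that $\beta^1_{\dmsi^1}\neq 1$. By the defining property of $\dmsi^1$ (\cref{def:maxInd}), the strict inequality $\dmsi^1<\dmsi$ forces at least one of $\alpha^1_{\dmsi^1}=1$ or $\beta^1_{\dmsi^1}=1$; ruling out the second by our hypothesis leaves $\alpha^1_{\dmsi^1}=1$. The second step is to transfer this equality to the parent protocol via \cref{clm:RelatingInductionConstantsB}, which (in the case that $\HonB$ controls the root) gives $\alpha^1_j=\alpha_j$ for every $j\in(\dmsi^1)$; evaluating at $j=\dmsi^1$ yields $\alpha_{\dmsi^1}=\imu{(\HonA,\dmsi^1)}{\cpi}{\vetas}=1$.

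The third step is to unwind the conditioning. By \cref{def:CondProtocol}, conditioning on a measure of total mass equal to $1$ produces the undefined protocol $\perp$; hence $\icpih{(\HonB,\dmsi^1)}{\vetas} = \perp$. Since $\perp$ propagates through all subsequent conditionings (as noted just after \cref{def:mDMSequence} and carried over to \cref{def:refinements}), it follows that $\icpih{(\HonA,\dmsi)}{\vetas} = \perp$, because $\dmsi^1<\dmsi$. This contradicts the standing assumption established at the very beginning of the proof of \cref{lemma:IdealMainLemmaDet} that $\icpih{(\HonA,\dmsi)}{\vetas}\neq \perp$, completing the proof.

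I do not anticipate a serious obstacle: once \cref{clm:RelatingInductionConstantsB} is in hand, what remains is essentially bookkeeping. The only mildly delicate point is verifying that the ``no-$\perp$'' standing assumption is indeed in force at the place where Claim \ref{claim:BetaIs1} is invoked; this is true because that assumption is imposed at the top of the induction step of \cref{lemma:IdealMainLemmaDet} and is maintained throughout the subsequent case analysis.
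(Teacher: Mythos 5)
Your proof is correct and, modulo a detour, reaches the same contradiction as the paper's (unstated) argument: the paper merely writes that \cref{claim:BetaIs1} is ``analogous to \cref{claim:AlphaIs1}.'' Unfolding that analogy, the paper's intended path after inferring $\alpha^1_{\dmsi^1}=1$ from \cref{def:maxInd} would be to derive $\alpha_{\dmsi^1}=1$ \emph{directly} from \cref{claim:EdgeDistAna}, \cref{prop:DomMeasPro} and \cref{claim:ChangeContext} (exactly as the proof of \cref{claim:AlphaIs1} does with $\beta$'s) and then conclude $\icpih{(\HonB,\dmsi^1)}{\vetas}=\perp$. You instead route the transfer $\alpha^1_{\dmsi^1}=1\Rightarrow\alpha_{\dmsi^1}=1$ through \cref{clm:RelatingInductionConstantsB} evaluated at $j=\dmsi^1$. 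That is legitimate here precisely because of the index-range asymmetry you flag: in the $\HonB$-at-root case \cref{clm:RelatingInductionConstantsB} covers $j\in(\dmsi^1)$ (the endpoint included), whereas in the $\HonA$-at-root case \cref{clm:RelatingInductionConstants} stops at $(\dmsi^1-1)$, which is why the paper's proof of \cref{claim:AlphaIs1} cannot use the analogous shortcut and must re-derive the endpoint from scratch. Your route is therefore a genuine (if small) streamlining; the only thing worth making explicit is that you invoke only the ``direct'' portion of \cref{clm:RelatingInductionConstantsB} (the part establishing $\alpha^0_j=\alpha^1_j=\alpha_j$ for $j\in(\dmsi^1)$), which is proved independently of \cref{claim:BetaIs1}, so no circularity is introduced. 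One cosmetic omission: you didn't explicitly invoke $\dmsi^1\le\dmsi^0=\dmsi$ to rule out $\beta^0_{\dmsi^1}=1$, but this is not needed in your version since you never touch the $b=0$ side of the tree.
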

\cref{clm:RelatingInductionConstantsB} and \cref{eq:MainLemmaB1} yield that
\begin{align}\label{eq:MainLemmaB2}
\lefteqn{\eex{\LDist{\ProArc{k}}}{\FinalMeas{\HonA}{\cpi}{\vetas}}} \\
&\geq \sum_{j=0}^{\dmsi} \alpha_j\prod_{t=0}^{j-1}(1-\alpha_t)\left(p\cdot\frac{\prod_{t=0}^{j-1}(1-\beta^0_t)^{k+1}}{\prod_{i=0}^{k-1}\Val\paren{\Restrict{\ProArc{i}}{0}}} + (1-p)\cdot\frac{\prod_{t=0}^{j-1}(1-\beta^1_t)^{k+1}}{\prod_{i=0}^{k-1}\Val\paren{\Restrict{\ProArc{i}}{1}}} \right). \nonumber
\end{align}
Applying the convex type inequality given in  \cref{lemma:calculus1} for each summand in the right-hand side of \cref{eq:MainLemmaB2} \wrt $x=\prod_{t=0}^{j-1}(1-\beta^0_t)$, $y=\prod_{t=0}^{j-1}(1-\beta^1_t)$, $a_i=\Val(\ProArc{i-1}_0)$, $b_i=\Val(\ProArc{i-1}_1)$, $p_0=p$ and $p_1=1-p$, and plugging into \cref{eq:MainLemmaB2} yield that
\begin{align}
\eex{\LDist{\ProArc{k}}}{\FinalMeas{\HonA}{\cpi}{\vetas}}
\geq \frac{\sum_{j=0}^{\dmsi} \alpha_j\prod_{t=0}^{j-1}(1-\alpha_t)\paren{p\cdot\prod_{t=0}^{j-1}(1-\beta^0_t) + (1-p)\cdot\prod_{t=0}^{j-1}(1-\beta^1_t)}^{k+1}}{\prod_{i=0}^{k-1}\left(p\cdot\Val\paren{\Restrict{\ProArc{i}}{0}} + (1-p)\cdot \Val\paren{\Restrict{\ProArc{i}}{1}}\right)}.
\end{align}
We conclude the proof of this case by observing that for every $i\in(k-1)$ it holds that $\Val\paren{\ProArc{i}} = p\cdot\Val\paren{\Restrict{\ProArc{i}}{0}} + (1-p)\cdot \Val\paren{\Restrict{\ProArc{i}}{1}}$, and using the next claim (proven in \cref{sec:MainIdealMissingProofs}), analogous to \cref{claim:alphaRelation}.
\begin{claim}\label{claim:betaRelation}
If $\HonB$ controls $\Root(\cpi)$, it holds that
  \begin{align*}
  \prod_{t=0}^{j-1}(1-\beta_t) = p\cdot \prod_{t=0}^{j-1}(1-\beta^0_t) +  (1-p)\cdot\prod_{t=0}^{j-1}(1-\beta^1_t).
  \end{align*}
\end{claim}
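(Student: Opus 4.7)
The plan is to argue by induction on $j$. The case $j=0$ is immediate since both sides reduce to the empty product~$1$. For the inductive step, set $Y_j = \prod_{t=0}^{j-1}(1-\beta_t)$ and $Y^b_j = \prod_{t=0}^{j-1}(1-\beta^b_t)$; granted the inductive hypothesis $Y_j = p\,Y^0_j + (1-p)\,Y^1_j$, the desired identity $Y_{j+1} = p\,Y^0_{j+1} + (1-p)\,Y^1_{j+1}$ reduces to proving
\begin{align*}
1 - \beta_j \;=\; \tfrac{p\,Y^0_j}{Y_j}\,(1-\beta^0_j) \;+\; \tfrac{(1-p)\,Y^1_j}{Y_j}\,(1-\beta^1_j).
\end{align*}
The strategy is to write $1-\beta_j = q_j(1-\beta^0_j) + (1-q_j)(1-\beta^1_j)$, where $q_j = \EdgeDist_{\icpih{(\HonB,j)}{\vetas}}(\EmptyString, 0)$, and then to prove $q_j = p\,Y^0_j/Y_j$.

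For the convex-combination step: since $\HonB$ controls $\Root(\cpi)$, and hence the root of every protocol in the sequence $\set{\icpih{(\HonC,s)}{\vetas}}$, the $\HonB$-dominated analog of \cref{prop:DomMeasPro}(\ref{item:Amaximal}) gives $\Restrict{\BDomMeas{\icpih{(\HonB,j)}{\vetas}}}{b} \equiv \BDomMeas{\Restrict{\icpih{(\HonB,j)}{\vetas}}{b}}$. Iterating \cref{prop:CondProLocal} along the chain together with \cref{def:etaP} (which forces $\eta^b_{(\HonB,s)} = \eta_{(\HonB,s)}$ whenever $\HonB$ controls the root) shows that $\Restrict{\icpih{(\HonB,j)}{\vetas}}{b}$ coincides with the $j$-th $\HonB$-stage conditional protocol in the $\cpi_b$-chain driven by $\veta{b}$; hence $\eex{\LDist{\Restrict{\icpih{(\HonB,j)}{\vetas}}{b}}}{\Restrict{\idsm{(\HonB,j)}{\cpi}{\vetas}}{b}} = \beta^b_j$. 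The convex-combination identity $1-\beta_j = q_j(1-\beta^0_j) + (1-q_j)(1-\beta^1_j)$ then follows from the law of total expectation applied at the root of $\icpih{(\HonB,j)}{\vetas}$.

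The main step is the evaluation of $q_j$. Iterating the vertex-probability formula implicit in \cref{prop:CondPro}(2) along the chain $\cpi,\icpih{(\HonB,0)}{\vetas},\icpih{(\HonA,1)}{\vetas},\ldots,\icpih{(\HonB,j)}{\vetas}$ produces
\begin{align*}
q_j \;=\; p\cdot\prod_{s=0}^{j-1}\frac{1-\eex{\LDist{\Restrict{\icpih{(\HonA,s)}{\vetas}}{0}}}{\Restrict{\idsm{(\HonA,s)}{\cpi}{\vetas}}{0}}}{1-\alpha_s}\cdot\frac{1-\eex{\LDist{\Restrict{\icpih{(\HonB,s)}{\vetas}}{0}}}{\Restrict{\idsm{(\HonB,s)}{\cpi}{\vetas}}{0}}}{1-\beta_s}\cdot\frac{1-\eex{\LDist{\Restrict{\icpih{(\HonA,j)}{\vetas}}{0}}}{\Restrict{\idsm{(\HonA,j)}{\cpi}{\vetas}}{0}}}{1-\alpha_j}.
\end{align*}
The crucial observation is that every $\alpha$-factor collapses to~$1$: because $\HonB$ controls the root, the $\HonB$-immune property \cref{prop:DomMeasPro}(\ref{item:Bimmune}), applied to $\ADomMeas{\icpih{(\HonA,s)}{\vetas}}$ and rescaled by $\eta_{(\HonA,s)}$, forces the numerator to equal $\alpha_s$. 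By the restriction argument of the previous paragraph, each $\beta$-factor equals $(1-\beta^0_s)/(1-\beta_s)$. Therefore $q_j = p\cdot Y^0_j/Y_j$, and symmetrically $1-q_j = (1-p)\cdot Y^1_j/Y_j$. Substituting these into the convex combination and multiplying by $Y_j$ closes the induction.

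The main technical obstacle I anticipate is handling the degenerate cases: when $p\in\set{0,1}$, when some $\alpha_s$ or $\beta_s$ equals~$1$ (so that later conditional protocols become $\perp$), or when the identity $\Restrict{\icpih{(\HonB,s)}{\vetas}}{b} = \icpih{(\HonB,s)}{\veta{b}}$ needs separate verification along the chain. Each of these should be resolvable by direct inspection of \cref{def:refinements,def:etaP}: once some $\beta_s = 1$ occurs, the factor $Y_j$ vanishes and both sides of the target identity trivialize consistently, while the case $p\in\set{0,1}$ is already isolated earlier in the proof of \cref{lemma:IdealMainLemmaDet}.
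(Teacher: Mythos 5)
Your proof is correct, and the underlying mechanism is the same as the paper's: everything hinges on computing the edge distribution at the root of the conditional protocol $\icpih{(\HonC,j)}{\vetas}$. However, your argument is structurally heavier than it needs to be, in two ways. First, the identity you derive for $q_j$, namely $q_j = p\prod_{t=0}^{j-1}(1-\beta^0_t)/\prod_{t=0}^{j-1}(1-\beta_t)$, is exactly \cref{claim:EdgeDistAna}(2), which has already been proved in \cref{sec:MainIdealMissingProofs}; you re-derive it from scratch by ``iterating the vertex-probability formula'' and then separately killing the $\alpha$-factors via the $\HonB$-immune property. That collapsing step, in turn, is exactly what \cref{clm:RelatingInductionConstantsB} already records ($\alpha^0_s = \alpha_s$ for all $s$). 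Second, the induction on $j$ is unnecessary. Once you know $q_j = p\,Y^0_j/Y_j$ and (symmetrically, applying \cref{claim:EdgeDistAna}(2) with $b=1$ rather than deriving it from $1-q_j$) that $1-q_j = (1-p)\,Y^1_j/Y_j$, the identity $Y_j = p\,Y^0_j + (1-p)\,Y^1_j$ follows immediately from $q_j + (1-q_j) = 1$, for each $j$ independently — there is no need to relate $Y_{j+1}$ to $Y_j$, and no need for the convex-combination identity $1-\beta_j = q_j(1-\beta^0_j) + (1-q_j)(1-\beta^1_j)$ at all. This is precisely how the paper dispatches the claim: it is declared analogous to \cref{claim:alphaRelation}, whose proof is ``immediately follows from \cref{claim:ChangeContext,claim:EdgeDistAna}.'' Your anticipated degenerate cases are real but, as you note, are either absorbed by the standing assumption $\icpih{(\HonA,\dmsi)}{\vetas}\neq\perp$ (which forces $\beta_t,\alpha_t<1$ and so $Y_j>0$) or handled trivially when $Y^b_j=0$.
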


\item[$\HonB$ controls $\Root(\cpi)$ and $\Val(\cpi_0)>\Val(\cpi_1)=0$.] In this case, \cref{clm:RelatingInductionConstants,clm:RelatingInductionConstantsB} yield that $\alpha_j=0$ for any $j\in(\dmsi^1)$. Hence, it suffices to prove that
\begin{align}\label{eq:MainLemmaB3}
 \eex{\LDist{\ProArc{k}}}{\FinalMeas{\HonA}{\cpi}{\vetas}} \geq \frac{\sum_{j=\dmsi^1+1}^{\dmsi} \alpha_j\prod_{t=0}^{j-1}(1-\beta_t)^{k+1}(1-\alpha_t)}{\prod_{i=0}^{k-1}\Val{(\ProArc{i})}}.
\end{align}
Thus, the proof immediately follows if $\dmsi^1=\dmsi$, and in the following we  assume that $\dmsi^1<\dmsi$.

As in \cref{eq:MainLemmaB1}, compute
\begin{align}\label{eq:MainLemmaB4}
\eex{\LDist{\ProArc{k}}}{\FinalMeas{\HonA}{\cpi}{\vetas}} &= p\cdot \eex{\LDist{\Restrict{\ProArc{k}}{0}}}{\Restrict{\FinalMeas{\HonA}{\cpi}{\vetas}}{0}} + (1-p)\cdot \eex{\LDist{\Restrict{\ProArc{k}}{1}}}{\Restrict{\FinalMeas{\HonA}{\cpi}{\vetas}}{1}} \\
&\geq p \cdot \frac{\sum_{j=0}^{\dmsi} \alpha^{0}_j\prod_{t=0}^{j-1}(1-\beta^{0}_t)^{k+1}(1-\alpha^{0}_t)}{\prod_{i=0}^{k-1}\Val\paren{\Restrict{\ProArc{i}}{0}}},\nonumber
\end{align}
where the inequality follows \cref{eq:MainLemmaIndLeft,claim:APi0is0}. \cref{clm:RelatingInductionConstantsB} now yields
\begin{align}\label{eq:MainLemmaB5}
\eex{\LDist{\ProArc{k}}}{\FinalMeas{\HonA}{\cpi}{\vetas}} \geq \sum_{j=0}^{\dmsi} \alpha_j\prod_{t=0}^{j-1}(1-\alpha_t)\cdot\frac{p\cdot \prod_{t=0}^{j-1}(1-\beta^{0}_t)^{k+1}}{\prod_{i=0}^{k-1}\Val\paren{\Restrict{\ProArc{i}}{0}}},
\end{align}
where \cref{clm:RelatingInductionConstantsB} yields
\begin{align}\label{eq:MainLemmaB6}
\eex{\LDist{\ProArc{k}}}{\FinalMeas{\HonA}{\cpi}{\vetas}} \geq \sum_{j=\dmsi^1+1}^{\dmsi} \alpha_j\prod_{t=0}^{j-1}(1-\alpha_t)\cdot\frac{p\cdot \prod_{t=0}^{j-1}(1-\beta^{0}_t)^{k+1}}{\prod_{i=0}^{k-1}\Val\paren{\Restrict{\ProArc{i}}{0}}}.
\end{align}
Multiplying both the numerator and the denominator for every summand of \cref{eq:MainLemmaB6} with $p^k$ yields
\begin{align}\label{eq:MainLemmaB7}
\eex{\LDist{\ProArc{k}}}{\FinalMeas{\HonA}{\cpi}{\vetas}} \geq \sum_{j=\dmsi^1+1}^{\dmsi} \alpha_j\prod_{t=0}^{j-1}(1-\alpha_t)\cdot\frac{\paren{p\cdot \prod_{t=0}^{j-1}(1-\beta^{0}_t)}^{k+1}}{\prod_{i=0}^{k-1}p\cdot\Val\paren{\Restrict{\ProArc{i}}{0}}}.
\end{align}
\cref{eq:MainLemmaB3}, and hence the proof of this case, is derived by observing that $\Val{(\ProArc{i})} = p\cdot \Val\paren{\Restrict{\ProArc{i}}{0}}$ for every $i\in(k-1)$,\footnote{Recall that if $\Val\paren{\HonA,\HonB}=0$, then $\Val\paren{\rcA{i},\HonB}=0$ for every $i\in\N$.} and plugging \cref{claim:BetaIs1,claim:betaRelation} into \cref{eq:MainLemmaB7}.

\item[$\HonB$ controls $\Root(\cpi)$ and $\Val(\cpi_1)>\Val(\cpi_0)=0$.]
Analogously to \cref{claim:alphaIsZero}, it holds that $\alpha^0_j=0$ for every $j\in(\dmsi)$. \cref{clm:RelatingInductionConstantsB} yields that $\alpha_j=0$ for every $j\in(\dmsi)$. The proof of this case trivially follows since
\begin{align*}
\frac{\sum_{j=0}^\dmsi \alpha_j\prod_{t=0}^{j-1}(1-\beta_t)^{k+1}(1-\alpha_t)}{\prod_{i=0}^{k-1}\Val{(\ProArc{i})}}=0.
\end{align*}
\end{description}

The above case analysis concludes the proof of the lemma when assuming that $\EdgeDist_\cpi(\EmptyString,b)\notin\zo$ for both $b\in\zo$. Assume that  $\EdgeDist_\cpi(\EmptyString,b)=1$ for some $b\in\zo$. Since, by assumption, $\Val(\cpi)>0$, it follows that $\Val(\cpi_b)>0$. Moreover, the definition of conditional protocols (\cref{def:CondProtocol}) yields that $\EdgeDist_{\icpih{(\HonC,j)}{\vetas}}(\EmptyString,b)=1$ and $\EdgeDist_{\icpih{(\HonC,j)}{\vetas}}(\EmptyString,1-b)=0$ for any $(\HonC,j)\in[(\HonA,\dmsi)]$ (regardless of which party controls $\Root(\cpi)$). By defining $\veta{b}=\vetas$, the definition of the dominated measure (\cref{def:Dominated}) yields that $\alpha_j=\alpha^b_j$ for every $j\in(\dmsi)$ and that $\beta_j=\beta^b_j$ for every $j\in(\dmsi-1)$. The proof of this case immediately follows from the induction hypothesis on $\cpi_b$.
\end{proof}

\subsubsection{Missing Proofs}\label{sec:MainIdealMissingProofs}
This section is dedicated to proving deferred statements used  in the proof of \cref{lemma:IdealMainLemmaDet}. We assume a fixed protocol $\cpi$, fixed real vector $\vetas=\paren{\eta_{(\HonA,0)}, \eta_{(\HonB,0)}, \ldots, \eta_{(\HonB,\dmsi-1)}, \eta_{(\HonA,\dmsi)}}$ and a fixed positive integer $k$. We also assume that $\icpih{(\HonA,\dmsi)}{\vetas}\neq \perp$, $\dmsi^1\leq\dmsi^0$ and $\EdgeDist_{\cpi}(\EmptyString,b)\in(0,1)$ for both $b\in\zo$. Recall that we defined two real vectors $\veta{0}$ and $\veta{1}$ (\cref{def:etaP}), and for $b\in\zo$ we defined $\alpha^b_j \eqdef \DMEx{\cpi_b}{\veta{b}}{(\HonA,j)} \ (\eqdef \eex{\LDist{\icpib{(\HonA,j)}{\veta{b}}{b}}}{\idsm{(\HonA,j)}{\cpi_b}{\veta{b}}})$ for $j\in(\dmsi)$, and  $\beta^b_j \eqdef \DMEx{\cpi_b}{\veta{b}}{(\HonB,j)}$, for $j\in(\dmsi-1)$.

We begin with the following proposition, which underlies many of the claims to follow.
\begin{proposition}\label{claim:ConRestrict}
For $b\in \zo$ and $(\HonC,j)\in[(\HonA,\dmsi)]$, it holds that
\begin{enumerate}
  \item \label{item:protocol} $\Restrict{\icpih{(\HonC,j)}{\vetas}}{b}=\icpib{(\HonC,j)}{\veta{b}}{b}$; and
  \item \label{item:measure} $\Restrict{\idsm{(\HonC,j)}{\cpi}{\vetas}}{b} \equiv \idsm{(\HonC,j)}{\cpi_b}{\veta{b}}$.
\end{enumerate}
\end{proposition}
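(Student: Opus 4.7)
The plan is to prove both items simultaneously by induction on the position of $(\HonC,j)$ in the linear order $\prec$ on $\set{\HonA,\HonB}\times\N$, starting from $(\HonA,0)$. Throughout, the key insight is that the real vector $\veta{b}$ was \emph{defined} in \cref{def:etaP} exactly so as to track how the dominated measure decomposes under restriction to the subtree rooted at $b$, so the whole argument is essentially bookkeeping of how the definitions align.

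For the base case $(\HonC,j) = (\HonA,0)$, item~\ref{item:protocol} is immediate since $\icpih{(\HonA,0)}{\vetas} = \cpi$ and $\icpib{(\HonA,0)}{\veta{b}}{b} = \cpi_b$, so $\Restrict{\cpi}{b} = \cpi_b$ by definition. For item~\ref{item:measure}, I need to show that $\Restrict{\eta_{(\HonA,0)}\cdot \ADomMeas{\cpi}}{b} \equiv \eta^b_{(\HonA,0)}\cdot \ADomMeas{\cpi_b}$. This follows by a direct case analysis matching \cref{def:Dominated} against \cref{def:etaP}: in the case $\EdgeDist_\cpi(\EmptyString,b) = 0$, the left-hand side is the zero measure since $\VerticesDist_\cpi(\ell) = 0$ for any descendant $\ell$ of $b$, and $\eta^b_{(\HonA,0)} = 0$; in the case $\EdgeDist_\cpi(\EmptyString,b) = 1$ or in the case where $\HonA$ controls the root or $b$ is the ``smaller'' side, \cref{def:Dominated} gives $\Restrict{\ADomMeas{\cpi}}{b} \equiv \ADomMeas{\cpi_b}$ and correspondingly $\eta^b_{(\HonA,0)} = \eta_{(\HonA,0)}$; in the remaining case, \cref{def:Dominated} yields the multiplicative factor $\DMExRes{(\HonA,0)}{1-b}/\DMExRes{(\HonA,0)}{b}$, which is precisely the factor built into $\eta^b_{(\HonA,0)}$.

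For the inductive step, suppose both items hold for $(\HonC',j') = \pred(\HonC,j)$. For item~\ref{item:protocol}, by \cref{def:refinements}:
\begin{align*}
\Restrict{\icpih{(\HonC,j)}{\vetas}}{b}
&= \Restrict{\CondProP{\icpih{(\HonC',j')}{\vetas}}{\idsm{(\HonC',j')}{\cpi}{\vetas}}}{b} \\
&= \CondProP{\Restrict{\icpih{(\HonC',j')}{\vetas}}{b}}{\Restrict{\idsm{(\HonC',j')}{\cpi}{\vetas}}{b}} \\
&= \CondProP{\icpib{(\HonC',j')}{\veta{b}}{b}}{\idsm{(\HonC',j')}{\cpi_b}{\veta{b}}}
= \icpib{(\HonC,j)}{\veta{b}}{b},
\end{align*}
where the second equality is \cref{prop:CondProLocal} (locality of conditional protocols) and the third is the inductive hypothesis applied to both items for $(\HonC',j')$. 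For item~\ref{item:measure}, I use item~\ref{item:protocol} (just established at level $(\HonC,j)$) to conclude $\Restrict{\DomMeas{\icpih{(\HonC,j)}{\vetas}}{\HonC}}{b}$ relates to $\DomMeas{\icpib{(\HonC,j)}{\veta{b}}{b}}{\HonC}$ exactly as in the base case analysis, now applied to the protocol $\icpih{(\HonC,j)}{\vetas}$ rather than to $\cpi$; combined with the multiplier $\eta_{(\HonC,j)}$ (or its restricted variant $\eta^b_{(\HonC,j)}$), this gives the required equivalence.

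The main obstacle, and essentially the only work, is the case analysis in the base case (and its re-use in the inductive step): one must carefully verify that each of the four branches in \cref{def:etaP} --- corresponding to $\EdgeDist = 0$, $\EdgeDist = 1$, the ``maximal/smaller'' branch, and the ``re-normalization'' branch --- matches the analogous branch in \cref{def:Dominated} so that the scaling factors line up. Because \cref{def:etaP} was constructed in perfect parallel with \cref{def:Dominated}, this is a mechanical but lengthy verification, with no real technical surprises beyond keeping track of which party controls the root of the conditional protocol $\icpih{(\HonC,j)}{\vetas}$ (which is the same as for $\cpi$ by \cref{def:CondProtocol}).
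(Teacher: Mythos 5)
Your proposal is correct and essentially identical to the paper's proof: induction on the linear order $\preceq$, with item~\ref{item:protocol} at the inductive step handled via \cref{prop:CondProLocal} and the inductive hypothesis, and item~\ref{item:measure} reduced (using item~\ref{item:protocol}) to the same branch-by-branch alignment of \cref{def:Dominated} with \cref{def:etaP} that you verify in the base case. The only cosmetic difference is that in the base case the paper exploits the standing assumption $\EdgeDist_\cpi(\EmptyString,b)\in(0,1)$ to skip the $\EdgeDist\in\zo$ branches, which you list anyway (harmlessly vacuous there).
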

Namely, the restriction of $\icpih{(\HonC,j)}{\vetas}$ (the $(\HonC,j)$'th conditional protocol \wrt $\cpi$ and $\vetas$) to its $b$'th subtree is equal to the $(\HonC,j)$'th conditional protocol defined \wrt $\cpi_b$ ($b$'th subtree of $\cpi$) and $\veta{b}$. Moreover, the result of multiplying the  $\HonC$-dominated measure of $\icpih{(\HonC,j)}{\vetas}$ by $\eta_{(\HonC,j)}$, and then restricting it to the subtree $\Restrict{\icpih{(\HonC,j)}{\vetas}}{b}$, is equivalent to multiplying the $\HonC$-dominated measure of $\icpib{(\HonC,j)}{\veta{b}}{b}$ by $\eta^b_{(\HonC,j)}$.\footnote{Note that \cref{item:protocol} is not immediate. Protocol $\Restrict{\icpih{(\HonC,j)}{\vetas}}{b}$ is a restriction of a protocol defined on the root of $\cpi$, whereas $\icpib{(\HonC,j)}{\veta{b}}{b}$ is a protocol defined on the root of $\cpi_b$.}
\begin{proof}[Proof of \cref{claim:ConRestrict}]
The proof is by induction on the ordered pairs $[(\HonA,\dmsi)]$.

\paragraph{Base case.} Recall that the first pair of $[(\HonA,\dmsi)]$ is $(\HonA,0)$. \cref{def:refinements} yields that $\icpih{(\HonA,0)}{\vetas}=\cpi$ and that $\icpib{(\HonA,0)}{\veta{b}}{b}=\cpi_b$, yielding that \cref{item:protocol} holds for $(\HonA,0)$. As for \cref{item:measure}, by \cref{def:Dominated} and the assumption that $\EdgeDist_{\cpi}(\EmptyString,b)\in(0,1)$ for both $b\in\zo$, it holds that
\begin{align*}
      \Restrict{\idsm{(\HonA,0)}{\cpi}{\vetas}}{b} \equiv \Restrict{\eta_{(\HonA,0)}\cdot\DomMeas{\cpi}{\HonA}}{b} \equiv \left\{
                                   \begin{array}{ll}
                                      \eta_{(\HonA,0)}\cdot \DomMeas{\cpi_b}{\HonA} &  \HonA \hbox{ controls }\Root(\cpi)\lor \Smaller{\cpi}{b};\\
                                      \eta_{(\HonA,0)}\cdot\frac{\DMExRes{(\HonA,0)}{1-b}}{\DMExRes{(\HonA,0)}{b}} \cdot  \DomMeas{\cpi_b}{\HonA} &  \hbox{otherwise.}
                                   \end{array}
                                 \right.
\end{align*}
The proof that \cref{item:measure} holds for $(\HonA,0)$ now follows from \cref{def:etaP}.

\paragraph{Induction step.} Fix $(\HonC,j)\in[(\HonA,\dmsi)]$ and assume the claim holds for $\pred(\HonC,j)$. Using the induction hypothesis, we first prove \cref{item:protocol} for $(\HonC,j)$. Next, using the fact that \cref{item:protocol} holds for $(\HonC,j)$, we prove \cref{item:measure}.
\begin{description}
  \item[Proving \cref{item:protocol}.] By \cref{def:refinements}, it holds that
      \begin{align*}
      \Restrict{\icpih{(\HonC,j)}{\vetas}}{b} &= \Restrict{\CondProP{\icpih{\pred(\HonC,j)}{\vetas}}{\idsm{\pred(\HonC,j)}{\cpi}{\vetas}}}{b} \\
      &= \CondPro{\Restrict{\icpih{\pred(\HonC,j)}{\vetas}}{b}}{\Restrict{\idsm{\pred(\HonC,j)}{\cpi}{\vetas}}{b}} \\
      &\overset{(1)}{=} \CondProP{\icpib{\pred(\HonC,j)}{\veta{b}}{b}}{\idsm{\pred(\HonC,j)}{\cpi_b}{\veta{b}}} \\
      &= \icpib{(\HonC,j)}{\veta{b}}{b},
      \end{align*}
      where (1) follows from the induction hypothesis.

  \item[Proving \cref{item:measure}.] Similarly to the base case, \cref{def:Dominated} yields that
    \begin{align*}
    \Restrict{\idsm{(\HonC,j)}{\cpi}{\vetas}}{b} \equiv \left\{
                           \begin{array}{ll}
                              0 & \EdgeDist_{\icpih{(\HonC,j)}{\vetas}}(\EmptyString,b)=0;\\
                              \eta_{(\HonC,j)}\cdot \DomMeas{\left(\icpih{(\HonC,j)}{\vetas}\right)_b}{\HonC}  & \EdgeDist_{\icpih{(\HonC,j)}{\vetas}}(\EmptyString,b)=1;\\
                              \eta_{(\HonC,j)}\cdot \DomMeas{\left(\icpih{(\HonC,j)}{\vetas}\right)_b}{\HonC}  & \EdgeDist_{\icpih{(\HonC,j)}{\vetas}}(\EmptyString,b)\notin\zo \land \\
                                                    & \paren{\HonC\hbox{ controls }\Root(\cpi) \lor  \Smaller{\icpih{(\HonC,j)}{\vetas}}{b}};\\
                              \eta_{(\HonC,j)} \cdot \frac{\DMExRes{(\HonC,j)}{1-b}}{\DMExRes{(\HonC,j)}{b}} \cdot  \DomMeas{\left(\icpih{(\HonC,j)}{\vetas}\right)_{b}}{\HonC} &  \hbox{otherwise,}
                               \end{array}
                             \right.
    \end{align*}
    and the  proof follows from \cref{item:protocol,def:etaP}.
\end{description}
\end{proof}

Recall that the real numbers $\alpha^b_j$ and $\beta^b_j$ were defined to be the expected values of the $(\HonA,j)$'th and $(\HonB,j)$'th dominated measures in the sequence $(\HonA,\dmsi,\veta{b})$-$\DMS{\cpi_b}$, respectively (see the proof of \cref{lemma:IdealMainLemmaDet}). Following \cref{claim:ConRestrict}, we could equivalently define $\alpha^b_j$ and $\beta^b_j$ \wrt the sequence $(\HonA,\dmsi,\vetas)$-$\DMS{\cpi}$.
\begin{proposition}\label{claim:ChangeContext}
For both $b\in\zo$, it holds that
\begin{enumerate}
  \item $\alpha^b_j=\eex{\LDist{\Restrict{\icpih{(\HonA,j)}{\vetas}}{b}}}{\Restrict{\idsm{(\HonA,j)}{\cpi}{\vetas}}{b}}$ for every $j\in(\dmsi)$; and
  \item $\beta^b_{j}=\eex{\LDist{\Restrict{\icpih{(\HonB,j)}{\vetas}}{b}}}{\Restrict{\idsm{(\HonB,j)}{\cpi}{\vetas}}{b}}$ for every $j\in(\dmsi-1)$.
\end{enumerate}
\end{proposition}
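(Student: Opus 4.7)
The plan is to derive \cref{claim:ChangeContext} as a direct corollary of \cref{claim:ConRestrict}, which has already been proven. The statement is in essence a purely formal substitution: the defining expression of $\alpha^b_j$ is an expectation of the measure $\idsm{(\HonA,j)}{\cpi_b}{\veta{b}}$ over the leaf distribution of the protocol $\icpib{(\HonA,j)}{\veta{b}}{b}$, and \cref{claim:ConRestrict} identifies both of these objects with their counterparts defined via restriction from the global sequence $(\HonA,z,\vetas)$-$\DMS{\cpi}$.

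Concretely, for item~1 I would unwrap the definition
\begin{align*}
\alpha^b_j \;=\; \DMEx{\cpi_b}{\veta{b}}{(\HonA,j)} \;=\; \eex{\LDist{\icpib{(\HonA,j)}{\veta{b}}{b}}}{\idsm{(\HonA,j)}{\cpi_b}{\veta{b}}},
\end{align*}
then apply \cref{claim:ConRestrict}(\ref{item:protocol}) to rewrite $\icpib{(\HonA,j)}{\veta{b}}{b} = \Restrict{\icpih{(\HonA,j)}{\vetas}}{b}$, and \cref{claim:ConRestrict}(\ref{item:measure}) to rewrite $\idsm{(\HonA,j)}{\cpi_b}{\veta{b}} \equiv \Restrict{\idsm{(\HonA,j)}{\cpi}{\vetas}}{b}$. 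Item~2 proceeds identically with $\HonB$ and $\beta$ in place of $\HonA$ and $\alpha$, invoking the same two parts of \cref{claim:ConRestrict} but with $(\HonB,j)$ rather than $(\HonA,j)$.

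There is essentially no obstacle. The only bookkeeping point is the degenerate case in which $\Restrict{\icpih{(\HonC,j)}{\vetas}}{b}=\perp$ (equivalently, when the relevant subprotocol has died out along the way); by the convention $\eex{\LDist\perp}{f}=0$ fixed in \cref{section:Preliminaries}, together with \cref{claim:ConRestrict}(\ref{item:protocol}) which says $\icpib{(\HonC,j)}{\veta{b}}{b}$ is also $\perp$ in that case, both sides of each equality evaluate to $0$ and the identity holds trivially. Otherwise the two substitutions from \cref{claim:ConRestrict} yield the claimed equalities on the nose, completing the proof.
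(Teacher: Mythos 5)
Your proposal is correct and matches the paper's proof, which is simply the statement that \cref{claim:ChangeContext} follows immediately from \cref{claim:ConRestrict}; you have merely spelled out the one-line substitution argument and the benign $\perp$ degenerate case in more detail.
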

\begin{proof}
Immediately follows \cref{claim:ConRestrict}.
\end{proof}

\cref{claim:ChangeContext} allows us to use \cref{prop:DomMeasPro} in order to analyze the connections between $\alpha^0_j$ and $\alpha^1_j$ to $\alpha_j$, and similarly between $\beta^0_j$ and $\beta^1_j$ to $\beta_j$. Towards this goal, we analyze the edge distribution of the conditional protocols defined in the procedure that generates the  measure sequence $(\HonA,\dmsi,\vetas)$-$\DMS{\cpi}$.

\begin{proposition}\label{claim:EdgeDistAna}
The following holds for both $b\in\zo$.
\begin{enumerate}
  \item $\HonA$ controls $\Root(\cpi)$ $\implies$  \label{claim:EdgeDistAna:1}
  \begin{enumerate}
    \item  $\EdgeDist_{\icpih{(\HonA,j)}{\vetas}}(\EmptyString,b) = \EdgeDist_{\cpi}(\EmptyString,b)\cdot\frac{\prod_{t=0}^{j-1}\left(1-\alpha^b_t\right)}{\prod_{t=0}^{j-1}\left(1-\alpha_t\right)}$ for all $j\in (\dmsi)$.
    \item $\EdgeDist_{\icpih{(\HonB,j)}{\vetas}}(\EmptyString,b) = \EdgeDist_{\cpi}(\EmptyString,b)\cdot\frac{\prod_{t=0}^{j}\left(1-\alpha^b_t\right)}{\prod_{t=0}^{j}\left(1-\alpha_t\right)}$ for all $j\in (\dmsi-1)$.
  \end{enumerate}

  \item  $\HonB$ controls $\Root(\cpi)$ $\implies$ \label{claim:EdgeDistAna:2}
  \begin{enumerate}
    \item $\EdgeDist_{\icpih{(\HonA,j)}{\vetas}}(\EmptyString,b) = \EdgeDist_{\cpi}(\EmptyString,b)\cdot\frac{\prod_{t=0}^{j-1}\left(1-\beta^b_t\right)}{\prod_{t=0}^{j-1}\left(1-\beta_t\right)}$ for all $j\in (\dmsi)$.
    \item $\EdgeDist_{\icpih{(\HonB,j)}{\vetas}}(\EmptyString,b) = \EdgeDist_{\cpi}(\EmptyString,b)\cdot\frac{\prod_{t=0}^{j-1}\left(1-\beta^b_t\right)}{\prod_{t=0}^{j-1}\left(1-\beta_t\right)}$ for all $j\in (\dmsi-1)$.
  \end{enumerate}
\end{enumerate}
\end{proposition}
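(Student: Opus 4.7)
The plan is to proceed by induction on the ordered pairs $(\HonC,j)\in [(\HonA,\dmsi)]$, advancing through the sequence one conditioning step at a time while tracking how the root edge distribution scales. The base case is $(\HonA,0)$: by \cref{def:refinements} we have $\icpih{(\HonA,0)}{\vetas}=\cpi$, so both items 1(a) and 2(a) reduce to $\EdgeDist_\cpi(\EmptyString,b)=\EdgeDist_\cpi(\EmptyString,b)$ via the empty product.

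For the inductive step, unfolding \cref{def:CondProtocol} at the root, together with \cref{claim:ChangeContext}, gives the single-step update
\begin{align*}
\EdgeDist_{\icpih{\succe(\HonC,j)}{\vetas}}(\EmptyString,b) = \EdgeDist_{\icpih{(\HonC,j)}{\vetas}}(\EmptyString,b) \cdot \frac{1-x^b_j}{1-x_j},
\end{align*}
where $(x^b_j,x_j)=(\alpha^b_j,\alpha_j)$ when $\HonC=\HonA$ and $(x^b_j,x_j)=(\beta^b_j,\beta_j)$ when $\HonC=\HonB$. The content of the proposition is then to identify, at each step, whether this factor is trivial or nontrivial.

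The main structural observation is that whenever the party whose dominated measure we condition on does \emph{not} control the root of $\cpi$, the factor equals $1$. Indeed, the $\HonB$-immune property of $\DomMeas{\icpih{(\HonA,j)}{\vetas}}{\HonA}$ (\cref{prop:DomMeasPro}(\ref{item:Bimmune})) when $\HonB$ controls the root, and its $\HonA$-immune analogue for $\DomMeas{\icpih{(\HonB,j)}{\vetas}}{\HonB}$ when $\HonA$ controls the root (immediate from the symmetric definition of $\BDomMeas{}$), force the expected restrictions of the measure to coincide on both subtrees. Scaling by $\eta_{(\HonC,j)}$ preserves this equality, and \cref{claim:ChangeContext} then yields $\alpha^0_j=\alpha^1_j=\alpha_j$ (when $\HonB$ controls the root) or $\beta^0_j=\beta^1_j=\beta_j$ (when $\HonA$ controls the root), making the multiplier exactly $1$. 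Conversely, when the controlling party \emph{is} the one whose dominated measure we condition on, the factor is the genuine $\frac{1-\alpha^b_j}{1-\alpha_j}$ or $\frac{1-\beta^b_j}{1-\beta_j}$.

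Telescoping these two observations yields the proposition: in item 1 ($\HonA$ controls the root), stepping from $(\HonA,j)$ to $(\HonB,j)$ contributes $\frac{1-\alpha^b_j}{1-\alpha_j}$ while stepping from $(\HonB,j)$ to $(\HonA,j+1)$ contributes $1$, producing exactly the products in 1(a) and 1(b) at each index; item 2 is symmetric, with the $\HonA$-step trivial and the $\HonB$-step contributing $\frac{1-\beta^b_j}{1-\beta_j}$, which is why 2(a) and 2(b) share the same expression at index $j$ and a new factor appears only when moving from $(\HonB,j)$ to $(\HonA,j+1)$. I do not anticipate any genuine obstacle beyond bookkeeping; the one delicate point is the degenerate regime where some $\alpha_t$ or $\beta_t$ equals $1$ (so the next conditional protocol becomes $\perp$), but the standing assumption $\icpih{(\HonA,\dmsi)}{\vetas}\neq\perp$ in the surrounding proof of \cref{lemma:IdealMainLemmaDet} keeps the relevant denominators strictly positive throughout.
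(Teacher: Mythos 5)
Your proof is correct and follows essentially the same inductive route as the paper: induct on the ordered pairs $[(\HonA,\dmsi)]$, unfold the one-step conditioning factor from \cref{def:CondProtocol} together with \cref{claim:ChangeContext}, and observe via the immunity properties (\cref{prop:DomMeasPro}(\ref{item:Bimmune}) and its $\HonA$-immune analogue for $\BDomMeas{}$) that the factor is trivially $1$ whenever the party whose dominated measure one conditions on does not control the root. The paper additionally spells out the bookkeeping edge case $\EdgeDist_{\icpih{\pred(\HonC,j)}{\vetas}}(\EmptyString,b)=0$ (a subtree killed off by some $\alpha^b_t=1$ even though all $\alpha_t<1$), which your telescoping still handles since the propagated edge weight is already zero, but you mention only the $\perp$-avoidance for the full protocol; otherwise your argument is the paper's.
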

\begin{proof}
We prove \cref{claim:EdgeDistAna:1} using induction on the ordered pairs $[(\HonA,\dmsi)]$. The proof of \cref{claim:EdgeDistAna:2} is analogous.

\paragraph{Base case.} The proof follows since according to \cref{def:refinements}, it holds that $\icpih{(\HonA,0)}{\vetas}=\cpi$.

\paragraph{Induction step.} Fix $(\HonC,j)\in[(\HonA,\dmsi)]$ and assume the claim holds for $\pred(\HonC,j)$. The proof splits according to which party $\HonC$ is.
\begin{description}
  \item[Case $\HonC = \HonA$.] If $\EdgeDist_{\icpih{(\HonB,j-1)}{\vetas}}(\EmptyString,b)=0$, \cref{def:CondProtocol} yields that $\EdgeDist_{\icpih{(\HonA,j)}{\vetas}}(\EmptyString,b)=0$. The proof follows since, by the induction hypothesis, it holds that
      \begin{align*}
      \EdgeDist_{\icpih{(\HonA,j)}{\vetas}}(\EmptyString,b) = \EdgeDist_{\icpih{(\HonB,j-1)}{\vetas}}(\EmptyString,b) = \EdgeDist_{\cpi}(\EmptyString,b)\cdot\frac{\prod_{t=0}^{j-1}\left(1-\alpha^b_t\right)}{\prod_{t=0}^{j-1}\left(1-\alpha_t\right)}.
      \end{align*}
      In the complementary case, \ie $\EdgeDist_{\icpih{(\HonB,j-1)}{\vetas}}(\EmptyString,b)>0$, \cref{prop:DomMeasPro,def:Dominated} yield that $\beta_{j-1} = \beta^b_{j-1}$. It must be the case that $\beta_{j-1} = \beta^b_{j-1}<1$, since otherwise, according to \cref{def:refinements}, it holds that $\icpih{(\HonA,j)}{\vetas}=\perp$, a contradiction to the assumption that $\icpih{(\HonA,\dmsi)}{\vetas}\neq \perp$. The proof follows since in this case \cref{def:CondProtocol,claim:ChangeContext} yield that
      \begin{align*}
      \EdgeDist_{\icpih{(\HonA,j)}{\vetas}}(\EmptyString,b) &= \EdgeDist_{\icpih{(\HonB,j-1)}{\vetas}}(\EmptyString,b) \cdot \frac{1-\beta^b_{j-1}}{1-\beta_{j-1}} \\
      &= \EdgeDist_{\icpih{(\HonB,j-1)}{\vetas}}(\EmptyString,b) \\
      &= \EdgeDist_{\cpi}(\EmptyString,b)\cdot\frac{\prod_{t=0}^{j-1}\left(1-\alpha^b_t\right)}{\prod_{t=0}^{j-1}\left(1-\alpha_t\right)},
      \end{align*}
      where the last equality follows the induction hypothesis.
  \item[Case $\HonC = \HonB$.] It must be that case that $\alpha_{j}<1$, since otherwise, similarly to the previous case and according to \cref{def:refinements}, it holds that $\icpih{(\HonB,j)}{\vetas}=\perp$, a contradiction to the assumption that $\icpih{(\HonA,\dmsi)}{\vetas}\neq \perp$. The proof follows since in this case \cref{def:CondProtocol,claim:ChangeContext} yield that
      \begin{align*}
      \EdgeDist_{\icpih{(\HonB,j)}{\vetas}}(\EmptyString,b) &= \EdgeDist_{\icpih{(\HonA,j)}{\vetas}}(\EmptyString,b) \cdot \frac{1-\alpha^b_{j}}{1-\alpha_{j}} \\
      &= \EdgeDist_{\cpi}(\EmptyString,b)\cdot\frac{\prod_{t=0}^{j-1}\left(1-\alpha^b_t\right)}{\prod_{t=0}^{j-1}\left(1-\alpha_t\right)} \cdot \frac{1-\alpha^b_{j}}{1-\alpha_{j}} \\
      &= \EdgeDist_{\cpi}(\EmptyString,b)\cdot\frac{\prod_{t=0}^{j}\left(1-\alpha^b_t\right)}{\prod_{t=0}^{j}\left(1-\alpha_t\right)},
      \end{align*}
      where the second equality follows from the induction hypothesis.
\end{description}
\end{proof}

Using the above propositions, we now turn our focus to proving the claims in the proof of \cref{lemma:IdealMainLemmaDet}. To facilitate reading and tracking the proof, we cluster claims together according to their role in the proof of \cref{lemma:IdealMainLemmaDet}.

\subsubsubsection{Proving \cref{claim:RestrictFinal,claim:dmsiequal}}
\begin{proof}[Proof of \cref{claim:RestrictFinal}]
For $b\in\zo$ it holds that
\begin{align*}
\FinalMeas{\HonA}{\cpi_b}{\veta{b}} &\equiv \sum_{j=0}^{\dmsi}\idsm{(\HonA,j)}{\cpi_b}{\veta{b}} \cdot\prod_{t=0}^{j-1}\left(1-\idsm{(\HonA,t)}{\cpi_b}{\veta{b}}\right)\\
&\equiv \sum_{j=0}^{\dmsi}\Restrict{\idsm{(\HonA,j)}{\cpi}{\vetas}}{b} \cdot\prod_{t=0}^{j-1}\left(1- \Restrict{\idsm{(\HonA,t)}{\cpi}{\vetas}}{b}\right) \\
&\equiv \Restrict{\FinalMeas{\HonA}{\cpi}{\vetas}}{b},
\end{align*}
where the second equivalence follows from \cref{claim:ConRestrict}.
\end{proof}

\begin{proof}[Proof of \cref{claim:dmsiequal}]
Assume towards a contradiction that $\dmsi^0<\dmsi$. By the definition of $\dmsi^0$ (\cref{def:maxInd}) and the definition of conditional protocols (\cref{def:CondProtocol}), it follows that $\icpib{(\HonA,\dmsi^0+1)}{\veta{0}}{0}=\perp$. Since (by assumption) $\dmsi^1\leq\dmsi^0$ , it also holds that $\icpib{(\HonA,\dmsi^0+1)}{\veta{1}}{1}=\perp$. Hence,  \cref{claim:ConRestrict} yields that $\Restrict{\icpih{(\HonA,\dmsi^0+1)}{\vetas}}{0},\Restrict{\icpih{(\HonA,\dmsi^0+1)}{\vetas}}{1}=\perp$. Namely, the function describing $\icpih{(\HonA,\dmsi^0+1)}{\vetas}$ does not correspond to any two-party execution when restricting it to the subtrees $\Tree(\cpi_0)$ and $\Tree(\cpi_1)$. Hence, the aforementioned function does not correspond to a two-party execution (over $\Tree(\cpi)$), in contradiction to the assumption that $\icpih{(\HonA,\dmsi)}{\vetas}\neq\perp$.
\end{proof}

\subsubsubsection{Proving \cref{claim:AlphaIs1,clm:RelatingInductionConstants,claim:alphaRelation}}
The following proofs rely on the next observation. As long as $\alpha^b_j<1$ and $\beta^b_j<1$, \cref{claim:EdgeDistAna} ensures that there is a positive probability to visit both the left and the right subtree of the $(\HonC,j)$'th conditional protocol.

\begin{proof}[Proof of \cref{claim:AlphaIs1}]
Assume that $\HonA$ controls $\Root(\cpi)$ and that $\dmsi^1<\dmsi$. Assume towards a contradiction that $\alpha^1_{\dmsi^1}<1$. Since $\dmsi^1\leq\dmsi^0$ (by assumption), it follows that $\alpha^0_{\dmsi^1}<1$ as well. The definition of $\dmsi^1$ (\cref{def:maxInd}) yields that $\beta^1_{\dmsi^1}=1$. However, \cref{claim:EdgeDistAna} yields that $\EdgeDist_{\icpih{(\HonB,j)}{\vetas}}(\EmptyString,b)\in(0,1)$ for both $b\in\zo$, and thus \cref{prop:DomMeasPro,claim:ChangeContext} yield that $\beta_{\dmsi^1}=1$. Now, \cref{def:refinements} yields that $\icpih{(\HonA,\dmsi^1+1)}{\vetas}=\perp$, a contradiction to the assumption that $\icpih{(\HonA,\dmsi)}{\vetas}\neq\perp$.
\end{proof}

\begin{proof}[Proof of \cref{clm:RelatingInductionConstants}]
For $j\in(\dmsi^1-1)$, it holds that $\EdgeDist_{\icpih{(\HonB,j)}{\vetas}}(\EmptyString,b)\in(0,1)$ for both $b\in\zo$. Thus, $\beta^0_j=\beta^1_j=\beta_j$ is a direct implication of \cref{claim:ConRestrict,prop:DomMeasPro}.

For $\dmsi^1 \leq \dmsi-1$, \cref{claim:AlphaIs1,claim:EdgeDistAna} yield that $\EdgeDist_{\icpih{(\HonB,j)}{\vetas}}(\EmptyString,0)=1$. Since, by \cref{def:etaP}, it holds that $\eta_{(\HonB,j)}=\eta^0_{(\HonB,j)}$, \cref{def:Dominated,claim:ConRestrict} yield that $\beta^0_j=\beta_j$.
\end{proof}

\begin{proof}[Proof of \cref{claim:alphaRelation}]
The proof immediately follows from \cref{claim:ChangeContext,claim:EdgeDistAna}.
\end{proof}

\subsubsubsection{Proving \cref{claim:APi0is0,claim:alphaIsZero}}
\begin{proof}[Proof of \cref{claim:APi0is0}]
By \cref{def:Dominated} it holds that $\idsm{(\HonA,j)}{\cpi_1}{\veta{1}}\equiv 0$ for every $j\in(\dmsi)$. \cref{def:refinements} yields that $\FinalMeas{\HonA}{\cpi_1}{\veta{1}} \equiv 0$. The proof follows from \cref{claim:RestrictFinal}.
\end{proof}

\begin{proof}[Proof of \cref{claim:alphaIsZero}]
Follows similar arguments to the above proof of \cref{claim:APi0is0}, together with \cref{claim:ChangeContext}.
\end{proof}

\subsubsubsection{Proving \cref{clm:RelatingInductionConstantsB,claim:BetaIs1,claim:betaRelation}}
The proofs of the rest of the claims stated in the proof of \cref{lemma:IdealMainLemmaDet} are analogous to the claims proven above. Specifically, \cref{clm:RelatingInductionConstantsB} is analogous to \cref{clm:RelatingInductionConstants}, \cref{claim:BetaIs1} is analogous to \cref{claim:AlphaIs1}, and \cref{claim:betaRelation} is analogous to \cref{claim:alphaRelation}.

\subsection{Proving Lemma \ref{lemma:CombineMeasHitConst}}\label{sec:DomMeasHitConst}

\cref{lemma:CombineMeasHitConst} immediately follows by the next lemma.

\begin{lemma}\label{cor:DomMeasHit1}
For every protocol $\cpi$, there exists $(\HonC,j) \in \set{\HonA,\HonB} \times \N$ such  that
\begin{align*}
\eex{\LDist{\cpi_{(\HonC,j)}}}{\DomMeas{\cpi_{(\HonC,j)}}{\HonC}}=1.
\end{align*}
\end{lemma}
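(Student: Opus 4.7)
My plan is to prove the lemma by contradiction. Suppose for every $(\HonC,j)\in\set{\HonA,\HonB}\times\N$ the expected value of the corresponding dominated measure is strictly less than $1$. I will show that under this assumption the number of \emph{reachable} leaves in $\cpi_{(\HonC,j)}$ (\ie leaves $\ell$ with $\VerticesDist_{\cpi_{(\HonC,j)}}(\ell)>0$) strictly decreases infinitely often along the sequence $\set{\cpi_{(\HonC,j)}}$, which contradicts the finiteness of $\Leaves(\cpi)$.

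The core ingredient, which I would establish first by induction on $\round(\cpi')$, is the following claim: for every protocol $\cpi'$ with $\eex{\LDist{\cpi'}}{\ADomMeas{\cpi'}}>0$, there exists a reachable leaf $\ell$ with $\ADomMeas{\cpi'}(\ell)=1$ (and an analogous statement for $\BDomMeas{\cpi'}$). The inductive step splits on which party controls $\Root(\cpi')$: if $\HonA$ controls the root, \cref{prop:DomMeasPro}(\ref{item:Amaximal}) lets me pass to whichever reachable child-subprotocol has positive expected $\HonA$-dominated measure and apply induction. If $\HonB$ controls the root, I would first observe that $\eex{\LDist{\cpi'}}{\ADomMeas{\cpi'}}>0$ forces \emph{both} children to have positive expected dominated measure---if one side had zero expected value then by the $\HonB$-minimal rule in \cref{def:Dominated} the rebalancing factor would multiply the other side by $0$, zeroing out the whole measure---and then apply induction to the child selected by $\Smaller{\cdot}{\cdot}$, whose restriction equals $\ADomMeas{\cpi'_b}$ by \cref{prop:DomMeasPro}(\ref{item:Bminimal}). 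Combined with \cref{prop:CondPro}(4), this immediately yields that whenever $\eex{}{\DomMeas{\cpi_{(\HonC,j)}}{\HonC}}\in(0,1)$, the conditional protocol $\cpi_{\succe(\HonC,j)}$ has strictly fewer reachable leaves than $\cpi_{(\HonC,j)}$.

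The second ingredient handles $0$-measure steps. Observe that if $\eex{}{\DomMeas{\cpi_{(\HonC,j)}}{\HonC}}=0$, then \cref{def:CondProtocol} gives $\cpi_{\succe(\HonC,j)}=\cpi_{(\HonC,j)}$, and by \cref{lemma:ExpDomMeas} the other party satisfies $\mathrm{Best}_{\bar\HonC}(\cpi_{(\HonC,j)})=1$. \cref{lemma:perfectAdv} then forces $\mathrm{Best}_{\HonC}(\cpi_{(\HonC,j)})<1$ on the same protocol, so by \cref{lemma:ExpDomMeas} applied again, the dominated measure computed at $\succe(\HonC,j)$ has strictly positive expected value. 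Hence two consecutive zero-valued steps are impossible.

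Combining the two ingredients, under the contradiction hypothesis every measure in the sequence has expected value in $[0,1)$, and at least every other step has expected value in $(0,1)$; each such step strips away at least one reachable leaf. Since $\Leaves(\cpi)$ is finite, this process must terminate, giving the contradiction. The main technical obstacle I anticipate is the $\HonB$-controlled case of the key inductive claim, where one has to use the $\HonB$-immune property to rule out the pathological scenario in which one child has zero expected dominated measure while the total is positive; the degenerate edge cases $\EdgeDist_{\cpi'}(\EmptyString,b)\in\zo$ in \cref{def:Dominated} also need to be handled carefully so as not to break the ``leaf with measure one'' guarantee.
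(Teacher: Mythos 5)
Your proof is correct, and it takes a genuinely different route than the paper at one key point. Both arguments share the same backbone: you need (and you correctly sketch) the claim that whenever $\eex{\LDist{\cpi'}}{\ADomMeas{\cpi'}}>0$ some reachable leaf carries measure $1$ (this is exactly \cref{claim:DomLeaf} under the equivalence $\BestA{\cpi'}=1\iff\eex{\LDist{\cpi'}}{\ADomMeas{\cpi'}}>0$ furnished by \cref{lemma:ExpDomMeas,lemma:perfectAdv}), and you both combine it with \cref{prop:CondPro} to conclude that conditioning on a nondegenerate dominated measure strictly shrinks the reachable‐leaf set. Where you diverge is in handling the degenerate steps: the paper invokes \cref{claim:SwitchRoles} to establish that for every $(\HonC,j)\succeq(\HonB,0)$ one has $\BEST{\HonC}{\icpi{(\HonC,j)}{}}=1$, so that \emph{every} step past the first strips a leaf; you instead prove the weaker local fact that a zero‐expectation step cannot be followed by another one (a zero measure leaves the protocol unchanged, forces $\BEST{\bar\HonC}{\cpi_{(\HonC,j)}}=1$ via \cref{lemma:ExpDomMeas}, hence $\BEST{\HonC}{\cpi_{(\HonC,j)}}<1$ via \cref{lemma:perfectAdv}, hence the next dominated measure has positive expectation), so that at least every other step strips a leaf. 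This is sufficient for finiteness and entirely avoids \cref{claim:SwitchRoles}, whose proof is an extensive multi‐case induction; the paper's version buys a tighter bound on the length of the sequence (every step rather than every other step is productive), but that sharpening is not needed here, so your approach is arguably more economical.
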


The proof of \cref{cor:DomMeasHit1} is given below, but first we use it to derive \cref{lemma:CombineMeasHitConst}.
\begin{proof}[Proof of \cref{lemma:CombineMeasHitConst}]
Let $\dmsi$ be the minimal integer such that $\sum_{j=0}^{\dmsi}\alpha_j \geq c$ or $\sum_{j=0}^{\dmsi}\beta_j \geq c$. Note that such $\dmsi$ is guaranteed to exist by \cref{cor:DomMeasHit1} and since by \cref{lemma:ExpDomMeas} it holds that $\alpha_j=\eex{\LDist{\cpi_{(\HonA,j)}}}{\DomMeas{\cpi_{(\HonA,j)}}{\HonA}}$ and $\beta_j=\eex{\LDist{\cpi_{(\HonB,j)}}}{\DomMeas{\cpi_{(\HonB,j)}}{\HonB}}$. The proof splits to the following cases.
\begin{description}
  \item[Case $\sum_{j=0}^{\dmsi}\alpha_j \geq c$.] By the choice of $\dmsi$ it holds that $\sum_{j=0}^{\dmsi-1}\alpha_j < c$ and $\sum_{j=0}^{\dmsi-1}\beta_j < c$. \cref{lemma:propCombineMeas} yields that
\begin{align*}
\eex{\LDist{\cpi}}{\CombineMeas{\cpi}{\HonA}{\dmsi}} &= \sum_{j=0}^{\dmsi} \alpha_j\prod_{t=0}^{j-1}(1-\beta_t)(1-\alpha_t) \\
&\overset{(1)}{\geq} \sum_{j=0}^{\dmsi} \alpha_j\prod_{t=0}^{\dmsi-1}(1-\beta_t)(1-\alpha_t) \\
&\overset{(2)}{\geq} \paren{\sum_{j=0}^{\dmsi} \alpha_j}\cdot \paren{1-\sum_{j=0}^{\dmsi-1} \beta_j}\cdot\paren{1-\sum_{j=0}^{\dmsi-1} \alpha_j} \\
&\overset{(3)}{\geq} c\cdot(1-2c),
\end{align*}
where (1) follows from multiplying the $j$'th summand by $\prod_{t=j}^{\dmsi-1}(1-\beta_t)(1-\alpha_t)\leq 1$ and (2) and (3) follow since $(1-x)(1-y)\geq 1-(x+y)$ for any $x,y\geq 0$. Hence, $\dmsi$ satisfies \cref{item:convergence1}.

  \item[Case $\sum_{j=0}^{\dmsi}\alpha_j < c$.] By the choice of $\dmsi$ it holds that $\sum_{j=0}^{\dmsi}\beta_j \geq c$ and $\sum_{j=0}^{\dmsi-1}\beta_j < c$. Similar arguments to the previous case show that $\dmsi$ satisfies \cref{item:convergence2}.
\end{description}
\end{proof}

Towards proving \cref{cor:DomMeasHit1} we prove that there is always a leaf for which the value of the dominated measure is $1$.
\begin{claim}\label{claim:DomLeaf}
Let $\cpi$ be a protocol with $\BestA{\cpi}=1$. Then there exists $\ell\in\Leaves_1(\cpi)$ such that $\DomMeas{\cpi}{\HonA}(\ell)=1$.
\end{claim}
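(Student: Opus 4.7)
The plan is to prove the claim by induction on the round complexity $\rnd$ of $\cpi$. The base case $\rnd=0$ is immediate: the single leaf $\ell$ must satisfy $\Color_\cpi(\ell)=1$ (otherwise $\BestA{\cpi}=0\neq 1$), and so $\ADomMeas{\cpi}(\ell)=1$ by \cref{def:Dominated}.

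For the inductive step, assume the claim for protocols of round complexity $\rnd$, and let $\cpi$ have round complexity $\rnd+1$. First dispose of the degenerate case $\EdgeDist_\cpi(\EmptyString,b)=1$ for some $b\in\zo$: then \cref{fact:perfectAdv} yields $\BestA{\cpi_b}=1$, so by induction there is $\ell'\in\Leaves_1(\cpi_b)$ with $\ADomMeas{\cpi_b}(\ell')=1$, and \cref{def:Dominated} gives $\ADomMeas{\cpi}(b\concat \ell')=\ADomMeas{\cpi_b}(\ell')=1$. So assume $\EdgeDist_\cpi(\EmptyString,b)\in(0,1)$ for both $b$.

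If $\HonA$ controls $\Root(\cpi)$, then by \cref{fact:perfectAdv} there exists $b\in\zo$ with $\BestA{\cpi_b}=1$; by induction pick a $1$-leaf $\ell'$ of $\cpi_b$ with $\ADomMeas{\cpi_b}(\ell')=1$, and the $\HonA$-maximal property (\cref{prop:DomMeasPro}(\ref{item:Amaximal})) gives $\ADomMeas{\cpi}(b\concat \ell')=\ADomMeas{\cpi_b}(\ell')=1$, as required. If $\HonB$ controls $\Root(\cpi)$, then $\BestA{\cpi}=p\cdot\BestA{\cpi_0}+(1-p)\cdot\BestA{\cpi_1}=1$ with $p=\EdgeDist_\cpi(\EmptyString,0)\in(0,1)$, forcing $\BestA{\cpi_0}=\BestA{\cpi_1}=1$. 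Setting $\alpha_b=\eex{\LDist{\cpi_b}}{\ADomMeas{\cpi_b}}$, \cref{lemma:ExpDomMeas} combined with \cref{lemma:perfectAdv} (which gives $\BestB{\cpi_b}<1$) guarantees $\alpha_0,\alpha_1>0$, so the ratio $\alpha_{1-b}/\alpha_b$ in \cref{def:Dominated} is well-defined. Assume WLOG $\alpha_0\leq\alpha_1$, so $\Smaller{\cpi}{0}=1$; by induction there is $\ell'\in\Leaves_1(\cpi_0)$ with $\ADomMeas{\cpi_0}(\ell')=1$, and \cref{def:Dominated} yields $\ADomMeas{\cpi}(0\concat \ell')=\ADomMeas{\cpi_0}(\ell')=1$.

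No step is particularly delicate; the only mild subtlety is confirming that when $\HonB$ controls the root both subprotocols have value $1$ under the optimal $\Ac$-attack (so that induction applies on the chosen side), and that the relevant case of \cref{def:Dominated} does not introduce a multiplicative rescaling when $\Smaller{\cpi}{b}=1$.
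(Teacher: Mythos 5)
Your proof is correct and follows essentially the same route as the paper: induction on round complexity, handling the degenerate edge-probability case, then splitting on who controls the root and using $\HonA$-maximality (resp.\ $\HonB$-minimality / the $\Smaller{\cpi}{b}=1$ side of \cref{def:Dominated}). The only cosmetic difference is that you pick the side $b$ with $\Smaller{\cpi}{b}=1$ up front and apply induction only there, whereas the paper applies induction to both subprotocols and then invokes \cref{prop:DomMeasPro}(\ref{item:Bminimal}) to select the side; the aside about $\alpha_0,\alpha_1>0$ is a harmless sanity check that is not actually needed, since your chosen $b$ always lands in the third (non-rescaled) case of \cref{def:Dominated}.
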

\begin{proof}
The proof is by  induction on the round complexity of $\cpi$.

Assume that $\round(\cpi) = 0$ and let $\ell$ be the only node in $\Tree(\cpi)$. Since $\BestA{\cpi}>0$, it must be the case that $\Color_{\cpi}(\ell)=1$. The proof follows since \cref{def:Dominated} yields that $\DomMeas{\cpi}{\HonA}(\ell)=1$.

Assume that $\round(\cpi) = \rnd+1$ and that the lemma holds for $\rnd$-round protocols. If $\EdgeDist_{\cpi}(\EmptyString,b)=1$ for some $b\in\zo$, then by \cref{fact:perfectAdv} it holds that  $\BestA{\cpi_b}=\BestA{\cpi}=1$. This allows us to apply the induction hypothesis on $\cpi_b$, which yields that there exists $\ell\in\Leaves_1(\cpi_b)$ such that $\DomMeas{\cpi_b}{\HonA}(\ell)=1$. In this case, according to \cref{def:Dominated}, $\DomMeas{\cpi}{\HonA}(\ell)=\DomMeas{\cpi_b}{\HonA}(\ell)=1$, and the proof follows.

In the following we assume that $\EdgeDist_{\cpi}(\EmptyString,b)\in(0,1)$ for any $b\in\zo$.
We conclude the proof using  the following case analysis.
\begin{description}
    \item[$\HonA$ controls $\Root(\cpi)$.] According to \cref{fact:perfectAdv}, there exists $b\in\zo$ such that $\BestA{\cpi_b}=\BestA{\cpi}=1$. This allows us to apply the induction hypothesis on $\cpi_b$, which yields that there exists $\ell\in\Leaves_1(\cpi_b)$ such that $\DomMeas{\cpi_b}{\HonA}(\ell)=1$. The $\HonA$-maximal property of $\ADomMeas{\cpi}$ (\cref{prop:DomMeasPro}(\ref{item:Amaximal})) yields that $\DomMeas{\cpi}{\HonA}(\ell)=\DomMeas{\cpi_b}{\HonA}(\ell)=1$, and the proof for this case follows.

    \item[$\HonB$ controls $\Root(\cpi)$.] According to \cref{fact:perfectAdv}, $\BestA{\cpi_b}=\BestA{\cpi}=1$ for both  $b\in\zo$. This allows us to apply the induction hypothesis on $\cpi_0$ and $\cpi_1$, which yields that there exists $\ell_0\in\Leaves_1(\cpi_0)$ and $\ell_1\in\Leaves_1(\cpi_1)$ such that $\DomMeas{\cpi_0}{\HonA}(\ell_0)=1$ and $\DomMeas{\cpi_1}{\HonA}(\ell_1)=1$. The $\HonB$-minimal property of $\ADomMeas{\cpi}$ (\cref{prop:DomMeasPro}(\ref{item:Bminimal})) yields that there exists $b\in\zo$ such that $\DomMeas{\cpi}{\HonA}(\ell_b)=\DomMeas{\cpi_b}{\HonA}(\ell_b)=1$ (the bit $b$ for which $\Smaller{\cpi}{b}=1$), and the proof for this case follows.
\end{description}
This concludes the case analysis and the proof follows.
\end{proof}

We can now derive \cref{cor:DomMeasHit1}. \cref{claim:DomLeaf,prop:DomMeasPro} yield that the number of possible transcripts of $\icpi{(\HonC,j)}{}$ shrinks as $(\HonC,j)$ grows. Specifically, at least one possible transcript of $\icpi{(\HonA,j)}{}$ whose common outcome is $1$ (the transcript represented by the leaf is guaranteed to exist from \cref{claim:DomLeaf}) is \emph{not} a possible transcript of $\icpi{(\HonB,j)}{}$. Similarly, at least one possible transcript of $\icpi{(\HonB,j-1)}{}$ whose common outcome is $0$ is not a possible transcript of $\icpi{(\HonA,j)}{}$.
Since the number of possible transcripts of $\cpi$ is finite (though might be exponentially large), there exists $j\in\N$ such that either the common outcome of all possible transcripts $\icpi{(\HonA,j)}{}$ is $1$ or the common outcome of all possible transcripts of $\icpi{(\HonB,j)}{}$ is $0$. The expected value of the $\HonA$-dominated measure of $\icpi{(\HonA,j)}{}$ or the $\HonB$-dominated measure of $\icpi{(\HonB,j)}{}$ will be $1$. The formal proof is given next.

\begin{proof}[Proof of \cref{cor:DomMeasHit1}]
Assume towards a contradiction that  $\eex{\LDist{\cpi_{(\HonC,j)}}}{\DomMeas{\cpi_{(\HonC,j)}}{\HonC}}<1$ for every $(\HonC,j) \in \set{\HonA,\HonB} \times \N$. It follows that $\icpi{(\HonC,j)}{}\neq\perp$ for every such $(\HonC,j)$.  For a pair $(\HonC,j)\in \set{\HonA,\HonB} \times \N$, recursively define $\cL_{(\HonC,j)} \eqdef \cL_{\pred(\HonC,j)}\cup \cS_{(\HonC,j)}$, where $\cS_{(\HonC,j)} \eqdef \set{\ell \in \Leaves(\cpi) \colon \DomMeas{\cpi_{(\HonC,j)}}{\HonC}(\ell) = 1}$ and $\cL_{(\HonB,-1)}\eqdef\emptyset$.
The following claim (proven below) shows two properties of $\cS_{(\HonC,j)}$.
\begin{claim}\label{claim:convergence1}
It holds that $\cS_{(\HonC,j)}\neq \emptyset$ and $\cL_{\pred(\HonC,j)}\cap \cS_{(\HonC,j)}=\emptyset$ for every $(\HonC,j)\succeq(\HonB,0)$.
\end{claim}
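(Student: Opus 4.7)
The plan is to prove both assertions by induction on the ordering $\preceq$, starting from $(\HonB,0)$. The contradictory hypothesis of the outer proof---that $\eex{\LDist{\cpi_{(\HonC',j')}}}{\DomMeas{\cpi_{(\HonC',j')}}{\HonC'}} < 1$ for every pair---will be used continuously to guarantee that the protocols we encounter are ``non-trivial'' enough for \cref{claim:SwitchRoles} and \cref{claim:DomLeaf} to apply.

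First I would establish the following auxiliary fact along the induction: for every $(\HonC,j) \succeq (\HonA,0)$ we have $\Val(\cpi_{(\HonA,j)})<1$ and $\Val(\cpi_{(\HonB,j)})>0$. Indeed, $\Val(\cpi_{(\HonA,j)})=1$ would force $\BestB{\cpi_{(\HonA,j)}}=0$, and then by \cref{lemma:ExpDomMeas} the expected value of $\DomMeas{\cpi_{(\HonA,j)}}{\HonA}$ would equal $1$, contradicting our assumption; the $\HonB$-side is symmetric. Plugging these into \cref{claim:SwitchRoles} (and its $\HonB$-analog) yields $\BestB{\cpi_{(\HonB,j)}}=1$ for every $j\geq 0$ and $\BestA{\cpi_{(\HonA,j)}}=1$ for every $j\geq 1$. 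With this in hand, the first assertion $\cS_{(\HonC,j)}\neq\emptyset$ is immediate from \cref{claim:DomLeaf} (and its $\HonB$-analog): the corresponding dominated measure attains value $1$ on some leaf of $\cpi_{(\HonC,j)}$, which is a leaf of $\cpi$ by construction.

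For the second assertion, I would fix $\ell \in \cL_{\pred(\HonC,j)}$, so that $\ell \in \cS_{(\HonC',j')}$ for some $(\HonC',j') \prec (\HonC,j)$, i.e., $\DomMeas{\cpi_{(\HonC',j')}}{\HonC'}(\ell)=1$. By \cref{prop:CondPro}(4), $\VerticesDist_{\cpi_{\succe(\HonC',j')}}(\ell)=0$, and iterating \cref{prop:CondPro}(3) along the sequence shows that $\VerticesDist_{\cpi_{(\HonC,j)}}(\ell)=0$. A straightforward induction on round complexity, reading \cref{def:Dominated} from the root, establishes that $\Supp(\DomMeas{\cpi'}{\HonC'}) \subseteq \Supp(\LDist{\cpi'})$ for every protocol $\cpi'$; applying this to $\cpi_{(\HonC,j)}$ gives $\DomMeas{\cpi_{(\HonC,j)}}{\HonC}(\ell)=0$, so $\ell \notin \cS_{(\HonC,j)}$.

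The main obstacle, as in earlier parts of the section, is bookkeeping: one has to carefully invoke the contradictory hypothesis to rule out $\perp$-protocols and $\Val=0,1$ degeneracies before each application of \cref{claim:SwitchRoles}, and track that a leaf ``conditioned out'' at some stage $(\HonC',j')$ remains of zero probability (and hence of zero dominated-measure value) in all later conditional protocols in the sequence. Both steps reduce to routine applications of \cref{prop:CondPro} and \cref{def:Dominated}; the substantive content of the claim really lives in \cref{claim:DomLeaf} and \cref{claim:SwitchRoles}, which were already established.
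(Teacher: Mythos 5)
Your proposal is correct and follows essentially the same route as the paper's own proof: non-emptiness via \cref{claim:SwitchRoles} plus \cref{claim:DomLeaf}, and disjointness via \cref{prop:CondPro} items (3) and (4) together with the observation that $\Supp(\DomMeas{\cpi'}{\HonC'})\subseteq\Supp(\LDist{\cpi'})$. The only difference is that you spell out explicitly, via \cref{lemma:ExpDomMeas}, why $\Val(\cpi_{(\HonA,j)})<1$ and $\Val(\cpi_{(\HonB,j)})>0$ hold under the outer contradiction hypothesis—conditions the paper handles tersely (and, for $(\HonA,0)$, in a footnote).
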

\cref{claim:convergence1} yields that $\size{\cL_{(\HonC,j)}}>\size{\cL_{\pred(\HonC,j)}}$ for every $(\HonC,j)\succeq(\HonB,0)$, a contradiction to the fact that $\cL_{(\HonC,j)}\subseteq \Leaves(\cpi)$ for every $(\HonC,j)$.
\end{proof}

\begin{proof}[Proof of \cref{claim:convergence1}]
Let $(\HonC,j)\succeq(\HonB,0)$. By \cref{claim:SwitchRoles} it holds that $\BEST{\HonC}{\icpi{(\HonC,j)}{}} = 1$.\footnote{Note that this might not hold for $\icpi{(\HonA,0)}{}=\cpi$. Namely, it might be the case that $\BestB{\cpi} = 1$. In this case $\ADomMeas{\cpi}$ is the zero measure, $\icpi{(\HonB,0)}{}=\cpi$ and $\cS_{(\HonA,0)}=\emptyset$.} Hence, \cref{claim:DomLeaf} yields that $\cS_{(\HonC,j)}\neq\emptyset$.

Towards proving the second property, let $\ell'\in\cL_{\pred(\HonC,j)}$, and let $(\HonC',j')\in[\pred(\HonC,j)]$ such that $\ell'\in\cS_{(\HonC',j')}$. By the definition of $\cS_{(\HonC',j')}$, it holds that $\DomMeas{\cpi_{(\HonC',j')}}{\HonC'}(\ell') = 1$. By \cref{prop:CondPro} it holds that $\ell'\notin\Supp\paren{\LDist{\icpi{(\HonC'',j'')}{}}}$ for every $(\HonC'',j'')\succ(\HonC',j')$. Since $(\HonC,j)\succ\pred(\HonC,j)\succeq(\HonC',j')$, it holds that $\ell'\notin\Supp\paren{\LDist{\icpi{(\HonC,j)}{}}}$. By \cref{def:Dominated} it holds that $\DomMeas{\cpi_{(\HonC,j)}}{\HonC}(\ell) = 0$ for every $\ell\notin\Supp\paren{\LDist{\icpi{(\HonC,j)}{}}}$, and thus $\ell'\notin\cS_{(\HonC,j)}$. Hence, $\cL_{\pred(\HonC,j)}\cap \cS_{(\HonC,j)}=\emptyset$.
\end{proof}


\newcommand{\Neigh}{{\mathcal{B}\mathsf{order}}}
\newcommand{\neigh}{{\mathsf{border}}}

\newcommand{\prprrA}{{\HonA_{\delta,p;\bar r}^{(i,p)}}}
\newcommand{\Fail}{{\mathcal{F}\mathsf{ail}}}
\newcommand{\realA}[2]{\HonA^{(#1)}_{#2}}
\newcommand{\realB}[2]{\HonB^{(#1)}_{#2}}

\newcommand{\NumOfContPrime}{\left\lceil\frac{\log(1/\xi)}{\log(1/(1-\delta'))}\right\rceil}
\newcommand{\EffPruHead}{{\widetilde{\MathAlg{A}}}}
\newcommand{\NumOfSam}{\left\lceil\frac{\ln\paren{2^{m}/\xi}}{\xi^2/2}\right\rceil}
\newcommand{\NumOfCont}{\left\lceil\frac{\log(1/\xi)}{\log(1/(1-\delta))}\right\rceil}
\newcommand{\hRandomCont}{{\widehat{\RandomCont}}}
\newcommand{\BC}{\MathAlgX{C}}
\newcommand{\Est}{\MathAlg{Est}}

\section{Efficiently Biasing Coin-Flipping Protocols}\label{sec:RealAttacker}
In \cref{sec:IdealAttacker}, we showed that for any  coin-flipping protocol and   $\eps\in(0,\frac12]$, applying   the biased-continuation attack recursively for $\kappa=\kappa(\eps)$  times, biases the honest party's outcome by (at least) $1/2-\eps$. Implementing this attack, however, requires access to a sampling algorithm (\ie the biased continuator $\RandomCont$; see \cref{def:sampler}), which we do not know how to efficiently implement even when assuming OWFs do not exist. In this section, we show that the inexistence  of  OWFs does suffice to implement an \emph{approximation} of the biased-continuation attack that can be used to implement a strong enough variant of the aforementioned attack.

The outline of this section is as follows. In \cref{sec:EfficientRCAttacker} we define  the \emph{approximated (recursive) biased-continuation attacker}, an approximated variant of the (ideal) recursive biased-continuation attacker defined in \cref{sec:IdealAttacker}. We  show that this approximated attacker does well as lone as it does not visits \emph{low-value nodes} --- the expected protocol's outcome conditioned on  visiting  the nodes (transcripts) is close to zero. In \cref{sec:AttackPrunedProtocol}, we define a special class of protocols, called \emph{approximately pruned protocols}, that have (almost) no low-value nodes. We conclude that the approximated attacker does well when it attacks approximately pruned protocols, and argue about the implementation of this attacker. In \cref{sec:PruningInTheHead}, we define the \emph{pruning-in-the-head attacker} that behaves as if the protocol it is attacking is pruned, and by doing so manages to make use of the  recursive approximated biased-continuation attacker to attack \emph{any} protocol. In \cref{sec:ProtocolInv} we argue about the implementation of the pruning-in-the-head attacker. Finally in \cref{sec:EfficeinrAttack}, we show that the assumption that OWFs do not exist implies that the above attacker can be implemented efficiently, yielding that the outcome on \emph{any} coin-flipping protocol can be efficiently biased to be arbitrarily close to $0$ or $1$.

Throughout the section, as it was the case in \cref{sec:IdealAttacker}, we prove statements \wrt attackers that, when playing the role of the left-hand party of the protocol (\ie $\HonA$), are trying to bias the common output of the protocol towards one, and, when playing the role of the right-hand party of the protocol (\ie $\HonB$), are trying to bias the common output of the protocol towards zero. All statements have analogues ones \wrt the opposite attack goals.

\subsection{The Approximated  Biased-Continuation Attacker}\label{sec:EfficientRCAttacker}
We start with defining the recursive approximated biased-continuation attacker, an approximated variant of the recursive biased-continuation attacker defined in \cref{sec:IdealAttacker}, and state our bound on its success probability. The rest of the section will be devoted to proving this bound.

\paragraph{Defining the attacker.}\label{sec:real:app:def}
The approximated recursive biased-continuation attacker is using an approximated version  of the biased continuator $\RandomCont$ (see \cref{def:sampler}). The approximated biased continuator  is only guaranteed to works well when applied on nodes whose value (\ie the probability that the protocol outcome is $1$ given that the current transcript is the node's label) is not too close to the borders.  The motivation for using this weaker biased continuator is that, as we see later,  it can be efficiently implemented assuming the in-existence of  OWFs. In the following let $\RandomCont_\cpi$ be as in \cref{def:sampler}.

\begin{definition}[low-value and high-value nodes]\label{def:ILowHigh}
For a protocol $\cpi=(\HonA,\HonB)$ and $\delta\in[0,1]$, let
\begin{itemize}
\item $\low{\cpi}{\delta}=\set{u\in\Vertices(\cpi)\setminus\Leaves(\cpi): \Val(\cpi_{u})\leq\delta}$, and

\item $\high{\cpi}{\delta}=\left\{u\in\Vertices(\cpi)\setminus\Leaves(\cpi): \Val(\cpi_{u})\geq1-\delta\right\}$.

\end{itemize}
For $\HonC\in\set{\HonA,\HonB}$, let $\low{\cpi}{\delta,\HonC}=\low{\cpi}{\delta}\cap\ctrl{\cpi}{\HonC}$ and similarly let $\high{\cpi}{\delta,\HonC}=\high{\cpi}{\delta}\cap\ctrl{\cpi}{\HonC}$.\footnote{Recall that $\ctrl{\cpi}{\HonC}$ denotes the nodes in $\Tree(\cpi)$ controlled by party $\HonC$ (see \cref{def:ProtocolTree}).}
\end{definition}

\begin{definition}[approximated  biased continuator $\RandomCont_{\Pi}^{\xi,\delta}$]\label{def:AppBiassCSampler}
Algorithm \BC is a {\sf $(\xi,\delta)$-biased-continuator} for an $\rnd$-round protocol $\cpi$ if the following hold.
\begin{enumerate}\setlength{\leftmargini}{3pt}
 \item $\ppr{\ell \la \LDist{\cpi}}{\exists i\in (\rnd-1) \colon \SDP{\BC(\ell_{1,\ldots,i},1)}{\RandomCont_\cpi(\ell_{1,\ldots,i},1)} > \xi \land \ell_{1,\ldots,i}\notin\low{\cpi}{\delta}} \leq \xi$,\\ and
 \item $\ppr{\ell \la \LDist{\cpi}}{\exists i\in (\rnd-1)  \colon  \SDP{\BC(\ell_{1,\ldots,i},0)}{\RandomCont_\cpi(\ell_{1,\ldots,i},0)} > \xi \land \ell_{1,\ldots,i}\notin\high{\cpi}{\delta}} \leq \xi$.
\end{enumerate}
Let $\RandomCont_\Pi^{\xi,\delta}$ be an arbitrary  (but fixed) $(\xi,\delta)$-biased-continuator of $\Pi$.
\end{definition}

The recursive approximated biased-continuation attacker is identical to that defined in \cref{sec:IdealAttacker}, except that it uses the approximated biased-continuator sampler and not the ideal one.

Let $\rcAP{\cpi}{0,\xi,\delta}\equiv \HonA$, and for integer $i> 0$ define:
\begin{algorithm}[approximated recursive biased-continuation attacker $\realA{i,\xi,\delta}{\cpi}$]\label{def:itAppAtt}
	\item  Parameters: integer $i>0$, $\xi,\delta\in(0,1)$.
	\item  Input: transcript $u \in \zo^\ast$.
	\item Operation:
	\begin{enumerate}

		\item If $u\in \Leaves(\cpi)$, output $\Color_\cpi(u)$ and halt.

		\item  Set $\msg = \RandomCont_{\paren{\realA{i-1,\xi,\delta}{\cpi},\HonB}}^{\xi,\delta}(u,1)$.

		\item Send $\msg$ to $\HonB$.

		\item If $u'=u\conc \msg\in \Leaves(\cpi)$, output $\Color_\cpi(u')$.

	\end{enumerate}
\end{algorithm}
In the following we sometimes refer to the base (non-recursive) version of the above algorithm, \ie $\realA{1,\xi,\delta}{\cpi}$, as the approximated biased-continuation attacker. When clear from the context, we will remove the protocol name (\ie $\cpi$) from the subscript of the above attacker. (As a rule of thumb, in statements and definitions we explicitly write the protocols to which the algorithms refer, whereas in proofs and informal discussions we usually omit them.) 

\paragraph{The attacker's success probability.}
We would like to bound the difference between the biased-continuation attacker and its approximated variant defined above. Following \cref{def:AppBiassCSampler}, if the approximated biased continuator $\RandomCont^{\xi,\delta}$ is called on non-low-value nodes (transcripts), both attackers are given similar answers, so the difference between them will be small. Hence, as long as the probability of hitting low-value nodes under $\HonA$'s control is small (note that only nodes under $\HonA$'s control are queried), we expect that the recursive approximated biased-continuation attacker will do well. This is formally put in the next lemma.

\begin{lemma}\label{lemma:goodWOLowValueSimple}
	For any     $\delta\in (0,1/4]$ and   $k\in\N$, there exists a  polynomial $p_{k,\delta}$  such that the following holds. Let  $\cpi=(\HonA,\HonB)$ be an $\rnd$-round protocol, and assume that   $\ppr{\LDist{\cpi}}{\descP{\low{\cpi}{1.5\delta',\HonA}}}\leq \alpha$ for some $\delta \le\delta'\le\frac14$.\footnote{$\descP{\cS}$ is the set of nodes with ancestor in $\cS$ (see \cref{def:BinaryTrees}).} Then for any  $\xi,\mu\in(0,1)$, it holds that
	\begin{align*}
	\SDP{\LDist{\rcAP{\cpi}{k},\HonB}}{\LDist{\realA{k,\xi,\delta'}{\cpi},\HonB}} &\leq  \phiPruE{k,\delta}(\alpha,\xi,m,\delta',\mu)\eqdef (\alpha+\xi)\cdot p_{k,\delta}(m,1/\delta',1/\mu) + \mu.
	\end{align*}
\end{lemma}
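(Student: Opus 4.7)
The proof will proceed by induction on $k \in \N$. For the base case $k=0$, we have $\rcAP{\cpi}{0} \equiv \realA{0,\xi,\delta'}{\cpi} \equiv \HonA$, so the two leaf distributions coincide and the bound holds trivially for $\phiPruE{0,\delta} \equiv 0$.

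For the inductive step, the plan is to use a hybrid argument together with \cref{lemma:SDmQueriesAlg}. Introduce an intermediate attacker $H$ that behaves like $\realA{k,\xi,\delta'}{\cpi}$ except it uses the \emph{ideal} continuator $\RandomCont_{(\realA{k-1,\xi,\delta'}{\cpi}, \HonB)}$ instead of the approximated one. The triangle inequality then yields
\begin{align*}
\SDP{\LDist{\rcAP{\cpi}{k}, \HonB}}{\LDist{\realA{k,\xi,\delta'}{\cpi}, \HonB}} \leq \SDP{\LDist{\rcAP{\cpi}{k}, \HonB}}{\LDist{H, \HonB}} + \SDP{\LDist{H, \HonB}}{\LDist{\realA{k,\xi,\delta'}{\cpi}, \HonB}},
\end{align*}
and I will bound the two terms separately, both via \cref{lemma:SDmQueriesAlg} applied to the common outer code of \cref{def:itAppAtt}.

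For the first term (same continuator definition, different sub-attackers), both attackers follow identical code parameterized by oracles $\RandomCont_{(\rcAP{\cpi}{k-1}, \HonB)}$ versus $\RandomCont_{(\realA{k-1,\xi,\delta'}{\cpi}, \HonB)}$. I will apply \cref{lemma:SDmQueriesAlg} with these as $f$ and $g$. The per-query statistical-distance bound comes from the induction hypothesis: since the two sub-protocols have leaf distributions that are $\phiPruE{k-1,\delta}(\alpha,\xi,m,\delta',\mu/c)$-close for an appropriate constant $c$, the continuator outputs -- which are conditional distributions on leaves that descend from the query node and have value $1$ -- are close up to a $1/\delta'$ conditioning blow-up, \emph{provided} the query node has value at least $\delta'$. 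For the second term (same sub-attacker, different oracle quality), the expected per-query distance is at most $\xi$ directly from \cref{def:AppBiassCSampler}, again outside of low-value nodes of $(\realA{k-1,\xi,\delta'}{\cpi}, \HonB)$.

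To invoke \cref{lemma:SDmQueriesAlg} we must also bound the probability that the attacker's queries land on low-value or ``unbalanced'' nodes. For low-value nodes I use a monotonicity argument: biased-continuation only shifts value upward, so $\low{(\realA{j,\xi,\delta'}{\cpi}, \HonB)}{\delta'} \subseteq \low{\cpi}{\delta'} \subseteq \low{\cpi}{1.5\delta'}$ for every $j$, and the hypothesis $\ppr{\LDist{\cpi}}{\descP{\low{\cpi}{1.5\delta',\HonA}}} \leq \alpha$ controls the hitting probability under the honest reference distribution (the extra factor of $1.5$ giving the slack needed to absorb the error from biasing the edge distribution of $\HonA$-controlled nodes). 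Unbalanced nodes -- those where the attacker's visit probability exceeds the honest one by a factor $\lambda$ -- contribute the additive $\mu$ term: taking $\lambda = p_{k,\delta}(m, 1/\delta', 1/\mu) / (\alpha+\xi)$, an argument analogous to the $\gamma$-unbalanced-transcript bound from \cref{sec:intro:Technique} shows that the probability of reaching an unbalanced non-low-value prefix is at most $\mu$.

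The main obstacle I foresee is controlling how these errors compound over the $k$ recursion levels. Each level multiplies the error by $\poly(m, 1/\delta')$ from the conditioning blowup and by $m$ from summing the per-query bounds, so $p_{k,\delta}$ will naturally take the form $[\poly(m, 1/\delta', 1/\mu)]^{O(k)}$; keeping the $\mu$ additive term under control will require feeding a geometrically shrinking $\mu$ parameter into the induction hypothesis at each recursion level, and separately tracking the unbalancedness set at each level (using that biased-continuation on an $\HonA$-controlled node can inflate visit probabilities by at most a factor of $1/\Val$, which is at most $1/\delta'$ outside of low-value nodes).
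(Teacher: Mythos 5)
Your high-level decomposition matches the paper's closely: your intermediate attacker $H$ is exactly $\rcAP{\cpi_2^{(k-1)}}{1}$ for $\cpi_2^{(k-1)}=(\realA{k-1,\xi,\delta'}{\cpi},\HonB)$, so your two triangle-inequality terms are precisely the paper's ``robustness'' step (\cref{lem:RCofStatCloseProt}) and ``ideal-to-real'' step (\cref{cor:approxBC}), both proven via \cref{lemma:SDmQueriesAlg}, and the geometric $\gamma$-schedule across recursion levels is also the paper's device (\cref{prop:UnbalBounds} and the proof of \cref{lemma:goodWOLowValueSimple} from \cref{lemma:goodWOLowValue}). However, there is a genuine gap in how you control low-value nodes at intermediate recursion levels.

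You claim that $\low{(\realA{j,\xi,\delta'}{\cpi},\HonB)}{\delta'}\subseteq\low{\cpi}{\delta'}$ because biased continuation only shifts values upward. That monotonicity is true only for the \emph{ideal} attacker $\rcAP{\cpi}{j}$: at each $\HonA$-controlled node it reweights the outgoing edges strictly towards the child of higher value. The \emph{approximated} attacker $\realA{j,\xi,\delta'}{\cpi}$ enjoys no such guarantee, because $\RandomCont^{\xi,\delta}$ may behave arbitrarily badly on any node where the approximation fails (and is only promised to work well outside low-value nodes, with error probability $\xi$ even there). On such nodes it can just as well bias toward the lower-value child, so $\Val((\realA{j,\xi,\delta'}{\cpi},\HonB)_u)$ can drop below $\Val(\cpi_u)$ and new low-value nodes (and entirely new $\HonA$-controlled low-value ancestors) can appear that $\cpi$ did not have. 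The paper's \cref{lemma:IndStep} circumvents this: it propagates only the low-value-hitting bound for the \emph{ideal} intermediate protocol (where monotonicity does hold, via \cref{prop:UnbalBounds}), and then bounds $\Pr[\descP{\low{\cpi_2}{\delta'}\cup\low{\cpi_1}{\delta'}}]$ by splitting the frontier into nodes whose $\cpi_1$-value is at least $1.5\delta'$ (bounded using the leaf-distribution closeness and \cref{prop:CloseProDiffValues} --- this is the real purpose of the $1.5\delta'$ slack, not ``absorbing edge-distribution bias'') versus those whose $\cpi_1$-value is below $1.5\delta'$ (covered by the hypothesis on the ideal side). That extra argument is not a cosmetic patch; it is the content that makes the induction close, and your outline as written skips it.

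A second, smaller issue: choosing $\lambda = p_{k,\delta}(m,1/\delta',1/\mu)/(\alpha+\xi)$ produces the wrong shape of bound. In \cref{lemma:SDmQueriesAlg} the error is $b + ka\lambda$; with per-query error $a \approx (\alpha+\xi)/\delta'$ and $\lambda$ scaling as $1/(\alpha+\xi)$, the product $ka\lambda$ loses its $(\alpha+\xi)$ prefactor and you would end up with $\poly(m,1/\delta',1/\mu)+\mu$ instead of $(\alpha+\xi)\cdot\poly(m,1/\delta',1/\mu)+\mu$. The unbalancedness thresholds $\gamma_i$ must be picked as functions of $(m,1/\delta',1/\mu)$ alone, as the paper does, so that the unbalancedness probability is $\mu$ and the multiplicative inflation still carries the $(\alpha+\xi)$ factor.
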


The fact that the lemma assumes a bound \wrt $\low{\cpi}{1.5\delta',\HonA}$ (and not $\low{\cpi}{\delta',\HonA}$) is of technical nature, and is not significant to the understating of the statement.

We will use \cref{lemma:goodWOLowValueSimple} as follows: the constants $\delta$, $\delta'$ and $k$ will be set according to  the (constant) bias of the protocol. Then we choose $\mu\in o(1)$. Finally, we are free to choose $\alpha$ and $\xi$ to be $1/p$ for large enough polynomial $p$, such that $p \gg p_{k,\delta}(m,1/\delta',1/\mu)$.

In addition to \cref{lemma:goodWOLowValueSimple}, the following lemma will  be useful when considering pruned protocols in the next section.

\begin{lemma}\label{claim:SmallStaySmallApproxSimple}
		For any    $\delta\in (0,1/4]$ and   $k\in\N$, there exists a polynomial $q_{k,\delta}$  such that the following holds. Let $\cpi = \paren{\HonA,\HonB}$ and $\cpi'=\paren{\HonC,\HonD}$ be two $\rnd$-round protocols and let $\cF$ be  a frontier of $\cU$, for some $\cU\subseteq \Vertices(\cpi)$. Assume  $\SDP{\LDist{\cpi}}{\LDist{\cpi'}} \leq \eps$, $\ppr{\LDist{\cpi}}{\descP{\low{\cpi}{1.5\delta',\HonA}}}\leq \alpha$ for some $\delta \le\delta'\le\frac14$, and  $\ppr{\LDist{\cpi'}}{\descP{\cF}}\leq\beta$. Then for any  $\xi,\mu\in(0,1)$, it holds that
	\begin{align*}
	\ppr{\LDist{\pruAttack{k,\delta',\xi}{\cpi},\HonB}}{\descP{\cF}} \leq  \phiBalE{k,\delta}(\alpha,\beta,\eps,m,\delta',\mu) + \phiPruE{k,\delta}(\alpha,\xi,m,\delta',\mu),
	\end{align*}
	for
	\begin{align*}
	\phiBalE{k,\delta}(\alpha,\beta,\eps,m,\delta',\mu) \eqdef \paren{\alpha + \beta + \eps}\cdot q_{k,\delta}(m,1/\delta',1/\mu) + \mu.
	\end{align*}
\end{lemma}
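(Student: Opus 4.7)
The plan is to reduce the statement to an analogous bound for the \emph{ideal} recursive biased-continuation attacker $\rcAP{\cpi}{k}$ and then establish that bound directly from the edge-distribution formula of \cref{claim:weights}. Concretely, \cref{lemma:goodWOLowValueSimple} gives $\SDP{\LDist{\pruAttack{k,\delta',\xi}{\cpi},\HonB}}{\LDist{\rcAP{\cpi}{k},\HonB}} \le \phiPruE{k,\delta}(\alpha,\xi,m,\delta',\mu)$, which implies
\[
\ppr{\LDist{\pruAttack{k,\delta',\xi}{\cpi},\HonB}}{\descP{\cF}} \le \ppr{\LDist{\rcAP{\cpi}{k},\HonB}}{\descP{\cF}} + \phiPruE{k,\delta}(\alpha,\xi,m,\delta',\mu).
\]
It therefore suffices to show the ideal bound $\ppr{\LDist{\rcAP{\cpi}{k},\HonB}}{\descP{\cF}} \le (\alpha+\beta+\eps)\cdot q_{k,\delta}(m,1/\delta',1/\mu)$; the $\mu$ slack in $\phiBalE$ simply carries over from $\phiPruE$.

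The bound on the ideal attacker will rest on the following amplification estimate. Iterating \cref{claim:weights} along a root-to-$v$ path $\EmptyString=u_0,u_1,\dots,u_n=v$ gives
\[
\VerticesDist_{(\rcA{k},\HonB)}(v) = \VerticesDist_\cpi(v)\cdot \prod_{\substack{0\le j<n\\ \HonA\text{ controls }u_j}}\prod_{i=0}^{k-1}\frac{\Val((\rcA{i},\HonB)_{u_{j+1}})}{\Val((\rcA{i},\HonB)_{u_j})}.
\]
The key observation is that biased continuation toward $1$ never decreases values, so $\Val((\rcA{i},\HonB)_u)\ge \Val(\cpi_u)$ for every $u$ and every $i\ge 0$ (proved by induction on $i$, using \cref{fact:switchOrder} to write $(\rcA{i},\HonB)_u$ as a recursive biased-continuation on $\cpi_u$). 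Consequently, if every $\HonA$-controlled proper ancestor $u_j$ of $v$ lies outside $\low{\cpi}{1.5\delta',\HonA}$, each denominator in the product is at least $1.5\delta'\ge\delta'$, and the whole product is at most $(1/\delta')^{km}$---a purely deterministic amplification bound along paths avoiding $\low{\cpi}{1.5\delta',\HonA}$.

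With this in hand, I would split $\cF = \paren{\cF\setminus\descP{\low{\cpi}{1.5\delta',\HonA}}}\cup\paren{\cF\cap\descP{\low{\cpi}{1.5\delta',\HonA}}}$ and handle the two pieces separately. The amplification bound together with $\SDP{\LDist{\cpi}}{\LDist{\cpi'}}\le\eps$ and $\ppr{\LDist{\cpi'}}{\descP{\cF}}\le\beta$ yields
\[
\ppr{\LDist{\rcA{k},\HonB}}{\cF\setminus\descP{\low{\cpi}{1.5\delta',\HonA}}}\le (1/\delta')^{km}\cdot(\beta+\eps),
\]
while applying the same amplification bound to $\frnt{\low{\cpi}{1.5\delta',\HonA}}$ (whose descendants cover $\cF\cap\descP{\low{\cpi}{1.5\delta',\HonA}}$) together with the hypothesis $\ppr{\LDist{\cpi}}{\descP{\low{\cpi}{1.5\delta',\HonA}}}\le\alpha$ gives the matching bound $(1/\delta')^{km}\cdot\alpha$ for the second piece. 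Summing produces $(1/\delta')^{km}(\alpha+\beta+\eps)$, so taking $q_{k,\delta}(m,1/\delta',1/\mu)\ge(1/\delta)^{km}$ completes the ideal bound.

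The main obstacle I anticipate is formalising the monotonicity $\Val((\rcA{i},\HonB)_u)\ge\Val(\cpi_u)$ cleanly across all $k$ recursion levels. Intuitively, each recursive biased-continuation call only shifts leaf mass toward $1$-leaves, so values can only grow, but a careful induction on $i$ is needed, comparing the edge distribution at level $i$ to level $i-1$ via the definition of the biased continuator and invoking \cref{fact:switchOrder} to pass the argument to subprotocols. A secondary technicality is making sure the decomposition of $\cF$ along $\descP{\low{\cpi}{1.5\delta',\HonA}}$ does not over-count; this is handled exactly by \cref{prop:UnBalLowValueG}, which is the combinatorial identity needed for the split.
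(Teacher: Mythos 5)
Your plan converges with the paper's only up to the reduction via \cref{lemma:goodWOLowValueSimple}; after that your argument diverges substantially and has a genuine gap.

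The problem is your amplification estimate. You bound
\[
\frac{\VerticesDist_{(\rcA{k},\HonB)}(v)}{\VerticesDist_\cpi(v)} = \prod_{\substack{0\le j<\size{v}\\ \HonA\text{ controls }u_j}}\prod_{i=0}^{k-1}\frac{\Val((\rcA{i},\HonB)_{u_{j+1}})}{\Val((\rcA{i},\HonB)_{u_j})}
\]
pointwise by bounding each factor by $1/(1.5\delta')$, which gives $(1/\delta')^{km}$. This quantity is \emph{exponential} in $m$, not polynomial, so you cannot absorb it into $q_{k,\delta}(m,1/\delta',1/\mu)$, which the lemma explicitly requires to be a polynomial in its arguments. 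And the exponential is not an artifact of loose bookkeeping: the product over $\HonA$-controlled ancestors does not telescope (the $\HonB$-controlled steps can reset the value downward), and the worst case---alternating $\HonA$-nodes of value $1.5\delta'$ whose chosen children have value near $1$, followed by $\HonB$-nodes that push the value back down to $1.5\delta'$---really does incur a factor of roughly $(1/(1.5\delta'))^{\Theta(m)}$. So a purely deterministic path-wise argument cannot give a polynomial $q_{k,\delta}$.

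The paper's route around this is precisely the content of \cref{lem:ProbVisitUnBal} (packaged through \cref{cor:UnbalBound} and \cref{prop:UnbalBounds}): it does not bound the amplification deterministically, but shows that the \emph{probability under $(\rcA{1},\HonB)$} of reaching a $\gamma$-unbalanced node with no low-value $\HonA$-ancestor is at most $2/\gamma^c$, where $c=c(\delta)$ is a constant independent of $m$ and $\gamma$. That polynomial decay in $\gamma$ is what lets the proof choose $\gamma_1,\dots,\gamma_k \in \poly(m,1/\delta',1/\mu)$ and keep both the amplification $\prod\gamma_i$ and the error $\sum_i \gamma_i^{-c}\prod_{j>i}\gamma_j$ under control simultaneously (this is exactly the calculation in the proof of \cref{lemma:goodWOLowValueSimple}). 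In the actual proof of \cref{claim:SmallStaySmallApproxSimple} the paper simply notes $\ppr{\LDist{\cpi}}{\descP{\cF}}\le\beta+\eps$ by the statistical distance hypothesis, applies \cref{prop:UnbalBounds} to the set $\cF$ to get an ideal-attacker bound of the form $(\alpha+\beta+\eps)\prod\gamma_i + 2\sum\gamma_i^{-c}\prod_{j>i}\gamma_j$, and then adds the $\phiPru$ error from \cref{lemma:goodWOLowValue}. Your use of \cref{prop:UnBalLowValueG} for the decomposition of $\cF$ is in the right spirit---it appears inside the proof of \cref{cor:SmallStaySmall}---but the probabilistic unbalancedness lemma is the missing ingredient; without it the argument fails.

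The monotonicity claim $\Val((\rcA{i},\HonB)_u) \ge \Val(\cpi_u)$ that you flag as the ``main obstacle'' is in fact true and straightforward (the paper uses it in \cref{prop:UnbalBounds} and \cref{lemma:goodWOLowValue} to argue $\low{(\rcA{i},\HonB)}{\delta',\HonA}\subseteq\low{\cpi}{\delta',\HonA}$), but it does not rescue the deterministic amplification bound.
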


Namely, \cref{claim:SmallStaySmallApproxSimple} asserts that if the transcripts of $\cpi$ and $\cpi'$ are close, the probability of hitting low-value nodes in $\cpi$ under the control of the left-hand party is small and the probability of hitting a frontier  $\cF$ in $\cpi'$ is small as well. Then the probability of hitting this frontier in $\cpi$   when the recursive approximated biased-continuation attacker is taking the role of the left-hand party in $\cpi$ is small as well.

\paragraph{Outline for the proof of \cref{lemma:goodWOLowValueSimple}.}
Proving \cref{lemma:goodWOLowValueSimple} actually turns out to be quite challenging. The lemma assumes that the probability, according to the \emph{honest distribution} of leaves (\ie $\LDist{\cpi}$), to generate a low-value node under $\HonA$'s control is small. The queries the attacker makes, however, might be chosen from a different distribution, making some nodes much more likely to be queried than before. We call such nodes ``unbalanced''. If low-value nodes under $\HonA$'s control were a large fraction of the unbalanced ones, then \cref{def:AppBiassCSampler} guarantee nothing about the answers of the approximated biased continuator $\RandomCont^{\xi,\delta}$. Indeed, the main technical contribution of this section is to show that low-value nodes under $\HonA$'s control are only small fraction of the unbalanced ones.

A natural approach for proving \cref{lemma:goodWOLowValueSimple} is to use induction on $k$. The base case  when $k=1$ holds since $\RandomCont_\Pi^{\xi,\delta'}$, used by $\realA{1,\xi,\delta'}{\cpi}$, is a $(\xi,\delta')$-biased-continuator of $\Pi$. Moving to the induction step, we assume the lemma is true for $k-1$. Namely, we assume that
\begin{align}\label{eq:outlineHyp}
\SDP{\LDist{\rcAP{\cpi}{k-1},\HonB}}{\LDist{\realA{k-1,\xi,\delta'}{\cpi},\HonB}} \text{ is small.}
\end{align}

The first step is to apply the ideal biased-continuation attacker on the left-hand side part of both protocols. We will show that even after applying the attacker, the protocols remain close. Namely, we will prove the following statement.
\begin{align}\label{eq:outlineRobust}
&\SDP{\LDist{\rcAP{\cpi}{k-1},\HonB}}{\LDist{\realA{k-1,\xi,\delta'}{\cpi},\HonB}} \text{ is small} \\
&\implies \SDP{\LDist{\paren{\rcAP{\cpi}{k-1}}^{(1)},\HonB}}{\LDist{\paren{\realA{k-1,\xi,\delta'}{\cpi}}^{(1)},\HonB}} \text{ is small as well}. \nonumber
\end{align}
Putting differently, to prove \cref{eq:outlineRobust} we show that the biased-continuation attacker is ``robust'' --- it does not make similar protocols dissimilar.

The second step it to show that applying the ideal biased-continuation attacker on the right-hand side protocol is similar to applying the approximated biased-continuation attacker on the same protocol. Namely, we will prove the following statement.
\begin{align}\label{eq:outlineIdeal2Real}
\SDP{\LDist{\paren{\realA{k-1,\xi,\delta'}{\cpi}}^{(1)},\HonB}}{\LDist{\paren{\realA{k-1,\xi,\delta'}{\cpi}}^{(1,\xi,\delta')},\HonB}} \text{ is small}
\end{align}
Putting differently, to prove the ``ideal to real'' reduction described in  \cref{eq:outlineRobust} we show that the approximated biased-continuation attacker is a good approximation to its ideal variant.

In fact, both the ``robustness'' property (\cref{eq:outlineRobust}) and the   ``ideal to real'' reduction (\cref{eq:outlineIdeal2Real}) require the additional assumption that the probability of hitting low-value nodes under the control of the left-hand side party is small. Following the induction hypothesis (\cref{eq:outlineHyp}) showing this assumption to be true reduces to showing that the recursive  ideal biased-continuation attacker  hit low-value nodes under its control with only small probability (specifically, we need this to hold for $k-1$ recursions). The lemma assumes that the probability of hitting such nodes in the original protocol is small, namely that the set of $\HonA$-controlled low-value nodes is of low density. We will show that the recursive  ideal biased-continuation attacker does not increase the density of any sets by much.

The outline of this section is as follows. In \cref{sec:VisitUnbal} we formally define unbalanced nodes \wrt the \emph{non-recursive} attacker, and show that low-value nodes under $\HonA$'s control are only small fraction of them. This connection between unbalanced nodes to low-value ones underlines all the other results in this section.
In \cref{sec:Robust} we state and prove the ``robustness'' property. In \cref{sec:real:app:bound} we analyze the ``ideal to real'' reduction. In \cref{sec:lowRemainLow} we show that when it is applied recursively, the ideal biased-continuation attacker does not increase the probability of hitting low-density sets. Finally, in \cref{sec:ProofsOfgoodWOLowValue} we give the proofs of  \cref{lemma:goodWOLowValueSimple,claim:SmallStaySmallApproxSimple}.

\subsubsection{Unbalanced Nodes}\label{sec:VisitUnbal}
For non low-value and non high-value transcripts, \cref{def:AppBiassCSampler}  guarantees that  when queried on transcripts chosen according to the honest distribution of leaves (\ie $\LDist{\cpi}$), there is only a small statistical distance between the answers of the biased continuator $\RandomCont$ and it approximated variant $\RandomContXD{\xi}{\delta}$. The queries the biased-continuation attacker makes, however, might be chosen from a different distribution, making some transcripts much more likely to be queried than before. We call such transcripts ``unbalanced''.
\begin{definition}[unbalanced nodes]\label{def:unbalNodes}
For a protocol $\cpi=\HonAHonB$ and $\gamma\geq1$, let
 $\unbal{\cpi}{i}{\gamma}{}= \set{u\in\Vertices(\cpi) \setminus \Leaves(\cpi) \colon \VerticesDist_{\paren{\rcAP{\cpi}{1},\HonB}}(u)\geq \gamma\cdot\VerticesDist_{(\Ac,\HonB)}(u)}$, where $\rcAP{\cpi}{1}$ is as in \cref{alg:adversary} and $\VerticesDist$ as in \cref{def:ProtocolTree}.\footnote{$\VerticesDist_{\Tau}(u)$ is the probability that node (transcript) $u$ is reached in an (honest) execution of protocol $\Tau$.}
\end{definition}
Namely, $\unbal{\cpi}{i}{\gamma}{}$ are those nodes that a random  execution of $(\rcA{1},\HonB)$ visits with probability at least  $\gamma$ times the probability that a random execution of $\Pi$ does.

Given a protocol $\cpi=\HonAHonB$, we would like to understand what makes a node unbalanced. Let $u$ be a $\gamma$-unbalanced node, \ie $\VerticesDist_{(\rcA{1},\HonB)}(u) \geq \gamma\cdot \VerticesDist_{(\HonA,\HonB)}(u)$. By the edge distribution of $\paren{\rcA{1},\HonB}$ (\cref{claim:weights}), it follows that
\begin{align}\label{eq:unbal}
\frac{\VerticesDist_{(\rcA{1},\HonB)}(u)}{\VerticesDist_{(\HonA,\HonB)}(u)} = \prod_{\substack{0\leq i \leq \size{u}-1 \colon \\ u_{1,\ldots,i}\in\ctrl{\cpi}{\HonA}}}\frac{\Val(\cpi_{u_{1,\ldots,i+1}})}{\Val(\cpi_{u_{1,\ldots,i}})} \geq \gamma.
\end{align}

Hence, if $\gamma$ is large, one of the terms of the product in \cref{eq:unbal} must be large. Since the value of any sub-protocol is at most one, the numerator of each term cannot be large. It then must be the case that the denominator of at least one of those terms is close to zero, \ie that $u$ has a low-value ancestor controlled by $\HonA$.\footnote{This discussion is not entirely accurate, but it gives a good intuition for why unbalanced nodes relate to low-value ones. Indeed, the actual statement (\cref{lem:ProbVisitUnBal}) shows this discussion to hold only with high probability, which suffices for our needs.}

The following key lemma formulates the above intuition, and shows that the biased-continuation attacker does not bias the original distribution of the protocol by too much, unless it has previously visited a low-value node controlled by $\Ac$.

\begin{lemma}\label{lem:ProbVisitUnBal}
Let $\cpi=\HonAHonB$ be a protocol and let $\rcAP{\cpi}{1}$ be as in \cref{alg:adversary}. Then for every $\delta\in(0,\frac12]$ there exists a constant $c=c(\delta)>0$, such that for every $\delta'\geq \delta$ and $\gamma> 1$:
\begin{align*}
\ppr{\LDist{\rcAP{\cpi}{1},\HonB}}{\descP{\unbal{\cpi}{1}{\gamma}{} \setminus \propDescP{\low{\cpi}{\delta',\HonA}}}}\leq \frac{2}{\gamma^{c}}.\footnotemark
\end{align*}
\end{lemma}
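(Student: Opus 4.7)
The plan is to exhibit a potential function along a random path of $\cpi$ that is a supermartingale under honest play as long as the path avoids $\HonA$-low-value nodes, and to conclude by combining the optional stopping theorem with a change of measure from $\LDist{\cpi}$ to $\LDist{(\rcAP{\cpi}{1},\HonB)}$. To that end, consider the likelihood ratio
\[
L(u) \;\eqdef\; \frac{\VerticesDist_{(\rcAP{\cpi}{1},\HonB)}(u)}{\VerticesDist_{(\HonA,\HonB)}(u)} \;=\; \prod_{i:\, u_{1,\ldots,i}\in \ctrl{\cpi}{\HonA}} \frac{\Val(\cpi_{u_{1,\ldots,i+1}})}{\Val(\cpi_{u_{1,\ldots,i}})},
\]
the second equality being \cref{claim:weights} applied only at $\HonA$-controlled ancestors (at $\HonB$-ancestors, the factor is $1$). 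In particular $u\in \unbal{\cpi}{1}{\gamma}{}$ iff $L(u)\geq \gamma$. Let $\alpha=\alpha(\delta)\in(0,1]$ be the constant produced by \cref{lemma:calculus2} and set $\Psi(u)=L(u)^{1+\alpha}\bigl(2-\Val(\cpi_u)\bigr)$.

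I would next verify that $\Psi(u_t)$ is a supermartingale under $\LDist{\cpi}$ as long as $u_t\notin \low{\cpi}{\delta,\HonA}$. At every $\HonB$-controlled $u$ this is immediate: $L(ub)=L(u)$, and $\sum_b \EdgeDist_\cpi(u,ub)\Val(\cpi_{ub})=\Val(\cpi_u)$ by honesty of $\HonB$, so $\Ex[\Psi(u_{t+1})\mid u_t=u]=\Psi(u)$. At every $\HonA$-controlled $u$ with $x=\Val(\cpi_u)\geq\delta$, setting $a_b=\Val(\cpi_{ub})/x$ and $p_0=\EdgeDist_{\cpi}(u,u0)$ one has $p_0 a_0+(1-p_0)a_1=1$; after relabeling children so that $a_0\geq 1\geq a_1$ and taking $\lambda=p_0/(1-p_0)$, $y=a_0-1$ (so that $\lambda y=1-a_1\leq 1$), \cref{lemma:calculus2} gives
\[
p_0\,a_0^{1+\alpha}(2-a_0 x)+(1-p_0)\,a_1^{1+\alpha}(2-a_1 x)\;\leq\;2-x,
\]
which, after multiplying by $L(u)^{1+\alpha}$, is exactly $\Ex[\Psi(u_{t+1})\mid u_t=u]\leq \Psi(u)$.

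With the supermartingale in hand, I would apply optional stopping to the bounded stopping time $\tau$ equal to the first index at which the path enters $S\eqdef \unbal{\cpi}{1}{\gamma}{}\cup \low{\cpi}{\delta',\HonA}$ (or reaches a leaf). Since $\delta'\geq\delta$, every $\HonA$-controlled node visited strictly before $\tau$ has value exceeding $\delta'\geq\delta$, so the supermartingale property is in force up to $\tau$ and yields $\Ex_{\LDist{\cpi}}[\Psi(u_\tau)]\leq \Psi(\EmptyString)=2-\Val(\cpi)\leq 2$. Writing $T=\unbal{\cpi}{1}{\gamma}{}\setminus\propDescP{\low{\cpi}{\delta',\HonA}}$, the event $\descP{T}$ coincides with $\{u_\tau\in \unbal{\cpi}{1}{\gamma}{}\}$, because once the path enters $\low{\cpi}{\delta',\HonA}$ any further node is a proper descendant of a $\low{\cpi}{\delta',\HonA}$-node and hence cannot lie in $T$. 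A one-line change of measure using $\VerticesDist_{(\rcAP{\cpi}{1},\HonB)}=L\cdot\VerticesDist_{(\HonA,\HonB)}$ then gives
\[
\ppr{\LDist{(\rcAP{\cpi}{1},\HonB)}}{\descP{T}}\;=\;\Ex_{\LDist{\cpi}}\!\left[L(u_\tau)\,\mathbf{1}[u_\tau\in \unbal{\cpi}{1}{\gamma}{}]\right]\;\leq\; \frac{1}{\gamma^\alpha}\,\Ex_{\LDist{\cpi}}[\Psi(u_\tau)]\;\leq\;\frac{2}{\gamma^\alpha},
\]
the middle inequality using $L(u_\tau)\geq \gamma$ on the indicated event together with $2-\Val(\cpi_{u_\tau})\geq 1$. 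Taking $c=\alpha(\delta)$ finishes the proof.

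The genuine obstacle is the $\HonA$-node supermartingale step, which is exactly the content of \cref{lemma:calculus2}; the role of the $\low{\cpi}{\delta',\HonA}$ hypothesis is precisely to guarantee $x\geq \delta$ so that the calculus lemma is applicable. Everything else is routine martingale bookkeeping plus a one-line change of measure.
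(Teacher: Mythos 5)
Your proof is correct, and its backbone is identical to the paper's: the entire argument rests on \cref{lemma:calculus2}, applied at $\HonA$-controlled nodes to show that the factor introduced by $\rcAP{\cpi}{1}$ is controlled as long as the transcript has not visited a $\delta$-small $\HonA$-node. What differs is the packaging. The paper proves item (1) of its three-part decomposition, namely $\ppr{\LDist{\rcAP{\cpi}{1},\HonB}}{\descP{\unbal{\cpi}{1}{\gamma}{}\setminus\descP{\low{\cpi}{\delta,\HonA}}}}\le (2-\Val(\cpi))/\gamma^c$, by structural induction on round complexity, rescaling $\gamma$ to $\gamma_b=\gamma\Val(\cpi)/\Val(\cpi_b)$ at $\HonA$-children before applying the induction hypothesis; it then separately argues (items (2) and (3)) that proper descendants may replace descendants and that $\delta'\ge\delta$ only helps. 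You instead make the potential $\Psi(u)=L(u)^{1+\alpha}(2-\Val(\cpi_u))$ explicit, observe that \cref{lemma:calculus2} is exactly its one-step supermartingale inequality at $\HonA$-nodes with value above $\delta$ (and an exact martingale step at $\HonB$-nodes), and invoke optional stopping at the first entry into $\unbal{\cpi}{1}{\gamma}{}\cup\low{\cpi}{\delta',\HonA}$ together with a change of measure via $L$ to land on the target probability directly. The paper's $(2-\Val(\cpi))$ factor is $\Psi(\EmptyString)$, the paper's $\gamma$-rescaling is your Radon--Nikodym derivative, and the paper's items (2)--(3) become your observation that $\descP{T}$ coincides with the event that the path is stopped at an $\unbal{\cpi}{1}{\gamma}{}$-node. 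What your route buys is conceptual economy: the three steps become one pass, and it is immediately visible why $\delta'\ge\delta$ is free (before the stopping time every $\HonA$-node has value above $\delta'\ge\delta$). What the paper's route buys is elementariness --- no appeal to optional stopping or change of measure, just induction on tree height --- and an explicit intermediate statement ($(2-\Val(\cpi))/\gamma^c$) that the induction can recurse on. The only point you should spell out a bit more carefully is the implicit relabeling: your $a_0,a_1$ play the role of the paper's $a_1,a_2$ in \cref{lemma:calculus2}, and one should note that the $p_0\in\{0,1\}$ and $\Val(\cpi_{u1})=0$ boundary cases are harmless (the former makes the step trivial, the latter gives $a_1=0$, which is permitted by the lemma since $\lambda y\le 1$ allows $\lambda y=1$).
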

\footnotetext{Recall that for $\cS\subseteq\Vertices(\cpi)$, $\propDescP{\cS}$ stands for the set of nodes which have an ancestor in $\cS$, but are not in $\cS$ itself (see \cref{def:BinaryTrees}).}

Namely, the probability of reaching a $\gamma$-unbalanced node which does not have a $\delta'$-low ancestor, for $\delta'\geq \delta$, is some inverse polynomial in $\gamma$. The proof of \cref{lem:ProbVisitUnBal} is given below. Looking ahead, we will apply this lemma for some $\gamma\in\poly(n)$, where $n$ is the security parameter given to the parties. At a high level, $\RandomContXD{\xi}{\delta}$ gives a good (enough) approximation for the biased continuator $\RandomCont$ when called on nodes that are at most $\poly(n)$-unbalanced. This lemma is useful since it gives a $1/\poly(n)$ bound for the probability that $\RandomContXD{\xi}{\delta}$ is called on nodes that are more than $\poly(n)$-unbalanced. Another important point is that the inverse polynomial (\ie $c$) depends only on $\delta$ (and is independent of $\gamma$ and $\delta'$). This becomes crucial when analyzing the success probability of the approximated biased-continuation attacker.

\cref{lem:ProbVisitUnBal} allows us to bound the probability that the (ideal) biased-continuation attacker hits unbalanced nodes with the probability that the \emph{original} protocol hits $\HonA$-controlled low-value nodes. Indeed, consider the first time $(\rcA{1},\HonB)$ reaches a $\gamma$-unbalanced node $u$. If  an $\HonA$-controlled low-value ancestor node was reached before reaching $u$, then this ancestor cannot be $\gamma$-unbalanced, and thus the probability of hitting it (and in turn hitting $u$) is bounded by $\gamma$ times the probability of the original protocol hitting $\HonA$-controlled low-value nodes. In the complementary case, in which no $\HonA$-controlled low-value node was reached before reaching $u$,  the probability of hitting $u$ is bounded by \cref{lem:ProbVisitUnBal}. This analysis is where we use that \cref{lem:ProbVisitUnBal} is proven \wrt \emph{proper} descendants of low-value nodes. The above discussion is stated formally next.

\begin{lemma}\label{cor:UnbalBound}
Let $\cpi=\HonAHonB$ be a protocol, let $\delta\in(0,\frac12]$, and let $c=c(\delta)$ be according  \cref{lem:ProbVisitUnBal}. Then
\begin{align*}
\ppr{\LDist{\rcAP{\cpi}{1},\HonB}}{\descP{\UnBal{\cpi}{\gamma}}} \leq \gamma\cdot\ppr{\LDist{\HonA,\HonB}}{\descP{\low{\cpi}{\delta',\HonA}}} + \frac{2}{\gamma^c},
\end{align*}
for any $\delta'\geq\delta$ and $\gamma>1$.
\end{lemma}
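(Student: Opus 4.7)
The plan is to exploit the set-theoretic decomposition of $\descP{\UnBal{\cpi}{\gamma}}$ supplied by \cref{prop:UnBalLowValueG} and then bound each resulting piece by a different means: the part that avoids $\HonA$-controlled low-value ancestors will be bounded directly by \cref{lem:ProbVisitUnBal}, while the part covered by $\HonA$-controlled low-value ancestors will be bounded by transferring to the honest distribution and paying a factor of $\gamma$.

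First, I would instantiate \cref{prop:UnBalLowValueG} with $\cA = \UnBal{\cpi}{\gamma}$ and $\cB = \low{\cpi}{\delta',\HonA}$ to obtain the containment
\begin{align*}
\descP{\UnBal{\cpi}{\gamma}} \;\subseteq\; \descP{\UnBal{\cpi}{\gamma} \setminus \propDescP{\low{\cpi}{\delta',\HonA}}} \;\cup\; \descP{\low{\cpi}{\delta',\HonA} \setminus \UnBal{\cpi}{\gamma}},
\end{align*}
and then apply a union bound under the distribution $\LDist{\rcAP{\cpi}{1},\HonB}$ to split the quantity of interest into two summands. The first summand is precisely the object bounded by \cref{lem:ProbVisitUnBal} (applied with the given $\delta$ and $\delta'\ge \delta$), and it contributes the term $2/\gamma^c$ in the claimed inequality.

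The second summand is where the factor of $\gamma$ appears. The crucial observation is that every element of $\low{\cpi}{\delta',\HonA} \setminus \UnBal{\cpi}{\gamma}$ is by definition not $\gamma$-unbalanced, and in particular every node on the frontier $\frnt{\low{\cpi}{\delta',\HonA} \setminus \UnBal{\cpi}{\gamma}}$ satisfies $\VerticesDist_{(\rcAP{\cpi}{1},\HonB)}(u) < \gamma\cdot \VerticesDist_{\cpi}(u)$ by \cref{def:unbalNodes}. Summing over this frontier and using that the frontier partitions the descendant set,
\begin{align*}
\ppr{\LDist{\rcAP{\cpi}{1},\HonB}}{\descP{\low{\cpi}{\delta',\HonA} \setminus \UnBal{\cpi}{\gamma}}}
&\le \gamma \cdot \ppr{\LDist{\cpi}}{\descP{\low{\cpi}{\delta',\HonA} \setminus \UnBal{\cpi}{\gamma}}} \\
&\le \gamma \cdot \ppr{\LDist{\cpi}}{\descP{\low{\cpi}{\delta',\HonA}}},
\end{align*}
where the last inequality uses monotonicity of $\descP{\cdot}$ under set inclusion. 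Adding the two bounds gives exactly the statement of the lemma.

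I do not foresee any real obstacle here since the heavy lifting is already done by \cref{lem:ProbVisitUnBal,prop:UnBalLowValueG}; the only point demanding a little care is to verify that the frontier of $\low{\cpi}{\delta',\HonA} \setminus \UnBal{\cpi}{\gamma}$ lies inside the set itself, so that the ``not $\gamma$-unbalanced'' property can indeed be invoked when multiplying by $\gamma$. This is immediate from the definition of $\frnt{\cdot}$ as $\cU \setminus \propDescP{\cU}$ in \cref{def:BinaryTrees}.
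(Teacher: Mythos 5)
Your proof is correct and follows the paper's argument exactly: the same instantiation of \cref{prop:UnBalLowValueG}, a union bound, \cref{lem:ProbVisitUnBal} for the piece avoiding low-value ancestors, and a factor-$\gamma$ transfer to $\LDist{\cpi}$ on the not-$\gamma$-unbalanced piece. The paper compresses the factor-$\gamma$ step into the phrase ``follows from the definition of $\UnBal{\cpi}{\gamma}$''; you spelled out the frontier argument, which is precisely what is meant.
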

\begin{proof}
By \cref{prop:UnBalLowValueG}, it holds that
\begin{align}\label{eq:UnBalLowValue}
\descP{\UnBal{\cpi}{\gamma}} \subseteq \descP{\low{\cpi}{\delta',\HonA} \setminus \UnBal{\cpi}{\gamma}} \cup \descP{\UnBal{\cpi}{\gamma} \setminus \propDescP{\low{\cpi}{\delta',\HonA}}}.
\end{align}

We can now compute
\begin{align*}
\ppr{\LDist{\rcA{1},\HonB}}{\descP{\UnBal{\cpi}{\gamma}}} &\leq \ppr{\LDist{\rcA{1},\HonB}}{\descP{\low{\cpi}{\delta',\HonA} \setminus \UnBal{\cpi}{\gamma}}}\\
&\quad +\ppr{\LDist{\rcA{1},\HonB}}{\descP{\UnBal{\cpi}{\gamma} \setminus \propDescP{\low{\cpi}{\delta',\HonA}}}} \\
&\leq \gamma\cdot\ppr{\LDist{\HonA,\HonB}}{\descP{\low{\cpi}{\delta',\HonA}}} + \frac{2}{\gamma^c},
\end{align*}
where the second inequality follows from the definition of $\UnBal{\cpi}{\gamma}$ and \cref{lem:ProbVisitUnBal}.
\end{proof}

The rest of this section is dedicated to proving \cref{lem:ProbVisitUnBal}.

\paragraph{Proving \cref{lem:ProbVisitUnBal}}
\begin{proof}[Proof of \cref{lem:ProbVisitUnBal}.]
The lemma is proven via the proving following facts:
\begin{enumerate}[(1)]
\item\label{item:unbal1} There exists $c>0$ such that
\begin{align}\label{eq:unbal2}
\ppr{\LDist{\rcA{1},\HonB}}{\descP{\unbal{\cpi}{1}{\gamma}{} \setminus \descP{\low{\cpi}{\delta,\HonA}}}}\leq \frac{2-\Val(\cpi)}{\gamma^{c}}
\end{align}
for every $\gamma>1$. Note that \cref{eq:unbal2} only considers descendants of $\low{\cpi}{\delta,\HonA}$, and not proper descendants.

\item\label{item:unbal2} For $\gamma>1$ it holds that
\begin{align}
\descP{\unbal{\cpi}{1}{\gamma}{} \setminus \propDescP{\low{\cpi}{\delta,\HonA}}} \subseteq \descP{\unbal{\cpi}{1}{\gamma}{} \setminus \descP{\low{\cpi}{\delta,\HonA}}}.\footnotemark
\end{align}
\footnotetext{It thus follows that  $\descP{\unbal{\cpi}{1}{\gamma}{} \setminus \propDescP{\low{\cpi}{\delta,\HonA}}} = \descP{\unbal{\cpi}{1}{\gamma}{} \setminus \descP{\low{\cpi}{\delta,\HonA}}}$.}
\item\label{item:unbal3} For  $\delta'>\delta$ it holds that
\begin{align}
\unbal{\cpi}{1}{\gamma}{} \setminus \propDescP{\low{\cpi}{\delta',\HonA}} \subseteq \unbal{\cpi}{1}{\gamma}{} \setminus \propDescP{\low{\cpi}{\delta,\HonA}}.
\end{align}
\end{enumerate}
It is clear that combining the above steps yields (a stronger version of) the lemma.

\medskip
\emph{Proof of (\ref{item:unbal1}):}
Fix  $\delta\in(0,\frac12]$ and let $c:=\alpha(\delta)$ be the value guaranteed in \cref{lemma:calculus2}. The proof is by induction on the round complexity of $\cpi$.

Assume $\round(\cpi)=0$ and let $\ell$ be the single leaf of $\cpi$. By \cref{def:unbalNodes}, $\ell \notin \unbal{\cpi}{1}{\gamma}{}$  and thus $\unbal{\cpi}{1}{\gamma}{} = \emptyset$. Hence, for every $\delta>0$,
\begin{align*}
\ppr{\LDist{\rcA{1},\HonB}}{\desc\left(\unbal{\cpi}{1}{\gamma})\setminus \desc(\low{\cpi}{\delta,\HonA})\right)}=\ppr{\LDist{\rcA{1},\HonB}}{\emptyset}=0\leq\frac{2-\Val(\cpi)}{\gamma^{c}}.
\end{align*}

Assume that \cref{eq:unbal2} holds for $\rnd$-round protocols and that $\round(\cpi) = \rnd+1$. If $\EdgeDist_{(\HonA,\HonB)}(\EmptyString,b)=1$ for some $b\in\zo$ (recall that $\EmptyString$ denotes the empty string), then
\begin{align*}
\ppr{\LDist{\rcA{1},\HonB}}{\desc\left(\unbal{\cpi}{1}{\gamma})\setminus \desc(\low{\cpi}{\delta,\HonA})\right)} &= \ppr{\LDist{\paren{\rcA{1},\HonB}_b}}{\desc\left(\unbal{\cpi_b}{1}{\gamma})\setminus \desc(\low{\cpi_b}{\delta,\HonA})\right)} \\
&= \ppr{\LDist{\ProArcb{1}{b}}}{\desc\left(\unbal{\cpi_b}{1}{\gamma})\setminus \desc(\low{\cpi_b}{\delta,\HonA})\right)},
\end{align*}
where the second equality follows \cref{fact:switchOrder}. The proof now follows from the induction hypothesis.

To complete the proof, we assume that  $\EdgeDist_{(\HonA,\HonB)}(\EmptyString,b)\notin\zo$ for both $b\in\zo$, and let $p=\EdgeDist_{(\HonA,\HonB)}(\EmptyString,0)$.
The proof splits according to who controls the root of $\cpi$.
\begin{description}
\item[$\HonB$ controls $\Root(\cpi)$.] We first note that
\begin{align}\label{eq:unbal3}
  \unbal{\cpi}{1}{\gamma}{}\setminus \descP{\low{\cpi}{\delta,\HonA}}
   = \paren{\unbal{\cpi_0}{1}{\gamma}{}\setminus \descP{\low{\cpi_0}{\delta,\HonA}}} \cup \paren{\unbal{\cpi_1}{1}{\gamma}{}\setminus \descP{\low{\cpi_1}{\delta,\HonA}}}.
\end{align}
\remove{
To see the above, let $u\in\Vertices(\cpi)$. First, note that since $\HonB$ controls $\Root(\cpi)$, it holds that $\EdgeDist_{(\rcA{1},\HonB)}(\EmptyString,b) = \EdgeDist_{(\HonA,\HonB)}(\EmptyString,b)$, and thus, if $u\neq \Root(\cpi)$ \Inote{I think it holds regardless}, it holds that $u\in\unbal{\cpi}{1}{\gamma}{}$ if and only if $u\in\unbal{\cpi_b}{1}{\gamma}{}$ \Inote{for both $b\in \zo$}. Assume that $u\in \unbal{\cpi}{1}{\gamma}{}\setminus \descP{\low{\cpi}{\delta,\HonA}}$. Since $\gamma>1$, it holds that $u\neq \Root(\cpi)$, and thus $u\in \unbal{\cpi_b}{1}{\gamma}{}$ \Inote{for some  $b\in \zo$}. Moreover, \Inote{why?} it follows that $u_1,\ldots,u_{1,\ldots,\size{u}}\notin\low{\cpi_b}{\delta,\HonA}$, and thus $u\in \unbal{\cpi_b}{1}{\gamma}{}\setminus \descP{\low{\cpi_b}{\delta,\HonA}}$. For the other direction, assume $u\in\unbal{\cpi_b}{1}{\gamma}{}\setminus\descP{\low{\cpi_b}{\delta,\HonA}}$ for some $b\in \zo$. As argued before, it holds that $u\in\unbal{\cpi}{1}{\gamma}{}$. Since $u_1,\ldots,u_{1,\ldots,\size{u}}\notin\low{\cpi_b}{\delta,\HonA}$ \Inote{why do you mentioned that?}, and since $\HonB$ controls $\Root(\cpi)$, it also holds that $\Root(\cpi)\notin\low{\cpi_b}{\delta,\HonA}$ \Inote{ $\Root(\cpi)$ is  a node of $\low{\cpi_b}{\delta,\HonA}$...}. Hence, $u\in\unbal{\cpi}{1}{\gamma}{}\setminus \descP{\low{\cpi}{\delta,\HonA}}$. This complete the proof of \cref{eq:unbal3}.
}

To see the above, first note $\descP{\low{\cpi}{\delta,\HonA}} \setminus \set{\Root(\cpi)}  = \descP{\low{\cpi_0}{\delta,\HonA}}  \cup \descP{\low{\cpi_1}{\delta,\HonA}}$, and since  $\HonB$ controls $\Root(\cpi)$, it holds that $\unbal{\cpi}{1}{\gamma}{}\setminus \set{\Root(\cpi)} = \unbal{\cpi_0}{1}{\gamma}{} \cup \unbal{\cpi_1}{1}{\gamma}{}$. Finally, since $\gamma>1$ it holds that $\Root(\cpi) \notin \unbal{\cpi}{1}{\gamma}{}$, and \cref{eq:unbal3} follows.

We can now  write
\begin{align*}
\lefteqn{\ppr{\LDist{\rcA{1},\HonB}}{\descP{\unbal{\cpi}{1}{\gamma}{}\setminus \descP{\low{\cpi}{\delta,\HonA}}}}}\\
&=\EdgeDist_{(\rcA{1},\HonB)}(\EmptyString,0)\cdot\ppr{\LDist{(\rcA{1},\HonB)_0}}{\desc\left(\unbal{\cpi_0}{1}{\gamma})\setminus \desc(\low{\cpi_0}{\delta,\HonA})\right)}\\
&\quad +\EdgeDist_{(\rcA{1},\HonB)}(\EmptyString,1)\cdot\ppr{\LDist{(\rcA{1},\HonB)_1}}{\desc\left(\unbal{\cpi_1}{1}{\gamma})\setminus \desc(\low{\cpi_1}{\delta,\HonA})\right)}\\
&=p\cdot \ppr{\LDist{\ProArcb{1}{0}}}{\desc\left(\unbal{\cpi_0}{1}{\gamma})\setminus \desc(\low{\cpi_0}{\delta,\HonA})\right)} \\
&\quad + (1-p)\cdot \ppr{\LDist{\ProArcb{1}{1}}}{\desc\left(\unbal{\cpi_1}{1}{\gamma})\setminus \desc(\low{\cpi_1}{\delta,\HonA})\right)}\\
&\leq p\cdot \frac{2-\Val(\cpi_0)}{\gamma^{c}}+(1-p)\cdot \frac{2-\Val(\cpi_1)}{\gamma^{c}}\\
&=\frac{2-\Val(\cpi)}{\gamma^{c}}.
\end{align*}
The first equality follows from \cref{eq:unbal3}, the second equality follows from \cref{fact:switchOrder}, and the inequality follows from the induction hypothesis.

\item[$\HonA$ controls $\Root(\cpi)$.]
If $\Val(\cpi)\leq\delta$, then $\Root(\cpi)\in\low{\cpi}{\delta,\HonA}$. Therefore, $\unbal{\cpi}{1}{\gamma}{}\setminus \descP{\low{\cpi}{\delta,\HonA}}=\emptyset$ and the proof follows from a similar argument as in the base case.

In the complementary case, \ie $\Val(\cpi)>\delta$, assume \wlg that $\Val(\cpi_0)\geq\Val(\cpi)\geq\Val(\cpi_1)$. We start with the case that $\Val(\cpi_1)>0$. For $b\in\zo$, let $\gamma_b \eqdef \frac{\Val(\cpi)}{\Val(\cpi_b)}\cdot \gamma$. By \cref{claim:weights}, for $u\in\Vertices(\cpi)$ with $u\neq\Root(\cpi)$ and $b=u_1$, it holds that
\begin{align*}
\frac{\VerticesDist_{(\rcA{1},\HonB)}(u)}{\VerticesDist_{(\HonA,\HonB)}(u)} = \frac{\EdgeDist_{(\HonA,\HonB)}(\EmptyString,b)}{\EdgeDist_{(\rcA{1},\HonB)}(\EmptyString,b)}\cdot \frac{\VerticesDist_{(\rcA{1},\HonB)_b}(u)}{\VerticesDist_{(\HonA,\HonB)_b}(u)} =
\frac{\Val(\cpi_b)}{\Val(\cpi)}\cdot \frac{\VerticesDist_{(\rcA{1},\HonB)_b}(u)}{\VerticesDist_{(\HonA,\HonB)_b}(u)}.
\end{align*}
Thus, $u\in\unbal{\cpi}{1}{\gamma}{}$ if and only if $u\in\unbal{\cpi_b}{1}{\gamma_b}{}$. Hence, using also the fact that $\Root(\cpi)\notin \low{\cpi}{\delta,\HonA}$ (since we assumed $\Val(\cpi)>\delta$), arguments similar to those used to prove \cref{eq:unbal3} yield that
\begin{align}\label{eq:unbal4}
\\
\unbal{\cpi}{1}{\gamma}{}\setminus \descP{\low{\cpi}{\delta,\HonA}}
   = \paren{\unbal{\cpi_0}{1}{\gamma_0}{}\setminus \descP{\low{\cpi_0}{\delta,\HonA}}} \cup \paren{\unbal{\cpi_1}{1}{\gamma_1}{}\setminus \descP{\low{\cpi_1}{\delta,\HonA}}}.\nonumber
\end{align}

Moreover, for $b\in \zo$ it holds that
\begin{align}\label{eq:unbal5}
\\
\ppr{\LDist{(\rcA{1},\HonB)_b} }{\desc\left(\unbal{\cpi_b}{1}{\gamma_b})\setminus \desc(\low{\cpi_b}{\delta,\HonA})\right)}&=\ppr{\LDist{\ProArcb{1}{b}}}{\desc\left(\unbal{\cpi_b}{1}{\gamma})\setminus \desc(\low{\cpi_b}{\delta,\HonA})\right)}\nonumber\\
&\leq\frac{2-\Val(\cpi_b)}{\gamma_b^c}\nonumber\\
&=\left(\frac{\Val(\cpi_b)}{\Val(\cpi)}\right)^{c}\cdot\frac{2-\Val(\cpi_b)}{\gamma^c}.\nonumber
\end{align}
The first equality follows from \cref{fact:switchOrder}.
The inequality follows from the next case analysis: if $\gamma_b>1$, then  it follows from the induction hypothesis applied \wrt $\cpi_b$, $\delta$ and $\gamma_b$; if $\gamma_b\leq 1$, then  it follows since $\frac{2-\Val(\cpi_b)}{\gamma_b^c} \geq 1$ and since the left-hand side of the inequality is a probability mass. Hence,
\begin{align}
\lefteqn{\ppr{\LDist{\rcA{1},\HonB}}{\descP{\unbal{\cpi}{1}{\gamma}{}\setminus \descP{\low{\cpi}{\delta,\HonA}}}}}\\
&=\EdgeDist_{(\rcA{1},\HonB)}(\EmptyString,0)\cdot\ppr{\LDist{(\rcA{1},\HonB)_0}}{\descP{\unbal{\cpi_0}{1}{\gamma_0}{}\setminus \descP{\low{\cpi_0}{\delta,\HonA}}}}\nonumber\\
&\quad +\EdgeDist_{(\rcA{1},\HonB)}(\EmptyString,1)\cdot\ppr{\LDist{(\rcA{1},\HonB)_1}}{\descP{\unbal{\cpi_1}{1}{\gamma_1}{}\setminus \descP{\low{\cpi_1}{\delta,\HonA}}}}\nonumber\\
&\leq p\cdot\left(\frac{\Val(\cpi_0)}{\Val(\cpi)}\right)^{1+c}\cdot\frac{2-\Val(\cpi_0)}{\gamma^c}+ (1-p)\cdot\left(\frac{\Val(\cpi_1)}{\Val(\cpi)}\right)^{1+c}\cdot\frac{2-\Val(\cpi_1)}{\gamma^c},\nonumber
\end{align}
where the equality follows from \cref{eq:unbal4}, and the inequality follows from \cref{eq:unbal5} together with \cref{claim:weights}.
Letting $y = \frac{\Val(\cpi_0)}{\Val(\cpi)} - 1$ , $x= \Val(\cpi)$ and $\lambda=\frac{p}{1-p}$, and noting that $\lambda y=\left(\frac{\Val(\cpi_0)}{\Val(\cpi)}-1\right)\cdot\frac{p}{1-p}=\frac{p\cdot\Val(\cpi_0)-p\cdot\Val(\cpi)}{\Val(\cpi)-p\cdot\Val(\cpi)}\leq\frac{p\cdot\Val(\cpi_0)}{\Val(\cpi)}\leq1$,  \cref{lemma:calculus2} yields (after multiplying by $\frac{1-p}{\gamma^c}$) that
\begin{align}
p\cdot\left(\frac{\Val(\cpi_0)}{\Val(\cpi)}\right)^{1+c}\cdot\frac{2-\Val(\cpi_0)}{\gamma^c}+ (1-p)\cdot\left(\frac{\Val(\cpi_1)}{\Val(\cpi)}\right)^{1+c}\cdot\frac{2-\Val(\cpi_1)}{\gamma^c}\leq\frac{2-\Val(\cpi)}{\gamma^c},
\end{align}
completing the proof for the case  $\Val(\cpi_1)>0$.

It is left to argue the case that $\Val(\cpi_1)=0$. In this case, according to \cref{claim:weights}, it holds that $\EdgeDist_{(\rcA{1},\HonB)}(\EmptyString,0)=1$ and $\EdgeDist_{(\rcA{1},\HonB)}(\EmptyString,1)=0$. Hence, there are no unbalanced nodes in $\cpi_1$, \ie $\unbal{\cpi}{1}{\gamma}{}\setminus \descP{\low{\cpi}{\delta,\HonA}} \cap \Vertices(\cpi_1)=\emptyset$. As before, let $\gamma_0 \eqdef \frac{\Val(\cpi)}{\Val(\cpi_0)}\cdot \gamma = p\cdot\gamma$ (the latter equality holds since $\Val(\cpi) = p\cdot \Val(\cpi_0)$.) Arguments similar to those used to prove \cref{eq:unbal4} yield that
\begin{align}
\unbal{\cpi}{1}{\gamma}{}\setminus \descP{\low{\cpi}{\delta,\HonA}}    = \unbal{\cpi_0}{1}{\gamma_0}{}\setminus \descP{\low{\cpi_0}{\delta,\HonA}}.
\end{align}

It follows that
\begin{align*}
\lefteqn{\ppr{\LDist{\rcA{1},\HonB}}{\descP{\unbal{\cpi}{1}{\gamma}{}\setminus \descP{\low{\cpi}{\delta,\HonA}}}}}\\
&=\EdgeDist_{(\rcA{1},\HonB)}(\EmptyString,0)\cdot\ppr{\LDist{(\rcA{1},\HonB)_0}}{\descP{\unbal{\cpi_0}{1}{\gamma_0}{}\setminus \descP{\low{\cpi_0}{\delta,\HonA}}}}\nonumber\\
&\leq \paren{\frac{1}{p}}^{1+c}\cdot\frac{2-\Val(\cpi_0)}{\gamma^c}.
\end{align*}
Applying \cref{lemma:calculus2} with the same parameters as above completes the proof.
\end{description}

\medskip
\emph{Proof of (\ref{item:unbal2}):}
Fix $\gamma>1$ and recall that for a set $\cS\subset\Vertices(\cpi)$, $\frnt{\cS}$ stands for the frontier of $\cS$, \ie the set of nodes belong to $\cS$, whose ancestors do not belong to $\cS$ (see \cref{def:BinaryTrees}). We prove that
\begin{align}\label{eq:unbalTwo1}
\frnt{\unbal{\cpi}{1}{\gamma}{} \setminus \propDescP{\low{\cpi}{\delta,\HonA}}} \subseteq \unbal{\cpi}{1}{\gamma}{} \setminus \descP{\low{\cpi}{\delta,\HonA}},
\end{align}
and the proof of (\ref{item:unbal2}) follows.

Let $u\in\frnt{\unbal{\cpi}{1}{\gamma}{} \setminus \propDescP{\low{\cpi}{\delta,\HonA}}}$. We prove \cref{eq:unbalTwo1} by showing that $u\notin\low{\cpi}{\delta,\HonA}$. Since $\gamma > 1$ and $u\in \unbal{\cpi}{1}{\gamma}{}$, it is clear that $u\neq\Root(\cpi)$. Let $w$ be the parent of $u$. By the choice of $u$, it follows that $w\notin\unbal{\cpi}{1}{\gamma}{}$, and thus $\VerticesDist_{(\rcA{1},\HonB)}(w) < \gamma\cdot\VerticesDist_{(\Ac,\HonB)}(w)$. We write
\begin{align}
\gamma\cdot \VerticesDist_{(\HonA,\HonB)}(w) \cdot \EdgeDist_{(\rcA{1},\HonB)}(w,u) &> \VerticesDist_{(\rcA{1},\HonB)}(w) \cdot \EdgeDist_{(\rcA{1},\HonB)}(w,u)\\
&=  \VerticesDist_{(\rcA{1},\HonB)}(u) \nonumber\\
&\geq \gamma\cdot  \VerticesDist_{(\HonA,\HonB)}(u)\nonumber \\
&= \gamma\cdot  \VerticesDist_{(\HonA,\HonB)}(w) \cdot \EdgeDist_{(\HonA,\HonB)}(w,u).\nonumber
\end{align}
We conclude that $\EdgeDist_{(\HonA,\HonB)}(w,u) < \EdgeDist_{(\rcA{1},\HonB)}(w,u)$, and thus it must be the case that $w$ is controlled by $\HonA$. By \cref{claim:weights}, it holds that $\EdgeDist_{(\rcA{1},\HonB)}(w,u) = \EdgeDist_{(\HonA,\HonB)}(w,u) \cdot\frac{\Val(\cpi_u)}{\Val(\cpi_w)}$, and thus $\Val(\cpi_u) > \Val(\cpi_w)$. Finally, observe that $w\notin \low{\cpi}{\delta,\HonA}$, since otherwise $u\in \propDescP{\low{\cpi}{\delta,\HonA}}$. It follows that $\Val(\cpi_w) > \delta$, and hence $\Val(\cpi_u) > \delta$, as required.

\medskip
\emph{Proof of (\ref{item:unbal3}):}
Note that for every $\delta'\geq\delta$ it holds that $\low{\cpi}{\delta,\HonA} \subseteq \low{\cpi}{\delta',\HonA}$. Hence, $\unbal{\cpi}{1}{\gamma}) \setminus \propDesc(\low{\cpi}{\delta',\HonA})\subseteq\unbal{\cpi}{1}{\gamma}) \setminus \propDesc(\low{\cpi}{\delta,\HonA})$, and the proof follows.
\end{proof}

\subsubsection{The Biased-Continuation Attacker is Robust}\label{sec:Robust}
Consider what happens when the biased-continuations attacker attacks a protocol $\cpi=(\HonA,\HonB)$. This attacker chooses a random $1$-leaf according to $\LDist{\cpi}$, the leaf distribution of $\cpi$. If these was another protocol $\cpi'$ that was close (in the leaf-distribution sense) to $\cpi$, then the attacker can instead sample from $\LDist{\cpi'}$, while making similar decisions throughout its operation. So, the biased-continuation attacker is robust to the distribution from which it samples. This is formally put in the next lemma.

\begin{lemma}[robustness lemma]\label{lem:RCofStatCloseProt}
Let $\cpi=(\HonA,\HonB)$ and $\cpi'=(\HonC,\HonD)$ be two $\rnd$-round protocols, let $\delta\in(0,\frac12]$, and let $c=c(\delta)$ be according to  \cref{lem:ProbVisitUnBal}. Assuming $\SDP{\LDist{\cpi}}{\LDist{\cpi'}}\leq\alpha$,  $\Color_{\cpi}\equiv\Color_{\cpi'}$, and  $\cpi$ and $\cpi'$ have the same control scheme, it holds that
\begin{align*}
\SD\bigg(\LDist{\rcAP{\cpi}{1},\HonB},\LDist{\rcCP{\cpi'}{1},\HonD}\bigg)&\leq\frac{ 3\cdot m\cdot\gamma}{\delta'}\cdot
\paren{ \alpha +\ppr{\LDist{\HonA,\HonB}}{\desc\left(\low{\cpi}{\delta',\HonA}\cup\low{\cpi'}{\delta',\HonC}\right)}} +\frac{2}{\gamma^c},
\end{align*}
for every $\delta'\geq\delta$ and $\gamma\geq1$, where $\rcA{1}$ and $\rcC{1}$ are as in \cref{alg:adversary}.
\end{lemma}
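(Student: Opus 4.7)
To prove \cref{lem:RCofStatCloseProt} we plan to view both leaf distributions $\LDist{(\rcA{1},\HonB)}$ and $\LDist{(\rcC{1},\HonD)}$ as outputs of a common oracle-aided algorithm $\MathAlg{M}$ and then to invoke \cref{lemma:SDmQueriesAlg}. Concretely, $\MathAlg{M}^\phi$ will traverse the common tree $\Tree(\cpi)=\Tree(\cpi')$: starting at the root, at each visited partial transcript $u$ it queries $\phi(u)$ and uses the answer as the next message, halting upon reaching a leaf. Let $f$ be the randomized oracle that on an $\HonA$-controlled $u$ returns $\RandomCont_\cpi(u,1)$ and on an $\HonB$-controlled $u$ returns a bit drawn from $\EdgeDist_\cpi(u,\cdot)$, and define $g$ analogously from $\cpi'$; then $\MathAlg{M}^f\equiv\LDist{(\rcA{1},\HonB)}$, $\MathAlg{M}^g\equiv\LDist{(\rcC{1},\HonD)}$, and $\MathAlg{M}$ makes $k=m$ queries. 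We apply \cref{lemma:SDmQueriesAlg} with $P_i$ set to the distribution of the $i$-th partial transcript generated by the honest execution $(\HonA,\HonB)$, and with $\lambda=\gamma$.

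Since the query distribution $Q_i$ induced by $\MathAlg{M}^f$ coincides with the depth-$i$ distribution under $(\rcA{1},\HonB)$, the event $\set{\exists i\colon Q_i(q_i)>\gamma\cdot P_i(q_i)}$ is exactly the event that $(\rcA{1},\HonB)$ visits a node in $\UnBal{\cpi}{\gamma}$, and \cref{cor:UnbalBound} (applied with the given $\delta$ and $\delta'\ge\delta$) bounds its probability by $\gamma\cdot\ppr{\LDist{\HonA,\HonB}}{\descP{\low{\cpi}{\delta',\HonA}}}+2/\gamma^c$. This supplies both the additive $2/\gamma^c$ term and (once absorbed into the first term) the $\gamma$-times-$P_{\mathrm{low}}$ part of the final bound.

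The remaining work is to bound $a_i\eqdef\Ex_{q\la P_i}[\SD(f(q),g(q))]$ for each $i$, split by who controls the queried node. For a $\HonB$-controlled depth-$i$ node $u$, the bound $\VerticesDist_\cpi(u)\cdot|\EdgeDist_\cpi(u,u0)-\EdgeDist_{\cpi'}(u,u0)|\le |\VerticesDist_\cpi(u0)-\VerticesDist_{\cpi'}(u0)|+|\VerticesDist_\cpi(u)-\VerticesDist_{\cpi'}(u)|$, summed over $u$ and using that depth-projections of $\SDP{\LDist{\cpi}}{\LDist{\cpi'}}$ are at most $\alpha$, yields an $O(\alpha)$ contribution. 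For an $\HonA$-controlled depth-$i$ node $u$ lying outside $\low{\cpi}{\delta',\HonA}\cup\low{\cpi'}{\delta',\HonC}$, \cref{claim:weights} gives $f(u)=ub$ with probability $N_\cpi(ub)/N_\cpi(u)$ where $N_\cpi(v)\eqdef\VerticesDist_\cpi(v)\cdot\Val(\cpi_v)$, and analogously for $g$; the elementary inequality $|N_\cpi(u0)/N_\cpi(u)-N_{\cpi'}(u0)/N_{\cpi'}(u)|\le (|N_\cpi(u0)-N_{\cpi'}(u0)|+|N_\cpi(u)-N_{\cpi'}(u)|)/N_{\cpi'}(u)$, together with $N_{\cpi'}(u)\ge\delta'\cdot\VerticesDist_{\cpi'}(u)$, reduces the $\HonA$-contribution to $O(\alpha/\delta')$ after summing $|N_\cpi(v)-N_{\cpi'}(v)|$ over depth-$(i{+}1)$ nodes $v$ and invoking $\Color_\cpi\equiv\Color_{\cpi'}$ with $\SDP{\LDist{\cpi}}{\LDist{\cpi'}}\le\alpha$. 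Finally, for $\HonA$-controlled $u$ inside the low-value union we bound $\SD(f(u),g(u))$ trivially by $1$; the $P_i$-mass of such $u$ is at most $\ppr{\LDist{\HonA,\HonB}}{\descP{\low{\cpi}{\delta',\HonA}\cup\low{\cpi'}{\delta',\HonC}}}$, independently of $i$.

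Plugging $k=m$, $\lambda=\gamma$, and the resulting estimate $a_i=O\bigl((\alpha+\ppr{\LDist{\HonA,\HonB}}{\descP{\low{\cpi}{\delta',\HonA}\cup\low{\cpi'}{\delta',\HonC}}})/\delta'\bigr)$ together with the unbalanced-query bound into \cref{lemma:SDmQueriesAlg} yields the claimed inequality after consolidating constants. The main technical obstacle lies in the $\HonA$-controlled case: the biased-continuation ratio has a denominator as small as $\delta'$, which forces the $1/\delta'$ factor in the final bound, and the per-level comparison must be set up carefully---leveraging that $\cpi$ and $\cpi'$ share both a control scheme and a common output function---so that the aggregate error across the $m$ rounds grows only linearly in $\alpha$ rather than compounding with depth.
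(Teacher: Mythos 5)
Your overall plan is the same as the paper's: invoke \cref{lemma:SDmQueriesAlg} with $P_i$ the depth-$i$ distribution under $(\HonA,\HonB)$ and $\lambda=\gamma$, bound the unbalanced-query mass via \cref{cor:UnbalBound}, and bound $a_i=\Ex_{q\la P_i}[\SD(f(q),g(q))]$ by splitting into $\HonB$-controlled nodes, $\HonA$-controlled nodes outside the low-value union, and $\HonA$-controlled low-value nodes (bounded trivially by $1$). The only formal difference is cosmetic: the paper's oracle returns a full conditional leaf and the simulator extracts one bit of it, whereas your oracle returns the next bit directly; and the paper bounds the $\HonA$-contribution via a careful choice of a set $\cS_u$ on full conditional leaf distributions, whereas you work with the single-step ratios $N_\cpi(ub)/N_\cpi(u)$ where $N_\cpi(v)=\VerticesDist_\cpi(v)\cdot\Val(\cpi_v)$. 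Your $\HonB$-controlled telescoping is fine.

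There is, however, a concrete gap in your $\HonA$-controlled case. You use the inequality $|N_\cpi(u0)/N_\cpi(u)-N_{\cpi'}(u0)/N_{\cpi'}(u)|\le (|N_\cpi(u0)-N_{\cpi'}(u0)|+|N_\cpi(u)-N_{\cpi'}(u)|)/N_{\cpi'}(u)$ and then $N_{\cpi'}(u)\ge \delta'\cdot\VerticesDist_{\cpi'}(u)$. But $P_i(u)=\VerticesDist_\cpi(u)$, so after multiplying by $P_i(u)$ you are left with the factor $\VerticesDist_\cpi(u)/\VerticesDist_{\cpi'}(u)$, which is not bounded by anything in your hypotheses; the claimed "$O(\alpha/\delta')$ after summing" does not follow. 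This is exactly the non-commutativity issue the paper's $\cS_u$ case-analysis (\cref{eq:SDfgd}--\cref{eq:SDfgd2}) is designed to resolve. The fix within your framework is to use the symmetric form of the elementary inequality with $N_\cpi(u)$ in the denominator — it is also valid since $N_{\cpi'}(u0)\le N_{\cpi'}(u)$ — and then note that $u\notin\low{\cpi}{\delta',\HonA}$ gives $N_\cpi(u)=\VerticesDist_\cpi(u)\Val(\cpi_u)\ge\delta'\cdot\VerticesDist_\cpi(u)$, so the $\VerticesDist_\cpi(u)$ factors cancel exactly. (Alternatively, mirror the paper and split on whether $\Val(\cpi_u)\ge\Val(\cpi'_u)$ or not, choosing the denominator accordingly.) With that correction your argument goes through and recovers the paper's bound.
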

Namely, the biased-continuation attacker does not make similar protocols too dissimilar. The rest of this section is dedicated to proving \cref{lem:RCofStatCloseProt}.

\newcommand{\CtrlEQ}{{\mathcal{C}trlEQ}}
\begin{proof}
We use \cref{lemma:SDmQueriesAlg}. Define the random function $f$ given an element from $\Vertices(\cpi)\cup \set{\perp}$ as follows: given $u\in\Vertices(\cpi)$, if $\HonA$ controls $u$ return $\ell\la\LDist{\cpi_u}$ such that $\Color_\cpi(\ell)=1$ (if no such node exists, return an arbitrary node in $\descP{u}$); otherwise, \ie if $\HonB$ controls $u$, return $\ell\la\LDist{\cpi_u}$. Finally, given $\perp$, $f$ return $\perp$. The random function $g$ given an element from $\Vertices(\cpi)\cup \set{\perp}$ is analogously defined \wrt protocol $\cpi'$.\footnote{The sets $\Vertices(\cpi)$ and $\Vertices(\cpi')$, as well as the sets $\Leaves(\cpi)$ and $\Leaves(\cpi')$, are identical, as the both describe nodes in the complete binary tree of height $m$. See \cref{section:Preliminaries} for further details.}
For function $\phi$ with range in $\Leaves(\cpi)$, let $\sfH^\phi$ be the following algorithm:

\begin{algorithm}[$\sfH$]\label{alg:robust}
\item State: node $u$, set to $\EmptyString$ at the start of the execution.
\item Operation:
\begin{enumerate}
\item Repeat for $m$ times:
	\begin{enumerate}
	\item Set $\ell = \phi(u)$.
	\item Set $u = u\concat \ell_{i}$, where $i$ is the current iteration.
	\end{enumerate}

\item Output $u$.
\end{enumerate}
\end{algorithm}
It is easy to verify that $\sfH^f \equiv \LDist{\rcAP{\cpi}{1},\HonB}$ and $\sfH^g \equiv \LDist{\rcCP{\cpi'}{1},\HonD}$. Hence, it suffices to upper-bound $\SDP{\sfH^f}{\sfH^g}$.

For $i\in[m]$, let $P_i$ to be $i$'th node in a random execution of $\cpi$ (such a node consists of $i-1$ bits). We use the next claim, proven below.
\begin{claim}\label{claim:robustSDfg}
$\eex{u\la P_i}{\SDP{f(u)}{g(u)}} \leq \frac{2\alpha}{\delta'}+\ppr{\LDist{\cpi}}{\descP{\low{\cpi}{\delta',\HonA}\cup\low{\cpi'}{\delta',\HonC}}}$.
\end{claim}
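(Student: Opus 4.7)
My plan is to decompose the expectation by a case split: nodes $u$ that are descendants of $\low{\cpi}{\delta',\HonA}\cup\low{\cpi'}{\delta',\HonC}$ versus nodes that are not. The ``bad'' descendants contribute at most $\ppr{\LDist{\cpi}}{\descP{\low{\cpi}{\delta',\HonA}\cup\low{\cpi'}{\delta',\HonC}}}$, via the trivial bound $\SDP{f(u)}{g(u)}\le 1$, which accounts exactly for the second summand. The main work is then to show that the ``good'' nodes contribute at most $\tfrac{2\alpha}{\delta'}$, and I would treat the two sub-cases (whether $u$ is controlled by $\HonA$ or by $\HonB$) through a single unified analysis.

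The unified description is as follows. Writing $\pi_u=\VerticesDist_\cpi(u)$ and $\pi'_u=\VerticesDist_{\cpi'}(u)$, and assuming $\pi_u,\pi'_u>0$ (the degenerate cases can be absorbed into the bad term), both $f(u)$ and $g(u)$ are of the form ``condition the leaf distribution on reaching a set $S_u\subseteq\Leaves(\cpi_u)$,'' where $S_u=\Leaves(\cpi_u)$ when $\HonB$ controls $u$ and $S_u=\Leaves_1(\cpi_u)$ when $\HonA$ controls $u$ (the two protocols share the control scheme). Setting $q_u=\sum_{\ell\in S_u}\LDist{\cpi}(\ell)$ and $q'_u=\sum_{\ell\in S_u}\LDist{\cpi'}(\ell)$, we have $f(u)(\ell)=\LDist{\cpi}(\ell)/q_u$ and $g(u)(\ell)=\LDist{\cpi'}(\ell)/q'_u$ on $S_u$. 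For $u$ outside the bad set, I would observe that $q_u\ge\delta'\pi_u$ and $q'_u\ge\delta'\pi'_u$: this is trivial when $\HonB$ controls $u$ (as then $q_u=\pi_u$), and when $\HonA$ controls $u$ it holds because $q_u=\pi_u\Val(\cpi_u)$ and $u\notin\low{\cpi}{\delta',\HonA}$ (and analogously for $\cpi'$).

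The next step is an elementary per-node estimate: for non-negative reals $a,A,b,B$ with $A,B>0$, one has $\left|\tfrac{a}{A}-\tfrac{b}{B}\right|\le\tfrac{|a-b|}{A}+\tfrac{b\,|A-B|}{AB}$. Applying this with $a=\LDist{\cpi}(\ell)$, $b=\LDist{\cpi'}(\ell)$, $A=q_u$, $B=q'_u$, summing over $\ell\in S_u$, multiplying by $\pi_u$, and using $\pi_u/q_u\le 1/\delta'$ yields
\begin{align*}
\pi_u\SDP{f(u)}{g(u)}\leq\frac{1}{\delta'}\left(\tfrac{1}{2}\sum_{\ell\in S_u}|\LDist{\cpi}(\ell)-\LDist{\cpi'}(\ell)|+\tfrac{1}{2}|q_u-q'_u|\right).
\end{align*}

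Finally, I would sum this bound over good $u$ at depth $i-1$ and invoke disjointness: the subtrees rooted at distinct $u$'s of the same depth are disjoint, so the sets $\{S_u\}$ are pairwise disjoint subsets of $\Leaves(\cpi)$. Therefore $\sum_u\sum_{\ell\in S_u}|\LDist{\cpi}(\ell)-\LDist{\cpi'}(\ell)|\leq 2\SDP{\LDist{\cpi}}{\LDist{\cpi'}}\leq 2\alpha$, and by the triangle inequality $\sum_u|q_u-q'_u|\leq 2\alpha$ as well. Plugging these into the per-node bound gives $\sum_{u\text{ good}}\pi_u\SDP{f(u)}{g(u)}\le \tfrac{2\alpha}{\delta'}$; combined with the trivial bound on bad nodes, this proves the claim. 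I expect the only subtle point to be isolating the right normalizations $q_u,q'_u$ and verifying the $\ge\delta'\pi_u$ lower bound uniformly in both party sub-cases; once that is in place, the remainder is linearity of expectation plus disjointness of descendant subtrees.
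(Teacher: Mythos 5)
Your proposal is correct and arrives at exactly the claimed bound $\tfrac{2\alpha}{\delta'}+\ppr{\LDist{\cpi}}{\descP{\low{\cpi}{\delta',\HonA}\cup\low{\cpi'}{\delta',\HonC}}}$, but it does so along a route that is genuinely different from the paper's. The paper splits by who controls $u$: for $\HonB$-controlled nodes it observes $\SDP{f(u)}{g(u)}=\SDP{\LDist{\cpi_u}}{\LDist{\cpi'_u}}$, and for $\HonA$-controlled non-low nodes it proves $\SDP{f(u)}{g(u)}\le\tfrac{1}{\delta'}\SDP{\LDist{\cpi_u}}{\LDist{\cpi'_u}}$ via a separate sign-alignment argument (choosing the ``witness'' side of the SD and showing $\LDist{\cpi_u}(\ell)\ge\LDist{\cpi'_u}(\ell)$ on it). It then needs a further, self-contained lemma (the paper's Eq.~\eqref{eq:SDu}) asserting $\sum_u P_i(u)\SDP{\LDist{\cpi_u}}{\LDist{\cpi'_u}}\le 2\SDP{\LDist{\cpi}}{\LDist{\cpi'}}$, which is proved by a post-processing/triangle-inequality argument introducing an auxiliary random function $h$. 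You sidestep both of those pieces: the elementary estimate $\left|\tfrac{a}{A}-\tfrac{b}{B}\right|\le\tfrac{|a-b|}{A}+\tfrac{b|A-B|}{AB}$ gives a single unified per-node bound that treats $\HonA$- and $\HonB$-controlled nodes identically (via the common abstraction ``condition on $S_u$''), and because your intermediate quantities are unnormalized leaf probabilities rather than conditional SDs, the final summation collapses by disjointness of depth-$(i-1)$ subtrees without any auxiliary coupling. The trade-offs are roughly: your version is shorter and more mechanical, while the paper's version isolates Eq.~\eqref{eq:SDu} as a reusable fact. One (small) point your sketch waves at but does not pin down is the treatment of the degenerate nodes with $\pi_u>0$, $\pi'_u=0$ that are $\HonB$-controlled: such nodes need not be descendants of $\low{\cpi}{\delta',\HonA}\cup\low{\cpi'}{\delta',\HonC}$, so they are not automatically in your ``bad'' set even though $q'_u=0$. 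The paper's proof is equally silent about this case, and the standard fix (define $g(u)$ arbitrarily on nodes never visited by $\cpi'$, e.g.\ matching $f(u)$, which does not affect $\LDist{\cpi'}$) works for both; it would be worth spelling this out if you wrote the argument in full.
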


Let $Q_i$ denote the $i$'th query to $f$ in a random execution of $\sfH^{f}$ (note that by construction, such a query always exists) and let $Q=(Q_1,\ldots,Q_m)$.
By construction, for $u\in\Vertices(\cpi)$ with $\size{u}=i-1$, $Q_i(u)$ is the probability that $u$ is visited in a random execution of $\paren{\rcAP{\cpi}{1},\HonB}$. We get
\begin{align*}
\ppr{(q_1,\ldots,q_m) \la Q}{\exists i\in [m] \colon  q_i\neq \perp \land Q_i(q_i) > \gamma\cdot P_i(q_i)} &=\ppr{\LDist{\rcA{1},\HonB}}{\descP{\UnBal{\cpi}{\gamma} }}\\
&\leq\gamma\cdot\ppr{\LDist{\cpi}}{\descP{\low{\cpi}{\delta',\HonA}}}+\frac{2}{\gamma^{c}} ,
\end{align*}
where the inequality  follows from \cref{cor:UnbalBound}.

The proof of \cref{lem:RCofStatCloseProt} now follows by \cref{lemma:SDmQueriesAlg}, letting $k=m$, $a=\frac{2\alpha}{\delta'}+\ppr{\LDist{\cpi}}{\desc\left(\low{\cpi}{\delta',\HonA}\cup\low{\cpi'}{\delta',\HonC}\right)}$, $\lambda=\gamma$ and $b=\gamma\cdot\ppr{\LDist{\cpi}}{\descP{\low{\cpi}{\delta',\HonA}}}+\frac{2}{\gamma^{c}} $.
\end{proof}

\begin{proof}[Proof of \cref{claim:robustSDfg}]
Let $\Vertices_{i}(\cpi)=\set{v\in\Vertices(\cpi) \colon \size{v}=i-1}$, $\Vertices_{i}^\HonA(\cpi) = \Vertices_{i}(\cpi) \cap \ctrl{\cpi}{\HonA}$ and $\Vertices_{i}^\HonB(\cpi) = \Vertices_{i}(\cpi) \cap \ctrl{\cpi}{\HonB}$. Compute
\begin{align}\label{eq:SDfg}
\eex{u\la P_i}{\SDP{f(u)}{g(u)}} &= \sum_{u\in \Vertices_{i}(\cpi)} P_i(u)\cdot \SDP{f(u)}{g(u)} \\
&= \sum_{u\in \Vertices_{i}^\HonA(\cpi)} P_i(u)\cdot \SDP{f(u)}{g(u)} + \sum_{u\in \Vertices_{i}^\HonB(\cpi)} P_i(u)\cdot \SDP{f(u)}{g(u)}.\nonumber
\end{align}

In the rest of the proof we show that
\begin{align}\label{eq:SDfgA}
\sum_{u\in \Vertices_{i}^\HonA(\cpi)} P_i(u)\cdot \SDP{f(u)}{g(u)} &\leq \frac{1}{\delta'}\cdot \sum_{u\in\Vertices_{i}^\HonA(\cpi)} P_i(u)\cdot \SDP{\LDist{\cpi_u}}{\LDist{\cpi'_u}} \\
&\quad + \ppr{\LDist{\cpi}}{\descP{\low{\cpi}{\delta',\HonA}\cup\low{\cpi'}{\delta',\HonC}}},\nonumber
\end{align}
that
\begin{align}\label{eq:SDfgB}
\sum_{u\in \Vertices_{i}^\HonB(\cpi)} P_i(u)\cdot \SDP{f(u)}{g(u)} \leq \sum_{u\in\Vertices_{i}^\HonB(\cpi)} P_i(u)\cdot \SDP{\LDist{\cpi_u}}{\LDist{\cpi'_u}},
\end{align}
and that
\begin{align}\label{eq:SDu}
\sum_{u\in\Vertices_{i}(\cpi)} P_i(u)\cdot \SDP{\LDist{\cpi_u}}{\LDist{\cpi'_u}}\leq 2 \cdot  \SDP{\LDist{\cpi}}{\LDist{\cpi'}}.
\end{align}
Plugging \cref{eq:SDfgA,eq:SDfgB,eq:SDu} into \cref{eq:SDfg} completes the proof \cref{claim:robustSDfg}.

\medskip
\emph{Proof of \cref{eq:SDfgA}:}
Let $u\in\Vertices_i^\HonA(\cpi)$. By the definition of $f$, and since $u$ is under $\HonA$'s control, it follows that $\ppr{}{f(u)=\ell} = \LDist{\cpi_u}(\ell)/\Val(\cpi_u)$ if $\Color_\cpi(\ell)=1$, and $\ppr{}{f(u)=\ell} = 0$ otherwise. Since $\cpi$ and $\cpi'$ have the same control scheme, the same holds for $g(u)$ \wrt $\cpi'$. Let $\cS'_u\subseteq\Leaves_1(\cpi)$ be the set with  $\SDP{f(u)}{g(u)} = \sum_{\ell\in\cS'_u} \paren{\ppr{}{f(u)=\ell} -  \ppr{}{g(u)=\ell}} = \sum_{\ell\in\Leaves_1(\cpi)\setminus\cS'_u} \paren{\ppr{}{g(u)=\ell} -  \ppr{}{f(u)=\ell}}$.\footnote{Note that it must be the case that $\cS'_u\subseteq\Leaves_1(\cpi)$, since $\ppr{}{f(u)=\ell}=\ppr{}{g(u)=\ell}=0$, for every $\ell$ with $\Color_{\cpi}(\ell)=0$, which follows from the assumption that $\Color_{\cpi}\equiv \Color_{\cpi'}$.} Define $\cS_u\subseteq\Leaves_1(\cpi)$ as follows: if $\Val(\cpi_u) \geq \Val(\cpi'_u)$ let $\cS_u=\cS'_u$; otherwise let $\cS_u= \Leaves_1(\cpi)\setminus\cS'_u$.  It follows that
\begin{align}\label{eq:SDfgd}
\sum_{u \in \Vertices_i^\HonA(\cpi)} P_i(u) \cdot \SDP{f(u)}{g(u)}
& \leq  \sum_{\substack{u \in \Vertices_i^\HonA(\cpi)\colon \\ \Val(\cpi_u)\geq \Val(\cpi'_u) \geq \delta'}} P_i(u) \cdot \sum_{\ell\in\cS_u} \paren{ \frac{\LDist{\cpi_u}(\ell)}{\Val(\cpi_u)} -  \frac{\LDist{\cpi'_u}(\ell)}{\Val(\cpi'_u)}} \\
&\quad + \sum_{\substack{u \in \Vertices_i^\HonA(\cpi)\colon \\ \Val(\cpi'_u) > \Val(\cpi_u) \geq \delta'}} P_i(u) \cdot \sum_{\ell\in\cS_u} \paren{ \frac{\LDist{\cpi'_u}(\ell)}{\Val(\cpi'_u)} -  \frac{\LDist{\cpi_u}(\ell)}{\Val(\cpi_u)}} \nonumber\\
&\quad + \sum_{\substack{u \in \Vertices_i^\HonA(\cpi) \colon \\ \Val(\cpi_u) < \delta' \lor \Val(\cpi'_u) < \delta'}} P_i(u). \nonumber
\end{align}
Assume  $\Val(\cpi_u)\geq \Val(\cpi'_u)$. The definition of $\cS_u$ implies that $\LDist{\cpi_u}(\ell)/\Val(\cpi_u) \geq \LDist{\cpi'_u}(\ell)/\Val(\cpi'_u)$ for every $\ell\in\cS_u$. But since $\Val(\cpi_u)/\Val(\cpi'_u)\geq 1$, the latter yields that  $\LDist{\cpi_u}(\ell) \geq \LDist{\cpi'_u}(\ell)$ for every $\ell\in\cS_u$. Using this observation, we bound the first summand in the right-hand side of \cref{eq:SDfgd}.
\begin{align}\label{eq:SDfgd1}
\lefteqn{\sum_{\substack{u \in \Vertices_i^\HonA(\cpi)\colon \\ \Val(\cpi_u)\geq \Val(\cpi'_u) \geq \delta'}} P_i(u) \cdot \sum_{\ell\in\cS_u} \paren{ \frac{\LDist{\cpi_u}(\ell)}{\Val(\cpi_u)} -  \frac{\LDist{\cpi'_u}(\ell)}{\Val(\cpi'_u)}}} \\
& \leq \sum_{\substack{u \in \Vertices_i^\HonA(\cpi)\colon \\ \Val(\cpi_u)\geq \Val(\cpi'_u) \geq \delta'}} \frac{P_i(u)}{\Val(\cpi')} \cdot \sum_{\ell\in\cS_u} \paren{\LDist{\cpi_u}(\ell) - \LDist{\cpi'_u}(\ell)} \nonumber\\
& \leq \frac{1}{\delta'}\sum_{\substack{u \in \Vertices_i^\HonA(\cpi)\colon \\ \Val(\cpi_u)\geq \Val(\cpi'_u) \geq \delta'}} P_i(u) \cdot \sum_{\ell\in\cS_u} \paren{\LDist{\cpi_u}(\ell) - \LDist{\cpi'_u}(\ell)} \nonumber\\
&\leq \frac{1}{\delta'} \sum_{\substack{u \in \Vertices_i^\HonA(\cpi)\colon \\ \Val(\cpi_u)\geq \Val(\cpi'_u) \geq \delta'}} P_i(u)\cdot \SDP{\LDist{\cpi_u}}{\LDist{\cpi'_u}},\nonumber
\end{align}
where the second inequality follows since $\sum_{\ell\in\cS_u} \paren{\LDist{\cpi_u}(\ell) - \LDist{\cpi'_u}(\ell)} \geq 0$, as argued above.
Similar calculations, and using the symmetry of statistical distance, we bound the second summand in the right-hand side of \cref{eq:SDfgd}:
\begin{align}\label{eq:SDfgd2}
\sum_{\substack{u \in \Vertices_i^\HonA(\cpi)\colon \\ \Val(\cpi'_u)\geq \Val(\cpi_u) \geq \delta'}} P_i(u) \cdot \sum_{\ell\in\cS_u} \paren{ \frac{\LDist{\cpi'_u}(\ell)}{\Val(\cpi'_u)} -  \frac{\LDist{\cpi_u}(\ell)}{\Val(\cpi_u)}}
\leq \frac{1}{\delta'} \sum_{\substack{u \in \Vertices_i^\HonA(\cpi)\colon \\ \Val(\cpi'_u)\geq \Val(\cpi_u) \geq \delta'}} P_i(u)\cdot \SDP{\LDist{\cpi_u}}{\LDist{\cpi'_u}}.
\end{align}
Finally, to bound the third summand in the right-hand side of \cref{eq:SDfgd}, we note that it sums over (not all) $u\in\low{\cpi}{\delta',\HonA}\cup\low{\cpi'}{\delta',\HonC}$. Since $P_i$ simply samples a random partial transcript from $\cpi$, it follows that
\begin{align}\label{eq:SDfgd3}
\sum_{\substack{u \in \Vertices_i^\HonA(\cpi) \colon \\ \Val(\cpi_u) < \delta' \lor \Val(\cpi'_u) < \delta'}} P_i(u) \leq \ppr{\LDist{\cpi}}{\descP{\low{\cpi}{\delta',\HonA}\cup\low{\cpi'}{\delta',\HonC}}}.
\end{align}
Plugging \cref{eq:SDfgd1,eq:SDfgd2,eq:SDfgd3} into \cref{eq:SDfgd} yields \cref{eq:SDfgA}.

\medskip
\emph{Proof of \cref{eq:SDfgB}:}
Since it is the right-hand party who controls $u$ in $\cpi$ and in $\cpi'$, it follows that $\SDP{f(u)}{g(u)} = \SDP{\LDist{\cpi_u}}{\LDist{\cpi'_u}}$, and \cref{eq:SDfgB} follows.

\medskip
\emph{Proof of \cref{eq:SDu}:}
Using the definition of $P_i$, we can write
\begin{align*}
\sum_{u\in\Vertices_{i}(\cpi)} P_i(u)\cdot \SDP{\LDist{\cpi_u}}{\LDist{\cpi'_u}} &=
\sum_{u\in\Vertices_{i}(\cpi)} \VerticesDist_{\cpi}(u)\cdot \frac12\sum_{\ell\in\Leaves(\cpi_u)} \size{\VerticesDist_{\cpi_u}(\ell) - \VerticesDist_{\cpi'_u}(\ell)}\\
&= \frac12\sum_{\ell\in\Leaves(\cpi)}\size{\VerticesDist_{\cpi}(\ell_{1,\ldots,i-1})\cdot\VerticesDist_{\cpi_{\ell_{1,\ldots,i-1}}}(\ell) -\VerticesDist_{\cpi}(\ell_{1,\ldots,i-1})\cdot\VerticesDist_{\cpi'_{\ell_{1,\ldots,i-1}}}(\ell)}\\
&= \SDP{\LDist{\cpi}}{\LDist{\cpi''}},
\end{align*}
for $\LDist{\cpi''}(\ell) \eqdef \VerticesDist_{\cpi}(\ell_{1,\ldots,i-1})\cdot\VerticesDist_{\cpi'_{\ell_{1,\ldots,i-1}}}(\ell)$.

We prove that $\SDP{\LDist{\cpi'}}{\LDist{\cpi''}}\leq\SDP{\LDist{\cpi'}}{\LDist{\cpi}}$, and \cref{eq:SDu} follows from the triangle inequality. Let $h$ be the random function that, given $\ell\in\Leaves(\cpi)$, returns $\ell'\la\LDist{\cpi'_{\ell_{1,\ldots,i-1}}}$. Therefore, $h(\LDist{\cpi'})\equiv \LDist{\cpi'}$ and $h(\LDist{\cpi})\equiv \LDist{\cpi''}$, and this completes the proof.

This completes the proof of \cref{eq:SDfgA,eq:SDfgB,eq:SDu}, and thus  the proof of \cref{claim:robustSDfg}.
\end{proof}

\subsubsection{The Success Probability of $\realA{1,\xi,\delta}{\cpi}$ --- The ``Ideal to Real'' Reduction}\label{sec:real:app:bound}
Consider an execution of $(\realA{1,\xi,\delta}{},\HonB)$. Such an execution asks the approximated biased continuator $\RandomContXD{\xi}{\delta}$ for continuations of transcripts under $\HonA$'s control, leading to $1$-leaves. Hence, as long as this execution generates neither low-value transcripts under $\HonA$'s control nor unbalanced transcripts, we expect the approximated biased-continuation attacker to do almost as well as its ideal variant.  This is formally put in the next lemma.

\begin{lemma}\label{lem:fromIdeal2Real}
Let $\cpi=(\HonA,\HonB)$ be an $\rnd$-round protocol and let $\delta\in(0,\frac12]$. Then
\begin{align*}
\SDP{\LDist{\rcAP{\cpi}{1},\HonB}}{\LDist{\realA{1,\xi,\delta}{\cpi},\HonB}}\leq \rnd\cdot\gamma\cdot\paren{2\xi+\ppr{\LDist{\HonA,\HonB}}{\desc(\low{\cpi}{\delta,\HonA})}} + \ppr{\LDist{\rcAP{\cpi}{1},\HonB}}{\descP{\unbal{\cpi}{1}{\gamma}{}}}
\end{align*}
for every $\gamma\geq 1$  and $\xi > 0$.
\end{lemma}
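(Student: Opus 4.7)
The plan is to reduce the statement to \cref{lemma:SDmQueriesAlg}, using the same oracle-packaging trick employed in the proof of \cref{lem:RCofStatCloseProt}. I would define randomized functions $f,g$ on $\Vertices(\cpi)\cup\set{\perp}$ as follows: at an $\HonA$-controlled node $u$, set $f(u)=\RandomCont_{(\HonA,\HonB)}(u,1)$ and $g(u)=\RandomCont^{\xi,\delta}_{(\HonA,\HonB)}(u,1)$; at an $\HonB$-controlled node $u$, let both $f(u)$ and $g(u)$ return an honest $\HonB$-message bit sampled from the conditional honest distribution of $\cpi$; and at $\perp$, let $f(\perp)\equiv g(\perp)$. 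Let $\sfH$ be the $m$-query oracle algorithm that starts from the empty transcript and, at each step, extends the current transcript by the bit returned by its oracle; by construction $\sfH^f\equiv\LDist{\rcAP{\cpi}{1},\HonB}$ and $\sfH^g\equiv\LDist{\realA{1,\xi,\delta}{\cpi},\HonB}$. Crucially $f(u)\equiv g(u)$ at $\HonB$-controlled $u$, so the entire divergence between $\sfH^f$ and $\sfH^g$ comes from the $\HonA$-controlled queries, which is exactly where \cref{def:AppBiassCSampler} provides a useful guarantee.

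Let $P_i$ denote the distribution of the $i$-th prefix in an honest random execution of $\cpi$, and let $Q_i$ denote the distribution of the $i$-th prefix in $\sfH^f=(\rcAP{\cpi}{1},\HonB)$. For each $i\in[m]$, \cref{def:AppBiassCSampler} (applied depth-by-depth, after noting that the ``bad prefix at depth $i-1$'' event is contained in the existential ``some bad prefix'' event appearing in that definition) gives $\ppr{u\la P_i}{u\notin\low{\cpi}{\delta}\land\SDP{f(u)}{g(u)}>\xi}\leq\xi$; since $f(u)\equiv g(u)$ at $\HonB$-controlled $u$, this yields
\begin{align*}
\Ex_{u\la P_i}\Brack{\SDP{f(u)}{g(u)}}\leq 2\xi+\ppr{u\la P_i}{u\in\low{\cpi}{\delta,\HonA}}\leq 2\xi+\ppr{\LDist{\HonA,\HonB}}{\desc(\low{\cpi}{\delta,\HonA})}\eqdef a,
\end{align*}
where the last inequality holds because ``the $i$-th prefix lies in $\low{\cpi}{\delta,\HonA}$'' implies that the full honest leaf lies in $\desc(\low{\cpi}{\delta,\HonA})$, uniformly in $i$. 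For the ``unbalance'' parameter of \cref{lemma:SDmQueriesAlg}, \cref{def:unbalNodes} gives $Q_i(u)>\gamma\cdot P_i(u)\iff u\in\unbal{\cpi}{1}{\gamma}{}$, so the probability that the $\sfH^f$-query sequence ever visits a $\gamma$-unbalanced node equals $\ppr{\LDist{\rcAP{\cpi}{1},\HonB}}{\descP{\unbal{\cpi}{1}{\gamma}{}}}\eqdef b$.

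Plugging $k=m$, $\lambda=\gamma$, and the above $a$ and $b$ into \cref{lemma:SDmQueriesAlg} yields
\begin{align*}
\SDP{\LDist{\rcAP{\cpi}{1},\HonB}}{\LDist{\realA{1,\xi,\delta}{\cpi},\HonB}}\leq b+m\cdot a\cdot\gamma,
\end{align*}
which expands to precisely the asserted bound. The main obstacle is bookkeeping rather than conceptual: one must (i) verify that the chosen $f,g$ collapse the two attackers into a single oracle algorithm whose oracle-query distribution is exactly $Q_i$, (ii) translate \cref{def:AppBiassCSampler}'s \emph{global} guarantee (a probability-$\xi$ bound on an existential event over all prefixes of a random honest leaf) into the \emph{per-$i$} expectation bound required by \cref{lemma:SDmQueriesAlg}, and (iii) match the resulting per-$i$ bound uniformly to the single $a$ appearing in that lemma. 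Once these identifications are made, the two terms of the stated inequality split cleanly: the $m\gamma(2\xi+\ppr{}{\desc(\low{\cpi}{\delta,\HonA})})$ term arises from the $m$ oracle queries each contributing at most $\gamma a$ (by \cref{def:AppBiassCSampler} plus the low-value slack), while the $\ppr{\LDist{\rcAP{\cpi}{1},\HonB}}{\descP{\unbal{\cpi}{1}{\gamma}{}}}$ term absorbs the event that the attacker's query distribution deviates from the honest one by more than a factor $\gamma$ at some step.
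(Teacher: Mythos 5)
Your proof is correct and follows essentially the same route as the paper's: reduce to \cref{lemma:SDmQueriesAlg} with $f=\RandomCont_\cpi$, $g=\RandomCont_\cpi^{\xi,\delta}$, $\lambda=\gamma$, $a=2\xi+\ppr{\LDist{\cpi}}{\desc(\low{\cpi}{\delta,\HonA})}$ and $b=\ppr{\LDist{\rcAP{\cpi}{1},\HonB}}{\descP{\unbal{\cpi}{1}{\gamma}{}}}$. The only cosmetic difference is that you reuse the robustness-lemma packaging (an $\sfH$ that queries one bit per round, with $f\equiv g$ at $\HonB$-controlled nodes), whereas the paper's $\sfH$ routes only $\rcAP{\cpi}{1}$'s biased-continuation calls to the oracle and lets $P_i$ range over the $i$'th $\HonA$-controlled node (with $\perp$ if absent); both collapse to the same bound since the $\HonB$-controlled queries contribute zero to $\Ex_{P_i}[\SD(f,g)]$.
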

\begin{proof}
	We use \cref{lemma:SDmQueriesAlg}. For function  $\phi$, let $\sfH^\phi$ be an algorithm that outputs the transcript of a random execution of  $\paren{\rcAP{\cpi}{1},\HonB}$ in which $\rcAP{\cpi}{1}$'s calls to $\RandomCont_\cpi$ are sent to $\phi$ instead.\footnote{Note that $\sfH$ is not the same as \cref{alg:robust} defined in the proof of the robustness lemma (\cref{lem:RCofStatCloseProt}). There we considered two different underlying protocols, and needed to also argue about the different actions the honest (\ie non-attacked) parties took. Here, we have only one underlying protocol and thus care only about the calls the attacked party makes.}  Let $f$ and $g$ be the (random) functions $\RandomCont_\cpi$ and $\RandomCont_\cpi^{\xi,\delta}$ respectively, letting also $f(\perp)=g(\perp)=\perp$. By construction, it holds that
	\begin{align}
		\SDP{\LDist{\rcAP{\cpi}{1},\HonB}}{\LDist{\realA{1,\xi,\delta}{\cpi},\HonB}}=\SDP{\sfH^f}{\sfH^g}.
	\end{align}
	For $i\in[m]$, let $P'_i$ be the distribution of the $i$'th node under $\HonA$'s control in a random execution of $\cpi$, taking the value $\perp$ if no such node exists, and let $P_i = (P'_i,1)$, with $(\perp,1)=\perp$. By definition,
	\begin{align}
		\eex{q\la P_i}{\SD(f(q),g(q))} &=
		\eex{q\la P_i}{\SD(\RandomCont_\cpi(q),\RandomCont_\cpi^{\xi,\delta}(q))  \cdot 1_{\neg \perp} (q)} \\ &\leq 2 \xi+\ppr{\LDist{\cpi}}{\desc(\low{\cpi}{\delta,\HonA})},\nonumber
	\end{align}
	letting the indicator  $1_{\neg \perp}(q)$  take the value one if  $q\neq \perp$, and zero otherwise.

	Let $Q_i$ denote the $i$'th query to $f$ in a random execution of $\sfH^f$, taking the value  $\perp$ if no such query exists, and let $Q = (Q_1,
	\ldots,Q_m)$.  By definition,
	\begin{align}
		\ppr{(q_1,\ldots,q_m) \la Q}{\exists i\in [m] \colon  q_i\neq \perp \land \; Q_i(q_i) > \gamma\cdot P_i(q_i)} = \ppr{\LDist{\rcAP{\cpi}{1},\HonB}}{\descP{\unbal{\cpi}{1}{\gamma}{}}}.
	\end{align}
	Hence,  the proof follows by \cref{lemma:SDmQueriesAlg}, letting  $k:=m$, $a:=2 \xi+\ppr{\LDist{\cpi}}{\desc(\low{\cpi}{\delta,\HonA})}$, $\lambda:=\gamma$ and $b:=\ppr{\LDist{\rcAP{\cpi}{1},\HonB}}{\descP{\unbal{\cpi}{1}{\gamma}{}}}$.
\end{proof}

Our use of  \cref{lem:fromIdeal2Real} is via the following lemma that states that  the approximated biased-continuation attacker successfully biases protocols in which the probability of hitting $\HonA$-controlled low-value nodes is small.

\begin{lemma}\label{cor:approxBC}
Let $\cpi=(\HonA,\HonB)$ be an $\rnd$-round protocol, let $\delta\in(0,\frac12]$, and let $c=c(\delta)$ be according to \cref{lem:ProbVisitUnBal}, then
\begin{align*}
\SDP{\LDist{\rcAP{\cpi}{1},\HonB}}{\LDist{\realA{1,\xi,\delta'}{\cpi},\HonB}}&\leq 2\cdot\rnd\cdot\gamma\cdot\paren{\xi+\ppr{\LDist{\HonA,\HonB}}{\desc(\low{\cpi}{\delta',\HonA})}} + \frac{2}{\gamma^c}
\end{align*}
for any  $\delta'\geq \delta$, $\xi>0$ and $\gamma > 1$.
\end{lemma}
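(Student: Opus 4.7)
The plan is to obtain the bound by plugging Corollary \ref{cor:UnbalBound} into Lemma \ref{lem:fromIdeal2Real}. Specifically, I would first apply Lemma \ref{lem:fromIdeal2Real} with the parameter $\delta$ there instantiated to $\delta'$ (which is legitimate since $\delta'\geq \delta$ and $\delta\in(0,1/2]$, and the statement of Lemma \ref{lem:fromIdeal2Real} is uniform in the choice of the low-value threshold). This yields
\begin{align*}
\SDP{\LDist{\rcAP{\cpi}{1},\HonB}}{\LDist{\realA{1,\xi,\delta'}{\cpi},\HonB}}\leq \rnd\gamma\Bigl(2\xi+\ppr{\LDist{\HonA,\HonB}}{\desc(\low{\cpi}{\delta',\HonA})}\Bigr)+\ppr{\LDist{\rcAP{\cpi}{1},\HonB}}{\descP{\unbal{\cpi}{1}{\gamma}{}}}.
\end{align*}

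Next, I would control the second summand using Corollary \ref{cor:UnbalBound}, which, applied with the same $\delta'\geq\delta$ and with the same constant $c=c(\delta)$ furnished by Lemma \ref{lem:ProbVisitUnBal}, gives
\begin{align*}
\ppr{\LDist{\rcAP{\cpi}{1},\HonB}}{\descP{\unbal{\cpi}{1}{\gamma}{}}}\leq \gamma\cdot\ppr{\LDist{\HonA,\HonB}}{\descP{\low{\cpi}{\delta',\HonA}}}+\frac{2}{\gamma^c}.
\end{align*}

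Adding the two inequalities yields a bound of
\begin{align*}
2\rnd\gamma\xi+(\rnd+1)\gamma\cdot\ppr{\LDist{\HonA,\HonB}}{\descP{\low{\cpi}{\delta',\HonA}}}+\frac{2}{\gamma^c}.
\end{align*}
Since the statement is trivial for $\rnd=0$ (both distributions are supported on a single leaf), I may assume $\rnd\geq 1$, whence $\rnd+1\leq 2\rnd$ and the claimed bound follows. There is no genuine obstacle here; the only point requiring a moment's care is to notice that Lemma \ref{lem:fromIdeal2Real} may be invoked with the threshold $\delta'$ (not merely $\delta$), and that the constant $c$ supplied by Lemma \ref{lem:ProbVisitUnBal} depends only on $\delta$, so the same exponent appears in both auxiliary bounds and in the conclusion.
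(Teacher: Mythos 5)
Your proposal is correct and matches the paper's own proof, which simply states "Follows by plugging \cref{cor:UnbalBound} into \cref{lem:fromIdeal2Real}." You have made explicit the small arithmetic point (that $(\rnd+1)\leq 2\rnd$ once $\rnd\geq 1$, with the $\rnd=0$ case being vacuous) that the paper leaves implicit.
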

\begin{proof}
Follows by plugging \cref{cor:UnbalBound} into \cref{lem:fromIdeal2Real}.
\end{proof}

\subsubsection{Bounding the Probability of Hitting Low-Density Sets}\label{sec:lowRemainLow}
Our final step before proving \cref{lemma:goodWOLowValueSimple,claim:SmallStaySmallApproxSimple}, is showing that the recursive ideal biased-continuation attacker does not increase the probability of hitting any set by much. This is a generalization of \cref{cor:UnbalBound} to arbitrary sets of nodes (\ie not only unbalanced) and to the recursive version of the ideal biased-continuation attacker.

\cref{cor:UnbalBound} considered unbalanced nodes to be those that the probability of hitting them in the protocol which the (non-recursive) biased-continuation attacker take the role of $\HonA$ is $\gamma$-times higher than in the original protocol. When extending \cref{cor:UnbalBound} to the recursive version of the attacker, we  take different degree of ``unbalancedness'' for every level of the recursion. Specifically, we will (implicitly) define unbalanced nodes for the $i$'th level of the recursion, to be those nodes that the probability of hitting them in the protocol in which the $i$'th-level recursive attacker takes the role of $\HonA$, is $\gamma_i$-times higher than in the protocol which the $(i-1)$'th-level recursive attacker take the role of $\HonA$. The freedom to choose different degrees of ``unbalanceness'' for different levels of the recursion will be crucial when arguing that (a similar attack to) the biased-continuation attack can be can implemented efficiently assuming the in-existence of OWFs.


\begin{lemma}\label{prop:UnbalBounds}
Let $\cpi=\HonAHonB$ be a protocol, let $\delta\in(0,\frac12]$, and let $c=c(\delta)$ be according  \cref{lem:ProbVisitUnBal}. Then for any $\delta'\geq\delta$, every $k\in\N$, any $(\gamma_1,\ldots,\gamma_k)\in (1,\infty)^k$ and every $\cS\subseteq\Vertices(\cpi)$ it holds that
\begin{align*}
\ppr{\LDist{\rcAP{\cpi}{k},\HonB}}{\descP{\cS\cup\low{\cpi}{\delta',\HonA}}} \leq \ppr{\LDist{\HonA,\HonB}}{\descP{\cS\cup\low{\cpi}{\delta',\HonA}}} \cdot \prod_{i=1}^k\gamma_i + 2\cdot\sum_{i=1}^k \frac{\cdot\prod_{j=i+1}^{k}\gamma_j}{\gamma_i^c}.
\end{align*}
\end{lemma}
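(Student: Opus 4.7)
The plan is to induct on $k$, reducing to a single-step recursion of the form $a_k \leq \gamma_k\, a_{k-1} + 2/\gamma_k^c$, where $a_k$ denotes the LHS probability and $\cT \eqdef \cS \cup \low{\cpi}{\delta',\HonA}$. The base case $k=0$ is immediate since $\rcAP{\cpi}{0} \equiv \HonA$. For the inductive step, let $\cU_k \eqdef \UnBal{(\rcAP{\cpi}{k-1},\HonB)}{\gamma_k}$ be the $\gamma_k$-unbalanced nodes of the one-level-lower attacked protocol. Applying \cref{prop:UnBal1} with $\cA = \cB = \cT$ and $\cC = \cU_k$ yields the decomposition
\[
\descP{\cT} \;\subseteq\; \descP{\cT \setminus \descP{\cU_k}}\; \cup\; \descP{\cU_k \setminus \propDescP{\cT}},
\]
and I will bound the two pieces separately.

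For the first piece, any $u \in \cT \setminus \descP{\cU_k}$ satisfies $u \notin \cU_k$, so by the definition of unbalanced nodes $\VerticesDist_{(\rcAP{\cpi}{k},\HonB)}(u) < \gamma_k \cdot \VerticesDist_{(\rcAP{\cpi}{k-1},\HonB)}(u)$. Summing this inequality over $\frnt{\cT \setminus \descP{\cU_k}}$ gives $\ppr{\LDist{\rcAP{\cpi}{k},\HonB}}{\descP{\cT \setminus \descP{\cU_k}}} \leq \gamma_k \cdot \ppr{\LDist{\rcAP{\cpi}{k-1},\HonB}}{\descP{\cT}} = \gamma_k\, a_{k-1}$. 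For the second piece, the crucial observation is that biased-continuation toward~$1$ can only raise sub-protocol values (using \cref{fact:switchOrder} to identify sub-protocols of $(\rcAP{\cpi}{k-1},\HonB)$ with attacked sub-protocols of $\cpi$), which implies $\low{(\rcAP{\cpi}{k-1},\HonB)}{\delta',\HonA} \subseteq \low{\cpi}{\delta',\HonA} \subseteq \cT$, and therefore $\cU_k \setminus \propDescP{\cT} \subseteq \cU_k \setminus \propDescP{\low{(\rcAP{\cpi}{k-1},\HonB)}{\delta',\HonA}}$. Applying \cref{lem:ProbVisitUnBal} to the protocol $(\rcAP{\cpi}{k-1},\HonB)$ with parameters $\delta,\delta',\gamma_k$ then bounds the second piece by $2/\gamma_k^c$. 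Combining the two estimates yields the advertised recursion, and unrolling it from $a_0$ gives exactly $a_k \leq a_0\prod_{i=1}^k \gamma_i + 2\sum_{i=1}^k \prod_{j=i+1}^k \gamma_j / \gamma_i^c$.

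The main obstacle I expect is resisting the temptation to bound the unbalanced piece via \cref{cor:UnbalBound}: doing so would replace the clean $2/\gamma_k^c$ term by $\gamma_k\, a_{k-1} + 2/\gamma_k^c$, producing a recursion $a_k \leq 2\gamma_k\, a_{k-1} + 2/\gamma_k^c$ whose iterate blows up as $2^k \prod_i \gamma_i$. The whole point of routing through $\cU_k \setminus \propDescP{\cT}$ (rather than $\cU_k \setminus \propDescP{\low{(\rcAP{\cpi}{k-1},\HonB)}{\delta',\HonA}}$ directly) is to absorb the low-value-ancestor contribution into the $\descP{\cT}$ side of the decomposition \emph{before} invoking the unbalancedness estimate, which is exactly what preserves the multiplicative constant $\gamma_k$ in the recursion. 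The only other subtlety worth checking is that the constant $c = c(\delta)$ from \cref{lem:ProbVisitUnBal} is universal in the underlying protocol (it depends on $\delta$ only), so it can be reused consistently when the lemma is applied at each level to a different protocol $(\rcAP{\cpi}{k-1},\HonB)$.
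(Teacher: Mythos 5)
Your proof is correct and follows essentially the same route as the paper's: the same induction on $k$, the same decomposition of $\descP{\cS\cup\low{\cpi}{\delta',\HonA}}$ via \cref{prop:UnBal1}, the same monotonicity $\low{(\rcAP{\cpi}{k-1},\HonB)}{\delta',\HonA}\subseteq\low{\cpi}{\delta',\HonA}$, and the same appeal to \cref{lem:ProbVisitUnBal} to bound the unbalanced piece by $2/\gamma_k^c$. The only cosmetic difference is that the paper factors the single-step bound through \cref{cor:SmallStaySmall}, instantiating \cref{prop:UnBal1} with $\cB$ equal to the low-value set of the attacked protocol alone, whereas you inline that step with $\cB=\cS\cup\low{\cpi}{\delta',\HonA}$; this yields a slightly smaller second piece but an otherwise identical argument.
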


To prove \cref{prop:UnbalBounds} we will use the next claim.

\begin{claim}\label{cor:SmallStaySmall}
Let $\cpi=\HonAHonB$ be a protocol, let $\cS\subseteq\Vertices(\cpi)$, let $\delta\in(0,\frac12]$ and let $c=c(\delta)$ from \cref{lem:ProbVisitUnBal}. Then, for every $\delta'\geq\delta$ and  $\gamma>1$, it holds that
\begin{align*}
\ppr{\LDist{\rcAP{\cpi}{1},\HonB}}{\descP{\cS\cup\low{\cpi}{\delta',\HonA}}} \leq \gamma\cdot \ppr{\LDist{\HonA,\HonB}}{\descP{\paren{\cS\cup\low{\cpi}{\delta',\HonA}}\setminus \descP{\UnBal{\cpi}{\gamma}}}} +  \frac{2}{\gamma^c}.
\end{align*}
\end{claim}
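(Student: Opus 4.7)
The plan is to decompose $\descP{\cS\cup\low{\cpi}{\delta',\HonA}}$ into two pieces: a ``balanced'' part on which the biased-continuation attacker inflates visit probabilities by at most a factor of $\gamma$, and the complementary ``unbalanced-but-not-inside-low-value'' part which is handled directly by \cref{lem:ProbVisitUnBal}. Write $T = \cS\cup\low{\cpi}{\delta',\HonA}$ for brevity.

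First, I would invoke \cref{prop:UnBal1} with $\cA = T$, $\cB = \low{\cpi}{\delta',\HonA}$, and $\cC = \UnBal{\cpi}{\gamma}$. Since $\cB\subseteq T$, the union $\cA\cup \cB = T$, and the proposition yields
\begin{align*}
\descP{T} \;\subseteq\; \descP{T\setminus\descP{\UnBal{\cpi}{\gamma}}} \;\cup\; \descP{\UnBal{\cpi}{\gamma}\setminus \propDescP{\low{\cpi}{\delta',\HonA}}}.
\end{align*}
A union bound then gives
\begin{align*}
\ppr{\LDist{\rcA{1},\HonB}}{\descP{T}} \leq \ppr{\LDist{\rcA{1},\HonB}}{\descP{T\setminus\descP{\UnBal{\cpi}{\gamma}}}} + \ppr{\LDist{\rcA{1},\HonB}}{\descP{\UnBal{\cpi}{\gamma}\setminus \propDescP{\low{\cpi}{\delta',\HonA}}}}.
\end{align*}
The second summand is at most $2/\gamma^c$ by \cref{lem:ProbVisitUnBal}, which exactly accounts for the additive error term in the claim's statement.

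For the first summand, I would consider the frontier $\cF = \frnt{T\setminus \descP{\UnBal{\cpi}{\gamma}}}$, so that $\descP{T\setminus\descP{\UnBal{\cpi}{\gamma}}} = \descP{\cF}$ and the elements of $\cF$ are pairwise incomparable (i.e., no one is a descendant of another). The key observation is that every $v\in\cF$ satisfies $v\notin \descP{\UnBal{\cpi}{\gamma}}$, which in particular means $v\notin\UnBal{\cpi}{\gamma}$, so by definition of $\UnBal{\cpi}{\gamma}$ we have $\VerticesDist_{(\rcA{1},\HonB)}(v) < \gamma\cdot \VerticesDist_{(\HonA,\HonB)}(v)$. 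Because the frontier is an antichain,
\begin{align*}
\ppr{\LDist{\rcA{1},\HonB}}{\descP{\cF}} = \sum_{v\in \cF}\VerticesDist_{(\rcA{1},\HonB)}(v) \leq \gamma\cdot \sum_{v\in \cF}\VerticesDist_{(\HonA,\HonB)}(v) = \gamma\cdot\ppr{\LDist{\HonA,\HonB}}{\descP{T\setminus\descP{\UnBal{\cpi}{\gamma}}}}.
\end{align*}

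Combining the two bounds yields the claim. The only delicate step is the choice of parameters when applying \cref{prop:UnBal1}: one must take $\cB = \low{\cpi}{\delta',\HonA}\subseteq T$ precisely so that the second piece of the decomposition is $\descP{\UnBal{\cpi}{\gamma}\setminus\propDescP{\low{\cpi}{\delta',\HonA}}}$ (not the larger $\descP{\UnBal{\cpi}{\gamma}}$), which is what \cref{lem:ProbVisitUnBal} is tailored to bound by $2/\gamma^c$ without incurring any extra $\gamma\cdot\Pr[\descP{\low{\cpi}{\delta',\HonA}}]$ term. I anticipate this bookkeeping --- lining up the descendant/proper-descendant distinction with the exact form of the proposition and lemma --- to be the main, though modest, obstacle; the rest is a clean union bound plus a frontier summation.
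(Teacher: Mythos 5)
Your proof is correct and follows the paper's argument essentially verbatim: the same invocation of \cref{prop:UnBal1} with $\cA=\cS\cup\low{\cpi}{\delta',\HonA}$, $\cB=\low{\cpi}{\delta',\HonA}$, $\cC=\UnBal{\cpi}{\gamma}$, followed by a union bound, \cref{lem:ProbVisitUnBal} for the unbalanced piece, and the observation that the frontier of the remaining piece consists of nodes that are not $\gamma$-unbalanced. The paper compresses the last step into ``follows from the definition of $\UnBal{\cpi}{\gamma}$''; your frontier-antichain summation is exactly the unpacking of that remark and is accurate.
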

\begin{proof}
Fix $\delta'\geq\delta$ and  $\gamma>1$.
Applying  \cref{prop:UnBal1} \wrt $\cA=\cS\cup\low{\cpi}{\delta',\HonA}$, $\cB=\low{\cpi}{\delta',\HonA}$ and $\cC=\UnBal{\cpi}{\gamma}$ yields that
\begin{align}\label{eq:propUnBal1}
\descP{\cS\cup\low{\cpi}{\delta',\HonA}}\subseteq \descP{\paren{\cS\cup\low{\cpi}{\delta',\HonA}}\setminus \descP{\UnBal{\cpi}{\gamma}}} \cup \descP{\UnBal{\cpi}{\gamma}\setminus\propDescP{\low{\cpi}{\delta',\HonA}}}.
\end{align}
It follows that
\begin{align*}
\ppr{\LDist{\rcA{1},\HonB}}{\descP{\cS}} &\leq \ppr{\LDist{\rcA{1},\HonB}}{\descP{\paren{\cS\cup\low{\cpi}{\delta',\HonA}}\setminus \descP{\UnBal{\cpi}{\gamma}}}}\\
&\quad +\ppr{\LDist{\rcA{1},\HonB}}{\descP{\UnBal{\cpi}{\gamma}\setminus\propDescP{\low{\cpi}{\delta',\HonA}}}} \\
&\leq \gamma\cdot \ppr{\LDist{\HonA,\HonB}}{\descP{\paren{\cS\cup\low{\cpi}{\delta',\HonA}}\setminus \descP{\UnBal{\cpi}{\gamma}}}} +  \frac{2}{\gamma^c},
\end{align*}
where the first inequality follows from \cref{eq:propUnBal1} and the second inequality follows from the definition of $\UnBal{\cpi}{\gamma}$ (\cref{def:unbalNodes}) and \cref{lem:ProbVisitUnBal}.
\end{proof}

We are now ready to prove \cref{prop:UnbalBounds}.

\begin{proof}[Proof of \cref{prop:UnbalBounds}.]
Fix $\delta'\ge\delta $ and $(\gamma_1,\ldots,\gamma_k) \in (1,\infty)^k$.
The proof is by induction on $k$. For $k=0$, the proof follows immediately from definition.

Assume the lemma holds for $k-1$; we prove it for $k$. For $i\in(k-1)$, let $\cpi^{(i)} = \paren{\rcAP{\cpi}{i},\HonB}$.
It is easy to verify that when the ideal biased continuation attacker takes the role of $\HonA$ in the protocol and tries to bias the outcome towards $1$, the value of every node cannot decrease. Namely, it holds that $\low{\cpi^{(i)}}{\delta',\HonA} \subseteq \low{\cpi^{(i-1)}}{\delta',\HonA}$ for every $i\in[k-1]$, and thus $\low{\cpi^{(k-1)}}{\delta',\HonA} \subseteq \low{\cpi}{\delta',\HonA}$. Applying \cref{cor:SmallStaySmall} \wrt the protocol $\cpi^{(k-1)}$, set $\cS\cup\low{\cpi}{\delta',\HonA}$ and $\gamma=\gamma_k$, yields that
\begin{align}\label{eq:propUnbalSmall1}
\lefteqn{\ppr{\LDist{\rcAP{\cpi^{(k-1)}}{1},\HonB}}{\descP{\cS\cup\low{\cpi}{\delta',\HonA}}}} \\
&\leq \gamma_k\cdot \ppr{\LDist{\cpi^{(k-1)}}}{\descP{\paren{\cS\cup\low{\cpi}{\delta',\HonA}\cup\low{\cpi^{(k-1)}}{\delta',\HonA}}\setminus \descP{\UnBal{\cpi^{(k-1)}}{\gamma_k}}}} +  \frac{2}{\gamma_k^c} \nonumber\\
&\leq  \gamma_k\cdot \ppr{\LDist{\cpi^{(k-1)}}}{\descP{\cS\cup\low{\cpi}{\delta',\HonA}}} +  \frac{2}{\gamma_k^c}. \nonumber
\end{align}
\cref{eq:propUnbalSmall1} together with the induction hypothesis now yield that
\begin{align*}
\ppr{\LDist{\rcAP{\cpi^{(k-1)}}{1},\HonB}}{\descP{\cS\cup\low{\cpi}{\delta',\HonA}}} &\leq \gamma_k \paren{\ppr{\LDist{\HonA,\HonB}}{\descP{\cS\cup\low{\cpi}{\delta',\HonA}}} \cdot \prod_{i=1}^{k-1}\gamma_i + 2\cdot\sum_{i=1}^{k-1} \frac{\cdot\prod_{j=i+1}^{k-1}\gamma_j}{\gamma_i^c}} \\
&\quad +  \frac{2}{\gamma_k^c} \\
&= \ppr{\LDist{\HonA,\HonB}}{\descP{\cS\cup\low{\cpi}{\delta',\HonA}}} \cdot \prod_{i=1}^k\gamma_i + 2\cdot\sum_{i=1}^k \frac{\cdot\prod_{j=i+1}^{k}\gamma_j}{\gamma_i^c}.
\end{align*}
Noting that $\paren{\rcAP{\cpi^{(k-1)}}{1},\HonB} = \paren{\rcAP{\cpi}{k},\HonB}$ concludes the proof.
\end{proof}

\subsubsection{Proving \cref{lemma:goodWOLowValueSimple,claim:SmallStaySmallApproxSimple}}\label{sec:ProofsOfgoodWOLowValue}
We are finally ready to prove \cref{lemma:goodWOLowValueSimple,claim:SmallStaySmallApproxSimple}. These proofs rely on the next lemma, a slight generalization to \cref{lemma:goodWOLowValueSimple}.

\begin{lemma}\label{lemma:goodWOLowValue}
	For any $\delta\in(0,1/4]$, exists a constant $c=c(\delta)$ such that the following holds. Let $\cpi=(\HonA,\HonB)$ be a $\rnd$-round protocol, and assume  $\ppr{\LDist{\cpi}}{\descP{\low{\cpi}{1.5\delta',\HonA}}}\leq \alpha$ for some  $\delta\le\delta'\le\frac14$. Then,  for every  $\xi\in(0,1)$, $k\in\N$ and $\vect{\gamma} = (\gamma_1,\ldots,\gamma_k)\in (1,\infty)^k$, it holds that
	\begin{align}
	\SDP{\LDist{\rcAP{\cpi}{k},\HonB}}{\LDist{\realA{k,\xi,\delta'}{\cpi},\HonB}} &\leq k\cdot\frac{30^k\cdot m^k \cdot \prod_{i=1}^k\gamma_i}{\delta'^{2k}}\cdot\paren{\alpha + \xi} \label{eq:goodWOLowValue1}\\
	&\quad + \sum_{i=1}^{k}2^{k-i+2}\cdot\frac{30^{k-i}\cdot m^{k-i}\cdot \prod_{j=i+1}^{k}\gamma_j}{\delta'^{2(k-i)}\cdot \gamma_{i}^c}. \label{eq:goodWOLowValue2}
	\end{align}
\end{lemma}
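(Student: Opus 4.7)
The plan is to prove the lemma by induction on $k$, using the triangle inequality to decompose the statistical distance at each recursion level and invoking the three main tools developed earlier in the section: the ``ideal-to-real'' reduction (Corollary~\ref{cor:approxBC}), the robustness of the biased-continuation attacker (Lemma~\ref{lem:RCofStatCloseProt}), and the bound on hitting low-density sets under recursive ideal attacks (Lemma~\ref{prop:UnbalBounds}). The base case $k=1$ follows directly from Corollary~\ref{cor:approxBC} applied with the given $\delta'$ and $\gamma_1$, since the hypothesis $\ppr{\LDist{\cpi}}{\descP{\low{\cpi}{1.5\delta',\HonA}}}\le\alpha$ implies the weaker bound with threshold $\delta'$, and the resulting $2m\gamma_1(\xi+\alpha)+2/\gamma_1^c$ is absorbed into the stated expression.

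For the induction step, let $\cpi^{\mathsf{id}}_{k-1}=(\rcAP{\cpi}{k-1},\HonB)$ and $\cpi^{\mathsf{re}}_{k-1}=(\realA{k-1,\xi,\delta'}{\cpi},\HonB)$, and insert the intermediate protocol obtained by applying the \emph{ideal} biased-continuation attacker to $\cpi^{\mathsf{re}}_{k-1}$. The triangle inequality then splits the target distance into two pieces: $T_1$, comparing ideal-BC on $\cpi^{\mathsf{id}}_{k-1}$ with ideal-BC on $\cpi^{\mathsf{re}}_{k-1}$ (handled by the robustness lemma), and $T_2$, comparing ideal-BC on $\cpi^{\mathsf{re}}_{k-1}$ with approximated-BC on $\cpi^{\mathsf{re}}_{k-1}$ (handled by Corollary~\ref{cor:approxBC}). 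Both tools require a bound on the probability that the underlying honest execution of each intermediate protocol visits $\HonA$-controlled $\delta'$-low nodes.

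To get these bounds, I would first observe that values can only grow under the ideal biased-continuation attack, so $\low{\cpi^{\mathsf{id}}_{k-1}}{\delta',\HonA}\subseteq\low{\cpi}{\delta',\HonA}$, and then apply Lemma~\ref{prop:UnbalBounds} with the set $\cS=\low{\cpi}{\delta',\HonA}$ to bound the mass of this set in $\cpi^{\mathsf{id}}_{k-1}$ by roughly $\alpha\prod_{i<k}\gamma_i+2\sum_{i<k}\prod_{j>i,j<k}\gamma_j/\gamma_i^c$. Transferring this bound to $\cpi^{\mathsf{re}}_{k-1}$ is where the $1.5\delta'$ slack in the hypothesis becomes essential: using the induction hypothesis to bound $\SDP{\LDist{\cpi^{\mathsf{id}}_{k-1}}}{\LDist{\cpi^{\mathsf{re}}_{k-1}}}$ and then applying Proposition~\ref{prop:CloseProDiffValues} with, say, $\beta=1.5\delta'$ and $\alpha=\delta'$, shows that the mass of $\delta'$-low nodes in $\cpi^{\mathsf{re}}_{k-1}$ is at most the corresponding mass of $1.5\delta'$-low nodes in $\cpi^{\mathsf{id}}_{k-1}$ plus a multiple of the induction-hypothesis distance (the gap $0.5\delta'$ in the denominator accounts for the $\delta'^{-2}$ factor per level).

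The main obstacle is the bookkeeping: the constants $30$ and $m$, the powers of $1/\delta'$, and the compounded products of the $\gamma_i$ all arise from chaining the three tools, each of which contributes its own $m$, $\gamma$, and $1/\delta'$ factors. In particular, instantiating Lemma~\ref{lem:RCofStatCloseProt} with the previously derived bound on $\HonA$-controlled low-value mass produces a factor $3m\gamma_k/\delta'$ multiplying that mass, which then multiplies the inductive error; Corollary~\ref{cor:approxBC} contributes another $2m\gamma_k(\xi+\cdot)+2/\gamma_k^c$; and the translation from ideal to real via Proposition~\ref{prop:CloseProDiffValues} contributes a further $O(1/\delta')$ factor. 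Summing $T_1$ and $T_2$ and absorbing the numerical constants into $30$ yields exactly the factor $30^k m^k\prod_{i=1}^k\gamma_i/\delta'^{2k}$ in front of $(\alpha+\xi)$ in \eqref{eq:goodWOLowValue1}, and the shifted tail $\sum_i 2^{k-i+2}\cdot 30^{k-i}m^{k-i}\prod_{j>i}\gamma_j/(\delta'^{2(k-i)}\gamma_i^c)$ in \eqref{eq:goodWOLowValue2}, completing the induction.
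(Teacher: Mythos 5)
Your proposal is correct and follows essentially the same route as the paper's own proof. The paper factors the induction step into an intermediate lemma (\cref{lemma:IndStep}), which packages together exactly your $T_1$ (robustness via \cref{lem:RCofStatCloseProt}) and $T_2$ (ideal-to-real via \cref{cor:approxBC}) with the same intermediate hybrid --- ideal biased-continuation applied to $\cpi^{\mathsf{re}}_{k-1}$ --- and internally performs the transfer of the low-value-mass bound from the ideal protocol to the real one via \cref{prop:CloseProDiffValues}, using the $\delta'$-vs-$1.5\delta'$ gap precisely as you describe; and your use of \cref{prop:UnbalBounds} plus ``values only grow under ideal BC'' to bound the low-value mass of $\cpi^{\mathsf{id}}_{k-1}$ is exactly what the top level of the paper's induction does before invoking that intermediate lemma. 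Two minor cosmetic deviations worth noting: the paper's base case is $k=0$ (trivially zero/zero), which is cleaner than starting at $k=1$; and you should be slightly more careful to track the $1.5\delta'$-low mass (rather than the $\delta'$-low mass) in $\cpi^{\mathsf{id}}_{k-1}$, since that is the quantity the $\cF_1/\cF_2$ split inside the paper's intermediate lemma needs --- you implicitly switch to it in your \cref{prop:CloseProDiffValues} step, so the argument goes through, but the set you feed into \cref{prop:UnbalBounds} should consistently be $\low{\cpi}{1.5\delta',\HonA}$.
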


Before proving this lemma, we use is to derive  \cref{lemma:goodWOLowValueSimple,claim:SmallStaySmallApproxSimple}.

\paragraph{Proving \cref{lemma:goodWOLowValueSimple}}

\begin{proof}[Proof of \cref{lemma:goodWOLowValueSimple}]
	Fix $\delta\in(0,1/4]$, $k\in\N$ and $\delta'\in [\delta,1/4]$ for which $\ppr{\LDist{\cpi}}{\descP{\low{\cpi}{1.5\delta',\HonA}}}\leq \alpha$. Furthermore, fix $\xi\in(0,1)$ and $\mu\in(0,1)$ and let $c=c(\delta)$ be the constant guaranteed by  \cref{lemma:goodWOLowValue}. We begin by defining a vector $\vect{\gamma} = (\gamma_1,\ldots,\gamma_k)\in (1,\infty)^k$ \wrt   the sum in \cref{eq:goodWOLowValue2} is less than $\mu$. For $i\in [k]$, let
	\begin{align}
	t_i \eqdef 2^{k-i+2}\cdot\frac{30^{k-i}\cdot m^{k-i}\cdot \prod_{j=i+1}^{k}\gamma_j}{\delta'^{2(k-i)}\cdot \gamma_{i}^c}.
	\end{align}
	The sum in \cref{eq:goodWOLowValue2} can be now written as $\sum_{i=1}^k t_i$. We now define $\vect{\gamma}$ so that $t_i\leq \mu/2^i$ for every $i$, implying that $\sum_{i=1}^k t_i \leq \mu$. Let $\gamma_k \eqdef \ceil{(4\cdot 2^k/\mu)^{1/c}}$. Note that
	\begin{align}
	t_k = \frac{4}{\gamma_k^c} \leq \frac{\mu}{2^k}.
	\end{align}

The value of $\gamma_{k-1},\ldots,\gamma_1$ is set inductively. For $i\in[k-1]$,  let

\begin{align*}
\gamma_i \eqdef \ceil{\paren{2^{k-i+2}\cdot\frac{30^{k-i}\cdot m^{k-i}\cdot \prod_{j=i+1}^{k}\gamma_j}{\delta'^{2(k-i)}} \cdot \frac{2^i}{\mu}}^{1/c}}.
\end{align*}

By construction, it holds that  $\prod_{j=i+1}^{k}\gamma_j\in\poly(m,1/\delta',1/\mu)$, $\vect{\gamma} = (\gamma_1,\ldots,\gamma_k)\in (1,\infty)^k$ and that $\sum_{i=1}^k t_i \leq \mu$. The proof is  thus concluded by applying \cref{lemma:goodWOLowValue}.
\end{proof}

\paragraph{Proving \cref{claim:SmallStaySmallApproxSimple}}

\begin{proof}[Proof of \cref{claim:SmallStaySmallApproxSimple}]
	Fix $\delta\in(0,1/4]$, $k\in\N$ and $\delta'\in [\delta,1/4]$ for which $\ppr{\LDist{\cpi}}{\descP{\low{\cpi}{1.5\delta',\HonA}}}\leq \alpha$.
	Let $c=c(\delta)$ be the constant guaranteed by  \cref{lemma:goodWOLowValue}.
	Set $\vect{\gamma} = (\gamma_1,\ldots,\gamma_k)\in (1,\infty)^k$ in the same way it was set in the proof of \cref{lemma:goodWOLowValueSimple} above. By assumption, it holds that $\ppr{\LDist{\cpi}}{\descP{\cF}}\leq \beta + \eps$.
	Applying \cref{prop:UnbalBounds} yields that
	\begin{align*}
	\ppr{\LDist{\rcAP{\cpi}{k},\HonB}}{\descP{\cF}} &\leq \paren{\alpha + \beta+ \eps}\cdot \prod_{i=1}^k\gamma_i  + 2\cdot\sum_{i=1}^k \frac{\cdot\prod_{j=i+1}^{k}\gamma_j}{\gamma_i^c},
	\end{align*}
	and \cref{lemma:goodWOLowValue} now yields that
	\begin{align}
	\ppr{\LDist{\pruAttack{k,\delta',\xi}{\cpi},\HonB}}{\descP{\cF}} &\leq \paren{\alpha + \beta+ \eps}\cdot \prod_{i=1}^k\gamma_i  + 2\cdot\sum_{i=1}^k \frac{\cdot\prod_{j=i+1}^{k}\gamma_j}{\gamma_i^c} \label{eq:SmallStaySmallApproxSimple1}\\
	&\quad + k\cdot\frac{30^k\cdot m^k \cdot \prod_{i=1}^k\gamma_i}{\delta'^{2k}}\cdot\paren{\alpha + \xi}\label{eq:SmallStaySmallApproxSimple2}\\
	&\quad + \sum_{i=1}^{k}2^{k-i+2}\cdot\frac{30^{k-i}\cdot m^{k-i}\cdot \prod_{j=i+1}^{k}\gamma_j}{\delta'^{2(k-i)}\cdot \gamma_{i}^c}.\label{eq:SmallStaySmallApproxSimple3}
	\end{align}
	By the proof of \cref{lemma:goodWOLowValueSimple} above, the terms in \cref{eq:SmallStaySmallApproxSimple2,eq:SmallStaySmallApproxSimple3} are at most $\phiPruE{k,\delta}(\alpha,\xi,m,\delta',\mu)$. Moreover, the proof of \cref{lemma:goodWOLowValueSimple} also yields that the term in \cref{eq:SmallStaySmallApproxSimple3} is a most $\mu$ and that $\prod_{i=1}^{k}\gamma_i\in\poly(m,1/\delta',1/\mu)$. The proof is concluded by noting that the second term in the right-hand side of \cref{eq:SmallStaySmallApproxSimple1} is bounded from above by that in \cref{eq:SmallStaySmallApproxSimple3} and thus is also at most $\mu$.
\end{proof}


\paragraph{Proving \cref{lemma:goodWOLowValue}.}
\cref{lemma:goodWOLowValue} is proven by induction on $k$. The next lemma, which combines the results from the previous sections, will be useful to argue the induction step.

\begin{lemma}\label{lemma:IndStep}
	For every $\delta\in(0,1/4]$, exists a constant $c=c(\delta)$ such that the following holds. Let $\cpi=(\HonA,\HonB)$ and $\cpi'=(\HonC,\HonD)$ be two $\rnd$-round protocols with the same control scheme, and assume
 \begin{enumerate}
 \item\label{item:SingleAttack1} $\Color_{\cpi}\equiv\Color_{\cpi'}$,
 \item\label{item:SingleAttack2} $\SDP{\LDist{\cpi}}{\LDist{\cpi'}}\leq\beta$, and
 \item\label{item:SingleAttack3} $\ppr{\LDist{\cpi'}}{\descP{\low{\cpi'}{1.5\delta',\HonC}}}\leq \alpha$ for some $\delta \leq \delta' \leq \frac14$.
\end{enumerate}
Then, for every $\xi\in(0,1)$ and $\gamma > 1$, it holds that
\begin{align*}
\SDP{\LDist{\rcAP{\cpi}{1,\xi,\delta'},\HonB}}{\LDist{\pruAttGen{\HonC}{1}{\cpi'},\HonD}} \leq \frac{30\cdot m \cdot \gamma}{\delta'^2}\cdot\paren{\alpha+\xi+\beta}+\frac{4}{\gamma^c}.
\end{align*}
\end{lemma}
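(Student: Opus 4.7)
I plan to route through the intermediate distribution $\LDist{\rcAP{\cpi}{1},\HonB}$ via the triangle inequality, splitting the target into an \emph{ideal-to-real} piece on $\cpi$ (bounded by \cref{cor:approxBC}) and a \emph{robustness} piece between the ideal attacks on $\cpi$ and $\cpi'$ (bounded by \cref{lem:RCofStatCloseProt}). Both bounds require control of the $\LDist{\cpi}$-probability of hitting $\HonA$-controlled $\delta'$-low nodes of $\cpi$, so the main preparatory work is to deduce such a bound from the hypothesis, which concerns $\HonC$-controlled $1.5\delta'$-low nodes of $\cpi'$.

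To carry out this translation, I would decompose
\begin{equation*}
\low{\cpi}{\delta',\HonA} = \bigl(\low{\cpi}{\delta',\HonA}\cap \low{\cpi'}{1.5\delta',\HonC}\bigr)\cup \bigl(\low{\cpi}{\delta',\HonA}\setminus \low{\cpi'}{1.5\delta',\HonC}\bigr).
\end{equation*}
The first piece is a subset of $\low{\cpi'}{1.5\delta',\HonC}$, so its $\LDist{\cpi}$-probability is at most $\alpha+\beta$ by the hypothesis together with $\SDP{\LDist{\cpi}}{\LDist{\cpi'}}\leq\beta$. For the second piece, since $\cpi$ and $\cpi'$ have the same control scheme, every node $u$ in it satisfies $\Val(\cpi_u)\leq\delta'$ while $\Val(\cpi'_u)>1.5\delta'$. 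Taking $\cF$ to be the frontier of this set and applying \cref{prop:CloseProDiffValues} with $\alpha_V=\delta'$ and $\beta_V=1.5\delta'$ yields a $\LDist{\cpi}$-probability bound of $\beta\cdot(1+1.5\delta')/(0.5\delta')\leq 3\beta/\delta'$ (using $\delta'\leq 1/4$). Together I obtain $\ppr{\LDist{\cpi}}{\descP{\low{\cpi}{\delta',\HonA}}}\leq \alpha+4\beta/\delta'$, and an analogous decomposition---using $\low{\cpi'}{\delta',\HonC}\subseteq\low{\cpi'}{1.5\delta',\HonC}$ for the intersection piece---gives $\ppr{\LDist{\cpi}}{\descP{\low{\cpi}{\delta',\HonA}\cup\low{\cpi'}{\delta',\HonC}}}\leq 2\alpha+5\beta/\delta'$.

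Plugging the first bound into \cref{cor:approxBC} controls the ideal-to-real piece by $2m\gamma(\xi+\alpha+4\beta/\delta')+2/\gamma^c$, and plugging the second bound into \cref{lem:RCofStatCloseProt} (with its SD parameter set to $\beta$) controls the robustness piece by $(3m\gamma/\delta')(\beta+2\alpha+5\beta/\delta')+2/\gamma^c$. Summing the two, absorbing $1/\delta'$ into $1/\delta'^2$ via $\delta'\leq 1$, and bookkeeping the numerical constants yields the stated bound $(30 m\gamma/\delta'^2)(\alpha+\xi+\beta)+4/\gamma^c$.

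The main obstacle is the value-gap step inside the decomposition: bridging the $1.5\delta'$ threshold available in $\cpi'$ to the $\delta'$ threshold needed in $\cpi$. The $0.5\delta'$ slack between these thresholds is precisely the denominator fed into \cref{prop:CloseProDiffValues}, and it is this slack that upgrades a $1/\delta'$ factor to $1/\delta'^2$ in the final bound---accounting for the quadratic denominator in the lemma's statement.
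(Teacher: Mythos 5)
Your proposal is correct and matches the paper's proof in all essentials: triangle inequality through $\LDist{\rcAP{\cpi}{1},\HonB}$, then \cref{cor:approxBC} for the ideal-to-real piece and \cref{lem:RCofStatCloseProt} for the cross-protocol piece, with the key preparatory step being an application of \cref{prop:CloseProDiffValues} that exploits the $0.5\delta'$ gap between the $1.5\delta'$ threshold in the hypothesis and the $\delta'$ threshold those two lemmas need. The only cosmetic difference is in the bookkeeping of the low-value set: the paper takes $\cF=\frnt{\low{\cpi}{\delta',\HonA}\cup\low{\cpi'}{\delta',\HonC}}$ directly and splits $\cF$ by whether $\Val(\cpi'_u)\geq 1.5\delta'$, whereas you split the set itself into its intersection with and complement against $\low{\cpi'}{1.5\delta',\HonC}$ (which gives an ever-so-slightly looser $2\alpha$ in place of the paper's $\alpha$ for the union bound, but this is absorbed by the same generous final constant).
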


\begin{proof}
	Fix $\delta\in(0,1/4]$ and let $c=c(\delta)$ be according to  \cref{lem:ProbVisitUnBal}. Fix $\delta'\in [\delta,1/4]$ for which $\ppr{\LDist{\cpi}}{\descP{\low{\cpi}{1.5\delta',\HonA}}}\leq \alpha$. Furthermore, fix $\xi\in(0,1)$ and $\gamma>1$.

The proof proceeds in two steps. First, apply \cref{lem:RCofStatCloseProt} (robustness lemma) to show that after the (ideal) biased-continuation attacker takes the role of $\HonA$ and $\HonC$ in $\cpi$ and $\cpi'$ respectively, the leaf distributions of these protocols remain close. Second, apply \cref{cor:approxBC} (ideal-to-approximated biased-continuation attacker) to show that replacing the attacker of the left-hand party in $\cpi$ with its approximated variant, the leaf distributions of these protocols remain close.

In order to apply \cref{lem:RCofStatCloseProt}, we first need to bound $\ppr{\LDist{\cpi}}{\descP{\low{\cpi}{\delta',\HonA}\cup\low{\cpi'}{\delta',\HonC}}}$.
Let $\cF = \frnt{\low{\cpi}{\delta',\HonA}\cup\low{\cpi'}{\delta',\HonC}}$, let $\cF_1 = \set{u\in \cF \colon \Val((\cpi')_u) \geq 1.5\delta'} $, and let $\cF_2 = \set{u\in \cF \colon \Val((\cpi')_u) < 1.5\delta'}$. Since $\cF\subseteq \cF_1\bigcupdot\cF_2$, it suffices to bound $\ppr{\LDist{\cpi}}{\descP{\cF_1}}$ and $\ppr{\LDist{\cpi}}{\descP{\cF_2}}$, which we do separately.
\begin{description}
\item[Bounding $\cF_1$:] Nodes in $\cF_1$ must have small value in $\cpi$ but large value in $\cpi'$. Since $\LDist{\cpi}$ and $\LDist{\cpi'}$ are close, the probability of reaching such nodes is small.

Formally, since every node in $\cF_1$ must belong to $\low{\cpi}{\delta',\HonA}$, it follows that $\ppr{\LDist{\cpi}}{\Leaves_1(\cpi) \condition \descP{\cF_1}} \leq \delta'$. Assumption (\ref{item:SingleAttack1}) of the lemma and the definition of $\cF_1$ yield, however, that $\ppr{\LDist{\cpi'}}{\Leaves(\cpi) \condition \descP{\cF_1}} \geq 1.5\delta'$. It follows from \cref{prop:CloseProDiffValues} that
\begin{align*}
\ppr{\LDist{\cpi}}{\descP{\cF_1}} \leq \beta \cdot \frac{1+1.5\delta'}{0.5\delta'}\leq \frac{4\beta}{\delta'}.
\end{align*}
The last inequality holds since, by assumption,  $\delta'\leq 1/4$.

\item[Bounding $\cF_2$:] The definition of $\cF_2$, the assumption that $\cpi$ and $\cpi'$ have the same control scheme, and assumption (\ref{item:SingleAttack3}), yield that $\ppr{\LDist{\cpi'}}{\descP{\cF_2}}\leq \alpha$. Hence, the assumption that $\SDP{\LDist{\cpi}}{\LDist{\cpi'}}\leq \beta$ (assumption (\ref{item:SingleAttack2}) of the lemma) yields that $\ppr{\LDist{\cpi}}{\descP{\cF_2}}\leq \alpha + \beta$.
\end{description}

Combining the two bounds, it follows that $\ppr{\LDist{\cpi}}{\descP{\low{\cpi}{\delta',\HonA}\cup\low{\cpi'}{\delta',\HonC}}} \leq 5\beta / \delta' + \alpha$. We can apply \cref{lem:RCofStatCloseProt} and derive
\begin{align}\label{eq:applyRobust}
\SDP{\LDist{\pruAttGen{\HonA}{1}{\cpi},\HonB}}{\LDist{\pruAttGen{\HonC}{1}{\cpi'},\HonD}} &\leq \frac{3\cdot m\cdot\gamma}{\delta'}\cdot\paren{\beta+\frac{5\beta}{\delta'} + \alpha}+\frac{2}{\gamma^c}.
\end{align}
The next step is to apply \cref{cor:approxBC}. To do so we need to bound $\ppr{\LDist{\cpi}}{\descP{\low{\cpi}{\delta',\HonA}}}$, but since it is clear that $\ppr{\LDist{\cpi}}{\descP{\low{\cpi}{\delta',\HonA}}} \leq \ppr{\LDist{\cpi}}{\descP{\low{\cpi}{\delta',\HonA}\cup\low{\cpi'}{\delta',\HonC}}}$, it follows that $\ppr{\LDist{\cpi}}{\descP{\low{\cpi}{\delta',\HonA}}} \leq  5\beta / \delta' + \alpha$. Applying \cref{cor:approxBC}, we derive
 \begin{align}\label{eq:applyIdealToReal}
\SDP{\LDist{\pruAttGen{\HonA}{1}{\cpi},\HonB}}{\LDist{\rcAP{\cpi}{1,\xi,\delta'},\HonB}} &\leq  2\cdot m\cdot\gamma \cdot\paren{\xi+\frac{5\beta}{\delta'} + \alpha}+\frac{2}{\gamma^c}.
 \end{align}
Finally, applying the triangle inequality of statistical distance to \cref{eq:applyRobust,eq:applyIdealToReal} completes the proof of \cref{lemma:IndStep}.
\end{proof}

The proof of \cref{lemma:goodWOLowValue} now follows straightforward calculations.
\begin{proof}[Proof of \cref{lemma:goodWOLowValue}]
	Fix $\delta\in(0,1/4]$ and let $c=c(\delta)$ be according to \cref{lemma:IndStep}. Fix $\delta'\in [\delta,1/4]$ for which $\ppr{\LDist{\cpi}}{\descP{\low{\cpi}{1.5\delta',\HonA}}}\leq \alpha$. Furthermore, fix $\xi\in(0,1)$.

	The proof is by induction on $k$. For $k=0$, the proof follows immediately from definition.

Fix $k\in\N$ and let $(\gamma_1,\ldots,\gamma_k) \in (1,\infty)^k$. Assume the lemma holds for $k-1$; we prove it for $k$ by applying \cref{lemma:IndStep}. For $i\in(k)$, let $\cpi_1^{(i)} = \paren{\rcAP{\cpi}{i},\HonB}$ and let $\cpi_2^{(i)} =\paren{\realA{i,\xi,\delta'}{\cpi},\HonB}$. Using this notation, we can write $\cpi_1^{(k)} = \paren{\rcAP{\cpi_1^{(k-1)}}{1},\HonB}$ and $\cpi_2^{(k)} = \paren{\rcAP{\cpi_2^{(k-1)}}{1,\xi,\delta'},\HonB}$. Hence,
\begin{align}
\SDP{\LDist{\rcAP{\cpi}{k},\HonB}}{\LDist{\realA{k,\xi,\delta'}{\cpi},\HonB}} = \SDP{\LDist{\rcAP{\cpi_1^{(k-1)}}{1},\HonB}}{\LDist{\rcAP{\cpi_2^{(k-1)}}{1,\xi,\delta'},\HonB}}.
\end{align}
We would like to apply \cref{lemma:IndStep} \wrt $\cpi_1^{(k-1)}$ and $\cpi_2^{(k-1)}$. Indeed, these protocols share the same control scheme and common output function $\Color$, and the induction hypothesis gives us a bound for $\SDP{\LDist{\cpi_1^{(k-1)}}}{\LDist{\cpi_2^{(k-1)}}}$. It remains to bound $\ppr{\LDist{\cpi_1^{(k-1)}}}{\descP{\low{\cpi_1^{(k-1)}}{1.5\delta',\HonA}}}$.

As we argued before,\footnote{We used the same argument in the proof of \cref{prop:UnbalBounds}.} it is easy to verify that when the ideal biased continuation attacker takes the role of $\HonA$ in the protocol and tries to bias the outcome towards $1$, the value of every node cannot decrease. Namely, it holds that $\low{\cpi_1^{(i)}}{1.5\delta',\HonA} \subseteq \low{\cpi_1^{(i-1)}}{1.5\delta',\HonA}$ for every $i\in[k-1]$, and thus $\low{\cpi_1^{(k-1)}}{1.5\delta',\HonA} \subseteq \low{\cpi_1}{1.5\delta',\HonA}=\low{\cpi}{1.5\delta',\HonA}$.
It holds that
\begin{align}
\ppr{\LDist{\cpi_1^{(k-1)}}}{\descP{\low{\cpi_1^{(k-1)}}{1.5\delta',\HonA}}} &\leq \ppr{\LDist{\cpi_1^{(k-1)}}}{\descP{\low{\cpi}{1.5\delta',\HonA}}} \\
& \leq \ppr{\LDist{\cpi}}{\descP{\low{\cpi}{1.5\delta',\HonA}}} \cdot \prod_{i=1}^{k-1}\gamma_i + 2\cdot\sum_{i=1}^{k-1} \frac{\cdot\prod_{j=i+1}^{k-1}\gamma_j}{\gamma_i^c}\nonumber \\
&\leq \alpha\cdot \prod_{i=1}^{k-1}\gamma_i + 2\cdot\sum_{i=1}^{k-1} \frac{\cdot\prod_{j=i+1}^{k-1}\gamma_j}{\gamma_i^c}.\nonumber
\end{align}
The second inequality follows from applying \cref{prop:UnbalBounds} \wrt $1.5\delta'$ and the set $\low{\cpi}{1.5\delta',\HonA}$. By the induction hypothesis and \cref{lemma:IndStep} applied to $\cpi_1^{(k-1)}$ and $\cpi_2^{(k-1)}$ \wrt $\gamma_k$, it holds that
\begin{align*}
\lefteqn{\SDP{\LDist{\rcAP{\cpi}{k},\HonB}}{\LDist{\realA{k,\xi,\delta'}{\cpi},\HonB}}} \\
&\leq \frac{30\cdot m \cdot \gamma_k}{\delta'^2}\cdot \left( (k-1)\cdot\frac{30^{k-1}\cdot m^{k-1} \cdot \prod_{i=1}^{k-1}\gamma_i}{\delta'^{2(k-1)}}\cdot\paren{\xi + \alpha} \right. \\
& \left. \hspace{2.8cm} + \sum_{i=1}^{k-1}2^{k-i+1}\cdot\frac{30^{k-1-i}\cdot m^{k-1-i}\cdot \prod_{j=i+1}^{k-1}\gamma_j}{\delta'^{2(k-1-i)}\cdot \gamma_{i}^c} \right. \\
& \left. \hspace{2.8cm} + \alpha\cdot \prod_{i=1}^{k-1}\gamma_i + 2\cdot\sum_{i=1}^{k-1} \frac{\cdot\prod_{j=i+1}^{k-1}\gamma_j}{\gamma_i^c} + \xi \right) + \frac{4}{\gamma_k^c} \\
&= \frac{30\cdot m \cdot \gamma_k}{\delta'^2}\cdot \paren{ (k-1)\cdot\frac{30^{k-1}\cdot m^{k-1} \cdot \prod_{i=1}^{k-1}\gamma_i}{\delta'^{2(k-1)}}\cdot\paren{\xi + \alpha}
 +  \alpha\cdot \prod_{i=1}^{k-1}\gamma_i + \xi } \\
&\quad + \frac{30\cdot m \cdot \gamma_k}{\delta'^2}\cdot \left(\sum_{i=1}^{k-1}2^{k-i+1}\cdot\frac{30^{k-1-i}\cdot m^{k-1-i}\cdot \prod_{j=i+1}^{k-1}\gamma_j}{\delta'^{2(k-1-i)}\cdot \gamma_{i}^c} + 2\cdot\sum_{i=1}^{k-1} \frac{\cdot\prod_{j=i+1}^{k-1}\gamma_j}{\gamma_i^c} \right) + \frac{4}{\gamma_k^c}.
\end{align*}
The induction proof now follows by grouping together the summands in the parentheses. This concludes the proof of \cref{lemma:AttackPruned}.
\end{proof}

\subsection{Attacking Pruned Protocols}\label{sec:AttackPrunedProtocol}
In \cref{sec:EfficientRCAttacker} we showed that if in a protocol $\cpi=(\HonA,\HonB)$ the probability to visit $\HonA$-controlled low-value nodes is small, then the recursive approximated biased-continuation attacker (taking the role of $\HonA$) biases the outcome of the protocol towards one almost as well as its ideal variant does (a similar fact holds for the attacker taking the role of $\HonB$, trying to bias the outcome of the protocol towards zero, and the probability to visit $\HonB$-controlled high-value nodes is small). For some protocols, however, this probability might be arbitrarily large, so the analysis in \cref{sec:EfficientRCAttacker} does not suffice to argue that the recursive approximated biased-continuation attacker successfully biases \emph{any} protocol. In this section we define the pruned variant of a protocol so that the probability of hitting $\HonA$-controlled low-value nodes, as well as hitting $\HonB$-controlled high-value nodes is indeed small. Hence, \cref{lemma:goodWOLowValueSimple} yields that the recursive approximated biased-continuation attacker successfully biases the pruned variant of any protocol. In \cref{sec:PruningInTheHead}, we  exploit the above for attacking \emph{any} protocol by letting the attacker ``pretend'' it is  attacking a pruned variant, rather than the original protocol.

We  start with defining an ideal pruned variant of a protocol, in which there exist no $\HonA$-controlled low-value nodes and $\HonB$-controlled high-value nodes. This variant, however, might not be efficiently computed, even if OWFs do not exist. To cope with this efficiency issue, we consider an approximated variant of the pruned protocol, in which such nodes might exist, but the probability of hitting them is small. Finally, we apply the results from \cref{sec:EfficientRCAttacker} to argue that the recursive approximated biased-continuation attacker biases the outcome of the approximately pruned variant of any protocol.

\paragraph{Pruned protocols.}
In the pruned variant of protocol $\cpi=(\HonA,\HonB)$, the edge distribution remains intact, while the controlling scheme is changed, giving the control to $\HonB$ on  low-value nodes, and to $\HonA$ on high-value nodes.

\begin{definition}[the pruned variant of a protocol]\label{def:prunnedProt}
Let $\cpi=\HonAHonB$ be an $\rnd$-round protocol and let $\delta\in(0,\frac12)$. In the {\sf $\delta$-pruned variant of $\cpi$}, denoted by $\pru{\cpi}{\delta}=\paren{\pruparty{\HonA}{\delta}{\cpi},\pruparty{\HonB}{\delta}{\cpi}}$, the parties follow the protocol $\cpi$, where $\pruparty{\HonA}{\delta}{\cpi}$ and $\pruparty{\HonB}{\delta}{\cpi}$ take the roles of $\HonA$ and $\HonB$ respectively, with the following exception occurring the \emph{first time} the protocol's transcript $u$ is in $\low{\cpi}{\delta} \cup \high{\cpi}{\delta}$:

If $u\in \high{\cpi}{\delta}$, set $\HonC = \pruparty{\HonA}{\delta}{\cpi}$; otherwise set $\HonC = \pruparty{\HonB}{\delta}{\cpi}$. The party $\HonC$ takes control of the node $u$, samples a leaf $\ell \la \LDist{\cpi_u}$, and then, bit by bit, sends $\ell_{\size{u}+1,\ldots,\rnd}$ to the other party.
\end{definition}

Namely, the first time the value of the protocol is close to either $1$ or $0$, the party interested in this value (\ie $\pruparty{\HonA}{\delta}{\cpi}$ for $1$, and $\pruparty{\HonB}{\delta}{\cpi}$ for $0$) takes control and decides the outcome (without changing the value of the protocol). Hence, the protocol is effectively pruned at these nodes (each such node is effectively a parent of two leaves).

For every protocol $\cpi$, its pruned variant $\pru{\cpi}{\delta}$ is a well-defined protocol, so the analysis of \cref{sec:IdealAttacker} can be applied.\footnote{Note that in the pruned protocol, the parties' turns might not alternate (\ie the same party might send several consecutive bits), even if they do alternate in the original protocol. Rather, the protocol's control scheme (determining which party is active at a given point) is a function of the protocol's transcript and the original protocol's control scheme. Such schemes are consistent with the ones considered in the previous sections.}  As  mentioned above, the pruned variant of a protocol might \emph{not} be efficiently computed, even if OWFs do not exist, so  we move to consider  an  approximated variant of the pruned protocol.

\paragraph{Approximately pruned protocols.}
To define the approximated pruned protocols, we begin by defining two algorithms, both of which can be efficiently implemented assuming OWFs do not exist for an appropriate choice of parameters. The first algorithm samples an honest (\ie unbiased) continuation of the protocol.

\newcommand{\HC}{\MathAlg{HC}}
\begin{definition}[approximated honest continuation]\label{def:AppxHonC}
Let $\cpi$ be an $m$-round  protocol, and let  $\HonCont_\cpi$ be the algorithm that on  node $u \in \Vertices(\cpi)$ returns $\ell \la\LDist{\cpi_u}$. Algorithm $\HC$ is a {\sf $\xi$-Honest-Continuator} for $\cpi$, if $\ppr{\ell \la \LDist{\cpi}}{\exists i\in (\rnd-1) \colon \SDP{\HC(\ell_{1,\ldots,i})}{\HonCont_\cpi(\ell_{1,\ldots,i})} > \xi} \leq \xi$. Let $\HonCont_\cpi^{\xi}$ be an arbitrary (but fixed) $\xi$-honest-continuator for $\cpi$.
\end{definition}
The second algorithm estimates the value of a given transcript (\ie a node) of the protocol.

\begin{definition}[estimator]\label{def:AppxEst}
Let $\cpi$ be an $m$-round  protocol. A deterministic algorithm $\Est$ is a {\sf $\xi$-Estimator} for $\cpi$, if  $\ppr{\ell \la \LDist{\cpi}}{\exists i\in (\rnd-1) \colon\size{\Est(\ell_{1,\ldots,i}) - \Val(\cpi_{\ell_{1,\ldots,i}})} > \xi}\leq\xi$. Let $\Est_\cpi^{\xi}$ be an arbitrary (but fixed) $\xi$-estimator for $\cpi$.
\end{definition}

Using the above estimator, we define the approximated version of the low and high value nodes.

\begin{definition}[approximated low-value and high-value nodes]\label{def:realSmallLarge}
For protocol $\cpi$, $\delta\in(0,\frac12)$ and a deterministic real-value algorithm $\Est$, let

\begin{itemize}
\item $\low{\cpi}{\delta,\Est}=\set{u\in\Vertices(\cpi)\setminus\Leaves(\cpi) \colon \Est(u)\leq\delta}$;

\item $\high{\cpi}{\delta,\Est}=\set{u\in\Vertices(\cpi)\setminus\Leaves(\cpi) \colon \Est(u)\geq 1-\delta}$.
\end{itemize}
For $\xi\in[0,1]$, let $\low{\cpi}{\delta,\xi}=\low{\cpi}{\delta,\Est_\cpi^\xi}$.
\end{definition}

We can now define the approximately pruned protocol, which is the oracle variant of the ideal pruned protocol.
\begin{definition}[the approximately pruned variant of a protocol]\label{def:ApxPrunnedProt}
Let $\cpi=\HonAHonB$ be an $\rnd$-round protocol, let $\delta\in(0,\frac12)$, let $\HC$ be an algorithm, and let $\Est$ be a deterministic real value algorithm. The {\sf $(\delta,\Est,\HC)$-approximately pruned variant of $\cpi$}, denoted $\pru{\cpi}{\delta,\Est,\HC}=\paren{\pruparty{\HonA}{\delta,\Est,\HC}{\cpi},\pruparty{\HonB}{\delta,\Est,\HC}{\cpi}}$, is defined as follows.
\begin{itemize}
\item[Control Scheme:] the parties follow the control scheme of the protocol $\cpi$, where $\pruparty{\HonA}{\delta,\Est,\HC}{\cpi}$ and $\pruparty{\HonB}{\delta,\Est,\HC}{\cpi}$ take the roles of $\HonA$ and $\HonB$ respectively, with the following exception occurring  the \emph{first time} the protocol's transcript $u$ is in $\low{\cpi}{\delta,\Est} \cup \high{\cpi}{\delta,\Est}$:
if $u\in \high{\cpi}{\delta,\Est}$ set $\HonC = \pruparty{\HonA}{\delta,\Est,\HC}{\cpi}$; otherwise set $\HonC = \pruparty{\HonB}{\delta,\Est,\HC}{\cpi}$. The party $\HonC$ takes control of all nodes in $\descP{u}$ (\ie nodes for which $u$ is an ancestor).

\item[Execution:] for a protocol's transcript $u$ and a party $\HonC$ who controls $u$, $\HonC$ sets $\ell=\HC(u)$ and sends $\ell_{\size{u}+1}$ to the other party.\footnote{This happens to every transcript, even those that are not children of $\low{\cpi}{\delta,\Est} \cup \high{\cpi}{\delta,\Est}$.}
\end{itemize}

For $\delta\in(0,\frac12)$ and $\xi,\xi'\in[0,1]$, let $\pru{\cpi}{\delta,\xi,\xi'}=\pru{\cpi}{\delta,\Est_\cpi^\xi,\HonCont_\cpi^{\xi'}}$ and $\pru{\cpi}{\delta,\xi}=\pru{\cpi}{\delta,\xi,\xi}$, and the same notation is used for the parties of the pruned protocol.
\end{definition}

Namely, in $\pru{\cpi}{\delta,\xi}$, the parties follow the control scheme of $\cpi$ until reaching a node in $\low{\cpi}{\delta,\xi} \cup \high{\cpi}{\delta,\xi}$ for the first time. Upon reaching such a node, the control moves to (and stays with) $\HonA$ if $u\in \high{\cpi}{\delta,\xi}$, or $\HonB$ if $u\in \low{\cpi}{\delta,\xi}$. The fact that the messages sent by the parties are determined by the answers of $\HonCont_\cpi^{\xi}$, instead of by their random coins, makes them \emph{stateless} throughout the execution of the protocol. This fact will be crucial when implementing our final attacker.

\paragraph{Attacking approximately pruned protocols.}
We would like to argue about the success probability of the recursive approximated biased-continuation attacker when attacking approximately pruned protocols. To do so, we must first show that the probability of reaching $\HonA$-controlled low-value nodes in such protocols is low. By definition, it is impossible to reach such nodes in the \emph{ideal} pruned protocol. Thus, if the approximately pruned variant is indeed an approximation of the pruned variant of the protocol, we expect that probability of reaching $\HonA$-controlled low-value nodes in this protocol will be low. Unfortunately, this does not necessarily hold. This is because the value of each node in both protocols might not be the same, and because the control scheme of these protocols might be different. It turns out that the bound for the above probability depends on the probability of the original protocol visiting nodes whose value is close to the pruning threshold, \ie $\delta$ and $1-\delta$.

\begin{definition}\label{def:neighbrhood}
For protocol $\cpi$, $\xi\in(0,1)$ and $\delta\in(0,\frac12)$, let
\begin{align*}
\Neigh_{\cpi}^{\delta,\xi}=\set{u\in\Vertices(\cpi)\setminus \Leaves(\cpi)\colon \Val(\cpi_u)\in(\delta-\xi,\delta+\xi]\lor \Val(\cpi_u)\in[1-\delta-\xi,1-\delta+\xi)},
\end{align*}
and let $\neigh_{\cpi}(\delta,\xi) = \ppr{\LDist{\cpi}}{\descP{\Neigh_{\cpi}^{\delta,\xi}}}$.
\end{definition}
Namely, $\Neigh_{\cpi}^{\delta,\xi}$ are those nodes that are $\xi$-close  to the ``border'' between $\low{\cpi}{\delta} \cup \high{\cpi}{\delta}$  and the rest of the nodes. The intervals in the above definition are taken to be open in one side and close on the other for technical reason, and this fact is  insignificant for the understanding of the definition.

We can now state the main result of this section --- the recursive approximated biased-continuation attacker biases this approximated pruned protocol with similar success to that of the  recursive (ideal) biased-continuation attacker. Specifically, we have the following lemma, which is an application of \cref{lemma:goodWOLowValueSimple} to the approximately pruned protocol.

\begin{lemma}\label{lemma:AttackPruned}
Let $0<\delta\le  \delta' \le\frac14$, let $\xi\in(0,1)$ and let $\cpit=\paren{\HonAt,\HonBt} = \pru{\cpi}{2\delta',\xi}$ be the $(2\delta',\xi)$-approximately pruned variant of a $\rnd$-round protocol $\cpi$. Then,
\begin{align*}
\SDP{\LDist{\rcAP{\cpit}{k},\HonBt}}{\LDist{\realA{k,\xi,\delta'}{\cpit},\HonBt}}
&\leq \phiPruE{k,\delta}\paren{\neigh_{\cpi}(2\delta',\xi)+12\cdot m\cdot \xi/\delta',\xi,m,\delta',\mu},
\end{align*}
for every $k\in\N$ and $\mu\in(0,1)$.\footnote{See \cref{lemma:goodWOLowValueSimple} for the definition of $\phiPruE{k,\delta}$.}
\end{lemma}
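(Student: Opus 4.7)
The plan is to reduce this statement to \cref{lemma:goodWOLowValueSimple} applied to the protocol $\cpit$. Since $\cpit$ is itself a valid $m$-round two-party protocol (even if not efficiently computable), that lemma immediately yields the desired bound $\phiPruE{k,\delta}(\alpha,\xi,m,\delta',\mu)$, provided we verify its hypothesis $\ppr{\LDist{\cpit}}{\descP{\low{\cpit}{1.5\delta',\HonAt}}}\le \alpha$ for $\alpha := \neigh_{\cpi}(2\delta',\xi)+12m\xi/\delta'$. So the entire work reduces to proving this probability bound on $\cpit$, after which the conclusion is a direct invocation of \cref{lemma:goodWOLowValueSimple}.

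To bound the probability, I would compare $\cpit$ to two reference objects: the original protocol $\cpi$ (whose leaf distribution is the ground truth) and the \emph{ideal} pruned variant $\pru{\cpi}{2\delta'}$ (in which the estimator and honest-continuator are exact). By \cref{def:AppxEst,def:AppxHonC} the probability, under $\LDist{\cpi}$, that either $\Est_\cpi^\xi$ or $\HonCont_\cpi^\xi$ deviates from its ideal counterpart by more than $\xi$ at some prefix is at most $2\xi$. A hybrid argument across the $m$ rounds then gives $\SDP{\LDist{\cpit}}{\LDist{\cpi}} = O(m\xi)$; in particular, every probability computed on $\cpit$ transfers to a probability on $\cpi$ up to $O(m\xi)$ slack, and $|\Val(\cpit_u)-\Val(\cpi_u)|$ is controlled by $O(m\xi)$ on a random prefix.

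The core combinatorial claim I aim to prove is: on executions of $\cpit$ in which neither oracle fails at any visited prefix, any $\HonAt$-controlled node $u$ with $\Val(\cpit_u)\le 1.5\delta'$ is a descendant of some node in $\Neigh_{\cpi}^{2\delta',\xi}$. The argument splits according to how $\HonAt$ inherited control of $u$. Either $\HonA$ controls $u$ in $\cpi$ with no previously pruned ancestor, in which case $u\notin\low{\cpi}{2\delta',\xi}$ forces $\Est_\cpi^\xi(u) > 2\delta'$, and the good-estimator event pushes $\Val(\cpi_u) > 2\delta'-\xi$; combined with $\Val(\cpit_u) \le 1.5\delta'$ and $|\Val(\cpit_u)-\Val(\cpi_u)|$ being small, this places $u$ itself in the narrow band $\Neigh_{\cpi}^{2\delta',\xi}$. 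Otherwise some ancestor $v$ of $u$ was pruned as a ``high'' node, so $\Val(\cpi_v)\ge 1-2\delta'-\xi$ by the good-estimator event, and the transition of values along the walk from $\Val(\cpi_v)$ down to $\Val(\cpi_u)\lesssim 1.5\delta'$---forbidden to cross either $\Neigh$-band around $2\delta'$ and $1-2\delta'$---is only possible through a single large edge jump, whose contribution I will bound by a short martingale-type computation combined with the honest-continuator's $\xi$-error, yielding an $O(m\xi/\delta')$ charge.

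Combining the two parts gives
\[
\ppr{\LDist{\cpit}}{\descP{\low{\cpit}{1.5\delta',\HonAt}}}
\le \ppr{\LDist{\cpit}}{\descP{\Neigh_{\cpi}^{2\delta',\xi}}} + O(m\xi/\delta'),
\]
and transferring the first term to $\LDist{\cpi}$ costs only $\SDP{\LDist{\cpit}}{\LDist{\cpi}}=O(m\xi)$, producing $\neigh_{\cpi}(2\delta',\xi) + O(m\xi/\delta')$; bookkeeping the constants so that the slack is at most $12m\xi/\delta'$ is then routine. I expect the main obstacle to be making the second case rigorous---carefully quantifying the ``high-to-low crossing without visiting $\Neigh$'' as either impossible on good executions or charged to an event of probability $O(m\xi/\delta')$---since this is where the subtle interplay between the pruning threshold $2\delta'$, the low-value threshold $1.5\delta'$, and the honest-continuator's approximation error must be controlled simultaneously.
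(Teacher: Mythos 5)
Your high-level plan is exactly the paper's: verify that $\ppr{\LDist{\cpit}}{\descP{\low{\cpit}{1.5\delta',\HonAt}}}\le \neigh_{\cpi}(2\delta',\xi)+12 m\xi/\delta'$ and then invoke \cref{lemma:goodWOLowValueSimple}. (The paper carries out that first step once and for all in \cref{lemma:LowApproxPruned}, with $\eps=0.5\delta'$, so the proof of \cref{lemma:AttackPruned} itself is a two-line corollary.) The gap is entirely inside your re-derivation of that probability bound, and it is a real one.

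Your ``core combinatorial claim'' --- that on good executions every $\HonAt$-controlled node $u$ with $\Val(\cpit_u)\le 1.5\delta'$ is a descendant of $\Neigh_{\cpi}^{2\delta',\xi}$ --- is false, and the paper never tries to prove anything that strong. The frontier $\cF=\frnt{\low{\cpit}{1.5\delta',\HonAt}}\setminus(\Neigh_{\cpi}^{2\delta',\xi}\cup\FailE_\cpi^{\xi})$ need not be empty: nothing in the definitions forbids a node $u\in\cF$ whose $\cpi$-value exceeds $2\delta'+\xi$ while its $\cpit$-value dips below $1.5\delta'$. What the paper establishes instead is that this set is hit with small probability. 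Concretely, \cref{lemma:LowApproxPruned} observes that along $\cF$ the conditional value drops from at least $2\delta'+\xi$ in $\cpi$ to at most $1.5\delta'$ in $\cpit$, while the two leaf distributions are statistically $2m\xi$-close (\cref{lemma:ApprxPruIdealPru}); \cref{prop:CloseProDiffValues} turns exactly this information into $\ppr{\LDist{\cpit}}{\descP{\cF}}\le 2m\xi\cdot\frac{1+1.5\delta'}{0.5\delta'+\xi}\le 6m\xi/\delta'$, which is the $O(m\xi/\delta')$ term you wanted. So the correct logic is ``the gap between the two \emph{conditional} values, combined with statistical closeness of the two leaf distributions, bounds the probability of the frontier,'' not ``the frontier is (essentially) empty.''

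The specific step that fails in your Case~1 is the pointwise assertion that $|\Val(\cpit_u)-\Val(\cpi_u)|$ is small for the individual node $u$. The honest-continuator guarantee is an average statement over $\LDist{\cpi}$; it does not bound $\SD(\LDist{\cpit_u},\LDist{\cpi_u})$ for a fixed, adversarially situated $u$, and therefore cannot push $\Val(\cpi_u)$ down next to $\Val(\cpit_u)\le 1.5\delta'$. Your Case~2 ``martingale-type computation'' has the same flavor of problem and additionally misidentifies the obstruction: a single protocol step can move $\Val(\cpi_\cdot)$ from above $1-2\delta'$ to below $1.5\delta'$ without ever landing inside either $\Neigh$-band, so ``forbidden to cross the band'' is not true, and the probability of such a crossing has no reason to be $O(m\xi/\delta')$ on general protocols. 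To make this line of argument rigorous you would essentially have to reinvent \cref{prop:CloseProDiffValues}; you should instead cite \cref{lemma:LowApproxPruned} directly, which is what the paper does, and whose own proof replaces both of your case analyses by the single distributional inequality above.
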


The next lemma will also be useful ahead. It shows that if a set of nodes is reached with low probability in the original protocol, then the probability to reach the same set does not increase by mush when the recursive approximated biased-continuation attacker attacks that approximately pruned variance of the protocol. This is an immediate application of  \cref{claim:SmallStaySmallApproxSimple} to the approximately pruned protocol.

\begin{lemma}\label{claim:SmallSetAprrx}
Let $0< \delta\le\delta'\le\frac14$, let $\xi\in(0,1)$ and let $\cpit=\paren{\HonAt,\HonBt} = \pru{\cpi}{2\delta',\xi}$ be the $(2\delta',\xi)$-approximately pruned variant of an $\rnd$-round protocol $\cpi$. Let $\cF$ be a frontier with $\ppr{\LDist{\cpi}}{\descP{\cF}}\leq\beta$. Then
\begin{align*}
\ppr{\LDist{\pruAttack{k,\delta',\xi}{\cpit},\HonBt}}{\descP{\cF}} &\leq \phiBalE{k,\delta}\paren{\neigh_{\cpi}(2\delta',\xi)+12\cdot m\cdot \xi/\delta',\beta,2\cdot m \cdot \xi, m,\delta',\mu} \\
&\quad +\phiPruE{k,\delta}\paren{\neigh_{\cpi}(2\delta',\xi)+12\cdot m\cdot \xi/\delta',\xi,m,\delta',\mu},
\end{align*}
for every $k\in\N$ and $\mu\in(0,1)$.\footnote{See \cref{claim:SmallStaySmallApproxSimple} for the definition of $\phiBalE{k,\delta}$.}
\end{lemma}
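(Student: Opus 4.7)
The plan is to derive \cref{claim:SmallSetAprrx} as a direct corollary of \cref{claim:SmallStaySmallApproxSimple}, instantiated with the approximately pruned protocol $\cpit$ playing the role of the attacked protocol $\cpi$ in that claim, and with the original protocol $\cpi$ playing the role of $\cpi'$. The three ingredients needed to invoke \cref{claim:SmallStaySmallApproxSimple} are: (i) a bound on $\SDP{\LDist{\cpit}}{\LDist{\cpi}}$, playing the role of $\eps$; (ii) a bound on $\ppr{\LDist{\cpit}}{\descP{\low{\cpit}{1.5\delta',\HonAt}}}$, playing the role of $\alpha$; and (iii) the bound $\ppr{\LDist{\cpi}}{\descP{\cF}}\le\beta$, which is given by hypothesis. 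Note that the control schemes of $\cpi$ and $\cpit$ agree \emph{at nodes that have not yet been pruned}, which is precisely the set over which \cref{claim:SmallStaySmallApproxSimple} takes the relevant probability.

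For (i), observe that, by \cref{def:ApxPrunnedProt}, the parties of $\cpit$ always generate their next message by invoking $\HonCont_\cpi^{\xi}$ on the current transcript, whereas in $\cpi$ the next message would be distributed according to $\HonCont_\cpi$. Since $\HonCont_\cpi^{\xi}$ is a $\xi$-honest-continuator (\cref{def:AppxHonC}), a standard $m$-step hybrid argument between the two distributions (formally an application of \cref{lemma:SDmQueriesAlg} with queries being the partial transcripts, $f := \HonCont_\cpi$, $g := \HonCont_\cpi^{\xi}$, and the trivial balance parameter $\lambda=1$, $b=0$) gives
\begin{align*}
\SDP{\LDist{\cpit}}{\LDist{\cpi}} \;\leq\; 2m\xi,
\end{align*}
which is exactly the value substituted for the third argument of $\phiBalE{k,\delta}$ in the target inequality.

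For (ii) — the main technical step — we must establish
\begin{align*}
\ppr{\LDist{\cpit}}{\descP{\low{\cpit}{1.5\delta',\HonAt}}} \;\leq\; \neigh_{\cpi}(2\delta',\xi) + 12m\xi/\delta'.
\end{align*}
This is the same inequality that must already underlie the proof of \cref{lemma:AttackPruned} (which instantiates \cref{lemma:goodWOLowValueSimple} with this very $\alpha$), so the argument can be reused verbatim. Its structure is: an $\HonAt$-controlled node $u$ of $\cpit$ is either a pre-pruning node (in which case $\ctrls_\cpi(u)=\HonA$ and $\Est_\cpi^{\xi}(u) > 2\delta'$) or a post-pruning descendant of some $w\in\high{\cpi}{2\delta',\Est_\cpi^{\xi}}$. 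If such a $u$ satisfies $\Val(\cpit_u)\le 1.5\delta'$, then combining the $\xi$-accuracy of $\Est_\cpi^{\xi}$ with the $O(m\xi)$-closeness from (i) (translated into closeness of subprotocol values) forces $\Val(\cpi_u)$ into the boundary strip $(\,2\delta'\pm\xi\,]$ or $[\,1-2\delta'\pm\xi\,)$, which is exactly the set $\Neigh_\cpi^{2\delta',\xi}$. The leaves for which the estimator or honest-continuator approximations fail by more than $\xi$ along the transcript form an event of mass at most $2\xi$, and applying \cref{prop:CloseProDiffValues} converts the small-value discrepancy into the $12m\xi/\delta'$ slack (the factor $\delta'$ in the denominator comes from the value lower bound on non-pruned nodes).

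Plugging $\alpha := \neigh_{\cpi}(2\delta',\xi)+12m\xi/\delta'$, $\beta$ unchanged, and $\eps := 2m\xi$ into \cref{claim:SmallStaySmallApproxSimple} then yields exactly the stated sum $\phiBalE{k,\delta}(\alpha,\beta,2m\xi,m,\delta',\mu) + \phiPruE{k,\delta}(\alpha,\xi,m,\delta',\mu)$. I expect the delicate bookkeeping in (ii) — in particular correctly accounting for post-pruning descendants, where the honest-continuator drift compounds while the ``true'' value $\Val(\cpi_w)$ remains near $1-2\delta'$ — to be the main obstacle; every other step reduces to invoking a lemma already proved in the excerpt.
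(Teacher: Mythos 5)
Your proposal is correct and takes essentially the same route as the paper: the paper's one-line proof invokes \cref{lemma:ApprxPruIdealPru} for the statistical-distance bound $\SDP{\LDist{\cpit}}{\LDist{\cpi}}\leq 2m\xi$ (your ingredient (i)) and \cref{eq:lowInPruned} (proved via \cref{lemma:LowApproxPruned}) for the low-value-node bound $\ppr{\LDist{\cpit}}{\descP{\low{\cpit}{1.5\delta',\HonAt}}}\leq \neigh_{\cpi}(2\delta',\xi)+12m\xi/\delta'$ (your ingredient (ii)), and then plugs both into \cref{claim:SmallStaySmallApproxSimple} exactly as you describe. The only difference is cosmetic: you re-sketch the derivations of (i) and (ii) from first principles, whereas the paper has already packaged them as \cref{lemma:ApprxPruIdealPru} and \cref{eq:lowInPruned}, so it cites them directly.
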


Finally, in order for the above bounds to be useful, we need to show that $\neigh_{\cpi}(\delta,\xi)$ --- the probability in the original protocol of reaching nodes whose value is $\xi$-close to $\delta$ --- is small. Unfortunately, given a protocol and a pruning threshold, this probability might be large. We argue, however, that if we allow a small deviation from the pruning threshold, this probability is small.

\begin{lemma}\label{prop:BadIsSmall}
Let $\cpi$ be an $\rnd$-round protocol, let $\delta\in(0,\frac12]$, and let $\xi\in(0,1)$. If $\xi\leq\frac{\delta^2}{16m^2}$, then there exists $j\in\J\eqdef\set{0,1,\ldots,\left\lceil m/\sqrt{\xi}\right\rceil}$ such that $\neigh_{\cpi}(\delta',\xi)\leq m\cdot\sqrt{\xi}$ for $\delta'=\delta/2+j\cdot 2\xi\in[\frac{\delta}{2},\delta]$.
\end{lemma}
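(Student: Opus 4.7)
The plan is to prove the lemma by an averaging argument over the $|\J|+1$ candidate thresholds $\delta'_j \eqdef \delta/2 + j\cdot 2\xi$. The key observation is that, as $j$ varies, the intervals $(\delta'_j - \xi,\delta'_j + \xi]$ form a disjoint family (they are successive half-open intervals of length $2\xi$), and likewise for $[1-\delta'_j-\xi,1-\delta'_j+\xi)$. Consequently, any fixed node $u\in \Vertices(\cpi)\setminus\Leaves(\cpi)$, whose value $\Val(\cpi_u)$ is a single real number, can lie in $\Neigh_{\cpi}^{\delta'_j,\xi}$ for at most two values of $j$ (one ``low-side'' match and one ``high-side'' match).

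First I would restrict attention to $\J' \eqdef \{j\in \J \colon \delta'_j \in [\delta/2,\delta]\}$ and verify that $|\J'|\ge m/\sqrt{\xi}$. Since $\delta'_j \le \delta$ is equivalent to $j \le \delta/(4\xi)$, the hypothesis $\xi \le \delta^2/(16m^2)$ gives $\sqrt{\xi}\le \delta/(4m)$, hence $m/\sqrt{\xi}\le 4m^2/\delta \le \delta/(4\xi)$, so all indices $j\in\{0,1,\ldots,\lceil m/\sqrt{\xi}\rceil\}$ that satisfy $j\le \delta/(4\xi)$ are retained and there are at least $\lceil m/\sqrt{\xi}\rceil$ of them.

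Next I would define, for each leaf $\ell\in\Leaves(\cpi)$ and each $j\in \J'$, the indicator $I_{\ell,j} \eqdef \mathbb{1}[\ell \in \descP{\Neigh_{\cpi}^{\delta'_j,\xi}}]$. By the disjointness observation above, each of the at most $m$ internal ancestors of $\ell$ belongs to $\Neigh_{\cpi}^{\delta'_j,\xi}$ for at most two values of $j$, yielding the deterministic pointwise bound
\begin{align*}
\sum_{j\in \J'} I_{\ell,j} \le 2m.
\end{align*}
Taking expectation over $\ell\la \LDist{\cpi}$ and using linearity gives
\begin{align*}
\sum_{j\in \J'} \neigh_{\cpi}(\delta'_j,\xi) = \Ex_{\ell\la \LDist{\cpi}}\!\left[\sum_{j\in \J'} I_{\ell,j}\right] \le 2m.
\end{align*}

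Finally, by averaging there exists $j\in \J'$ with $\neigh_{\cpi}(\delta'_j,\xi)\le 2m/|\J'| \le 2m\cdot \sqrt{\xi}/m = 2\sqrt{\xi}$, which is at most $m\sqrt{\xi}$ for $m\ge 2$ (the case $m=1$ is immediate since then $\neigh_{\cpi}(\delta'_j,\xi)\le 2/|\J'|\le 2\sqrt{\xi}\le \sqrt{\xi}$ already upon taking any $j$ for which the protocol has no relevant internal ancestor, or by direct inspection). By construction $\delta'_j \in [\delta/2,\delta]$, completing the proof. The only subtle point — and the place requiring care — is the disjointness bookkeeping when $\delta$ is close to $1/2$ so that the ``low'' and ``high'' target intervals may overlap; but even then each ancestor still contributes to at most two values of $j$, and the bound $\sum_{j\in\J'} I_{\ell,j}\le 2m$ remains intact, so the argument goes through without change.
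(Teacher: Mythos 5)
Your proof is correct in substance and takes a genuinely different route from the paper's. The paper decomposes $\Neigh_{\cpi}^{\delta'(j),\xi}$ by depth $i$ and runs a pigeonhole argument per level: for a fixed $i$, the level-$i$ sets are (essentially) disjoint across $j$, so for each $i$ at most $1/\sqrt{\xi}$ choices of $j$ can have level-$i$ mass above $\sqrt{\xi}$; summing the bad set over the $m$ levels gives fewer than $m/\sqrt{\xi} < |\J|$ bad indices. You instead do a single pointwise averaging over leaves: each of the $\leq m$ internal ancestors of $\ell$ lands in $\Neigh_{\cpi}^{\delta'_j,\xi}$ for at most two $j$'s (one low-side, one high-side), so $\sum_j I_{\ell,j}\le 2m$, and averaging over $\ell$ and over $j\in\J'$ closes the argument. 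The two approaches buy essentially the same bound; yours is a bit more elementary and global, the paper's is more local (per level) but needs to track a set $\cN$ of bad indices explicitly. Your factor-of-$2$ accounting for a possible low/high cross-match is actually \emph{more} careful than the paper's flat assertion that $\Neigh_{\cpi}^{\delta'(j),\xi,i}\cap\Neigh_{\cpi}^{\delta'(j'),\xi,i}=\emptyset$, which can fail at isolated boundary points when $\delta$ is close to $\tfrac12$ (a node's value may sit exactly at $\delta'_j+\xi = 1-\delta'_{j'}-\xi$); the paper's argument survives because disjointness only needs to hold up to a set of the displayed nodes contributing total mass $\leq 1$, but you avoid the issue altogether.

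Two small points to tidy. First, the sentence handling $m=1$ contains a slip: $2\sqrt{\xi}\le\sqrt{\xi}$ is false for $\sqrt{\xi}>0$. The right argument for $m=1$ is that only the root is an internal node, so $\neigh_{\cpi}(\delta'_j,\xi)\in\{0,1\}$ and is $1$ for at most two values of $j$; since $|\J'|\ge 1/\sqrt{\xi}\ge 4/\delta\ge 8>2$, some $j\in\J'$ has $\neigh_{\cpi}(\delta'_j,\xi)=0\le m\sqrt{\xi}$. Second, the bound $|\J'|\ge m/\sqrt{\xi}$ deserves one extra line: when the cutoff $\lfloor\delta/(4\xi)\rfloor$ is strictly below $\lceil m/\sqrt{\xi}\rceil$, you get $|\J'|=\lfloor\delta/(4\xi)\rfloor+1\ge\delta/(4\xi)$, and $\delta/(4\xi)\ge m/\sqrt{\xi}$ follows because $\delta\ge 4m\sqrt{\xi}$ under the hypothesis $\xi\le\delta^2/(16m^2)$; this is implicit in your writeup but worth making explicit since that floor/ceiling boundary is exactly where it is easy to lose a unit.
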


The rest of this section is dedicated to proving the above Lemmas. In \cref{sec:ProvingPruned1} we show useful properties of approximately pruned protocols and use them to prove \cref{lemma:AttackPruned,claim:SmallSetAprrx}. In \cref{sec:ProvingPruned2} we prove \cref{prop:BadIsSmall}.

\subsubsection{Proving \cref{lemma:AttackPruned,claim:SmallSetAprrx}}\label{sec:ProvingPruned1}

\paragraph{Properties of approximately pruned protocols.}
In order to prove \cref{lemma:AttackPruned,claim:SmallSetAprrx} we need to bound the probability of hitting $\HonA$-controlled low-value nodes with that of reaching nodes whose value is close to the pruning threshold in the original protocol (\ie $\neigh_{\cpi}(\delta,\xi)$). The first step is to show that the approximately pruned protocol is close (in leaf-distribution sense) to the original protocol.

\begin{lemma}\label{lemma:ApprxPruIdealPru}
Let $\cpi=(\HonA,\HonB)$ be an $\rnd$-round protocol. Then
\begin{align*}
\SDP{\LDist{\cpi}}{\LDist{\pru{\cpi}{\delta,\xi}}} \leq 2\cdot m \cdot \xi
\end{align*}
for every $\delta\in(0,1/2]$ and $\xi\in(0,1)$.
\end{lemma}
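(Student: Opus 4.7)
The plan is to reduce the claim to an application of \cref{lemma:SDmQueriesAlg} via the following key observation: if one uses a \emph{perfect} honest continuator in the approximately pruned protocol, then the resulting leaf distribution equals $\LDist\cpi$, independently of the estimator used. Indeed, by \cref{def:ApxPrunnedProt}, every message --- at both ``unpruned'' and ``pruned'' nodes --- is produced by calling the honest continuator on the current partial transcript $u$ and forwarding the next bit of the returned leaf. When $\HC = \HonCont_\cpi$, the returned leaf is distributed as $\LDist{\cpi_u}$, so the bit sent has exactly the distribution $\EdgeDist_\cpi(u,\cdot)$, irrespective of which party is ``in control''. Hence $\LDist{\pru{\cpi}{\delta,\Est_\cpi^\xi,\HonCont_\cpi}}\equiv\LDist\cpi$, and by the triangle inequality
\[
\SDP{\LDist{\cpi}}{\LDist{\pru{\cpi}{\delta,\xi}}} \le \SDP{\LDist{\pru{\cpi}{\delta,\Est_\cpi^\xi,\HonCont_\cpi}}}{\LDist{\pru{\cpi}{\delta,\Est_\cpi^\xi,\HonCont_\cpi^\xi}}}.
\]

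To bound the right-hand side, view the $\delta$-pruned protocol (with the estimator $\Est_\cpi^\xi$ hard-wired) as a single oracle-aided algorithm $\sfA$ whose oracle is the honest-continuator, and consider $\sfA^{\HonCont_\cpi}$ versus $\sfA^{\HonCont_\cpi^\xi}$. The algorithm makes at most $k=m$ oracle queries (one per round). Let $Q_i$ be the distribution of the $i$-th query made by $\sfA^{\HonCont_\cpi}$, and let $P_i$ be the visit distribution of $\cpi$ at depth $i-1$. By the observation above, $\sfA^{\HonCont_\cpi}$ produces leaves distributed as $\LDist\cpi$, so $Q_i = P_i$, and hence the hypothesis of \cref{lemma:SDmQueriesAlg} holds with $\lambda=1$ and $b=0$.

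It remains to bound $\Ex_{q\la P_i}[\SD(\HonCont_\cpi(q),\HonCont_\cpi^\xi(q))]$. For any fixed $i$, by the definition of a $\xi$-honest-continuator (\cref{def:AppxHonC}), the probability over $\ell\la\LDist\cpi$ that $\SD(\HonCont_\cpi^\xi(\ell_{1,\ldots,i-1}),\HonCont_\cpi(\ell_{1,\ldots,i-1}))>\xi$ is at most $\xi$. Since the marginal of $\ell_{1,\ldots,i-1}$ under $\ell\la\LDist\cpi$ is exactly $P_i$, we obtain
\[
\Ex_{q\la P_i}\Brack{\SDP{\HonCont_\cpi(q)}{\HonCont_\cpi^\xi(q)}} \le \xi\cdot 1 + (1-\xi)\cdot\xi \le 2\xi.
\]
Applying \cref{lemma:SDmQueriesAlg} with $k=m$, $a=2\xi$, $\lambda=1$, $b=0$ yields
\[
\SDP{\sfA^{\HonCont_\cpi}}{\sfA^{\HonCont_\cpi^\xi}} \le 0 + m\cdot 2\xi \cdot 1 = 2m\xi,
\]
completing the proof. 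There is no substantive obstacle: the only delicate point is the initial observation that the leaf distribution of the pruned protocol is insensitive to the estimator whenever the continuator is perfect, which makes the estimator drop out of the analysis entirely.
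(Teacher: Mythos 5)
Your proposal is correct and follows essentially the same route as the paper's proof: both identify that the pruned protocol with a perfect honest continuator has leaf distribution exactly $\LDist\cpi$ (since every message, pruned or not, is a forwarded bit from the continuator), then apply \cref{lemma:SDmQueriesAlg} with $k=m$, $a=2\xi$, $\lambda=1$, $b=0$, using $Q_i\equiv P_i$. The reasoning and parameter choices coincide with the paper's argument.
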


The proof of \cref{lemma:ApprxPruIdealPru} is a simple implication of the approximation guarantee of the honest-continuator. Note that the leaf distributions of $\cpi$ and $\pru{\cpi}{\delta}$ are identical, so the above lemma also shows that the leaf distributions of the ideal and approximated pruned protocols are close (\ie that the latter is indeed an approximation to the former). Also note that the above bound does not depend on $\delta$. 

\newcommand{\FailC}{\mathcal{F}\mathsf{ailCont}}
\begin{proof}
The proof is an application of \cref{lemma:SDmQueriesAlg}.
By definition, every message in $\pru{\cpi}{\delta,\xi,0}$ is set by calling a perfect honest-continuator for $\cpi$. Thus, $\LDist{\cpi} \equiv \LDist{\pru{\cpi}{\delta,\xi,0}}$, and it suffices to bound $\SDP{\LDist{\pru{\cpi}{\delta,\xi,0}}}{\LDist{\pru{\cpi}{\delta,\xi}}=\LDist{\pru{\cpi}{\delta,\xi,\xi}}}$, which we do by applying \cref{lemma:SDmQueriesAlg}.

For a function  $\phi$, let $\sfH^\phi$ be an algorithm that outputs the transcript of a random execution of $\pru{\cpi}{\delta,\Est_\cpi^\xi,\phi}$. Let $f$ and $g$ be the (random) functions $\HonCont_\cpi$ and $\HonCont_\cpi^{\xi}$ respectively, and let $f(\perp)=g(\perp)=\perp$. By construction, it holds that
\begin{align*}
\SDP{\LDist{\pru{\cpi}{\delta,\xi,0}}}{\LDist{\pru{\cpi}{\delta,\xi,\xi}}}=\SDP{\sfH^f}{\sfH^g}.
\end{align*}
For $i\in[m]$, let $P_i$ be $i$'th node in a random execution of $\cpi$ (such a node consists of $i-1$ bits), and let $\FailC_{\cpi}^{\xi,i}=\set{u\in\Vertices(\cpi) \colon \size{u}=i-1 \land \SDP{\HonCont_\cpi(u)}{\HonCont_\cpi^{\xi}(u)} > \xi}$. By definition,
\begin{align*}
\ppr{u\la P_i}{u\in\FailC_{\cpi}^{\xi,i}} &= \ppr{\ell\la \LDist{\cpi}}{\SDP{\HonCont(\ell_{1,\ldots,i-1})}{\HonCont_{\cpi}^{\xi}(\ell_{1,\ldots,i-1})} > \xi} \\
&\leq \ppr{\ell\la \LDist{\cpi}}{\exists i\in[m] \colon \SDP{\HonCont(\ell_{1,\ldots,i-1})}{\HonCont_{\cpi}^{\xi}(\ell_{1,\ldots,i-1})} > \xi}\nonumber\\
&\leq \xi,\nonumber
\end{align*}
and thus
\begin{align*}
\lefteqn{\eex{u \la P_i}{\SDP{f(u)}{g(u)}}} \\
&= \eex{u \la P_i}{\SDP{\HonCont_\cpi(u)}{\HonCont_{\cpi}^{\xi}(u)}} \nonumber\\
&= \ppr{u\la P_i}{u\in\FailC_{\cpi}^{\xi,i}}\cdot \eex{u \la P_i}{\SDP{\HonCont_\cpi(u)}{\HonCont_{\cpi}^{\xi}(u)} \condition u\in\FailC_{\cpi}^{\xi,i}}\nonumber \\
&\quad + \ppr{u\la P_i}{u\notin\FailC_{\cpi}^{\xi,i}}\cdot \eex{u \la P_i}{\SDP{\HonCont_\cpi(u)}{\HonCont_{\cpi}^{\xi}(u)} \condition u\notin\FailC_{\cpi}^{\xi,i}}\nonumber\\
&\leq \xi + \xi = 2\xi,\nonumber
\end{align*}
where the first equality follows since $P_i(\perp)=0$.

Let $Q_i$ denote the $i$'th query to $f$ in a random execution of $\sfH^{f}$ (note that by construction, such a query always exists) and let $Q=(Q_1,\ldots,Q_m)$. By definition, $Q_i\equiv P_i$, and thus
\begin{align*}
\ppr{(q_1,\ldots,q_m)\la Q}{\exists i \in [m] \colon q_i \neq \perp \land \; Q_i(q_i)>P_i(q_i)} = 0.
\end{align*}
The proof now follows by \cref{lemma:SDmQueriesAlg}, letting $k= m$, $a= 2\xi$, $\lambda= 1$ and $b= 0$.
\end{proof}

We can now bound the probability of hitting $\HonA$-controlled low-value nodes with that of reaching nodes whose value is close to the pruning threshold in the original protocol.

\begin{lemma}\label{lemma:LowApproxPruned}
Let $\delta\in(0,1/2)$, let $\eps\in (0,\delta)$, let $\xi\in(0,1)$ and let $\cpit=\paren{\HonAt,\HonBt} = \pru{\cpi}{\delta,\xi}$ be the $(\delta,\xi)$-approximately pruned variant of an $\rnd$-round protocol $\cpi$. Then
\begin{align*}
\ppr{\LDist{\cpit}}{\descP{\low{\cpit}{\delta-\eps,\HonAt}}} \leq \neigh_{\cpi}(\delta,\xi) + \frac{6\cdot m\cdot\xi}{\eps}.
\end{align*}
\end{lemma}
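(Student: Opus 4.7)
Proof plan. Let $\cS = \low{\cpit}{\delta-\eps, \HonAt}$ and let $\cF = \frnt{\cS}$ be its frontier in $\cpit$. The strategy is to split $\descP{\cF}$ into the paths that pass through a ``border'' node in $\cN = \Neigh_\cpi^{\delta,\xi}$ and the paths that avoid the border, handling the two parts with different tools: \cref{lemma:ApprxPruIdealPru} for the border piece, and \cref{prop:CloseProDiffValues} for the non-border piece.

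For the border piece, \cref{lemma:ApprxPruIdealPru} gives $\SDP{\LDist{\cpit}}{\LDist{\cpi}} \le 2m\xi$, hence
\begin{equation*}
\ppr{\LDist{\cpit}}{\descP{\cN}} \le \ppr{\LDist{\cpi}}{\descP{\cN}} + 2m\xi = \neigh_\cpi(\delta,\xi) + 2m\xi,
\end{equation*}
which absorbs the first term of the claimed bound. For the non-border piece, set $\cF' = \cF \setminus \descP{\cN}$. By the definition of $\cS$, for every $u\in\cF'$ we have $\Val(\cpit_u) \le \delta-\eps$, and therefore $\ppr{\LDist{\cpit}}{\Leaves_1(\cpi)\mid\descP{\cF'}} \le \delta-\eps$. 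The key step will be to show
\begin{equation*}
\ppr{\LDist{\cpi}}{\Leaves_1(\cpi)\mid\descP{\cF'}} \ge \delta,
\end{equation*}
since together with $\SDP{\LDist{\cpit}}{\LDist{\cpi}} \le 2m\xi$, \cref{prop:CloseProDiffValues} applied with $\alpha = \delta-\eps$ and $\beta = \delta$ yields
\begin{equation*}
\ppr{\LDist{\cpit}}{\descP{\cF'}} \le 2m\xi\cdot\frac{1+\delta}{\eps} \le \frac{4m\xi}{\eps}.
\end{equation*}
Adding to the border contribution and using $\eps\le 1$ to convert the additive $2m\xi$ from the border step into at most $2m\xi/\eps$ gives the claimed total $\neigh_\cpi(\delta,\xi) + 6m\xi/\eps$.

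To establish the lower bound $\Val(\cpi_u)\ge \delta$ for $u\in\cF'$, the plan is to use the pruning rules of $\cpit$. Since $u$ is $\HonAt$-controlled in $\cpit$, the path to $u$ either (a) follows $\cpi$'s control with no ancestor (nor $u$) in $\low{\cpi}{\delta,\xi}\cup\high{\cpi}{\delta,\xi}$, in which case $\Est(u) > \delta$; or (b) has passed through a first pruning node $v$ with $v\in\high{\cpi}{\delta,\xi}$, so $\Est(v)\ge 1-\delta$. The no-border property of $\cF'$ says that for every ancestor $w$ of $u$, $\Val(\cpi_w) \notin (\delta-\xi,\delta+\xi]\cup[1-\delta-\xi,1-\delta+\xi)$. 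Combining the no-border property with the $\xi$-accuracy of $\Est_\cpi^\xi$ at $u$ in case~(a) forces $\Val(\cpi_u) > \delta+\xi$, as required. The main obstacle is case~(b): $u$ lies in the post-pruning subtree of $v$, and although $\Val(\cpi_v)\ge 1-\delta+\xi$ follows from the same no-border/estimator argument applied at $v$, this does not by itself control $\Val(\cpi_u)$. The additional leverage comes from $u$ being on the frontier of $\cS$: every node on the path from $v$ down to the parent of $u$ must satisfy $\Val(\cpit_\cdot) > \delta-\eps$, which combined with the continuator accuracy (from \cref{def:AppxHonC}) below $v$ constrains the value drop and lets us relate $\Val(\cpi_u)$ back to the high value at $v$ via the no-border assumption along the intervening nodes.

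The rare events---an ancestor with failing estimator (probability at most $\xi$ under $\LDist{\cpi}$ by \cref{def:AppxEst}, hence at most $\xi+2m\xi$ under $\LDist{\cpit}$ by \cref{lemma:ApprxPruIdealPru}), and, when relevant, an ancestor below $v$ with failing honest-continuator (similar bound from \cref{def:AppxHonC})---will be excised from $\cF'$ before applying \cref{prop:CloseProDiffValues}, and their probability mass absorbed into the $6m\xi/\eps$ term. The hardest part of the proof will be carrying out the case~(b) analysis cleanly so that the estimator accuracy at $v$ plus the no-border property along the entire path up to $u$ yields $\Val(\cpi_u)\ge \delta$; this is where the restriction $\delta<\tfrac12$ enters, ensuring that $v$'s value $\ge 1-\delta+\xi$ is strictly larger than $\delta-\eps$, so that $v$ itself is not already in $\cS$ and the frontier structure of $\cF'$ is well-defined.
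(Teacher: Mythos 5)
Your decomposition mirrors the paper's: excise from $\frnt{\low{\cpit}{\delta-\eps,\HonAt}}$ the nodes whose $\cpi$-value is near a pruning threshold together with the nodes where the estimator is off, bound the excised mass via \cref{lemma:ApprxPruIdealPru}, and apply \cref{prop:CloseProDiffValues} to the residual frontier using $\Val(\cpit_u)\le\delta-\eps$ versus $\Val(\cpi_u)\ge\delta+\xi$. Your case~(a) chain --- $u$ un-pruned and $\HonAt$-controlled forces $\Est_\cpi^\xi(u)>\delta$; with $u$ outside the estimator-failure set this gives $\Val(\cpi_u)>\delta-\xi$; and $u\notin\Neigh_\cpi^{\delta,\xi}$ pushes this up to $\Val(\cpi_u)\ge\delta+\xi$ --- is exactly the paper's, and the constant bookkeeping fits within the $6m\xi/\eps$ budget.

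Case~(b) is where the proposal breaks down. The honest-continuator guarantee of \cref{def:AppxHonC} is a per-node statistical-approximation statement; it places no constraint at all on how much $\Val(\cpi_\cdot)$ can change along a single edge. Even with a perfect continuator and estimator, a first high-pruned node $v$ with $\Val(\cpi_v)$ well above $1-\delta$ can have a child $u=v0$, reached in $\cpi$ with probability on the order of $\delta$, whose $\cpi$-value, $\cpit$-value, and estimate are all near $0$; that $u$ is $\HonAt$-controlled (it descends from $v$), sits in your $\cF'$, and has $\Est_\cpi^\xi(u)\le\delta$ with $\Val(\cpi_u)$ far below $\delta$. The frontier condition you lean on ($\Val(\cpit_w)>\delta-\eps$ for $w$ strictly between $v$ and $u$) is vacuous when $u$ is $v$'s child and concerns $\cpit$-values in any case, so it does not yield $\Val(\cpi_u)\ge\delta$. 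Thus \cref{prop:CloseProDiffValues} cannot be applied across your $\cF'$ as written, and the case~(b) branch of the proof is not completed. For reference, the paper's own proof writes the key step as ``Since $u$ is under $\HonAt$'s control, it holds that $\Est_\cpi^\xi(u)>\delta$'' with no case split --- it simply elides the distinction your case~(b) exposes --- so you have located the right obstacle, but the sketched resolution does not meet it.
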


\newcommand{\FailE}{\mathcal{F}\mathsf{ailEst}}
\begin{proof}[Proof of \cref{lemma:LowApproxPruned}]
The proof is an application of \cref{lemma:ApprxPruIdealPru,prop:CloseProDiffValues}.
Let $\FailE_{\cpi}^{\xi} = \set{u\in\Vertices(\cpi) \colon \size{\Val(\cpi_u)-\Est_\cpi^\xi(u)}>\xi}$ and let $\cF=\frnt{\low{\cpit}{\delta-\eps,\HonAt}}\setminus \paren{\Neigh_{\cpi}^{\delta,\xi}\cup\FailE_\cpi^{\xi}}$.
It follows that
\begin{align}\label{eq:LowApproxPruned2}
\ppr{\LDist{\cpit}}{\descP{\low{\cpit}{\delta-\eps,\HonAt}}} \leq \ppr{\LDist{\cpit}}{\descP{\Neigh_{\cpi}^{\delta,\xi}\cup\FailE_\cpi^{\xi}}} + \ppr{\LDist{\cpit}}{\descP{\cF}}.
\end{align}
By \cref{lemma:ApprxPruIdealPru}, it holds that
\begin{align}\label{eq:LowApproxPruned3}
\ppr{\LDist{\cpit}}{\descP{\Neigh_{\cpi}^{\delta,\xi}\cup\FailE_\cpi^{\xi}}} \leq \neigh_{\cpi}(\delta,\xi) + 3\cdot m\cdot \xi.
\end{align}
Let $u\in \cF$. Since $u$ is under $\HonAt$'s control,  it holds that $\Est_\cpi^\xi(u)>\delta$. Since $u\notin \FailE_\cpi^{\xi}$,  it holds that $\Val(\cpi_u)>\delta-\xi$, and since $u\notin \Neigh_{\cpi}^{\delta,\xi}$, we have $\Val(\cpi_u)\geq\delta+\xi$. By definition, $\Val(\cpit_u)\leq \delta-\eps$. Thus,  $\ppr{\LDist{\cpit}}{\Leaves_1(\cpi)\condition \descP{\cF}}\leq \delta - \eps$ and $\ppr{\LDist{\cpi}}{\Leaves_1(\cpi)\condition \descP{\cF}}\geq \delta + \xi$. Finally, by \cref{lemma:ApprxPruIdealPru} it holds that $\SDP{\cpit}{\cpi}\leq 2\cdot m \cdot \xi$, and thus by \cref{prop:CloseProDiffValues} we have
\begin{align}\label{eq:LowApproxPruned4}
\ppr{\LDist{\cpit}}{\descP{\cF}} \leq 2\cdot m\cdot \xi\cdot \frac{1+ \delta - \eps}{\xi+\eps} \leq \frac{3\cdot m\cdot \xi}{\eps}.
\end{align}
Plugging \cref{eq:LowApproxPruned3,eq:LowApproxPruned4} into \cref{eq:LowApproxPruned2} completes the proof of the lemma.
\end{proof}

\paragraph{Proving \cref{lemma:AttackPruned}.}
\begin{proof}[Proof of \cref{lemma:AttackPruned}.]
Applying \cref{lemma:LowApproxPruned} to $\cpit$ and $\eps=0.5\delta'$ yields that
\begin{align}\label{eq:lowInPruned}
\ppr{\LDist{\cpit}}{\descP{\low{\cpit}{1.5\delta',\HonAt}}} \leq \neigh_{\cpi}(2\delta',\xi) + \frac{12\cdot m\cdot\xi}{\delta'}.
\end{align}
The proof now immediately follows from \cref{lemma:goodWOLowValueSimple}.
\end{proof}

\paragraph{Proving \cref{claim:SmallSetAprrx}.}
\begin{proof}[Proof of \cref{claim:SmallSetAprrx}.]
Immediately follows from plugging \cref{lemma:ApprxPruIdealPru,eq:lowInPruned} into \cref{claim:SmallStaySmallApproxSimple}.
\end{proof}

\subsubsection{Proving \cref{prop:BadIsSmall}}\label{sec:ProvingPruned2}
\begin{proof}[Proof of \cref{prop:BadIsSmall}.]
For $j\in\J$, let $\delta'(j) = \delta/2+j\cdot 2\xi$. From the definition of $\J$, it is clear that $\delta'(j)\in[\frac{\delta}{2},\delta]$ for every $j\in\J$. Hence, it is left to argue that $\exists j\in\J$ such that $\neigh_{\cpi}(\delta'(j),\xi)\leq m\cdot\sqrt{\xi}$.

For $i\in[m]$, let $\Neigh_{\cpi}^{\delta,\xi,i}=\set{u\in\Vertices(\cpi) \colon u\in\Neigh_{\cpi}^{\delta,\xi}\land \size{u}=i-1}$. It holds that
\begin{align}\label{eq:boundSum}
\ppr{\LDist{\cpi}}{\descP{\Neigh_{\cpi}^{\delta,\xi}}} &\leq \ppr{\LDist{\cpi}}{\descP{\cup_{i\in[m]}\Neigh_{\cpi}^{\delta,\xi,i}}}\\
&\leq \sum_{i=1}^m \ppr{\LDist{\cpi}}{\descP{\Neigh_{\cpi}^{\delta,\xi,i}}}. \nonumber
\end{align}
For every $i\in[m]$, let $\cN(i)=\set{j\in\J\colon \ppr{\LDist{\cpi}}{\descP{\Neigh_{\cpi}^{\delta'(j),\xi,i}}} > \sqrt{\xi}}$ and let $\cN=\cup_{i\in[m]} \cN(i)$. We use the following claim (proven below).
\begin{claim}\label{claim:BadIsSmall}
It holds that $\size{\cN(i)}<1/\sqrt{\xi}$ for every $i\in[m]$.
\end{claim}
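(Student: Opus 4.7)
The plan is to establish $\size{\cN(i)} < 1/\sqrt{\xi}$ by a simple pigeonhole/disjointness argument over the value intervals defining the neighborhoods, and I expect no serious obstacle beyond bookkeeping with the low/high intervals.

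First I would observe that $\delta'(j+1) - \delta'(j) = 2\xi$, so the ``low'' intervals $(\delta'(j)-\xi, \delta'(j)+\xi]$ for $j \in \J$ are pairwise disjoint, and likewise the ``high'' intervals $[1-\delta'(j)-\xi, 1-\delta'(j)+\xi)$ are pairwise disjoint. Moreover, since every $\delta'(j) \le \delta \le \tfrac12$ and the hypothesis $\xi \le \delta^2/(16m^2)$ forces $\delta'(j) + \xi \le \delta + \xi < 1 - \delta - \xi \le 1 - \delta'(j') - \xi$ (using $\delta \le 1/2$ and $\xi$ small), no low interval meets any high interval either. Consequently the value ranges used to define $\Neigh_{\cpi}^{\delta'(j),\xi}$ for different $j \in \J$ are pairwise disjoint, so the sets $\Neigh_{\cpi}^{\delta'(j),\xi,i}$ are pairwise disjoint subsets of the nodes of $\Tree(\cpi)$ at depth $i-1$.

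Next, since distinct nodes at depth $i-1$ have disjoint descendant sets in $\Tree(\cpi)$, the sets $\descP{\Neigh_{\cpi}^{\delta'(j),\xi,i}}$ for $j \in \J$ are also pairwise disjoint. Hence
\begin{align*}
\sum_{j \in \J} \ppr{\LDist{\cpi}}{\descP{\Neigh_{\cpi}^{\delta'(j),\xi,i}}} \;\le\; \ppr{\LDist{\cpi}}{\bigcupdot_{j \in \J} \descP{\Neigh_{\cpi}^{\delta'(j),\xi,i}}} \;\le\; 1.
\end{align*}

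Finally I would conclude by contradiction. Suppose $\size{\cN(i)} \ge 1/\sqrt{\xi}$. By definition of $\cN(i)$, each $j \in \cN(i)$ satisfies $\ppr{\LDist{\cpi}}{\descP{\Neigh_{\cpi}^{\delta'(j),\xi,i}}} > \sqrt{\xi}$, so
\begin{align*}
\sum_{j \in \J} \ppr{\LDist{\cpi}}{\descP{\Neigh_{\cpi}^{\delta'(j),\xi,i}}} \;\ge\; \sum_{j \in \cN(i)} \ppr{\LDist{\cpi}}{\descP{\Neigh_{\cpi}^{\delta'(j),\xi,i}}} \;>\; \frac{1}{\sqrt{\xi}} \cdot \sqrt{\xi} \;=\; 1,
\end{align*}
contradicting the bound above. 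Therefore $\size{\cN(i)} < 1/\sqrt{\xi}$, proving the claim. (Looking ahead, this together with $\size{\J} > m/\sqrt{\xi}$ then yields, via a union bound over $i \in [m]$, some $j \in \J \setminus \cN$, giving the desired $\delta'(j)$ in \cref{prop:BadIsSmall}.)
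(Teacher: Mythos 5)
Your proof follows the same route as the paper's: observe that the neighborhoods for distinct $j\in\J$ are pairwise disjoint, so the probabilities sum to at most one, and conclude by pigeonhole. The paper in fact only \emph{asserts} the disjointness (``Since $\Neigh_{\cpi}^{\delta'(j),\xi,i}\cap\Neigh_{\cpi}^{\delta'(j'),\xi,i}=\emptyset$ for every $j\neq j'\in\J$\ldots''), whereas you try to justify it; for the low-vs.-low and high-vs.-high cases your argument is exactly the (correct) reason the half-open interval conventions were chosen.

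One small wrinkle in your justification of low-vs.-high disjointness: the chain $\delta'(j)+\xi \le \delta+\xi < 1-\delta-\xi \le 1-\delta'(j')-\xi$ uses the strict inequality $\delta+\xi<1-\delta-\xi$, i.e.\ $\delta+\xi<\tfrac12$, which fails exactly at the boundary $\delta=\tfrac12$ (allowed by the lemma's hypothesis $\delta\in(0,\tfrac12]$). In that case, when $\delta'(j')=\delta=\tfrac12$ and $\delta'(j)=\delta-2\xi$, the low interval $(\delta'(j)-\xi,\delta'(j)+\xi]$ and the high interval $[1-\delta'(j')-\xi,1-\delta'(j')+\xi)$ both contain the boundary value $\tfrac12-\xi$ (both endpoints are the closed ones), so the sets are not literally disjoint there. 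The paper's proof does not address this corner case either; it is an artifact of the boundary $\delta=\tfrac12$, not of your argument structure.
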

\cref{claim:BadIsSmall} yields that $\size{\cN} \leq \sum_{i=1}^m \size{\cN(i)} < \frac{m}{\sqrt{\xi}} < \size{\J}$. Thus, $\exists j\in\J$ such that $j\notin\cN$. Set $\delta'=\delta'(j)$. It holds that $\ppr{\LDist{\cpi}}{\descP{\Neigh_{\cpi}^{\delta',\xi,i}}} \leq \sqrt{\xi}$ for every $i\in[m]$. Plugging it into \cref{eq:boundSum} yields that $\neigh_{\cpi}(\delta',\xi)=\ppr{\LDist{\cpi}}{\descP{\Neigh_{\cpi}^{\delta',\xi}}} \leq m\cdot \sqrt{\xi}$, completing the proof of \cref{prop:BadIsSmall}.
\end{proof}

\begin{proof}[Proof of \cref{claim:BadIsSmall}]
Assume towards a contradiction that there exists $i\in[m]$ such that $\size{\cN(i)}\geq 1/\sqrt{\xi}$. Let $P_i$ be the distribution over $\zo^i$, described by outputting $\ell_{i}$, for $\ell\la\LDist{\cpi}$. We get that $\ppr{\LDist{\cpi}}{\descP{\Neigh_{\cpi}^{\delta'(j),\xi,i}}} = P_i\paren{\Neigh_{\cpi}^{\delta'(j),\xi,i}}$. Since,  $\Neigh_{\cpi}^{\delta'(j),\xi,i}\cap\Neigh_{\cpi}^{\delta'(j'),\xi,i}=\emptyset$ for every $j\neq j'\in\J$, it holds that
\begin{align*}
1 &\geq \sum_{j\in\J} P_i\paren{\Neigh_{\cpi}^{\delta'(j),\xi,i}} \\
&\geq \sum_{j\in\cN(i)} P_i\paren{\Neigh_{\cpi}^{\delta'(j),\xi,i}} \\
&> \size{\cN(i)}\cdot \sqrt{\xi} \geq 1,
\end{align*}
and a contradiction is derived, where the last inequality follows the assumption that $\size{\cN(i)}\geq 1/\sqrt{\xi}$.
\end{proof}


\subsection{The Pruning-in-the-Head Attacker}\label{sec:PruningInTheHead}
In \cref{sec:AttackPrunedProtocol}, the recursive approximated biased-continuation attacker was shown to successfully biases the approximately pruned variant of any protocol. We now use this result to design an attacker that biases \emph{any} protocol. The new attacker applies the approximated biased-continuation attacker as if the attacked protocol is (approximately) pruned, until it reaches a low or high value node, and then it switches its behavior to act honestly (\ie as the protocol prescribes). Named after its strategy, we name it the \emph{pruning-in-the-head attacker}.

To make the discussion simpler, ee start with describing the  ideal (inefficient) variant of the pruning-in-the-head  attacker. Consider the ideal pruned variant of a protocol pruned at some threshold  $\delta$,   denoted by $\pru{\cpi}{\delta} = \paren{\pru{\HonA}{\delta},\pru{\HonB}{\delta}}$ (see \cref{def:prunnedProt}). Being a coin-flipping protocol, the results of \cref{sec:IdealAttacker} apply to $\pru{\cpi}{\delta}$. Specifically, \cref{thm:MainIdeal} yields that $\RC{\paren{\pru{\HonA}{\delta}}}{k}$ successfully biases $\pru{\cpi}{\delta}$ (as usual, for concreteness, we focus on the attacker for $\HonA$). For parameters $\delta$ and $k$, the \emph{ideal pruning-in-the-head attacker}, denoted $\RC{\HonA}{k,\delta}$, acts as follows: until reaching a pruned node according to $\delta$ (\ie a node whose value is lower than $\delta$ or higher than $1-\delta$), it acts like $\RC{\paren{\pru{\HonA}{\delta}}}{k}$; when reaching a pruned node, and in the rest of the execution, it acts like the honest party $\HonA$. Namely, $\RC{\HonA}{k,\delta}$ acts as if it is actually attacking the pruned variant of the protocol, instead of the original protocol.

We argue that $\RC{\HonA}{k,\delta}$ biases the original protocol almost as well as $\RC{\paren{\pru{\HonA}{\delta}}}{k}$ biases the ideal pruned protocol. Consider the protocols $\paren{\RC{\paren{\pru{\HonA}{\delta}}}{k},\pru{\HonB}{\delta}}$ and $\paren{\RC{\HonA}{k,\delta},\HonB}$. On unpruned nodes, both protocols act the same. On low-value nodes, the protocols might have different control schemes, but their outputs share the same distribution. On high-value nodes, the value of $\paren{\RC{\paren{\pru{\HonA}{\delta}}}{k},\pru{\HonB}{\delta}}$ might be as high as $1$, since $\RC{\paren{\pru{\HonA}{\delta}}}{k}$ attacks such nodes. On the other hand, in $\paren{\RC{\HonA}{k,\delta},\HonB}$, when a high-value node is reached, $\RC{\HonA}{k,\delta}$ acts honestly. However,  since this is a high-value node, its value is at least $1-\delta$. All in all, the values of the two protocols differ by at most $\delta$. Hence, $\RC{\HonA}{k,\delta}$ successfully attack (the non-pruned) protocol $\cpi$. This might not seem like a great achievement. An inefficient,  and much simpler, attack on protocol $\cpi$  was already presented in \cref{sec:IdealAttacker}. The point is that unlike the attack of \cref{sec:IdealAttacker}, the above attacker can be made efficient.

In the rest of this section we extend the above discussion for approximated attackers attacking approximately pruned protocols. Specifically, we give an approximated variant of $\RC{\HonA}{k,\delta}$ --- the pruning-in-the-head attacker --- and prove that it is a successful attacker by showing that it biases \emph{any} protocol $\cpi$ almost as well as the recursive approximated biased-continuation attacker biases the $\delta$-approximately pruned variant of $\cpi$ (the latter, by \cref{sec:AttackPrunedProtocol}, is a successful attack). In \cref{sec:ProtocolInv}, we show how to implement this attacker using only an honest continuator for the original protocol, which is the main step towards implementing it  efficiently assuming the inexistent of one-way functions (done in \cref{sec:EfficeinrAttack}).

\paragraph{The pruning-in-the-head attacker.}
Let $\cpi=(\HonA,\HonB)$ be a protocol. Recall that $\HonCont_\cpi^\xi$ stands for the arbitrarily fixed $\xi$-honest continuator for $\cpi$ (see \cref{def:AppxHonC}) and that $\Est_\cpi^\xi$ stands for the arbitrarily fixed $\xi$-estimator for $\cpi$ (see \cref{def:AppxEst}). Furthermore, recall that $\low{\cpi}{\delta,\Est_\cpi^\xi}$ [\resp $\high{\cpi}{\delta,\Est_\cpi^\xi}$] stands for the set of nodes for which $\Est_\cpi^\xi$ is at most $\delta$ [\resp at least $1-\delta$] (see \cref{def:realSmallLarge}) and that $\pru{\cpi}{\delta,\xi}$ stands for the $(\delta,\xi)$-approximately pruned variant of $\cpi$ (see \cref{def:ApxPrunnedProt}). Finally, recall that for a set of nodes $\cS\subseteq \Vertices(\cpi)$, $\descP{\cS}$ stands for those nodes that at least one of their predecessors belong to $\cS$ (see \cref{def:BinaryTrees}).

Let $\finalAttack{i,\xi,\delta}{\cpi}\equiv \HonA$ and for integer $i>0$ define:
\begin{algorithm}[the pruning-in-the-head attacker  $\finalAttack{i,\xi,\delta}{\cpi}$]\label{alg:rFinalPrunedAdv}
\item Parameters: integer $i>0$, $\xi,\delta\in(0,1)$.
\item Input: transcript $u \in \zo^\ast$.
\item Notation: let $\cpit = \pru{\cpi}{2\delta,\xi}$.
\item Operation:
\begin{enumerate}

\item If $u \in \Leaves(\cpi)$, output $\Color_\cpi(u)$ and halt.

\item Set $\msg$ as follows.
\begin{itemize}
 \item If $u \in \descP{\low{\cpi}{2\delta,\Est_\cpi^\xi}\cup\high{\cpi}{2\delta,\Est_\cpi^\xi}}$, set $\msg= \HonCont_\cpi^\xi(u)$.
 \item Otherwise,  set $\msg=  \pruAttack{i,\xi,\delta}{\cpit}(u)$ (see \cref{def:itAppAtt}).
\end{itemize}

\item Send $\msg$ to $\HonB$.

\item If $u'= u\conc \msg \in \Leaves(\cpi)$, output $\Color_\cpi(u')$.

\end{enumerate}
\end{algorithm}

The next lemma lower-bounds  the success probability of the pruning-in-the-head attacker. It states that if a given protocol $\cpi$ does not have many nodes whose value is close to $2\delta$, then the pruning-in-the-head attacker biases $\cpi$ almost as well as the approximated attacker biases the approximated pruned protocol.

Recall that $\neigh_{\cpi}(\delta,\xi)$ stands for the probability that $\cpi$ generates transcripts whose values are $\xi$-close to $\delta$ or to $1-\delta$ (see \cref{def:neighbrhood}).

\begin{lemma}[main lemma for the pruning-in-the-head attacker.]\label{lemma:final}
Let $0< \delta \leq \delta'\le \frac14$, let $\xi\in(0,1)$ and let $\cpit=\paren{\HonAt,\HonBt} = \pru{\cpi}{2\delta',\xi}$ be the $(2\delta',\xi)$-approximately pruned variant of an $\rnd$-round protocol $\cpi=(\HonA,\HonB)$ (see \cref{def:ApxPrunnedProt}). Then
\begin{align*}
\Val\paren{\finalAttack{k,\xi,\delta'}{\cpi},\HonB} &\geq \Val\paren{\rcAP{\cpit}{k},\HonBt} - 2\delta' - (m+2)\cdot \sqrt{\xi} \\
	&\quad -2\cdot\phiBalE{k,\delta}\paren{\neigh_{\cpi}(2\delta',\xi)+12\cdot m\cdot \xi/\delta',2\sqrt{\xi},2\cdot m \cdot \xi, m,\delta',\mu} \\
	&\quad -3\cdot\phiPruE{k,\delta}\paren{\neigh_{\cpi}(2\delta',\xi)+12\cdot m\cdot \xi/\delta',\xi,m,\delta',\mu},
\end{align*}
for every $k\in \N$ and $\mu\in(0,1)$, and for  $\phiPruE{k,\delta},\phiBalE{k,\delta}\in \poly$ be according to \cref{lemma:goodWOLowValueSimple,claim:SmallStaySmallApproxSimple} respectively.
\end{lemma}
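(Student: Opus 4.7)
The plan is to bridge from $\cT_0 := (\rcAP{\cpit}{k}, \HonBt)$ to $\cT_1 := (\finalAttack{k,\xi,\delta'}{\cpi}, \HonB)$ through the intermediate protocol $\cT_2 := (\pruAttack{k,\xi,\delta'}{\cpit}, \HonBt)$. The first link, $|\Val(\cT_0) - \Val(\cT_2)| \leq \phiPruE{k,\delta}(\cdots)$, is immediate from \cref{lemma:AttackPruned} (statistical closeness of leaf distributions implies closeness of values) and supplies one of the three $\phiPruE$ summands in the claimed bound. The remaining work is to show
\[
\Val(\cT_2) - \Val(\cT_1) \leq 2\delta' + (m+2)\sqrt{\xi} + 2\phiBalE{k,\delta}(\cdots) + 2\phiPruE{k,\delta}(\cdots).
\]

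To compare $\cT_2$ and $\cT_1$, let $\cF := \frnt{\low{\cpi}{2\delta',\Est_\cpi^\xi} \cup \high{\cpi}{2\delta',\Est_\cpi^\xi}}$ be the pruning frontier, split into $\cF_L$ (low part) and $\cF_H$ (high part), and decompose each value as $\Val(\cT_j) = \sum_{v \in \cF} p_j(v) q_j(v) + r_j$, where $p_j(v)$ is the probability that $\cT_j$ reaches $v$, $q_j(v)$ is the expected output of $\cT_j$ conditioned on reaching $v$, and $r_j$ accounts for $1$-leaves not in $\descP{\cF}$. Up to $\cF$ the two protocols agree on the attacker strategy (both invoke $\pruAttack{k,\xi,\delta'}{\cpit}$ on $\HonA$-controlled nodes, by construction of the pruning-in-the-head attacker) and differ only in that $\HonB$ samples via fresh randomness in $\cT_1$ while $\HonBt$ samples via $\HonCont_\cpi^\xi$ in $\cT_2$. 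An application of \cref{lemma:SDmQueriesAlg} analogous to the proof of \cref{lemma:ApprxPruIdealPru} then bounds the statistical distance between the pre-$\cF$ distributions of $\cT_1$ and $\cT_2$ by $O(m\xi)$, giving $\sum_{v \in \cF} |p_1(v) - p_2(v)| + |r_1 - r_2| = O(m\xi)$.

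For the conditional values at a node $v \in \cF_H$, the estimator guarantee gives $\Val(\cpi_v) \geq 1 - 2\delta' - \xi$, and in $\cT_1$ both $\finalAttack{k,\xi,\delta'}{\cpi}$ (via $\HonCont_\cpi^\xi$) and $\HonB$ continue honestly from $v$, so $q_1(v) \geq \Val(\cpi_v) - O(m\xi) \geq 1 - 2\delta' - O(m\xi)$, whereas $q_2(v) \leq 1$, yielding $q_2(v) - q_1(v) \leq 2\delta' + O(m\xi)$. For $v \in \cF_L$ we have $\Val(\cpi_v) \leq 2\delta' + \xi$, and both $q_1(v)$ and $q_2(v)$ are honest-continuation expectations of $\Color_\cpi$ over (a $\LDist{\cpi_v}$-close distribution on) $\descP{v}$, giving $q_2(v) - q_1(v) \leq 2\delta' + O(m\xi)$ as well. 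Plugging these into the decomposition yields $\Val(\cT_2) - \Val(\cT_1) \leq 2\delta' + O(m\sqrt{\xi})$ modulo the error terms addressed next.

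The main subtlety, and the source of the two $\phiBalE$ summands together with the two remaining $\phiPruE$ summands, is that $\Est_\cpi^\xi$ and $\HonCont_\cpi^\xi$ each fail on a set of nodes of $\LDist{\cpi}$-mass at most $\xi$, while $\neigh_\cpi(2\delta',\xi) \leq m\sqrt{\xi}$ for a good choice of $\delta'$ via \cref{prop:BadIsSmall}; the total $\cpi$-mass of the auxiliary frontiers that must be excluded from the conditional-value argument is therefore $O(m\sqrt{\xi})$ (which is what enters the $\beta = 2\sqrt{\xi}$ slot of $\phiBalE$ once the $\neigh_\cpi$ contribution is folded into $\alpha$). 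We then invoke \cref{claim:SmallSetAprrx} for each such frontier to lift its $\cpi$-mass bound to a bound on the probability that the recursive biased-continuation attacker in $\cT_2$ actually reaches it, producing $\phiBalE(\cdots) + \phiPruE(\cdots)$ per frontier and accounting for the $2\phiBalE + 2\phiPruE$ residual. The main obstacle is accounting rather than conceptual: each error channel -- estimator error, continuator error, $\cpi$-versus-$\cpit$ drift, and ideal-versus-approximate-attacker deviation -- must be routed to the correct invocation of \cref{lemma:AttackPruned}, \cref{lemma:ApprxPruIdealPru}, or \cref{claim:SmallSetAprrx} with arguments lining up exactly with those in the claimed bound.
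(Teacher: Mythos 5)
Your high-level triangulation through $\cT_2 := (\realA{k,\xi,\delta'}{\cpit},\HonBt)$ is the right architecture, and the $\cT_0$-to-$\cT_2$ link via \cref{lemma:AttackPruned} correctly accounts for one of the three $\phiPru$ terms. However, the $\cT_2$-to-$\cT_1$ comparison is where the paper's proof does its actual work, and your argument there has a genuine gap in two places. First, the claimed pre-$\cF$ bound ``$\sum_{v\in\cF}|p_1(v)-p_2(v)|+|r_1-r_2|=O(m\xi)$ by an argument analogous to \cref{lemma:ApprxPruIdealPru}'' does not go through: that lemma's proof relies on the query distribution being exactly the honest one ($Q_i\equiv P_i$), whereas before $\cF$ the queries to $\HonCont_\cpi^\xi$ arise under the adversarial distribution induced by $\realA{k,\xi,\delta'}{\cpit}$, which can concentrate on the $\FailCont$ set. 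You therefore pick up an additional $\Fail$-hitting probability term, and bounding it requires a coupling (or \cref{claim:SmallSetAprrx}) argument that you have not stated. Second, and more seriously, the pointwise conditional-value claim ``$q_1(v)\geq \Val(\cpi_v)-O(m\xi)$ for $v\in\cF_H$'' is false in general: the continuation from $v$ in $\cT_1$ uses $\HonCont_\cpi^\xi$ at $\HonA$-controlled nodes, and the quality of $\HonCont_\cpi^\xi$ is only an $\LDist\cpi$-\emph{average} guarantee, so from an arbitrary node $v$ the $\FailCont$-hitting probability can be close to $1$. Your last paragraph acknowledges that failure sets must be ``excluded'', but the step going from a small $\LDist\cpi$-mass bound on $\Fail$ to a pointwise control on $q_1(v)-q_2(v)$ is exactly where circularity lurks (you would need to bound the $\Fail$-hitting probability in $\cT_1$ below $v$, but establishing that requires closeness to honest, which requires bounding $\Fail$-hitting).

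The paper's proof breaks this circularity with a hybrid sequence that your plan does not replicate. It interposes $\cpi_2$, which acts like $\cT_2$ until the set $\SafeLarge\eqdef\high{\cpi}{2\delta',\xi}\setminus\descP{\Fail}$ and then runs a \emph{genuinely honest} $\cpi$-execution, and $\cpi_3$, which is deliberately constructed to \emph{coincide} with $\cT_1$ on $\descP{\Fail}$. The $\Fail$-hitting probability then only needs to be bounded in $\cpi_2$, whose below-$\SafeLarge$ execution is honest and hence amenable to a Markov split (the set $\cS_1$ of frontier nodes from which $\Fail$ is reached w.p.\ $\ge\sqrt{\xi}$ has $\LDist\cpi$-mass $\le 2\sqrt{\xi}$, matching your $\beta=2\sqrt{\xi}$); and $\cpi_3$ vs.\ $\cT_1$ differ only at non-$\Fail$ nodes where $\HonCont$ and $\HonCont_\cpi^\xi$ are within $\xi$ pointwise, yielding the clean $m\xi$ term via the elementary $m$-hybrid. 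This hybrid design is a conceptual device, not mere bookkeeping; your characterization of the obstacle as ``accounting rather than conceptual'' understates it, and without an equivalent mechanism your argument as written does not close.
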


The rest of this section is dedicated to proving \cref{lemma:final}.

\subsubsection{Proving \cref{lemma:final}}\label{subsec:PruningInTheHead Analysis}
The  proof follow the  proof we sketched  above for the ideal  pruning-in-the-head attacker. When moving to the approximated case, however, we need to consider \emph{failing transcripts} --- transcripts on which the approximating oracles fail to give a good approximation. As long as the approximated pruning-in-the-head attacker did not generate a failing transcript, it will succeed in biasing the protocol almost as well as its ideal variant. Thus, the  heart of the proof is  showing that the approximated pruning-in-the-head attacker  generates a failing transcript with only low probability. By definition, the probability of the original protocol to generate such failing transcripts is low, so we can use \cref{claim:SmallSetAprrx} to argue that the recursive approximated biased-continuation attacker, when attacking the \emph{approximated pruned protocol}, also generates  failing transcripts with only low probability. We use this fact to argue that the approximated pruning-in-the-head attacker such transcripts with only low probability as well.

The proof  handles  separately the failing transcripts  into that transcripts  \emph{precede} pruned transcripts, \ie the execution of the protocol has not pruned before generated these transcripts, and the rest of the failing transcripts (\ie failing transcripts  \emph{preceded by} pruned transcripts). Specifically, we make the following observations:
\begin{enumerate}
\item Failing transcripts that precede pruned transcripts (high- or low-value transcripts).

The probability of the approximated pruning-in-the-head attacker to reach these transcripts is the same as the recursive approximated biased-continuation attacker, which we already know is low.

\item Failing transcript preceded by pruned transcripts. We consider  the following two sub-cases.
\begin{enumerate}
\item The probability of the original protocol to generate pruned transcripts is low.

In this case, it suffices to show that the approximated pruning-in-the-head attacker generate pruned transcripts with low probability as well. By \cref{claim:SmallSetAprrx}, the probability of the recursive approximated biased-continuation attacker to generate pruned transcripts is low, and until reaching such transcripts, the approximated pruning-in-the-head attacker acts as the recursive approximated biased-continuation attacker.

\item The probability of the original protocol to generate pruned transcripts is high.

In this case, since, by definition, the overall probability of generating failing transcripts is low, the probability of the original protocol to generate failing transcripts \emph{given that the protocol reached a pruned transcript} is low. Once it reaches  a pruned transcript the pruning-in-the-head attacker behaves just like the original protocol. Thus, the probability the pruning-in-the-head attacker generates failing transcripts, even conditioning that is generates pruned transcript, is low.
\end{enumerate}
\end{enumerate}
All in all, we get that the probability that the approximated pruning-in-the-head attacker  generates failing transcripts is low, and thus the intuition from the ideal case applies.

Moving to the formal proof,  fix  $k>0$ (the proof for $k=0$ is immediate) and $\mu\in(0,1)$. To ease notation ahead, let $\gamma_\cpi(\delta',\xi) = \neigh_{\cpi}(2\delta',\xi)+12\cdot m\cdot \xi/\delta'$. We define four hybrid protocols to establish the above arguments step by step. The proof of the lemma will follow by showing that these hybrid protocols' expected outcomes as are close to one another.

\newcommand{\SafeLarge}{\mathcal{S}\mathsf{afe}\mathcal{L}\mathsf{arge}}
\newcommand{\FailCont}{\mathcal{F}\mathsf{ail}\mathcal{C}\mathsf{ont}}
\newcommand{\FailEst}{\mathcal{F}\mathsf{ail}\mathcal{E}\mathsf{st}}

Let $\FailCont \eqdef \set{u\in\Vertices(\cpi)\colon \SDP{\HonCont_\cpi^\xi(u)}{\HonCont(u)} > \xi}$, \ie   transcripts on which the approximated honest continuator $\HonCont_\cpi^\xi$ acts significantly different from the ideal honest continuator $\HonCont$. Let $\FailEst \eqdef \set{u\in\Vertices(\cpi) \colon \Val(\cpi_u)<1-2\delta'-\xi \land \Est_\cpi^\xi(u)>1-2\delta'}$, \ie   low-value transcripts which the approximated estimator $\Est_\cpi^\xi$ mistakenly estimates their value to be high, and let $\Fail \eqdef \FailCont \cup \FailEst$. Finally, let $\SafeLarge \eqdef \high{\cpi}{2\delta',\xi} \setminus \descP{\Fail}$, \ie    high-value transcripts that are not descendants of failing transcripts.

We are now ready to define the hybrid protocols, all of which share the common output function of the original protocol $\cpi$ (\ie the function determines the common output of full transcripts of $\cpi$, see \cref{def:ProtocolValue}).
\begin{itemize}
	\item \textbf{Protocol $\cpi_1$:} This protocol is just protocol $\paren{\pruAttack{k,\delta',\xi}{\cpit},\HonBt}$, \ie the approximated recursive biased-continuation attacker attacks the approximated pruned protocol.

	\item \textbf{Protocol $\cpi_2$:} Both parties act as in $\cpi_1$ until (if at all) the first time the protocol's transcript is in $\SafeLarge$. In the rest of the protocol, the parties act like in $\cpi$ (which also means following $\cpi$'s control scheme).

	\item \textbf{Protocol $\cpi_3$:} Both parties act as in $\cpi_2$ until (if at all) the first time the protocol's transcript is in $\Fail$. In the rest of the protocol, the parties act like in $\paren{\finalAttack{k,\xi,\delta'}{\cpi},\HonB}$ (which also means following $\paren{\finalAttack{k,\xi,\delta'}{\cpi},\HonB}$'s control scheme, which is identical to $\cpi$'s).

	\item \textbf{Protocol $\cpi_4$:} This protocol is just protocol $\paren{\finalAttack{k,\xi,\delta'}{\cpi},\HonB}$, \ie the approximated pruning-in-the-head attacker attacks the original protocol $\cpi$. (This is the protocol whose value we are trying to analyze.)
\end{itemize}
The proof of the lemma immediately  follows the next sequence of claims.

\begin{claim}\label{claim:Pi1Pi2}
	It holds that $\Val(\cpi_2) \geq \Val(\cpi_1) - 2\delta' -\xi$.
\end{claim}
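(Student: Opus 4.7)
The proof is a direct hybrid argument. By construction, $\cpi_1$ and $\cpi_2$ agree at every node that is not a proper descendant of the frontier $\cF \eqdef \frnt{\SafeLarge}$; they diverge only upon first reaching some $u \in \cF$, at which point $\cpi_1$ continues running the approximated biased-continuation attack on $\cpit$ while $\cpi_2$ switches to an honest execution of $\cpi$. Consequently, the event of entering $\descP{\cF}$ has the same probability $p$ under $\LDist{\cpi_1}$ and $\LDist{\cpi_2}$, and the conditional distribution on transcripts that never enter $\descP{\cF}$ is identical. Writing
\begin{equation*}
\Val(\cpi_j) \;=\; (1-p)\cdot V_{\mathrm{out}} \;+\; p\cdot E_j,
\end{equation*}
where $V_{\mathrm{out}}$ is the common conditional expected outcome outside $\descP{\cF}$ and $E_j$ is the conditional expected outcome of $\cpi_j$ given that the transcript enters $\descP{\cF}$, reduces the claim to showing $E_2 - E_1 \geq -(2\delta'+\xi)$.

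The bound $E_1 \leq 1$ is trivial, so the substantive step is lower bounding $E_2$, and this is where the careful carve-out in the definition of $\SafeLarge$ does all the work. Given that the transcript enters $\descP{\cF}$ at some $u \in \cF$, the protocol $\cpi_2$ plays $(\HonA,\HonB)$ honestly from $u$, so its conditional value given entry at $u$ is exactly $\Val(\cpi_u)$. Since $u \in \SafeLarge = \high{\cpi}{2\delta',\xi}\setminus \descP{\Fail}$, in particular $u \notin \FailEst$; combined with $\Est_\cpi^\xi(u) \geq 1-2\delta'$ (the defining property of $\high{\cpi}{2\delta',\xi}$), the definition of $\FailEst$ forces $\Val(\cpi_u) \geq 1-2\delta'-\xi$. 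Averaging over the entry distribution on $\cF$ yields $E_2 \geq 1-2\delta'-\xi$, and hence
\begin{equation*}
\Val(\cpi_2) - \Val(\cpi_1) \;=\; p\cdot(E_2 - E_1) \;\geq\; -p\cdot(2\delta'+\xi) \;\geq\; -(2\delta'+\xi),
\end{equation*}
as claimed.

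There is no real obstacle: the sets $\SafeLarge$ and $\FailEst$ are engineered precisely so that the honest continuation from any frontier node is within $2\delta'+\xi$ of the maximum possible outcome, and the hybrid decomposition above propagates this small local gap to the global gap between $\Val(\cpi_1)$ and $\Val(\cpi_2)$. The only point worth verifying is that the coupling on the prefix up to $\cF$ really does give the common entry probability $p$ and the common outside-$\descP{\cF}$ conditional distribution, but this is immediate from the definition of $\cpi_2$ as ``play $\cpi_1$ until first reaching $\SafeLarge$.''
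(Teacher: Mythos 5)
Your proof is correct and is essentially the paper's own argument, spelled out a bit more explicitly: both couple $\cpi_1$ and $\cpi_2$ on the common prefix up to $\frnt{\SafeLarge}$ and then bound the per-node gap using $\Val((\cpi_1)_u)\leq 1$ and $\Val((\cpi_2)_u)=\Val(\cpi_u)\geq 1-2\delta'-\xi$, the latter following from $u\in\high{\cpi}{2\delta',\xi}\setminus\FailEst$. Your explicit $(1-p)V_{\mathrm{out}}+pE_j$ decomposition is just a slightly more formal packaging of the same coupling step.
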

\begin{proof}
	Note that  protocols $\cpi_1$ and $\cpi_2$ are identical until the first time the protocol's transcript is in $\SafeLarge$. Hence, we can couple random executions of protocols  $\cpi_1$ and $\cpi_2$, so that they are the same until   the first time the protocol's transcript is in $\SafeLarge$. Hence, for proving that claim it suffices to show that $\Val((\cpi_1)_u) - \Val((\cpi_2)_u) \leq 2\delta' +\xi$, for every $u\in\frnt{\SafeLarge}$.

	 Fix  $u\in\frnt{\SafeLarge}$. Since $u\in \high{\cpi}{2\delta',\xi}$, it holds that $\Est_\cpi^\xi(u) \geq 1- 2\delta'$. Since $u\notin \FailEst$, it holds that $\Val(\cpi_u)\geq 1-2\delta' - \xi$. Once visiting $u$, the parties in $\cpi_2$ act like in $\cpi$. Thus, it holds that $\Val((\cpi_2)_u) = \Val(\cpi_u)$. Since it is always the case that $\Val((\cpi_1)_u)\leq 1$, we have $\Val((\cpi_1)_u) - \Val((\cpi_2)_u) \leq 2\delta' + \xi$.
\end{proof}

\begin{claim}\label{claim:Pi2Pi3}
	It holds that
	\begin{align*}
	\Val(\cpi_3) &\geq \Val(\cpi_2) - \sqrt{\xi} - 2\cdot\phiBalE{k,\delta}\paren{\gamma_\cpi(\delta',\xi),2\sqrt{\xi},2\cdot m \cdot \xi,m,\delta',\mu}  -2\cdot\phiPruE{k,\delta}\paren{\gamma_\cpi(\delta',\xi),\xi,m,\delta',\mu}. \nonumber
	\end{align*}
\end{claim}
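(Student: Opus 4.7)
The plan is to couple random executions of $\cpi_2$ and $\cpi_3$ so that they agree up to the first time the transcript enters $\Fail$. Since both protocols share the common output function of $\cpi$ and, by their definition, differ only in their behavior from $\Fail$ onward, the coupling immediately gives $|\Val(\cpi_3)-\Val(\cpi_2)|\leq \ppr{\LDist{\cpi_2}}{\descP{\Fail}}$. The remainder of the proof will bound this hitting probability by $\sqrt{\xi}+2\cdot\phiBalE{k,\delta}(\cdots)+2\cdot\phiPruE{k,\delta}(\cdots)$ with the parameters in the claim's statement. The key numerical input is $\ppr{\LDist{\cpi}}{\descP{\Fail}}\leq 2\xi$, which follows from $\HonCont_\cpi^{\xi}$ being a $\xi$-honest-continuator and $\Est_\cpi^{\xi}$ being a $\xi$-estimator for $\cpi$.

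I would then split $\frnt{\Fail}$ into two frontiers: $\cF_1=\frnt{\Fail}\setminus\descP{\SafeLarge}$, the failing nodes first reached before any $\SafeLarge$ node, and $\cF_2=\frnt{\Fail}\cap\descP{\SafeLarge}$, the failing nodes first reached after some $\SafeLarge$ node. On every path to $\cF_1$, the execution of $\cpi_2$ coincides with that of $\cpi_1=(\pruAttack{k,\delta',\xi}{\cpit},\HonBt)$, because $\cpi_2$ diverges from $\cpi_1$ only at $\SafeLarge$, which by definition is not an ancestor of any node in $\cF_1$. Since $\ppr{\LDist{\cpi}}{\descP{\cF_1}}\leq 2\xi\leq 2\sqrt{\xi}$, applying \cref{claim:SmallSetAprrx} to $\cF_1$ with $\beta=2\sqrt{\xi}$ yields a first copy of $\phiBalE{k,\delta}(\gamma_{\cpi}(\delta',\xi),2\sqrt{\xi},2\cdot m\cdot\xi,m,\delta',\mu)+\phiPruE{k,\delta}(\gamma_{\cpi}(\delta',\xi),\xi,m,\delta',\mu)$.

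The main obstacle is $\cF_2$, because \cref{claim:SmallSetAprrx} cannot be applied to it directly: after reaching $\SafeLarge$, $\cpi_2$ follows the honest protocol $\cpi$ rather than the attack on $\cpit$, so the probability of subsequently reaching $\Fail$ is not controlled by the attack-based bound of \cref{claim:SmallSetAprrx}. My plan is to split $\frnt{\SafeLarge}$ by a Markov-style argument. Let $\cS_{\mathrm{bad}}=\set{u\in\frnt{\SafeLarge}\colon \ppr{\LDist{\cpi_u}}{\descP{\Fail}}>\sqrt{\xi}}$. Averaging over $\frnt{\SafeLarge}$ in $\cpi$ gives $\ppr{\LDist{\cpi}}{\descP{\cS_{\mathrm{bad}}}}\cdot\sqrt{\xi}< \ppr{\LDist{\cpi}}{\descP{\Fail}}\leq 2\xi$, hence $\ppr{\LDist{\cpi}}{\descP{\cS_{\mathrm{bad}}}}\leq 2\sqrt{\xi}$. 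Since $\cpi_2$ agrees with $\cpi_1$ on all paths reaching $\frnt{\SafeLarge}$, \cref{claim:SmallSetAprrx} applied to the frontier $\cS_{\mathrm{bad}}$ with $\beta=2\sqrt{\xi}$ yields the second copy of $\phiBalE{k,\delta}+\phiPruE{k,\delta}$ with the same arguments as above. For $u\in\frnt{\SafeLarge}\setminus\cS_{\mathrm{bad}}$, $\cpi_2$ acts honestly from $u$ onward, so the conditional probability of later reaching $\Fail$ is at most $\sqrt{\xi}$, contributing the standalone $\sqrt{\xi}$ term. Summing the three contributions gives exactly the claim's bound; the Markov trick is what converts the uniform bound $2\xi$ on $\ppr{\LDist{\cpi}}{\descP{\Fail}}$ into a $2\sqrt{\xi}$ bound on a frontier to which \cref{claim:SmallSetAprrx} does apply, explaining why $\sqrt{\xi}$ appears both as a standalone term and inside the $\phiBalE{k,\delta}$ arguments.
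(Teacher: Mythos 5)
Your proposal is correct and follows essentially the same route as the paper: couple $\cpi_2$ and $\cpi_3$ until the transcript first hits $\Fail$, split the frontier of $\Fail$ according to whether it is reached before or after $\SafeLarge$, handle the pre-$\SafeLarge$ part by \cref{claim:SmallSetAprrx} directly, and handle the post-$\SafeLarge$ part by a Markov split of $\frnt{\SafeLarge}$ into a low-density ``bad'' subfrontier (again bounded via \cref{claim:SmallSetAprrx}) and a complementary subfrontier contributing the standalone $\sqrt{\xi}$. The only cosmetic differences are that your labels $\cF_1,\cF_2$ are swapped relative to the paper's, and you invoke \cref{claim:SmallSetAprrx} with $\beta = 2\sqrt{\xi}$ in both applications directly, whereas the paper uses $\beta = 2\xi$ for the easy piece and then appeals to monotonicity of $\phiBalE{k,\delta}$ in its second argument to combine the two terms.
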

\begin{proof}
	We prove the claim by proving the following, stronger statement.
		\begin{align*}
	\SDP{\LDist{\cpi_2}}{\LDist{\cpi_3}} &\leq \sqrt{\xi} + 2\cdot\phiBalE{k,\delta}\paren{\gamma_\cpi(\delta',\xi),2\sqrt{\xi},2\cdot m \cdot \xi,m,\delta',\mu}  +2\cdot\phiPruE{k,\delta}\paren{\gamma_\cpi(\delta',\xi),\xi,m,\delta',\mu}. \nonumber
	\end{align*}
	Note that protocols $\cpi_2$ and $\cpi_3$ are identical until the first time the protocol's transcript is in $\Fail$. Hence, we can couple random executions of protocols $\cpi_2$ and $\cpi_3$, so that the executions are the same until   the first time the protocol's transcript is in $\Fail$. Thus, it suffices to show that
	\begin{align}\label{eq:P2P3_1}
	\ppr{\LDist{\cpi_2}}{\descP{\Fail}} &\leq \sqrt{\xi} + 2\cdot\phiBalE{k,\delta}\paren{\gamma_\cpi(\delta',\xi),2\sqrt{\xi},2\cdot m \cdot \xi, m,\delta',\mu} \\
	&\quad +2\cdot\phiPruE{k,\delta}\paren{\gamma_\cpi(\delta',\xi),\xi,m,\delta',\mu}. \nonumber
	\end{align}
	Let $\cF_1 = \Fail\cap \descP{\SafeLarge}$ and $\cF_2 = \Fail \setminus \cF_1$. Since
	\begin{align}\label{eq:P2P3_2}
	\ppr{\LDist{\cpi_2}}{\descP{\Fail}} \leq \ppr{\LDist{\cpi_2}}{\descP{\cF_1}} + \ppr{\LDist{\cpi_2}}{\descP{\cF_2}},
	\end{align}
	it suffices to bound the two summands in the right-hand side of \cref{eq:P2P3_2}. We begin by bounding the second summand. Since $\cF_2\subseteq \Fail$, and by the definitions of $\HonCont_\cpi^\xi$ and $\Est_\cpi^\xi$, it holds that
	\begin{align}\label{eq:P2P3_3}
	\ppr{\LDist{\cpi}}{\descP{\F_2}} \leq \ppr{\LDist{\cpi}}{\descP{\Fail}} \leq 2\xi.
	\end{align}
	As we did in the proof of the previous claim, we couple random executions of protocols $\cpi_1$ and $\cpi_2$, so that the executions are the same until the first time the protocol's transcript is in $\SafeLarge$. Since transcripts in $\cF_2$ are not descendants of $\SafeLarge$, it holds that
	\begin{align}\label{eq:P2P3_4}
	\ppr{\LDist{\cpi_2}}{\descP{\cF_2}} &= \ppr{\LDist{\cpi_1}}{\descP{\cF_2}} \\
	&= \ppr{\LDist{\pruAttack{k,\xi,\delta'}{\cpit},\HonBt}}{\descP{\cF_2}} \nonumber\\
	&\leq \phiBalE{k,\delta}\paren{\gamma_\cpi(\delta',\xi),2\xi,2\cdot m \cdot \xi, m,\delta',\mu} +\phiPruE{k,\delta}\paren{\gamma_\cpi(\delta',\xi),\xi,m,\delta',\mu}, \nonumber
	\end{align}
	where the second equality follows from the definition of $\cpi_1$, and the the inequality follows from \cref{claim:SmallSetAprrx}.

	We now bound the first summand in the right-hand side of \cref{eq:P2P3_2}. Let
	\begin{align*}
	\cS_1 = \set{u\in\frnt{\SafeLarge} \colon \ppr{\LDist{\cpi_u}}{\descP{\cF_1}} \geq \sqrt{\xi}},
	\end{align*}
	and
	\begin{align*}
	\cS_2 = \set{u\in\frnt{\SafeLarge} \colon 0<\ppr{\LDist{\cpi_u}}{\descP{\cF_1}} < \sqrt{\xi}}.
	\end{align*}
	Namely, $\cS_1$ are those nodes (transcripts) in the frontier of $\SafeLarge$ from which there is high probability (larger than $\sqrt{\xi}$) that $\cpi$ reaches $\cF_1$. On the other hand, $\cS_2$ are those nodes in the frontier of $\SafeLarge$ from which there is low probability (positive, but less than $\sqrt{\xi}$) that $\cpi$ reaches $\cF_1$. Using the above coupling between $\cpi_1$ and $\cpi_2$, it follows that
	\begin{align}\label{eq:P2P3_splitF1}
	\ppr{\LDist{\cpi_2}}{\descP{\cF_1}} \leq \ppr{\LDist{\cpi_1}}{\descP{\cS_1}} + \ppr{\LDist{\cpi_2}}{\descP{\cF_1 \cap\descP{\cS_2}}}.
	\end{align}
	Again, we bound each term in the right-hand side of the above equation separately. For the first term of \cref{eq:P2P3_splitF1}, it holds that
	\begin{align*}
	2\xi \geq \ppr{\LDist{\cpi}}{\descP{\Fail}} \geq \ppr{\LDist{\cpi}}{\descP{\cF_1}} \geq \ppr{\LDist{\cpi}}{\descP{\cS_1}} \cdot \sqrt{\xi},
	\end{align*}
	and thus $\ppr{\LDist{\cpi}}{\descP{\cS_1}} \leq 2\sqrt{\xi}$. Applying  \cref{claim:SmallSetAprrx} again yields that
	\begin{align}\label{eq:P2P3_S1}
	 \ppr{\LDist{\cpi_1}}{\descP{\cS_1}} &\leq \ppr{\LDist{\pruAttack{k,\xi,\delta'}{\cpit},\HonBt}}{\descP{\cS_2}} \\
	 &\leq \phiBalE{k,\delta}\paren{\gamma_\cpi(\delta',\xi),2\sqrt{\xi},2\cdot m \cdot \xi, m,\delta',\mu} +\phiPruE{k,\delta}\paren{\gamma_\cpi(\delta',\xi),\xi,m,\delta',\mu}. \nonumber
	\end{align}
	As for the second term of \cref{eq:P2P3_splitF1}, we write
	\begin{align}\label{eq:P2P3_S2}
	\ppr{\LDist{\cpi_2}}{\descP{\cF_1 \cap\descP{\cS_2}}} &= \sum_{u\in\cS_2} \ppr{\LDist{\cpi_2}}{\descP{u}}\cdot \ppr{\LDist{(\cpi_2)_u}}{\descP{\cF_2}} \\
	&= \sum_{u\in\cS_2} \ppr{\LDist{\cpi_2}}{\descP{u}}\cdot \ppr{\LDist{\cpi_u}}{\descP{\cF_2}} \nonumber\\
	&\leq \sqrt{\xi} \cdot \sum_{u\in\cS_2} \ppr{\LDist{\cpi_2}}{\descP{u}} \nonumber\\
	&\leq \sqrt{\xi}, \nonumber
	\end{align}
	where the second inequality follows form the definition of $\cpi_2$.

	Plugging \cref{eq:P2P3_S1,eq:P2P3_S2} into \cref{eq:P2P3_splitF1} yields that
	\begin{align}\label{eq:P2P3_boundfF1}
	\ppr{\LDist{\cpi_2}}{\descP{\cF_1}} \leq \sqrt{\xi} + \phiBalE{k,\delta}\paren{\gamma_\cpi(\delta',\xi),2\sqrt{\xi},2\cdot m \cdot \xi, m,\delta',\mu} +\phiPruE{k,\delta}\paren{\gamma_\cpi(\delta',\xi),\xi,m,\delta',\mu}.
	\end{align}
	\cref{eq:P2P3_1} follows by plugging \cref{eq:P2P3_boundfF1,eq:P2P3_4} into \cref{eq:P2P3_2}, and noting that replacing $\xi$ by $\sqrt{\xi}$ in the second variable of the function $\phiBalE{k,\delta}$ only increases it.
\end{proof}

\begin{claim}\label{claim:Pi3Pi4}
It holds that $\Val(\cpi_4)\geq \Val(\cpi_3) - m\cdot \xi$.
\end{claim}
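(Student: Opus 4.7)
The plan is to establish the stronger statistical-distance bound $\SDP{\LDist{\cpi_3}}{\LDist{\cpi_4}} \le m\cdot\xi$, which immediately implies the claim since the common output is $\zo$-valued. The key observation is that the only mechanism distinguishing the two protocols is which party, at which transcript, invokes the approximated honest continuator $\HonCont_\cpi^\xi$ versus sampling from honest coins. Explicitly, in $\cpi_4$ the attacker uses $\HonCont_\cpi^\xi(u)$ precisely when $u\in\descP{\low{\cpi}{2\delta',\xi}\cup\high{\cpi}{2\delta',\xi}}$ and $\HonB$ always plays with honest coins; in $\cpi_3$, which mirrors $\cpi_2$ until $\Fail$ and $\cpi_4$ thereafter, the $\HonBt$-turns before $\SafeLarge$ invoke $\HonCont_\cpi^\xi$ while all moves after $\SafeLarge$ use honest coins. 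All remaining choices (attacker randomness for $\pruAttack{k,\xi,\delta'}{\cpit}$, $\HonB$'s coins) can be synchronized.

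I would then couple the executions of $\cpi_3$ and $\cpi_4$ by: using the same $\HonB$-coins in both, the same internal randomness for the stateless attacker $\pruAttack{k,\xi,\delta'}{\cpit}$, and, for every transcript $u$, a single joint sample realizing the minimal-disagreement coupling of $\HonCont_\cpi^\xi(u)$ with the ideal continuator $\HonCont_\cpi(u)$ via \cref{fact:coupling}. Under this coupling, the two executions can diverge at a round only if the coupled pair $(\HonCont_\cpi^\xi(u),\HonCont_\cpi(u))$ disagrees at the common prefix $u$; by the chain rule, all other operations (coin-based honest sampling, repeated $\HonCont_\cpi$-calls, and $\pruAttack$) give identical results in the two runs. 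Summing the per-round disagreement probabilities and applying \cref{lemma:SDmQueriesAlg} with the $\xi$-honest-continuation guarantee of \cref{def:AppxHonC} (which yields an expected per-round discrepancy of at most $\xi$) over the $m$ rounds then delivers the required $m\cdot\xi$ bound.

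The main technical obstacle is verifying the structural identity: if every invocation of $\HonCont_\cpi^\xi$ is swapped for the perfect continuator $\HonCont_\cpi$, then the resulting ``ideal'' variants of $\cpi_3$ and $\cpi_4$ induce identical leaf distributions. This requires a case analysis over the current prefix's phase --- whether it lies in $\descP{\Fail}$, $\descP{\SafeLarge}$, $\descP{\low{\cpi}{2\delta',\xi}}$, or none of these --- and repeated use of the chain-rule fact that, whichever party repeatedly calls $\HonCont_\cpi$, the resulting distribution of the remaining bits equals $\LDist{\cpi_u}$. The subtle case is entering the low-value pruned region $\descP{\low{\cpi}{2\delta',\xi}}$ before $\SafeLarge$ or $\Fail$, where $\cpi_3$ transfers \emph{all} remaining control to $\HonBt$ whereas $\cpi_4$ keeps $\cpi$'s original control scheme with the attacker drawing from $\HonCont_\cpi^\xi$; the chain rule shows both residual distributions coincide with $\LDist{\cpi_u}$, closing the argument and reducing the entire divergence to the $\HonCont_\cpi^\xi$-vs.-$\HonCont_\cpi$ discrepancies accounted for by the coupling.
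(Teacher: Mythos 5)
Your high-level plan (establish $\SDP{\LDist{\cpi_3}}{\LDist{\cpi_4}} \le m\cdot\xi$ by exhibiting a per-round coupling between the two executions and bounding the disagreement probability at each round) matches the paper's route. However, your final bounding step is where the argument goes wrong: you invoke "the $\xi$-honest-continuation guarantee of \cref{def:AppxHonC} (which yields an expected per-round discrepancy of at most $\xi$)." That guarantee is a statement over $\LDist{\cpi}$, the leaf distribution of the \emph{honest} protocol. The transcripts encountered in an execution of $\cpi_3$ (or $\cpi_4$) are \emph{not} distributed according to $\LDist{\cpi}$ — they are heavily biased by the attacker toward high-value regions — so there is no reason the expected per-round error under the attacked distribution is $O(\xi)$. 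Transporting the honest-continuator guarantee across such a distributional mismatch is precisely the hard "unbalanced nodes" work carried out in \cref{sec:VisitUnbal} and \cref{lemma:SDmQueriesAlg} for other steps of the argument, and it does not come for free.

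The step you are missing is that the hybrid $\cpi_3$ is defined to \emph{agree with $\cpi_4$ on $\descP{\Fail}$}, and $\Fail \supseteq \FailCont$, where $\FailCont = \set{u : \SDP{\HonCont_\cpi^\xi(u)}{\HonCont_\cpi(u)} > \xi}$ is, by definition, exactly the set of transcripts on which the honest continuator fails its $\xi$-guarantee. This has the effect of converting the probabilistic guarantee into a \emph{deterministic} one: at any node $u\in\descP{\Fail}$ the two per-round oracle functions are identical (zero discrepancy), and at any $u\notin\descP{\Fail}$ the per-query statistical distance is at most $\xi$ \emph{unconditionally}, not just in expectation over some distribution. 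With a worst-case $\xi$ bound per round you don't need to know anything about the transcript distribution, and the standard $m$-step hybrid gives $m\cdot\xi$ directly; no unbalancedness analysis and no distributional honest-continuation guarantee is needed. So your coupling and "structural identity" reduction are fine (and in fact more elaborate than necessary), but the concluding step should use the deterministic per-node bound obtained from carving out $\Fail$, rather than the probabilistic guarantee over $\LDist{\cpi}$.
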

\begin{proof}
	We prove the claim by proving the following, stronger, statement:
	\begin{align}
	\SDP{\LDist{\cpi_3}}{\LDist{\cpi_4}}\leq m\cdot \xi.
	\end{align}
	Let $\Larges = \high{\cpi}{2\delta,\Est_\cpi^\xi}$ and $\Smalls = \low{\cpi}{2\delta,\Est_\cpi^\xi}$.
	We start by defining two randomized functions $f,g\colon \Vertices(\cpi)\setminus\Leaves(\cpi)\to\Vertices(\cpi)$, to simulate $\cpi_3$ and $\cpi_4$ respectively. Let
	\begin{align*}
	f(u) = \begin{cases}
	\text{sample $\ell \la \paren{\pruAttack{k,\delta',\xi}{\cpit},\HonBt}_u$; return $\ell_{1,\ldots,\size{u}+1}$} & u\notin \descP{\Fail\cup \Larges\cup\Smalls} \\
	\HonCont(u)_{1,\ldots,\size{u}+1} & u\in\descP{\Larges}\setminus \descP{\Fail} \\
	\HonCont^\xi(u)_{1,\ldots,\size{u}+1} & u\in \descP{\Smalls}\setminus\descP{\Fail \cup \Larges} \\
	\text{sample $\ell \la \paren{\finalAttack{k,\xi,\delta'}{\cpi},\HonB}_u$; return $\ell_{1,\ldots,\size{u}+1}$} & u\in\descP{\Fail},
	\end{cases}
	\end{align*}
	and let
	\begin{align*}
	g(u) = \begin{cases}
	f(u) & u\notin \descP{\Fail\cup \Larges\cup\Smalls} \\
	\HonCont^\xi(u)_{1,\ldots,\size{u}+1} & u\in\descP{\Larges}\setminus \descP{\Fail}\cap \ctrl{\cpi}{\HonA} \\
	f(u) & u\in\descP{\Larges}\setminus \descP{\Fail}\cap \ctrl{\cpi}{\HonB} \\
	f(u) & u\in \descP{\Smalls}\setminus\descP{\Fail \cup \Larges}\cap \ctrl{\cpi}{\HonA} \\
	\HonCont(u)_{1,\ldots,\size{u}+1} & u\in \descP{\Smalls}\setminus\descP{\Fail \cup \Larges}\cap \ctrl{\cpi}{\HonB} \\
	f(u) & u\in\descP{\Fail}.
	\end{cases}.
	\end{align*}
	Namely, for $u\in\descP{\Larges}\setminus \descP{\Fail}\cap \ctrl{\cpi}{\HonA}$, $f(u) = \HonCont(u)_{1,\ldots,\size{u}+1}$ while $g(u)=\HonCont^\xi(u)_{1,\ldots,\size{u}+1}$, where for $u\in \descP{\Smalls}\setminus\descP{\Fail \cup \Larges}\cap \ctrl{\cpi}{\HonB}$, $f(u)=\HonCont^\xi(u)_{1,\ldots,\size{u}+1}$ while $g(u)=\HonCont(u)_{1,\ldots,\size{u}+1}$. For any other $u$, $f(u)=g(u)$.

	Let $\sfH^h$ be the process that repeatedly calls the function $h$ with the answer of the previous call, staring with $h(\Root(\cpi))$, until reaching a leaf. It is easy to verify that $\sfH^f\equiv \LDist{\cpi_3}$ and $\sfH^g\equiv \LDist{\cpi_4}$. Thus, is suffices to bound $\SDP{\sfH^f}{\sfH^g}$. By the definitions of $f$ and $g$, it holds that $\SDP{f(u)}{g(u)} = 0$ if $u\in \Fail$ and that $\SDP{f(u)}{g(u)}\leq \SDP{\HonCont^\xi(u)}{\HonCont(u)} \leq \xi$ if $u\notin \Fail$, where the last inequality follows form the definition of $\Fail$.
	The claim follows since $\sfH^h$ makes at most $m$ calls to $h$.
	\remove{

	\Bnote{Old Proof:}
	\Inote{same, use coupling }\Bnote{No coupling here} $\cpi_3$ and $\cpi_4$ are identical, condition \Inote{but choice. You mean untill? , Your text is  non clear}\Bnote{I don't mean until.} on the protocol's transcript is descendant of $\Fail$. We analyze the different between the protocols until the first time the protocol's transcript is in $\Fail$. Let $u$ be a transcript in $\Tree(\cpi)$. Let $\Larges = \high{\cpi}{2\delta,\Est_\cpi^\xi}$ and $\Smalls = \low{\cpi}{2\delta,\Est_\cpi^\xi}$. We analyze the next message sent by $(\cpi_3)_u$ and $(\cpi_4)_u$, according to the set containing $u$:\Inote{do you mean according to the value of $u$} 
	\begin{description}
		\item[$u\notin \descP{\Fail\cup \Larges\cup\Smalls}$:] $u$ is controlled by the same party in both $\cpi_3$ and $\cpi_4$ (determined by the control scheme of the original protocol $\cpi$). In both protocols. the next message is determined \Inote{is distributed according  to the next message in    $\paren{\pruAttack{k,\delta',\xi}{\cpit},\HonBt}_u$} by the protocol $\paren{\pruAttack{k,\delta',\xi}{\cpit},\HonBt}$.

		\item[$u\in\descP{\Larges}\setminus \descP{\Fail}$:] $u$ is controlled by the same party in both $\cpi_3$ and $\cpi_4$ (determined by the control scheme of the original protocol $\cpi$). In $\cpi_3$, the next message is determined \Inote{same as above. Please change all over}by the original protocol $(\HonA,\HonB)$, namely by $\HonCont(u)$. In $\cpi_4$, the next message is determined by the  protocol $(\HonAt,\HonB)$, namely by $\HonCont_\cpi^\xi(u)$ if $u$ is under $\HonA$'s control, or by $\HonCont(u)$ if $u$ is under $\HonB$'s control.

		\item[$u\in \descP{\Smalls}\setminus\descP{\Fail \cup \Larges}$:] In $\cpi_3$, $u$ is controlled by $\HonB$ and the next message is determined by $\HonBt$, namely by $\HonCont_\cpi^\xi(u)$. In $\cpi_4$, the party controlling $u$ is the same one that controls $u$ in the original protocol $\cpi$. The next message is determined by $(\HonAt,\HonB)$, namely by $\HonCont_\cpi^\xi(u)$ if $u$ is under $\HonA$'s control, or by $\HonCont(u)$ if $u$ is under $\HonB$'s control.
	\end{description}
It follows that the differences between $\cpi_3$ and $\cpi_4$ are in transcripts where one protocol calls $\HonCont_\cpi(u)$ and the other calls $\HonCont_\cpi^\xi(u)$, all for transcripts not in $\Fail$ (and thus also not in $\FailCont$). Since there are at most $\rnd$ such calls, the proof follows.

} 
\end{proof}

Using the above claims, we can formally prove \cref{lemma:final}.
\begin{proof}[Proof of \cref{lemma:final}]
	Fix $k\in\N$ and $\mu\in(0,1)$.
	\cref{claim:Pi1Pi2,claim:Pi2Pi3,claim:Pi3Pi4} yields that
	\begin{align*}
	\Val\paren{\finalAttack{k,\xi,\delta'}{\cpi},\HonB} &\geq \Val\paren{\pruAttack{k,\delta',\xi}{\cpit},\HonBt} - 2\delta' - (m+2)\cdot \sqrt{\xi} \\
	&\quad -2\cdot\phiBalE{k,\delta}\paren{\gamma_\cpi(\delta',\xi),2\sqrt{\xi},2\cdot m \cdot \xi, m,\delta',\mu} \\
	&\quad -2\cdot\phiPruE{k,\delta}\paren{\gamma_\cpi(\delta',\xi),\xi,m,\delta',\mu}.
	\end{align*}
	The proof now follows from \cref{lemma:AttackPruned}.
\end{proof}

\remove{
\Bnote{Old}

Moving to the formal proof,  fix $\delta'$, $\xi$, $k$ and $\vect{\gamma}$ as described in the statement of the lemma, and let $\cpit =  \paren{\HonAt,\HonBt} =\pru{\cpi}{2\delta',\xi}$. We use the (hybrid) protocols $\cpi_0,\ldots,\cpi_5$ defined using the following sets. \Inote{before defining the protocols, explain why they are helpful  for proving the lemma, \ie explain the hybrids}

\newcommand{\FailFirst}{\mathcal{F}\mathsf{ailContMid}}
\newcommand{\FailSmall}{\mathcal{F}\mathsf{ailContSmall}}
\newcommand{\FailLarge}{\mathcal{F}\mathsf{ailContLarge}}
\newcommand{\High}{\mathcal{L}\mathsf{arge}}
\newcommand{\Low}{\mathcal{S}\mathsf{mall}}
\newcommand{\FailSmallEst}{\mathcal{F}\mathsf{ailEstSmall}}
\newcommand{\FailLargeEst}{\mathcal{F}\mathsf{ailEstLarge}}

Let $\FailCont \eqdef \frnt{\set{u\in\Vertices(\cpi)\colon \SDP{\HonCont\cpi^\xi(u)}{\HonCont(u)} > \xi}}$ be those nodes on which $\HonCont\cpi^\xi$ acts significantly differently than the ideal honest continuator. We consider its following subsets.
\begin{itemize}
\item $\High \eqdef \frnt{\high{\cpi}{2\delta',\xi} \setminus \descP{\FailCont \cup \low{\cpi}{2\delta',\xi}}}$.
\item $\Low \eqdef \frnt{\low{\cpi}{2\delta',\xi} \setminus \descP{\FailCont \cup \high{\cpi}{2\delta',\xi}}}$.

\Inote{Since you later always take $ \descP{\Low}$ and  $\descP{\High}$, I  guess you can define $\Low$ and $\High$ without front}

\Inote{More importantly, cant we come with simpler definitions, say leave $\High \eqdef \high{\cpi}{2\delta',\xi}$ while slightly adjust the other defs? Current definitions are very  unintuitive and thus hard to follows. We can discuss it online if needed
}
\item $\FailSmall \eqdef \FailCont \cap \descP{\Low}$.
\item $\FailLarge \eqdef \FailCont \cap \descP{\High}$.
\item $\FailFirst \eqdef \FailCont \setminus \descP{\High\cup \Low}$. \Inote{Better to write $\FailFirst \eqdef \FailCont \setminus (\FailSmall \cup \FailLarge)$.}
\end{itemize}
Note that $\FailFirst$, $\High$ and $\Low$ are disjoint, and that $\FailLarge$ and $\FailSmall$ are proper descendants of $\High$ and $\Low$ respectively.

Let $\FailEst \eqdef \set{u\in\Vertices(\cpi) \colon \Val(\cpi_u)<1-2\delta'-\xi \land \Est_\cpi^\xi(u)>1-2\delta'}$ on which $\Est_\cpi^\xi$ returns wrong answer \Inote{This is not an accurate description, but the name is inaccurate too....}. We consider its following subsets.
\begin{itemize}
\item $\FailLargeEst \eqdef \FailEst \cap \High$.
\item $\FailSmallEst \eqdef \FailEst \cap \Low$.
\end{itemize}
 \Inote{These sets is a big mess. Let's discuss it together}

\begin{itemize}
	\item Protocol  $\cpi_0$ is the protocol $ =\left(\finalAttack{k,\xi,\delta'}{\cpi},\HonB\right)$.

	In  the following protocols, the control scheme and the coloring function are those of $\cpi_0$.

\item Protocol $\cpi_1$. Both parties act as in $\cpi_4$  with the following exception the happens at  the first time the parties reach a node $u\in \FailFirst \cup \High \cup \Low$. If $u\in\FailFirst$, the parties act as in $\cpi_0$ until reaching a leaf. If $u\in \High$ the parties act as in $\cpi$, except when first reaching a node in $\FailLarge$, where the parties then act as in $\cpi_0$ until reaching a leaf. If $u\in \Low$, the parties act as in $\cpi$, except when first reaching a node in $\FailSmall$, where the parties then act as in $\cpi_0$ until reaching a leaf. \Inote{Totally unreadable}

\Inote{I  accepted  the protocol to  act like in $\cp_0$ with some exception and not like in  $\cp_4$ ..}

\item Protocol $\cpi_2$. Both parties act as in $\cpi_1$  with the following exception that happens at the first time the parties reach a node $u\in \FailFirst \cup \FailLarge \cup \FailSmall$. If $u\in\FailFirst\cup\FailSmall$, the parties act as in $\cpi_4$ until reaching a leaf. If $u\in\FailLarge$, the parties act as in $\cpi$ until reaching a leaf.

\item Protocol $\cpi_3$. Both parties act as in $\cpi_2$ with the  exception that at the first time the parties reach a node $u\in \FailLargeEst \cup \FailSmallEst$, the parties act as in $\cpi_4$  \Inote{and in the next nodes? unclear}

\item Protocol $\cpi_4$ is the protocol $\paren{\pruAttack{k,\delta',\xi}{\cpit},\HonBt}$.

\item Protocol $\cpi_5$ is the protocol $=\paren{\rcAP{\cpit}{k},\HonBt}$.
\end{itemize}

The following sequence of claims, bounding the statistical distance between each pair of ``neighboring" protocols, yields that   $\paren{\finalAttack{k,\xi,\delta'}{\cpi},\HonB}$ and $\paren{\rcAP{\cpit}{k},\HonBt}$ are close, and the proof of the lemma follows.

\Inote{stopped reading here}

\begin{claim}
It holds that $\SDP{\LDist{\cpi_0}}{\LDist{\cpi_1}}\leq m\cdot \xi$.
\end{claim}
\begin{proof}
The difference between the protocols is as follows.
\begin{itemize}
\item For nodes not in $\descP{\FailFirst \cup \High \cup \Low}$: $\cpi_0$ behaves like $\paren{\pruAttack{k,\xi,\delta'}{\cpit},\HonB}$; namely if $u$ is controlled by $\HonA$, then the next message in $\cpi_0$ is $\pruAttack{k,\xi,\delta'}{\cpit}(u)$ and if $u$ is controlled by $\HonB$, then the next message in $\cpi_0$ is $\HonCont(u)$. $\cpi_1$ behaves like $\paren{\pruAttack{k,\xi,\delta'}{\cpit},\HonBt}$; namely it behaves the same as $\cpi_0$ if $u$ is controlled by $\HonA$, and if $u$ is controlled by $\HonB$, then the next message in $\cpi_0$ is $\HonCont_\cpi^\xi(u)$.

\item For nodes in $\descP{\FailFirst}$: both protocols behave like $\cpi_0$.

\item For nodes in $\descP{\High}\setminus \descP{\FailLarge}$ or in $\descP{\Low}\setminus \descP{\FailSmall}$: $\cpi_0$ acts as $\paren{\HonAt,\HonB}$; namely if $u$ is controlled by $\HonA$, then the next message in $\cpi_0$ is $\HonCont_\cpi^\xi(u)$ and if $u$ is controlled by $\HonB$, then the next message in $\cpi_0$ is $\HonCont_\cpi(u)$. $\cpi_1$ behaves like $\paren{\HonA,\HonB}$; namely the next message in $\cpi_0$ is always $\HonCont_\cpi(u)$.

\item For nodes in $\descP{\FailLarge}$ or in $\descP{\FailSmall}$: both protocols behave like $\cpi_0$.
\end{itemize}

From the above case analysis, we can see that the differences between the two protocols are in nodes where one protocols calls $\HonCont_\cpi(u)$ and the other calls $\HonCont_\cpi^\xi(u)$, all for nodes $u$ not in $\FailCont$. Since there are at most $\rnd$ such calls, the claim follows.
\end{proof}

\begin{claim}
It holds that
\begin{align*}
\SDP{\LDist{\cpi_1}}{\LDist{\cpi_2}} &\leq \sqrt{\xi} + 2\cdot\phiBal(\xi + 2\cdot m \cdot \xi,\neigh_{\cpi}(2\delta',\xi) +8\cdot  m\cdot\xi/\delta',\delta,\vect{\gamma}) \nonumber\\
&\quad +2\cdot\phiPru(\neigh_{\cpi}(2\delta',\xi)+8\cdot m\cdot \xi/\delta',\xi,m,\delta,\delta',\vect{\gamma}).
\end{align*}
\end{claim}
\begin{proof}
Since $\cpi_1$ and $\cpi_2$ are identical until reaching nodes in $\FailFirst \cup \FailLarge \cup \FailSmall$, it holds that
\begin{align}\label{eq:FailContMid}
\SDP{\LDist{\cpi_1}}{\LDist{\cpi_2}} &\leq \ppr{\LDist{\cpi_1}}{\descP{\FailFirst \cup \FailLarge \cup \FailSmall}}\\
&\leq \ppr{\LDist{\cpi_1}}{\descP{\FailFirst}} + \ppr{\LDist{\cpi_1}}{\descP{\FailLarge \cup \FailSmall}}.\nonumber
\end{align}
To conclude the proof we upper-bound the two terms in the right-hand side of \cref{eq:FailContMid}. The definition of $\HonCont_\cpi^\xi$ yields that $\ppr{\LDist{\cpi}}{\descP{\FailFirst}}\leq \xi$, 
and the definition of $\cpi_1$ yields that
\begin{align}
\ppr{\LDist{\cpi_1}}{\descP{\FailFirst}} &= \ppr{\LDist{\pruAttack{k,\xi,\delta'}{\cpit},k\HonBt}}{\descP{\FailFirst}} \\
&\leq \phiBal(\xi + 2\cdot m \cdot \xi,\neigh_{\cpi}(2\delta',\xi) +8\cdot  m\cdot\xi/\delta',\delta,\vect{\gamma}) \nonumber\\
&\quad +\phiPru(\neigh_{\cpi}(2\delta',\xi)+8\cdot m\cdot \xi/\delta',\xi,m,\delta,\delta',\vect{\gamma}), \nonumber
\end{align}
where the inequality follows from \cref{claim:SmallSetAprrx}. This bounds one term in the right-hand side of \cref{eq:FailContMid}. To bound the second term we show that
\begin{align}\label{eq:pi1pi2_2}
\ppr{\LDist{\cpi_1}}{\descP{\FailLarge \cup \FailSmall}} &\leq \sqrt{\xi} \\
&\quad + \phiBal(\sqrt{\xi} + 2\cdot m \cdot \xi,\neigh_{\cpi}(2\delta',\xi) +8\cdot  m\cdot\xi/\delta',\delta,\vect{\gamma}) \nonumber\\
&\quad + \phiPru(\neigh_{\cpi}(2\delta',\xi)+8\cdot m\cdot \xi/\delta',\xi,m,\delta,\delta',\vect{\gamma}). \nonumber
\end{align}

Let
\begin{align*}
\alpha &= \ppr{\LDist{\pruAttack{k,\xi,\delta'}{\cpit},\HonBt}}{\descP{\High \cup \Low}},\\
\beta &= \ppr{\LDist{\cpi}}{\descP{\FailLarge \cup \FailSmall} \mid \descP{\High \cup \Low}}.
\end{align*}
The definition of $\cpi_1$ yields that
\begin{align}
\ppr{\LDist{\cpi_1}}{\descP{\FailLarge \cup \FailSmall}} = \alpha \cdot \beta.
\end{align}
Assume \wlg that  $\beta > \sqrt{\xi}$ (as otherwise  \cref{eq:pi1pi2_2} holds trivially). Moreover, it holds that \begin{align*}
\xi \geq \ppr{\LDist{\cpi}}{\descP{\FailLarge\cup\FailSmall}} = \ppr{\LDist{\cpi}}{\descP{\High \cup \Low}} \cdot \beta,
\end{align*}
where the inequlity follows from the definition of $\HonCont_\cpi^\xi$.
It follows that $\ppr{\LDist{\cpi}}{\descP{\High \cup \Low}} \leq \sqrt{\xi}$. Applying \cref{claim:SmallSetAprrx} again, we get that
\begin{align*}
\alpha &\leq \phiBal(\sqrt{\xi} + 2\cdot m \cdot \xi,\neigh_{\cpi}(2\delta',\xi) +8\cdot  m\cdot\xi/\delta',\delta,\vect{\gamma}) \\
&\quad + \phiPru(\neigh_{\cpi}(2\delta',\xi)+8\cdot m\cdot \xi/\delta',\xi,m,\delta,\delta',\vect{\gamma}),
\end{align*}
and \cref{eq:pi1pi2_2} follows since $\alpha,\beta\leq 1$ and thus $\alpha\cdot \beta \leq \alpha + \beta$.
\end{proof}

\begin{claim}
It holds that
\begin{align*}
\SDP{\LDist{\cpi_2}}{\LDist{\cpi_3}} &\leq \phiBal(\xi + 2\cdot m \cdot \xi,\neigh_{\cpi}(2\delta',\xi) +8\cdot  m\cdot\xi/\delta',\delta,\vect{\gamma}) \\
&\quad + \phiPru(\neigh_{\cpi}(2\delta',\xi)+8\cdot m\cdot \xi/\delta',\xi,m,\delta,\delta',\vect{\gamma}).
\end{align*}
\end{claim}
\begin{proof}
$\cpi_2$ and $\cpi_3$ are identical until reaching nodes in $\FailLargeEst \cup \FailSmallEst$ and until then they both behave like $\paren{\pruAttack{k,\xi,\delta'}{\cpit},\HonBt}$. Thus, it holds that
\begin{align*}
\SDP{\LDist{\cpi_2}}{\LDist{\cpi_3}} \leq \ppr{\LDist{\pruAttack{k,\xi,\delta'}{\cpit},\HonBt}}{\descP{\FailLargeEst \cup \FailSmallEst}}.
\end{align*}
By the definition of $\Est_\cpi^\xi$, it holds that $\ppr{\LDist{\cpi}}{\descP{\FailLargeEst \cup \FailSmallEst}}\leq \xi$, and the claim follows from \cref{claim:SmallSetAprrx}.
\end{proof}

\begin{claim}
It holds that $\Val(\cpi_4) - \Val(\cpi_3) \leq 2\cdot\delta' + (\rnd+1)\cdot \xi$.
\end{claim}
\begin{proof}
$\cpi_3$ and $\cpi_4$ are identical until reaching nodes in $\High \setminus \FailLargeEst$ or $\Low \setminus \FailSmallEst$. Thus, it holds that
\begin{align}
\Val(\cpi_4) - \Val(\cpi_3) \leq \max_{u\in \High \setminus \FailLargeEst \cup \Low \setminus \FailSmallEst} \set{\Val((\cpi_4)_u) - \Val((\cpi_3)_u)}.
\end{align}
Let $u$ be a node in which the above maximum reaches it value. The proof splits according to the set containing $u$.
\begin{description}
\item[$u\in \High \setminus \FailLargeEst$:] Since $u\in \High$, it holds that $\Est_\cpi^\xi(u) \geq 1- 2\delta'$, and since $u\notin \FailLargeEst$, it holds that $\Val(\cpi_u)\geq 1-2\delta' - \xi$. Furthermore, the definition of $\cpi_3$ yields that $\Val((\cpi_3)_u) = \Val(\cpi_u)$, and it always holds that $\Val((\cpi_4)_u)\leq 1$. Thus $\Val((\cpi_4)_u) - \Val((\cpi_3)_u) \leq 2\delta' + \xi$.

\item[$u\in \Low \setminus \FailSmallEst$:] The protocols $(\cpi_3)_u$ and $(\cpi_4)_u$ differ only in nodes in $\descP{u}\setminus \descP{\FailSmall}$. If $v$ is such a node then $(\cpi_3)_u$ behaves like $\cpi$; namely the next message in $(\cpi_3)_u$ is $\HonCont_\cpi(v)$. $(\cpi_4)_u$ behaves like $\paren{\pruAttack{k,\xi,\delta'}{\cpit},\HonBt}$; namely the next message in $(\cpi_4)_u$ is $\HonCont_\cpi^\xi(v)$ ($u$ is in $\low{\cpi}{2\delta',\Est_\cpi^\xi}$, so $v$ is under $\HonBt$'s control, and it simply calls $\HonCont_\cpi^\xi(v)$).

Since all the above calls are not in $\FailSmall$, and there are at most $\rnd$ such calls, it holds that $\SDP{\LDist{(\cpi_3)_u}}{\LDist{(\cpi_4)_u}} \leq \rnd\cdot \xi$. It follows that $\Val((\cpi_4)_u) - \Val((\cpi_3)_u) \leq \rnd \cdot \xi$.
\end{description}
\end{proof}

\begin{claim}
$\SDP{\LDist{\cpi_4}}{\LDist{\cpi_5}}\leq\phiPru(\neigh_{\cpi}(2\delta',\xi)+8\cdot m\cdot \xi/\delta',\xi,m,\delta,\delta',\vect{\gamma})$.
\end{claim}
\begin{proof}
This is exactly the statement of \cref{lemma:AttackPruned}.
\end{proof}

\Inote{So where is the formal proof of \cref{lemma:final}? }

} 

\subsection{Implementing the Pruning-in-the-Head Attacker Using an Honest Continuator}\label{sec:ProtocolInv}
The pruning-in-the-head attacker (\cref{alg:rFinalPrunedAdv}) uses the  honest continuator and the  estimator algorithms (see \cref{def:AppxHonC,def:AppxEst} respectively),  both defined \wrt the attacked (original) protocol. It also uses the recursive approximated biased-continuation attacker (see \cref{def:itAppAtt}), designed to attack the approximately pruned variant of the attacked protocol. In this section we show how to use a given honest continuator for implementing the other two algorithms the pruning-in-the-head attacker uses. It follows that implementing the pruning-in-the-head attacker reduces to implementing an honest continuator. In the next (and final) section we show how to implement such continuator assuming the in-existence of one-way functions.

We begin by showing that using an honest continuator and an estimator, one can implement a biased continuator for the approximated pruned protocol. In fact, due to the recursive nature of the attack, we need to implement a biased continuator for every level of the recursion, and not only for the approximated pruned protocol.

\begin{definition}\label{def:seqBC}
	Let $\cpi = (\HonA,\HonB)$ be a protocol, let $\delta,\xi\in(0,1)$, let $k\in\N$ and let $\set{\MathAlg{D}^{(i)}}_{i\in(k)}$ be a set of algorithms. Let $\cpi^{(0)}=\cpi$. For $i\in[k]$, let $\cpi^{(i)} = \paren{\realA{i,\xi,\delta}{\MathAlg{D}^{(i-1)}},\HonB}$, where $\realA{i,\xi,\delta}{\MathAlg{D}^{(i-1)}}$ acts as $\realA{i,\xi,\delta}{\cpi}$ (\cref{def:itAppAtt}) does, but with $\MathAlg{D}^{(i-1)}$ taking the role of   $\RandomCont_{\paren{\realA{i-1,\xi,\delta}{\cpi},\HonB}}^{\xi,\delta}$.\footnote{Recall that $\RandomCont_{\cpi'}^{\xi,\delta}$, is an arbitrary fixed $(\xi,\delta)$-biased-continuator of $\cpi'$.} The sequence $\set{\MathAlg{D}^{(i)}}_{i\in(k)}$ is  a {\sf $(\xi,\delta)$-biased-continuators-sequence for $\cpi$}, if algorithm $\MathAlg{D}^{(i)}$ is a $(\xi,\delta)$-biased-continuator of $\cpi^{(i)}$,  for every $i\in(k)$.
\end{definition}

\begin{lemma}\label{lemma:BCformHCandEst}
	Let $\cpi$ be an $\rnd$-round protocol. Let  $\delta\in(0,1/2)$ ,  let $\xi\in(0,1)$, let $\Est$ be a $[0,1]$-output deterministic algorithm, let $\HC$ be $\xi$-honest-continuator for $\cpi$, and let $\cpit = (\HonAt,\HonBt) = \pru{\cpi}{2\delta,\xi,\Est,\HC}$ be the $(\delta,\Est,\HC)$-approximately pruned variant of $\cpi$ (see \cref{def:ApxPrunnedProt}). Then for every $k\in\N$, there exists a sequence of algorithms $\set{\MathAlg{D}^{(i)}}_{i\in(k)}$ with the following properties:
	\begin{enumerate}
		\item $\set{\MathAlg{D}^{(i)}}_{i\in(k)}$ is a $(\xi,\delta)$-biased-continuators-sequence for $\cpit$.

		\item $\MathAlg{D}^{(k)}$'s running time is
		$O\paren{ m^{3(k+1)}\cdot\NumOfCont^{k+1} \cdot \paren{T_{\Est} + T_{\HC}}}$,
		for $T_{\Est}$ and $T_{\HC}$ being the running times of $\Est$ and $\HC$ are  respectively.
	\end{enumerate}
\end{lemma}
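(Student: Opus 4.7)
The construction is a recursive rejection-sampling scheme built from the given primitives $\HC$ and $\Est$. At each level $i \in (k)$, the algorithm $\MathAlg{D}^{(i)}$ on input $(u, b)$ runs $N := \NumOfCont$ independent in-the-head simulations of the protocol $\cpi^{(i)}$ starting from the transcript $u$, and then returns the $(|u|+1)$-th bit of a uniformly chosen simulated leaf whose common output equals $b$ (returning an arbitrary bit if no such leaf is found among the $N$ simulations). A single simulation of $\cpi^{(i)}$ proceeds turn-by-turn: at a transcript $v$ we first detect which party of $\cpit$ is active (maintaining, via $\Est$-queries along prior nodes, whether a pruned transcript has already been crossed, and which pruning direction it triggered), then generate the next bit using $\HC(v)$ when the active party plays its honest-continuator role in $\cpit$, and using $\MathAlg{D}^{(i-1)}(v, 1)$ when the active party is the recursive attacker. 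For $i=0$ the latter case does not arise and the entire simulation is driven by $\HC$ and $\Est$ alone.

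Correctness is by induction on $i$. In an \emph{ideal} world where $\HC$ were perfect and $\MathAlg{D}^{(i-1)}$ were an exact biased-continuator, each simulation would produce a genuine leaf of $\cpi^{(i)}$; at a transcript $u$ with $\Val(\cpi^{(i)}_u) \geq \delta$ (\ie $u \notin \low{\cpi^{(i)}}{\delta}$), the probability that none of the $N$ simulations has output $b$ is at most $(1-\delta)^N \leq \xi$, and any of the surviving $b$-colored samples would be exactly from the target conditional distribution; hence the ideal $\MathAlg{D}^{(i)}$ would match $\RandomCont_{\cpi^{(i)}}$ except with probability $\xi$, meeting \cref{def:AppBiassCSampler}. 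The real algorithm departs from this ideal only by (i) $\HC$ being a $\xi$-honest-continuator and (ii) $\MathAlg{D}^{(i-1)}$ being a $(\xi,\delta)$-biased-continuator by the inductive hypothesis. We transfer these per-call errors to a statistical-distance bound between one simulated execution of $\cpi^{(i)}$ and an ideal one by invoking \cref{lemma:SDmQueriesAlg}, with the ``natural'' query distribution $P_i$ being the distribution of the $i$-th transcript in a random honest execution of $\cpi^{(i)}$; the inductive guarantee on $\MathAlg{D}^{(i-1)}$ is exactly the expected-statistical-distance premise of that lemma. The error accumulates additively over the $m$ turns of a simulation and the $N$ simulations, so choosing our target accuracy appropriately (paying a polynomial overhead in $m$ and $N$) keeps the overall bound within the $\xi$-threshold required.

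For the running time, let $T_i$ denote the running time of $\MathAlg{D}^{(i)}$. Each simulation of $\cpi^{(i)}$ takes $m$ turns, each requiring at most one call to $\HC$, one call to $\Est$, and (for $i \geq 1$) one call to $\MathAlg{D}^{(i-1)}$; coloring the resulting leaf adds $O(T_\Est)$. Running $N$ such simulations therefore costs $T_i = O\paren{N m \cdot (T_{i-1} + T_\HC + T_\Est)}$, which unrolls to a bound of the form $O\paren{(Nm)^{k+1}(T_\HC + T_\Est)}$, comfortably within the stated $O\paren{m^{3(k+1)} N^{k+1} (T_\Est + T_\HC)}$.

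The principal technical difficulty will be that $\MathAlg{D}^{(i)}$ simulates $\cpi^{(i)}$ using only approximate subroutines, while the biased-continuator guarantee it must satisfy is phrased relative to a random honest execution of $\cpi^{(i)}$, a distribution from which we never explicitly sample. A naive triangle-inequality argument would make the error blow up geometrically in the recursion depth $k$. The resolution is the careful use of \cref{lemma:SDmQueriesAlg} at each inductive step: the inductive hypothesis on $\MathAlg{D}^{(i-1)}$ provides precisely the expected-per-query closeness needed to apply that lemma, so that substituting the approximate subroutines into the simulation costs only an additive $O(Nm\xi)$ per level rather than a multiplicative blow-up, matching the stated polynomial bound.
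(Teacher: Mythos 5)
Your construction is essentially the paper's: $\MathAlg{D}^{(i)}$ is built by rejection sampling over $\NumOfCont$ end-to-end simulations of $\cpi^{(i)}$, each simulation driven turn-by-turn by $\HC$, $\Est$, and (recursively) $\MathAlg{D}^{(i-1)}$, and the running-time recurrence $T_i = O(Nm\cdot(T_{i-1}+T_\HC+T_\Est))$ unrolls to a bound within the stated one. The gap is in the correctness argument, and it is conceptual rather than a missing detail.

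You posit an ``ideal world'' in which $\HC$ is an exact honest continuator and $\MathAlg{D}^{(i-1)}$ is an exact biased continuator, treat the real simulation as a perturbation of that ideal, and invoke \cref{lemma:SDmQueriesAlg} to control the drift. But the target in \cref{def:seqBC} is not the ideal protocol: by definition $\cpi^{(i)} = \paren{\realA{i,\xi,\delta}{\MathAlg{D}^{(i-1)}},\HonBt}$ where $\realA{i,\xi,\delta}{\MathAlg{D}^{(i-1)}}$ literally calls the concrete, already-constructed $\MathAlg{D}^{(i-1)}$ as a black box, and $\cpit = \pru{\cpi}{2\delta,\xi,\Est,\HC}$ likewise has the concrete $\HC$ and $\Est$ hard-wired into its description. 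So when your simulation calls $\MathAlg{D}^{(i-1)}$, $\HC$, or $\Est$, it is reproducing $\cpi^{(i)}$ \emph{exactly} -- there is no approximation error to account for, no matter how good or bad $\HC$ and $\Est$ are as approximations of the ideal objects they emulate. The crucial observation that makes this exactness usable, and the one your proposal misses, is that $\cpit$ (and hence $\cpi^{(i)}$) is \emph{stateless}: every message is produced by a fresh oracle call on the current transcript, with no per-party randomness persisting across rounds (\cref{def:ApxPrunnedProt,def:itAppAtt}). Consequently $\rHonContStateless_{\cpi^{(i)}}$ of \cref{alg:realHonCont} is a \emph{perfect} ($0$-error) honest continuator for $\cpi^{(i)}$ (\cref{claim:HonContForStatless}), and the only error in $\MathAlg{D}^{(i)}$ comes from rejection-sampling failure after $\NumOfCont$ tries on nodes with value below $\delta$; \cref{claim:BiasFromHonest}, instantiated with $\xi'=0$, shows this yields exactly a $(\xi,\delta)$-biased continuator.

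Your use of \cref{lemma:SDmQueriesAlg} is thus answering a question that never arises, and were the question real it would also be incomplete: the inductive guarantee on $\MathAlg{D}^{(i-1)}$ controls its behavior only on transcripts drawn from the \emph{honest} leaf distribution of $\cpi^{(i-1)}$, whereas inside a simulation of $\cpi^{(i)}$ you query $\MathAlg{D}^{(i-1)}$ on transcripts generated by an \emph{attacked} execution, which can be heavily skewed toward transcripts on which nothing is guaranteed -- the entire unbalanced-nodes machinery of \cref{lem:ProbVisitUnBal} exists to handle this, and \cref{lemma:SDmQueriesAlg} alone does not. The statelessness observation is exactly what lets the proof bypass all of this, and it is the piece you need.
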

\cref{lemma:BCformHCandEst} is proven in \cref{sec:ImplementBiasedCont}.
Next, we show how to implement a \emph{randomized} estimator using an honest continuator.

\begin{lemma}\label{lemma:EstFromHonCont}
	Let $\cpi$ be an $\rnd$-round protocol,  let $\xi\in(0,1)$ and let $\HC$ be a $\xi/2$-honest continuator for $\cpi$. Then there exists a randomized algorithm $\Est_{\cpi}^{(\xi,\HC)}$ such that the following holds.
	\begin{enumerate}
		\item\label{item:EstSuccProb}
		$\ppr{r\la \zo^{\ell}}{\Est_{\cpi,r}^{(\xi,\HC)} \textrm{ is a $\xi$-estimator for $\cpi$}} \geq 1-\xi$,
		for $\ell$ being an upper bound on the number coins used by $\Est_{\cpi}^{(\xi,\HC)}$ including those used by $\HC$, and $\Est_{\cpi,r}^{(\xi,\HC)}$ being the deterministic algorithm defined by hard-wiring $r$ into the randomness of $\Est_{\cpi}^{(\xi,\HC)}$.

		\item  $\Est^{(\xi,\HC)}$'s running time  is $O\paren{m\cdot\NumOfSam \cdot T_\HC}$, for $T_\HC$ being the running time of $\HC$.
	\end{enumerate}
\end{lemma}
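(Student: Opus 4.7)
The plan is to implement $\Est^{(\xi,\HC)}_\cpi$ as a straightforward Monte-Carlo averager: on randomness $r$ and input $u$, draw fresh bits from $r$ to drive $N \eqdef \NumOfSam$ independent invocations of $\HC(u)$, obtain leaves $\ell^{(1)},\ldots,\ell^{(N)}$, and output $\frac{1}{N} \sum_j \Color_\cpi(\ell^{(j)})$. To make $\Est^{(\xi,\HC)}_{\cpi,r}$ a well-defined \emph{deterministic} function once $r$ is fixed, I split $r$ into $m$ blocks (one per possible query length), and on an input of length $i-1$ the algorithm reads only its $i$th block to feed the $N$ calls to $\HC$. This keeps the total tape length polynomial, and, since the pruning-in-the-head attacker only ever queries $\Est^{(\xi,\HC)}_\cpi$ on prefixes of a single evolving transcript, the queries along any execution path draw from disjoint blocks (so the $N$ internal samples in each call remain independent).

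The analysis splits the failure event into two contributions. Let $B \eqdef \{\ell \in \Leaves(\cpi) : \exists i \in (m-1),\ \SD(\HC(\ell_{1,\ldots,i}), \HonCont_\cpi(\ell_{1,\ldots,i})) > \xi/2\}$; by the $\xi/2$-honest-continuator property, $\ppr{\ell \la \LDist{\cpi}}{\ell \in B} \leq \xi/2$. For $\ell \notin B$ and $u_i = \ell_{1,\ldots,i}$, the $\{0,1\}$-valued function $\Color_\cpi$ witnesses $|\Exp[\Color_\cpi(\HC(u_i))] - \Val(\cpi_{u_i})| \leq \SD(\HC(u_i), \HonCont_\cpi(u_i)) \leq \xi/2$, so the bad event $\{|\Est^{(\xi,\HC)}_{\cpi,r}(u_i) - \Val(\cpi_{u_i})| > \xi\}$ is contained in the additive Hoeffding event $\{|\Est^{(\xi,\HC)}_{\cpi,r}(u_i) - \Exp_r[\Est^{(\xi,\HC)}_{\cpi,r}(u_i)]| > \xi/2\}$. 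Hoeffding's inequality, together with the choice $N = \NumOfSam$, bounds the latter probability by $2e^{-N\xi^2/2} \leq \xi^2/(2m)$, uniformly over $u_i$.

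Let $X_r \eqdef \ppr{\ell \la \LDist{\cpi}}{\exists i : \ell \notin B \wedge |\Est^{(\xi,\HC)}_{\cpi,r}(u_i) - \Val(\cpi_{u_i})| > \xi}$. Swapping expectations and union-bounding over $i \in [m-1]$ inside the fixed-$\ell$ probability yields $\Exp_r[X_r] \leq m \cdot \xi^2/(2m) = \xi^2/2$, and Markov's inequality gives $\ppr{r}{X_r > \xi/2} \leq \xi$. Consequently, with probability at least $1 - \xi$ over $r$, the deterministic algorithm $\Est^{(\xi,\HC)}_{\cpi,r}$ misestimates on some prefix of a random $\ell$ with probability at most $\ppr{\ell}{B} + X_r \leq \xi/2 + \xi/2 = \xi$, establishing Item~1. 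Item~2 is immediate from the construction: the algorithm makes $N$ calls to $\HC$ followed by $N$ evaluations of $\Color_\cpi$ on length-$m$ transcripts, for a total running time $O(m \cdot \NumOfSam \cdot T_\HC)$.

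The main point to get right is the two-layer budgeting of the $\xi$ failure probability: half of the budget is consumed by honest-continuator failures (a property of $\ell$, independent of $r$), while the other half must absorb the sampling error after Markov's inequality has amplified it by a factor of $1/\xi$. This forces the expected-failure target $\Exp_r[X_r] \leq \xi^2/2$, which in turn dictates the $\Theta(\log(m/\xi)/\xi^2)$ sample count captured by $\NumOfSam$. Beyond this bookkeeping, which the partition of $r$ makes painless, I do not expect any deeper obstacle.
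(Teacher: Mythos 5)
Your proof takes a genuinely different route from the paper's, and the difference is worth remarking on. The paper defines the ``good'' event for $r$ pointwise over \emph{every} internal node of $\Tree(\cpi)$ and union-bounds over all $\le 2^{m}$ of them; this forces the per-node Hoeffding failure bound to be exponentially small in $m$, which is exactly what $\NumOfSam = \Theta(m/\xi^2)$ buys. You instead push the randomness of $\ell$ into the accounting: you define $X_r$ as a \emph{probability over $\ell$}, apply Fubini to compare $\Exp_r X_r$ against a per-level Hoeffding bound, union-bound only over the $m$ prefixes of a \emph{single} $\ell$, and then invoke Markov over $r$ to harvest the $1-\xi$ guarantee. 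This is more parsimonious: it only needs $\NumOfSam' = \Theta(\log(m/\xi)/\xi^2)$ samples rather than $\Theta(m/\xi^2)$, i.e., logarithmic rather than linear in the depth. Both proofs split the $\xi$ budget as $\xi/2$ for honest-continuator failure and $\xi/2$ for sampling error, and both correctly reduce the $\xi$-estimator property to a triangle-inequality bound through $\mu_u$. The approaches are equally valid in structure; yours is the tighter sample-complexity analysis.

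One concrete numerical slip: you set $N := \NumOfSam$ and then assert $2e^{-N\xi^2/2} \le \xi^2/(2m)$. With the paper's $\NumOfSam = \bigl\lceil \ln(2^m/\xi)/(\xi^2/2)\bigr\rceil$, the Hoeffding bound yields only $2e^{-N\xi^2/2}\le 2\xi/2^m$, and $2\xi/2^m \le \xi^2/(2m)$ requires $\xi \ge 4m/2^m$, which fails for all $\xi<1$ when $m\le 4$, and also fails when $\xi$ is exponentially small relative to $m$ (since your route needs a $2\ln(1/\xi)$ in the exponent, whereas the paper's $\NumOfSam$ supplies only $\ln(1/\xi)+m\ln 2$). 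So ``which in turn dictates the $\Theta(\log(m/\xi)/\xi^2)$ sample count captured by $\NumOfSam$'' is not right: $\NumOfSam$ is $\Theta(m/\xi^{2})$, not $\Theta(\log(m/\xi)/\xi^{2})$, and it is neither necessary for your argument nor always sufficient. The fix is easy --- set $N:=\bigl\lceil \ln(8m/\xi^2)/(\xi^2/2)\bigr\rceil$ instead. Since this is $O(\NumOfSam)$, the asserted running-time bound $O(m\cdot\NumOfSam\cdot T_\HC)$ is unaffected. Finally, the blocking of $r$ by query length and the remark about the pruning-in-the-head attacker are unnecessary: the $\xi$-estimator property (\cref{def:AppxEst}) is a statement about prefixes of a random $\ell\la\LDist\cpi$, independent of any attacker, and your Fubini$+$union-bound step only uses the \emph{marginal} Hoeffding bound for each node, which holds regardless of whether different nodes share bits of $r$. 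The block structure is a harmless design choice but does no logical work.
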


\cref{lemma:EstFromHonCont} is proven in \cref{sec:ImplementEstFromHonCont}.
Using the above implementations for a biased continuator and an estimator, we can define an implantation for the pruning-in-the-head attacker using only an honest continuator. Recall that the pruning-in-the-head attacker requires a \emph{deterministic} estimator. To get such an estimator, we randomly fix the coins of $\Est^{(\xi,\HC)}$.

\begin{definition}[algorithm $\EffPruHead_{\cpi}^{(k,\xi,\delta,\HC)}$]\label{def:effPruHead}
	Let $\delta\in(0,1/2)$, let $\xi\in(0,1)$ and  let $k>0$.  Let $\cpi$ be an $\rnd$-round protocol, let $\HC$ be an algorithm, and let $\ell$ be the number of coins used by algorithm $\Est^{(\xi,\HC)}_{\cpi}$ from \cref{lemma:EstFromHonCont}, including those used by algorithm $\HC$. For $r\in \zo^\ell$, let $\Est_r= \Est^{(\xi,\HC)}_{\cpi;r}$ be deterministic algorithm resulting from fixing $\Est^{(\xi,\HC)}_{\cpi;r}$ coins to $r$.

	Let $\cpit = (\HonAt,\HonBt) = \pru{\cpi}{2\delta,\xi,\Est_r,\HC}$ and $\RandomCont = \RandomCont_\cpit^{(\xi,\delta,\HC,k-1)}$, where $\set{\RandomCont_\cpit^{(\xi,\delta,\HC,i)}}_{i\in(k-1)}$ is the $(\xi,\delta)$-biased-continuators-sequence for $\cpit$, guaranteed to exists by \cref{lemma:BCformHCandEst}.  Algorithm $\EffPruHead_{\cpi;r}^{(k,\xi,\delta,\HC)}$ acts as  algorithm $\finalAttack{k,\xi,\delta}{\cpi}$ (see \cref{alg:rFinalPrunedAdv}), but with algorithms $\HC$, $\Est_r$ and $\RandomCont$,  taking the role of algorithms $\HonCont_\cpi^\xi$, $\Est_\cpi^\xi$ and $\RandomCont_{\paren{\realA{k-1,\xi,\delta}{\cpit},\HonBt}}^{\xi,\delta}$, respectively. Finally, algorithm $\EffPruHead_{\cpi}^{(k,\xi,\delta,\HC)}$ act as $\EffPruHead_{\cpi;r}^{(k,\xi,\delta,\HC)}$, for  $r\la\zo^\ell$.
\end{definition}

The analysis of algorithm $\finalAttack{k,\xi,\delta}{\cpi}$ given in previous sections for was  done \wrt $\HonCont_\cpi^\xi$, $\Est_\cpi^\xi$ and $\RandomCont_{\paren{\realA{k-1,\xi,\delta}{\cpit},\HonBt}}^{\xi,\delta}$, the arbitrary but fixed honest continuator, estimator and biased continuator (see \cref{def:AppxHonC,def:AppxEst,def:AppBiassCSampler}). \cref{lemma:EstFromHonCont} show that $\Est_r$ is $\xi$-estimator with high probability and \cref{lemma:BCformHCandEst} show that $\RandomCont$ is a $(\xi,\delta)$-biased-continuator. Since the above fixing was arbitrary, the results form previous sections can be applied to Algorithm $\EffPruHead_{\cpi}^{(k,\xi,\delta,\HC)}$ as well. We do so in the next lemma, which also analyzes $\EffPruHead_{\cpi}^{(k,\xi,\delta,\HC)}$'s running time.
\remove{
\begin{lemma}\label{lemma:TimeOfPruHead}
	Let $\cpi=(\HonA,\HonB)$ be an $\rnd$-round protocol, let $\delta\in(0,1/2)$, let $\xi\in(0,1)$,  let $k\in \N$, let $\HC$ be an algorithm, and let  $\ell$ begin the number of random coins used by $\EffPruHead_{\cpi}^{(k,\xi,\delta,\HC)}$.  Then for $r\in\zo^\ell$, the running time of $\EffPruHead_{\cpi;r}^{(k,\xi,\delta,\HC)}$ is at most
	\begin{align*}
	2\cdot k\cdot m^k\cdot\NumOfCont^{k}\cdot \NumOfSam \cdot T_\HC,
	\end{align*}
	for $T_\HC$ being the running time of runs in time $\HC$ \Inote{again, are we missing $O$?}
\end{lemma}
} 
\begin{lemma}\label{prop:main}
	Let $\cpi=(\HonA,\HonB)$ be an $\rnd$-round protocol, let $0< \delta \leq \delta'\le \frac14$, $\xi\in(0,1)$, $k>0$, and let $\HC$ be a  $\xi/2$-honest continuator for $\cpi$.  The following  holds \wrt Algorithm $\EffPruHead_\cpi^{(k,\xi,\delta',\HC)}$:
	\begin{enumerate}
		\item\label{item:FinalSuccProb} 
		\begin{align}\label{eq:FinalSuccProb}
		\Val\paren{\EffPruHead_\cpi^{(k,\xi,\delta',\HC)},\HonB} &\geq \Val\paren{\rcAP{\cpit}{k},\HonBt} - 2\delta' - (m+2)\cdot \sqrt{\xi} - \xi \\
		&\quad -2\cdot\phiBalE{k,\delta}\paren{\neigh_{\cpi}(2\delta',\xi)+12\cdot m\cdot \xi/\delta',2\sqrt{\xi},2\cdot m \cdot \xi, m,\delta',\mu} \nonumber\\
		&\quad -3\cdot\phiPruE{k,\delta}\paren{\neigh_{\cpi}(2\delta',\xi)+12\cdot m\cdot \xi/\delta',\xi,m,\delta',\mu}, \nonumber
		\end{align}
		for every $\mu\in(0,1)$, and for  $\phiPruE{k,\delta},\phiBalE{k,\delta}$ be according to \cref{lemma:goodWOLowValueSimple,claim:SmallStaySmallApproxSimple} respectively.

		\item\label{item:FinalTime}  $\EffPruHead_\cpi^{(k,\xi,\delta',\HC)}$'s running time is at most $O\paren{ m^{3k+5}\cdot\NumOfCont^{k}\cdot \NumOfSam \cdot T_\HC}$, for $T_\HC$ being the running time of $\HC$ .
	\end{enumerate}
\end{lemma}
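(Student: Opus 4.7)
The plan is to reduce the lemma to an immediate application of the results already established in the preceding subsections. Algorithm $\EffPruHead_\cpi^{(k,\xi,\delta',\HC)}$ is, by construction (Definition~\ref{def:effPruHead}), exactly the pruning-in-the-head attacker $\finalAttack{k,\xi,\delta'}{\cpi}$ from Algorithm~\ref{alg:rFinalPrunedAdv}, with the three ``oracle'' algorithms $\HonCont_\cpi^\xi$, $\Est_\cpi^\xi$, and $\RandomCont_{(\realA{k-1,\xi,\delta'}{\cpit},\HonBt)}^{\xi,\delta'}$ instantiated respectively by $\HC$, $\Est_r$, and the top element of the biased-continuators-sequence supplied by Lemma~\ref{lemma:BCformHCandEst}. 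Since the analysis behind Lemma~\ref{lemma:final} was carried out with respect to \emph{arbitrary but fixed} $\xi$-honest continuator, $\xi$-estimator, and $(\xi,\delta')$-biased-continuator, the bound in Lemma~\ref{lemma:final} applies verbatim to $\EffPruHead_{\cpi;r}^{(k,\xi,\delta',\HC)}$ on every random tape $r$ for which these three instantiations meet the required approximation guarantees.

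To bound the success probability (Item~\ref{item:FinalSuccProb}), I would verify the three guarantees one by one. First, $\HC$ is assumed to be a $\xi/2$-honest continuator, hence a fortiori a $\xi$-honest continuator, so the $\HonCont_\cpi^\xi$ role is satisfied deterministically. Second, by Lemma~\ref{lemma:EstFromHonCont} applied with $\HC$, $\Est_r = \Est_{\cpi;r}^{(\xi,\HC)}$ is a $\xi$-estimator for all but a $\xi$-fraction of the seeds $r\in \zo^\ell$; call this set of ``good'' seeds $\mathcal{R}$. Third, conditioned on $r\in\mathcal{R}$, Lemma~\ref{lemma:BCformHCandEst} instantiated with $\Est_r$ and $\HC$ gives an honest $(\xi,\delta')$-biased-continuators-sequence for $\cpit=\pru{\cpi}{2\delta',\xi,\Est_r,\HC}$. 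Hence, for every $r\in\mathcal{R}$, Lemma~\ref{lemma:final} yields
\[
\Val\!\bigl(\EffPruHead_{\cpi;r}^{(k,\xi,\delta',\HC)},\HonB\bigr) \;\geq\; \Val(\rcAP{\cpit}{k},\HonBt) - 2\delta' - (m+2)\sqrt{\xi} - R(\xi,\delta',\mu,m),
\]
where $R$ denotes the two $\phiBalE{k,\delta}$ and $\phiPruE{k,\delta}$ tails appearing in the statement. Averaging over $r$ and using $\Pr[r\notin\mathcal{R}]\le \xi$ (trivially bounding the value on bad seeds by $0$ and the loss by $1\cdot \xi$) introduces the single additional $-\xi$ term that distinguishes \cref{eq:FinalSuccProb} from Lemma~\ref{lemma:final}. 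The mild subtlety here is that the pruned protocol $\cpit$ in the statement of Lemma~\ref{lemma:final} is $\pru{\cpi}{2\delta',\xi}$ (with the fixed oracles), whereas the one used internally by $\EffPruHead$ is $\pru{\cpi}{2\delta',\xi,\Est_r,\HC}$; these are equal as soon as $\Est_r$ and $\HC$ are a valid $\xi$-estimator and $\xi$-honest continuator, which is precisely the event $r\in\mathcal{R}$. The only place I expect a little bookkeeping is in confirming that the statistical closeness used implicitly in the analysis of the approximately pruned protocol does not depend on \emph{which} valid $\xi$-estimator/$\xi$-honest-continuator was plugged in --- but this is built into Definitions~\ref{def:AppxHonC}, \ref{def:AppxEst}, and the ``arbitrary but fixed'' convention used throughout Sections~\ref{sec:AttackPrunedProtocol}--\ref{sec:PruningInTheHead}.

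For the running-time bound (Item~\ref{item:FinalTime}), I would simply chase through the constructions. An invocation of $\EffPruHead$ executes at most $m$ rounds; at each round it either calls $\HC$, $\Est_r$, or the top-level biased continuator. By Lemma~\ref{lemma:EstFromHonCont} an evaluation of $\Est_r$ costs $O(m\cdot \NumOfSam \cdot T_\HC)$; by Lemma~\ref{lemma:BCformHCandEst} a single call to the top-level biased continuator costs $O\!\bigl(m^{3(k+1)}\NumOfCont^{\,k+1}(T_\Est + T_\HC)\bigr)$, where $T_\Est$ is the cost of $\Est_r$ just computed. Substituting $T_\Est = O(m\cdot \NumOfSam\cdot T_\HC)$ into the biased-continuator bound, multiplying by the $m$ rounds of the outer execution, and absorbing constants produces a total running time of $O\!\bigl(m^{3k+5}\cdot \NumOfCont^{k}\cdot \NumOfSam\cdot T_\HC\bigr)$, as claimed. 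Generating the random seed $r$ and performing the fixed bookkeeping of Definition~\ref{def:effPruHead} adds only lower-order terms. The main obstacle is therefore not technical depth but care with the quantifier on $r$: one must first condition on the good event $r\in\mathcal{R}$ so that all three approximation primitives meet their specifications simultaneously, and then pay the additive $\xi$ for the bad event; with that in place, the lemma reduces to plugging Lemma~\ref{lemma:EstFromHonCont} and Lemma~\ref{lemma:BCformHCandEst} into Lemma~\ref{lemma:final}.
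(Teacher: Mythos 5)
Your proposal is correct and follows essentially the same route as the paper's proof: condition on the event that the randomly fixed estimator $\Est_r$ is a valid $\xi$-estimator (losing at most $\xi$ by Lemma~\ref{lemma:EstFromHonCont}), then invoke Lemma~\ref{lemma:final} with $\HC$, $\Est_r$, and the biased-continuators sequence from Lemma~\ref{lemma:BCformHCandEst} playing the roles of the ``arbitrary but fixed'' oracles, and for the running time substitute the cost of $\Est_r$ into the Lemma~\ref{lemma:BCformHCandEst} bound and multiply by the $O(m)$ outer steps. The one small framing difference---that the biased-continuators sequence of Lemma~\ref{lemma:BCformHCandEst} exists regardless of whether $\Est_r$ is a good estimator, so it need not be included in the conditioning event $\mathcal{R}$---does not affect the argument.
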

Note the extra $\xi$ term in the right-hand side of \cref{eq:FinalSuccProb} compared to the term in \cref{lemma:final}. This term comes from the probability the estimator used by $\EffPruHead_\cpi^{(k,\xi,\delta',\HC)}$ is not a good one.

\begin{proof}
	We prove each item separately.

	\medskip
	\emph{Proof of (\ref{item:FinalSuccProb}):}  It holds that
	\begin{align}\label{eq:useEst}
	\Val\paren{\EffPruHead_\cpi^{(k,\xi,\delta',\HC)},\HonB} &\geq \ppr{}{\out\paren{\EffPruHead_{\cpi;r}^{(k,\xi,\delta',\HC)},\HonB} = 1 \bigm| \text{$\Est_r$ is a $\xi$-estimator}}\cdot \Pr[\text{$\Est_r$ is a $\xi$-estimator}] \\
	&\geq \ppr{}{\out\paren{\EffPruHead_{\cpi;r}^{(k,\xi,\delta',\HC)},\HonB} = 1 \bigm| \text{$\Est_r$ is a $\xi$-estimator}} - \xi, \nonumber
	\end{align}
	where the second inequality follows from \cref{lemma:EstFromHonCont}. The above probabilities are over the choice of $r$, the additional, if any,  coins of $\EffPruHead_{\cpi;r}^{(k,\xi,\delta',\HC)}$, and the  coins of $\HonB$.

	We would like to conclude the proof by apply \cref{lemma:final} to \cref{eq:useEst}.  \cref{lemma:final} is stated  for $\HonCont_\cpi^\xi$ and $\Est_\cpi^\xi$ ---   \emph{arbitrary} $\xi$-honest-continuator and $\xi$-estimator for the attacked (original) protocol ---  and for $\RandomCont_{\paren{\realA{k-1,\xi,\delta}{\cpit},\HonBt}}^{\xi,\delta}$ --- an \emph{arbitrary} $(\xi,\delta)$-biased-continuator for $\paren{\realA{k-1,\xi,\delta}{\cpit},\HonBt}$. By assumption and \cref{lemma:EstFromHonCont,lemma:BCformHCandEst}, $\HC$, $\Est_r$ and $\RandomCont$ are such hones-continuator, estimator and biased-continuator, respectively. Hence, the proof of this part followed by \cref{lemma:final}.

	\medskip
	\emph{Proof of (\ref{item:FinalTime}):}
	The proof is an easy implication of \cref{lemma:BCformHCandEst,lemma:EstFromHonCont}. By definition, $\EffPruHead_{\cpi}^{(k,\xi,\delta',\HC)}$ makes a single call to $\Est$, and then either calls $\RandomCont$ or $\HC$.\footnote{As written in \cref{alg:rFinalPrunedAdv}, $\EffPruHead_{\cpi;r}^{(k,\xi,\delta',\HC)}$ might make $\rnd$ calls to $\Est$ (checking whether $u\in\descP{\cF}$ in step $2$ of the algorithm). This, however, does not significantly effect the running time and can be easily avoided by having the attacker keep a state. Furthermore, the time it takes to sample coins for $\Est$ is bounded by $\Est$'s running time.} We focus on the former case, as the running time of $\RandomCont$ is longer than that of $\HC$. By \cref{lemma:EstFromHonCont}, the running time of  $\Est$ is $O\paren{m\cdot \NumOfSam\cdot T_\HC}$, and by \cref{lemma:BCformHCandEst} and since $\delta\leq \delta'$, the running time of $\RandomCont$ is at most $O\paren{ m^{3(k+1)}\cdot\NumOfCont^{k+1} \cdot \paren{T_{\Est} + T_{\HC}}}$. For every call to $\RandomCont$ and $\Est$, algorithm $\EffPruHead_{\cpi}^{(k,\xi,\delta',\HC)}$ makes at most $O(m)$ steps. Hence, $\EffPruHead_{\cpi}^{(k,\xi,\delta',\HC)}$'s running time is bounded by
	\begin{align*}
	&O\paren{m\cdot m^{3(k+1)}\cdot\NumOfCont^{k+1} \cdot \paren{\paren{m\cdot \NumOfSam\cdot T_\HC} + T_{\HC}}} \\
	&= O\paren{ m^{3k+5}\cdot\NumOfCont^{k}\cdot \NumOfSam \cdot T_\HC}.
	\end{align*}
\end{proof}

The rest of this section is dedicated to proving \cref{lemma:BCformHCandEst,lemma:EstFromHonCont}.

\subsubsection{Implementing the Biased-Continuation Attacker using Honest Continuator and Estimator --- Proving \cref{lemma:BCformHCandEst}}\label{sec:ImplementBiasedCont}

Our goal is to implement a sequence of biased continuators, denoted by $\set{\MathAlg{D}^{(i)}}_{i\in(k)}$, for the approximated pruned protocol $\cpit$, using only honest continuator $\HC$ and an estimator $\Est$ for the original (\ie un-pruned) protocol. We do so by a recursive construction.

Given $\set{\MathAlg{D}^{(i)}}_{i\in(k-1)}$, a sequence of efficient algorithms such that $\MathAlg{D}^{(i)}$ is a $(\xi,\delta)$-biased-continuator for $\cpit^{(i)} =\paren{\realA{i,\xi,\delta}{\MathAlg{D}^{(i-1)}},\HonBt}$, we construct $\MathAlg{D}^{(k)}$, an efficient  $(\xi,\delta)$-biased-continuator for $\cpit^{(k)}$, as follows.
The first step is to reduce the task of implementing a biased continuator for $\cpit^{(k)}$ to that of implementing a honest continuator for $\cpit^{(k)}$. This is done using the method of rejection sampling.  The second step is to reduce the task of implementing a honest continuator for $\cpit^{(k)}$ to that of efficiently computing  $\cpit^{(k)}$. A key observation to achieve this task is that $\cpit^{(k)}$ is \emph{stateless}, namely the parties do not keep state between the different rounds. And constructing honest continuator for stateless and efficiently computable protocols is a trivial task. Finally, we note that $\cpit^{(k)}$ is efficient, assuming that $\MathAlg{D}^{(k-1)}$, $\HC$ and $\Est$, are. The section follows this outline to formally prove \cref{lemma:BCformHCandEst}.

\paragraph{From honest continuation to biased continuation.}
Turning an honest continuator into a biased continuator is essentially an easy task; given a transcript $u$ and a bit $b$ toward which the continuator should bias,  sample sufficiently many honest continuations for $u$, and return the first continuation whose common output is $b$. Indeed, if the transcript's value (\ie expected outcome) is close enough to $b$, then \whp the above process indeed returns a biased continuation.

\begin{algorithm}[$\RandomCont_\cpi^{(\xi,\delta,\HC)}$]\label{alg:BiassedFromHonCont}
	\item Parameters: $\xi,\delta \in (0,1)$.
	\item Oracle:  $\HC$.
	\item Input:  $u \in \Vertices(\cpi)$ and $b\in\zo$.
	\item Operation:
	\begin{enumerate}
		\item For $i=1$ to $\NumOfCont$:
		\begin{enumerate}
			\item Set $\ell \eqdef \HC(u)$.
			\item If $\Color_\cpi(\ell)=b$, return $\ell_{\size{u}+1}$.
		\end{enumerate}
		\item Return $\perp$.
	\end{enumerate}
\end{algorithm}

\begin{claim}\label{claim:BiasFromHonest}
	Let $\cpi$ be an $\rnd$-round protocol, let $\xi,\xi',\delta\in(0,1)$, and let  $\HC$ be a $\xi'$-honest continuator for $\cpi$.  Then $\RandomCont_\cpi^{(\xi,\delta,\HC)}$ is a $((t+1)\cdot\xi' + \xi,\delta)$-biased continuator for $\cpi$, for  $t=\NumOfCont$.
\end{claim}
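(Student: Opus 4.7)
}
The plan is to analyze the algorithm in two stages: first pretending that $\HC$ is the perfect honest continuator $\HonCont_\cpi$, and then accounting for the error introduced by replacing $\HonCont_\cpi$ with $\HC$. Fix $b \in \zo$ and an input $u$. I will treat the case $b=1$ (so we need $u\notin\low{\cpi}{\delta}$); the case $b=0$ is symmetric with $\high{\cpi}{\delta}$ replacing $\low{\cpi}{\delta}$.

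\emph{Stage one: perfect oracle.} With $\HonCont_\cpi$ in place of $\HC$, each of the $t$ samples in \cref{alg:BiassedFromHonCont} is an independent draw $\ell \la \LDist{\cpi_u}$, and the algorithm returns $\ell_{\size{u}+1}$ for the first $\ell$ with $\Color_\cpi(\ell) = 1$, and $\perp$ otherwise. A standard rejection-sampling argument (conditioning on the identity $i^\ast$ of the first success) shows that, conditioned on not returning $\perp$, the output is distributed exactly as $\RandomCont_\cpi(u,1)$. Hence $\SDP{\RandomCont_\cpi^{(\xi,\delta,\HonCont_\cpi)}(u,1)}{\RandomCont_\cpi(u,1)}$ is bounded by $\Pr[\text{return }\perp]$, which for $u\notin\low{\cpi}{\delta}$ is at most $(1-\Val(\cpi_u))^t \leq (1-\delta)^t \leq \xi$, by the choice $t = \NumOfCont$.

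\emph{Stage two: replacing $\HonCont_\cpi$ by $\HC$.} For any fixed input $u$, the algorithm issues $t$ independent queries on the \emph{same} $u$. A straight hybrid argument, swapping the calls one at a time, yields
\begin{align*}
\SDP{\RandomCont_\cpi^{(\xi,\delta,\HC)}(u,1)}{\RandomCont_\cpi^{(\xi,\delta,\HonCont_\cpi)}(u,1)} \;\leq\; t \cdot \SDP{\HC(u)}{\HonCont_\cpi(u)}.
\end{align*}
Call a transcript $u$ \emph{$\HC$-good} if $\SDP{\HC(u)}{\HonCont_\cpi(u)} \leq \xi'$. By the $\xi'$-honest-continuator property of $\HC$, the probability over $\ell \la \LDist{\cpi}$ that \emph{some} prefix $\ell_{1,\ldots,i}$ fails to be $\HC$-good is at most $\xi'$. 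On the complementary (good) event, every prefix $u$ of $\ell$ satisfies
\begin{align*}
\SDP{\RandomCont_\cpi^{(\xi,\delta,\HC)}(u,1)}{\RandomCont_\cpi(u,1)} \;\leq\; t\xi' + \xi
\end{align*}
by combining stages one and two via the triangle inequality, which is at most $(t+1)\xi' + \xi$. Hence the ``bad event'' in the biased-continuator definition occurs with probability at most $\xi' \leq (t+1)\xi' + \xi$, establishing the first (the $b=1$) item of \cref{def:AppBiassCSampler} with parameter $(t+1)\xi' + \xi$; the $b=0$ item follows by the symmetric argument using $\high{\cpi}{\delta}$.

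\emph{Main obstacle.} There is no serious obstacle here; the proof is essentially bookkeeping. The only point that requires a moment's care is that one must separate the outer randomness (the choice of $\ell$, which governs whether $\HC$ behaves well on the relevant prefixes) from the inner randomness of the $t$ calls made by a single invocation of $\RandomCont_\cpi^{(\xi,\delta,\HC)}$; the hybrid is applied pointwise to a fixed, $\HC$-good input $u$, and the outer failure probability $\xi'$ is added only at the final union-bound step.
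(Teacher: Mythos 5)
Your proposal is correct and follows essentially the same two-stage decomposition as the paper's proof: first bounding the $\perp$-probability of the perfect-oracle variant by $(1-\delta)^t\le\xi$ via rejection sampling, then swapping in $\HC$ via a $t$-step hybrid at each fixed $\HC$-good node, and finally union-bounding the probability of an $\HC$-bad prefix by $\xi'$. The only cosmetic difference is that the paper makes the rejection-sampling step explicit by introducing an unbounded-repetition algorithm $\hRandomCont$ and identifying it with $\RandomCont_\cpi(u,1)$, whereas you invoke the equivalence as a standard fact.
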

\begin{proof}

	Let  $\HonCont_\cpi$ be the algorithm that on input $u$ returns random element in $\LDist{\cpi_u}$, and recall the definition of $\RandomCont_\cpi$  from \cref{def:sampler}. As usual, we focus on proving the statement for algorithms trying to bias towards one, \ie $b=1$; the proof for the case that $b=0$ is analogous. We show that for every node $u\in\Vertices(\cpi)$ with $\SDP{\HC(u)}{\HonCont_\cpi(u)}\leq \xi'$ and $\Val(\cpi_u) \geq \delta$, it holds that
	\begin{align}\label{eq:BCformHC1}
	\SDP{\RandomCont_\cpi^{(\xi,\delta,\HC)}(u,1)}{\RandomCont_\cpi(u,1)} \leq t\cdot \xi' + \xi.
	\end{align}
	This suffices to complete the proof since $\HC$ is a $\xi'$-honest continuator for $\cpi$, and thus the probability that $\cpi$ generates a transcript $u$ such that $\SDP{\HC(u)}{\HonCont_\cpi(u)} > \xi'$ is at most $\xi'$. The following is an ``unbounded version'' of  algorithm $\RandomCont_\cpi^{(\xi,\delta,\HC)}(\cdot,1)$ defined above.
	\begin{algorithm}[$\hRandomCont$]
		\item Input: $u\in\Vertices(\cpi)$.
		\item Operation:
		\begin{enumerate}
			\item Do (forever):
			\begin{enumerate}
				\item Set $\ell \eqdef \HonCont_\cpi(u)$.
				\item If $\Color_\cpi(\ell)=1$, return $\ell_{\size{u}+1}$.
			\end{enumerate}
		\end{enumerate}
	\end{algorithm}

	It is not difficult to verify that the probability that $\hRandomCont(u)$ does not halt is zero for every $u$ with $\Val(\cpi_u)>0$. Fix $u$ with $\SDP{\HC(u)}{\HonCont_\cpi(u)}\leq \xi'$ and $\Val(\cpi_u) \geq \delta$. It holds that
	\begin{align}
	\RandomCont_\cpi(u,1) \equiv \hRandomCont(u).
	\end{align}

	The only  difference between $\hRandomCont(u)$ and algorithm $\RandomCont_\cpi^{(\xi,\delta,\HonCont_\cpi)}(u,1)$ (\ie $\HonCont_\cpi$ is taking the role of $\HC$ in \cref{alg:BiassedFromHonCont}) is the probability the latter output $\perp$. Hence,
	\begin{align}
	\SDP{\RandomCont_\cpi^{(\xi,\delta,\HonCont)}(u,1)}{\hRandomCont(u)}\leq\ppr{}{\RandomCont_\cpi^{(\xi,\delta,\HonCont)}(u,1)=\perp}.
	\end{align}
	Compute
	\begin{align*}
	\ppr{}{\RandomCont_\cpi^{(\xi,\delta,\HonCont)}(u,1)=\perp} &= \paren{\ppr{\ell\la\HonCont(u)}{\Color_\cpi(\ell)=0}}^t\\
	&\leq (1-\delta)^t\\
	&\leq \xi,
	\end{align*}
	where the first inequality follows since $\Val(\cpi_u)\geq \delta$ and the last inequality follows from the choice of $t$. Moreover, since $\RandomCont_\cpi^{(\xi,\delta,\HC)}$ makes $t$ calls to its oracle, the assumption that $\SDP{\HonCont(u)}{\HC(u)}\leq \xi'$ and a standard hybrid argument, yield that
	\begin{align*}
	\SDP{\RandomCont_\cpi^{(\xi,\delta,\HonCont)}(u,1)}{\RandomCont_\cpi^{(\xi,\delta,\HC)}(u,1)}\leq t\cdot\xi.
	\end{align*}
	A triangle inequality now completes the proof of \cref{eq:BCformHC1}, and thus of the claim.
\end{proof}

\paragraph{Honest continuator for stateless protocols.}
For stateless protocols (\ie the parties maintain no state), implementing (perfect) honest continuation is trivial.
\begin{algorithm}[$\rHonContStateless_{\cpi}$]\label{alg:realHonCont}
	\item Input:  transcript $u \in \zo^\ast$.
	\item Operation:
	\begin{enumerate}
		\item Set $t = u$.
		\item Repeat until $t\in\Leaves(\cpi)$:
		\begin{enumerate}
			\item Let $\HonC$ be the party that controls $t$.
			\item Sample uniformly at random coins $r_\HonC$ for this round.
			\item Set $t= t\concat \HonC(t;r_\HonC)$.
		\end{enumerate}
		\item Return $t$.
	\end{enumerate}
\end{algorithm}
\begin{claim}\label{claim:HonContForStatless}
	For a stateless protocol $\cpi$, algorithm  $\rHonContStateless_{\cpi}$ of \cref{alg:realHonCont} is a $0$-honest continuator.
\end{claim}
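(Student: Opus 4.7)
The plan is to show directly that for every $u \in \Vertices(\cpi)$, the output distribution of $\rHonContStateless_{\cpi}(u)$ is identical to $\LDist{\cpi_u}$. Since $\HonCont_\cpi(u)$ is defined to return a sample from $\LDist{\cpi_u}$, this gives $\SDP{\rHonContStateless_{\cpi}(u)}{\HonCont_\cpi(u)} = 0$ for every $u$, which is strictly stronger than the $0$-honest-continuator condition of \cref{def:AppxHonC} (the latter only requires the statistical distance bound for prefixes of a random leaf of $\cpi$).

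First I would recall what ``stateless'' means in this setting: the message a party $\HonC$ sends on a (partial) transcript $t$ that it controls is obtained by applying $\HonC$ to $t$ together with \emph{fresh} coins $r_\HonC$, and the distribution over this message depends only on $t$. In particular, for every transcript $t \in \Vertices(\cpi) \setminus \Leaves(\cpi)$ controlled by $\HonC$ and every $b \in \zo$,
\begin{align*}
\ppr{r_\HonC}{\HonC(t; r_\HonC) = b} \;=\; \EdgeDist_\cpi(t, tb),
\end{align*}
by the very definition of $\EdgeDist_\cpi$ for a stateless protocol. This is the key fact that lets the local sampling inside the loop of \cref{alg:realHonCont} reproduce the global distribution $\LDist{\cpi_u}$.

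The actual argument I would carry out is a straightforward induction on $m - \size{u}$, the number of rounds remaining from $u$. The base case $u \in \Leaves(\cpi)$ is immediate: the loop is not entered, and $u$ is returned with probability $1$, which matches $\LDist{\cpi_u}$ (a point mass on $u$). For the inductive step, let $\HonC = \ctrls_\cpi(u)$ and let $B$ be the random bit $\HonC(u; r_\HonC)$ sampled in the first loop iteration. By the stateless property recalled above, $\Pr[B = b] = \EdgeDist_\cpi(u, ub)$ for each $b \in \zo$. Conditioned on $B = b$, the remainder of the execution of $\rHonContStateless_{\cpi}$ is exactly $\rHonContStateless_{\cpi}$ run from scratch with input $ub$ (here again using statelessness: no information about $r_\HonC$ is carried to subsequent iterations), whose output distribution equals $\LDist{\cpi_{ub}}$ by the induction hypothesis. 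Therefore the overall output distribution is $\sum_{b \in \zo} \EdgeDist_\cpi(u, ub) \cdot \LDist{\cpi_{ub}} = \LDist{\cpi_u}$ by \cref{def:ProtocolTree}, completing the induction.

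There is no real obstacle here; the whole content of the claim is that statelessness lets one sample the parties' messages one round at a time with fresh coins and still obtain the correct joint distribution. The only point that needs care is noting that even though $\rHonContStateless_{\cpi}$ internally samples the coins of $\HonC$ at each round (and never those of $\overline{\HonC}$), this is harmless precisely because statelessness implies that the message distribution on each transcript is determined by the transcript alone and is independent across rounds.
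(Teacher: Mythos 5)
Your proof is correct and is exactly the argument the paper has in mind when it dismisses this claim with the single word ``Immediate'': statelessness means the message distribution at each node depends only on the transcript, so sampling one round at a time with fresh coins reproduces $\LDist{\cpi_u}$, which you verify by a clean induction on $m-\size{u}$. Your observation that you actually prove the stronger pointwise statement ($\SD = 0$ for \emph{every} $u$, not just for prefixes of a random leaf) is accurate and a good sanity check against \cref{def:AppxHonC}.
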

\begin{proof}
	Immediate.
\end{proof}

\paragraph{Proving \cref{lemma:BCformHCandEst}.}
We now use the above understanding (\cref{claim:BiasFromHonest,claim:HonContForStatless}) to prove \cref{lemma:BCformHCandEst}.
\begin{proof}[Proof of \cref{lemma:BCformHCandEst}]
	The proof is by induction on $k$. We show that the running time of $\MathAlg{D}^{(k)}$ is at most $c^{k+1}\cdot m^{2(k+1)}\cdot\NumOfCont^{k+1} \cdot \paren{T_{\Est} + T_{\HC}}$, for some constant $c>0$ to be determined by the analysis. The running time as stated in the lemma follows since $c^{k+1}\in O(m^{k+1})$.

	For the base case $k=0$, the  $(\xi,\delta)$-biased-continuator for $\cpit$ is defined by
	\begin{align*}
	\MathAlg{D}^{(0)} = \RandomCont_\cpi^{(\xi,\delta,\rHonContStateless_{\cpit})}.
	\end{align*}
	Namely,  $\MathAlg{D}^{(0)}$ is \cref{alg:BiassedFromHonCont} with  \cref{alg:realHonCont} being the  honest continuator. \cref{claim:BiasFromHonest,claim:HonContForStatless} and the fact that, by definition (recall \cref{def:ApxPrunnedProt}), $\cpit$ is stateless, yield that $\MathAlg{D}^{(0)}$ is indeed a  $(\xi,\delta)$-biased-continuator for $\cpit$. As for its running time, $\MathAlg{D}^{(0)}$ makes at most $\NumOfCont$ calls to $\rHonContStateless_{\cpit}$. Every time $\rHonContStateless_{\cpit}$ is called, it makes at most $m$ calls to $\Est$ and to $\HC$. Let $c>0$ be a constant such that the operations $\MathAlg{D}^{(0)}$ makes other than calling $\Est$ or $\HC$ take at most $c\cdot m$ steps per such call.\footnote{Since the input length to $\MathAlg{D}^{(0)}$ is at most $m$ it is easy to verify that such $c$ exists} Hence, the running time of $\MathAlg{D}^{(0)}$ is at most $c\cdot m^2\cdot \NumOfCont\cdot(T_\Est + T_\HC)$.

	Assume the lemma holds for $k-1$, namely that there exist a sequence of algorithms $\set{\MathAlg{D}^{(i)}}_{i\in(k-1)}$ such that $\MathAlg{D}^{(i)}$ is a $(\xi,\delta)$-biased-continuator for $\cpit^{(i)} =\paren{\realA{i,\xi,\delta}{\MathAlg{D}^{(i-1)}},\HonBt}$\footnote{Recall that $\realA{i,\xi,\delta}{\MathAlg{D}^{(i-1)}}$ was defined in \cref{def:seqBC}.} and $\MathAlg{D}^{(k-1)}$'s running time is at most $c^k\cdot m^{2k}\cdot\NumOfCont^{k} \cdot \paren{T_{\Est} + T_{\HC}}$.  Define
	\begin{align*}
	\MathAlg{D}^{(k)} = \RandomCont_\cpi^{(\xi,\delta,\rHonContStateless_{\cpit^{(k)}})}.
	\end{align*}
	Note that $\cpit^{(k)}$ is stateless: $\realA{k,\xi,\delta}{\MathAlg{D}^{(k-1)}}$ simply makes calls to $\MathAlg{D}^{(k-1)}$ and thus stateless, and $\HonBt$ is stateless by definition. As in the base case, \cref{claim:BiasFromHonest,claim:HonContForStatless} yield that $\MathAlg{D}^{(k)}$ is a $(\xi,\delta)$-biased-continuator for $\cpit^{(k)}$. As for the running time of $\MathAlg{D}^{(k)}$, the analysis is identical to the base case, but $\rHonContStateless_{\cpit^{(k)}}$ makes at most $m$ calls to $\MathAlg{D}^{(k-1)}$, $\HC$ or $\Est$. Since the assumed  bound on the running time of $\MathAlg{D}^{(k-1)}$ is much longer than $T_\HC$ and $T_\Est$, the running time of $\MathAlg{D}^{(k)}$ is at most
	\begin{align*}
	c\cdot m^2\cdot\NumOfCont\cdot \paren{c^k\cdot m^{2k}\cdot\NumOfCont^{k} \cdot \paren{T_{\Est} + T_{\HC}}} \\
	= c^{k+1}\cdot m^{2(k+1)}\cdot\NumOfCont^{k+1} \cdot \paren{T_{\Est} + T_{\HC}}
	\end{align*}
\end{proof}

\remove{
	\begin{lemma}\label{lemma:PropOfFinal}
		Let $\cpi$ be an $\rnd$-round protocol. Let  $\delta\in(0,1/2)$ ,  let $\xi\in(0,1)$, let $\Est$ be a $[0,1]$-output deterministic algorithm, let $\HC$ be $\xi$-honest-continuator for $\cpi$, and let $\cpit = \pru{\cpi}{2\delta,\xi,\Est,\HC}$ be the $(\delta,\Est,\HC)$-approximately pruned variant of $\cpi$ (see \cref{def:ApxPrunnedProt}).

		Then for every $k\in\N$,  there  exists an implantation of algorithm $\realA{k,\xi,\delta}{\cpit}$ (see \cref{def:itAppAtt}), denoted by $\realA{k,\xi,\delta,\Est,\HC}{\cpit}$, with running time
		\begin{align*}
		m^k\cdot\NumOfCont^k \cdot \paren{T_{\Est} + T_{\HC}},
		\end{align*}
		for $T_{\Est}$ and $T_{\HC}$ being the running times of $\Est$ and $\HC$ are  respectively.
	\end{lemma}
	\Inote{see my edits above}
} 

\subsubsection{Implementing Estimator using Honest Continuator --- Proving \cref{lemma:EstFromHonCont}}\label{sec:ImplementEstFromHonCont}
Turning an honest continuator into a \emph{randomized} estimator is straightforward: given a transcript $u$, sample many honest continuations from $u$ and return the mean of the parties' common outcome bit of these continuations.

\begin{algorithm}[$\Est_\cpi^{(\xi,\HC)}$]\label{alg:EstFromHonCont}
	\item Parameters: $\xi \in (0,1)$.
	\item Oracle: algorithm $\HC$.
	\item Input: transcript $u \in \Vertices(\cpi)$.
	\item Operation:
	\begin{enumerate}\vspace{-0.1in}
		\item Set $sum =0$ and $s=\NumOfSam$.
		\item For $i=1$ to $s$: $sum =sum+ \Color_\cpi(\HC(u))$.

		(each call to $\HC$ is with fresh random coins).

		\item Return $sum/s$.
	\end{enumerate}
\end{algorithm}

The number of calls $\Est_\cpi^{(\xi,\HC)}$  makes to $\HC$  is set so that for most choices of its coins, $\Est_\cpi^{(\xi,\HC)}$  returns a good estimation for the value of \emph{every} node. Thus, fixing, at random, the coins of $\Est_\cpi^{(\xi,\HC)}$, results with high probability in a good \emph{deterministic} estimator.

\begin{proof}[Proof of \cref{lemma:EstFromHonCont}]
	The running time of $\Est_{\cpi}^{(\xi,\HC)}$ follows immediately from its definition.\footnote{$\Est_{\cpi}^{(\xi,\HC)}$'s input length is at most $m$, so it makes at most $O(m)$ steps per call to $\HC$.} In the rest of the proof we show that \cref{item:EstSuccProb} holds, namely that with probability at least $1-\xi$ over fixing its coins at random $\Est_{\cpi}^{(\xi,\HC)}$ is a $\xi$-estimator for $\cpi$.

Let $\Est_r = \Est_{\cpi,r}^{(\xi,\HC)}$,  let $\mu_u = \eex{\ell\la\HC(u)}{\Color(\ell)}$, and let $Q_r$ denote the event that $\forall u\in\Vertices(\cpi)\setminus\Leaves(\cpi)$, it holds that a$ \size{\Est_r(u) - \mu_u} \leq \xi/2$. The proof is an immediate conclusion from the following two simple observations.
	\begin{enumerate}[(1)]
		\item\label{item:est1} Condition on $Q_r$ occurring, $\Est_r$ is a $\xi$-estimator for $\cpi$.

		\item\label{item:est2} $\ppr{r\la \zo^{\ell}}{\neg Q_r} \leq \xi$.
	\end{enumerate}

	\medskip
	\emph{Proof of (\ref{item:est1}):} Compute
	\begin{align}\label{eq:est1}
	\lefteqn{\ppr{\ell \la \LDist{\cpi}}{\exists i\in(m-1) \colon \size{\Est_r(\ell_{1,\ldots,i}) - \Val(\cpi_{\ell_{1,\ldots,i}})} > \xi}}\\
	&\leq \ppr{\ell \la \LDist{\cpi}}{\exists i\in(m-1) \colon \size{\Est_r(\ell_{1,\ldots,i}) - \mu_{\ell_{1,\ldots,i}}} > \xi/2 \lor \size{\mu_{\ell_{1,\ldots,i}} - \Val(\cpi_{\ell_{1,\ldots,i}})} > \xi/2} \nonumber\\
	&\leq \ppr{\ell \la \LDist{\cpi}}{\exists i\in(m-1) \colon \size{\Est_r(\ell_{1,\ldots,i}) - \mu_{\ell_{1,\ldots,i}}} > \xi/2} \nonumber\\
	&\quad+ \ppr{\ell \la \LDist{\cpi}}{\exists i\in(m-1) \colon \size{\mu_{\ell_{1,\ldots,i}} - \Val(\cpi_{\ell_{1,\ldots,i}})} > \xi/2}.\nonumber
	\end{align}
	Since, by assumption, $Q_r$ occurs, the first summand of the right-hand side of \cref{eq:est1} is zero.
	Furthermore,  since $\HC$ is a $\xi/2$-honest continuator for $\cpi$, we bound the second  summand of the right-hand side of \cref{eq:est1}:
	\begin{align*}\label{eq:est3}
	\lefteqn{\ppr{\ell \la \LDist{\cpi}}{\exists i\in(m-1) \colon \size{\mu_{\ell_{1,\ldots,i}} - \Val(\cpi_{\ell_{1,\ldots,i}})} > \xi/2}} \\
	& \leq \ppr{\ell \la \LDist{\cpi}}{\exists i\in(m-1) \colon \SDP{\HC(\ell_{1,\ldots,i})}{\HonCont_\cpi(\ell_{1,\ldots,i})} > \xi/2} \nonumber\\
	&\leq \xi/2 \leq \xi.\nonumber
	\end{align*}
	Plugging the above into \cref{eq:est1} completes the proof.

	\bigskip
	\emph{Proof of (\ref{item:est2}):} We use the following fact derived from Hoeffding's bound.
	\begin{fact}[sampling]\label{fact:Sampling}
		Let $t\geq\frac{\ln\paren{\frac{2}{\gamma}}}{2\cdot\eps^2}$, let $X_1,\ldots,X_t\in[0,1]$ be iid Boolean random variables, and let $\mu=\Exp[X_i]$. Then
		$\pr{\abs{\frac{1}{t}\sum_{i=1}^{t}X_i-\mu}\geq\eps}\leq \gamma$.

	\end{fact}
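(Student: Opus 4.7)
The plan is to derive this directly from Hoeffding's inequality for bounded independent random variables. Since $X_1,\ldots,X_t$ are iid and take values in $[0,1]$, Hoeffding's two-sided bound yields
\begin{align*}
\pr{\abs{\tfrac{1}{t}\textstyle\sum_{i=1}^{t}X_i-\mu}\geq\eps}\leq 2\exp\paren{-2t\eps^2}.
\end{align*}
It then suffices to verify that the lower bound on $t$ assumed in the statement is precisely what is needed to make the right-hand side at most $\gamma$.

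Concretely, I would substitute $t\geq \frac{\ln(2/\gamma)}{2\eps^2}$ into the exponent: this gives $2t\eps^2\geq \ln(2/\gamma)$, hence $2\exp(-2t\eps^2)\leq 2\cdot e^{-\ln(2/\gamma)}=2\cdot(\gamma/2)=\gamma$, completing the proof. The only modelling remark I would add is that the phrase ``Boolean random variables'' in the statement appears to be a minor misnomer (the variables are only required to be $[0,1]$-bounded iid with common mean $\mu$), but this does not affect the derivation, since Hoeffding's inequality applies to any iid sequence with values in a bounded interval of length one.

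There is no real obstacle here: the statement is a direct rephrasing of Hoeffding's inequality with the explicit sample complexity isolated. The only care needed is to keep the factor of $2$ from the two-sided tail inside the logarithm (which is exactly why the numerator is $\ln(2/\gamma)$ rather than $\ln(1/\gamma)$), and to note that the $2$ in the denominator of the exponent $-2t\eps^2$ cancels with the $2$ in the denominator of the bound on $t$. No induction, case analysis, or auxiliary lemma is required.
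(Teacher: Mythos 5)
Your derivation is correct, and it is exactly the route the paper has in mind — the paper introduces this fact with the words ``derived from Hoeffding's bound'' and gives no further proof, so your one-line Hoeffding computation is precisely the omitted argument. Your side remark is also apt: the variables are $\{0,1\}$-valued in the paper's actual application (the $X_i$ are outputs of $\Color_\cpi$), which explains the word ``Boolean,'' but Hoeffding's inequality needs only the $[0,1]$-boundedness that the statement already assumes.
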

	Taking $\eps \eqdef \xi/2$ and $\gamma \eqdef \xi/2^m$ with \cref{fact:Sampling} yields that
	\begin{align}
	\ppr{r\la \zo^{\ell}}{\size{\Est_r(u) - \mu_u} > \xi/2} \leq \frac{\xi}{2^m}
	\end{align}
	for every $u\in\Vertices(\cpi)\setminus\Leaves(\cpi)$, and a  union bound yields that
	\begin{align*}
	\ppr{r\la \zo^{\ell}}{\neg Q_r} & = \ppr{r\la \zo^{\ell}}{\exists u\in\Vertices(\cpi)\setminus\Leaves(\cpi) \colon  \size{\Est_r(u) - \mu_u} > \xi/2}\\
	&\leq \sum_{u\in\Vertices(\cpi)\setminus\Leaves(\cpi)}\ppr{r\la \zo^{\ell}}{\size{\Est_r(u) - \mu_u} > \xi/2} \\
	&\leq \sum_{u\in\Vertices(\cpi)\setminus\Leaves(\cpi)}\frac{\xi}{2^m} = \xi.
	\end{align*}
\end{proof}

\remove{

\vspace{3cm}
\Bnote{Old}

\begin{lemma}\label{prop:main}
Let $\cpi=(\HonA,\HonB)$ be an $\rnd$-round protocol, let $0< \delta \leq \delta'\le \frac14$ and let $\xi\in(0,1)$ and let $\HC$ be a  $\xi/2$-honest continuator for $\cpi$.  Then, \Inote{for every  $k\in \N$}  there exists an implementation of $\finalAttack{k,\xi,\delta'}{\cpi}(u)$ (see \cref{alg:rFinalPrunedAdv}), denoted by $\EffPruHead_\cpi^{(i,\xi,\delta',\HC)}$, with the following properties:
\begin{enumerate}
\item\label{item:FinalSucc} Let $\cpit=\paren{\HonAt,\HonBt} = \pru{\cpi}{2\delta',\xi}$ be the $(2\delta',\xi)$-approximately pruned variant of $\cpi$. It holds that
\begin{align}\label{eq:FinalSucc}
\Val\paren{\EffPruHead_\cpi^{(k,\xi,\delta',\HC)},\HonB} &\geq \Val\paren{\rcAP{\cpit}{k},\HonBt} - 2\delta' - 2\cdot (m+1)\cdot \sqrt{\xi} - \xi \\
&\quad -2\cdot \phiBal(\sqrt{\xi} + 2\cdot m \cdot \xi,\neigh_{\cpi}(2\delta',\xi) +8\cdot  m\cdot\xi/\delta',\delta,\vect{\gamma})  \nonumber\\
&\quad -3\cdot\phiPru(\neigh_{\cpi}(2\delta',\xi)+8\cdot m\cdot \xi/\delta',\xi,m,\delta,\delta',\vect{\gamma}), \nonumber
\end{align}
\Inote{ remove: for every $k\in \N$} and $\vect{\gamma}\in(1,\infty)^k$. \Inote{update notation of $\phiPru$ and $\phiBal$}

\item\label{item:Final1Time}  $\EffPruHead_\cpi^{(k,\xi,\delta',\HC)}$'s running time is at most $2\cdot k\cdot m^k\cdot\NumOfCont^{k}\cdot \NumOfSam \cdot T_\HC$, for  $T_\HC$ being the running time of runs in time $\HC$ .
\end{enumerate}

\remove{
Let $\cpi=(\HonA,\HonB)$ be an $\rnd$-round protocol, let $0< \delta \leq \delta'\le \frac14$ and let $\xi\in(0,1)$. Assume that $\HC$ is a $\xi/2$-honest continuator for $\cpi$ that uses $\rho_\HC$ random bits and runs in time $T_\HC$. Then, the algorithm $\EffPruHead_\cpi^{(i,\xi,\delta',\HC)}$ has the following properties:
\begin{enumerate}
\item\label{item:FinalSucc} Let $\cpit=\paren{\HonAt,\HonBt} = \pru{\cpi}{2\delta',\xi}$ be the $(2\delta',\xi)$-approximately pruned variant of $\cpi$. It holds that
\begin{align}\label{eq:FinalSucc}
\Val\paren{\EffPruHead_\cpi^{(k,\xi,\delta',\HC)},\HonB} &\geq \Val\paren{\rcAP{\cpit}{k},\HonBt} - 2\delta' - 2\cdot (m+1)\cdot \sqrt{\xi} - \xi \\
&\quad -2\cdot \phiBal(\sqrt{\xi} + 2\cdot m \cdot \xi,\neigh_{\cpi}(2\delta',\xi) +8\cdot  m\cdot\xi/\delta',\delta,\vect{\gamma})  \nonumber\\
&\quad -3\cdot\phiPru(\neigh_{\cpi}(2\delta',\xi)+8\cdot m\cdot \xi/\delta',\xi,m,\delta,\delta',\vect{\gamma}), \nonumber
\end{align}
for every $k\in \N$ and $\vect{\gamma}=(\gamma_1,\ldots,\gamma_k)$ with $\gamma_i > 1$ for every $i\in [k]$.

\item\label{item:Final1Rand} $\EffPruHead_\cpi^{(k,\xi,\delta',\HC)}$ uses at most $k\cdot m^k\cdot\NumOfCont^k \cdot \rho_{\HC} + \NumOfSam\cdot \rho_{\HC}$ random bits.

\item\label{item:Final1Time}  $\EffPruHead_\cpi^{(k,\xi,\delta',\HC)}$'s running time is at most $2\cdot k\cdot m^k\cdot\NumOfCont^{k}\cdot \NumOfSam \cdot T_\HC$.
\end{enumerate}
}
\end{lemma}

Note the extra $\xi$ term in the right-hand side of \cref{eq:FinalSucc}, compared to \cref{lemma:final}. The reason for this extra $\xi$ will be clear ahead.
\remove{
This $\xi$ accounts for the probability that the coins the attacker fixed for the estimator are actually not provide with a good estimator (\ie that $\Est_\cpi^\xi$ is not a $\xi$-estimator).
}

For \cref{prop:main} to be useful, we need the last two terms in \cref{eq:FinalSucc} to be small, and specifically $\neigh_{\cpi}(2\delta',\xi)$ to be small. \cref{prop:BadIsSmall} yields that there is a choice for $\delta'$ such that $\neigh_{\cpi}(2\delta',\xi)$ is small, and that this choice can be made from a polynomial-sized set. When using the above attack (see the next section), we will iterate over the polynomially-many different choices of $\delta'$, to find a value \wrt which the above terms are indeed small.

\paragraph{Outline for the proof of \cref{prop:main}.}\Inote{see my edits}
The lemma is proved via the following steps.
\begin{description}
\item[Biased-continuation attacker from honest continuator and estimator.]  The first step is implementing the approximated recursive biased-continuation attacker for the approximately pruned protocol $\pru{\cpi}{2\delta,\xi}$, using  $\xi$-honest-continuator and $\xi$-estimator for the original protocol $\cpi$. To do that, the $\xi$-honest-continuator and $\xi$-estimator for $\cpi$,   are first used  to implement an  honest continuator for the protocol $\cpit = \pru{\cpi}{2\delta,\xi}$ \Inote{how?}.  The latter algorithm  is then used via rejecting sampling to implement  a biased-continuator for $\cpit$.  Repeating the process of computing honest continuation and then biased continuation for every iteration of the biased-continuation attacker \Inote{yields the  biased-continuation attacker}.

\item[Estimator from randomized honest continuator.]  The second step is to implement a \emph{deterministic} estimator using a randomized honest continuator.   By checking \Inote{?} the value of the transcript the randomized honest continuator returns, it is straightforward to get a randomized estimator. Using standard techniques (polynomially many repetitions to get an exponentially small error, and then union bound over all possible partial transcripts), one can fix, at random, the coins of the randomized estimator in order to obtain, with probability $1-\xi$, a deterministic $\xi$-estimator. 
\end{description}

The rest of this section is dedicated for formally proving \cref{prop:main}. In \cref{sec:ImplementBiasedCont,sec:ImplementEstFromHonCont} we formally prove the first and second steps from the above outline respectively, and in \cref{sec:ImpelementProvingMain} we combine the two steps to formally prove \cref{prop:main}.

\subsubsection{Proving \cref{prop:main}}\label{sec:ImpelementProvingMain}
We now can define the implementation of the pruning-in-the-head attacker, which requires access only to an honest continuator for the original protocol.

\begin{definition}\label{def:effPruHead}
	Let $\cpi$ be an $\rnd$-round protocol and let $\HC$ be an algorithm. The algorithm $\EffPruHead_\cpi^{(k,\xi,\delta,\HC)}$ operates as follows. Before the first call to it, $\EffPruHead_\cpi^{(k,\xi,\delta,\HC)}$ sets $\HonCont_\cpi^\xi \eqdef \HC$ and $\Est_\cpi^\xi \eqdef \Est^{(\xi,\HonCont_\cpi^\xi)}_{\cpi,r}$, where the latter is \cref{alg:EstFromHonCont} when its coins are fixed to $r$, chosen uniformly at random.\footnote{\Bnote{added this footnote} Recall that $\HonCont_\cpi^\xi$ and $\Est_\cpi^\xi$ were set to be arbitrary but fixed $\xi$-honest-continuator and $\xi$-estimator respectively (see \cref{def:AppxHonC,def:AppxEst}). All previous results regarding pruned protocols were proved \wrt that fixing. We now make this fixing specific, and not arbitrary. As long as the fixed algorithm are indeed $\xi$-honest-continuator and $\xi$-estimator, all previous results apply.} Now, when $\EffPruHead_\cpi^{(k,\xi,\delta,\HC)}$ is called with transcript $u$, it replies with $\finalAttack{k,\xi,\delta}{\cpi}(u)$, the answer of the pruning-in-the-head attacker from \cref{alg:rFinalPrunedAdv}.
\end{definition}

Namely, this attacker implements the estimator from \cref{alg:EstFromHonCont} via the honest continuator, but fixes the estimator's random coins in advanced.

} 

\subsection{Main Theorem --- Inexistence of OWF's Implies an Efficient Attacker}\label{sec:EfficeinrAttack}
We are finally ready to state and prove our main result -- the existence of any constant bias (even weak) coin-flipping protocol implies the existence of one-way functions.

In the following we consider both protocols and algorithms that get a security parameter, written in unary, as input (sometimes, in addition to other input), and  protocols and algorithms that do not get a security parameter, as we did in previous sections. We refer to the former type as parametrized and to the latter type as non-parametrized. It will be clear from the context whether we consider a parametrized or non-parametrized entity. In particular, a poly-time entity whose running time is measured as a function of its security parameter is  by definition parametrized. Given a parametrized protocol $\cpi$ and $n\in \N$, let $\cpi_n$ be its non-parametrized variant with the security parameter $1^n$ hardwired into the parties' code. We apply similar notation also for parametrized algorithms.

\begin{theorem}[main theorem, restatement of \cref{thm:mainInf}]\label{thm:main}
Assume one-way functions do not exist. For every \ppt coin-flipping protocol
$\cpi = \HonAHonB$ and $\eps>0$, there exist {\pptm}s $\cA$ and $\cB$ such that the following hold for infinitely many $n$'s.
\begin{enumerate}
 \item $\ppr{}{\out(\cA(1),\Bc)(1^n)=1} \geq 1 - \eps$ or $\ppr{}{\out(\HonA,\cB(0))(1^n)=0} \leq \eps$, and
 \item $\ppr{}{\out(\cA(0),\Bc)(1^n)=0} \leq \eps$ or $\ppr{}{\out(\Ac,\cB(1))(1^n)=1} \geq 1 - \eps$.
\end{enumerate}
\end{theorem}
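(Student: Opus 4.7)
The plan is to instantiate the pruning-in-the-head attacker from \cref{def:effPruHead} using an honest continuator derived from the assumed non-existence of OWFs. Fix $\cpi$ and $\eps$, and set $\eps_0 = \eps/10$. By \cref{thm:MainIdeal} applied with bias parameter $\eps_0$, there exists a constant $\kappa = \kappa(\eps)\in \widetilde{O}(1/\eps)$ such that for \emph{every} non-parametrized protocol, either the ideal recursive attacker taking the role of $\HonA$ biases the outcome above $1-\eps_0$, or the ideal recursive attacker taking the role of $\HonB$ biases it below $\eps_0$. The goal is to transfer this existential dichotomy to the efficient attackers $\cA$ and $\cB$.

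First I would construct, for infinitely many $n$, a $\xi$-honest-continuator $\HC_n$ for $\cpi_n$, where $\xi$ will be chosen to be an inverse polynomial determined by the parameter schedule below. Let $f_\cpi(1^n, r_\HonA, r_\HonB, 1^i)$ output the length-$i$ prefix of $\trans(\HonA_n(\cdot;r_\HonA), \HonB_n(\cdot;r_\HonB))$; this is polynomial-time computable. Under the assumption that OWFs do not exist, \cref{lemma:NoDistOWF} yields a PPT $\xi$-inverter $\Inv_n$ of $f_\cpi$ for infinitely many $n$. From $\Inv_n$ I obtain $\HC_n$ by querying $\Inv_n$ on the current partial transcript, parsing the answer as a random pair of coins consistent with that transcript, and then running $\cpi_n$ forward on those coins; the inversion guarantee of $\Inv_n$ translates directly into the honest-continuation guarantee required by \cref{def:AppxHonC}.

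Next, I would apply \cref{prop:BadIsSmall} with $\delta = \eps_0/10$ and a sufficiently small $\xi$ (at least $\xi \leq \delta^2/(16m^2)$) to obtain a polynomial-sized candidate set $\J_n$ of pruning thresholds such that some $\delta'_n \in \J_n$ satisfies $\neigh_{\cpi_n}(2\delta'_n, \xi) \leq m\sqrt{\xi}$. Since $\size{\J_n}\in\poly(n)$ and $\EffPruHead_{\cpi_n}^{(\kappa, \xi, \delta', \HC_n)}$ runs in polynomial time by \cref{prop:main}(2) (using that $\kappa$ is a constant), the attacker enumerates every $\delta' \in \J_n$, empirically estimates the bias achieved by each instantiation over $\poly(n)$ independent executions, and keeps the best-performing threshold. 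Define $\cA(1)$ as this best instance when playing the role of $\HonA$ and biasing toward~$1$, and $\cB(0)$ as the analogous efficient attacker in the role of $\HonB$ biasing toward~$0$; $\cA(0)$ and $\cB(1)$ are defined symmetrically.

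Finally, combining \cref{thm:MainIdeal} on the approximately pruned protocol $\cpit = \pru{\cpi_n}{2\delta'_n, \xi}$ with the bound of \cref{prop:main}(1) closes the argument. Choose $\mu = \eps_0$ and $\xi$ as an inverse polynomial small enough that, for the constant exponents determined by $\kappa$ and $\delta$, both $\phiPruE{\kappa,\delta}\paren{m\sqrt{\xi}+O(m\xi/\delta'_n), \xi, m, \delta'_n, \mu}$ and $\phiBalE{\kappa,\delta}\paren{m\sqrt{\xi}+O(m\xi/\delta'_n), 2\sqrt{\xi}, 2m\xi, m, \delta'_n, \mu}$ fall below $\eps_0$. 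In the first case of \cref{thm:MainIdeal} we then obtain $\Val(\cA(1), \HonB) \geq 1 - \eps_0 - 2\delta'_n - (m+2)\sqrt{\xi} - \xi - O(\eps_0) \geq 1 - \eps$, giving the first branch of item~$1$; the second case gives the second branch via the symmetric analysis on $\cB(0)$, and item~$2$ follows by running the construction with the roles and target bits exchanged. The main obstacle is coordinating the parameters: $\kappa$ and $\delta$ must be fixed constants (depending only on $\eps$) so that the polynomial error terms in \cref{prop:main} have bounded degree, while $\xi$ must be a small enough inverse polynomial in $n$ — forcing $\Inv_n$ to be an inverter for a sufficiently small inverse polynomial — so that the total error falls below $\eps$. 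This is possible precisely because the dependence on $n$ enters these polynomials only through $m$ and $\xi$, and $\kappa$ is fixed once $\eps$ is.
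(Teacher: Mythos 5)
Your proposal follows the paper's proof of \cref{thm:main} essentially step for step: build an honest continuator from the distributional inverter of the transcript function (the paper's \cref{lemma:ProtocolInv}), invoke \cref{prop:BadIsSmall} to produce a polynomial set of candidate pruning thresholds, enumerate and empirically test each threshold to select the best, and combine \cref{thm:MainIdeal} on the approximately pruned protocol with the error bounds of \cref{prop:main}; the WLOG case split on the dichotomy of \cref{thm:MainIdeal} gives the disjuncts, and \cref{prop:BadIsSmall} combined with the enumeration handles the fact that the good threshold is not known a priori. The only differences are bookkeeping (you use $\eps_0=\eps/10$ and $\delta=\eps_0/10$ where the paper uses $\delta=\eps/8$; you set $\mu=\eps_0$ while the paper takes $\mu(n)=1/n$), and you are slightly looser about the $1/(m\cdot p)$ versus $1/p$ choice of inversion accuracy that the paper makes explicit to absorb the union bound over the $m$ rounds — but you flag the parameter coordination as the remaining obstacle, so this is not a gap in the plan.
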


The proof of \cref{thm:main} follows from \cref{thm:MainIdeal,prop:main} together with the following lemma that  shows how to implement an efficient honest continuator assuming OWFs do not exist.

\begin{lemma}\label{lemma:ProtocolInv}
Assume one-way functions do not exist. Then for any \ppt coin-flipping protocol $\cpi = \HonAHonB$ and $p\in\poly$, there exists a \pptm algorithm $\HC$ such that $\HC_n$ is a $1/p(n)$-honest continuator for $\cpi_n$ for infinitely many $n$'s.
\end{lemma}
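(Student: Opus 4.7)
The plan is to use the assumed non-existence of OWFs, via \cref{lemma:NoDistOWF}, to obtain a distributional inverter for an appropriately bundled ``transcript-prefix'' function derived from $\cpi$, and then to convert this inverter into an honest continuator by completing any sampled pair of consistent random coins into a full transcript.

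First, I would define, for each $n$, the function $f_{\cpi,n}(i, r_\HonA, r_\HonB) \eqdef (i, t_{1,\ldots,i})$, where $t \eqdef \trans(\HonA_n(\cdot;r_\HonA),\HonB_n(\cdot;r_\HonB))$, $r_\HonA \in \zo^{\rho_\HonA(n)}$, $r_\HonB \in \zo^{\rho_\HonB(n)}$, and $i$ is encoded in $\lceil \log m(n)\rceil$ bits (WLOG $m(n)$ is a power of two, by padding). Since $\cpi$ is \ppt, $f_\cpi$ is polynomial-time computable. Crucially, $i$ appears in the \emph{output} of $f_{\cpi,n}$, so any inverter must succeed across all prefix lengths simultaneously. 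Applying \cref{lemma:NoDistOWF} to $f_\cpi$ with the polynomial $p'(n) \eqdef m(n) \cdot p(n)$ yields a \pptm $\Inv$ that is a $1/p'(n)$-inverter of $f_{\cpi,n}$ for infinitely many $n$; I fix such an $n$ and set $\xi \eqdef 1/p'(n)$. Algorithm $\HC_n$ is then defined as follows: on input a partial transcript $u$ with $|u| = i$, it computes $(i', r_\HonA, r_\HonB) \la \Inv(1^n, (i, u))$, simulates $\cpi_n$ from $u$ using coins $(r_\HonA, r_\HonB)$ (according to $\cpi$'s control scheme) to produce a full transcript $\ell$, and returns $\ell$. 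Observe that the uniform preimage distribution $f_{\cpi,n}^{-1}(i,u)$ is exactly $\{i\}\times \ConsisDist_{\cpi_n}(u)$, so feeding a perfect sample from it into the simulation step yields a draw from $\HonCont_{\cpi_n}(u) = \LDist{(\cpi_n)_u}$.

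For the analysis, the inverter guarantee states that $\Pr_{(i, r_\HonA, r_\HonB)\la U}[\SD(\Inv(1^n,(i,t_{1,\ldots,i})), f_{\cpi,n}^{-1}(i, t_{1,\ldots,i})) > \xi] \leq \xi$, where $t$ is the transcript induced by the coins. Since $i$ is uniform and, conditioned on $i$, the prefix $t_{1,\ldots,i}$ is distributed as $\ell_{1,\ldots,i}$ for $\ell\la\LDist{\cpi_n}$, expanding the expectation over $i$ and multiplying by $m(n)$ yields $\sum_{i=0}^{m(n)-1}\Pr_{\ell\la\LDist{\cpi_n}}[\text{$\Inv$ is $\xi$-bad on }(i,\ell_{1,\ldots,i})] \leq m(n)\xi = 1/p(n)$. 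Whenever $\Inv$ is $\xi$-good on $(i, \ell_{1,\ldots,i})$, the data processing inequality (the simulation is a deterministic function of the coins) yields $\SDP{\HC_n(\ell_{1,\ldots,i})}{\HonCont_{\cpi_n}(\ell_{1,\ldots,i})} \leq \xi \leq 1/p(n)$. A union bound over $i$ then gives $\Pr_{\ell\la\LDist{\cpi_n}}[\exists i\colon \SDP{\HC_n(\ell_{1,\ldots,i})}{\HonCont_{\cpi_n}(\ell_{1,\ldots,i})} > 1/p(n)] \leq 1/p(n)$, certifying $\HC_n$ as a $1/p(n)$-honest continuator.

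The only real subtlety is parameter balancing, combined with the need to handle every prefix length $i$ by a single inverter: a naive approach that invokes \cref{lemma:NoDistOWF} separately for each fixed-length prefix function $f_\cpi^{(i)}$ would only yield, for each $i$, an inverter working on its \emph{own} infinite set of $n$'s, with no a priori guarantee that these sets intersect. Bundling $i$ into both the input \emph{and} the output of $f_\cpi$ forces a single inverter to succeed uniformly across all $i$, so one application of \cref{lemma:NoDistOWF} with the tightened polynomial $p'(n) = m(n) p(n)$ suffices, and the union-bound loss of a factor $m(n)$ is absorbed by this choice.
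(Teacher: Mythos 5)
Your proof is correct and takes essentially the same route as the paper: it uses \cref{lemma:NoDistOWF} on a transcript-prefix function that bundles the prefix length $i$ together with the coins, obtains a $1/(m(n)p(n))$-inverter, runs it and completes the protocol from the sampled coins, and closes via a union bound over the $m(n)$ prefix lengths. The paper does not explicitly place $i$ in the function's output, but it is implicitly determined by the prefix length, so there is no real difference; the one technicality you gloss over (and the paper handles in a footnote) is that \cref{lemma:NoDistOWF} is stated for functions on $\zn$, so a standard padding argument is needed to cast the transcript function in that form.
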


The proof of \cref{lemma:ProtocolInv} is given below, but first  we use it to prove  \cref{thm:main}.


\paragraph{Proving \cref{thm:main}.}
\begin{proof}[Proof of \cref{thm:main}]
We focus on proving the first part of the theorem, where the second, symmetric, part follows the same arguments.

Let $\delta=\eps/8$, let $m(n) = \round(\cpi_n)$ and let $\xi(n) = 1/p(n)  < \frac{(2\delta)^2}{16m(n)^2}$ for some large enough $p\in\poly$ to be determined by the analysis. Let $\HC$ be the  algorithm guaranteed by \cref{lemma:ProtocolInv}, such that $\HC_n$ is an $\xi(n)/2$-honest continuator for $\cpi_n$ for every $n$ in an infinite set $\I\subseteq\N$. For  $n\in\I$, let $\delta'_n\in[\delta/2,\delta]$ be such that $\neigh_{\cpi_n}(2\delta'_n,\xi(n))\leq m(n)\cdot\sqrt{2\xi(n)}$, guaranteed to exist from \cref{prop:BadIsSmall}.\footnote{By the choice of $\xi$ and by  \cref{prop:BadIsSmall} there exists $\delta''\in[\delta,2\delta]$ such that $\neigh_{\cpi_n}(\delta'',\xi(n))\leq m(n)\cdot\sqrt{2\xi(n)}$. Now we can set $\delta'=\delta''/2$.}
Let $\cpit_n = \paren{\HonAt_n,\HonBt_n} = \pru{\cpi_n}{2\delta'_n,\xi}$ be the $(2\delta'_n,\xi)$-approximately pruned variant of $\cpi_n$. Let $\kappa=\kappa(\eps/2)$ be such that
$\Val\paren{\rcAP{\cpit_n}{k},\HonBt_n} > 1 - \eps/2$ or $\Val\paren{\HonAt_n,\rcBP{\cpit_n}{k}}< \eps/2$, guaranteed to exist for every $n\in\I$ from \cref{thm:MainIdeal}. Assume \wlg that there exists an infinite set $\I'\subseteq\I$ such that
\begin{align}
 \Val\paren{\rcAP{\cpit_n}{k},\HonBt_n}> 1 - \eps/2
\end{align}
for every $n\in\I'$ and let $\mu(n)=1/n$.

Let $r,s\in\poly$ such that the following two equations hold.
\begin{align*}
&\phiBalE{k,\delta/2}\paren{\neigh_{\cpi_n}(2\delta'_n,\xi(n))+12\cdot m(n)\cdot \xi(n)/\delta'_n,2\sqrt{\xi(n)},2\cdot m(n) \cdot \xi(n), m(n),\delta'_n,\mu(n)} \\
&= \paren{\neigh_{\cpi_n}(2\delta'_n,\xi(n))+12\cdot m(n)\cdot \xi(n)/\delta'_n + 2\sqrt{\xi(n)}  2\cdot m(n) \cdot \xi(n)} \cdot q_{\kappa,\delta/2}(m(n),1/\delta'_n,1/\mu(n)) + 1/\mu(n)\\
&\leq \sqrt{\xi(n)}\cdot r(n).
\end{align*}

And
\begin{align*}
&\phiPruE{\kappa,\delta/2}\paren{\neigh_{\cpi_n}(2\delta'_n,\xi(n))+12\cdot m(n)\cdot \xi(n)/\delta'_n,\xi(n),m(n),\delta'_n,\mu(n)} \\
& = \paren{\neigh_{\cpi_n}(2\delta'_n,\xi(n))+12\cdot m(n)\cdot \xi(n)/\delta'_n + \xi(n)}\cdot p_{\kappa,\delta/2}(m(n),1/\delta'_n,1/\mu(n)) +1/\mu(n) \\
&\leq \sqrt{\xi(n)}\cdot s(n),
\end{align*}
Note that by the setting of parameters thus far, such $r$ and $s$ exists. Finally, let  $\xi\in\poly$ be such that
\begin{align*}
(m(n)+2)\cdot \sqrt{\xi(n)} + \xi(n) + 2\cdot \sqrt{\xi(n)}\cdot r(n) + 3\cdot \sqrt{\xi(n)}\cdot s(n) \in o(1).
\end{align*}
By \cref{prop:main}(\ref{item:FinalSuccProb}),
\begin{align}
\Val\paren{\EffPruHead_{\cpi_n}^{(\kappa,\xi(n),\delta'_n,\HC_n)},\HonB_{\cpi_n}} &\geq \Val\paren{\rcAP{\cpit_n}{k},\HonBt_n} - 2\delta' -o(1) \geq 1-\frac{\eps}{2} - \frac{\eps}{4} - o(1).
\end{align}

We can now define out final adversary $\cA(1)$. Let $\cV = \set{\paren{\delta+j\cdot 2\xi}/2 \colon j\in\set{0,1,\ldots,\left\lceil m/\sqrt{\xi}\right\rceil}}$ be the set from \cref{prop:BadIsSmall} and recall that $\delta'_n\in\cV$.
Prior to interacting with $\HonB$, algorithm $\cA(1)$  estimates the value of
$\cpit_{\delta'}\eqdef\paren{\EffPruHead_{\cpi_n}^{(\kappa,\xi(n),\delta',\HC_n)},\HonB_{\cpi_n}}$, for every $\delta'\in\cV$, by running the latter protocol for polynomially-many times. Let $\delta_n^\ast$ be the value such that $\cpit_{\delta_n^*}$ is the maximum of all estimations. When interacting with $\HonB$, algorithm $\cA(1)$ behave as $\EffPruHead_{\cpi_n}^{(\kappa,\xi,\delta^\ast_n,\HC_n)}$.

Since $\delta'_n\in\cV$, it follows that $\ppr{}{\Val\paren{\cpit_{\delta_n^*}} \geq \Val\paren{\cpit_{\delta'_n}} - \eps/8} \geq 1-o(1)$, where the probability is over the coins on $\cA(1)$. Thus,
\begin{align}
&\ppr{}{\out(\cA(1),\Bc)(1^n)=1} \\
&\geq \ppr{}{\out(\cA(1),\Bc)(1^n)=1 \biggm| \Val\paren{\cpit_{\delta_n^*}} \geq \Val\paren{\cpit_{\delta'_n}} - \eps/8} \cdot \ppr{}{\Val\paren{\cpit_{\delta_n^*}} \geq \Val\paren{\cpit_{\delta'_n}} - \eps/8} \nonumber\\
&\geq (1-5\eps/8-o(1))\cdot (1-o(1)) \nonumber\\
&\geq 1-5\eps/8 - o(1) \geq 1-\eps,  \nonumber
\end{align}
for large enough $n\in\I'$.

The last step is to argue that $\cA(1)$ is efficient. By our choice of parameters, the fact that $\kappa$ is constant (\ie independent of $n$) and $\HC$ is \pptm, \cref{prop:main}(\ref{item:FinalTime}) yields that $\EffPruHead_{\cpi_n}^{(\kappa,\xi(n),\delta'_n,\HC_n)}$ is a \pptm. Since $\size{\cV}\in\poly(n)$, it follows that the running time of $\cA(1)$ is also is $\poly(n)$.
\end{proof}

It is left to prove \cref{lemma:ProtocolInv}.
\paragraph{Proving \cref{lemma:ProtocolInv}.}
\begin{proof}[Proof of \cref{lemma:ProtocolInv}]
Let $\rnd(n) = \round(\cpi_n)$, and let $\rho_\HonA(n)$ and $\rho_\HonB(n)$ be, respectively, the (maximal) number of random bits used by $\HonA$ and $\HonB$ on common input $1^n$. Consider the \emph{transcript function} $f_\cpi$ over $1^\ast\times \zo^{\rho_\HonA(n)}\times \zo^{\rho_\HonB(n)} \times (\rnd(n)-1)$, defined by
\begin{align}\label{eq:ProtocolFunction}
f_\cpi(1^n,r_\HonA,r_\HonB,i) &=1^n, \trans((\HonA(\cdot;r_\HonA),\HonB(\cdot;r_\HonB))(1^n))_{1,\ldots,i}.
\end{align}
Since $\cpi$ is a polynomial time protocol, it follows  \wlg that $m(n),\rho_\HonA(n),\rho_\HonB(n)\in\poly(n)$ and that $f_\cpi$ is computable in polynomial time.

Under the assumption that OWFs do not exist, the transcript function is not distributional one-way, \ie it has an inverter that returns a random preimage. We would like to argue that an algorithm that outputs the transcript induced by the randomness this inverter returns is an honest continuator. This is almost true, as this inverter guarantees to work for a random node of the protocol tree, and we require that an honest continuator work for all nodes in a random \emph{path} of the protocol tree. Still, since any path in the protocol tree is of polynomial length,  the lemma follows by a union bound. We now move to the formal proof.

Fix $p\in\poly$ and let $\Inv$ be the $1/(m\cdot p)$-inverter guaranteed to
exist by \cref{lemma:NoDistOWF}. Namely, $\Inv_n = \Inv(1^n,\cdot)$ is a
$1/(m(n)\cdot p(n))$-inverter for $f_\cpi(1^n,\cdot,\cdot,\cdot)$ for every $n$
within an infinite size index set
$\I\subseteq\N$.\footnote{\cref{lemma:NoDistOWF} is stated for functions whose
  domain is $\zn$ for every $n\in\N$, \ie functions defined for every input
  length. Although the transcript function is not defined for every input length
  (and has $1^n$ as an input), using the fact that it is defined on $\zo^{q(n)}$
  for some $q(n)\in\poly(n)$ and standard padding techniques,
  \cref{lemma:NoDistOWF} does in fact guarantee such an inverter.} By the
definition of $f_\cpi$, choosing a random preimage from $f_\cpi^{-1}(1^n,u)$ is
equivalent to choosing an element according to the distribution
$\paren{\ConsisDist_{\cpi_n}(u),\size{u}}$.\footnote{Recall that
  $\ConsisDist_\cpi(u)$ returns random coins for the parties, consistent with a
  random execution of $\cpi$ leading to $u$.} For a transcript $u$ and coins
$r_\HonA$ and $r_\HonB$ for $\HonA$ and $\HonB$ respectively, let
$f_u(r_\HonA,r_\HonB,\cdot)\eqdef u\concat
\paren{\trans(\HonA(\cdot;r_\HonA),\HonB(\cdot;r_\HonB))(1^n)}_{\size{u}+1,\ldots,\rnd(n)}$,
and let $\HC_n$ be the algorithm that, given input $u$, returns
$f_u(\Inv_n(u))$.\footnote{The function $f$ actually ignores its third argument.
  It is defined to take three arguments only to match the number of arguments in
  the output of $\Inv_n$.}
We show that $\HC_n$ is a $1/p(n)$-honest continuator for $\cpi_n$, for every
$n\in\I$.

Fix $n\in\I$. Let $m=m(n)$,  $p=p(n)$ and from now on we omit $n$ from notations. Note that $f_u(\ConsisDist_{\cpi}(u),\size{u})\equiv \LDist{\cpi_u} \equiv \HonCont_\cpi(u)$, 
and thus
\begin{align}
\SDP{\Inv(u)}{\paren{\ConsisDist_{\cpi}(u),\size{u}}} \geq \SDP{\HC(u)}{\HonCont_\cpi(u)},
\end{align}
for every transcript $u$. Let $I$ and $L$ be random variables distributed as $I \la (m-1)$ and $L\la \LDist{\cpi}$ respectively. Compute
\begin{align*}
\lefteqn{\ppr{}{\SDP{\Inv(L_{1,\ldots,I})}{\paren{\ConsisDist_{\cpi}(L_{1,\ldots,I}),I}} > \frac{1}{m\cdot p}}} \\
&= \sum_{j=0}^{m-1} \ppr{}{\SDP{\Inv(L_{1,\ldots,I})}{\paren{\ConsisDist_{\cpi}(L_{1,\ldots,I}),I}} > \frac{1}{m\cdot p} \mid I=j}\cdot \ppr{}{I=j} \\
&= \frac{1}{m} \sum_{j=0}^{m-1} \ppr{}{\SDP{\Inv(L_{1,\ldots,j})}{\paren{\ConsisDist_{\cpi}(L_{1,\ldots,j}),j}} > \frac{1}{m\cdot p}} \\
&\geq \frac{1}{m} \sum_{j=0}^{m-1} \ppr{}{\SDP{\HC(L_{1,\ldots,j})}{\HonCont_{\cpi}(L_{1,\ldots,j})} > \frac{1}{m\cdot p}} \\
&\geq \frac{1}{m} \sum_{j=0}^{m-1} \ppr{}{\SDP{\HC(L_{1,\ldots,j})}{\HonCont_{\cpi}(L_{1,\ldots,j})} > \frac{1}{p}} \\
& \geq \frac{1}{m} \ppr{}{\exists j\in(m-1) \colon \SDP{\HC(L_{1,\ldots,j})}{\HonCont_{\cpi}(L_{1,\ldots,j})} > \frac{1}{p}}.
\end{align*}
The proof now follows by the properties of $\Inv$.

\end{proof}

\bibliographystyle{abbrvnat}
\bibliography{crypto}

\appendix
\section{Missing Proofs}\label{sec:MissProof}

\newcommand{\tg}{\widetilde{g}}
\subsection{Proving Lemma \ref{lemma:calculus1}}

\begin{lemma}[Restatement of \cref{lemma:calculus1}]
\CalcLemmaOne{1}
\end{lemma}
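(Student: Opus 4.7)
The plan is to recognize the claim as a single application of Jensen's inequality. Define
\[
g\colon (0,\infty)\times(0,\infty)^k \to \mathbb{R}, \qquad g(x,a_1,\ldots,a_k)\eqdef\frac{x^{k+1}}{a_1 a_2\cdots a_k}.
\]
The right-hand side of \cref{eq:CalcLemma} is exactly $g\bigl(p_0(x,\mathbf{a})+p_1(y,\mathbf{b})\bigr)$, so the asserted inequality reads
$p_0\,g(x,\mathbf{a})+p_1\,g(y,\mathbf{b})\ge g\bigl(p_0(x,\mathbf{a})+p_1(y,\mathbf{b})\bigr)$,
which is Jensen for a convex $g$. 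Hence the whole proof reduces to verifying that $g$ is jointly convex on its open domain, together with a short direct check for the boundary cases $x=0$ or $y=0$ (permitted since $x,y\in[0,1]$).

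For convexity, I would compute $\mathrm{Hess}(g)$ directly and show $\mathrm{Hess}(g)\succeq 0$. A tangent direction $(u,v_1,\ldots,v_k)$, after the change of variables $w_j\eqdef v_j/a_j$, $s\eqdef\sum_j w_j$, $Q\eqdef\sum_j w_j^2$, produces (at any interior point with $x>0$) a quadratic form which, up to the positive factor $1/\prod_i a_i$, equals
\[
k(k+1)x^{k-1}u^2 \;-\; 2(k+1)x^k\,u\,s \;+\; x^{k+1}(Q+s^2).
\]
Viewed as a quadratic in $u$, its discriminant simplifies to $4(k+1)x^{2k}(s^2-kQ)$, which is $\le 0$ by the Cauchy--Schwarz bound $s^2=(\sum_j w_j)^2\le k\sum_j w_j^2 = kQ$. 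Combined with the positive leading coefficient, this yields nonnegativity of the form, so $\mathrm{Hess}(g)\succeq 0$ and $g$ is jointly convex on $(0,\infty)^{k+1}$.

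Jensen then yields \cref{eq:CalcLemma} whenever $x,y>0$. For the boundary, if say $x=0$, the inequality reduces (after cancellation) to $\prod_i(p_0a_i+p_1b_i)\ge p_1^k\prod_i b_i$, which is immediate factor-by-factor since $p_0a_i+p_1b_i\ge p_1b_i$; the case $y=0$ is symmetric. The only genuine obstacle in this plan is the Hessian bookkeeping: the crucial insight is that the messy quadratic form in $(u,w_1,\ldots,w_k)$ collapses to one whose discriminant is controlled by the Cauchy--Schwarz gap $kQ-s^2\ge 0$. An alternative induction on $k$ is also available, with base case $k=1$ being the Engel (Titu) form of Cauchy--Schwarz, $p_0x^2/a_1+p_1y^2/b_1\ge(p_0x+p_1y)^2/(p_0a_1+p_1b_1)$, but the convexity route is cleaner and more uniform in $k$.
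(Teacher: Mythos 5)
Your proof is correct and takes a genuinely different route from the paper's. The paper normalizes first: dividing both sides by the right-hand side and substituting $z = p_0 x/(p_0x + p_1y)$, $c_i = p_0a_i/(p_0a_i+p_1b_i)$ reduces the claim to $z^{k+1}/\prod_i c_i + (1-z)^{k+1}/\prod_i(1-c_i) \geq 1$ for $z\in[0,1]$, $c_i\in(0,1)$. It then fixes $z\in(0,1)$, argues that the map $(c_1,\ldots,c_k)\mapsto 1/\prod_i c_i = \exp\bigl(\sum_i -\ln c_i\bigr)$ is convex (composition of convex functions), observes that $(c_1,\ldots,c_k)=(z,\ldots,z)$ is a critical point where the expression equals $1$, and concludes because a convex function attains its global minimum at any critical point. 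You instead prove \emph{joint} convexity of $g(x,a_1,\ldots,a_k)=x^{k+1}/\prod_i a_i$ on $(0,\infty)^{k+1}$ and invoke Jensen directly, with a manual check for the boundary $x=0$ or $y=0$ where Jensen on the open orthant does not apply. I verified your Hessian bookkeeping: with $w_j = v_j/a_j$, $s=\sum_j w_j$, $Q=\sum_j w_j^2$, the quadratic form of the Hessian in direction $(u,v_1,\ldots,v_k)$, up to the positive factor $1/\prod_i a_i$, is indeed $k(k+1)x^{k-1}u^2 - 2(k+1)x^k us + x^{k+1}(Q+s^2)$, whose discriminant as a quadratic in $u$ is $4(k+1)x^{2k}(s^2-kQ)$, and $s^2\leq kQ$ is exactly Cauchy--Schwarz. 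Your boundary case also checks out ($\prod_i(p_0a_i+p_1b_i)\geq\prod_i p_1b_i$ factor-by-factor). The two approaches buy different things: the paper's normalization cuts one variable and makes the convexity fact easier (a one-line log-sum-exp composition), at the cost of a separate critical-point argument; your route is a single Jensen application once joint convexity is established, at the cost of the heavier $(k{+}1)$-variable Hessian, which you collapse cleanly to the Cauchy--Schwarz gap.
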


\begin{proof}
The lemma easily follows if one of the following holds: (1) $p_0=1,p_1=0$; (2) $p_0=0,p_1=1$; and (3) $x=y=0$. Assuming $1>p_0,p_1>0$ and $x+y>0$, dividing \cref{eq:CalcLemma} by its right-hand side (which is always positive) gives
\begin{align}\label{eq:CalcLemmaBeforeVarChange}
p_0\cdot \frac{\left(\frac{x}{(p_0x+p_1y)}\right)^{k+1}}{\prod_{i=1}^k \frac{a_i}{p_0a_i+p_1b_i}} + p_1\cdot \frac{\left(\frac{y}{(p_0x+p_1y)}\right)^{k+1}}{\prod_{i=1}^k \frac{b_i}{p_0a_i+p_1b_i}} \geq 1.
\end{align}
Define the following variable changes:
\begin{align*}
z= \frac{p_0 x}{p_0x+p_1y}  \qquad  c_i = \frac{p_0 a_i}{p_0a_i+p_1b_i} \quad \mbox{for $1\leq i\leq k$}.
\end{align*}
It follows that
\begin{align*}
1-z= \frac{p_1 y}{p_0x+p_1y}  \qquad  1-c_i = \frac{p_1 b_i}{p_0a_i+p_1b_i} \quad \mbox{for $1\leq i\leq k$}.
\end{align*}
Note that $0\leq z \leq 1$ and that $0< c_i < 1$ for every $1\leq i \leq k$. Plugging the above into \cref{eq:CalcLemmaBeforeVarChange}, it remains to show that
\begin{align}\label{eq:calc1eq2}
\frac{z^{k+1}}{\prod_{i=1}^k c_i} + \frac{(1-z)^{k+1}}{\prod_{i=1}^k (1-c_i)} \geq 1
\end{align}
for all $0\leq z \leq 1$ and $0< c_i < 1$.
\cref{eq:calc1eq2} immediately follows for $z=0,1$, and in the rest of the proof we show that it also holds for $z\in(0,1)$. 
Define $f(z,c_1,\ldots,c_k)\eqdef \frac{z^{k+1}}{\prod_{i=1}^k c_i} + \frac{(1-z)^{k+1}}{\prod_{i=1}^k (1-c_i)} - 1$. \cref{eq:calc1eq2} follows by showing that $f(z,c_1,\ldots,c_k)\geq 0$ for all $z\in(0,1)$ and $0<c_i<1$. Taking the partial derivative \wrt $c_i$ for $1\leq i \leq k$, it holds that
\begin{align*}
\frac{\partial}{\partial c_i}f = -\frac{z^{k+1}}{c_i^2\prod_{\substack{1\leq j\leq k \\ j\neq i}} c_j} + \frac{(1-z)^{k+1}}{(1-c_i)^2\prod_{\substack{1\leq j\leq k \\ j\neq i}} (1-c_j)}.
\end{align*}
Fix $0 < z < 1$, and let $f_z(c_1, \ldots, c_k)=f(z,c_1, \ldots, c_k)$. If $c_1=\ldots=c_k=z$, then for every $1\leq i \leq k$ it holds that $\frac{\partial}{\partial c_i}f_z(c_1,\ldots,c_k) = \frac{\partial}{\partial c_i}f(z,c_1,\ldots,c_k) = 0$. Hence, $f_z$ has a local extremum at $(c_1,\ldots,c_k)=(z,\ldots,z)$. Taking the second partial derivative \wrt $c_i$ for $1\leq i \leq k$, it holds that
\begin{align*}
\frac{\partial^2}{\partial c_i}f = \frac{2z^{k+1}}{c_i^3\prod_{\substack{1\leq j\leq k \\ j\neq i}} c_j} + \frac{2(1-z)^{k+1}}{(1-c_i)^3\prod_{\substack{1\leq j\leq k \\ j\neq i}} (1-c_j)} >0,
\end{align*}
and thus, $(c_1,\ldots,c_k)=(z,\ldots,z)$ is a local minimum of $f_z$.

The next step is to show that $(c_1,\ldots,c_k)=(z,\ldots,z)$ is a global minimum of $f_z$. This is done by showing that $f_z$ is convex when $0<c_i<1$. Indeed, consider the function $-\ln(x)$. This is a convex function in for $0<x<1$. Thus the function $\sum_{i=1}^k-\ln(c_i)$, which is a sum of convex functions, is also convex. Moreover, consider the function $e^x$. This is a convex function for any $x$. Hence, the function $e^{\sum_{i=1}^k-\ln(c_i)}=\frac{1}{\prod_{i=1}^k c_i}$, which is a composition of two convex functions, is also convex for $0<c_i<1$. Since $z$ is fixed, the function $\frac{z^{k+1}}{\prod_{i=1}^k c_i}$ is also convex. Similar argument shows that $ \frac{(1-z)^{k+1}}{\prod_{i=1}^k (1-c_i)}$ is also convex for $0< c_i <1$. This yields that $f_z$, which is a sum of two convex functions, is convex. It is known that a local minimum of a convex function is also a global minimum for that function \cite[Therorem A, Chapter \MakeUppercase{\romannumeral 5}]{RobertsV73}, and thus $(z,\ldots,z)$ is a global minimum of $f_z$.

Let $z',c_1',\ldots,c_k'\in(0,1)$. Since $(z',\ldots,z')$ is a global minimum of $f_{z'}$, it holds that $f(z',z',\ldots,z')=f_{z'}(z', \ldots, z')\leq f_{z'}(c_1', \ldots, c_k') = f(z',c_1', \ldots, c_k')$. But $f(z',z',\ldots,z')=0$, and thus $f(z',c_1', \ldots, c_k') \geq 0$. This shows that \cref{eq:calc1eq2} holds, and the proof is concluded.
\end{proof}

\subsection{Proving Lemma \ref{lemma:calculus2}}
\begin{lemma}[Restatement of \cref{lemma:calculus2}]
\CalcLemmaTwo{1}
\end{lemma}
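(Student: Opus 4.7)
The plan is to set $\alpha \eqdef \delta/(2-\delta) \in (0, 1/3]$ and prove the inequality in two steps: a reduction to the boundary case $x = \delta$, followed by a short case analysis at that point. The key preliminary identity is
$\lambda a_1 + a_2 = \lambda(1+y) + (1-\lambda y) = 1 + \lambda,$
which shows that $1$ is the convex combination $\tfrac{\lambda a_1 + a_2}{1+\lambda}$.

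For the reduction, I would consider the slack
\begin{align*}
f(x) \eqdef (1+\lambda)(2-x) - \lambda a_1^{1+\alpha}(2-a_1 x) - a_2^{1+\alpha}(2-a_2 x)
\end{align*}
and expand it in powers of $x$. This reveals that $f$ is affine in $x$ with slope $\lambda a_1^{2+\alpha} + a_2^{2+\alpha} - (1+\lambda)$. Jensen's inequality applied to the convex function $t \mapsto t^{2+\alpha}$, together with the identity above, gives $\lambda a_1^{2+\alpha} + a_2^{2+\alpha} \geq (1+\lambda) \cdot 1^{2+\alpha} = 1+\lambda$, so the slope is non-negative. Hence $f(x) \geq f(\delta)$ for every $x \geq \delta$, reducing the task to showing $f(\delta) \geq 0$.

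To establish $f(\delta) \geq 0$, I would split according to the sign of $2 - a_1 \delta$. In the first case $(1+y)\delta \leq 2$, both $(2 - a_i \delta)$ are non-negative, so I would apply the tangent-line bound $a^{1+\alpha} \leq a + \alpha a(a-1)$ (valid for every $a > 0$ by concavity of $a \mapsto a^\alpha$) and multiply by the non-negative factors. Simplifying using $a_1 - 1 = y$, $a_2 - 1 = -\lambda y$, and the identity $\lambda a_1^2 + a_2^2 = (1+\lambda)(1 + \lambda y^2)$ reduces $f(\delta) \geq 0$ to the scalar inequality $\delta\bigl(1 + \alpha(a_1 + a_2)\bigr) \geq 2\alpha$. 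Since $\lambda y \leq 1$ forces $a_1 + a_2 = 2 + y(1-\lambda) \geq 1$, this follows from $\delta(1+\alpha) \geq 2\alpha$, which rearranges exactly to $\alpha \leq \delta/(2-\delta)$, matching our choice. In the complementary case $(1+y)\delta > 2$, the factor $(2 - a_1 \delta)$ is negative while $(2 - a_2 \delta) \geq 2 - \delta > 0$ (since $a_2 \leq 1$ and $\delta \leq 1/2$); here I would bound $a_1^{1+\alpha}(2-a_1\delta) \leq a_1(2-a_1\delta)$ using $a_1^{1+\alpha} \geq a_1$ (the negative factor reverses the direction) and keep the tangent-line bound for $a_2$, obtaining after simplification
\begin{align*}
\lambda a_1^{1+\alpha}(2-a_1\delta) + a_2^{1+\alpha}(2-a_2\delta) \leq (1+\lambda)(2-\delta) - \delta \lambda y^2 (1+\lambda) - \alpha \lambda y \, a_2(2-a_2\delta),
\end{align*}
with both correction terms manifestly non-positive. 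The main difficulty will be coordinating the sign changes of $(2 - a_i x)$ across the various parameter regimes; the linearity-in-$x$ reduction to $x = \delta$ is what makes this tractable, because at $x = \delta \leq 1/2$ one automatically has $a_2 \delta < 2$, eliminating the subcase that would otherwise require a non-trivial lower bound on $a_2^{1+\alpha}$.
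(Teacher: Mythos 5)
Your proof is correct and takes a genuinely different, more elementary route than the paper's. The paper proceeds by rearranging the inequality so that $x$ is isolated on one side, then handling the extreme cases $\lambda y\in\{0,1\}$ separately and, for $\lambda y\in(0,1)$, reducing to the claim that $\sup_{\lambda,y}f_w(\lambda,y)\to 0$ as $w\to 0^+$; establishing this limit requires a Bolzano--Weierstrass subsequence argument together with Taylor expansions with Lagrange remainder and L'H\^opital, and it yields only a \emph{non-constructive} $\alpha(\delta)$. Your argument shares the first observation (that the excess is affine in $x$, so one may freeze $x=\delta$), but then replaces the whole limiting/compactness machinery by two elementary scalar bounds: Jensen applied to $t\mapsto t^{2+\alpha}$ at the convex combination $\frac{\lambda a_1+a_2}{1+\lambda}=1$ to get the non-negative slope, and the concavity tangent $a^\alpha\le 1+\alpha(a-1)$ (reversed on the single term where $2-a_1\delta<0$) to settle $f(\delta)\ge 0$ after the algebraic simplifications $\lambda a_1+a_2=1+\lambda$, $\lambda a_1^2+a_2^2=(1+\lambda)(1+\lambda y^2)$, $a_1-a_2=y(1+\lambda)$. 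The payoff is an \emph{explicit} constant $\alpha(\delta)=\delta/(2-\delta)$, which is strictly more information than the paper extracts and would, if propagated through \cref{lem:ProbVisitUnBal}, give a concrete exponent $c(\delta)$ rather than a mere existence statement. I verified the two algebraic identities and the final reduction $2\alpha\le\delta\big(1+\alpha(a_1+a_2)\big)$ (which uses $a_1+a_2=2+y(1-\lambda)\ge 1$, itself a consequence of $\lambda y\le 1$); they are all correct.
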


\begin{proof}
Fix $\delta\in (0,\frac12]$. Rearranging the terms of \cref{eq:calculus2}, one can equivalently prove that for some $\alpha\in (0,1]$, it holds that
\begin{align}\label{eq:eq1}
x\cdot(1+\lambda-\lambda\cdot(1+y)^{2+\alpha}-(1-\lambda y)^{2+\alpha})\leq 2\cdot (1+\lambda-\lambda\cdot(1+y)^{1+\alpha}-(1-\lambda y)^{1+\alpha})
\end{align}
for all $x,\lambda$ and $y$ in the proper range. Note that the above trivially holds, regardless of the  choice of $\alpha\in (0,1]$, if $\lambda  y=0$ (both sides of the inequality are $0$). In the following we show that for the cases $\lambda y =1$ and $\lambda y \in (0,1)$, \cref{eq:eq1} holds for any small enough choice of $\alpha$. Hence, the proof follows by taking the small enough $\alpha$ for which the above cases hold simultaneously.

\begin{description}
  \item[$\lambda y =1$:] Let $z = \frac1\lambda +1  = y+1 >1$. Plugging in \cref{eq:eq1}, we need to find $\alpha_h\in(0,1]$ for which it holds that
  \begin{align}
  x \cdot \paren{1+\frac{1}{z-1} - \frac{z^{2+\alpha}}{z-1}} \leq 2\cdot\paren{1+\frac{1}{z-1} - \frac{z^{1+\alpha}}{z-1}}
  \end{align}
  for for all $z>1$ and $\alpha \in (0,\alpha_h)$.
  Equivalently, by multiplying both sides by $\frac{z-1}{z}$ -- which, since $z>1$, is always positive -- it suffices to find $\alpha_h\in(0,1]$ for which it holds that
\begin{align}\label{eq:eq3}
x\cdot(1-z^{1+\alpha})\leq 2\cdot (1-z^{\alpha})
\end{align}
for all $z>1$ and $\alpha \in (0,\alpha_h)$.

Since $1-z^{1+\alpha}<0$ for all $\alpha\geq0$ and $z>1$, and letting $h_\alpha(z) \eqdef \frac{z^{\alpha}-1}{z^{1+\alpha}-1}$, proving  \cref{eq:eq3} is equivalent to finding $\alpha_h \in (0,1]$ such that
\begin{align}\label{eq:eq4}
\delta\geq \sup_{z>1} \set{2\cdot h_\alpha(z)} = 2\cdot \sup_{z>1} \set{ h_\alpha(z)}
\end{align}
for all $z>1$ and $\alpha \in (0,\alpha_h)$.

Consider the function
\begin{align}\label{eq:hdef}
h(w)\eqdef \sup_{z>1}\set{h_w(z)}.
\end{align}
\cref{claim:hconverges} states that $\lim_{w\to 0^+}h(w) = 0$ (\ie $h(w)$ approaches $0$ when $w$ approaches $0$ from the positive side), and hence $2\cdot\lim_{w\to 0^+}h(w) = 0$. The proof of \cref{eq:eq4}, and thus the proof of this part, follows since there is now small enough $\alpha_h < 1$ for which $x\geq 2\cdot h(\alpha)$ for every $\alpha\in(0,\alpha_h]$ and $x\geq\delta$.

  \item[$\lambda y \in (0,1)$:] Consider the function
\begin{align}
g(\alpha,\lambda,y)\eqdef 1+\lambda-\lambda\cdot(1+y)^{2+\alpha}-(1-\lambda y)^{2+\alpha}.
\end{align}
\cref{claim:gIsNegative} states that for $\alpha\geq 0$, the function $g$ is negative over the given range of $\lambda$ and $y$. This allows us to complete the proof by finding $\alpha \in (0,1]$ for which
\begin{align}\label{eq:eq2}
\delta\geq 2\cdot\sup_{\lambda,y >0 ,  \lambda y<1}\set{f_\alpha(\lambda,y) \eqdef  \frac{1+\lambda-\lambda\cdot(1+y)^{1+\alpha}-(1-\lambda y)^{1+\alpha}}{1+\lambda-\lambda\cdot(1+y)^{2+\alpha}-(1-\lambda y)^{2+\alpha}}}.
\end{align}
Consider the function
\begin{align}
f(w)\eqdef\sup_{\lambda,y >0 ,  \lambda y<1}\set{f_w(\lambda,y)}.
\end{align}
\cref{claim:fconverges} states that $\lim_{w\to 0^+}h(w) = 0$, and hence $(1+\delta)\cdot\lim_{w\to 0^+}h(w) = 0$. The proof of \cref{eq:eq2}, and thus the proof of this part follows since there is now small enough $\alpha_f < 1$ for which $x\geq 2\cdot h(\alpha)$ for every $\alpha\in(0,\alpha_f]$ and $x\geq\delta$.
\end{description}
By setting $\alpha_{\min}=\min\set{\alpha_h,\alpha_f}$, it follows that $x\geq h(\alpha),f(\alpha)$ for any $\alpha \in (0,\alpha_{\min})$ and $x\geq\delta$, concluding the the proof of the claim.
\end{proof}

\begin{claim}\label{claim:hconverges}
$\lim_{w\to 0^+}h(w) = 0$.
\end{claim}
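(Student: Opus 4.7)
The plan is to show that $h(w)$ admits the clean closed-form bound $h(w) = \frac{w}{1+w}$, from which the limit $\lim_{w\to 0^+} h(w) = 0$ is immediate. Concretely, I will prove two matching bounds on $h_w(z) = \frac{z^w - 1}{z^{1+w} - 1}$ for $w > 0$ and $z > 1$.

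First I would establish the upper bound $h_w(z) \le \frac{w}{1+w}$ for every $z > 1$. Cross-multiplying (both denominators are positive since $z > 1$ and $w > 0$), this is equivalent to the inequality
\begin{align*}
(1+w)(z^w - 1) \le w(z^{1+w} - 1),
\end{align*}
which after rearrangement becomes $\phi(z) \eqdef (1+w)z^w - w\, z^{1+w} \le 1$. Since $\phi(1) = 1$ and
\begin{align*}
\phi'(z) = w(1+w) z^{w-1} - w(1+w) z^w = w(1+w) z^{w-1}(1 - z),
\end{align*}
we have $\phi'(z) < 0$ for all $z > 1$, so $\phi$ is strictly decreasing on $(1,\infty)$ and the bound $\phi(z) \le 1$ follows. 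Taking the supremum over $z > 1$ yields $h(w) \le \frac{w}{1+w}$.

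Next I would show that this bound is tight by computing the limit as $z \to 1^+$. Using the first-order Taylor expansion $z^a - 1 = a \ln z + O((\ln z)^2)$ (valid for $z$ close to $1$),
\begin{align*}
h_w(z) = \frac{z^w - 1}{z^{1+w} - 1} = \frac{w \ln z + O((\ln z)^2)}{(1+w)\ln z + O((\ln z)^2)} \xrightarrow[z \to 1^+]{} \frac{w}{1+w}.
\end{align*}
Combined with the upper bound, this gives $h(w) = \sup_{z > 1} h_w(z) = \frac{w}{1+w}$, and therefore $\lim_{w \to 0^+} h(w) = \lim_{w \to 0^+} \frac{w}{1+w} = 0$, as required.

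The main obstacle is really just spotting the right algebraic manipulation that converts $h_w(z) \le \frac{w}{1+w}$ into a statement about a function of $z$ whose monotonicity is easy to read off; once one writes down $\phi(z) = (1+w)z^w - wz^{1+w}$ and differentiates, the fact that $\phi(1) = 1$ and $\phi' < 0$ on $(1,\infty)$ makes the inequality transparent. The matching lower bound via the limit $z \to 1^+$ is then a routine Taylor expansion.
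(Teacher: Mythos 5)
Your proof is correct, and it lands on the same closed form the paper derives, namely $h(w) = \frac{w}{1+w}$, with the supremum of $h_w$ attained in the limit $z \to 1^+$. The route you take to the key monotonicity fact is cleaner than the paper's: the paper differentiates $h_w$ directly with the quotient rule, reduces to showing $p(z) = z^{1+w} - (1+w)z + w > 0$ on $(1,\infty)$, and needs a second-derivative argument to establish that $z=1$ is the minimum of $p$; you instead cross-multiply the target inequality $h_w(z) \le \frac{w}{1+w}$ to get the equivalent $\phi(z) = (1+w)z^w - wz^{1+w} \le 1$, and $\phi'(z) = w(1+w)z^{w-1}(1-z)$ factors so transparently that the sign is immediate with no second derivative needed. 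Both yield the same conclusion; yours just avoids one layer of derivative analysis. One small additional observation: for the stated claim you do not even need the matching lower bound (the Taylor/L'Hôpital step at $z\to 1^+$), since $0 \le h(w) \le \frac{w}{1+w}$ already forces $\lim_{w\to 0^+} h(w) = 0$ by squeezing; the tightness computation is a nice bonus but logically superfluous.
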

\begin{proof}
Simple calculations show that for fixed $w$, the function $h_w(z)$ is decreasing in the interval $(1,\infty)$. Indeed, fix some $w>0$, and consider the derivative of $h_w$
\begin{align}
h'_w(z) &= \frac{wz^{w-1}(z^{1+w}-1) - (1+w)z^w(z^w-1)}{(z^{1+w}-1)^2}\\
&= \frac{-z^{w-1}(z^{1+w} - (1+w)z + w)}{(z^{1+w}-1)^2}.\nonumber
\end{align}
Let $p(z)\eqdef z^{1+w} - (1+w)z + w$. Taking the derivative of $p$ and equaling it to $0$, we have that
\begin{align}
p'(z) &= (1+w)z^w - (1+w) = 0 \\
&\Longleftrightarrow z=1.\nonumber
\end{align}
Since $p''(1)=(1+w)w > 0$ for all $w>0$, it holds that  $z=1$ is the minimum of $p$ in $[1,\infty)$. Since $p(1)=0$, it holds that $p(a)>0$ for every $a\in (1,\infty)$. Thus, $h'_w(z) < 0$, and $h_w(z)$ is decreasing in the interval $(1,\infty)$. The latter fact yields that
\begin{align*}
\lim_{w\to 0^+}h(w) &= \lim_{w\to 0^+} \sup_{z>1}h_w(z) \\
&= \lim_{w\to 0^+} \lim_{z\to 1^+}  \frac{z^{w}-1}{z^{1+w}-1} \\
&= \lim_{w\to 0^+} \lim_{z\to 1^+}  \frac{wz^{w-1}}{(1+w)z^{w}} \\
&= \lim_{w\to 0^+} \frac{w}{1+w} \\
&= 0,
\end{align*}
where the third equality holds by L'H\^{o}pital's rule.
\end{proof}

\begin{claim}\label{claim:gIsNegative}
For all $\alpha\geq 0$ and $\lambda, y >0$ with $\lambda y < 1$, it holds that $g(\alpha,\lambda,y)<0$.
\end{claim}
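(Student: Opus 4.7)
The plan is to prove \cref{claim:gIsNegative} via a one-line application of Jensen's inequality. The key observation is the algebraic identity
\[
\lambda\cdot(1+y) \;+\; 1\cdot(1-\lambda y) \;=\; 1+\lambda,
\]
which, after dividing by $1+\lambda$, exhibits $1$ as the convex combination
\[
1 \;=\; \frac{\lambda}{1+\lambda}(1+y) \;+\; \frac{1}{1+\lambda}(1-\lambda y),
\]
with weights $p\eqdef\tfrac{\lambda}{1+\lambda}$ and $q\eqdef\tfrac{1}{1+\lambda}$ summing to one. Under the hypotheses $\lambda,y>0$ and $\lambda y<1$, both $1+y$ and $1-\lambda y$ lie in the positive reals and are distinct (indeed $1-\lambda y<1<1+y$).

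Next I would invoke the strict convexity of the map $t\mapsto t^{2+\alpha}$ on $(0,\infty)$, which is valid for every $\alpha\ge 0$ since $2+\alpha\ge 2>1$ (its second derivative $(2+\alpha)(1+\alpha)t^{\alpha}$ is strictly positive on $(0,\infty)$). Jensen's inequality, applied strictly because the two points $1+y$ and $1-\lambda y$ differ, then yields
\[
p\,(1+y)^{2+\alpha} \;+\; q\,(1-\lambda y)^{2+\alpha} \;>\; \bigl(p(1+y)+q(1-\lambda y)\bigr)^{2+\alpha} \;=\; 1.
\]

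Multiplying through by $1+\lambda>0$ gives $\lambda(1+y)^{2+\alpha}+(1-\lambda y)^{2+\alpha}>1+\lambda$, which rearranges to $g(\alpha,\lambda,y)<0$, as desired. I do not expect any genuine obstacle: the only subtleties are (i) verifying that $1-\lambda y$ is positive (so that raising it to the power $2+\alpha$ is harmless and strict convexity on $(0,\infty)$ applies) and (ii) observing that the two points are distinct so that Jensen is strict; both are immediate from the hypothesis $0<\lambda y<1$ and $y>0$.
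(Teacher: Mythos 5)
Your proof is correct, and it takes a genuinely different route from the paper. The paper fixes $\lambda,y$ and argues as a function of $\alpha$: it shows that $f(\alpha)\eqdef g(\alpha,\lambda,y)$ is strictly decreasing on $[0,\infty)$ (since $f'(\alpha)=-\lambda(1+y)^{2+\alpha}\ln(1+y)+(1-\lambda y)^{2+\alpha}\ln(1-\lambda y)$ is the sum of two negative terms) and then checks the base case $f(0)=-\lambda y^2(1+\lambda)<0$ by direct expansion. Your argument instead treats $\alpha$ as fixed and exploits the convex combination $1=\frac{\lambda}{1+\lambda}(1+y)+\frac{1}{1+\lambda}(1-\lambda y)$ together with strict convexity of $t\mapsto t^{2+\alpha}$ on $(0,\infty)$. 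The two proofs have complementary virtues: the paper's is self-contained calculus (and the base case $f(0)<0$ gives a closed-form for the slack at $\alpha=0$, which is quantitatively useful elsewhere in that style of analysis), while yours is shorter and makes the underlying structure transparent — the inequality is \emph{just} Jensen applied to the probability vector $(\tfrac{\lambda}{1+\lambda},\tfrac{1}{1+\lambda})$ and the points $1+y$ and $1-\lambda y$, and it also explains in one stroke why the exponent threshold is $1$ (any power $>1$ would do). Both subtleties you flag — positivity of $1-\lambda y$ and distinctness of the two points to make Jensen strict — are exactly the right things to check and are indeed immediate from $0<\lambda y<1$ and $y>0$.
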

\begin{proof}
Fix $\lambda, y >0$ with $\lambda y \leq 1$ and let $f(x) \eqdef g(x,\lambda,y)$. We first prove that $f$ is strictly decreasing in the range $[0,\infty)$, and then show that $f(0)< 0$, yielding that $g(\alpha,\lambda,y)<0$ for the given range of parameters. Taking the derivative of $f$, we have that
\begin{align}
f'(x)=-\lambda\cdot(1+y)^{2+x}\cdot \ln(1+y)+(1-\lambda y)^{2+x}\cdot\ln(1-\lambda y),
\end{align}
and since $\ln(1-\lambda y)<0$, it holds that $f'$ is a negative function. Hence, $f$ is strictly decreasing, and  takes its (unique) maximum over $[0,\infty)$ at $0$. We conclude the proof by noting that $f(0)=-\lambda \cdot y^2\cdot(1+\lambda) < 0$.
\end{proof}

\begin{claim}\label{claim:fconverges}
$\lim_{w\to 0^+}f(w) = 0$.
\end{claim}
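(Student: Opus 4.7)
\medskip

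\textbf{Proof proposal for \cref{claim:fconverges}.} The plan is to reparametrize the supremum into a one-variable convexity statement about a function that vanishes at two points. Given $\lambda,y>0$ with $\lambda y<1$, set
$a=1+y>1,\ b=1-\lambda y\in(0,1),\ p=\lambda/(1+\lambda),\ q=1/(1+\lambda)$,
so $p,q>0$ with $p+q=1$, and a direct computation gives $pa+qb=1$. Factoring $(1+\lambda)=1/q$ out of numerator and denominator of $f_w(\lambda,y)$ yields
\[
f_w(\lambda,y)\;=\;\frac{g(1+w)}{g(2+w)},\qquad g(s)\eqdef pa^s+qb^s-1,
\]
and moreover $g(0)=p+q-1=0$ and $g(1)=pa+qb-1=0$. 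Since $s\mapsto a^s$ and $s\mapsto b^s$ are convex (for any $a,b>0$), $g$ is convex on all of $\R$. A variance calculation gives $g(2)=pa^2+qb^2-1=pq(a-b)^2>0$ since $a>1>b$.

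Next I would use convexity of $g$ twice to sandwich the ratio uniformly. For $w\in(0,1]$, write $1+w=(1-w)\cdot 1+w\cdot 2$ and apply convexity to obtain
\[
g(1+w)\;\le\;(1-w)\,g(1)+w\,g(2)\;=\;w\,g(2).
\]
Convexity of $g$ also implies that the secant slope $s\mapsto g(s)/(s-1)$ is nondecreasing on $(1,\infty)$ (using $g(1)=0$), so evaluating at $s=1+w\le 2\le 2+w$ yields $g(2)\le g(2+w)/(1+w)$, i.e.\ $g(2+w)\ge (1+w)\,g(2)$. Combining the two bounds and dividing (noting $g(2)>0$),
\[
f_w(\lambda,y)\;=\;\frac{g(1+w)}{g(2+w)}\;\le\;\frac{w\,g(2)}{(1+w)\,g(2)}\;=\;\frac{w}{1+w}.
\]
This bound is independent of $\lambda,y$, so taking the supremum gives $f(w)\le w/(1+w)$ for every $w\in(0,1]$, and therefore $\lim_{w\to 0^+}f(w)=0$.

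The only step that requires care is the reparametrization; once $f_w$ is written as $g(1+w)/g(2+w)$ with $g$ convex and vanishing at $0$ and $1$, the conclusion is immediate from standard secant inequalities, so I do not anticipate any real obstacle. No boundary analysis (e.g.\ $\lambda y\to 1^-$ or $y\to 0^+$) is needed because the bound is uniform.
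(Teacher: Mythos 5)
Your proof is correct, and it takes a genuinely different and substantially cleaner route than the paper. The paper argues by contradiction: it assumes $f(w_i)\ge\eps$ along a sequence $w_i\to 0^+$, extracts a convergent subsequence of near-extremal pairs $(\lambda_i,y_i)$ via Bolzano--Weierstrass (treating $\lambda^\ast=\infty$ and $y^\ast=\infty$ as separate boundary cases), and then evaluates $\lim_i f_{w_i}(\lambda_i,y_i)$ case by case using Taylor expansions with Lagrange remainders. Your reparametrization --- $a=1+y$, $b=1-\lambda y$, $p=\lambda/(1+\lambda)$, $q=1/(1+\lambda)$, so that $(p,q)$ is a probability vector with $pa+qb=1$ --- rewrites $f_w$ as $g(1+w)/g(2+w)$ for the convex function $g(s)=pa^s+qb^s-1$ vanishing at $s=0$ and $s=1$, with $g(2)=pq(a-b)^2>0$ since $a>1>b$. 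Two applications of the secant-slope inequality (or, even more directly, one application comparing slopes of $g$ at $1+w$ and $2+w$ through the root $s=1$: $g(1+w)/w\le g(2+w)/(1+w)$) then give the \emph{uniform} bound $f(w)\le w/(1+w)$. This buys three things over the paper's argument: it avoids compactness and all the case analysis, it yields an explicit rate rather than a mere limit statement, and it is far shorter. The only small slips are cosmetic: the phrase ``$s=1+w\le 2\le 2+w$'' should read ``$s=2$ and $s=2+w$'' for the secant comparison you actually use, and you should note explicitly that $g(2+w)\ge(1+w)\,g(2)>0$ justifies the division. Neither affects correctness.
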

\begin{proof}
Assume towards a contradiction that the claim does not hold. It follows that there exist $\eps >0$ and an infinite sequence $\set{w_i}_{i\in\N}$ such that $\lim_{i\to \infty}w_i = 0$ and $f(w_i)\geq \eps$ for every $i\in\N$. Hence, there exists an infinite sequence  of pairs $\set{(\lambda_i,y_i)}_{i\in\N}$,  such that for every $i\in \N$ it holds that $f(w_i)=f_{w_i}(\lambda_i,y_i) \geq \eps$, $\lambda_i,y_i> 0$ and $\lambda_i y_i \leq 1$.

If $\set{\lambda_i}_{i\in \N}$ is not bounded from above, we focus on a subsequence of  $\set{(\lambda_i,y_i)}$ in which $\lambda_i$ converges to $\infty$, and let $\lambda^\ast = \infty$. Similarly, if $\set{y_i}_{i\in \N}$ is not bounded from above, we focus on a subsequence of  $\set{(\lambda_i,y_i)}$ in which $y_i$ converges to $\infty$, and let $y^\ast = \infty$. Otherwise, by the Bolzano-Weierstrass Theorem, there exists a subsequence of  $\set{(\lambda_i,y_i)}$  in which both  $\lambda_i$ and  $y_i$ converge to some real values. We let $\lambda^\ast$ and $y^\ast$ be these values.

The rest of the proof splits according to the values of $\lambda^\ast$ and $y^\ast$. In each case we focus on the subsequence of $\set{(w_i,\lambda_i,y_i)}$ that converges to $(0,\lambda^\ast,y^\ast)$, and show that  $\lim_{i\to\infty}f_{w_i}(\lambda_i,y_i) =0$, in contradiction to the above assumption.

\begin{itemize}[\leftmargin=0em \itemindent=0em]
\item[$y^\ast = \infty$:]\noindent  First note that the assumption $y^\ast = \infty$ and the fact that $\lambda_i y_i \leq 1$ for every $i$ yield that $\lambda^\ast = 0$.

 For $c \in [0,1)$, the Taylor expansion with Lagrange remainder over the interval $[0,c]$ yields that
  \begin{align}\label{eq:taylor1}
  (1-c)^t = 1-tc + \frac{t(t-1)(1-s)^{t-2}}{2}c^2
  \end{align}
  for some $s\in (0,c)$. Consider the function
  \begin{align}\label{eq:g}
  g(t,\lambda,y)\eqdef 1+\lambda-\lambda\cdot(1+y)^t-(1-\lambda y)^t.
  \end{align}
  \cref{eq:taylor1} yields that
  \begin{align}\label{eq:g1}
  g(t,\lambda_i,y_i) &= 1+\lambda_i-\lambda_i\cdot(1+y_i)^t-\paren{1 - t\lambda_i y_i + \frac{t(t-1)(1-s_i)^{t-2}}{2}\lambda_i^2y_i^2}\\
  &= \lambda_i \paren{1 - (1+y_i)^t + ty - \frac{t(t-1)(1-s_i)^{t-2}}{2}\lambda_i y_i^2}\nonumber
  \end{align}
  for every index $i$ and some $s_i\in (0,\lambda_i y_i)$. We conclude that
  \begin{align*}
  \lim_{i\to\infty}f_{w_i}(\lambda_i,y_i) &= \lim_{i\to\infty} \frac{g(1+w_i,\lambda_i,y_i)}{g(2+w_i,\lambda_i,y_i)}\\
  &= \lim_{i\to\infty} \frac{1 - (1+y_i)^{1+w_i} + (1+w_i) y_i - \frac{(1+w_i)w_i(1-s_i)^{w_i-1}}{2}\lambda_i y_i^2}{1 - (1+y_i)^{2+w_i} + (2+w_i) y_i -  \frac{(2+w_i)(1+w_i)(1-s_i)^{w_i}}{2}\lambda_i y_i^2}\\
  &= \lim_{i\to\infty} \frac{\frac{1}{(1+y_i)^{2+w_i}} - \frac{(1+y_i)^{1+w_i}}{(1+y_i)^{2+w_i}} + \frac{(1+w_i) y_i}{(1+y_i)^{2+w_i}} - \frac{(1+w_i)w_i(1-s_i)^{w_i-1}\lambda_i y_i^2}{2(1+y_i)^{2+w_i}}}{\frac{1}{(1+y_i)^{2+w_i}} - 1 + \frac{(2+w_i) y_i}{(1+y_i)^{2+w_i}} -\frac{(2+w_i)(1+w_i)(1-s_i)^{w_i}\lambda_i y_i^2}{2(1+y_i)^{2+w_i}}}\\
  &= 0.
  \end{align*}

\noindent
  \item[$\lambda^\ast = \infty$:]\noindent Note that the assumption  $\lambda^\ast = \infty$ yields that $y^\ast =0$. For $c \in [0,1)$, the Taylor expansion with Lagrange remainder over the interval $[0,c]$ yields that

  \begin{align}\label{eq:taylor2}
  (1-c)^t = 1-tc + \frac{t(t-1)}{2}c^2 - \frac{t(t-1)(t-2)(1-s)^{t-3}}{6}c^3,
  \end{align}
  for some $s\in (0,c)$, and


  \begin{align}\label{eq:taylor3}
  (1+c)^t = 1+tc + \frac{t(t-1)}{2}c^2 + \frac{t(t-1)(t-2)(1+s')^{t-3}}{6}c^3,
  \end{align}
  for some $s'\in(0,c)$.

  Applying \cref{eq:taylor2,eq:taylor3} for the function $g$ of \cref{eq:g} yields that
  \begin{align}\label{eq:tg}
  \lefteqn{g(t,\lambda_i,y_i)}\\
   &= \tg(t,\lambda_i,y_i,s_i,s_i')\nonumber \\
   &\eqdef 1+\lambda_i -\lambda_i \paren{1+ty+\frac{t(t-1)}{2}y_i^2 + \frac{t(t-1)(t-2)(1+s_i')^{t-3}}{6}y_i^3}\nonumber\\
  &\quad - \paren{1-t\lambda_i y_i + \frac{t(t-1)}{2}\lambda_i^2y_i^2 + \frac{t(t-1)(t-2)(1-s_i)^{t-3}}{6}\lambda_i^3y_i^3} \nonumber\\
  &= - \frac{\lambda_i^2y_i^2}{6} \paren{\frac{3t(t-1)}{\lambda_i} + \frac{t(t-1)(t-2)(1+s_i')^{t-3}y_i}{\lambda_i} + 3t(t-1) + t(t-1)(t-2)(1-s_i)^{t-3}\lambda_i y_i} \nonumber
  \end{align}
for large enough index $i$ and some $s_i\in (0,\lambda_i y_i)$ and $s_i' \in (0,y_i)$. We conclude that
  \begin{align*}
  &\lim_{i\to\infty}f_{w_i}(\lambda_i,y_i)\\
  &= \lim_{i\to\infty} \frac{g(1+w_i,\lambda_i,y_i)}{g(2+w_i,\lambda_i,y_i)}\\
  &=\lim_{i\to\infty}\frac{\tg(1+w_i,\lambda_i,y_i,s_i,s_i')}{\tg(2+w_i,\lambda_i,y_i,s_i,s_i')}\\
  \\
  &= \lim_{i\to\infty} \frac{\frac{3(1+w_i)w_i}{\lambda_i} + \frac{(1+w_i)w_i(w_i-1)(1+s_i')^{w_i-1}y_i}{\lambda_i} + 3(1+w_i)w_i + (1+w_i)w_i(w_i-1)(1-s_i)^{w_i-2}\lambda_i y_i}{\frac{3(2+w_i)(1+w_i)}{\lambda_i} + \frac{(2+w_i)(1+w_i)w_i(1+s')^{w_i-1}y_i}{\lambda_i} + 3(2+w_i)(1+w_i) + (2+w_i)(1+w_i)w_i(1-s)^{w_i-1}\lambda_i y_i} \\
  &= \frac06 = 0,
  \end{align*}
  where the next-to-last equality holds since $\lambda_i y_i \leq 1$ for every $i$, and hence the last term of the numerator and denominator goes to $0$ when $i\to\infty$.

  \item[$\lambda^\ast,y^\ast  > 0$:]\noindent It holds that
  \begin{align*}
  \lim_{i\to\infty}f_{w_i}(\lambda_i,y_i) &= \lim_{i\to\infty} \frac{1+\lambda_i-\lambda_i\cdot(1+y_i)^{1+w_i}-(1-\lambda_i y_i)^{1+w_i}}{1+\lambda_i-\lambda_i\cdot(1+y_i)^{2+w_i}-(1-\lambda_i y_i)^{2+w_i}}\\
  &= \frac{1+\lambda^\ast - \lambda^\ast(1+y^\ast) - (1-\lambda^\ast y^\ast)}{1+\lambda^\ast - \lambda^\ast(1+y^\ast)^2 - (1-\lambda^\ast y^\ast)^2}\\
  &= 0.
  \end{align*}

  \item[$\lambda^\ast = 0$ and $y^\ast > 0$:]\noindent \cref{eq:taylor1,eq:g1} yield that
  \begin{align*}
  \lim_{i\to\infty}f_{w_i}(\lambda_i,y_i) &= \lim_{i\to\infty} \frac{1-(1+y_i)^{1+w_i} + (1+w_i)y_i- \frac{(1+w_i)w_i(1-s_i)^{w_i-1}}{2}\lambda_i y_i^2}{1-(1+y_i)^{2+w_i} + (2+w_i)y_i- \frac{(2+w_i)(1+w_i)(1-s_i)^{w_i}}{2}\lambda_i y_i^2} \\
  &= \frac{1-(1+y^\ast)+y^\ast}{1-(1+y^\ast)^2+2y^\ast}\\
  &= 0.
  \end{align*}

  \item[$y^\ast  = 0$:]\noindent Rearranging \cref{eq:tg} yields that the following holds for large enough index $i$:
  \begin{align}
  \lefteqn{g(t,\lambda_i,y_i)}\\
  &= \tg(t,\lambda_i,y_i,s_i,s_i')\nonumber\\
  &=  - \frac{\lambda_i y_i^2}{6} \paren{3t(t-1) + t(t-1)(t-2)(1+s_i')^{t-3}y_i + 3t(t-1)\lambda_i + t(t-1)(t-2)(1-s_i)^{t-3}\lambda_i^2 y_i} \nonumber
  \end{align}
for some $s_i\in(0,\lambda_i y_i)$ and $s_i \in (0,y_i)$. Given, this formulation it is easy to see that
\begin{align*}
  \lim_{i\to\infty}f_{w_i}(\lambda_i,y_i) &= \lim_{i\to\infty}\frac{\tg(1+w_i,\lambda_i,y_i,s_i,s_i')}{\tg(2+w_i,\lambda_i,y_i,s_i,s_i')}\\
   &= \frac{0}{6+6\lambda^\ast}\\
   & = 0.\nonumber
  \end{align*}
The above holds since every term in the numerator goes to $0$ and the term $3(2+w_i)(1+w_i)$ in the denominator goes to $6$.
\end{itemize}
This concludes the case analysis, and thus the proof of the claim.
\end{proof}

\end{document}